\numberwithin{equation}{section}
\newif\ifproofs\proofstrue
\newtheorem{theorem}{Theorem}[section]
\newtheorem{definition}[theorem]{Definition}
\newtheorem{lemma}[theorem]{Lemma}
\newtheorem{observation}[theorem]{Observation}
\newtheorem{proposition}[theorem]{Proposition}
\newtheorem{problem}[theorem]{Problem}
\newtheorem{remark}[theorem]{Remark}
\newtheorem{corollary}[theorem]{Corollary}
\newtheorem{example}{Example}
\newcommand{\myparagraph}[1]{\noindent \textbf{#1} \ }
\tikzstyle{block} = [rectangle, draw, fill=blue!20, 
\tikzstyle{line} = [draw, -latex']
\newcommand{\acctodo}[1]{\todo[disable,inline,color=red!25,size=\small]{ACCOUNTING: #1}}
\newcommand{\bs}[1]{\boldsymbol{#1}}
\newcommand{\bbN}{\mathbb{N}}
\newcommand{\bbZ}{\mathbb{Z}}
\newcommand{\bbQ}{\mathbb{Z}}
\newcommand{\bbB}{\mathbb{B}}
\newcommand{\bbNinf}{\mathbb{N}_{\infty}}
\newcommand{\bbZinf}{\mathbb{Z}_{\infty}}
\newcommand{\cA}{\mathcal{A}}
\newcommand{\cB}{\mathcal{B}}
\newcommand{\cD}{\mathcal{D}}
\newcommand{\RefStates}{\textsc{RefStates}}
\newcommand{\MinRefStates}{\textsc{MinStates}}
\newcommand{\GroundPairs}{\textsc{GrnPairs}}
\newcommand{\PlatPairs}{\textsc{PltPairs}}
\newcommand{\TethPairs}{\textsc{TthPairs}}
\newcommand{\MinGraph}{\textsc{Min\_Graph}}
\newcommand{\tup}[1]{\langle #1 \rangle}
\newcommand{\init}{\texttt{init}}
\newcommand{\fin}{\texttt{fin}}
\newcommand{\pref}{\mathsf{pref}}
\newcommand{\weight}{\mathsf{wt}}
\newcommand{\xconf}{\mathsf{next\_conf}}
\newcommand{\maxdom}{\mathsf{max\_dom}}
\newcommand{\domval}{\mathsf{dom\_val}}
\newcommand{\ST}{\text{ s.t. }}
\newcommand{\runsto}[1]{\xrightarrow{#1}}
\newcommand{\minweight}{\mathsf{mwt}}
\newcommand{\wmax}[1]{{\mathsf{max\_wt}(#1)}}
\newcommand{\maxeff}[1]{{\mathsf{max\_eff}(#1)}}
\newcommand{\augA}{\widehat{\cA}}
\newcommand{\augStates}{S}
\newcommand{\augInitState}{s_0}
\newcommand{\augTrans}{\widehat{\Delta}}
\newcommand{\rhobase}{\rho_{\mathrm{base}}}
\newcommand{\baseshift}[2]{\mathsf{bshift}[{#1}/{#2}]}
\newcommand{\augrho}{\widehat{\rho}}
\newcommand{\aug}[1]{\widehat{#1}}
\newcommand{\shift}{\mathsf{shift}}
\newcommand{\shifttoaug}{\shift^{\rhobase}_{\cA\to \augA}}
\newcommand{\shifttoorig}{\shift^{\rhobase}_{\augA\to \cA}}
\newcommand{\stab}{\mathsf{stab}}
\newcommand{\rebase}{\mathsf{rebase}}
\newcommand{\jl}{\mathsf{jump}}
\newcommand{\cost}{\mathsf{cost}}
\newcommand{\depth}{\mathsf{dep}}
\newcommand{\subcact}{\mathsf{sub}}
\newcommand{\unfold}{\mathsf{unfold}}
\newcommand{\lc}{\mathsf{lc}}
\newcommand{\precost}{\mathsf{pc}}
\newcommand{\colset}{\mathfrak{C}}
\newcommand{\booltrans}{\delta_\bbB}
\renewcommand{\vec}[1]{\boldsymbol{#1}}
\newcommand{\supp}{\mathsf{supp}}
\newcommand{\flatten}{\mathsf{flatten}}
\newcommand{\rebaserm}{\mathsf{rebase\_rm}}
\newcommand{\near}{\mathsf{near}}
\newcommand{\difftypeset}{\mathsf{DIFF\_TYPES}}
\newcommand{\gaintypeset}{\mathsf{GAIN\_TYPES}}
\newcommand{\gaintype}{\mathsf{gain\_type}}
\newcommand{\difftype}{\mathsf{diff\_type}}
\newcommand{\words}{\textsf{words}}
\newcommand{\diswords}{\lightning\!\mathsf{words}}
\newcommand{\levwords}{\rightsquigarrow\!\!\mathsf{words}}
\newcommand{\potlevwords}{{\rightsquigarrow\!\!\!\!\!\!_\pot \, \mathsf{words}}}
\newcommand{\potdiswords}{{\lightning\!^\pot\!\!\mathsf{words}}}
\newcommand{\xwords}{\mathsf{xwords}}
\newcommand{\sqrtb}{\widetilde{b}}
\newcommand{\sparse}{\mathsf{sparse}}
\newcommand{\pot}{\upphi}
\newcommand{\charge}{\uppsi}
\newcommand{\rhoss}{\rho_{\ssearrow}}
\newcommand{\bigN}{{\bs{\mathfrak{n}}}}
\newcommand{\bigM}{{\bs{\mathfrak{m}}}}
\newcommand{\interval}{\mathfrak{I}}
\newcommand{\ghostTrans}{\delta_{\mathghost}}
 \newcommand{\sparseGapBound}{16\bigM (|S|+1) (m W_{\max})^2 }
 \newcommand{\incinfGapConst}{8\bigM (|S|+1) \maxeff{xy}^2}
\newcommand{\keyicon}{%
    \includegraphics[width=1em]{}
}
\newcommand{\lightbulbicon}{%
    \raisebox{-0.4ex}{\includegraphics[height=12pt]{}}
}
\title{Determinization of Min-Plus Weighted Automata is Decidable}
\author{Shaull Almagor, Guy Arbel and Sarai Sheinvald}
\date{Technion}
\begin{document}

\maketitle

\begin{abstract}
We show that the determinization problem for min-plus (tropical) weighted automata is decidable, thus resolving this long-standing open problem.
In doing so, we develop a new toolbox for analyzing and reasoning about the run-structure of nondeterministic automata.
\end{abstract}

\section{Introduction}
Traditional automata accept or reject their input, and are therefore Boolean, in the sense that their language is a function $L:\Sigma^*\to \{0,1\}$.
\emph{Weighted Automata}~\cite{droste2009handbook,schutzenberger1961definition,chatterjee2010quantitative,Almagor2020Whatsdecidableweighted} (WFA, for short) extend Boolean automata to richer domains, so that the language of a WFA $\cA$ is a function $L_\cA:\Sigma^*\to D$ for some numeric domain $D$.
This is achieved by assigning \emph{weights} from $D$ to the transitions of the automaton, and defining two operations: one which aggregates weights along a run, and another that combines the weights along all runs to a value in $D$.

A natural class of domains over which WFAs are defined are \emph{semirings}: algebraic structures with two operations, $\otimes$ and $\oplus$, which naturally lend themselves as the two operations above. For a nondeterministic automaton over a semiring, the value of a run is the semiring product $\otimes$ of the weights along the edges, and the value of a word is the semiring sum $\oplus$ of the values of its runs.

Common examples of such semirings are the \emph{$(\min,+)$} semiring $\tup{\bbZ\cup\{\infty\},\min,+}$ (also known as the \emph{tropical} semiring, and sometimes considered as the dual $(\max,+)$ semiring), and the rational field $\tup{\bbQ,+,\times}$ (and its \emph{probabilistic} sub-semiring).
Applications of such weighted automata fall on a wide spectrum including verification, rewriting systems, tropical algebra, and speech and image processing (in the pre-LLM era) (we refer the reader to \cite{daviaud2020register,droste2009handbook,Almagor2020Whatsdecidableweighted} and references therein). 
Famously, tropical weighted automata have also been key to proving the star-height conjecture~\cite{kirsten2005distance,Has82,Has00,LP04}.

For example, consider the tropical WFA in \cref{fig:min num a num b}. For every word $w\in \{a,b\}^*$ there are two runs: one cycles in $q_a$ and counts the number of $a$'s, and the other cycles in $q_b$ and counts the number of $b$'s. Thus, the weight of $w$ is the minimum between the number of $a$'s and the number of $b$'s.

As with most computational models, reasoning about WFAs becomes easier on the deterministic fragment (e.g., equivalence of tropical WFAs is undecidable for nondeterminstic automata, but decidable for deterministic ones~\cite{Kro94,Almagor2020Whatsdecidableweighted}). 
Unfortunately, unlike Boolean automata, nondeterministic WFAs are strictly more expressive than their deterministic fragment (for most semirings). 
Accordingly, a natural problem for WFAs is the \emph{determinization problem}: given a WFA $\cA$, is there a deterministic WFA $\cD$ such that $L_{\cA}\equiv L_{\cD}$. We focus on tropical WFAs.

The determinization problem was first raised for tropical WFAs in the 1990's by Mohri~\cite{mohri1994compact,mohri1997finite} (and was alluded to already by Choffrut in 1977~\cite{choffrut1977caracterisation}, but not explicitly). 
Despite much attention, the decidability of this problem remained open. Moreover, our ability to analyze the run structure of WFAs has made notoriously slow progress, despite seeing several exciting ideas over the years (e.g., the celebrated Forest-Factorization Theorem by Simon~\cite{simon1990factorization}).
Specifically, Mohri's original work proposed a determinization algorithm for unambiguous WFAs~\cite{mohri1997finite}. This was later improved to finitely-ambiguous WFAs~\cite{klimann2004deciding}, and much later to polynomially-ambiguous WFAs~\cite{kirsten2009deciding}. Crucially, all these works place strong restrictions on the nondeterminism, making their analysis possible.
Since then, the problem remained ``wide open''~\cite{daviaud2020register,filiot2017delay,kirsten2012decidability,lombardy2006sequential}.

We remark that many recent works shed light on various intricacies of the run structure of tropical WFAs, e.g.,~\cite{daviaud2016generalised, colcombet2014size,daviaud2017degree,daviaud2023big,aminof2013rigorous,chattopadhyay2021pumping, aminof2010reasoning}. 

\paragraph*{Contribution and Paper organization}
The centerpiece of our contribution is a proof that the determinization problem for tropical WFAs is decidable, thus resolving this longstanding open problem.
In obtaining this result, however, we develop several powerful new tools for reasoning about the run structure of WFAs, which are of independent interest. 

The full detailed proof starts at \cref{sec:prelim}, and is dauntingly long. This is in part due to our attempt to make it detailed and accessible.
In \cref{sec:abs:det and gaps} we define the model, and show a characterization of determinizability that is our starting point. In \cref{sec:abs:constructions} we briefly describe the tools and concepts we develop to approach the problem. In \cref{sec:abs:finale} we present the outline of the main proof, but leave two significant holes in it. These holes are patched in \cref{sec:abs:inc inf,sec:abs:discharging}, where we explain our novel technique for analyzing runs of a WFA.
Throughout, we stick to intuition and leave the technicalities to the full proof.

It is worth noting our algorithm has no complexity bounds. Indeed, it is composed of two semi-algorithms, and relies on purely existential arguments arising from Ramsey-style reasoning. We believe, however, that our approach will pave the way to also tightening the complexity (we present a naive PSPACE-hard lower bound).

\paragraph*{Related Work}
Aside from the works mentioned above, we remark that WFAs are strongly tied with algebraic/logic tools such as Rational Series~\cite{droste2009handbook,droste2005weighted} and in particular a standard approach to reasoning about them is algebraic, e.g., with Simon's factorization~\cite{simon1990factorization}. In contrast, our work stays entirely within the combinatorial approach. We posit that this combinatorial view is what allows us to ``untangle'' the various runs, and is what fails in previous approaches.

We should also mention that over the $(+,\times)$ semiring, the technical setting changes dramatically -- tools from real algebra arise, making the analysis entirely different. For $(+,\times)$ semirings, determinization was raised as open in 2006 by~\cite{lombardy2006sequential} and resolved very recently~\cite{bell2023computing,jecker2024determinisation} with clever usage of heavy linear-algebra machinery.





\section{WFAs, Determinizability and Gaps}
\label{sec:abs:det and gaps}
\paragraph{WFAs}
A WFA $\cA=\tup{Q,\Sigma,Q_0,\Delta}$ is a finite automaton where $Q$ is a set of states, $\Sigma$ is an alphabet, $Q_0\subseteq Q$ are initial states,
and $\Delta\subseteq Q\times \Sigma\times \bbZinf\times Q$ is a \emph{weighted} transition relation, assigning every $p,q,\sigma$ a weight in $\bbZinf=\bbZ\cup \{\infty\}$. Intuitively, weight $\infty$ means that there is no transition. 

A \emph{run} of $\cA$ a word $w$ is a sequence of transitions reading $w$, and its {\em weight} is the sum of the weights along its transitions. We denote a run $\rho$ from state $q$ to state $p$ reading $w$ by $\rho:q\runsto{w}p$, and its weight by $\weight(\rho)$.
The {\em weight of a word} $w$, is the minimal weight of a run on $w$ (where weight $\infty$ signifies having no run on $w$). More generally, we denote the minimal weight of a run on $w$ from state $q$ to $p$ by $\minweight(w,q\runsto{w}p)$, and the minimal weight of a run between sets of states $Q'$ and $P'$ by $\minweight(w,Q'\runsto{w}P')$. An example of a WFA is depicted in \cref{fig:min num a num b}

A WFA is \emph{deterministic} if it has a single initial state, and from every state there exists at most one finite-weight transition on each letter. It is \emph{determinizable} if it has an equivalent deterministic WFA (i.e., they describe the same function). 
It is well-known that not every WFA is determinizable~\cite{chatterjee2010quantitative}. In this paper, we show that determining whether a WFA is determinizable is decidable.

\begin{figure}[ht]
    \centering
    \begin{minipage}{0.35\textwidth} 
        \centering
       \includegraphics[width=1\linewidth]{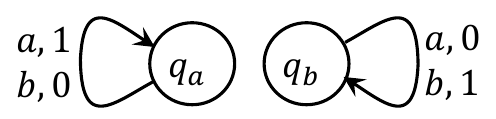}
        \caption{A tropical WFA that assigns to $w\in \{a,b\}^*$ the minimum between the number of $a$'s and $b$'s in $w$.}
        \label{fig:min num a num b}
    \end{minipage}
    \hfill
    \begin{minipage}{0.6\textwidth}
         \centering
    \includegraphics[width=0.85\linewidth]{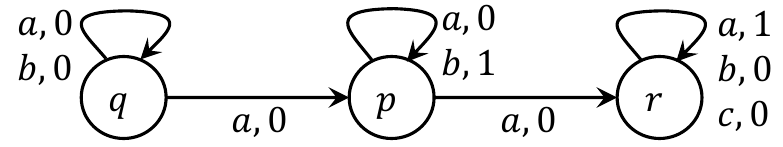}
    \caption{A tropical WFA $\cA$. All the states are initial. \cref{fig:abs:running dag example} depicts a run DAG of $\cA$.}
    \label{fig:running example}
    \end{minipage}
\end{figure}

\begin{figure}
  \begin{center}
    \includegraphics[width=0.7\textwidth]{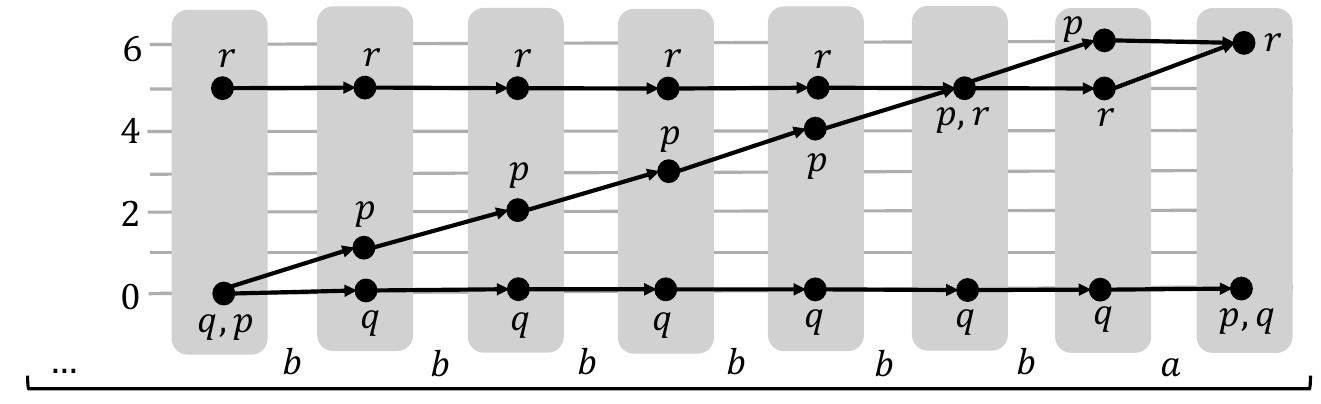}
  \end{center}
  \caption{After reading $bab^2ab^3ab^4ab^5a$, we arrive at the configuration $(0,0,5)$, as depicted. We then show how reading $b^6a$ increases the weight of $r$ by $1$. Note that this cannot be done with any shorter word.}
  \label{fig:abs:running dag example}
\end{figure}

We remark that there are several variants of WFAs: with or without initial and final weights, and with or without distinguished accepting states. 
We show that the determinizability problems for these variations are are inter-reducible (see \cref{rmk: initial and final weights}). Thus, 
our definition is of the simplest variant: no initial and final weights, and all states are accepting. 

\paragraph{Characterizing Determinizability by Gaps}
Intuitively, a WFA is determinizable if we can deterministically track its current \emph{configuration} -- a vector assigning to each state the minimal value it is reachable with (or $\infty$ if it is not reachable). That is, we need to update the current configuration upon reading a new letter (while incurring some weight). Then, the question is whether the number of states needed for this is finite.

A naive approach would be to track the minimal weight to each state. For example, in the WFA of \cref{fig:min num a num b}, given the word $aab$ we can track the vectors $(0,0)\runsto{a}(1,0)\runsto{a}(2,0)\runsto{b}(2,1)$ (where the two components represent the minimal weight to $q_a$ and to $q_b$, respectively).
Clearly, however, this requires an infinite state space. Also, it does not use the weights for each letter.
A simple optimization is therefore to always normalize the vectors such that the minimal state is $0$, and incur the normalization as weight. This makes sense, since we want to track the minimal runs.
Thus, we would track $aab$ as 
$(0,0)\runsto{a,0}(1,0)\runsto{a,0}(2,0)\runsto{b,1}(1,0)$, where in the last step we intuitively reach $(2,1)$, but ``shift'' all weights by $-1$, and therefore gain $1$ weight along the run.

However, this still clearly yields an infinite state space. The question is now whether we can make this finite. Intuitively, the state space can be made finite if there is some bound $B$, such that states in the configuration that go above $B$ in this normalized construction, i.e., go more than $B$ above the minimal run, cannot later become a minimal run, and can therefore be treated as ``unreachable''.

As it turns out, this is a precise characterization of determinization (it appears in \cite{filiot2017delay}, and is folklore). 
It is the starting point of our investigation, and we therefore make it explicit: for a bound $B\in \bbN$, a \emph{$B$-gap witness} is a pair of words $x,y\in \Sigma^*$ and a state $q\in Q$ such that upon reading $x$, we have $\minweight(x,Q_0\runsto{x}q)>\minweight(x,Q_0\runsto{x}Q)+B$, but $\minweight(xy,Q_0\runsto{xy}Q)=\minweight(xy,Q_0\runsto{x}q\runsto{y}Q)$ (see \cref{fig:abs:gap witness}). 
We then have the following.
\begin{theorem}
\label{thm:abs:det iff gaps}
    $\cA$ is \textbf{non}determinizable if and only if for every $B\in \bbN$, there is a $B$-gap witness.
\end{theorem}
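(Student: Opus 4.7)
The plan is to prove both directions by contrapositive. For the easier direction -- ``some $B$ has no $B$-gap witness implies $\cA$ is determinizable'' -- I will build a deterministic equivalent $\cD$ whose states are normalized, $B$-capped configurations $c : Q \to \{0, 1, \dots, B\} \cup \{\infty\}$ with $\min_{q} c(q) = 0$; there are only finitely many such. The initial state is the normalized configuration giving $0$ to each $q \in Q_0$ and $\infty$ elsewhere. To transition on letter $\sigma$ from $c$, I compute the raw successor $c'(p) = \min_q (c(q) + \weight(q, \sigma, p))$, emit $m = \min_p c'(p)$ as the transition weight, subtract $m$ from every entry, and replace any resulting entry above $B$ by $\infty$. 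Correctness is by induction on input length, and the key point is that capping is semantically preserving: by hypothesis, a state whose normalized weight exceeds $B$ cannot lie on a min-weight run for any future continuation, so replacing it by $\infty$ does not affect $L_\cA$.

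For the harder direction -- ``$\cA$ is determinizable implies some $B$ has no $B$-gap witness'' -- I plan to fix a deterministic equivalent $\cD$ with $n$ states, and denote by $s_x$ and $v_x = L_\cA(x)$ its state and accumulated weight after reading $x$. Assuming for contradiction that $B$-gap witnesses exist for every $B \in \bbN$, pigeonholing first on the witness state within $Q$ and then on $s_x$ within $\cD$'s $n$ states yields an infinite family of witnesses $(x_i, y_i)$ sharing a common witness state $q \in Q$ and a common $\cD$-state $s = s_{x_i}$, whose gap bounds $B_i$ tend to infinity. Writing $h(p, y) = \minweight(y, p \runsto{y} Q)$, $g(x, p) = \minweight(x, Q_0 \runsto{x} p)$, and $C(x)[p] = g(x, p) - L_\cA(x)$, the equivalence $\cA \equiv \cD$ gives, for any $x$ with $s_x = s$ and any $y$, the identity $L_\cA(xy) - L_\cA(x) = r_\cD(s, y)$, where $r_\cD(s, y)$ is the weight of $\cD$'s unique run on $y$ from $s$.

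The contradiction then falls out in one step. For any indices $i, j$ in the family, applying the identity to $x_i$ and $x_j$ with input $y_j$, upper-bounding the $x_i$-side via the run that passes through $q$ at time $|x_i|$, and expanding the $x_j$-side via the $j$-th witness's optimality clause, I obtain
\[
C(x_i)[q] + h(q, y_j) \ \geq \ L_\cA(x_i y_j) - L_\cA(x_i) \ = \ L_\cA(x_j y_j) - L_\cA(x_j) \ > \ B_j + h(q, y_j),
\]
where $h(q, y_j)$ is finite because the $j$-th witness forces $L_\cA(x_j y_j)$ to be finite. Cancelling yields $C(x_i)[q] > B_j$ for every $j$; letting $j \to \infty$ forces $C(x_i)[q] = \infty$, i.e., $q$ is unreachable on $x_i$ in $\cA$. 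But the $i$-th gap witness places $q$ on a min-weight run of $x_i y_i$, hence reachable on $x_i$, a contradiction. The only steps requiring care are the double pigeonhole and the verification that the cap is semantically preserving; no Ramsey or pumping argument is needed.
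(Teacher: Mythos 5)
Your proposal is correct on both directions. The first direction (the $B$-capped, normalized-configuration determinization) is essentially the paper's construction, restricted to the case without initial/final weights; the correctness argument in both your write-up and the paper's glosses over the same subtlety, namely that states whose \emph{tracked} (capped) value is wrong must be harmless, which ultimately rests on the fact that a seamless minimal-weight run never strays more than $B$ above the running minimum on pain of creating a $B$-gap witness.

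The second direction, however, takes a genuinely different route from the paper. The paper argues quantitatively: it shows that $\cD$ must approximately track the $\cA$-minimum, observes that a gap of size $B$ needs a prefix $x$ of length at least $B/(2M)$ to develop, shortens $x$ to a word $x'$ of length at most $2^{|Q|}|Q_\cD|$ reaching the same $\cD$-state and same $\cA$-reachability set, and derives a contradiction because the gap at $x'$ can be at most $|x'|\cdot 2M$. This yields a concrete bound $B \approx 2M\bigl(2^{|Q|}|Q_\cD| + |Q| + 1\bigr)$, which the paper later exploits (e.g., to enumerate gap sizes rather than automata in the RE direction). Your argument is purely existential: you double-pigeonhole on the pair $(q, s_x)\in Q\times Q_\cD$ to extract an infinite family $(x_i, y_i)$ with $B_i\to\infty$, then transplant $y_j$ onto $x_i$, using the determinism of $\cD$ to identify $L_\cA(x_i y_j) - L_\cA(x_i)$ with $L_\cA(x_j y_j) - L_\cA(x_j)$, and cancel the common $h(q,y_j)$ term to obtain $C(x_i)[q] > B_j$ for all $j$, which is absurd since $C(x_i)[q]$ is finite (the $i$-th witness forces a finite run through $q$ on $x_i$). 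This is shorter and more conceptual, avoiding the shortening lemma and all of the arithmetic, at the cost of producing no effective bound on $B$. Both proofs are valid; for the purpose of this theorem as stated, either suffices, and yours isolates the role of determinism very cleanly.
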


Determinizability is in RE (recognizable): for a WFA $\cA$ and a deterministic WFA $\cD$, it is decidable whether they are equivalent~\cite{Almagor2020Whatsdecidableweighted}, so we can simply try all deterministic automata. Therefore, all that remains is to show that the existence of infinitely many $B$-gap witnesses, for increasing $B$, is also recognizable. This is the goal of our work.

At first glance, this problem seems easy: if there is a $B$-gap witness for very large $B$, surely we can apply some pumping argument to further increase this gap. This certainly holds for \cref{fig:min num a num b}. 
Unfortunately, things are far more involved, as runs can ``undercut'' one another.
\begin{example}
    \label{xmp:running pumping}
    Consider the WFA in \cref{fig:running example}. Intuitively, upon reading $b^k$, state $p$ gains weight $k$. Upon reading $a$, state $r$ gains $1$, but the weight in $p$ is ``reset'' to $0$ due to the run from $q$, and the run in $r$ is reset to the minimal run reaching $p$.  In particular, reading $aa$ resets all states to $0$.
    The letter $c$ ``kills'' all runs except that of $r$.

    Note that while reading $b^k$ for large $k$ brings $p$ very high, it is not a gap-witness, since we cannot make its run minimal. 
    In \cref{fig:abs:running dag example} we illustrate that more complicated gap-witnesses, namely words of the form $(b^na)_{n=1}^\infty$, can create large gaps, and form gap-witnesses with the suffix $c$, as they separate $r$ from the other states. Therefore, this WFA is nondeterminizable.

    We remark that \cref{fig:running example} can be used as a component in more elaborate constructions to require even more complicated gap-witnesses.
\end{example}

\section{Constructions and Definitions}
\label{sec:abs:constructions}
Our proof can be split into two significant elements: the definition of a well-behaved structure that allows a form of pumping, and a powerful set of techniques to find such well-behaved structures in a given WFA. 
In this section we present the former well-behaved structures, and begin to present some of the challenges that arise from them.

\subsection{The Baseline Augmented Construction (\cref{sec:augmented construction})}
\label{sec:abs:baseline aug}
Recall from \cref{sec:abs:det and gaps} the idea of tracking configurations and shifting so that the minimal run is always $0$. Our first construction is technically simple, but turns out to be a very versatile tool: instead of normalizing to the minimal run, we allow normalizing to \emph{any} run, referred to as the \emph{baseline} run, and the choice of run is given as part of the word. We also keep track of the reachable set of states, as in the Boolean subset construction. We refer to this as the \emph{Baseline-Augmented Subset Construction}, denoted $\augA$. 

More precisely, the states of $\augA$ (denoted $S$) are of the form $(q,p,T)\in Q\times Q\times 2^Q=S$, where $q$ is the ``usual'' current state, $p$ is the state in the baseline run, and $T$ is the reachable set of states. The alphabet of $\augA$, denoted $\Gamma$, is the \emph{transitions} of $\cA$. Then, from state $s=(q,p,T)$, reading the transition (i.e., letter) $\gamma=(p,\sigma,c,p')$, we have a transition to every $s'=(q',p',T')$ where $(q,\sigma,d,q')\in \Delta$, and $T'$ is updated deterministically. 
Note that the letter $\sigma$ (within $\gamma$) determines the options for the first component $q$, so that the first component simply tracks the ``standard'' runs of $\cA$, while the second component $p$ is determined by $\gamma$. 
Together, we have a transition $(s,\gamma, d-c,s')$ in $\augA$, where the
weight $d-c$ describes the weight $d$ of the corresponding transition in $\cA$, but ``shifted'' by the baseline weight $c$. 

Intuitively, for a word $x\in \Sigma^*$, we can choose any run $\rho$ of $\cA$ on $x$ as baseline, and then $\rho$ sets the ``perspective'' with which we consider the runs of $\cA$ on $x$. Crucially, these different perspectives do not affect the \emph{gaps} between the runs. 
Indeed, we prove (\cref{lem:A det iff augA det}) that $\cA$ is determinizable if and only if $\augA$ is, and from now on we proceed to work with $\augA$.

\subsection{Cactus Letters (\cref{sec:stable cycles and bounded behaviours,sec:cactus extension})}
\label{sec:abs:cactus letters}
A key difficulty in reasoning about weighted runs is their behavior in complex paths (c.f., \cref{xmp:running pumping}). Our most fundamental tools to handle this are \emph{stable cycles} and \emph{cactus letters}. Intuitively, cactus letters extend the alphabet by capturing nested cycles that can be ``easily pumped''. Unfortunately, their introduction leads to an \emph{infinite alphabet}, which becomes a central challenge later on.

A \emph{stable cycle} (\cref{sec:stable cycles}) is a pair $(S',w)$ where $S'\subseteq S$ is a subset of the states of $\augA$ and $w\in \Gamma^*$ is a word, such that reading $w$ from the states in $S'$ leads back to $S'$ (and therefore $w$ can be repeatedly read, i.e., pumped), and where the baseline run on $w$ is a minimal-weight cycle (with weight $0$). In particular, a stable cycle contains no negative cycles\footnote{The actual definition is somewhat more restrictive (\cref{def:stable cycle})}.

For example, in \cref{fig:abs:stable cycle} we illustrate a stable cycle $(S',u)$. The baseline run that cycles on $r$ gains weight $0$. The run from $q$ to $p$ gains negative weight, which is fine since it is not a cycle. The cycle from $t$ to $t$, must be non-negative.

\begin{figure}[ht]
    \centering
    \begin{minipage}{0.3\textwidth} 
    \centering
        \includegraphics[width=\textwidth]{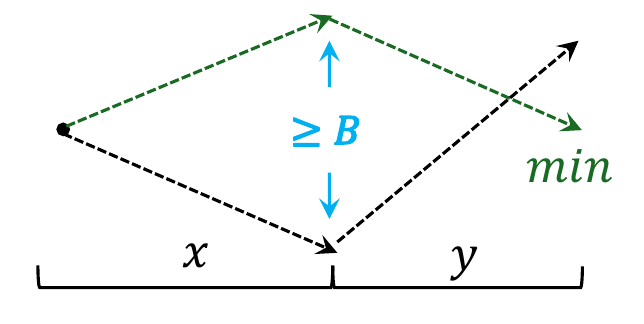}
        \caption{A $B$-gap witness.}
        \label{fig:abs:gap witness}
    \end{minipage}
    \hfill
    \begin{minipage}{0.3\textwidth} 
    \centering
        \includegraphics[trim=0 0pt 0 0, clip,scale=0.45]{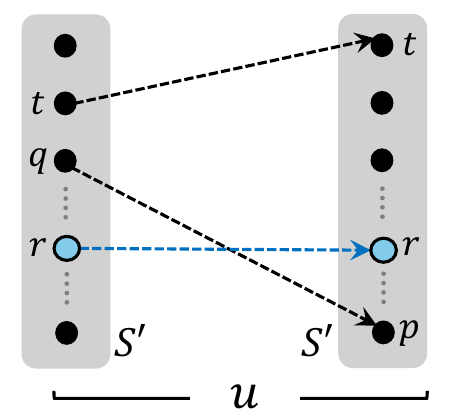}
        \caption{A stable cycle. The baseline $r$ is highlighted.}
        \label{fig:abs:stable cycle}
    \end{minipage}
    \hfill
    \begin{minipage}{0.3\textwidth} 
    \centering
        \includegraphics[trim=0 0 0 0, clip,scale=0.45]{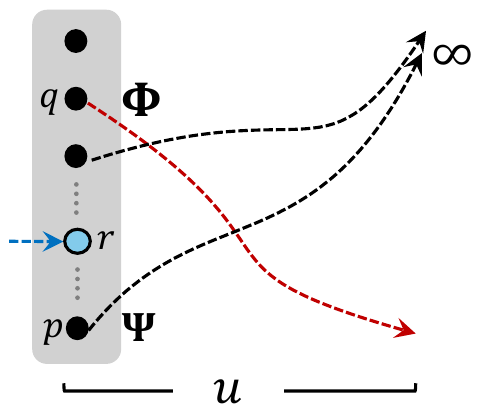}
        \caption{Potential and Charge, with baseline $r$.}
        \label{fig:abs:charge and pot}
    \end{minipage}
\end{figure}

For every stable cycle $(S',w)$, we identify pairs of states $p,q\in S$ called \emph{grounded} (\cref{def:grounded pairs}) such that reading $w$ a large number of times from $p$ to $q$ passes through a minimal cycle (with weight $0$), and so gains only bounded weight. 
We introduce a new \emph{cactus letter} $\alpha_{S',w}$ that has a transition only between grounded states of $(S',w)$, and represents a long repetition of $w$'s. The choice of weight for these transitions is delicate, and is omitted here (see~\cref{def:stabilization}).
Intuitively, reading $\alpha_{S',w}$ between grounded states represents pumping $w$ while remaining bounded (due to the $0$-weight cycle), while for pairs that are not grounded, pumping $w$ grows unboundedly (since all other cycles in $S'$ have positive weights).

Notice that we now have an infinite alphabet (since the words $w$ can be arbitrarily long). We now want to allow nesting of cactus letters, describing increasingly complex runs, and so we repeat the process using the new alphabet, and do this inductively ad-infinitum. Thus, a word $w$ in a cactus letter $\alpha_{S',w}$ may itself contain cactus letters. We end up with a WFA $\augA_\infty$ over an infinite alphabet $\Gamma_\infty$, whose letters represent unboundedly ``deep'' nested-cactus letters. 
We call a sequence of nested cactus letters a \emph{chain}.
\cref{fig:abs:cactus chain} depicts the formation of cactus letters and chains. 

\begin{figure}[ht]
  \begin{center}
    \includegraphics[trim=0 20pt 0 0, clip,width=0.8\textwidth]{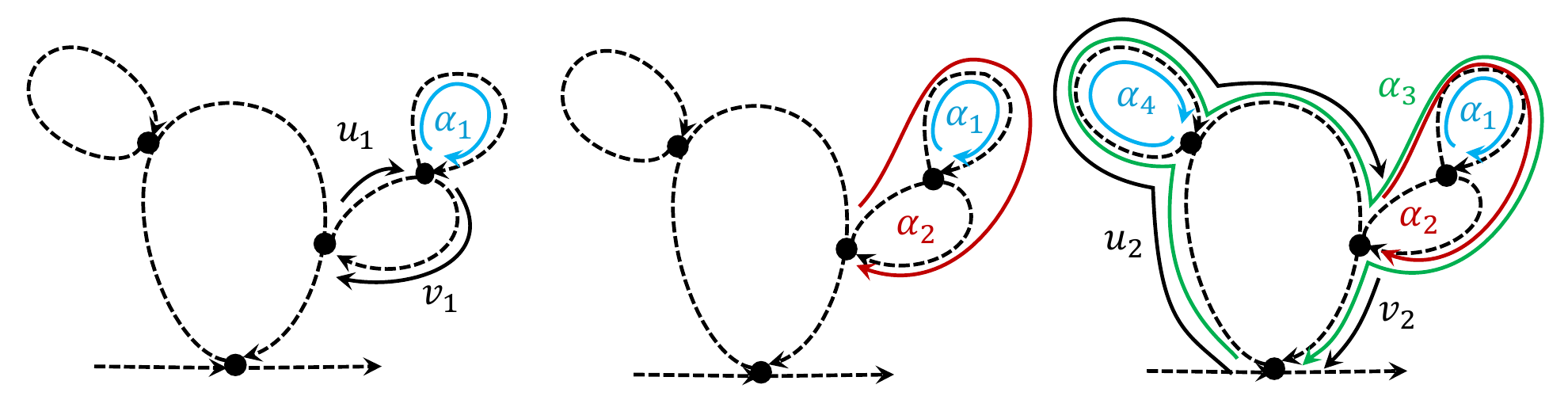}
  \end{center}
  \caption{Cactus letters and cactus chains: $\alpha_1$ is of depth $1$. The word $u_1\alpha_1v_1$ (left) forms a cactus letter $\alpha_2$ of depth $2$ (center). Then, $u_2\alpha_2v_2$ forms a cactus letter $\alpha_3$ of depth $3$ (right). Notice that $u_2$ contains the cactus letter $\alpha_4$. The sequence $\alpha_3,\alpha_2 ,\alpha_1$ is a cactus chain, and so is $\alpha_3,\alpha_4$. This also illustrates why they are called ``cactus letters''.}
  \label{fig:abs:cactus chain}
\end{figure}

The combination of the baseline construction with cactus letters yields a powerful and robust tool for reasoning about the behavior of $\augA_\infty$. Intuitively, it allows abstracting away parts of words as ``pumpable'', as well as shifting the perspective by replacing baselines. Concretely, in \cref{sec:cactus toolbox} we present three important tools:

\paragraph{A bound on chain depth}
Cactus chains can be arbitrarily deep. In our effort to make the alphabet finite, our first important insight is that if cactus chains become too deep, then they must contain \emph{degenerate} cactus letters. These  are letters where, intuitively, all pairs are grounded. Degenerate letters can be pumped ``for free'', in a sense that is made precise later. 
We remark that the proof of this result is nontrivial and involves interesting analysis of the strongly connected components of graphs that arise from the grounded pairs of the cactus letters on each level of the chain (\cref{sec:existence of degenerate cycle in chain}).

\paragraph{Baseline shifts} 
As mentioned in \cref{sec:abs:baseline aug}, by changing the baseline run we can shift the ``perspective'' on runs. In \cref{sec: baseline shift} we present a formal framework to do so, which also applies to cactus letters. However, cactus letters do not have a natural perspective shift. Thus, we introduce a new type of letters there, called \emph{rebase} letters, and another infinite hierarchy on top of the cactus hierarchy, which complicates things. We ignore such problems here, and leave them to the full proof.

\paragraph{Cactus unfolding}
A crucial usage of cactus letters is going back and forth between the different levels of abstraction: we can \emph{unfold} a cactus letter $\alpha_{S',w}$ to a word $w^n$ in a way that captures all the ``important'' features of runs. This, however, needs to be done carefully, and depends not only on the cactus letter, but on its surrounding prefix and suffix, e.g., $x\alpha_{S',w}y$. 
The framework in \cref{sec: cactus unfolding} captures this precisely. We remark that unfolding a cactus may change the reachable set of states, since abstracting away runs inevitably eliminates some of the original runs. This nuance, referred to as \emph{ghost states} ($\mathghost$) is considerably troublesome throughout the paper (see \cref{sec:reachable ghost states}).

One result of this toolbox is that while $\augA$ and ${\augA}_\infty$ are not equivalent (indeed, their alphabet is different), we can show that they are in fact equi-determinizable (\cref{cor:aug inf inf is det iff A is det}), so in particular it suffices to reason about the nondeterminizability of ${\augA}_\infty$.


\subsection{Potential and Charge (\cref{sec:dominance and potential})}
\label{sec:abs:pot and charge}
Recall that our goal is to reason about gap-witnesses. We now present two notions that take us a step closer. In a nutshell, the \emph{potential ($\pot$)} measures how far above the baseline run we can find ``relevant'' states, and the \emph{charge ($\charge$)} captures the weight of the current minimal run. 

We say that a state $q\in S$ is {\em dominant} in configuration $\vec{c}$, if there exists a word $u$ that has a finite-weight run from $q$, while all states below $q$ in $\vec{c}$ only have infinite-weight runs on $u$ (i.e., do not have runs).
Intuitively, $q$ is dominant in $\vec{c}$ if there is a suffix that ``kills'' all the states below it.
We say that $q$ is {\em maximal dominant} if $q$ has the maximal weight in $\vec c$ among all dominant states. 
The {\em potential} $\pot(w)$ of a word $w$ is the weight of a maximal dominant state in the configuration that $w$ reaches. 
Note that this weight represents the gap between the dominant state and the ($0$-weight) \emph{baseline} run (as opposed to the gap from the minimal state).

We define the {\em charge} $\charge(w)$ of $w$ to be the minimal weight in the configuration that $w$ reaches (negated, for technical reasons). Thus, $\charge(w)$ reflects the distance between the minimal run and the baseline run.
In \cref{fig:abs:charge and pot} we illustrate these concepts. The run from $r$ is baseline (notice that the baseline run might be above the potential). The weight of $q$ determines the potential, since $u$ sends all runs below it to $\infty$. The weight of $p$ is minimal, and determines the charge. 

A crucial property of the potential is that $\pot(w\gamma)$ cannot be much higher than $\pot(w)$ for a single letter $\gamma$. We refer to this property as \emph{bounded growth} (\cref{lem:bounded growth potential}). However, this property only holds when the alphabet is finite, which is no longer the case. 
We return to this later on.

Intuitively, the ``interesting'' value we wish to reason about is not the gap of a dominant state from the baseline (namely $\pot$), but the gap from the minimal run (to find a gap-witness), hence the definition of $\charge$.
The charge also has a ``bounded growth'' property, but it is of less interest. Importantly, the charge may \emph{decrease} abruptly, if the minimal run cannot continue and a much higher run becomes minimal.
Such behavior is very problematic for our proof (\cref{def:charge bounded decrease}). This is dealt with later.

For the proof, we need both the potential and the charge to ``play nice'' with the tools we develop earlier (namely baseline shifts and cactus unfolding). Developing such tools is a significant part of the work (see \cref{sec:growth of potential and charge}).

The technical reason for defining potential and charge is to break down the overall problem into two distinct settings. This becomes clearer as we explain the proof structure below. We mention that this is an important separation, without which the proof becomes unmanageable.

\subsection{A Witness for Nondeterminizability (\cref{sec:witness})}
\label{sec:abs:witness}
We are now ready to present our witness for nondeterminizability. Intuitively, the witness is 
a single word over $\Gamma_\infty$
that (finitely) represents an infinite set of $B$-gap witnesses for unbounded values of $B$, and hence implies nondeterminizability (i.e., it is \emph{sound}, \cref{lem:witness implies nondet}). 
Our main technical contribution in this work is to show that this characterization is \emph{complete}, i.e., that if $\augA_\infty$ is nondeterminizable, then it has such a witness.

A \emph{witness} (\cref{def:witness}) is a word $w_1w_2w_3\in \Gamma_\infty^*$ such that $w_1$ and $w_3$ are some prefix and suffix, and $w_2$ is a ``pumpable infix''. More precisely, $w_2$ can be replaced by a cactus letter $\alpha_{S',w_2}$ such that $w_1w_2w_3$ has finite weight in $\augA_\infty$, whereas $w_1\alpha_{S',w_2}w_3$ does not.

The existence of a witness suggests that replacing $\alpha_{S',w_2}$ with $w_2^k$ for some large $k$ will cause the non-grounded pairs to induce runs with unboundedly large weight. In addition, there cannot be any grounded pairs on $w_2$, as otherwise $\alpha_{S',w_2}$ has finite weight-transitions on them so $w_1\alpha_{S',w_2}w_3$ would attain finite weight (this follows the fact that the states reachable by $\alpha_{S',w_2}$ are a subset of those reachable by $w_2$, which is not strictly true, but is made precise and correct in the detailed proof).

Thus, $w_1w_2^kw_3$ is accepted only via non-grounded pairs. But the baseline run on $w_2$ remains low (at $0$), which means that after $w_1w_2^k$, there is a huge gap, after which $w_3$ causes the higher runs to continue with finite weight, and kills the baseline and lower runs. More precisely, these ``higher states'' are exactly the ghost-reachable states mentioned above: states that are not reachable via $\alpha_{S',w_2}$, but are reachable via $w_2^k$ with weight increasing as a function of $k$.
Therefore, a witness induces a family of $B$-gap witnesses for unbounded $B$. 
In \cref{lem:witness implies nondet} we give the precise details.

Once the definition of witness is set, we show an important result (\cref{lem:unfolding maintains potential}): for a word $u\alpha_{S',w}v$, if we replace $\alpha_{S',w}$ with $w^k$ for any large-enough $k$, then either $\pot(u\alpha_{S',w}v)=\pot(u w^k v)$, or there is a witness in $\cA_\infty$. 
The importance of this lemma is twofold. First, it (almost) shows that the potential ``plays nice'' with cactus unfolding, which is missing from the toolbox in \cref{sec:growth of potential and charge}, and second, it opens a line of  results of the form ``either something good happens, or there is a witness''. 
The latter is of importance because many failed attempts at this problem were due to too many cases where things ``go wrong''. Our techniques handle multiple such cases, but instead of things going wrong -- we end up witnesses (implying nondeterminizability).

\section{Nondeterminizable WFAs Have a Witness (\cref{sec:final nondet implies witness})}
\label{sec:abs:nondet implies witness}
We can now present the main proof structure. 
In the following, we present the proof outline, emphasizing the novel ideas. There are, however, several major results that we use without proof now, and discuss them after the proof in \cref{sec:abs:inc inf,sec:abs:discharging}. Note that the detailed proof has the opposite order. 
We remark that most details are left out of this sketch, but are referenced.

Consider a nondeterminizable WFA $\cA$. We reason about $\augA_\infty$ (as per their equi-determinizability) and show that it has a witness. This is done in two parts: first we prove in \cref{sec:abs:proof nondet to unbounded pot} that $\sup\{\pot(w)\mid w\in \Gamma^*\}=\infty$. That is, the potential over words in $\Gamma$ (without cactus letters) is unbounded in $\augA_\infty$. Next, we prove in \cref{sec:abs:proof unbounded pot to witness} that if the potential is unbounded, then $\augA_\infty$ has a witness.

\subsection{If $\cA$ is Nondeterminizable, Then the Potential is Unbounded}
\label{sec:abs:proof nondet to unbounded pot}
Since $\cA$ is nondeterminizable, then so is $\augA$, and it therefore has $B$-gap witnesses for every $B\in \bbN$. Thus, we have a sequence of words $\{x_ny_n\}_{n=1}^\infty$ such that the minimal run on $x_ny_n$ is at least $n$ above the minimal run after reading $x_n$ (where $n$ serves as the increasing bound $B$). 
Our first step is to change the baseline to a minimal run $\rho$ on $x_ny_n$. This means that $\rho$ now has weight $0$, so the minimal run on $x_n$ becomes very negative (\cref{fig:final baseline shift}). 
Observe that upon reading $y_n$, the charge $\charge$ decreases significantly (since the minimal run gets closer to the baseline). This is a useful property which we call \emph{discharging}, and is explained in \cref{sec:abs:discharging}. The implication of discharging is almost what we want: either the potential is unbounded, or there are words with infixes in which there are almost no decreasing runs (dubbed \emph{$D$-dip}, \cref{def:dip words}). If the potential is unbounded, we are done. 

A major step in the proof (\cref{sec:existence of separated increasing infixes}) is to show that the $D$-dip property implies the existence of a special structure called an \emph{increasing infix}, which we discuss in \cref{sec:abs:inc inf}. When such an infix is found, it can be turned into a cactus letter, in a process we refer to as \emph{budding} (\cref{sec:cactus budding}). Moreover, the budding process lowers the \emph{cost} -- a concept we now discuss.

The \emph{cost} of a word extends the notion of weight, by assigning a cactus letter $\alpha_{S',w}$ a cost much higher than that of $w$. In general, the cost upper-bounds the weight of any run on the word (in absolute value), not just the minimal one. 
The property of increasing infixes is that they can be turned into cactus letters while reducing the cost.

We now go back to the $y_n$ suffixes, which satisfy that $\charge(x_n)\gg \charge(x_ny_n)$. A crux of the proof is to replace each $y_n$ with a word $y'_n\in \Gamma_\infty$ (i.e., with cacti) that satisfies these charge constraints, but also has \emph{minimal cost}. Then, if we get a $D$-dip, we can find an increasing infix, and generate a new cactus. This reduces the cost of $y'_n$, leading to a contradiction to its minimality. 
Unfortunately, changing to $y'_n$ means that the alphabet is now infinite. This induces more complications, which we omit here.

We therefore conclude that the discharging sequence $y_n$ cannot have $D$-dips, so we have unbounded potential. We remark that the last part relies heavily on our results on bounded cactus chains, and in particular we show that we only need to unfold a bounded number of ``layers'' in cactus letters before reaching a finite alphabet simultaneously on all the $y'_n$ suffixes.

\subsection{If the Potential is Unbounded, then $\augA_\infty$ Has a Witness}
\label{sec:abs:proof unbounded pot to witness}
The second part resembles the structure of the first, and we therefore describe it even more briefly. Note, however, that the technical details of this part are actually more complicated than the first part above.

Since the potential is unbounded, we have a set of words $\{u_n\}_{n=1}^\infty$ over $\Gamma$ satisfying $\pot(u_n)\ge n$ for all $n\in \bbN$. 
We again replace them by words $\{u'_n\}_{n=1}^\infty$ over $\Gamma_\infty$ with minimal cost. This type of sequence is referred to as \emph{$\pot$-discharging} (due to the analogy with the first case). 
Similar techniques to the first case (again, the devil is in the details), allow us to conclude that there is either a witness or a $D$-dip. In the former case we are done, and in the latter we reach a contradiction to the minimal cost.

\subsection{Determinizability of WFA is Decidable}
\label{sec:abs:finale}
The decidability of determinizability now follows: we already mention above that determinizability is in RE. Thus, given a WFA $\cA$, we can recognize whether it is \textbf{non}determinizable by enumerating all tuples $(w_1,w_2,w_3)$ over $\Gamma_\infty$, and checking whether any of them form a witness (\cref{thm:decidability of determinization}). By our proof above, if $\cA$ (and therefore $\augA_\infty$) is indeed nondeterminizable, then there is a witness.

\section{Increasing Infix and Cactus Budding (\cref{sec:separated increasing infix,sec:cactus budding})}
\label{sec:abs:inc inf}
A separated increasing infix is a word $uxyv$ with a highly-organized structure (\cref{fig:abs:increasing infix}). Intuitively, after reading $u$, the states fall into ``sub-configurations'' $V_1,\ldots,V_k$, and this partition is kept after reading $x$ and reading $y$. Moreover, each sub-configuration increases by a fixed amount, or stays $0$. The three main properties of this structure are: 
\begin{enumerate}[itemsep=-3pt]
\item $y$ has much higher cost than $x$. In fact, higher than the cost of the cactus letter $\alpha_{S',x}$.
\item The gaps between the sub-configurations are huge. We can pump $x$ numerous times without the sub-configurations interfering with one another.
\item The behavior on $x$ and on $y$ is similar enough so we can replace $y$ with $x^k$ for some $k$.
\end{enumerate}

Intuitively, Conditions 2 and 3 allow us to pump $x$ enough times so that it effectively becomes $\alpha_{S',x}$ and also ``subsumes'' the behavior of $y$ (this intuition is made precise in the detailed proof). 
Then, we can replace $uxyv$ with $u\alpha_{S',x}v$, while maintaining similar behavior. 
Due to the huge cost of $y$ (Condition 1), this reduces the overall cost, and ``buds'' a cactus, leading to the contradiction in the main proof. 
This ``similar behavior'' mainly concerns the preservation of potential and charge, and is the focus of \cref{sec:cactus budding}.

\begin{figure}[ht]
  \begin{center}  
    \includegraphics[trim=0 10pt 0 0, clip,width=0.55\textwidth]{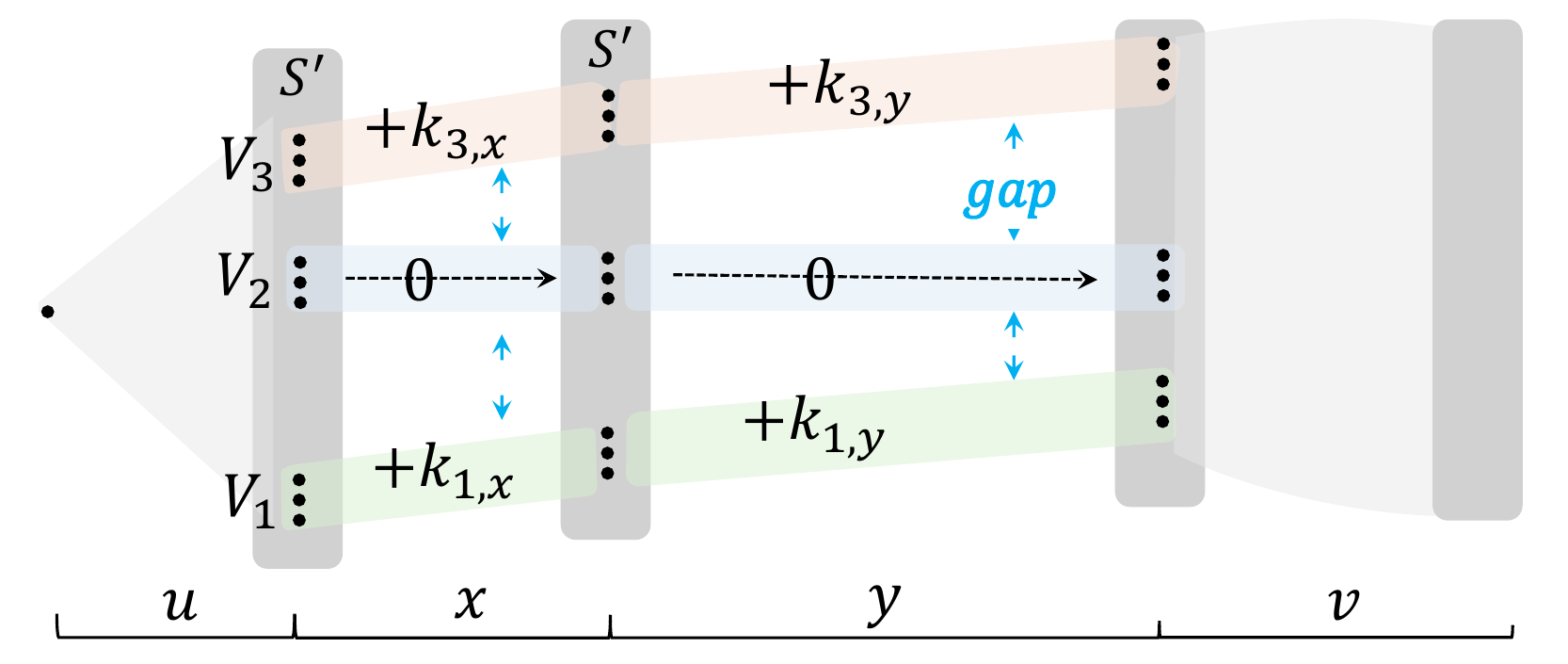}
  \end{center}
  \caption{A separated increasing infix. The runs in $V_1$ are all increased by $k_{1,x}$ upon reading $x$, and are all increased by $k_{1,y}$ upon reading $y$. Similarly for $V_3$ with $k_{3,x}$ and $k_{3,y}$. All the states in $V_2$ (the baseline run among them) remain in the same height upon reading both $x$ and $y$. The gaps between the sets is so large that we can pump $x$ enough times without e.g., any state in $V_1$ getting close to weight $0$.}
  \label{fig:abs:increasing infix}
\end{figure}

The most involved part is showing that such infixes exist (under the $D$-dip assumption, which we only name here without a proper definition). 
The first tool we develop to show their existence is a \emph{Ramsey-style theorem} for infinite colored sequences (\cref{sec:ramsey}). Notably, this is one part where our proof is purely existential. Therefore, in order to obtain complexity bounds (via our methods), this needs to be turned to a quantitative argument.
The second tool we develop (\cref{sec:existence of inc inf assuming covered,sec:existence of separated increasing infixes}) is a technique for ``untangling'' runs, by considering infinite sequences of words that get longer and longer, and then ``zooming-in'' on their infixes, so that the infixes also get longer, but also get simpler. We elaborate on this technique in \cref{sec:abs:discharging}.

The bottom line of this part is that in the presence of $D$-dip we can find an increasing infix $uxyv$ and safely replace it with the lower-cost $u\alpha_{S',x}v$.

\section{Discharging and Leveled Sequences (\cref{sec:leveled and discharging words,sec:potleveled and potdischarging words})}
\label{sec:abs:discharging}
One of our main conceptual contributions is a technique for ``untangling'' multiple runs, to find organized structures. We use this technique in several places (\cref{sec:leveled words sequences,sec:discharging word sequences,sec:potleveled words sequences,sec:potdischarging word sequences}), and we turn to briefly outline it, and then discuss its implications.

\subsection{The Zooming Technique (Fair Decompositions)}
\label{sec:abs:zooming}
Our goal is to find, under certain assumptions, an arbitrarily long infix $y$ and a word $xyz$, whose runs have the following properties: upon reading $y$, there is a small set of runs that do not ``interfere'' with each other (dubbed \emph{independent runs}, \cref{def:configuration independent runs}), and all other runs are within some small interval from one of these runs (dubbed \emph{$G$-cover}, \cref{def: config G cover words}). Moreover, the independent runs are extremely far apart, therefore allowing some pumping without risk of tangling runs. 
These properties are not dissimilar to those of increasing infixes, but the actual definitions have important differences, so we separate them also here.

On top of these requirements, we have an additional property we wish $y$ to satisfy, and this changes by instance. Specifically, the requirements are either that $\charge$ or $\pot$ remain within a small ``band'' (\emph{Leveled} or \emph{$\pot$-Leveled}) or that $\charge$ decreases / $\pot$ increases (\emph{Discharging} / \emph{$\pot$-Discharging}). 

The assumption under which we find such a structure is that we start with an infinite sequence of (unboundedly long) words satisfying only the desired additional property, but not necessarily inducing independent runs. 
Then, the key idea is to split an infix of each word into a number of parts, such that each part grows in length, the number of parts also grows, and the same general property can be found recursively in some inner part. 
This can be thought of as splitting $n$ into roughly $\sqrt{n}$ parts of size $\sqrt{n}$ (but is more involved, since we have other requirements from these parts). We refer to this as a \emph{fair decomposition} (e.g., \cref{def:decreasing gap fair decomposition}). Due to lack of space, illustrations of this are in the main proof (\cref{fig:discharging,fig:leveled}).

The main idea now is that if the independent runs do not form a $G$-cover, then there is a run that ``drifts'' far from other runs. We can then add it as an independent run and proceed inductively. 
Naturally, all phases in this construction have careful details to consider (in particular the ghost runs mentioned above are a challenge). 

\subsection{From Decompositions to Unbounded Potential and Witness}
\label{sec:abs:from decomposition to result horror}
The last key component of our proof (and the one most blatantly missing from the sketch in \cref{sec:abs:finale}) is how to use this leveled / discharging structure to either obtain unbounded potential over $\Gamma$, or a witness. This is where all our definitions come together (\cref{lem: decomposed seq with cover sparse no ghosts implies unbounded potential,lem: decomposed inc pot with cover sparse no ghosts implies type 1 witness}). 
We illustrate the idea by showing how we obtain unbounded potential given a fair-decomposition where an infix has decreasing charge. Again, we gloss over most of the details.

Consider a word $xy\in \Gamma_\infty^*$ where $y$ is very long, and upon reading $y$ there is a baseline run (of weight $0$) and the minimal run increases a lot, from below the baseline (i.e., $\charge$ decreases). Our goal is to show that $\sup\{\pot(w)\mid w\in \Gamma^*\}=\infty$. This is achieved in five steps, as illustrated in \cref{fig:proofHorror}.

\myparagraph{Step 1: Flatten $x$.} Our first step is to \emph{flatten} the prefix $x$, i.e., replace all its (nested) cactus letters by letters in $\Gamma$. This relies on our tools in \cref{sec: cactus unfolding}. Crucially, after flattening we maintain most of the properties we start with. Note that this does not explicitly appear in \cref{fig:abs:proofStep1}, since the alphabet is not depicted.

\myparagraph{Step 2: Baseline shift.} Next, we take the run on $y$ that decreases the most (the upper green run in \cref{fig:abs:proofStep1}), and we change our perspective to make it the baseline (\cref{sec: baseline shift}). Now, all the runs on the $y$ suffix are non-decreasing (intuitively, they are non-decreasing with respect to the run we took), as in \cref{fig:abs:proofStep2}.

\myparagraph{Step 3: Replace $y$ with $\alpha_{S',y}$.} Since all the runs on $y$ are non decreasing, then $y$ is already close to being a stable cycle (\cref{sec:stable cycles}). We show that it indeed is, and replace $y$ with the cactus $\alpha_{S',y}$ for some $S'$. This causes all the runs between non-grounded pairs to ``jump to $\infty$''. In particular, since the minimal run was increasing, we show that it is between non-grounded pairs, and hence jumps to $\infty$. Moreover, we start with a $G$-cover, so all the runs near this minimal run also jump to $\infty$, and the next-lowest run (that could possibly not jump to $\infty$) has a huge gap above this sub-configuration.

\myparagraph{Step 4: Baseline shift to minimal run.} We now almost have unbounded potential: some run above the minimal run stays bounded (namely the baseline run, or a lower run), but the minimal run increases unboundedly (as it is non-grounded). To conform with the definition of $\pot$, we need the increase to be above the baseline run. We therefore shift the baseline again, this time starting from the minimal run on $x$. We remark that this step requires the introduction of some additional constructs (\emph{jump letters} \cref{sec:jump letters}).

\begin{figure}[ht]
    \centering
    \begin{subfigure}{0.23\textwidth}
        \centering
        \includegraphics[width=\textwidth]{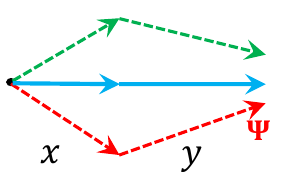}
        \caption{Step 1}
        \label{fig:abs:proofStep1}
    \end{subfigure}
    \begin{subfigure}{0.23\textwidth}
        \centering
        \includegraphics[width=\textwidth]{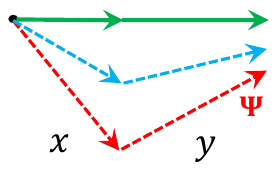}
        \caption{Step 2}
        \label{fig:abs:proofStep2}
    \end{subfigure}    
    \begin{subfigure}{0.23\textwidth}
        \centering
        \includegraphics[scale=0.8]{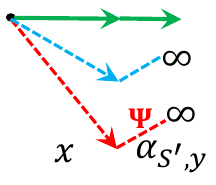}
        \caption{Step 3}
        \label{fig:abs:proofStep3}
    \end{subfigure}
    \begin{subfigure}{0.23\textwidth}
        \centering
        \includegraphics[scale=0.8]{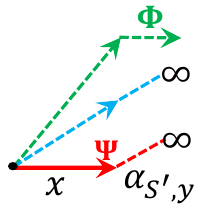}
        \caption{Step 4}
        \label{fig:abs:proofStep4}
    \end{subfigure}
    \caption{The steps of \cref{sec:abs:from decomposition to result horror} (\cref{lem: decomposed seq with cover sparse no ghosts implies unbounded potential}). The dashed lines are runs, the solid line is baseline.}
    \label{fig:proofHorror}
\end{figure}

By applying this technique to arbitrarily long $y$, and since $x$ was flattened to be over $\Gamma$, we obtain our unbounded potential (\cref{fig:abs:proofStep4}), with $\alpha_{S',y}$ (roughly) being the suffix that kills lower runs. 
This completes the holes left in the main proof in \cref{sec:abs:finale}, in very high level.

\section{Discussion}
\label{sec:abs:discussion}
The methods developed in this work represent a new type of reasoning for weighted automata: on top of standard (and even nested) ``cycle analysis'', we introduce the concept of baseline shifts, allowing versatile manipulations on runs (e.g., making all runs non-negative by shifting on the most decreasing one). We combine this with the novel cactus letters (and their budding framework), which allow both abstracting away behaviors as well as unfolding to get back those behaviors. This results in a powerful toolbox for reasoning about the runs of a WFA. 

Our Zooming technique is extensively used in this work, and we believe that it is generally useful for quantitative models. Indeed, it allows us to ``group'' runs together in a way that enables pumping. The latter is a central challenge in many proofs using quantitative models. 

Overall, we expect the methods and techniques introduced in this work to have far-reaching applications, beyond resolving the long-standing open problem of WFA determinizability. For example, the seminal work proving that boundedness of WFAs is decidable~\cite{Has82,LP04} is completely subsumed by our framework (however, note that (1) we do not provide complexity bounds and (2) it can also be obtained by Simon's Forest Factorization~\cite{simon1990factorization}).

Finally, now that the decidability of this problem is resolved, the race to establish its complexity can begin (with the current lower bound being PSPACE-hard as per \cref{apx:PSPACE hard}). 
An upper bound using our methods would require a quantitative Ramsey argument, as a first fundamental step (but would not suffice, as there are other existential steps).

In the next sections we present the complete work, with rigorous and detailed proofs (which account for the length of this paper). We start with a short reading guide.


\newpage
\pagestyle{fancy} 

\section{Preliminaries}
\label{sec:prelim}
\paragraph{A Reading Guide}
Due to the daunting length and complexity of our proof, we take several measures to assist the reader:
\begin{itemize}
    \item The current section is denoted in the header.
    \item Proof environments are colored in {\color{violet!75!black} purple hue, to distinguish them from other text}.
    \item We mark some results and definitions with \keyicon if they are key ideas, and with \lightbulbicon if their proof has special insights. We recommend focusing on \keyicon statements, and on \lightbulbicon proofs, for a first read.
    \item Most of our definitions use english terms rather than symbols, to assist in remembering their meaning. Very few (and widely used) ones are left  as symbols.
    \item Lastly, we note that on most PDF readers, the keyboard shortcut ``Alt + Left Arrow'' or ``Cmd + Left Arrow'' goes back after following a hyperlink (e.g., a reference to a definition). 
\end{itemize}

For an alphabet $\Sigma$, we denote by $\Sigma^*$ (resp. $\Sigma^+$) the set of finite words (resp. non-empty finite words) over $\Sigma$. 
For a word $w\in \Sigma^*$, we denote its length by $|w|$ and the set of its prefixes by $\pref(w)$. We write $w[i,j]=\sigma_i\cdots \sigma_j$ for the infix of $w$ corresponding to $1<i<j\le |w|$.

We denote by  $\bbNinf$ and $\bbZinf$ the sets $\bbN\cup \{\infty\}$ and $\bbZ\cup \{\infty\}$, respectively. We extend the addition and $\min$ operations to $\infty$ in the natural way: $a+\infty=\infty$ and $\min\{a,\infty\}=a$ for all $a\in \bbZinf$. By $\arg\min\{f(x)\mid x\in A\}$ we mean the set of elements in $A$ for which $f(x)$ is minimal for some function $f$ and set $A$.

\paragraph{Weighted Automata}
A \emph{$(\min,+)$ Weighted Finite Automaton} (WFA for short) is a tuple $\cA= \tup{Q,\Sigma, q_0, \Delta}$ with the following components:
\begin{itemize}
    \item $Q$ is a finite set of \emph{states}.
    \item $q_0\in Q$ is the \emph{initial} state.\footnote{Having a set of initial states does not add expressiveness, as it can be replaced by a single initial state that simulates the first transition from the entire set.}
    \item $\Sigma$ is an \emph{alphabet}. We remark that we often use an \emph{infinite alphabet} $\Sigma$, hence we do not require it to be finite. 
    \item $\Delta\subseteq Q\times \Sigma\times \bbZinf\times Q$ is a transition relation such that for every $p,q\in Q$ and $\sigma\in \Sigma$ there exists exactly\footnote{This is without loss of generality: if there are two transitions with different weights, the higher weight can always be ignored in the $(\min,+)$ semantics. Missing transitions can be introduced with weight $\infty$.} one weight $c\in \bbZinf$ such that $(p,\sigma,c,q)\in \Delta$.
\end{itemize}
If for every $p\in Q$ and $\sigma\in \Sigma$ there exists at most one transition $(p,\sigma,c,q)$ with $c\neq \infty$, then $\cA$ is called \emph{deterministic}.

\begin{remark}[On initial and final weights, and accepting states]
    \label{rmk: initial and final weights}
    Weighted automata are often defined with initial and final weights, i.e., 
    $q_0$ is replaced with an initial vector $\init\in \bbZinf^Q$ (and in particular may have several initial states with finite weight), and there are designated accepting states or a final weight vector $\fin\in \bbZinf^Q$.
    Then, the weight of a run also includes the initial weight and final weight (which may be $\infty$).

    In \cref{sec:no need for init and fin and acc} we show that the determinization problem for this general model can be reduced to that of our setting. Therefore, it is sufficient to consider our model, without initial and final weights, and with a single initial state. 
    We remark that while replacing $\init$ and $\fin$ with initial and accepting states is straightforward, making \emph{all} states accepting is nontrivial.
\end{remark}

\paragraph{Runs}
A \emph{run} of $\cA$ is a sequence of transitions $\rho=t_1,t_2,\ldots,t_m$ where $t_i=(p_i,\sigma_i,c_i,q_i)$ such that $q_i=p_{i+1}$ for all $1\le i<m$ and $c_i <\infty$ for all $1\le i \le m$.
We say that $\rho$ is a run \emph{on the word $w=\sigma_1\cdots\sigma_m$ from $p_1$ to $q_m$}, and we denote $\rho:p_1\runsto{w}q_m$. 
For an infix $x=w[i,j]$ we denote the corresponding infix of $\rho$ by $\rho[i,j]=t_i,\ldots,t_j$ (and sometimes by $\rho(x)$, if this clarifies the indices).
The \emph{weight} of the run $\rho$ is $\weight(\rho)=\sum_{i=1}^m c_i$. 

For a word $w$, the weight assigned by $\cA$ to $w$, denoted $\weight_{\cA}(w)$ is the minimal weight of a run of $\cA$ on $w$. For convenience, we introduce some auxiliary notations.

For a word $w\in \Sigma^*$ and sets of states $Q_1,Q_2\subseteq Q$, denote
\[\minweight_\cA(w,Q_1\to Q_2)=\min\{\weight(\rho)\mid \exists q_1\in Q_1,q_2\in Q_2,\ \rho:q_1\runsto{w}q_2\}\]
If $Q_1$ or $Q_2$ are singletons, we denote them by a single state (e.g., $\minweight_\cA(w,P\to q)$ for some set $P\subseteq Q$ and state $q$).
We can now define
\[
\weight_\cA(w)=\minweight_\cA(w,q_0\to Q)
\]
If there are no runs on $w$, then $\weight_\cA(w)=\infty$.
The function $\weight_\cA:\Sigma^*\to \bbZinf$ can be seen as the weighted analogue of the \emph{language} of an automaton.
We omit the subscript $\cA$ when clear from context.

For a word $w\in \Sigma^*$ and a state $q\in Q$, of particular interest are runs from $q_0$ to $q$ on $w$ that remain minimal throughout. Indeed, other runs are essentially ``cut short'' by some lower run. Formally, let $w=\sigma_1\cdots \sigma_n$ and consider a run $\rho:q_0\runsto{w}q$ with $\rho=t_1,t_2,\ldots,t_n$ and $t_i=(q_i,\sigma_i,e_i,q_{i+1})$. We say that $\rho$ is \emph{seamless} if for every $1\le j\le n$ it holds that $\weight(t_1,\ldots,t_j)=\minweight(\sigma_1\cdots \sigma_j,q_0\to q_{j+1})$.

We define $\wmax{w}$ to be the maximal weight (in absolute value) occurring in any possible transition on the letters of $w$, i.e., 
\[\wmax{w}=\max\{|c|<\infty \mid \exists p,q\in Q, 1\le i\le n,\  (p,\sigma_i,c,q)\in \Delta\}\] 
Additionally, the \emph{maximal effect} of $w$ is an upper bound on the change in weight that can happen upon reading $w$, defined as $\maxeff{w}=\sum_{i=1}^n\wmax{\sigma_i}$.
Thus, any finite-weight run $\rho$ on $w$ satisfies $|\weight(\rho)|\le \maxeff{w}\le \wmax{w}|w|$.

We write $p\runsto{w}q$ when there exists some run $\rho$ 
such that $\rho:p\runsto{w}q$.
We lift this notation to concatenations of runs, e.g., $\rho: p\runsto{x}q\runsto{y}r$ means that $\rho$ is a run on $xy$ from $p$ to $r$ that reaches $q$ after the prefix $x$. We also incorporate this to $\minweight$ by writing e.g., $\minweight(xy,q\runsto{x}p\runsto{y}r)$ to mean the minimal weight of a run $\rho:q\runsto{x}p\runsto{y}r$.

For a set of states $Q'\subseteq Q$ and a word $w\in \Sigma^*$, we define the {\em reachable set of states from $Q'$ upon reading $w$} as $\delta_{\bbB}(Q',w)=\{q\in Q\mid \exists p\in Q',\ p\runsto{w}q\}$ (where $\bbB$ stands for ``$\bbB$oolean'').

A WFA is \emph{trim} if every state is reachable from $q_0$ by some run. Note that states that do not satisfy this can be found in polynomial time (by simple graph search), and can be removed from the WFA without changing the weight of any accepted word. Throughout this paper, we assume that all WFAs are trim.

\paragraph{Determinizability}
We say that WFAs $\cA$ and $\cB$ are \emph{equivalent} if $\weight_\cA\equiv \weight_\cB$. We say that $\cA$ is \emph{determinizable} if it is equivalent to some deterministic WFA. Our central object of study is the following problem.
\begin{problem}[WFA Determinizability]
\label{prob:determinizability}
    Given a WFA $\cA$ over a finite alphabet $\Sigma$, decide whether $\cA$ is determinizable.
\end{problem}
The determinization problem is PSPACE-hard. This does not seem to be published anywhere, and we therefore provide a proof in \cref{apx:PSPACE hard} (we recommend reading this proof only after \cref{thm:det iff bounded gap}, as it relies on it).

\paragraph{Configurations}
A \emph{configuration} of $\cA$ is a vector $\vec{c}\in \bbZinf^Q$ which, intuitively, describes for each $q\in Q$ the weight $\vec{c}(q)$ of a minimal run to $q$ thus far (assuming some partial word has already been read). 
For a state $q$ we define the configuration $\vec{c_q}$ that assigns $0$ to $q$ and $\infty$ to $Q\setminus\{q_0\}$.
Intuitively, before reading a word, $\cA$ is in the configuration $\vec{c_\init}=\vec{c_{q_0}}$. 

We adapt our notations to include a given starting configuration $\vec{c}$, as follows.
Given a configuration $\vec{c}$ and a word $w$, they induce a new configuration $\vec{c'}$ by assigning each state the minimal weight with which it is reachable via $w$ from $\vec{c}$. We denote this by 
$\xconf(w,\vec{c})(q)=\minweight_{\vec{c}}(w,Q\to q)$ for every $q\in Q$.
In particular, $\xconf(w,\vec{c}_{\init})$ is the configuration that $\cA$ reaches by reading $w$ along a minimal run. 

We use the natural component-wise partial order on configurations. That is, for two configuration $\vec{c},\vec{d}$, we say that $\vec{d}$ is \emph{superior} to $\vec{c}$, denoted $\vec{c}\le \vec{d}$, if $\vec{c}(q)\le \vec{d}(q)$ for every $q\in Q$. We also denote the \emph{support} of a configuration by $\supp(\vec{c})=\{q\in Q\mid \vec{c}(q)<\infty\}$.

For a run $\rho:p\runsto{w}q$ and configuration $\vec{c}$, we define the \emph{weight of $\rho$ from $\vec{c}$} to be $\weight_{\vec{c}}(\rho)=\vec{c}(p)+\weight(\rho)$. We similarly extend the notation $\minweight$ to include configurations, by writing
\[
\minweight_{\vec{c}}(w,Q_1\to Q_2)=\min\{\weight_{\vec{c}}(\rho) \mid \exists q_1\in Q_1,q_2\in Q_2,\ \rho:q_1\runsto{w}q_2\}
\]
(we remark that we never need both the subscript $\cA$ and $\vec{c}$ together).
We say that $\rho$ is \emph{seamless from $\vec{c}$} if for every $1\le j\le n$ it holds that $\weight_{\vec{c}}(t_1,\ldots,t_j)=\minweight_{\vec{c}}(\sigma_1\cdots \sigma_j,T\to q_{j+1})$ where $T=\supp(\vec{c})$.

\subsection{A Characterization of Determinizability}
\label{sec:determinizability equiv characterization}
The following theorem gives an equivalent characterization of determinizability by means of runs, and of how far two potentially-minimal runs can get away from one another, which we commonly refer to as a \emph{gap}. 
Intuitively, we show that $\cA$ is determinizable if and only if there is some bound $B\in \bbN$ such that if two runs on a word $x$ obtain weights that are far from each other, but the ``upper'' run can still become minimal over some suffix, 
then the runs are at most $B$ apart. This is depicted in \cref{fig:abs:gap witness}.

We remark that this theorem is folklore, but to our knowledge has not been stated explicitly anywhere. Moreover, since we often treat infinite alphabets, we need to tweak the statement accordingly.
A similar characterization (with finite alphabet) was given for \emph{discounted-sum automata}~\cite{almagor2024determinization}.
\begin{definition}[$B$-Gap Witness]
\label{def: B gap witness}
    For $B\in \bbN$, a \emph{$B$-gap witness over alphabet $\Sigma'$} consists of a pair of words $x,y\in \Sigma'^*$ and a state $q\in Q$ such that
    \begin{itemize}
        \item $\minweight_\cA(x\cdot y,q_0\to Q)=\minweight_\cA(x\cdot y,q_0\runsto{x}q\runsto{y}Q)<\infty$, but 
        \item $\minweight_\cA(x,q_0\to q)-\minweight_\cA(x,q_0\to Q')>B$.
    \end{itemize}
\end{definition}
\noindent Then, the characterization is as follows.
\begin{theorem}
    \label{thm:det iff bounded gap}
    Consider a WFA $\cA$, then the following hold.
    \begin{enumerate}
        \item If there exists $B\in \bbN$ such that there is no $B$-gap witness over $\Sigma$, then $\cA$ is determinizable.
        \item If there is a finite alphabet $\Sigma'\subseteq \Sigma$ such that for every $B\in \bbN$ there is a $B$-gap witness over $\Sigma'$, the $\cA$ is not determinizable.
    \end{enumerate}
 \end{theorem}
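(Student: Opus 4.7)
I would prove the two directions separately, with the bridge being the \emph{normalized configuration} of $\cA$: for $w \in \Sigma^*$ write $\vec{c}_w := \xconf(w, \vec{c}_{\init})$ for the configuration reached after $w$, and consider its shift $\vec{c}_w - \min \vec{c}_w$, which has minimum entry $0$.

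\textbf{For Part 1}, assuming no $B$-gap witness exists over $\Sigma$, I would determinize $\cA$ via a \emph{bounded-profile subset construction}. The states of $\cD$ are profiles $\vec{p} : Q \to \{0, 1, \ldots, B\} \cup \{\infty\}$ with $\min \vec{p} = 0$ (finitely many). The initial state is the shift of $\vec{c}_{\init}$, all states are accepting, and on letter $\sigma$ from $\vec{p}$ the transition has weight $m := \min \xconf(\sigma, \vec{p})$ and target profile $\xconf(\sigma, \vec{p}) - m$ with entries exceeding $B$ truncated to $\infty$. An induction on $|w|$ shows that the accumulated weight of $\cD$ on $w$ equals $\weight_\cA(w) = \min \vec{c}_w$, \emph{provided} that truncation never discards a state that some later suffix would drive to the minimal run. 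This is exactly the no-$B$-gap-witness hypothesis: if $q$ satisfies $\vec{c}_w(q) - \min \vec{c}_w > B$, then by the contrapositive of \cref{def: B gap witness}, no continuation $y$ can make $q_0 \runsto{w} q \runsto{y} Q$ the minimum-weight $wy$-run, so setting $\vec{p}(q) = \infty$ is lossless.

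\textbf{For Part 2}, I would argue by contradiction. Suppose $\cA$ has a deterministic equivalent $\cD$ with state set $P$, and write $\delta_\cD(z)$ for the state reached on $z \in \Sigma^*$. For each $B$, let $(x_B, y_B, q_B)$ be a $B$-gap witness over the finite $\Sigma' \subseteq \Sigma$. The triple $(q_B, \delta_\cD(x_B), \booltrans(Q_0, x_B))$ lies in the finite set $Q \times P \times 2^Q$, so pigeonhole yields an infinite subsequence on which it is constantly $(q^*, s^*, S^*)$; note $q^* \in S^*$ by the gap-witness property. Finiteness of $\Sigma'$ turns the product $2^Q \times P$ over $\Sigma'$-moves into a finite automaton, so there is a word $u \in \Sigma'^*$ of length at most $2^{|Q|} \cdot |P|$ with $\delta_\cD(u) = s^*$ and $\booltrans(Q_0, u) = S^* \ni q^*$. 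Determinism of $\cD$ gives $\weight_\cA(u y_B) - \weight_\cA(u) = \weight_\cA(x_B y_B) - \weight_\cA(x_B)$, and the $B$-gap-witness identities produce the lower bound $\weight_\cA(u y_B) > \weight_\cA(u) + B + \minweight(y_B, q^* \to Q)$. On the other hand, the concrete run $q_0 \runsto{u} q^* \runsto{y_B} Q$ witnesses $\weight_\cA(u y_B) \le \vec{c}_u(q^*) + \minweight(y_B, q^* \to Q)$ and $\weight_\cA(u) = \min \vec{c}_u$. Together these force $\vec{c}_u(q^*) - \min \vec{c}_u > B$, but this quantity is at most $2 \wmax{u} \cdot |u|$, a constant independent of $B$ since both $|u|$ and $\wmax{u}$ are bounded (the latter using $u \in \Sigma'^*$ with $\Sigma'$ finite); contradiction for $B$ large enough.

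\textbf{Main obstacle.} In Part 1 the subtle point is that the truncation invariant must hold at every prefix, not only at the end; the induction must track the full gap structure of $\vec{c}_w$ and argue that any ``phantom'' gap ever created by the construction would translate into a genuine $B$-gap witness for $\cA$, contradicting the hypothesis. In Part 2, the delicate step is the pigeonhole-plus-short-witness choice of $u$: finiteness of $\Sigma'$ is crucial because it bounds both $|u|$ and $\wmax{u}$ independently of $B$, which in turn bounds the gap $\vec{c}_u(q^*) - \min \vec{c}_u$; without a finite witnessing alphabet, both quantities could grow with $B$ and the contradiction would collapse, which is precisely why the theorem requires the witnesses to come from a finite $\Sigma'$.
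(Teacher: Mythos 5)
Your proposal is correct and follows essentially the same route as the paper's proof in \cref{sec:apx det iff bounded gap}: Part~1 is the same bounded-profile subset construction with truncation at $B$, and Part~2 is the same pigeonhole-plus-short-replacement argument. Your Part~2 is marginally tidier than the paper's---by comparing $\weight_\cA(uy_B)-\weight_\cA(u)$ to $\weight_\cA(x_By_B)-\weight_\cA(x_B)$ directly through determinism of $\cD$, you sidestep the auxiliary bound $|\minweight_\cD - \minweight_\cA| \le M(|Q|+1)$ that the paper establishes first---but the key mechanism (pigeonhole on $(q^*,s^*,S^*)$, a short $u\in\Sigma'^*$ reaching the same pair, and the contradiction that $\vec{c}_u(q^*)-\min\vec{c}_u>B$ is impossible once $|u|$ and $\wmax{u}$ are fixed) is the same.
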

We prove the theorem in \cref{sec:apx det iff bounded gap}. We remark that the proof there considers also initial and final weights, as it is later used in \cref{sec:no need for init and fin and acc} to establish \cref{rmk: initial and final weights}.

Observe that for WFAs over a finite alphabet, \cref{thm:det iff bounded gap} provides an exact characterization of determinizability.

\section{The Baseline-Augmented Subset Construction}
\label{sec:augmented construction}
We now construct a WFA $\augA$ that augments $\cA$ with additional information regarding its runs. This information is kept in both the states and in the alphabet.
Technically, the alphabet of $\augA$ consists of the \emph{transitions} of $\cA$, i.e. $\Delta$. Each state of $\augA$ contains information about the current state, the current reachable subset of states, as well as a special \emph{baseline} state. 
Intuitively, the baseline states aim to describe the minimal-weight run on the current word. 

We proceed with the precise construction.
Let $\cA= \tup{Q,\Sigma, q_0, \Delta}$. We define $\augA=\tup{\augStates,\Delta,\augInitState,\augTrans}$ with the following components:
\begin{itemize}
    \item The states are 
    $\augStates=\{(q,p,T) \in Q \times Q \times 2^Q \mid q,p \in T\}$.
    We refer to the components of a state $(q,p,T)\in S$ as the \emph{inner state} $q$, the \emph{baseline component} $p$, and the \emph{reachable subset} $T$.
    \item The alphabet of $\augA$ is the set of transitions $\Delta$ of $\cA$. For clarity, we denote elements of $\Delta$ using either explicit tuples, or by $\tau$. In practice, we consider reading only transitions of finite weight, but this is enforced in the transition relation.
    \item The initial state is $(q_0,q_0,\{q_0\})$.
    \item The transitions $\augTrans$ are defined as follows. 
    Consider states $(p_1,q_1,T_1)$ and $(p_2,q_2,T_2)$ in $\augStates$ and a letter $(q_1,\sigma,c,q_2)\in \Delta$ with $c\neq \infty$
    (i.e., the transition in $\cA$ from $q_1$ to $q_2$ on some $\sigma\in \Sigma$). 
    If $T_2=\booltrans(T_1,\sigma)$, then consider the unique transition $(p_1,\sigma,c',p_2)\in \Delta$. We add to $\augTrans$ the transition $((p_1,q_1,T_1),(q_1,\sigma,c,q_2),c'-c,(p_2,q_2,T_2))$.


    Finally, we complete any undefined transitions with $\infty$.
    
\end{itemize}
To gain some intuition on the construction (specifically on the transition function), observe that for a state $(p_1,q_1,T_1)$ and letter $(q,\sigma,c,q')$, the transition on the second and third component is entirely deterministic (assuming it has finite weight): any state $(p_2,q_2,T_2)$ reached must satisfy $T_2=\booltrans(T_1,\sigma)$, and it must hold that $q_1=q$ and $q_2=q'$.
The first component simulates $\cA$: from $p_1$ we can reach any state $p_2$ that satisfies $(p_1,\sigma,c',p_2)$ for some $c'$. 
Finally, the cost of the transition is $c'-c$: this reflects the fact that we ``normalize'' all weights so that if we had started at state $p$, the cost of the transition is $0$.

The remainder of our analysis is carried out on $\augA$. We now show that this is sound and complete.

We now elaborate further on the behavior of $\augA$. Recall that $\augA$ reads a sequence $x$ of transitions of $\cA$ (and this sequence must form a run in order to have finite weight). Each state in $\augA$ keeps track of two runs of $\cA$, and of the set of reachable states. This is done as follows.
Each transition $(q_1,\sigma,c,q_2)$ that is read (as a letter for $\augA$) induces the letter in $\sigma\in \Sigma$ that $\cA$ reads, but in addition ``suggests'' a transition on $\sigma$ from $q_1$ to $q_2$. 
Then, this suggestion is only ``accepted'' in states whose baseline component is $q_1$, i.e., $(p_1,q_1,T_1)$ for some $p_1,T_1$ with $p_1,q_1\in T_1$. 
This suggestion means that $\augA$ can now move to any state $p_2$ reachable from $p_1$ via $\sigma$, while the baseline component is now $q_2$. 
Further observe that upon reading a word $x$, all the states reachable by $\augA$ are of the form $(\cdot,p,T)$, i.e., share the same second and third components. Moreover, the possible reachable inner states (the first component) are exactly those in $T$, which are the states reachable by $\cA$ upon reading the letters in the word in $\Sigma^*$ induced by $x$. 
Thus, $\augA$ essentially simulates the run tree of $\cA$, with some additional information.

Of particular importance are states of the form $(q_1,q_1,T_1)$, which are reachable and can accept the suggestion $(q_1,\sigma,c,q_2)$. Then, $\augA$ may move to $(q_2,q_2,T_2)$ (as well as possibly to other states). Due to the weight normalization, the weight of this transition in $\augA$ is $c-c=0$. Such states are central to our analysis.
\begin{definition}[Baseline States and Baseline Runs]
    \label{def:baseline states}
    A state in $S$ is a \emph{baseline} if it is of the form $(p,p,T)$ for some $p\in T$. 
    A run $\rho=t_1\cdots t_m$ of $\augA$ with $t_i=(s_i,\tau_i,d,s_{i+1})$ is called a \emph{baseline} if $s_i$ is a baseline for all $i$.
\end{definition}

\begin{observation}
    \label{obs:baseline runs have weight 0}
    Recall that the weight of a transition in $\augA$ on letter $(p,\sigma,c,q)$ is normalized by reducing $c$ from any transition. In particular, a transition between baseline states $(p,p,T)\runsto{\sigma}(q,q,T)$ has weight $0$. Consequently, if $\rho$ is a baseline run, then $\weight(\rho)=0$. Note that there may still be other runs of both positive and negative weights.
\end{observation}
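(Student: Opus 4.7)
The plan is to unpack the definition of $\augTrans$ at a baseline state and show that the normalization by $-c$ exactly cancels the weight of the corresponding transition in $\cA$. The observation is essentially a bookkeeping fact, so I do not expect a real obstacle — the only subtle point is that for a baseline-to-baseline transition the ``baseline-suggested'' transition in $\cA$ is \emph{the same} transition that updates the first component, forcing its two weights to coincide.

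Concretely, I would first fix a baseline state $(p,p,T)\in \augStates$ and consider an arbitrary finite-weight outgoing augmented transition. By the construction, reading a letter $(q_1,\sigma,c,q_2)\in \Delta$ from $(p_1,q_1,T_1)=(p,p,T)$ forces $q_1=p$, and the destination is of the form $(p_2,q_2,T_2)$ where $T_2=\booltrans(T_1,\sigma)$ and $(p_1,\sigma,c',p_2)=(p,\sigma,c',p_2)\in\Delta$; the weight incurred is $c'-c$. If the destination is also a baseline, then $p_2=q_2$, so $(p,\sigma,c',p_2)$ and $(p,\sigma,c,q_2)$ are literally the same triple in $Q\times\Sigma\times Q$. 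By the WFA convention that for every $p,p',\sigma$ there is a \emph{unique} weight with $(p,\sigma,\cdot,p')\in \Delta$, we conclude $c'=c$, and hence the augmented transition has weight $c-c=0$.

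For the second part, let $\rho=t_1\cdots t_m$ be a baseline run with $t_i=(s_i,\tau_i,d_i,s_{i+1})$. By \cref{def:baseline states} each $s_i$ is a baseline state, so in particular each $t_i$ is a transition between two baseline states. The previous paragraph gives $d_i=0$ for every $i$, whence $\weight(\rho)=\sum_{i=1}^m d_i = 0$. The closing remark (``runs of both positive and negative weights may exist'') requires no argument, but is explained by the fact that from $(p_1,q_1,T_1)$ the first component $p_1$ may non-deterministically move to \emph{any} $p_2$ with $(p_1,\sigma,c',p_2)\in\Delta$, and for such non-baseline choices the difference $c'-c$ can be of arbitrary sign, matching the intuition that baseline runs merely ``set the zero'' against which all other runs are measured.
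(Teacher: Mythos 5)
Your argument is correct and matches the paper's intended (inline) justification: you identify that the letter $(p,\sigma,c,q)$ and the unique $\cA$-transition $(p,\sigma,c',q)$ used to compute the augmented weight are the same element of $\Delta$ once both endpoints are baseline, so uniqueness of weights forces $c'=c$ and the augmented weight $c'-c$ vanishes, and the run statement follows by summation. The paper states this more tersely but the bookkeeping is the same; your explicit appeal to the ``exactly one weight per $(p,\sigma,q)$'' convention is the one nontrivial point and you handle it correctly.
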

The intuition behind the baseline run is that it forms a ``point of view'' (POV) of the various runs of $\cA$ on a given word, in the sense that the weights of the transitions in $\cA$ are shifted so that the baseline run has weight $0$, and all other runs are shifted accordingly. We formalize this intuition in \cref{sec:shift POV}.
In particular, this enables us to show that $\cA$ is determinizable if and only if $\augA$ is determinizable.

\subsection{Shifts Between $\cA$ and $\augA$}
\label{sec:shift POV}
Intuitively, the runs of $\cA$ and of $\augA$ have the same relative structure, up to the choice of baseline run dictated in $\augA$. 
Indeed, for a word $w\in \Sigma^*$ and a run $\rhobase$ of $\cA$ on $w$, when $\augA$ reads $\rhobase$, each run corresponds to a run of $\cA$ on $w$, with the weights shifted according to $\rhobase$. In particular, the difference in weights between any two runs of $\augA$ on $\rhobase$ is the same as the difference in weights between the corresponding runs of $\cA$ on $w$.
Conversely, for a word $\aug{w}\in \Delta^*$, we can associate a word $w\in \Sigma^*$ such that $\aug{w}$ is a run of $\cA$ on $w$, and such that each run $\aug{\rho}$ of $\augA$ on $\aug{w}$ corresponds to a run $\rho$ of $\cA$ on $w$, and such that the difference in weight is similarly maintained as above.

We now formalize this intuition.
Consider a word\footnote{Observe that the definition does not actually depend on $w$, as $w$ is already induced by $\rhobase$.} $w\in \Sigma^*$ and a run $\rhobase=t_1,t_2,\ldots,t_n$ of $\cA$ on $w$ with $t_i=(q_i,\sigma_i,c_i,q_{i+1})$. We think of the run $\rhobase$ as a word read by $\cA$. 
Consider a run $\augrho=\aug{r_1},\aug{r_2},\ldots, \aug{r_n}$ of $\augA$ on $\rhobase$ where for all $1\le i< n$ we have
\[\aug{r_i}=((p_i,q_i,T_i),t_i,d_i,(p_{i+1},q_{i+1},T_{i+1}))\in \augTrans\]
The \emph{$\rhobase$-shifted run of $\cA$ corresponding to $\augrho$},
denoted $\shifttoorig(\augrho)$, 
is then the run $\rho=t'_1,t'_2,\ldots, t'_n$ where $t'_i=(p_i,\sigma_i,d_i+c_i,p_{i+1})$.
Note that $\rho$ follows the inner states of the run $\augrho$, but has weights shifted by $\rhobase$. 

We first observe that $\rho$ is indeed a run of $\cA$ on $w$: the initial state of $\augrho$ is $\augInitState=(q_0,q_0,\{q_0\})$, and therefore the initial state of $\rho$ is $q_0$. For the transitions, for every $1\le i< n$, since $\aug{r_i}\in \augTrans$ is a transition on the letter $t_i$, then (by the definition of $\augTrans$) we have that $(p_i,\sigma_i,c'_i,p_{i+1})$ for $c'_i$ such that $c'_i-c_i=d_i$. But then $c'_i=c_i+d_i$, so the transition $t'_i$ is indeed a valid transition in $\Delta$.

We now show that $\rhobase$-shifted runs maintain the weight differences of their origin runs. Intuitively, this holds because both runs are shifted by the same corresponding weights of $\rhobase$.
\begin{proposition}
\label{prop: pov shift from aug to A maintains distance}
    Consider a run $\rhobase$ of $\cA$ and two runs $\augrho=\aug{r_1},\aug{r_2},\ldots,\aug{r_n}$ and $\aug{\mu}=\aug{x_1},\aug{x_2},\ldots,\aug{x_n}$ of $\augA$ on $\rhobase$. 
    Let $\rho=r'_1,r'_2,\ldots,r'_n$ and $\mu=x'_1,x'_2,\ldots,x'_n$ be 
    $\shifttoorig(\augrho)$ and $\shifttoorig(\aug{\mu})$, respectively,
    then for every $1\le i\le n$ we have
    \[\weight_{\augA}(\aug{r_1},\aug{r_2},\ldots,\aug{r_i})-\weight_{\augA}(\aug{\mu}=\aug{x_1},\aug{x_2},\ldots,\aug{x_i})=\weight_\cA(r'_1,r'_2,\ldots,r'_i)-\weight_\cA(x'_1,x'_2,\ldots,x'_i)\]
\end{proposition}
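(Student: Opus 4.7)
The plan is a direct unwinding of the definition of $\shifttoorig$. The key observation is that the shift transformation adds the same correction term to every run of $\augA$ on the common baseline $\rhobase$, so this correction cancels when one takes the difference of two such weights.

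More precisely, let $c_j$ denote the weight of the $j$-th transition $t_j$ of $\rhobase$, and let $d_j$ and $e_j$ denote the weights in $\augA$ of the transitions $\aug{r_j}$ and $\aug{x_j}$ of $\augrho$ and $\aug{\mu}$, respectively. By construction of $\augTrans$, the corresponding transitions $r'_j$ in $\shifttoorig(\augrho)$ and $x'_j$ in $\shifttoorig(\aug{\mu})$ have weights exactly $d_j + c_j$ and $e_j + c_j$ in $\cA$ (this is how $\shifttoorig$ is defined: the ``$-c$'' shift that was applied going from $\cA$ to $\augA$ is reversed by adding $c_j$). I would state this in one sentence as a lemma-like remark and then compute.

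Summing over $j = 1, \ldots, i$, the weight prefix of $\rho$ satisfies $\weight_\cA(r'_1, \ldots, r'_i) = \sum_{j=1}^i (d_j + c_j) = \weight_{\augA}(\aug{r_1}, \ldots, \aug{r_i}) + \sum_{j=1}^i c_j$, and analogously $\weight_\cA(x'_1, \ldots, x'_i) = \weight_{\augA}(\aug{x_1}, \ldots, \aug{x_i}) + \sum_{j=1}^i c_j$. Subtracting these two identities, the common term $\sum_{j=1}^i c_j$ (which is just the weight of the prefix of $\rhobase$) cancels, and we obtain the claimed equality.

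There is essentially no obstacle: the statement is a bookkeeping identity, and the only thing to be careful about is confirming that $\shifttoorig$ is well-defined on $\augrho$ and $\aug{\mu}$ in the sense that both transitions occur on the same letter $t_j$ of $\rhobase$ and therefore pick up the same correction $c_j$. This is immediate because both runs are runs of $\augA$ \emph{on the same word $\rhobase$}, so at position $j$ they read the same letter with the same weight $c_j$. No induction is really needed, but the argument can equivalently be phrased as a one-line induction on $i$ if a more formal presentation is desired.
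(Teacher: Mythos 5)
Your proof is correct and is essentially the same as the paper's: both rest on the observation that $\shifttoorig$ adds the common correction $c_j$ (the $j$-th weight of $\rhobase$) to each transition, so the correction cancels in the difference. The paper phrases this as an induction on $i$ while you sum directly, but this is only a presentational difference, as you yourself note.
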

\begin{proof}\ifproofs   
    The proof follows easily by induction on $i$. The base case is trivial: for $i=0$ all runs accumulate $0$ weight, so the differences are both $0$. 
    Assume correctness for $i-1$, we prove for $i$.
    By the induction hypothesis, up to prefix $i$ the differences are maintained, and thus it suffices to prove that
    \[\weight_{\augA}(\aug{r_i})-\weight_{\augA}(\aug{x_i})=\weight_\cA(r'_i)-\weight_\cA(x'_i)\]
    Denote $\aug{r_i}=((p_i,q_i,T_i),t_i,d_i,(p_{i+1},q_{i+1},T_{i+1}))$ and $\aug{x_i}=((p'_i,q_i,T'_i),t_i,d'_i,(p'_{i+1},q_{i+1},T'_{i+1}))$ (the second and third components are the same since both runs are runs on $\rhobase$)
    with $t_i=(q_i,\sigma_i,c_i,q_{i+1})$. 
    Then by the definition above we have
    $r'_i=(p_i,\sigma_i,d_i+c_i,p_{i+1})$ and $x'_i=(p'_i,\sigma_i,d'_i+c_i,p'_{i+1})$.
    Thus, we need to prove that
    \[d_i-d'_i=d_i+c_i-(d'_i+c_i)\]
    which indeed holds.
\else \textbf{PROOFS REMOVED} \fi\end{proof}
Recall that a run is seamless if at all prefixes, it is the minimal run to its most recent state. By \cref{prop: pov shift from aug to A maintains distance} we immediately have the following.
\begin{corollary}
\label{cor:pov shift from Aug to A maintains seamless}
    Consider a run $\rhobase$ of $\cA$ and a run $\augrho=\aug{r_1},\aug{r_2},\ldots,\aug{r_n}$ of $\augA$ on $\rhobase$. Let $\rho=r'_1,r'_2,\ldots,r'_n$ be $\shifttoorig(\augrho)$, then $\augrho$ is seamless if and only if $\rho$ is seamless.
\end{corollary}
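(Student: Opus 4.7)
The plan is to reduce the corollary directly to Proposition~\ref{prop: pov shift from aug to A maintains distance}. The only input needed beyond that proposition is a bijective correspondence between $\augA$-runs on $\rhobase$ and $\cA$-runs on the word $w$ underlying $\rhobase$, which is essentially implicit in the construction of $\augA$ but needs to be made explicit for the argument.

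First I would spell out this correspondence. Fix $\rhobase = t_1\cdots t_n$ with $t_i=(q_i,\sigma_i,c_i,q_{i+1})$ and let $w=\sigma_1\cdots\sigma_n$. Along any $\augA$-run $\augmu$ on $\rhobase$, the baseline component at step $i$ is forced to be $q_i$ by the letter $t_i$, and the reachable-subset component at step $i$ is forced to be $\booltrans(\{q_0\},w[1..i-1])$ by the subset-construction rule; hence $\augmu$ is uniquely determined by the sequence of its inner states $p_1,p_2,\ldots,p_{n+1}$. Inspecting $\augTrans$, the existence of the $i$-th transition $((p_i,q_i,T_i),t_i,d_i,(p_{i+1},q_{i+1},T_{i+1}))$ is equivalent to the existence of a finite-weight $\cA$-transition $(p_i,\sigma_i,d_i+c_i,p_{i+1})\in\Delta$, which is the $i$-th transition of $\shifttoorig(\augmu)$. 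Hence the map $\augmu\mapsto\shifttoorig(\augmu)$ is a bijection from $\augA$-runs on $\rhobase$ to $\cA$-runs on $w$, and in particular it preserves the terminal inner state at every prefix.

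Now fix $\augrho$ and let $\rho=\shifttoorig(\augrho)$. By the bijection, the set of $\augA$-runs on $\rhobase[1..j]$ ending at the augmented state $(p_{j+1},q_{j+1},T_{j+1})$ corresponds exactly, via $\shifttoorig$, to the set of $\cA$-runs on $w[1..j]$ ending at $p_{j+1}$. Applying Proposition~\ref{prop: pov shift from aug to A maintains distance} to $\augrho$ and any competitor $\augmu$ in this class gives
\[
\weight_{\augA}(\augrho[1..j]) - \weight_{\augA}(\augmu[1..j]) \;=\; \weight_\cA(\rho[1..j]) - \weight_\cA(\shifttoorig(\augmu)[1..j]).
\]
So $\weight_{\augA}(\augrho[1..j])$ is minimal among $\augA$-runs ending at $(p_{j+1},q_{j+1},T_{j+1})$ if and only if $\weight_\cA(\rho[1..j])$ is minimal among $\cA$-runs ending at $p_{j+1}$. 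Quantifying over all $1\le j\le n$ yields the equivalence of seamlessness of $\augrho$ and of $\rho$.

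I expect the only real obstacle to be bookkeeping rather than any deep insight: one must verify carefully that "same terminal augmented state at prefix $j$" in $\augA$ coincides with "same terminal inner state at prefix $j$" in $\cA$ (which uses that the baseline and reachable-subset components along runs on a fixed $\rhobase$ are deterministic), and that the bijection respects the finite-weight convention (automatic, since $\rhobase$ being an $\cA$-run makes every $c_i$ finite, and $\augA$-transitions on $t_i$ exist precisely for finite-weight $\cA$-transitions on $\sigma_i$). Once this bookkeeping is in place, Proposition~\ref{prop: pov shift from aug to A maintains distance} delivers the corollary.
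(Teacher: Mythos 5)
Your proof is correct and is exactly the argument the paper intends; the paper simply declares the corollary ``immediate'' from Proposition~\ref{prop: pov shift from aug to A maintains distance}, and you have filled in the bookkeeping that justifies that one-liner: the bijectivity of $\shifttoorig$ (which is Observation~\ref{obs:pov shift are inverses}), the fact that the second and third components are deterministic so ``same terminal augmented state'' reduces to ``same terminal inner state'', and that the proposition's prefix-wise weight-difference identity converts minimality in $\augA$ at each prefix into minimality in $\cA$ at that prefix.
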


We now turn to the converse shift: from runs of $\cA$ to runs of $\augA$. 
To this end, consider a word $w\in \Sigma^*$ and a run $\rhobase$ of $\cA$ on $w$. With every run $\rho$ of $\cA$ on $w$ we associate the \emph{$\rhobase$-shifted run of $\augA$ corresponding to $\rho$} denoted $\augrho=\shifttoaug(\rho)$ as follows.

Let $\rhobase=t_1,t_2,\ldots,t_n$ with $t_i=(q_i,\sigma_i,c_i,q_{i+1})$ and $\rho=r_1,r_2,\ldots,r_n$ with $r_i=(p_i,\sigma_i,d_i,p_{i+1})$, then $\augrho$ is the run of $\cA$ on $\rhobase$ defined by $\augrho=\aug{r_1},\aug{r_2},\ldots,\aug{r_n}$ with 
\[\aug{r_i}=((p_i,q_i,T_i),t_i,d_i-c_i,(p_{i+1},q_{i+1},T_{i+1}))\]
and $T_i$ is determined deterministically by $\augA$ on $\rhobase$.

As above, we observe that $\aug{r_i}$ is a run of $\augA$ on $\rhobase$. The initial state of $\rhobase$ and of $\rho$ is $q_0$, so the initial state of $\augrho$ is $(q_0,q_0,\{q_0\})$, i.e., the initial state of $\augA$.
For the transitions, for every $1\le i< n$, the transition $r_i$ is valid in $\augA$ on the internal state by the transition of $\rho$, and on the baseline and reachable set by the transitions (i.e., letters) of $\rhobase$. It remains to check that the weight of $r_i$ matches the transitions in $\augA$, but this follows directly by the definition of $\augTrans$.

The shift from $\cA$ to $\augA$ and from $\augA$ to $\cA$ can be seen as inverse operators, in that when shifting from $\cA$ to $\augA$, apart from the change in the state space, the weight of transitions in $\rhobase$ is subtracted, and conversely, when going back from $\augA$ to $\cA$ the weight of transitions in $\rhobase$ is added (again, apart from the change in state space). We thus have the following
\begin{observation}
    \label{obs:pov shift are inverses}
Consider a word $w\in \Sigma^*$ and a run $\rhobase$ of $\cA$ on $w$.
\begin{itemize}
    \item For runs $\rho,\rhobase$ of $\cA$ on $w$, let $\augrho=\shifttoaug(\rho)$ and let $\mu=\shifttoorig(\augrho)$, then $\mu=\rho$.
    \item For a run $\augrho$ of $\cA$ on $\rhobase$, let $\rho=\shifttoorig(\augrho)$ and $\aug{\mu}=\shifttoaug(\rho)$, then $\aug{\mu}=\augrho$.
\end{itemize}
\end{observation}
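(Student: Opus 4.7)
The plan is to prove both directions by directly unwinding the definitions of $\shifttoaug$ and $\shifttoorig$ transition-by-transition. Since both shift operators act pointwise on the transitions of a run (leaving the sequence of ``inner states'' untouched and only adjusting the weight by $\pm c_i$, the weight of the corresponding transition in $\rhobase$), the two operations should compose to the identity, and the only thing that genuinely needs checking is that (i) the weights cancel out, and (ii) the reachable-set components $T_i$ agree on both sides.

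For the first item, I would fix $\rhobase = t_1,\ldots,t_n$ on $w$ with $t_i = (q_i,\sigma_i,c_i,q_{i+1})$, and a run $\rho = r_1,\ldots,r_n$ of $\cA$ on $w$ with $r_i = (p_i,\sigma_i,d_i,p_{i+1})$. By the definition of $\shifttoaug$, the run $\augrho = \shifttoaug(\rho)$ has the $i$-th transition
\[
\aug{r_i} = \bigl((p_i,q_i,T_i),\, t_i,\, d_i - c_i,\, (p_{i+1},q_{i+1},T_{i+1})\bigr),
\]
where the $T_i$ are determined deterministically by $\augA$ running on $\rhobase$. Then by the definition of $\shifttoorig$, the run $\mu = \shifttoorig(\augrho)$ has $i$-th transition $(p_i,\sigma_i,(d_i-c_i)+c_i,p_{i+1}) = (p_i,\sigma_i,d_i,p_{i+1}) = r_i$, so $\mu = \rho$.

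For the second item, I would fix a run $\augrho$ of $\augA$ on $\rhobase$ with $i$-th transition $\aug{r_i} = ((p_i,q_i,T_i), t_i, d_i, (p_{i+1},q_{i+1},T_{i+1}))$. By the definition of $\shifttoorig$, the run $\rho = \shifttoorig(\augrho)$ has $i$-th transition $r'_i = (p_i,\sigma_i,d_i+c_i,p_{i+1})$, which lies in $\Delta$ precisely because $\aug{r_i}\in \augTrans$ forces $(p_i,\sigma_i,d_i+c_i,p_{i+1})\in \Delta$. Applying $\shifttoaug$ to $\rho$ relative to the same $\rhobase$ produces a run $\aug{\mu}$ whose $i$-th transition is $((p_i,q_i,T'_i), t_i, (d_i+c_i)-c_i, (p_{i+1},q_{i+1},T'_{i+1})) = ((p_i,q_i,T'_i),t_i,d_i,(p_{i+1},q_{i+1},T'_{i+1}))$, where the $T'_i$ are once again fixed deterministically by $\augA$ on $\rhobase$ from the initial state $(q_0,q_0,\{q_0\})$.

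The only nontrivial step is observing that the subset components $T_i$ in $\augrho$ must equal the $T'_i$ produced by $\shifttoaug$. This follows because, as explained in the construction of $\augA$, upon reading the fixed letter-sequence $\rhobase$ the second and third components of the state are updated deterministically (the third via $\booltrans$, the second by the target state of the letter being read). Both $\augrho$ and $\aug{\mu}$ start at the initial state $(q_0,q_0,\{q_0\})$ and read the same sequence $\rhobase$, hence their sequences of baseline and reachable-set components coincide. I do not anticipate any real obstacle here: the argument is essentially a routine bookkeeping check, and a simple induction on $i$ (identical in spirit to the one in \cref{prop: pov shift from aug to A maintains distance}) makes it fully rigorous if desired.
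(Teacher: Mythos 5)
Your proof is correct and takes essentially the same approach as the paper, which does not give a separate proof for this observation but relies on the preceding paragraph's informal argument that the weight shifts by $-c_i$ and $+c_i$ cancel while the state components (in particular the $T_i$) are determined deterministically by $\augA$ running on $\rhobase$ from the initial state. Your write-up is a faithful and slightly more explicit version of that argument, and your highlighting of the $T_i = T'_i$ check is exactly the point the paper handles implicitly via the determinism of the second and third components.
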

\begin{remark}
    \label{rmk: rhobase in shifts}
    Note that while $\rhobase$ is a parameter in $\shifttoaug(\rho)$, it is actually fixed in $\shifttoorig(\augrho)$, as it is determined by $\augrho$ (it is the word upon which $\augrho$ runs).
\end{remark}
We can now readily prove an analogue of \cref{prop: pov shift from aug to A maintains distance}.
\begin{proposition}
\label{prop: pov shift from A to aug maintains distance}
    Consider a run $\rhobase$ of $\cA$ on $w$, and two runs $\rho=r'_1,r'_2,\ldots,r'_n$ and $\mu=x'_1,x'_2,\ldots,x'_n$ of $\cA$ on $w$. 
    Let $\augrho=\aug{r_1},\aug{r_2},\ldots,\aug{r_n}$ and $\aug{\mu}=\aug{x_1},\aug{x_2},\ldots,\aug{x_n}$ of $\augA$ on $\rhobase$ be the respective $\rhobase$-shifted runs, then for every $1\le i\le n$ we have
    \[\weight_{\augA}(\aug{r_1},\aug{r_2},\ldots,\aug{r_i})-\weight_{\augA}(\aug{\mu}=\aug{x_1},\aug{x_2},\ldots,\aug{x_i})=\weight_\cA(r'_1,r'_2,\ldots,r'_i)-\weight_\cA(x'_1,x'_2,\ldots,x'_i)\]
\end{proposition}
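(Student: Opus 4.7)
The plan is to observe that the claim is a direct mirror of \cref{prop: pov shift from aug to A maintains distance}, and can be established either by replaying the same inductive argument, or by invoking the previous proposition together with the inverse-shift observation. I would favor a short invocation route.

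First I would note, by \cref{obs:pov shift are inverses}, that $\shifttoorig(\shifttoaug(\rho))=\rho$ and $\shifttoorig(\shifttoaug(\mu))=\mu$. Hence the runs $\augrho$ and $\aug{\mu}$ of $\augA$ on $\rhobase$ constructed from $\rho$ and $\mu$ are precisely two runs of $\augA$ on $\rhobase$ whose $\rhobase$-shifted runs of $\cA$ recover $\rho$ and $\mu$. Applying \cref{prop: pov shift from aug to A maintains distance} to $\augrho$ and $\aug{\mu}$ then yields the stated weight-difference equality for every prefix $1\le i\le n$, since the right-hand side there is literally the weight difference of $\rho$ and $\mu$ at prefix $i$.

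As a backup, the direct induction is equally easy. For the inductive step at index $i$, writing $t_i=(q_i,\sigma_i,c_i,q_{i+1})$ for the $i$-th transition of $\rhobase$, $r'_i=(p_i,\sigma_i,d_i,p_{i+1})$ and $x'_i=(p'_i,\sigma_i,e_i,p'_{i+1})$, the definition of $\shifttoaug$ assigns weights $d_i-c_i$ and $e_i-c_i$ to $\aug{r_i}$ and $\aug{x_i}$ respectively. Thus
\[\weight_{\augA}(\aug{r_i})-\weight_{\augA}(\aug{x_i}) = (d_i-c_i)-(e_i-c_i) = d_i-e_i = \weight_\cA(r'_i)-\weight_\cA(x'_i),\]
and combining this with the induction hypothesis on prefixes of length $i-1$ closes the induction. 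The base case $i=0$ is trivial since both sides are $0$.

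There is no real obstacle here; the only potential pitfall is notational, namely making sure that the second and third components of the states along $\augrho$ and $\aug{\mu}$ agree (which is forced because both runs read the same letter sequence $\rhobase$ and the updates of the baseline and reachable-set components are deterministic in $\augA$). Once this is noted, the cancellation of the $c_i$ terms is immediate and the proof is essentially a line of algebra.
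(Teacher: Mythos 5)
Your primary route is exactly the paper's proof: invoke \cref{obs:pov shift are inverses} to identify $\rho,\mu$ as the $\rhobase$-shifted inverses of $\augrho,\aug{\mu}$, then apply \cref{prop: pov shift from aug to A maintains distance}. Correct, and the backup induction is also sound but unnecessary.
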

\begin{proof}\ifproofs   
    By \cref{prop: pov shift from aug to A maintains distance} the $\rhobase$-shifted runs of $\cA$ that correspond to $\augrho$ and $\aug{\mu}$ maintain the differences, but by \cref{obs:pov shift are inverses} these runs are exactly $\rho$ and $\mu$, respectively.
\else \textbf{PROOFS REMOVED} \fi\end{proof}

We can now conclude that $\cA$ and $\augA$ are equi-determinizable, allowing us to proceed with an analysis of $\augA$ instead of $\cA$.

\begin{lemma}
    \label{lem:A det iff augA det}
    $\cA$ is determinizable if and only if $\augA$ is determinizable.
\end{lemma}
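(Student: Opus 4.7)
The plan is to leverage \cref{thm:det iff bounded gap}: since $\cA$ has a finite alphabet $\Sigma$, the induced alphabet $\Delta$ of $\augA$ is also finite, so the theorem yields an exact characterization for both automata in terms of the existence of a uniform bound $B$ beyond which no $B$-gap witness occurs. It therefore suffices to translate $B$-gap witnesses between $\cA$ and $\augA$ in both directions while preserving $B$. The heavy lifting is supplied by \cref{prop: pov shift from aug to A maintains distance,prop: pov shift from A to aug maintains distance,obs:pov shift are inverses}, which give a weight-difference-preserving bijection between runs of $\cA$ on a word $w$ and runs of $\augA$ on any baseline $\rhobase$ of $\cA$ over $w$.

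For the forward direction, given a $B$-gap witness $(x,y,q)$ for $\cA$, I would pick $\rhobase$ to be a minimal-weight run of $\cA$ on $xy$ that passes through $q$ after $x$ (guaranteed by the first clause of \cref{def: B gap witness}), decompose it as $\aug{x}\aug{y}\in\Delta^*$, and set $s=(q,q,T)\in \augStates$ to be the state the baseline run of $\augA$ reaches after $\aug{x}$. Three observations would close this direction: (i) by \cref{obs:baseline runs have weight 0} the baseline gives $\minweight_{\augA}(\aug{x},\augInitState\to s)=0$ and achieves weight $0$ on all of $\aug{x}\aug{y}$; (ii) by \cref{prop: pov shift from A to aug maintains distance} the $\rhobase$-shift of the cheap $\cA$-run witnessing $\minweight_\cA(x,q_0\to Q)$ has weight strictly below $-B$ at $\aug{x}$, so $\minweight_{\augA}(\aug{x},\augInitState\to \augStates)<-B$; and (iii) since $\rhobase$ is minimal in $\cA$ and shifts preserve differences (\cref{prop: pov shift from A to aug maintains distance}), no run of $\augA$ on $\aug{x}\aug{y}$ dips below weight $0$, so the minimum is achieved through $s$. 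Together, $(\aug{x},\aug{y},s)$ is a $B$-gap witness for $\augA$ over the finite alphabet $\Delta$.

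For the reverse direction, given a $B$-gap witness $(\aug{x},\aug{y},s)$ with $s=(q_s,p_s,T_s)$ in $\augA$, the finite-weight clause of \cref{def: B gap witness} forces $\aug{x}\aug{y}$ to admit a finite-weight run in $\augA$, which by construction of $\augTrans$ means $\aug{x}\aug{y}$ itself is a run $\rhobase$ of $\cA$ over some $xy\in\Sigma^*$. Using \cref{obs:pov shift are inverses}, every run of $\augA$ on $\rhobase$ is $\shifttoorig$ of a unique run of $\cA$ on $xy$, and this correspondence bijects runs ending at $s$ after $\aug{x}$ with runs ending at $q_s$ after $x$ (the second and third components of states in $\augA$ are forced by $\rhobase$). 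Since this bijection shifts all prefix weights by the common additive constant $\weight_\cA(\rhobase|_{\cdot})$, differences of weights are preserved, and both clauses of \cref{def: B gap witness} translate verbatim, yielding a $B$-gap witness $(x,y,q_s)$ for $\cA$.

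The main obstacle I anticipate is the careful bookkeeping required to verify that the ``passes through a given state after $x$'' clause of the gap witness is preserved by the shift, and that, with the baseline fixed, the run correspondence is indeed a bijection that identifies inner states correctly. Both concerns reduce to the additive nature of the shift already captured by \cref{prop: pov shift from aug to A maintains distance,prop: pov shift from A to aug maintains distance}, so no new ideas beyond this section should be needed.
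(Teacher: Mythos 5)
Your proposal is correct and takes essentially the same approach as the paper: apply \cref{thm:det iff bounded gap} over the finite alphabets $\Sigma$ and $\Delta$ and transfer $B$-gap witnesses in both directions using the $\rhobase$-shift propositions (\cref{prop: pov shift from aug to A maintains distance,prop: pov shift from A to aug maintains distance,obs:pov shift are inverses}). The only stylistic difference is that you spell out the explicit choice of $\rhobase$ (a minimal run through $q$) and $s=(q,q,T)$ together with the resulting arithmetic, whereas the paper's proof invokes the preservation of gaps and minima more tersely; the underlying argument is the same.
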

\begin{proof}\ifproofs    
For the first direction, assume $\cA$ is not determinizable. By \cref{thm:det iff bounded gap,def: B gap witness}, for every $B\in \bbN$ there exist words $x,y\in \Sigma^*$ and a state $q\in Q$ such that $\minweight_\cA(x\cdot y,q_0\to Q)=\minweight_\cA(x\cdot y,q_0\runsto{x}q\runsto{y}Q)<\infty$, and $\minweight_\cA(x,q_0\to q)-\minweight_\cA(x,q_0\to Q)> B$. 

By \cref{prop: pov shift from A to aug maintains distance}, the same gap between runs exists between the corresponding runs after a $\rhobase$-shift for any $\rhobase$. Moreover, a minimal run in $\cA$ is shifted to a minimal run in $\augA$. Thus, again by \cref{thm:det iff bounded gap}, we conclude that $\augA$ is not determinizable.

The converse direction is similar: assume $\augA$ is not determinizable, then for every $B\in \bbN$ there exist words $\aug{x},\aug{y}\in \Sigma^*$ and a state $s\in S$ such that $\minweight_{\augA}(\aug{x}\cdot \aug{y},s_0\to S)=\minweight_{\augA}(\aug{x}\cdot \aug{y},s_0\runsto{\aug{x}}s\runsto{\aug{y}}S)<\infty$, and $\minweight_{\augA}(\aug{x},s_0\to s)-\minweight_{\augA}(\aug{x},s_0\to S)> B$. 

By \cref{prop: pov shift from aug to A maintains distance}, the gap is maintained by the corresponding shifted runs. Note that here, the shift is dictated by the word being read, namely $\aug{x}$ (comparing the shifted runs of $s_0\runsto{\aug{x}}s$ and $s_0\runsto{\aug{x}}Q$). 
Thus, again by \cref{thm:det iff bounded gap}, we conclude that $\cA$ is not determinizable.
\else \textbf{PROOFS REMOVED} \fi\end{proof}

\section{Stable Cycles and Bounded Behaviors}
\label{sec:stable cycles and bounded behaviours}
A fundamental approach used in reasoning about weighted automata is that of \emph{stabilization}. Intuitively, the idea is to symbolically represent a repeated application of a certain transition matrix, until entries either become bounded, unreachable, or tend to $\infty$. In essence, this is a form of ``pumping''.
While useful in many setting, this form of stabilization seems to be too crude for showing that determinizability is decidable (evidently, all attempts for using it thus far have failed).

In this work, we use a much more delicate notion of stabilization, whereby we restrict attention to very specific subsets of states and transition matrices (i.e., words). 
The basic structure that allows some form of pumping is that of \emph{stable cycles}. Intuitively a stable cycle is a pair $(S',w)$ where $S'$ is a set of states and $w$ is a word such that reading $w$ any number of times from the states in $S'$ leads back to $S'$, and the weighted behaviors of runs are ``stable'', in the sense that we can identify which states generate minimal runs, and which tend to unbounded weights, regardless of the starting weights. 
We make this notion precise in the following. 

\subsection{Stable Cycles}
\label{sec:stable cycles}
We start with a weaker definition than stable cycles.
\begin{definition}
    \label{def:reflexive cycle and state}
    Consider a set of states $S'\subseteq S$ and $w\in \Delta^*$. We say that $(S',w)$ is a \emph{reflexive cycle} if there exists $q\in Q$ and $T\subseteq Q$ such that $S'=\{(p,q,T)\mid p\in T\}$ and $\booltrans(S',w)\subseteq S'$.

    A state $s\in S'$ is a \emph{reflexive state} if $s\runsto{w}s$. We denote the set of reflexive states of $(S',w)$ as $\RefStates(S',w)=\{s\in S'\mid s\runsto{w}s\}$.
\end{definition}

That is, the states in $S'$ share the last two components, and $T$ is ``saturated'' in the sense that every state $p\in T$ represented in $S'$ as $(p,q,T)$. Reflexivity requires that reading $w$ from $S'$ leads back to $S'$, or a subset thereof.
Note that if $(S',w)$ is a reflexive cycle, then so is $(S',w^n)$ for every $n\in \bbN$ (this is immediate by the requirement $\booltrans(S',w)\subseteq S'$).

A-priori, $(S',w)$ might not have reflexive states at all. Clearly, however, for $w^n$ with $n$ large enough, some states become reflexive. Our first two observation is that we can stabilize the reachability set and the reflexive states by taking a large-enough repetition of $w$. 
To this end, let $\bigN= |S|!$ and $\bigM=|S| \cdot \bigN$.

A basic observation from Boolean automata is that the transition relation becomes idempotent after $\bigM$ iterations. More precisely, we have the following.
\begin{proposition}
    \label{prop: transitions stabilize at M}
    For every word $w$, states $s,r\in S$ and $n\in \bbN$ we have $s\runsto{w^\bigM}r$ if and only if $s\runsto{w^{\bigM n}}r$.
\end{proposition}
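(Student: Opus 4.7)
The plan is to work in the Boolean reachability graph $H$ on vertex set $S$ with edges $s\to s'$ whenever $s\runsto{w}s'$, so that $s\runsto{w^k}r$ is equivalent to the existence of a walk of length $k$ from $s$ to $r$ in $H$. I would prove the two directions separately, as a ``pumping-up'' and a ``pumping-down'' lemma respectively, with the latter being the main obstacle.

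For the ``only if'' direction I would establish a forward lemma: if $s\runsto{w^k}r$ with $k\ge|S|$, then $s\runsto{w^{k+m\bigN}}r$ for every $m\ge 0$. Given such a walk, pigeonholing on its first $|S|+1$ vertices yields a cycle of length $\ell\in\{1,\dots,|S|\}$. Since $\ell$ divides $\bigN=|S|!$, inserting $m\bigN/\ell$ extra copies of this cycle extends the walk by exactly $m\bigN$. Applied with $k=\bigM$ and $m=(n-1)|S|$, the insertion amounts to $(n-1)|S|\bigN=(n-1)\bigM$, producing a walk of length $n\bigM$.

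The ``if'' direction is the main obstacle, because I must shrink a walk by a precise amount $(n-1)\bigM$, whereas removing an arbitrary cycle from $H$ can overshoot. The plan is to pass to the tagged graph $H'$ on vertex set $S\times\{0,\dots,\bigN-1\}$ with edges $(s,c)\to(s',c+1\bmod\bigN)$ whenever $s\to s'$ in $H$; walks in $H$ lift to walks in $H'$ that track length modulo $\bigN$. Two key facts are that $|V(H')|=|S|\cdot\bigN=\bigM$, and that every cycle in $H'$ has length a positive multiple of $\bigN$, since the tag must return to its starting value. I would then prove: for $k\ge\bigM$, a walk of length $k+\bigN$ in $H$ can be shortened to one of length $k$. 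Pigeonholing on the first $\bigM+1$ vertices of the lifted walk yields a repeat, hence an $H'$-cycle of length $c\bigN$ with $c\in\{1,\dots,|S|\}$; excising it shortens the walk by $c\bigN$. If $c=1$ this lands on length $k$; otherwise the shortened walk has length $\ge\bigN\ge|S|$, so the forward lemma lets me pump back up by $(c-1)\bigN$, arriving at length exactly $k$.

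Iterating this one-step shortening $(n-1)|S|$ times takes a walk of length $\bigM n$ down to $\bigM n-(n-1)|S|\bigN=\bigM$, completing the proof. The interlocking of the two lemmas --- short cycles in $H$ of lengths in $\{1,\dots,|S|\}$ dividing $\bigN$ on the way up, and $H'$-cycles of multiples of $\bigN$ bounded by $\bigM$ on the way down --- is precisely what the choice $\bigM=|S|\cdot\bigN$ is designed to enable.
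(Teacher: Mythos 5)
Your proof is correct, but the ``if'' (pumping-down) direction takes a genuinely different route from the paper's. Your ``only if'' argument — pigeonhole a short cycle of length $\ell\le|S|$, note $\ell\mid\bigN$, and insert $m\bigN/\ell$ copies — is essentially the paper's. For the converse, the paper partitions the walk of length $\bigM n$ into $|S|n$ blocks of $w^{\bigN}$, pigeonholes on block-endpoints to shrink to $k\le|S|$ blocks, and (when $k<|S|$) pumps back up by repeating a short cycle of length $m<|S|$ located inside the first $\bigN$-block, which works because $m\mid\bigN$ and hence $m\mid(|S|-k)\bigN = \bigM - k\bigN$. You instead lift to the tagged graph $H'$ on $S\times\{0,\dots,\bigN-1\}$ so that every excised cycle is forced to have length a multiple of $\bigN$, and then perform $(n-1)|S|$ single-step reductions, each removing one such cycle and compensating for any overshoot by the forward (pump-up) lemma. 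Both arguments turn on the same arithmetic — $\bigN=|S|!$ so that every cycle-length $\le|S|$ divides it, and $\bigM=|S|\cdot\bigN$ so that $\bigM+1$ tagged points force a repeat — but the paper collapses the whole reduction into one round of pigeonholing at $\bigN$-block granularity, whereas yours is incremental and interleaves pump-down with pump-up at every step. The tagged-graph device is the more systematic tool (it transfers cleanly to any setting where one must control walk length modulo a fixed modulus), at the price of a heavier construction than the paper's direct block decomposition.
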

\begin{proof}
    For readers versed with the algebraic view of automata, this can be shown using the order of the transition monoid of the automaton (see e.g.,~\cite{bojanczyk2020languages}). 
    We bring a state-based proof for completeness. 

    If $s\runsto{w^\bigM}r$, then since $\bigM>|S|$ there is a simple cycle in such a run, i.e., $s\runsto{w^a}r'\runsto{w^b}r'\runsto{w^{\bigM -a-b}}r$ with $b<|S|$. But then $b$ divides $\bigM$, so we can repeat this cycle for additional $(n-1) \bigM/b$, thus yielding a run $s\runsto{w^{\bigM n}}r$.

    Conversely, if $s\runsto{w^{\bigM n}}r$, since $\bigM=|S|\bigN$, we can decompose $s$ into $\frac{\bigM n}{\bigN}=|S|n$ segments of $w^{\bigN}$.
    \[
    \rho:s\runsto{w^{\bigN}}s_1\runsto{w^{\bigN}}s_2\cdots \runsto{w^{\bigN}}s_{|S|n}=r
    \]
    Since there are more than $|S|$ such segments, we can remove cycles and obtain a run 
    \[
    \mu:s\runsto{w^{\bigN}}t_1\runsto{w^{\bigN}}t_2\cdots \runsto{w^{\bigN}}t_{k}=r
    \]
    with $k\le |S|$. 
    If $k=|S|$, then $k\bigN=\bigM$, and therefore $s\runsto{w^{\bigM}}r$ and we are done.
    Otherwise, $1\le k<|S|$, and in particular since $\bigN>|S|$ the prefix $s\runsto{w^{\bigN}}t_1$ contains a cycle of the form $t'\runsto{w^m}t'$ for some $1\le m<|S|$. Since $\bigM=|S|\cdot |S|!$, in particular $m$ divides $|S|!=\bigN$, and therefore $m$ divides $\bigM-k\bigN$.
    Thus, by repeating this cycle an additional $\frac{\bigM-k\bigN}{m}$ times, we obtain a run $s\runsto{w^{\bigM}}r$, and we are done.
\end{proof}

\begin{proposition}
\label{prop:reflexive cycles stabilize at N}
Consider a reflexive cycle $(S',w)$. For every $n,m\in \bbN$ we have that
$\RefStates(S',w^{m})\subseteq \RefStates(S',w^\bigN)=\RefStates(S',w^{\bigN \cdot n})$.
Additionally, $\booltrans(S',w^{\bigN})=\booltrans(S',w^{\bigN \cdot n})$.
\end{proposition}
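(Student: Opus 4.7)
The plan is to establish three sub-claims separately: (i) $\RefStates(S', w^m) \subseteq \RefStates(S', w^{\bigN})$ for every $m$, (ii) $\RefStates(S', w^{\bigN}) = \RefStates(S', w^{\bigN \cdot n})$ for every $n$, and (iii) $\booltrans(S', w^{\bigN}) = \booltrans(S', w^{\bigN \cdot n})$ for every $n$. Claim (ii) will follow immediately from (i): the reverse inclusion $\RefStates(S', w^{\bigN n}) \subseteq \RefStates(S', w^{\bigN})$ is the special case of (i) with $m = \bigN n$, while the forward inclusion comes from iterating any self-loop $s \runsto{w^{\bigN}} s$ a total of $n$ times.

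For (i), I would use a cycle-shortening argument exploiting that we only need a cycle through the \emph{specific} vertex $s$, which is finer than the generic argument in \cref{prop: transitions stabilize at M}. Given $s \runsto{w^m} s$, pick a closed walk $s = p_0 \runsto{w} p_1 \runsto{w} \cdots \runsto{w} p_\ell = s$ of minimum length $\ell > 0$ in the graph whose edges are $w$-transitions. If any two of its vertices coincide, say $p_i = p_j$ with $0 \le i < j$ and $j - i < \ell$, then excising the intermediate segment yields a strictly shorter closed walk at $s$ (using that $p_i = p_j$ to re-glue), contradicting minimality of $\ell$. Hence $p_0, \ldots, p_{\ell-1}$ are distinct, so $\ell \le |S|$, and therefore $\ell$ divides $\bigN = |S|!$. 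Repeating the walk $\bigN / \ell$ times produces $s \runsto{w^{\bigN}} s$, placing $s$ in $\RefStates(S', w^{\bigN})$.

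For (iii), I would exploit the reflexivity of $(S', w)$ directly: since $\booltrans(S', w) \subseteq S'$, monotonicity of $\booltrans$ yields
\[
\booltrans(S', w^{k+1}) = \booltrans(\booltrans(S', w), w^k) \subseteq \booltrans(S', w^k),
\]
so $\{\booltrans(S', w^k)\}_{k \ge 0}$ is a weakly decreasing chain of subsets of $S'$. Being a decreasing chain in a set of size at most $|S| \le \bigN$, it must stabilize within $|S|$ steps, and hence by step $\bigN$. Thus both $\bigN$ and $\bigN n$ lie at or beyond the stabilization point, yielding the claimed equality.

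The main point to emphasize is that the tighter bound $\bigN$ (rather than the $\bigM = |S| \cdot \bigN$ used in \cref{prop: transitions stabilize at M}) is possible precisely because of the reflexive-cycle hypothesis: it forces the reachable subsets to form a decreasing chain, eliminating the extra factor of $|S|$, and it guarantees that a minimal closed walk through $s$ has length at most $|S|$, which divides $\bigN$. There is no serious obstacle beyond ensuring the shortening is carried out on a closed walk \emph{at $s$}, not merely a walk in the underlying graph.
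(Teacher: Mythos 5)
Your proof is correct and parts (i) and (ii) mirror the paper's argument exactly: take the minimal exponent $m$ (equivalently, a minimum-length closed walk at $s$), use a shortening/pigeonhole argument to get $m\le|S|$, deduce $m\mid\bigN$, and pump. Part (iii) takes a genuinely slightly different route: the paper uses the decreasing-chain observation only for the inclusion $\booltrans(S',w^{\bigN n})\subseteq \booltrans(S',w^{\bigN})$, and proves the reverse inclusion by a second pigeonhole argument, locating a cycle of length $j<|S|$ inside a run on $w^{\bigN}$ and pumping it to produce a run on $w^{\bigN n}$. You instead derive both inclusions at once by observing that once $\booltrans(S',w^k)=\booltrans(S',w^{k+1})$ the chain is constant from then on, so it stabilizes by step $|S'|\le|S|\le\bigN$, putting both $\bigN$ and $\bigN n$ past the stabilization point. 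Your version is marginally cleaner; the paper's is stylistically aligned with the pigeonhole-and-pump pattern reused in the neighboring stabilization propositions. Both are correct and equally elementary.
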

\begin{proof}\ifproofs 
Let $s\in \RefStates(S',w^m)$ for some $m\in \bbN$, and take this $m$ to be minimal. 
Thus, $\rho:s\runsto{w^m}s$ for some run $\rho$. We start by showing that $s\in \RefStates(S',w^\bigN)$. First, note that if $m\le |S|$, then $m| \bigN = |S|!$, so $\rho^{\bigN/m}:s\runsto{w^\bigN}s$. Thus, $s\in \RefStates(S',w^\bigN)$. 
Next, if $m>|S|$, then by the pigeonhole principle $\rho$ has a cycle between some $w^i$ and $w^j$. 
That is, there are $1\le i<j\le m$  and some state $s'\in S$ such that $\rho:s\runsto{w^i}s'\runsto{w^{j-i}}s'\runsto{w^{m-j}}s$. We can now remove this cycle and obtain the run
$\rho':s\runsto{w^{m-(j-i)}}s$, contradicting the minimality of $m$.

Next, we show that $\RefStates(S',w^\bigN)=\RefStates(S',w^{\bigN \cdot n})$. The direction $\RefStates(S',w^\bigN)\subseteq\RefStates(S',w^{\bigN \cdot n})$ trivially holds, by repeating cycles (since $\bigN$ divides $\bigN\cdot n$). The converse holds by the argument above, plugging $m=\bigN\cdot n$.

For the reachability part, by reflexivity we inductively have that $\{\booltrans(S',w^k)\}_{k=0}^\infty$ forms a decreasing chain, and in particular $\booltrans(S',w^{\bigN \cdot n})\subseteq \booltrans(S',w^{\bigN})$.
Conversely, consider $s\in \booltrans(S',w^{\bigN})$, then by a similar pigeonhole argument as above, there is $s',s''\in S'$ such that $s'\runsto{w^i}s''\runsto{w^{j}}s'' \runsto{w^{\bigN-i-j}}s$ for some $j<|S|$, and by repeating the $w^j$ cycle we can obtain a run $s'\runsto{w^{\bigN\cdot n}}s$, so $s\in \booltrans(S',w^{\bigN\cdot n})$.
\else \textbf{PROOFS REMOVED} \fi\end{proof}
\cref{prop:reflexive cycles stabilize at N} shows that all the reflexive cycles on any number of repetitions of $w$ are already present for $w^\bigN$.

In the $(\min,+)$ semantics, of particular importance are reflexive cycles whose cycles have minimal weight. We capture this as follows. 
\begin{definition}
    \label{def:minimal reflexive states}
    Consider a reflexive cycle $(S',w)$. We say that a state $s\in \RefStates(S',w)$ is \emph{minimal reflexive} if $\minweight(w,s\to s)=\min\{\minweight(w,r\to r)\mid r\in \RefStates\}$.    
    We then define 
    \[\MinRefStates(S',w)=\{s\in S'\mid s\text{ is minimal reflexive.}\}\]
\end{definition}
That is, the cycle on $w$ from $s$ to $s$ has minimal weight among all reflexive cycles on $w$.
Recall that $S'=\{(p,q,T)\mid p\in T\}$  for some $T\subseteq Q$. In particular, $(q,q,T)\in S'$ is the unique baseline state in $S'$. 
We now ask whether $(q,q,T)$ is a minimal reflexive state, and whether its minimal reflexive run is a baseline (and therefore by \cref{obs:baseline runs have weight 0} has weight $0$). Moreover, we ask whether this holds for $w^n$ for every $n\in \bbN$. In case this holds, we say that $(S',w)$ is a \emph{Stable Cycle}. We now formalize this.

\begin{definition}[\keyicon Stable Cycle]
\label{def:stable cycle}
A reflexive cycle $(S',w)$ with baseline state $s=(q,q,T)$ is a \emph{stable cycle} if for every $n\in \bbN$ it holds that $s\in \MinRefStates(S',w^n)$ and $\minweight(w^n,s\to s)=0$.
\end{definition}

Note that if $(S',w)$ is a stable cycle, then the baseline run $\rho:s\runsto{w^n} s$ is a cycle of minimal weight on any state $s'\in S'$, and its weight is $0$.
There could be other runs from $S'$ on some $w^n$ that are negative, but not cycles.

We now show an analogous result to \cref{prop:reflexive cycles stabilize at N}, showing that the minimal reflexive states also become stable quickly, under the assumption that we have a stable cycle.
\begin{proposition}
\label{prop:minimal reflexive cycles stabilize at N}
Consider a stable cycle $(S',w)$. For every $n,m\in \bbN$ we have that 
\[\MinRefStates(S',w^{m})\subseteq \MinRefStates(S',w^\bigN)=\MinRefStates(S',w^{\bigN \cdot n})\]   
\end{proposition}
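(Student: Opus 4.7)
The plan is to parallel the structure of the proof of \cref{prop:reflexive cycles stabilize at N}, using the stability condition as the extra ingredient needed to control \emph{weights}, not just reachability. I would split the statement into the containment $\MinRefStates(S',w^m)\subseteq \MinRefStates(S',w^\bigN)$ and the equality $\MinRefStates(S',w^\bigN)=\MinRefStates(S',w^{\bigN n})$. The equality is in fact a direct consequence of the containment: the direction $\supseteq$ is just the containment applied with $m=\bigN n$, and for $\subseteq$ one takes a weight-$0$ cycle $\pi:s'\runsto{w^\bigN}s'$ (which exists by stability once $s'$ lies in $\MinRefStates(S',w^\bigN)$), iterates it $n$ times to get a cycle on $w^{\bigN n}$ of weight $0$, and uses stability at $w^{\bigN n}$ to see this is minimum. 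So the real work is in establishing the containment.

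To prove $\MinRefStates(S',w^m)\subseteq\MinRefStates(S',w^\bigN)$, fix $s'\in \MinRefStates(S',w^m)$. By the stability of $(S',w)$ at $w^m$, $\minweight(w^m,s'\to s')=0$, so there exists $\pi:s'\runsto{w^m}s'$ with $\weight(\pi)=0$. By \cref{prop:reflexive cycles stabilize at N} we already know $s'\in\RefStates(S',w^\bigN)$. Stability at $w^\bigN$ gives the lower bound $\minweight(w^\bigN,s'\to s')\ge 0$, so it suffices to exhibit a cycle $s'\runsto{w^\bigN}s'$ of weight $0$; then $s'$ is in $\MinRefStates(S',w^\bigN)$ because the baseline attains $0$ there.

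The core step is a sub-lemma, proved by induction on $m$: \emph{any} weight-$0$ cycle $\pi:s'\runsto{w^m}s'$ can be shortened to a weight-$0$ cycle $\pi^\star:s'\runsto{w^{m^\star}}s'$ with $m^\star\le |S|$. The base case $m\le |S|$ is immediate. For $m>|S|$, apply pigeonhole to the $m+1$ states visited by $\pi$ after each copy of $w$: there are $0\le i<j\le m$ with $j-i\le |S|$ and a common state $s''$, decomposing $\pi=\pi_1\pi_2\pi_3$ with $\pi_2:s''\runsto{w^{j-i}}s''$ and $\pi_1\pi_3:s'\runsto{w^{m-(j-i)}}s'$. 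Because $\booltrans(S',w^i)\subseteq S'$ by the reflexive-cycle property, we have $s''\in S'$, hence $s''\in\RefStates(S',w^{j-i})$. Now the \emph{stability} hypothesis kicks in twice: it forces $\weight(\pi_2)\ge 0$ (since the baseline is $\MinRefStates$ with value $0$ on every power of $w$) and symmetrically $\weight(\pi_1\pi_3)\ge 0$. Since $\weight(\pi_1\pi_3)+\weight(\pi_2)=\weight(\pi)=0$, both summands must be exactly $0$, and we may recurse on the strictly shorter weight-$0$ cycle $\pi_1\pi_3$.

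Once $\pi^\star$ is obtained, $m^\star\le|S|$ divides $|S|!=\bigN$, so $(\pi^\star)^{\bigN/m^\star}$ is a weight-$0$ cycle on $w^\bigN$, completing the argument. The main obstacle is precisely the length mismatch between $m$ and $\bigN$ that prevents naive iteration of $\pi$; everything hinges on being able to carve out a short sub-cycle of $\pi$ while certifying that it still has weight $0$, and it is exactly the stability assumption (applied to the intermediate state $s''$ on a shorter power of $w$) that certifies this non-negativity and forces the weights of $\pi_2$ and $\pi_1\pi_3$ to split evenly as $0+0$.
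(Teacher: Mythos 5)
Your proof is correct and takes essentially the same approach as the paper's: both extract, by pigeonhole, a sub-cycle $s''\runsto{w^{j-i}}s''$ from a weight-$0$ cycle on $w^m$, use the stability hypothesis (baseline is minimal reflexive with value $0$ on every power of $w$) to force both the sub-cycle and the remainder to have weight $\ge 0$ and hence $=0$, and then exploit $m^\star\mid\bigN$ to iterate the resulting short cycle. The only cosmetic difference is that you phrase the shortening as an explicit induction producing $\pi^\star$, whereas the paper takes $m$ minimal and derives a contradiction; the equality for $\bigN\cdot n$ is handled identically.
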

\begin{proof}\ifproofs 
Let $s\in \MinRefStates(S',w^m)$ for some $m\in \bbN$ (not necessarily the baseline state), and take this $m$ to be minimal. 
Thus, $\rho:s\runsto{w^m}s$ for some minimal-weight run $\rho$. We start by showing that $s\in \MinRefStates(S',w^\bigN)$. If $m\le |S|$, then $m| \bigN=|S|!$, so $\rho^{\bigN/m}:s\runsto{w^\bigN}s$. 
By \cref{def:stable cycle} we have $\weight(\rho)=0$, so $\weight(\rho^{\bigN/m})=0$ as well. Since $(S',w)$ is stable, then in particular $\minweight(w^{\bigN},s'\to s')\ge 0$ for every $s'\in S'$, since the baseline state is minimal reflexive for $w^{\bigN}$ (indeed, for all repetitions of $w$) and has cycle weight $0$.
It follows that $s\in \MinRefStates(S',w^{\bigN})$

Next, if $m>|S|$, then by the pigeonhole principle $\rho$ has a cycle between some $w^i$ and $w^j$. 
That is, there are $1\le i<j\le m$  and some state $s'\in S$ such that $\rho^i:s\runsto{w^i}s'$ and  $\rho^j:s\runsto{w^j}s'$. Therefore, $\rho^{j-i}:s'\runsto{w^{j-i}}s'$. Moreover, $\weight(\rho^{j-i})\ge 0$. Indeed, otherwise enough repetitions of $\rho^{j-i}$ would yield a negative cycle, contradicting the fact that the baseline state is minimal reflexive with weight 0 for \emph{every} repetition of $w$. 

We can now remove this cycle and obtain the run
$\rho':s\runsto{w^{m-(j-i)}}s$ with weight at most $0$ (and by the above -- exactly $0$), contradicting the minimality of $m$.

Next, we show that $\MinRefStates(S',w^\bigN)=\MinRefStates(S',w^{\bigN \cdot n})$. 
The direction $\MinRefStates(S',w^\bigN)$ $\subseteq\MinRefStates(S',w^{\bigN \cdot n})$ holds by repeating cycles (since $\bigN$ divides $\bigN\cdot n$), and by observing that negative cycles cannot occur in a stable cycle. 
The converse holds by the argument above, plugging $m=\bigN\cdot n$.
\else \textbf{PROOFS REMOVED} \fi\end{proof}

\subsection{States with Bounded Behaviors}
\label{sec:bounded states}
Recall that our motivation in this section is to establish a ``pumpable'' structure. Let $(S',w)$ be a stable cycle, and consider the long-term behavior of runs on $w^m$ for large $m$, starting from states in $S'$.

Recall that for a state $s\in \MinRefStates(S',w)$ we have $\minweight(w,s\to s)=0$, and that this is the minimal weight of a cycle on $w$ from $S'$. Then, for every state $r\in S'$, if either $s\runsto{w}r$ or $r\runsto{w}s$, more  repetitions of $w$ intuitively do not increase (nor decrease) the weight further, since they incur weight $0$ by cycling on $s$. 
The converse is almost correct: we would want to say that runs that avoid $\MinRefStates(S',w)$
must increase unboundedly. This is almost true, except that a state might not be minimal-reflexive for $w$, but becomes minimal reflexive for $w^\bigM$, so does not lead to unbounded weights.
In the following we show that this is indeed the case: runs that avoid $\MinRefStates(S',w^{\bigM})$ inevitably yield increasing weights.
Moreover, we identify after how many repetitions of $w$ these eventual behaviors stabilize.

We start with the definitions for the three types of ``bounded behavior'': states that are reachable via $w$ from a minimal-reflexive state are called \emph{tethered} (as they are ``tethered'' to a $0$-weight cycle), states from which a minimal-reflexive state is reachable are called \emph{plateau}, as they reach a $0$-weight cycle, and their combination is referred to as \emph{grounded}, as follows.
\begin{definition}[Tethered, Plateau and Grounded Pairs]
    \label{def:tethered}
    \label{def:plateau}
    \label{def:grounded pairs}
    For a stable cycle $(S',w)$ we define:
    \begin{itemize}
        \item The \emph{tethered pairs} as
        \[\TethPairs(S',w)=\{(s,r)\in S'\times S'\mid s\in \MinRefStates(S',w)\wedge s\runsto{w}r\}\]
        We refer to each such $r$ as a \emph{tethered state}.
        \item The \emph{plateau pairs} as
        \[\PlatPairs(S',w)=\{(s,r)\in S'\times S'\mid r\in \MinRefStates(S',w)\wedge s\runsto{w}r\}\]
        We refer to each such $s$ as a \emph{plateau state}.
        \item The \emph{grounded pairs} as
        \[\GroundPairs(S',w)=\{(s,r)\in S'\times S'\mid \exists g\in \MinRefStates(S',w^{\bigM})\ST s\runsto{w^{\bigM}}g\runsto{w^{\bigM}}r\}\]
        For $(s,r)\in \GroundPairs(S',w)$, the state $g\in \MinRefStates(S',w^{\bigM})$ for which $\weight(s\runsto{w^{\bigM}}g\runsto{w^{\bigM}}r)$ is minimal (among all states in $\MinRefStates(S',w^\bigM)$) is the \emph{grounding state of $(s,r)$.}
    \end{itemize}
\end{definition}
\cref{fig:pair_types} depicts the three types of pairs in \cref{def:tethered}.
\begin{figure}[ht]
    \centering
    \includegraphics[width=0.9\linewidth]{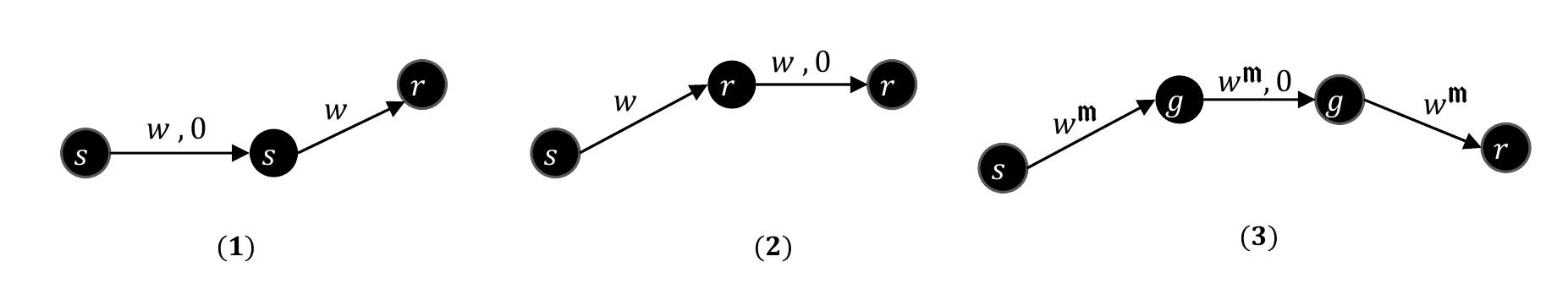}
    \caption{$(s,r)$ is a tethered pair (1), a plateau pair (2), and grounded pair with a grounding state $g$ (3)}
    \label{fig:pair_types}
\end{figure}
As a sanity-check, note that since $\booltrans(S',w)\subseteq S'$, then for every tethered, plateau, or grounded pair $(s,r)$, the fact that $r\in S'$ already follows from $s\in S'$.

Similarly to \cref{prop:reflexive cycles stabilize at N,prop:minimal reflexive cycles stabilize at N}, we show that the tethered and plateau pairs stabilize. This, however, no longer holds for any repetition of $w$, but for repetitions of $w^\bigN$, and the stabilization occurs at $\bigM$ (recall $\bigM=|S| \cdot \bigN$), as follows.

\begin{proposition}
\label{prop:tethered states stabilize at N}
Consider a stable cycle $(S',w)$. For every $n,m\in \bbN$ we have that 
\[\TethPairs(S',w^{m \cdot \bigN})\subseteq \TethPairs(S',w^\bigM)=\TethPairs(S',w^{\bigM \cdot n}).\]
Moreover, for every $(s,r) \in \TethPairs(S',w^{m \cdot \bigN})$ it holds that
\[\minweight(w^{m\cdot \bigN}, s \to r)\ge \minweight(w^{\bigM}, s \to r) = \minweight(w^{n\cdot \bigM}, s \to r) 
\]
\end{proposition}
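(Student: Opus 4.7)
The plan is to mimic the template of Propositions 5.3, 5.4, and 5.6, which establish the same kind of ``stabilization at $\bigN$'' phenomenon for reachability, for reflexive states, and for minimal reflexive states. The key observation is that the tethered pairs combine two conditions: membership $s\in\MinRefStates(S',w^k)$ and reachability $s\runsto{w^k}r$. Since Proposition 5.6 already tells us that $s\in\MinRefStates(S',w^{m\bigN})$ forces $s\in\MinRefStates(S',w^{\bigN})=\MinRefStates(S',w^{\bigN\cdot n})$ for every $n$, the membership aspect reduces immediately to choosing $n=|S|$ so that $\bigN\cdot n=\bigM$. The effort therefore lies in controlling reachability together with weights.

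For the inclusion $\TethPairs(S',w^{m\bigN})\subseteq\TethPairs(S',w^{\bigM})$, I would take a run $s\runsto{w^{m\bigN}}r$ and decompose it into $m$ segments of shape $w^{\bigN}$, producing a sequence $s=s_0,s_1,\ldots,s_m=r$. Applying pigeonhole whenever $m>|S|$ yields a repeated state, hence a cycle $s_i\runsto{w^{(j-i)\bigN}}s_i$ inside $S'$. The crucial step (already used implicitly in the proof of Proposition 5.6) is that in a stable cycle no cycle from a state in $S'$ back to itself can be negative, since the baseline state is minimal reflexive with weight $0$ on \emph{every} repetition of $w$. Consequently such a cycle has weight $\ge 0$ and can be removed without increasing the overall weight, reducing the run to $s\runsto{w^{k\bigN}}r$ with $k\le |S|$. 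Since $s\in\MinRefStates(S',w^{\bigN})$ provides a $0$-weight loop $s\runsto{w^{\bigN}}s$, I can prepend $|S|-k$ copies of this loop to pad the length exactly to $|S|\bigN=\bigM$ while preserving weight.

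For the equality $\TethPairs(S',w^{\bigM})=\TethPairs(S',w^{\bigM\cdot n})$, the left-to-right inclusion is obtained by appending (or prepending) $n-1$ copies of the $0$-weight loop $s\runsto{w^{\bigM}}s$, which maintains both minimal reflexivity and reachability without changing the weight. The right-to-left inclusion is the same pigeonhole-plus-pad construction as above, with $w^{n\bigM}$ in place of $w^{m\bigN}$. The weight statement $\minweight(w^{m\bigN},s\to r)\ge\minweight(w^{\bigM},s\to r)=\minweight(w^{n\bigM},s\to r)$ falls out of the same construction: cycle removal is weight-non-increasing, padding with loops at $s$ is weight-neutral, and the equality for $w^{n\bigM}$ also uses that extending a given minimum-weight run by $0$-weight loops attains the same weight on the longer word.

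The main obstacle I expect is the careful bookkeeping around \emph{which} state we cycle on to pad: it must be $s$ itself, and $s$ must retain its minimal reflexive status, with a witnessing $0$-weight self-loop, at every intermediate length we invoke. This is precisely what the stability assumption (as opposed to mere reflexivity) buys us, together with Proposition 5.6; but the interplay between ``every cycle we delete is non-negative'' and ``every cycle we insert at $s$ has weight exactly $0$'' is delicate, and the argument silently breaks if one applies it to a merely reflexive cycle. I would therefore invoke \cref{def:stable cycle} explicitly at each use of these two facts to keep the reasoning transparent.
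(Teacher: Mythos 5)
Your proposal is correct and follows essentially the same route as the paper: pigeonhole on the states visited at multiples of $w^{\bigN}$, cycle removal justified by the absence of negative cycles in a stable cycle, and $0$-weight-loop padding at $s$ enabled by Proposition~\ref{prop:minimal reflexive cycles stabilize at N}. The paper organizes it as three steps (first reduce the exponent to $\le |S|$, then prove a general monotone inclusion up to any larger multiple of $\bigN$, then combine), whereas you fuse the remove-then-pad-to-exactly-$\bigM$ into one pass; the underlying reasoning and the weight accounting are the same.
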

\begin{proof}\ifproofs The proof proceeds in three steps.
\paragraph{Step 1:}
Consider a tethered pair $(s,r)\in \TethPairs(S',w^{m\bigN})$ for some $m$, and let $m$ be minimal for this pair. We claim that $m\le |S|$. Indeed, similarly to the proofs of \cref{prop:reflexive cycles stabilize at N,prop:minimal reflexive cycles stabilize at N}, if $m>|S|$, by the pigeonhole principle there exists a state $s'$ that repeats in the minimal weight run $\rho:s\runsto{w^{m\bigN}}r$. We can thus write
$\rho:s\runsto{w^{m_1\bigN}}s'\runsto{w^{m_2\bigN}}s'\runsto{w^{m_3\bigN}}r$ with $m_1+m_2+m_3=m$. We then have that $s\runsto{w^{(m_1+m_3)\bigN}}r$. In order to reach a contradiction to the minimality of $m$, we need to show that $s\in \MinRefStates(S',w^{(m_1+m_3)\bigN})$, so that $(s,r)$ is a tethered pair in $(S',w^{(m_1+m_3)}\bigN)$. This, however, follows directly from \cref{prop:minimal reflexive cycles stabilize at N}, since $\MinRefStates(S',w^{m\bigN})=\MinRefStates(S',w^{(m_1+m_3)\bigN})$. We can  therefore assume $m\le |S|$.

In addition, since stable cycles do not contain negative cycles, we have that $\minweight(w^{m_2\bigN},s'\to s')\ge 0$, so $\minweight(w^{(m_1+m_3)\bigN},s\to r)\le \minweight(w^{m\bigN},s\to r)$.

\paragraph{Step 2:} Next, we prove that $\TethPairs(S',w^{m\bigN})\subseteq \TethPairs(S',w^{k\bigN})$ for all $k\ge m$. 
Indeed, consider a tethered pair $(s,r)\in \TethPairs(S',w^{m\bigN})$, then $s\runsto{w^{m\bigN}}r$ and $s\in \MinRefStates(S',w^{m\bigN})$. 
By \cref{prop:minimal reflexive cycles stabilize at N} and since $k\ge m$ we also have 
\[s\in \MinRefStates(S',w^{\bigN})=\MinRefStates(S',w^{(k-m)\bigN})=\MinRefStates(S',w^{k\bigN}).\]
In particular, $s\runsto{w^{(k-m)\bigN}}s$, and therefore $s\runsto{w^{k-m}\bigN}s\runsto{w^{m\bigN}}r$, so $s\runsto{w^{(k\bigN)}}r$ and thus $(s,r)\in \TethPairs(S',w^{k\bigN})$.

Moreover, since $s\in \MinRefStates(w^{(k-m)\bigN})$, it follows that $\minweight(w^{(k-m)\bigN},s\to s)=0$, and from the run description above we get that
$\minweight(w^{m\bigN},s\to r)\ge \minweight(w^{k\bigN},s\to r)$ (where we have an inequality instead of equality, since conceptually there could be other runs with lower weight).

\paragraph{Step 3:} We can now complete the proof. for a pair $(s,r)\in \TethPairs(S',w^{m\bigN})$, we can assume by Step 1 that $m\le |S|$, and this can only decrease the weight. By Step 2 we also have $(s,r)\in\TethPairs(S',w^{\bigM})$, since $\bigM=|S|\bigN>m\bigN$, and that this again decreases the weight, i.e., 
\[\minweight(w^{m\bigN},s\to r)\ge \minweight(w^{\bigM},s\to r)\ge \minweight(w^{n\bigM},s\to r).\]
Conversely, if $(s,r)\in\TethPairs(S',w^{\bigM})$ then since $\bigM$ is a multiple of $\bigN$, by Step 2 we have $(s,r)\in\TethPairs(S',w^{n\bigM})$ for all $n\ge 1$. 
To conclude the equality of weights, by Step 1 and Step 2 combined, we have that 
$\minweight(w^{n\bigM},s\to r)\ge \minweight(w^{\bigM},s\to r)$. Indeed, decreasing from $n\bigM$ to at most $|S|\bigN$ decreases the weight (Step 1), and it further decreases to $|S|\bigN=\bigM$ (Step 2).
\else \textbf{PROOFS REMOVED} \fi\end{proof}

A similar claim holds for plateau states, with an analogous proof.
\begin{proposition}
\label{prop:plateau states stabilize at N}
    Consider a stable cycle $(S',w)$. For every $n,m\in \bbN$ we have that 
    \[\PlatPairs(S',w^{m \cdot \bigN})\subseteq \PlatPairs(S',w^\bigM)=\PlatPairs(S',w^{\bigM \cdot n}).\]
    Moreover, for every $(s,r) \in \PlatPairs(S',w^{m \cdot \bigN})$ it holds that
    \[\minweight(w^{m\cdot \bigN}, s \to r)\ge \minweight(w^{\bigM}, s \to r) = \minweight(w^{n\cdot \bigM}, s \to r) 
    \]
\end{proposition}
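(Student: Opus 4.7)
The plan is to mirror essentially verbatim the proof of \cref{prop:tethered states stabilize at N}, swapping the roles of $s$ and $r$. The underlying reason the two proofs are parallel is that in both cases the defining requirement $s \runsto{w^{m\bigN}} r$ is the same, and the ``anchor'' at one endpoint lies in $\MinRefStates$, which we already know to stabilize at $\bigN$ by \cref{prop:minimal reflexive cycles stabilize at N}. The only difference is that the $0$-weight cycle we use for padding now lives at $r$ rather than at $s$.

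First I would carry out the reduction step: given $(s,r) \in \PlatPairs(S',w^{m\bigN})$ with $m$ minimal, show that $m \le |S|$. To this end, fix a minimal-weight run $\rho : s \runsto{w^{m\bigN}} r$; if $m > |S|$ a pigeonhole argument yields some state $s'$ repeated along $\rho$, so we can write $\rho : s \runsto{w^{m_1 \bigN}} s' \runsto{w^{m_2 \bigN}} s' \runsto{w^{m_3 \bigN}} r$ and excise the middle loop. To conclude that the shortened run witnesses $(s,r) \in \PlatPairs(S', w^{(m_1+m_3)\bigN})$ I invoke \cref{prop:minimal reflexive cycles stabilize at N} for the right endpoint: since $r \in \MinRefStates(S', w^{m\bigN})$, it also lies in $\MinRefStates(S', w^{(m_1+m_3)\bigN})$. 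This contradicts minimality of $m$. Absence of negative cycles in a stable cycle (\cref{def:stable cycle}) further gives $\minweight(w^{m_2\bigN}, s' \to s') \ge 0$, so the weight only decreases under the excision.

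Next I would establish monotonicity: $\PlatPairs(S', w^{m\bigN}) \subseteq \PlatPairs(S', w^{k\bigN})$ for every $k \ge m$, together with the fact that enlarging the exponent does not increase $\minweight(w^{\cdot}, s \to r)$. Here, whereas the tethered proof prepended an $s$-cycle of weight $0$, we now append an $r$-cycle: since $r \in \MinRefStates(S', w^{m\bigN}) = \MinRefStates(S', w^{\bigN}) = \MinRefStates(S', w^{(k-m)\bigN})$ by \cref{prop:minimal reflexive cycles stabilize at N}, the run $r \runsto{w^{(k-m)\bigN}} r$ exists with weight $0$, and composing it after the original run of $s \runsto{w^{m\bigN}} r$ gives $s \runsto{w^{k\bigN}} r$ at no extra cost. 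Membership of $r$ in $\MinRefStates(S', w^{k\bigN})$ is also immediate from \cref{prop:minimal reflexive cycles stabilize at N}.

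The two steps combine just as in \cref{prop:tethered states stabilize at N}: for any $(s,r) \in \PlatPairs(S', w^{m\bigN})$, Step 1 brings $m$ down to at most $|S|$ (only decreasing the weight), and Step 2 then pushes up to $\bigM = |S|\bigN$ and further to $n\bigM$ (again only decreasing the weight), yielding $\minweight(w^{m\bigN}, s \to r) \ge \minweight(w^{\bigM}, s \to r) \ge \minweight(w^{n\bigM}, s \to r)$. The equality $\minweight(w^{\bigM}, s \to r) = \minweight(w^{n\bigM}, s \to r)$ follows by Step 2 in one direction and Step 1 plus Step 2 in the other. I do not expect a genuine obstacle here; the only place one must be mildly careful is that when excising the middle cycle in Step 1 one must argue that the anchor condition ($r \in \MinRefStates$) survives, which is exactly where \cref{prop:minimal reflexive cycles stabilize at N} does the work.
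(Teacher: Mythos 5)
Your proposal matches the paper's proof essentially step-for-step: the same pigeonhole reduction to $m \le |S|$ using \cref{prop:minimal reflexive cycles stabilize at N} to preserve the anchor, the same monotonicity step by appending a $0$-weight cycle at $r$, and the same combination to reach $\bigM$ and $n\bigM$. The paper's proof is also an explicit three-step mirror of \cref{prop:tethered states stabilize at N} with $s$ and $r$ swapped, so no divergence to report.
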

\begin{proof}\ifproofs The proof proceeds in three steps.
    \paragraph{Step 1:}
        Consider a plateau pair $(s,r)\in \PlatPairs(S',w^{m\bigN})$ for some $m$, and let $m$ be minimal for this pair. We claim that $m\le |S|$. Indeed, similarly to the proofs of \cref{prop:reflexive cycles stabilize at N,prop:minimal reflexive cycles stabilize at N}, if $m>|S|$, by the pigeonhole principle there exists a state $s'$ that repeats in the minimal weight run $\rho:s\runsto{w^{m\bigN}}r$. We can thus write
        $\rho:s\runsto{w^{m_1\bigN}}s'\runsto{w^{m_2\bigN}}s'\runsto{w^{m_3\bigN}}r$ with $m_1+m_2+m_3=m$. We then have that $s\runsto{w^{(m_1+m_3)\bigN}}r$. In order to reach a contradiction to the minimality of $m$, we need to show that $r\in \MinRefStates(S',w^{(m_1+m_3)\bigN})$, so that $(s,r)$ is a plateau pair in $(S',w^{(m_1+m_3)}\bigN)$. This, however, follows directly from \cref{prop:minimal reflexive cycles stabilize at N}, since $\MinRefStates(S',w^{m\bigN})=\MinRefStates(S',w^{(m_1+m_3)\bigN})$. We can  therefore assume $m\le |S|$.
        
        In addition, since stable cycles do not contain negative cycles, we have that $\minweight(w^{m_2\bigN},s'\to s')\ge 0$, so $\minweight(w^{(m_1+m_3)\bigN},s\to r)\le \minweight(w^{m\bigN},s\to r)$.
    
    \paragraph{Step 2:} 
        Next, we prove that $\PlatPairs(S',w^{m\bigN})\subseteq \PlatPairs(S',w^{k\bigN})$ for all $k\ge m$. 
        Indeed, consider a plateau pair $(s,r)\in \PlatPairs(S',w^{m\bigN})$, then $s\runsto{w^{m\bigN}}r$ and $r\in \MinRefStates(S',w^{m\bigN})$. 
        By \cref{prop:minimal reflexive cycles stabilize at N} and since $k\ge m$ we also have 
        \[r\in \MinRefStates(S',w^{\bigN})=\MinRefStates(S',w^{(k-m)\bigN})=\MinRefStates(S',w^{k\bigN}).\]
        In particular, $r\runsto{w^{(k-m)\bigN}}r$, and therefore $s\runsto{w^{m\bigN}}r\runsto{w^{k-m}\bigN}r$, so $s\runsto{w^{(k\bigN)}}r$ and thus $(s,r)\in \PlatPairs(S',w^{k\bigN})$.
    
    Moreover, since $r\in \MinRefStates(w^{(k-m)\bigN})$, it follows that $\minweight(w^{(k-m)\bigN},r\to r)=0$, and from the run description above we get that
    $\minweight(w^{m\bigN},s\to r)\ge \minweight(w^{k\bigN},s\to r)$ (where we have an inequality instead of equality, since conceptually there could be other runs with lower weight).
    
    \paragraph{Step 3:} 
        We can now complete the proof. for a pair $(s,r)\in \PlatPairs(S',w^{m\bigN})$, we can assume by Step 1 that $m\le |S|$, and this can only decrease the weight. By Step 2 we also have $(s,r)\in\PlatPairs(S',w^{\bigM})$, since $\bigM=|S|\bigN>m\bigN$, and that this again decreases the weight, i.e., 
        \[\minweight(w^{m\bigN},s\to r)\ge \minweight(w^{\bigM},s\to r)\ge \minweight(w^{n\bigM},s\to r).\]
        Conversely, if $(s,r)\in\PlatPairs(S',w^{\bigM})$ then since $\bigM$ is a multiple of $\bigN$, by Step 2 we have $(s,r)\in\PlatPairs(S',w^{n\bigM})$ for all $n\ge 1$. 
        To conclude the equality of weights, by Step 1 and Step 2 combined, we have that 
        $\minweight(w^{n\bigM},s\to r)\ge \minweight(w^{\bigM},s\to r)$. Indeed, decreasing from $n\bigM$ to at most $|S|\bigN$ decreases the weight (Step 1), and it further decreases to $|S|\bigN=\bigM$ (Step 2).
\else \textbf{PROOFS REMOVED} \fi\end{proof}

The stabilization of $\TethPairs$ and $\PlatPairs$ implies that the grounding state in grounded pairs does not have to be reachable and co-reachable with $w^\bigM$, as follows.
\begin{proposition}
    \label{prop:grounding states reachable with any bigM k}
    Consider a stable cycle $(S',w)$. For every $s,r,g\in S'$ with $g\in \MinRefStates(S',w^{\bigM})$, if $s\runsto{w^{\bigM i}}g\runsto{w^{\bigM j}}r$ for some $i,j\ge 1$, then $(s,r)\in \GroundPairs(S',w)$.
\end{proposition}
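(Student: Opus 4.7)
The plan is to invoke \cref{prop: transitions stabilize at M} twice to collapse the $w^{\bigM i}$ and $w^{\bigM j}$ reachability witnesses into single $w^{\bigM}$-reachabilities, thereby exhibiting $g$ itself as a valid grounding state for the pair $(s,r)$.

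First, I would apply \cref{prop: transitions stabilize at M} with $n = i$ to the hypothesis $s \runsto{w^{\bigM i}} g$, which gives $s \runsto{w^{\bigM}} g$. Then I would apply the same proposition with $n = j$ to the hypothesis $g \runsto{w^{\bigM j}} r$, which gives $g \runsto{w^{\bigM}} r$. Concatenating these two runs yields the decomposition $s \runsto{w^{\bigM}} g \runsto{w^{\bigM}} r$.

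The hypothesis already provides $g \in \MinRefStates(S', w^{\bigM})$, so the above decomposition matches the definition of a grounded pair (\cref{def:grounded pairs}) verbatim, with $g$ playing the role of the witnessing state. Thus $(s, r) \in \GroundPairs(S', w)$, as desired.

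I do not foresee any real obstacle here: \cref{prop: transitions stabilize at M} is stated for arbitrary states in $S$ and an arbitrary word $w$, and it concerns only Boolean reachability without reference to weights, so it applies directly to our situation. The only point worth flagging is that the definition of $\GroundPairs$ requires merely the \emph{existence} of a run through a minimal reflexive state under two applications of $w^{\bigM}$ — not a minimal-weight such run — which is exactly what the Boolean stabilization of \cref{prop: transitions stabilize at M} delivers. So no case analysis is needed, and the role of this proposition is simply to free later arguments from tracking the exact multiplicity $i, j$ of $\bigM$-blocks used to witness a grounded pair.
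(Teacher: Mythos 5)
Your proof is correct, and it takes a genuinely different route from the paper's. The paper proves $(s,g)\in \PlatPairs(S',w^{\bigM i})=\PlatPairs(S',w^{\bigM})$ and $(g,r)\in \TethPairs(S',w^{\bigM j})=\TethPairs(S',w^{\bigM})$ via \cref{prop:plateau states stabilize at N} and \cref{prop:tethered states stabilize at N}, then extracts the reachability facts $s\runsto{w^{\bigM}}g\runsto{w^{\bigM}}r$ from the membership in those pair sets. You instead go directly through \cref{prop: transitions stabilize at M}, which is the purely Boolean stabilization $s\runsto{w^{\bigM}}r \Leftrightarrow s\runsto{w^{\bigM n}}r$. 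Since the only thing needed beyond the given $g\in\MinRefStates(S',w^{\bigM})$ is Boolean reachability on $w^{\bigM}$-blocks, the weighted machinery in the paper's version is superfluous, and your argument is strictly more elementary. Your flagging that $\GroundPairs$ demands only existence of a run through a minimal reflexive state is exactly the right observation justifying the shortcut; the hypothesis $i,j\ge 1$ guards against the degenerate $n=0$ instance of \cref{prop: transitions stabilize at M}, so there is no gap.
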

\begin{proof}
    Since $g\in \MinRefStates(S',w^{\bigM})$, 
    then $(s,g)\in \PlatPairs(S',w^{\bigM i})=\PlatPairs(S',w^{\bigM})$ (by \cref{prop:plateau states stabilize at N}) and that $(g,r)\in \TethPairs(S',w^{\bigM j})=\TethPairs(S',w^{\bigM})$. 
    We thus have $s\runsto{w^{\bigM}}g\runsto{w^{\bigM}}r$, so by \cref{def:grounded pairs} we have $(s,r)\in \GroundPairs(S',w)$.
\end{proof}

The main result of this section aggregated the asymptotic behavior of grounded pairs, and shows that they stabilize. We give a brief overview of the argument.

Consider a pair of states $(s,r)$ in a stable cycle $(S',w)$. If $(s,r)\notin \GroundPairs(S',w)$, then every run $\rho:s\runsto{w^m}r$ has only strictly positive cycles (focusing on the states after each $w$), and is therefore intuitively increasing. There may, however, be negative simple paths upon reading $w$, and the negative weight can depend on the length of $w$. Thus, it is guaranteed that every long-enough run has very large weight, but achieving this weight may take a while.

Conversely, if $(s,r)\in \GroundPairs(S',w)$ and $g$ is their grounding state (c.f. \cref{def:grounded pairs}), then it is possible to spend any number $m\bigM>2\bigM$ of iterations of $w$ between $r$ and $s$ by taking $0$-weight cycles on $g$. While this is optimal among runs that go through a grounding state, there may be other runs with lower weight that initially take a very negative transition. Nonetheless, after enough repetitions of $w$, the path through the grounding state becomes optimal altogether. This is captured as follows.

\begin{lemma}[\keyicon Pumping Grounded Pairs]
\label{lem:pumping grounded pairs}
Consider a stable cycle $(S',w)$. For every $n\in \bbN$ there exists $M_0\in \bbN$ (efficiently computable) such that for all $m\ge M_0$ the following hold.\footnote{Note that Item 2 does not depend on $n$.}  
\begin{enumerate}
    \item If $(s,r)\notin \GroundPairs(S',w)$ then $\minweight(w^{m\cdot 2\bigM},s\to r)>n$.
    \item If $(s,r)\in \GroundPairs(S',w)$ with grounding state $g$, then 
    \[
    \minweight(w^{m\cdot 2\bigM},s\to r)=\minweight(w^{M_0\cdot 2\bigM},s\to r)=\minweight(w^{2\bigM},s\runsto{w^{M}} g\runsto{w^{M}}  r)
    \]
    Moreover, we have $\minweight(w^{m'\cdot 2\bigM},s\to r)\le \minweight(w^{2\bigM},s\runsto{w^{M}} g\runsto{w^{M}}  r)$ for every $m'\in \bbN$.
\end{enumerate}
\end{lemma}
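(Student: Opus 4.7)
The plan is to work at the level of the tropical transition matrix $A$ given by $A[u,v] = \minweight(w^\bigM, u \to v)$, so any run $\rho: s \runsto{w^{m\cdot 2\bigM}} r$ corresponds to a walk $s = v_0, v_1, \ldots, v_{2m} = r$ in the $A$-graph with $\weight(\rho) = \sum_{i=0}^{2m-1} A[v_i, v_{i+1}]$. The pivotal observation is \cref{prop:grounding states reachable with any bigM k}: if the walk contains an \emph{interior} vertex $v_i \in \MinRefStates(S', w^\bigM)$ for some $1 \le i \le 2m-1$, then $(s,r) \in \GroundPairs(S', w)$.

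For Item 1, I assume $(s,r) \notin \GroundPairs$, so no walk from $s$ to $r$ has an interior vertex in $\MinRefStates(S', w^\bigM)$. I apply the standard simple-path-plus-cycles decomposition to the walk: the simple path from $s$ to $r$ has $A$-length at most $|S|$ and hence weight bounded below by a constant depending only on $|S|, |w|, \wmax{w}$, and $\bigM$, while the extracted cycles have total $A$-length at least $2m - |S|$ and individual $A$-length at most $|S|$. Each such cycle lies on a vertex that is not in $\MinRefStates(S', w^\bigM)$, because a $0$-weight cycle $v \runsto{w^{k\bigM}} v$ would place $v$ in $\MinRefStates(S', w^{k\bigM}) = \MinRefStates(S', w^\bigM)$ by \cref{prop:minimal reflexive cycles stabilize at N}; since weights are integers, each such cycle contributes weight at least $1$. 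Standard counting yields at least $(2m-|S|)/|S|$ cycles, so the run weight is at least $(2m-|S|)/|S|$ minus the simple-path bound, which exceeds $n$ once $M_0$ is of the natural linear order in $n$ and the automaton's parameters. A small point I will handle is the case that an endpoint lies in $\MinRefStates$ and is also revisited interiorly -- but such a revisit would place an $\MinRefStates$-vertex at an interior position, which \cref{prop:grounding states reachable with any bigM k} already rules out.

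For Item 2, the upper bound $\minweight(w^{m'\cdot 2\bigM}, s \to r) \le \minweight(w^\bigM, s\to g) + \minweight(w^\bigM, g\to r)$ (including the ``moreover'' clause for every $m' \ge 1$) is direct: concatenate the witnessing runs $s \runsto{w^\bigM} g$ and $g \runsto{w^\bigM} r$ with $2m' - 2$ iterations of the $0$-weight self-cycle at $g$ guaranteed by $g \in \MinRefStates(S', w^\bigM)$. For the matching lower bound when $m \ge M_0$, take an optimal run $\rho^*$ and split into cases. If $\rho^*$ visits some $v_i \in \MinRefStates(S', w^\bigM)$ at an interior position $i$, then the prefix and suffix of $\rho^*$ witness $(s, v_i) \in \PlatPairs(S', w^{i\bigM})$ and $(v_i, r) \in \TethPairs(S', w^{(2m-i)\bigM})$, and \cref{prop:plateau states stabilize at N,prop:tethered states stabilize at N} yield
\[
\weight(\rho^*) \ge \minweight(w^\bigM, s \to v_i) + \minweight(w^\bigM, v_i \to r) \ge \minweight(w^\bigM, s \to g) + \minweight(w^\bigM, g \to r)
\]
by the minimality of the grounding state $g$ over all candidates in $\MinRefStates(S', w^\bigM)$. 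Otherwise no such $v_i$ exists, and the cycle-counting argument of Item 1 forces $\weight(\rho^*)$ to exceed the (bounded) grounding value once $M_0$ is sufficiently large, contradicting optimality.

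The main obstacle I anticipate is the careful bookkeeping for the indices in the second case of Item 2: the two stabilization propositions are phrased for $\PlatPairs(S', w^{m \cdot \bigN})$ and $\TethPairs(S', w^{m \cdot \bigN})$, so one has to invoke \cref{prop:minimal reflexive cycles stabilize at N} to identify $\MinRefStates(S', w^{i\bigM})$ with $\MinRefStates(S', w^\bigM)$ for the relevant intermediate power $i\bigM = i|S|\bigN$. Taking $M_0$ to be the maximum of the threshold ensuring Item 1 exceeds $n$ and the threshold ensuring the non-visiting case in Item 2 exceeds the bounded grounding value delivers the desired uniform bound, with all constants polynomial in $|S|$, $|w|$, $\wmax{w}$, $\bigM$, and $n$, hence efficiently computable.
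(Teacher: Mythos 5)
Your proposal is correct and follows essentially the same architecture as the paper's proof: reduce to the $w^\bigM$-transition graph, rule out interior visits to $\MinRefStates(S',w^\bigM)$ via \cref{prop:grounding states reachable with any bigM k} when $(s,r)\notin\GroundPairs$, use a cycle-counting argument to make Item 1 quantitative, and split Item 2 into the ``visits a grounding candidate'' branch (handled by \cref{prop:plateau states stabilize at N,prop:tethered states stabilize at N} and the minimality of $g$) and the ``no visit'' branch (which inherits the unbounded lower bound from Item 1). The one genuine difference is in how the counting step of Item 1 is carried out: the paper applies Johnson's algorithm to re-weight the $w^\bigM$-graph into one with non-negative edges and no reachable zero-cycles before counting, whereas you bound the weight of a long walk directly by decomposing it into a short simple path (whose weight is bounded below by a constant) and a collection of simple cycles, each of weight $\ge 1$ because the edge weights are integers and any $0$-weight cycle would place a vertex in $\MinRefStates(S',w^\bigM)$, contradicting the non-groundedness. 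Your approach is slightly more elementary (no reliance on Johnson's), while the paper's is slightly cleaner in making the non-negativity explicit; both are correct and yield the same efficiently computable $M_0$. One small wording note: you write that each extracted cycle ``lies on a vertex not in $\MinRefStates$''; what you actually use is the contrapositive — that a $0$-weight cycle would force its base vertex into $\MinRefStates$ at an interior position, which is excluded — so the cycle has weight $\ge 1$. You clearly mean this, but stating it in the contrapositive form would be crisper.
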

\begin{proof} \ifproofs
We first prove Item 1, and then use it to prove Item 2.
\paragraph{Proof of Item 1.}
Let $(s,r)\notin \GroundPairs(S',w)$. We construct a weighted directed graph $G=\tup{V,E}$ whose vertices are $V=S'$ and for $p,q\in S'$ we add the edge $(p,q)$ with weight $\minweight(w^\bigM,p\to q)$ (which might be $\infty$). Observe that there is a bijection between paths from $s$ to $r$ in $G$ and runs of the form  $\rho:s=p_1\runsto{w^\bigM}p_2\runsto{w^{\bigM}}\cdots \runsto{w^\bigM}p_{k}=r$ provided each section $p_i\runsto{w^\bigM}p_{i+1}$ in the run has minimal weight (indeed, otherwise it can be replaced by a lower-weight run).
In particular, for all $k$, the minimal weight of a run from $s$ to $r$ on $w^{k\bigM}$ is attained by a path in the graph.

Since $(S',w)$ is stable, this graph has no strictly negative cycles.
Since $(s,r)\notin \GroundPairs(S',w)$, then every cycle in the graph that is reachable from $s$ and can reach $r$, has strictly positive weight. Indeed, otherwise there would be a run $s\runsto{w^{k_1\bigM}}q\runsto{w^{k_2\bigM}}q\runsto{w^{k_3\bigM}}r$ with $\minweight(q\runsto{w^{k_2\bigM}}q)=0$. However, by \cref{prop:minimal reflexive cycles stabilize at N} this implies $q\in \MinRefStates(S',w^{\bigM})$, which in turn implies by \cref{prop:tethered states stabilize at N,prop:plateau states stabilize at N,prop:minimal reflexive cycles stabilize at N} that $(s,r)\in \GroundPairs(S',w)$ (as $(s,q)\in \PlatPairs(S',w^{\bigM})$ and $(q,r)\in \TethPairs(S',w^{\bigM})$), but this contradicts the assumption that $(s,r)\notin \GroundPairs(S',w)$.

We conclude that every cycle reachable between $s$ and $r$ has strictly positive weight. The claim readily follows from this, by noting that increasingly long runs have increasingly large weight (which corresponds to all the runs having large weight). We spell out the argument for completeness.

By Johnson's algorithm~\cite{johnson1977efficient} we can (efficiently) compute a new weight function on the edges $\mu:E\to \bbN$ and two numbers $v_s,v_r\in \bbZ$ such that the weight of a path $s=p_1,p_2,\ldots,p_k=r$ in the original graph $G$ is $v_s+\sum_{i=1}^{k-1}E(p_i,p_{i+1})-v_r$. Note that in the new weights, there are no negative edges at all, and the property that there are no reachable $0$-cycles between $s$ and $r$ is maintained by Johnson's construction. We can now readily see that in every segment of length $|S|$ in the path there is at least one cycle, whose weight is at least $1$. Therefore, for every $n\in \bbN$, every path of length $|S|(|v_s|+|v_r|+n)$ has weight at least $n$ in $G$. Thus, for every $n\in \bbN$ we can set $M_0=|S|(|v_s|+|v_r|+n)$, and the claim follows (note that we prove the claim already for $w^{m\cdot \bigM}$, but we take $w^{m\cdot 2\bigM}$ for uniformity with the other constants).

Note that in fact, we prove a slightly stronger result: for any pair $(s,r)$, if $\rho:s\runsto{w^{m\cdot 2\bigM}}r$ does not traverse a $0$-weight cycle while reading some $w^i$ infix, then $\weight(\rho)>n$. In particular, if $\weight(\rho)\le n$ then it must make a $0$-weight cycle on some state $q\in \MinRefStates(S',w^\bigM)$.
This observation is used in the next part of the proof.

\paragraph{Proof of Item 2.}
Consider $(s,r)\in \GroundPairs(S',w)$ and let $g$ be the grounding state for $(s,r)$ (see \cref{def:grounded pairs}). 
We first note that for every $m\in \bbN$ it holds that there is always a run on $w^{m\cdot 2\bigM}$ that uses the grounding state as a $0$-cycle. Indeed, $s\runsto{w^\bigM}g\runsto{w^{(m-1)2\bigM}}g\runsto{w^{M}}r$ is a run from $s$ to $r$ on $w^{m\cdot 2\bigM}$. Since $\minweight(w^{\bigM},g\to g)=0$ it follows that the weight of this run is exactly $\minweight(w^{2\bigM},s\runsto{w^{\bigM}} g\runsto{w^{\bigM}}r)$. We thus have the inequality:
\[
\minweight(w^{2\bigM},s\runsto{w^{\bigM}} g\runsto{w^{\bigM}}r)\ge \minweight(w^{m\cdot 2\bigM},s\to r).
\]
This concludes the ``Moreover'' part.
It remains to show that there is some $M_0\in \bbN$ such that for all $m\ge M_0$, the reverse inequality holds.

Fix $n=\minweight(w^{2\bigM},s\runsto{w^{\bigM}} g\runsto{w^{\bigM}}r)$ and let $M_0$ be the constant constructed in the proof of Item 1 for this $n$. The observation at the end of the proof of Item 1 gives us that for every $m\ge M_0$, every run $\rho:s\runsto{w^{m\cdot 2\bigM}}r$ that does not visit a state in $\MinRefStates(S',w^\bigM)$ must have weight larger than $n$. By the inequality above, such runs cannot be minimal, and we can therefore ignore them.

Consider then a minimal-weight run $\rho:s\runsto{w^{m\cdot 2\bigM}}r$ that visits a state $q\in \MinRefStates(S',w^\bigM)$, then we can write
$\rho:s\runsto{w^{k_1\cdot 2\bigM}}q\runsto{w^{k_2\cdot 2\bigM}}r$ for some $k_1,k_2\ge 1$. 
 Then, by \cref{prop:tethered states stabilize at N,prop:plateau states stabilize at N} we have that $(s,q)\in \PlatPairs(w^{\bigM},s\to q)$ and $(q,r)\in \TethPairs(w^{\bigM},q\to r)$, and moreover: $\minweight(w^\bigM,s\to q)=\minweight(w^{k_1\bigM},s\to q)$ and similarly $\minweight(w^\bigM,q\to r)=\minweight(w^{k_2\bigM},q\to r)$.
 Thus, we have $\weight(\rho)=\minweight(w^{2\bigM},s\runsto{w^\bigM}q\runsto{w^{\bigM}}r)$. However, recall that $g$ is the grounding state, which by definition means that $g$ minimizes this expression over all $q\in \MinRefStates(S',w^{\bigM})$, and therefore $\weight(\rho)\ge \minweight(w^{2\bigM},s\runsto{w^\bigM}g\runsto{w^{\bigM}}r)$. We thus have 
 \[
\minweight(w^{m\cdot 2\bigM},s\to r)\ge \minweight(w^{2\bigM},s\runsto{w^{\bigM}} g\runsto{w^{\bigM}}r)
\]
and we conclude the equality.
We can now either plug in $m=M_0$ or leave $m\ge M_0$ to get the two equalities of Item 2.
\else \textbf{PROOFS REMOVED} \fi \end{proof}

\section{The Cactus Extension}
\label{sec:cactus extension}
Consider a baseline-augmented subset construction $\augA=\tup{S,\Gamma,\augInitState,\augTrans}$. Note that we now denote the alphabet by $\Gamma$; we no longer assume $\Gamma$ is finite. 
More precisely, we do not assume the precise transitions defined in \cref{sec:augmented construction}, but the following:
\begin{itemize}
    \item The states are of the form $(q,p,T)$ where $q,p\in T$.
    \item The transitions are deterministic with respect to the second and third components (referred to as \emph{baseline component} and \emph{reachable set}, respectively).
\end{itemize}
In the following, we repeatedly extend the alphabet and transitions of $\augA$ to obtain a sequence of automata over the same state space with the properties above. We briefly sketch our approach.
Our extension consists of three phases. First, we introduce a \emph{stabilization} operator, which adds letters referred to as \emph{cactus letters}. Intuitively, a cactus letter is built by nesting stable cycles, where each stable cycle becomes a letter. 
The next phase is a \emph{rebase} operator, which further adds the \emph{rebase letters}. Intuitively, these letters represent a stable cycle, but also shift the baseline to a new state (which is not possible in cactus letters). 
We then repeatedly alternate adding cactus letters and rebase letters to an infinite limit. 
Finally, we add another set of letters, referred to as \emph{jump} letters. These allow us to change the baseline. The need for jump letters is technical, and we explain their usage in due course.

\subsection{Cactus Letters}
\label{sec:cactus letters}
Cactus letters are defined as the infinite union of an inductively defined set. The induction steps are defined by adding new letters and transitions with the following \emph{stabilization} operator.
\begin{definition}[\keyicon Stabilization]
\label{def:stabilization}
    For each stable cycle $(S',w)$ we introduce a new letter $\alpha_{S',w}$ with the following transitions: for every $(s,r)\in \GroundPairs(S',w)$ with grounding state $g$ we have the transition $(s,\alpha_{S',w},\minweight(w^{2\bigM},s\runsto{w^\bigM}g\runsto{w^\bigM} r),r)$. We refer to this operation as \emph{stabilization}, and denote the new letters by $\stab_{\augA}(\Gamma)$ and the new transitions by $\stab_{\augA}(\augTrans)$.
\end{definition}
\begin{remark}
\label{rmk:stabilization other works}
The term \emph{stabilization} is often used in the algebraic view of automata to denote a closure operation on the transition monoid of a WFA~\cite{lombardy2006sequential,daviaud2023big} whereby runs that increase too much are ``sent to $\infty$''.
Our usage is technically different, but has a similar flavor, and hence we reuse the name. 
Indeed, by \cref{lem:pumping grounded pairs}, the effect of reading arbitrarily many repetitions of $w$ from a state in a stable cycle is either bounded (for a grounded pair) or tends to infinity. 
\end{remark}

We define the stabilization by essentially considering $w^{2\bigM}$. This, however, can be replaced by $w^{m2\bigM}$ for any $m\in \bbN$, as we now show. Intuitively, the reason is that at $2\bigM$ most of the behaviors stabilize.
\begin{proposition}
    \label{prop:cactus letters stabilizes at 2M}
    Consider a stable cycle $(S',w)$, then for every $m\in \bbN$ we have that $(S',w^{m\cdot 2\bigM})$ is also a stable cycle, and the transitions defined on $\alpha_{S',w}$ are identical to those of $\alpha_{S',w^{m\cdot 2\bigM}}$. 
    I.e, for every $s,r\in S$ we have $\minweight(\alpha_{S',w},s\to r)=\minweight(\alpha_{S',w^{m\cdot 2\bigM}},s\to r)$.
\end{proposition}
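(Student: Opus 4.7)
The proof splits naturally into two parts: (i) verify that $(S', w^{m\cdot 2\bigM})$ is a stable cycle, and (ii) verify that the transitions produced by stabilizing it coincide with those of $\alpha_{S',w}$. For part (i), I would first observe that reflexivity transfers immediately: since $\booltrans(S', w) \subseteq S'$ iterates to $\booltrans(S', w^k) \subseteq S'$ for every $k$, we get that $(S', w^{m\cdot 2\bigM})$ is a reflexive cycle with the same baseline state $s = (q,q,T)$. The stability condition then reduces to showing, for every $n \in \bbN$, that $s \in \MinRefStates(S', (w^{m\cdot 2\bigM})^n)$ and $\minweight((w^{m\cdot 2\bigM})^n, s\to s) = 0$. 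Since $(w^{m\cdot 2\bigM})^n = w^{n\cdot m\cdot 2\bigM}$, this is a direct specialization of the corresponding property for $(S',w)$.

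For part (ii), the crux is to align grounded pairs and their grounding states between the two stabilizations. Fix $(s,r) \in S'\times S'$; I would first show $\GroundPairs(S', w) = \GroundPairs(S', w^{m\cdot 2\bigM})$. The key point is that grounding states for both live in the same set: since $\bigM$ and $m\cdot 2\bigM \cdot \bigM$ are both multiples of $\bigN$, \cref{prop:minimal reflexive cycles stabilize at N} gives $\MinRefStates(S', w^\bigM) = \MinRefStates(S', w^{m\cdot 2\bigM \cdot \bigM})$. One direction of the equality of $\GroundPairs$ then falls out of \cref{prop:grounding states reachable with any bigM k} (taking $i = j = m\cdot 2\bigM$), while the reverse uses that any $g \in \MinRefStates(S', w^\bigM)$ is also reflexive on $w^{(m\cdot 2\bigM - 1)\bigM}$ at weight $0$, allowing the short runs $s\runsto{w^\bigM} g$ and $g \runsto{w^\bigM} r$ to be padded with zero-weight cycles to runs of length $m\cdot 2\bigM\cdot \bigM$.

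Once the grounded pairs match, I would compute the two transition weights in parallel. For $\alpha_{S',w}$ the weight is, by definition, $\minweight(w^{2\bigM}, s\runsto{w^\bigM} g\runsto{w^\bigM} r)$ minimized over $g \in \MinRefStates(S', w^\bigM)$. For $\alpha_{S', w^{m\cdot 2\bigM}}$ it is the analogous expression where each $w^\bigM$ segment is replaced by $w^{m\cdot 2\bigM \cdot \bigM}$ and the grounding state $g'$ ranges over $\MinRefStates(S', w^{m\cdot 2\bigM\cdot\bigM})$. Using \cref{prop:tethered states stabilize at N,prop:plateau states stabilize at N} applied to the plateau pair $(s, g')$ and the tethered pair $(g', r)$ (noting $m\cdot 2\bigM\cdot\bigM$ is a multiple of $\bigM$), the long-path weights collapse to $\minweight(w^\bigM, s\to g')$ and $\minweight(w^\bigM, g'\to r)$ respectively. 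So the expression being minimized in the second stabilization is identical to the one in the first, and the minimizing set of grounding states is the same by the previous paragraph; hence the two minima coincide.

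\paragraph*{Anticipated obstacle.} The genuinely delicate step is verifying that the grounding states transfer correctly and that the weights really agree on the nose (rather than up to some inequality). The potential pitfall is a mismatch of the form ``$\alpha_{S',w}$ uses the best $g$ in $\MinRefStates(S', w^\bigM)$, but $\alpha_{S', w^{m\cdot 2\bigM}}$ might exploit a run that does not pass through any such $g$ within a single $w^\bigM$ block.'' This is precisely why the stabilization constants $\bigN, \bigM$ were engineered as they were, and the argument funnels through \cref{prop:tethered states stabilize at N,prop:plateau states stabilize at N,prop:minimal reflexive cycles stabilize at N,prop:grounding states reachable with any bigM k}. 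I would take care to state the grounding-state equality with the same underlying $\MinRefStates$ set before combining the two stabilization results, to keep the bookkeeping transparent.
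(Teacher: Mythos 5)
Your proof plan is correct and follows essentially the same strategy as the paper: establish stability of $(S',w^{m\cdot 2\bigM})$ directly from \cref{def:stable cycle}, show the grounded pairs coincide (via \cref{prop:grounding states reachable with any bigM k} in one direction and padding with zero-weight cycles on a grounding state in the other), then match the transition weights. The only difference is presentational — for the weight comparison you collapse the long-path weights directly through the plateau and tethered stabilization results (\cref{prop:plateau states stabilize at N,prop:tethered states stabilize at N}) applied to the pairs $(s,g')$ and $(g',r)$, whereas the paper routes through \cref{lem:pumping grounded pairs}; since that lemma is itself built on those stabilization propositions, these are the same underlying argument at different levels of packaging, and your version makes the grounding-state bookkeeping more explicit.
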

\begin{proof}
    Let $m\in \bbN$. We first claim that $(S',w^{m\cdot 2\bigM})$ is a stable cycle. 
    Since $(S',w)$ is reflexive, then $\booltrans(S',w)\subseteq S'$, so $\booltrans(S',w^{k})\subseteq S'$ for every $k\in \bbN$, and in particular for $k=m2\bigM$. 
    Similarly, by \cref{def:stable cycle} we have that for every $n\in \bbN$, the baseline state $s\in S$ satisfies $s\in \MinRefStates(S',w^{n})$ and has a $0$-cycle, and in particular this holds for multiples of $m 2\bigM$, so $(S',w^{m\cdot 2\bigM})$ is a stable cycle.

    Next, consider $(s,r)\in \GroundPairs(S',w)$, then by definition we have $s\runsto{w^\bigM}g\runsto{w^\bigM}r$  for $g\in \MinRefStates(S',w^\bigM)$, but then we also have $s\runsto{(w^{m2\bigM})^\bigM} g\runsto{(w^{m2\bigM})^\bigM} r$ via the run
    $
    s\runsto{w^\bigM}g \runsto{w^{m4\bigM^2-2}} g   \runsto{w^\bigM} r 
    $,
    so $(s,r)\in \GroundPairs(S',w^{m\cdot 2\bigM})$. Moreover, by the run above we also have
    $\minweight(w^{2\bigM},s\to r)\le \minweight((w^{m2\bigM})^{2\bigM},s\to r)$.

     Conversely, consider $(s,r)\in \GroundPairs(S',w^{m\cdot 2\bigM})$, then by \cref{prop:grounding states reachable with any bigM k} we have $(s,r)\in \GroundPairs(S',w)$, and by \cref{lem:pumping grounded pairs} we have $\minweight((w^{m2\bigM})^{2\bigM},s\to r)\le \minweight(w^{2\bigM},s\to r)$.

     Since the transitions on cactus letters are only defined for grounded pairs, this concludes the proof.
\end{proof}

We now apply stabilization inductively, and take the infinite union, as follows.
\begin{definition}[\keyicon Stabilization Closure]
\label{def:stab closure}
For every $k\in \bbNinf$ we define  
$\stab_k(\augA) = \augA_k =\tup{S,\Gamma_k,\augInitState,\augTrans_k}$ inductively as follows.
\begin{itemize}
    \item For $k=0$ we define $\Gamma_0=\Gamma$, $\augTrans_0=\augTrans$.
    \item For $k>0$ we define $\Gamma_k=\Gamma_{k-1}\cup \stab_{\augA_{k-1}}(\Gamma_{k-1})$ and $\augTrans_k=\augTrans_{k-1}\cup \stab_{\augA_{k-1}}(\augTrans_{k-1})$ 
    \item For $k=\infty$ we define $\Gamma_\infty=\bigcup_{k\in \bbN}\Gamma_k$ and $\augTrans_\infty=\bigcup_{k\in \bbN}\augTrans_k$.
\end{itemize}
\end{definition}
We refer to the letters in $\Gamma_\infty$ as \emph{cactus letters of $\Gamma$} and we denote 
$\augA_\infty = \stab_\infty(\augA) = \bigcup_{k\in \bbN}\augA_k$. Notice that while the sets of letters and transitions of $\augA_\infty$ are infinite, its set of states is $S$.  
\begin{example}[Cactus Letters]
    \label{xmp:cactus letters}
    Suppose $\Gamma=\{a,b\}$, we construct some cactus letters (See \cref{fig:cactus}). 
    For the sake of the example we do not have a concrete WFA, as we just illustrate the syntax of cactus letters. 
    Let ${\color{Cerulean}w_1=aaa}$ and ${\color{Cerulean}w_4=bbb}$, we define ${\color{Cerulean}\alpha_1=\alpha_{S_1,w_1}}$ and ${\color{Cerulean}\alpha_4=\alpha_{S_4,w_4}}$. 
    We then recursively define ${\color{red} w_2=ab{\color{Cerulean}\alpha_1 }ab}$ with ${\color{red}\alpha_2=\alpha_{S_2, w_2}}$. 
    We go up another level: ${\color{Green} w_3=baa{\color{Cerulean}\alpha_4}abba{\color{red}\alpha_2}ba}$ with ${\color{Green}\alpha_3=\alpha_{S_3, w_3}}$. We can consider another word outside, e.g., $ab{\color{Green}\alpha_3} aa$.
    Note that we could write the entire word as a single expression, but it is very cumbersome:
    \[ 
    ab \cdot {\color{Green}\alpha_{S_3,baa\cdot{\color{Cerulean}\alpha_{S_4,bbb}}\cdot abba\cdot{\color{red}\alpha_{S_2,ab\cdot{\color{Cerulean}\alpha_{S_1,aaa}}\cdot ab}}\cdot ba}}\cdot aa
    \]
\end{example}

\begin{figure}[ht]
        \centering
        \includegraphics[width=0.8\linewidth]{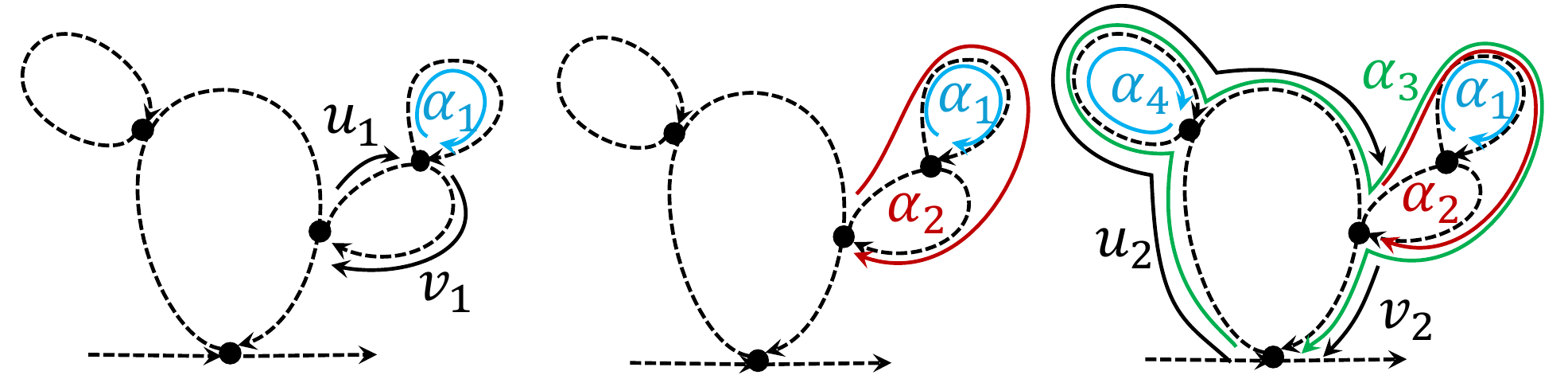}
        \caption{Cactus letters and cactus chains: $\alpha_1$ is of depth $1$. The word $u_1\alpha_1v_1$ (left) forms a cactus letter $\alpha_2$ of depth $2$ (center). Then, $u_2\alpha_2v_2$ forms a cactus letter $\alpha_3$ of depth $3$ (right). Notice that $u_2$ contains the cactus letter $\alpha_4$. The sequence $\alpha_3,\alpha_2 ,\alpha_1$ is a cactus chain, and so is $\alpha_3,\alpha_4$.}
        \label{fig:cactus}
    \end{figure}

\begin{remark}
\label{rmk:baseline transitions with cactus are zero}
Recall that in $\augA$, transitions between baseline states have weight 0. We remark that this is preserved in $\augA_\infty$. Indeed, a transition on $\alpha_{S',w}$ between baseline states $(p,p,T)$ and $(q,q,T')$ implies that $p=q$, and that $w$ prescribes a $0$-weight self loop on $p$, since $S'$ is a stable cycle and $p\in \MinRefStates(S',w^{\bigM})$ (see \cref{def:stable cycle}). 
\end{remark}

Consider a word $w\in \Gamma_\infty^*$, then $w$ is a concatenation of letters from various $\Gamma_k$'s. In the following we define the \emph{depth} of $w$ as the maximal $k$ used in the letters of $w$. This is illustrated in \cref{fig:cactus}.

In addition, we define a \emph{cost} of a word. Intuitively, the cost represents a measure of ``length'' of words, where cactus letters are defined to be extremely long, and dramatically increasing in length with each added depth. The notion of cost is central to our proof, and becomes clear later on.
\begin{definition}
\label{def:cost depth and sub cactus}
The functions $\cost:\Gamma_\infty\to \bbN$ and $\depth:\Gamma_\infty\to \bbN$ are defined inductively on the structure of $\Gamma_\infty^*$ in \cref{tab:cost depth and sub}.
\begin{table}[h!]
\centering
\begin{tabular}{|c|c|c|}
\hline
 & $\cost(w)=$ & $\depth(w)=$ \\
\hline
$w = a \in \Gamma_0=\Gamma$ & $ \wmax{a}+1$ & $1$  \\
\hline
$w = \alpha_{S',x} \in \Gamma_\infty$ & $ 2^{16(\bigM |S| \cdot \cost(x))^2}$ & $ 1 + \depth(x)$  \\
\hline
$w = \sigma_1 \cdots \sigma_m \in \Gamma_\infty^*,\  m\ge 2$ & $ \sum_{i=1}^m \cost(\sigma_i)$ & $ \max_{1\le i\le m}(\depth(\sigma_i))$ \\
\hline
\end{tabular}
\caption{The cost and depth functions.}
\label{tab:cost depth and sub}
\end{table}
\end{definition}

Note that \cref{def:cost depth and sub cactus} is indeed inductive, since if $\alpha_{S',w}\in \Gamma_k$ for some $k\in \bbN$, then $w\in \Gamma_{k-1}^*$. That is, considering the ``inner'' letters of a cactus letter must decrease the depth, so the definition reaches a terminating condition. 

\begin{remark}[$\cost$ is strictly increasing on prefixes]
    \label{rmk:cost is strictly increasing}
    The reason for $+1$ in the cost of a single letter is just so that the cost function is always strictly positive. Then, since the cost of a word is defined as the sum of the costs of its letters, we have that if $x$ is a prefix of $y$, then $\cost(x)<\cost(y)$.
\end{remark}

The precise reason for the cost of $\alpha_{S',x}$ being defined as it is becomes apparent in later sections. For now, we show that the cost can be used to upper-bound the maximal weight.
\begin{proposition}
\label{prop:cost higher than max weight}
    For every $w \in \Gamma_\infty^*$ it holds that $\maxeff{w}\le \cost(w)$.
\end{proposition}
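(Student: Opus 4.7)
The plan is to prove this by induction on the depth of $w$, mirroring the inductive definitions of $\cost$ and $\maxeff$. Since $\maxeff$ distributes additively over concatenation and $\cost$ does too, the bulk of the work reduces to the case of a single letter.

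First I would observe that concatenation is essentially free: if $w = \sigma_1 \cdots \sigma_m$ with $m \ge 2$, then $\maxeff{w} = \sum_i \wmax{\sigma_i} = \sum_i \maxeff{\sigma_i}$, and $\cost(w) = \sum_i \cost(\sigma_i)$, so a termwise comparison suffices. This lets me reduce the claim to showing $\maxeff{\sigma} \le \cost(\sigma)$ for every single letter $\sigma \in \Gamma_\infty$. For $\sigma \in \Gamma_0 = \Gamma$ this is immediate from the definitions: $\maxeff{\sigma} = \wmax{\sigma} = \cost(\sigma) - 1$.

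The main work is the inductive step for a cactus letter $\sigma = \alpha_{S',x}$, and I would proceed by induction on $\depth(\alpha_{S',x}) = 1 + \depth(x)$. By \cref{def:stabilization}, the only finite-weight transitions on $\alpha_{S',x}$ are those for grounded pairs $(s,r) \in \GroundPairs(S',x)$, each weighted $\minweight(x^{2\bigM}, s \runsto{x^\bigM} g \runsto{x^\bigM} r)$ for the grounding state $g$. Any finite-weight run on the word $x^{2\bigM}$ has absolute weight at most $\maxeff{x^{2\bigM}} = 2\bigM \cdot \maxeff{x}$, so $\wmax{\alpha_{S',x}} \le 2\bigM \cdot \maxeff{x}$. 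The inductive hypothesis applied to $x$ (which has strictly smaller depth) gives $\maxeff{x} \le \cost(x)$, hence
\[
\maxeff{\alpha_{S',x}} = \wmax{\alpha_{S',x}} \le 2\bigM \cdot \cost(x).
\]
It then only remains to verify the elementary inequality $2\bigM \cdot \cost(x) \le 2^{16(\bigM |S| \cdot \cost(x))^2} = \cost(\alpha_{S',x})$, which is immediate since the exponent grows quadratically while the left side grows linearly, and $\bigM, |S|, \cost(x) \ge 1$.

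There is no real obstacle here; the definition of $\cost$ on cactus letters was chosen precisely to be much larger than the crude bound $2\bigM \cdot \cost(x)$ needed to dominate $\maxeff$. The only subtle point worth stating carefully is that the argument for cactus letters depends on bounding runs of $\cA$ on the \emph{word} $x^{2\bigM}$ — which requires $\maxeff{x^{2\bigM}} = 2\bigM \cdot \maxeff{x}$, not any property of cactus transitions on $x$ itself — so the inductive hypothesis on $x$ (a word of strictly smaller depth) is exactly what is used.
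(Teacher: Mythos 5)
Your proof is correct and follows essentially the same route as the paper's: structural induction, reducing the concatenation case to the single-letter case by additivity, handling $\Gamma_0$ letters directly, and for a cactus letter $\alpha_{S',x}$ bounding the transition weight by $\maxeff{x^{2\bigM}} = 2\bigM\,\maxeff{x}$ and then invoking the inductive hypothesis on $x$ before finishing with the elementary inequality against $2^{16(\bigM|S|\cost(x))^2}$. The only cosmetic difference is that you pull the concatenation reduction out front rather than treating it as a parallel inductive case.
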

\begin{proof}
The proof is immediate by induction according to \cref{tab:cost depth and sub}, as follows.
    \begin{itemize}
        \item If $w=a\in \Gamma_0$, then by definition $\cost(a)=\wmax{a}+1=\maxeff{a}+1>\maxeff{a}$.
        \item If $w=\alpha_{S',x}$, let $p,q$ such that $(p,\alpha_{S',x},c,q)\in \augTrans_\infty$ with maximal (finite) $c$ in absolute value, i.e., $|c|=\wmax{\alpha_{S',x}}$. 
        Since there is a finite-weight transition on $\alpha_{S',x}$ from $p$ to $q$, it follows from \cref{def:stabilization} that $c=\minweight(x^{2\bigM},p\runsto{x^\bigM} g\runsto{x^\bigM} q)$ for some state $g$. In particular, $c\le \wmax{x^{2\bigM}}2\bigM |x|$. By induction we have $\wmax{x}|x|\le \cost(x)$, so 
        \[
        \begin{split}
            c&\le \wmax{x^{2\bigM}}2\bigM |x| =\wmax{x}2\bigM |x|\\
            &\le \cost(x)2\bigM \le 2^{16\bigM |S|^2 \cost(x)}=\cost(\alpha_{S',x})=\cost(\alpha_{S',x})
        \end{split}
        \]
        where we use the fact that $\wmax{x^{2\bigM}}=\wmax{x}$, since $\wmax{}$ depends only of the set of letters occurring in the word, and the fact that $b 2\bigM \le 2^{16 \bigM |S|^2 b}$, which holds for all $b\in \bbN$.

        We conclude that $\wmax{\alpha_{S',x}}|\alpha_{S',x}|=c\le \cost(\alpha_{S',x})$, as required.
        
        \item If $w=\sigma_1\cdots \sigma_m$, we have by the induction hypothesis that
        \[ \cost(w)=\sum_{i=1}^n\cost(\sigma_i)\ge \sum_{i=1}^n\maxeff{\sigma_i}=\sum_{i=1}^n\wmax{\sigma_i}=\maxeff{w}\] 
        where we use the fact that for single letters, $\maxeff{\sigma_i}=\wmax{\sigma_i}$.
    \end{itemize}
\end{proof}

\subsection{Rebase Letters}
\label{sec:rebase letters}
In light of \cref{rmk:baseline transitions with cactus are zero}, we note that cactus letters maintain the baseline. However, later on in the proof (in \cref{sec:condition for type 1 witness}), we need to change the baseline while reading a stable cycle. 
To this end, we introduce an additional set of letters and transitions that allow to (intuitively) traverse a grounded pair by reading a cactus letter $\alpha_{S',w}$, while changing the baseline to a new state. The weight of these transitions is then normalized by the weight of the transition on $\alpha_{S',w}$.
\begin{definition}[Rebase]
\label{def:rebase}
    Let $\augA_\infty$ be the stabilization closure of $\augA$.
    Consider a cactus letter $\alpha_{S',w}\in \Gamma_\infty$ and write $S'=\{(q,p,T)\mid q\in T\}$. 
    For each grounded pair $(s,r)\in \GroundPairs(S',w)$ where $s=(q_1,p,T)$ and $r=(q_2,p,T)$ with the transition $(s,\alpha_{S',w},c,r)\in \augTrans_\infty$, we add a letter $\beta_{S',w,s\to r}$ with the following transitions:
 \[
\{((q',q_1,T),\beta_{S',w,s\to r},d-c,(q'',q_2,T))\mid ((q',p,T),\alpha_{S',w},d,(q'',p,T))\in \augTrans_\infty\}
 \]
We refer to the resulting WFA as $\rebase(\augA_\infty)$. 
\end{definition}
\begin{figure}[ht]
    \centering
    \includegraphics[width=0.75\linewidth]{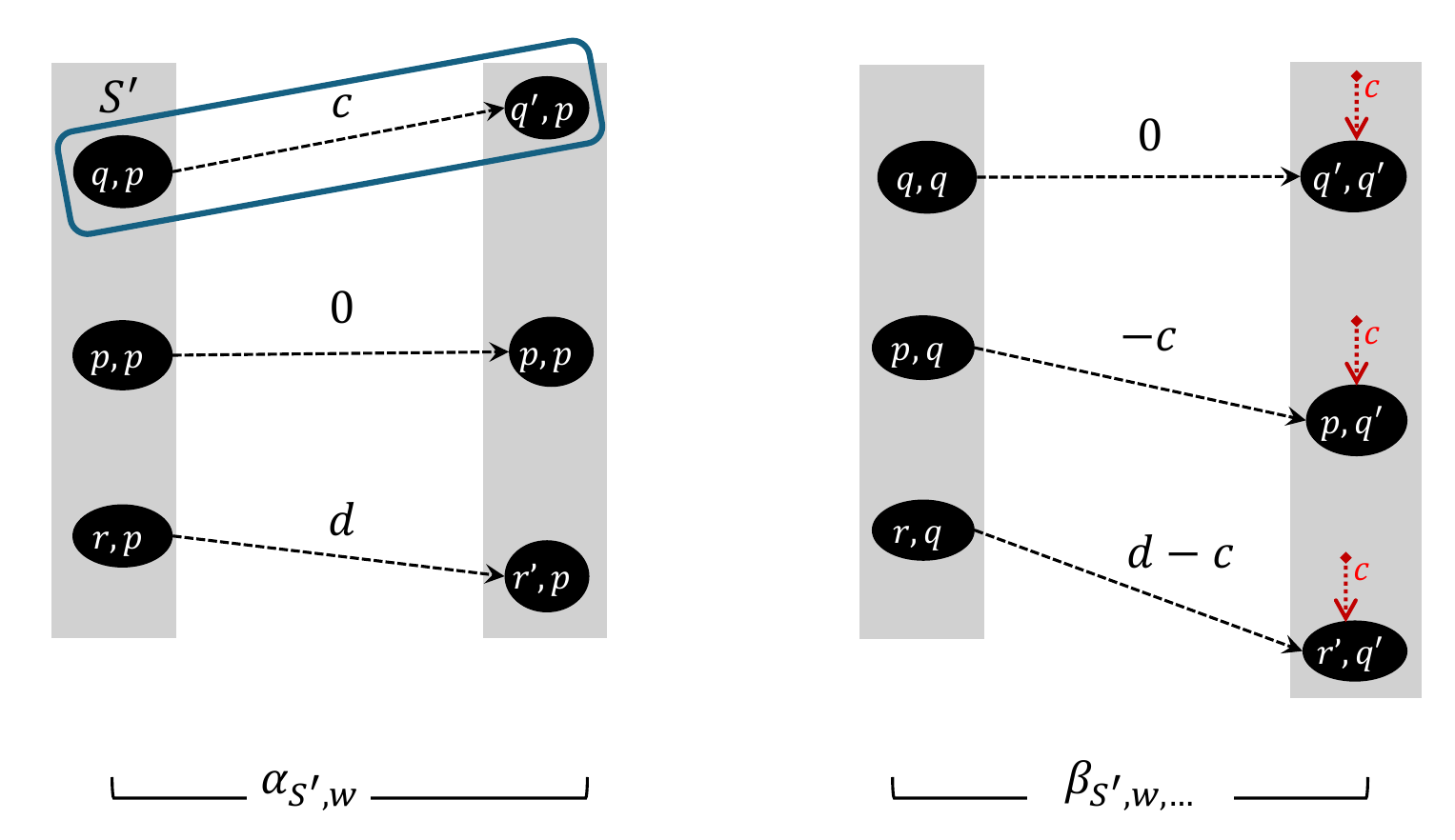}
    \caption{Rebase letter $\beta_{S',w,s\to r}$ with $s=(q,p,T)$ and $r=(q',p,T)$ induced by the $\alpha_{S',w}$ transition from $(q,p)$ to $(q',p)$. The weight of the resulting transitions, as well as the baseline run, shifts down by $c$, in accordance with the new base transition from $q$ to $q'$. }
    \label{fig:rebase}
\end{figure}
We illustrate the behavior of a rebase letter in \cref{fig:rebase}.

\begin{remark}[Cactus as Rebase]
\label{rmk:cactus as rebase}
Consider a cactus letter $\alpha_{S',w}$ where $S'=\{(q,p,T)\mid q\in T\}$. We observe that for the baseline state $s_p=(p,p,T)$, the transitions defined on the letter $\beta_{S',w,s_p\to s_p}$ are identical to those on $\alpha_{S',w}$.

Indeed, $(s_p,\alpha_{S',w},0,s_p)\in \augTrans_\infty$, so plugging in $0$ for $c$ in \cref{def:rebase} gives identical weights to those of $\alpha_{S',w}$.
\end{remark}

As a sanity check, note that the weight of the transition from $s$ to $r$ upon reading $\beta_{S',w,s\to r}$ is $0$ (since $(s,\alpha_{S',w},c,r)\in \augTrans_\infty$, and by plugging in $q'=q_1$ and $q''=q_2$ in \cref{def:rebase}). Therefore, the property that baseline transitions have weight $0$ is maintained.

\begin{remark}
\label{rmk:rebase is idempotent}
    Observe that rebase introduces new letters only for letters obtained by stabilization. Therefore, it is an idempotent operator. That is, if $\augA'$ is obtained from $\augA$ by applying rebase, then applying rebase on $\augA'$ does not introduce new letters.
\end{remark}

We now alternate the stabilization closure and rebase ad infinitum, as follows.
\begin{definition}[Cactus Extension]
\label{def:cactus extension}
Define $\augA^0_0=\augA$. 
For every $k,j\in \bbNinf$ we define 
$\augA^j_k=\tup{S,\Gamma^j_k,\augInitState,\augTrans^j_k}$ inductively as follows.
\begin{itemize}
    \item We define $\augA^0_k=\stab_k(\augA)$ as per \cref{def:stab closure}.
    \item For $j>0$ we define $\augA^j_0=\rebase(\augA^{j-1}_\infty)$.
    \item For $j>0$ and $k>0$ (including $k=\infty$) we define $\augA^j_k = \stab_k(\augA^j_0)$.
    \item For $j=\infty$ and every $k$ we define $\augA^\infty_k$ as the union (of alphabets and transitions) of $\augA^j_\infty$ for $j\in\bbN$. 
\end{itemize}
\end{definition}
\begin{remark}
\label{rmk:rebase not relevant in infinite level}
    Observe that $\augA^\infty_k=\augA^\infty_\infty$ for all $k$, since applying stabilization on the infinite union does not add new letters (since stabilization is applied to words of some finite depth anyway, and is therefore already in the union).
\end{remark}
\begin{remark}
    \label{rmk:cost functions on cactus extension levels}
    Recall that the $\cost$ and $\depth$ functions (\cref{def:cost depth and sub cactus}) are defined with respect to the stabilization closure. Thus, when considered on a word in the scope of $\augA^j_\infty$ for some $j>0$, the underlying alphabet $\Gamma$ is actually $\Gamma^j_0$. 
    That is, rebase letters are considered as having depth $1$, like ``regular'' letters. 
    In the context of our results, we never use $\cost$ and $\depth$ over rebase letters. However, since we also establish a general toolbox, this is worth noticing.
\end{remark}

\subsection{Jump Letters}
\label{sec:jump letters}
Our final type of letters is a technical addition that simplifies (some) proofs.
These letters, referred to as \emph{jump letters}, allow us to change the baseline state without other effect. 
\begin{definition}
    \label{def:jump letters}
    Consider a cactus extension $\augA^j_k$. For states $s=(q,p_1,T), r=(q,p_2,T)\in S$ and $T\subseteq Q$ (recall that $Q$ is the state space of the original WFA $\cA$) we introduce a \emph{jump letter} $\jl_{s\to r}$ with the following transitions. For states $t=(q',p_1,T)$ and $g=(q',p_2,T)$ (note the same \emph{first} and third component), we add the transition $(t,\jl_{s\to r},0,g)$.
    For states not of the form $(\cdot,p_1,T)$, the transitions have weight $\infty$.
    Note that the transitions on $\jl_{s\to r}$ are deterministic.
\end{definition}
Thus, jump letters can change the baseline without further effect. 
Note that since the baseline component is deterministic, then upon reading $\jl_{s\to r,T}$, all reachable states change their baseline. 
Also observe that jump can only change the baseline component to a state in $T$. However, as we discuss in the following (\cref{sec:reachable ghost states}), some states in $T$ may be unreachable. And indeed, $\jl$ letters may allow the baseline to be at an unreachable state, thus making the baseline run not seamless.

\begin{remark}
    \label{rmk:jump letters first component}
    Note that in \cref{def:jump letters}, the component $q$ in $s$ and $r$ is ignored (indeed, the same transitions are induced regardless of $q$). Thus, we sometimes denote the states $s,r$ of $\jl_{s\to r}$ as e.g., $s=(\cdot,p_1,T)$, where $\cdot$ stands for some arbitrary state.
\end{remark}

Finally, due to the ``violent'' nature of jump letters, we emphasize whenever they are present in a run, and in particular in a seamless run. We say that a run is \emph{jump free} if its transitions do not contain a jump letter.
We do not annotate whether an alphabet contains jump letters or not, but we state so explicitly when relevant.

\subsection{Ghost Reachable States}
\label{sec:reachable ghost states}
Consider a word $w$. Recall that $\booltrans(s_0,w)$ is the set of states reachable from $s_0$ by reading $w$. 
In $\augA$, before the introduction of cactus letters, it holds that 
$\booltrans(s_0,w)=\{(p,q,T)\mid p\in T\}$ for some set $T$. That is, the reachable first components are exactly captured by the third component $T$.
After introducing cactus letters, however, this no longer holds. Indeed, \cref{def:cactus extension,def:rebase} limit the transitions to certain states (namely those obtained from grounded pairs). In particular, we have that
$\booltrans(s_0,w)=\{(p,q,T)\mid p\in T'\}$ for some $T'\subseteq T$, but equality does not necessarily hold.
Intuitively, the states in $T\setminus T'$ are 
those for which repetitions of cactus infixes lead to unbounded increase in weight, and are therefore abstracted to $\infty$ in the stabilization.

Nonetheless, our analysis (in particular in \cref{sec:witness,sec:existence of separated increasing infixes}) requires us to reason about these states, which we term \emph{ghost states}. 

\begin{definition}[Ghost-Reachable States $\mathghost$]
    \label{def:ghost states}
    Consider a state $s=(p_1,q_1,T_1)$ and word $w$ with $\booltrans(s,w)=\{(p_2,q_2,T'_1)\mid p_2\in T_2\}$ for some $T_2\subseteq T'_1$. 
    We define the set \emph{ghost-reachable states of $w$ from $s$} as
$\ghostTrans(s,w)=\{(p_2,q_2,T'_1)\mid p_2\in T'_1\}$.
\end{definition}
Note that $\ghostTrans(s,w)$ includes the ghost states as well as the standard reachable states. That is, $\booltrans(s,w)\subseteq \ghostTrans(s,w)$.
Moreover, observe that $\ghostTrans(s,w)$ is determined by $\booltrans(s,w)$, and therefore if $\booltrans(s,w)=\booltrans(s',w')$ then $\ghostTrans(s,w)=\ghostTrans(s',w')$.

\begin{remark}[Stable Cycles Maintain Ghost States]
\label{rmk:stable cycles maintain ghost states}
    For every stable cycle $(S',w)$ we have $\ghostTrans(S',w)=S'$. 
    Indeed, recall that for a cycle to be stable, and in particular reflexive (\cref{def:reflexive cycle and state}), it holds that $S'=\{(p,q,T)\mid p\in T\}$ for some $T$ (i.e., the set of states is ``saturated'' with respect to $T$), and that $\booltrans(S',w)\subseteq S'$. In particular, all states in $\booltrans(S',w)$ are of the form $(\cdot,q,T)$ for some $q,T\in Q$.
    Thus, we have $\ghostTrans(S',w)=\{(p,q,T)\mid p\in T\}=S'$.
\end{remark}

We also lift the definition to runs, as follows.
\begin{definition}[Ghost Run]
\label{def:ghost run}
    Consider words $w_1,w_2$ and states $p\in \ghostTrans(s_0,w_1)\setminus\booltrans(s_0,w_1)$ and $q\in \ghostTrans(s_0,w_1w_2)\setminus\booltrans(s_0,w_1w_2)$.
    We then say that a run $q \runsto{w_2} p$ on the infix $w_2$ is a \emph{ghost run from $w_1$ to $w_1w_2$}. 
\end{definition}
We note that a ghost run cannot visit the same state as a ``real'' run on any prefix of $w_2$, as that would entail $q\in \booltrans(s_0,w_1w_2)$. Thus, ghost runs are ``uninterrupted'' by real ones.

\section{A Toolbox for Cacti}
\label{sec:cactus toolbox}
Cactus, rebase and jump letters intuitively do two things: (1) abstract away parts of words as ``pumpable'' cycles, in a \emph{nested} manner, and (2) shift the baseline run between states. 
A central theme in our reasoning is to utilize these features by ``folding'' parts of words into a cactus, ``unfolding'' other parts, possibly in a nested fashion, and shifting the baseline to a certain run as a ``point-of-view change''. These operations demonstrate the versatility of the cactus framework. 
In this section, we present the toolbox that allows us to perform these operations. This is divided to three parts: a framework to bound the depth of ``interesting'' nested cacti (\cref{sec:degeneracy and cactus depth bound}), a framework for shifting the baseline run (\cref{sec: baseline shift}), and a framework for getting rid of all cacti and rebase letters by \emph{flattening} (\cref{sec: cactus unfolding}).

\subsection{Degeneracy and A Bound on Cactus Depth}
\label{sec:degeneracy and cactus depth bound}
In this section, we consider words built up by repeatedly nesting cactus letters, as the following definition captures (see \cref{fig:cactus}).
\begin{definition}[Cactus Chain]
\label{def:cactus chain}
    A \emph{cactus chain} is a finite sequence of letters $\alpha_{S'_1,w_1},\ldots,\alpha_{S'_n,w_n}$ such that for every $1\le i<n$ we have $w_i=u_i \alpha_{S'_{i+1},w_{i+1}}v_i$.
\end{definition}

Cactus chains can be arbitrarily deep. Our goal in this section is to show that while this is the case, only a bounded number of ``layers'' can include interesting behaviors. In order to formalize this, we introduce the notions of \emph{degenerate} and \emph{non-degenerate} stable cycles, as follows.

\begin{definition}[(Non-)Degenerate Stable Cycle]
\label{def:degenerate stable cycle}
    A stable cycle $(S',w)$ is \emph{degenerate} if for every $s\in S'$, $t\in \RefStates(S',w^{2\bigM})$, if there is a run $s\runsto{w^{2\bigM k}}t$ for some $k\ge 1$, then $(s,t)\in \GroundPairs(S',w)$.

    Otherwise, $(S',w)$ is \emph{non-degenerate}, and a state $s\in S'$ such that there exists $t\in \RefStates(S',w^{2\bigM})$ and $k\ge 1$ for which $s\runsto{w^{2\bigM k}}t$ and $(s,t)\notin \GroundPairs(S',w)$ is called a \emph{non-degenerate state}.
\end{definition}
Intuitively, the ``canonical'' degenerate cycle is one where all the reflexive states are also minimal-reflexive (i.e., have a $0$-cycle on $w^{2\bigM}$). Indeed, then any reflexive state can serve also as a grounding state. The definition of degeneracy generalizes this notion, by allowing some reflexive states that are not minimal, but they cannot be reached without going through a grounding state, and therefore their non-minimality does not really add any unboundedness. We formalize this in \cref{sec:boundedness of degenerate cycles}. 
\subsubsection{Boundedness of Degenerate Cycles}
\label{sec:boundedness of degenerate cycles}
\begin{proposition}
\label{prop:degenerate cycle has bounded weights}
    Consider a degenerate stable cycle $(S',w)$ and $s,r \in S'$. For every $k\in \bbN$, if $s\runsto{w^{2\bigM |S| k}}r$, then 
    $|\minweight(w^{2\bigM |S| k},s \to r)|\leq 2\bigM \cost(w)$.
\end{proposition}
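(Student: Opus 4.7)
The plan is to show that, in a degenerate stable cycle, every such run can be effectively replaced by a run that quickly reaches a minimal reflexive state, idles there at zero weight for most of its length, and then reaches $r$ in a bounded number of steps. If I can establish $(s,r) \in \GroundPairs(S',w)$, then the ``moreover'' clause of \cref{lem:pumping grounded pairs} (Item~2) immediately gives $\minweight(w^{2\bigM|S|k}, s \to r) \le \minweight(w^{2\bigM},\, s \runsto{w^\bigM} g \runsto{w^\bigM} r) \le \maxeff{w^{2\bigM}} = 2\bigM\maxeff{w}$, which via \cref{prop:cost higher than max weight} is at most $2\bigM\cost(w)$; the matching lower bound follows once $|S|k \ge M_0$ from the equality part of that lemma, and for smaller $|S|k$ from a direct no-negative-cycle argument in the meta-graph on $w^\bigM$-transitions (which is guaranteed by stability, \cref{def:stable cycle}).

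Concretely, I would decompose the given run into $|S|k$ segments of $w^{2\bigM}$, producing intermediate states $s=s_0,s_1,\ldots,s_{|S|k}=r$. Pigeonhole on the first $|S|+1$ of these yields $a<b\le|S|$ with $s_a=s_b$, and \cref{prop:reflexive cycles stabilize at N} (using $\bigN\mid 2\bigM$) places this common state $v$ in $\RefStates(S',w^{2\bigM})$. Degeneracy (\cref{def:degenerate stable cycle}), applied either to $(s,v)$ when $a\ge1$ or to $(s,s)$ via the cycle $s=s_a\runsto{w^{2\bigM(b-a)}}s_b=s$ when $a=0$, then delivers a grounding state $g\in\MinRefStates(S',w^\bigM)$ with $s \runsto{w^\bigM} g \runsto{w^\bigM} v$, and in particular $\minweight(w^\bigM,g\to g)=0$, giving the zero-weight self-loop I need for length-padding.

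A symmetric pigeonhole on the last $|S|+1$ states splits into two cases: either $r$ is itself reflexive in $w^{2\bigM}$, in which case degeneracy applied directly to $(s,r)$ yields $(s,r)\in \GroundPairs(S',w)$ at once; or there is a reflexive $v'$ at some late position with a short tail $v'\runsto{w^{?}} r$ of at most $|S|$ segments of $w^{2\bigM}$. In the latter case I would use degeneracy applied to $(v',v')$ to obtain a grounding $g'$ with $v'\runsto{w^\bigM}g'\runsto{w^\bigM}v'$, then fold the short tail into a single $g'\runsto{w^\bigM} r$ after reshuffling through the zero-weight self-loop at $g'$, yielding $(s,r)\in \GroundPairs(S',w)$. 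The main obstacle is precisely this last step, because degeneracy as stated is asymmetric: it controls reflexive endpoints reachable from a state, not reflexive states reaching a given endpoint, and the tail may traverse non-reflexive states. The resolution is to iterate the pigeonhole-plus-degeneracy argument on the shrinking tail, using the fact (established via \cref{prop:reflexive cycles stabilize at N,prop:minimal reflexive cycles stabilize at N}) that every long sub-run between $S'$-states in a degenerate cycle must pass through $\MinRefStates(S',w^\bigM)$, so the desired canonical decomposition $s\runsto{w^\bigM}g\runsto{w^\bigM}r$ can always be extracted.
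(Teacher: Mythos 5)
Your overall strategy matches the paper's: decompose the run on $w^{2\bigM|S|k}$ into $w^{2\bigM}$-segments, pigeonhole to find a reflexive state, invoke degeneracy to get a grounding state, conclude $(s,r)\in\GroundPairs(S',w)$, and then appeal to \cref{lem:pumping grounded pairs} and \cref{prop:cost higher than max weight} for the upper bound. The lower bound via a no-negative-cycle argument in the iterated transition graph is also the paper's route (the paper does it over $w$-transitions, giving $\minweight(w^m,s\to r)\ge -|S|\cost(w)>-2\bigM\cost(w)$ uniformly, which is cleaner than splitting by whether $|S|k\ge M_0$).

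The genuine gap is precisely at the step you flag as ``the main obstacle'': getting from a grounding state $g$ (with $s\runsto{w^\bigM}g$, $g\in\MinRefStates(S',w^\bigM)$) to the conclusion $(s,r)\in\GroundPairs(S',w)$ when $r$ is not reflexive. Your proposed fix — a second pigeonhole near $r$ plus an iteration on the shrinking tail — relies on the claim that ``every long sub-run between $S'$-states in a degenerate cycle must pass through $\MinRefStates(S',w^\bigM)$''. That claim is not established by \cref{prop:reflexive cycles stabilize at N,prop:minimal reflexive cycles stabilize at N} and is in fact false in general: a run can loop on a reflexive-but-not-minimal-reflexive state arbitrarily long without visiting $\MinRefStates$; degeneracy only guarantees that such a pair is \emph{grounded}, not that the particular run \emph{visits} a grounding state. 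The missing tool is \cref{prop:grounding states reachable with any bigM k}: once you have $s\runsto{w^\bigM}g$ and, via $g\runsto{w^\bigM}v$ followed by the remaining segments to $r$, a run $g\runsto{w^{\bigM j}}r$ for some $j\ge 1$, that proposition immediately yields $(s,r)\in\GroundPairs(S',w)$. With it, the second pigeonhole, the tail-folding, and the iteration all become unnecessary; this is exactly how the paper finishes the argument.
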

\begin{proof}
    We split the proof to the upper and lower bound, starting with the former.
    Consider $s,r\in S'$ and $k\in \bbN$ such that $s\runsto{w^{2\bigM |S| k}}r$. We show that $\minweight(w^{2\bigM |S| k},s \to r)\le  2\bigM \cost(w)$. Consider the minimal-weight run $\rho:s\runsto{w^{2\bigM |S| k}}r$, i.e., $\weight(\rho)=\minweight(w^{2\bigM |S| k},s \to r)$.
    
    By the pigeonhole principle there exists $t\in S'$ such that we can write $\rho:s \runsto{w^{2\bigM k_1}} t\runsto{w^{2\bigM  k_2}} t \runsto{w^{2\bigM  k_3}}r$ with $k_1+k_2+k_3=|S|k$ and $k_2>0$. In particular, we have $t\in \RefStates(S',w^{2\bigM k_2})= \RefStates(S',w^{2\bigM})$ (by \cref{prop:reflexive cycles stabilize at N}). 

    Since $(S',w)$ is degenerate, it follows from \cref{def:degenerate stable cycle} that $(s,t)\in \GroundPairs(S',w)$. We claim that this implies $(s,r)\in \GroundPairs(S',w)$. Indeed, take the grounding state of $(s,t)$, i.e., some $t'\in \MinRefStates(S',w^{\bigM})$ such that $s\runsto{w^{\bigM}}t'\runsto{w^{\bigM}}t$, then we can write
    \[s \runsto{w^{\bigM}} t'\runsto{w^{\bigM}}t \runsto{w^{2\bigM  k_3}}r\]
    and therefore $s\runsto{w^{\bigM}}t'\runsto{w^{(2k_3+1)\bigM}}r$. By \cref{prop:grounding states reachable with any bigM k} this implies $(s,r)\in \GroundPairs(S',w)$.
    Now, by \cref{lem:pumping grounded pairs,prop:cost higher than max weight}, we have that 
    \[\minweight(w^{2\bigM |S| k},s\to r)\le \minweight(w^{2\bigM},s\runsto{w^\bigM} t'\runsto{w^\bigM}r)\le 2\bigM\cost(w)\]
    Concluding the upper bound.

    We proceed with the lower bound, which is significantly simpler. Intuitively, in order for the weight to decrease so much, there must be a negative cycle, which cannot happen in a stable cycle. We now formalize this argument.

    Recall that $\bigM\gg |S|$. We prove a stronger claim than the lower bound: for every $m\in \bbN$ we have $\minweight(w^m,s\runsto{m}r)\ge -|S|\cost(w)>-2\bigM \cost(w)$.
    To this end, assume otherwise by way of contradiction, and let $m\in \bbN$ be the minimal number such that $\minweight(w^m,s\to r)<-|S|\cost(w)$. By \cref{prop:cost higher than max weight} it must hold that $m\ge |S|$, since $|\maxeff{w^m}|\le m\cost(w)$.
    
    Let $\rho$ be a minimal-weight run such that $\rho:s\runsto{w}s_1\runsto{w}s_2\runsto{w}\cdots \runsto{w}s_{m-1}\runsto{w}r$. Denote $s=s_0$ and $r=s_m$. Since $m\ge |S|$, then by the pigeonhole principle there exist $0\le i<j\le m$ such that $s_i=s_j$. Denote $t=s_i=s_j$, then $t\in \RefStates(S',w)$. 
    Now, if $\minweight(w^{j-i},t\to t)\ge 0$, then we can shorten the run $\rho$ while still having value lower than $-|S|\cost(w)$, contradicting the minimality of $m$. However, if $\minweight(w^{j-i},t\to t)< 0$, this contradicts the fact that $(S',w)$ is a  stable cycle (\cref{def:stable cycle}). Thus, we are done.
\end{proof}
A simple corollary of \cref{prop:degenerate cycle has bounded weights} is that enough repetitions of $w$ cause a repeated configuration, as follows.
\begin{proposition}
\label{prop:degenerate cycle has repeating configuration}
    Consider a degenerate stable cycle $(S',w)$. There exist
    $m<n\le (2+4\bigM\cost(w))^{|S|^2}$
    such that for every $t\in S'$ and corresponding configuration $\vec{c_t}$ we have \[\xconf(\vec{c_t},w^{2\bigM |S| m})=\xconf(\vec{c_t},w^{2\bigM |S| n})\]
\end{proposition}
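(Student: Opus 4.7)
The plan is to prove this by a straightforward pigeonhole argument, leveraging the bound already established in \cref{prop:degenerate cycle has bounded weights}. The key observation is that each configuration of the form $\xconf(\vec{c_t}, w^{2\bigM|S|k})$ lives in a finite set whose size does not depend on $k$, so some configuration must repeat within a bounded prefix of the sequence indexed by $k$.

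In more detail, for each $t \in S'$ and $k \in \bbN$, the configuration $\xconf(\vec{c_t}, w^{2\bigM|S|k})$ assigns to each state $r \in S$ the value $\minweight(w^{2\bigM|S|k}, t \to r)$. First I would observe that this value is $\infty$ whenever $r$ is not reachable from $t$ via $w^{2\bigM|S|k}$, and otherwise, by \cref{prop:degenerate cycle has bounded weights}, it lies in the integer interval $[-2\bigM\cost(w),\, 2\bigM\cost(w)]$. Thus each entry of the configuration draws from a set of at most $4\bigM\cost(w) + 1 + 1 = 4\bigM\cost(w) + 2$ values (the $4\bigM\cost(w)+1$ integers in the range, together with $\infty$).

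Next, I would consider the joint tuple $F(k) = \bigl(\xconf(\vec{c_t}, w^{2\bigM|S|k})\bigr)_{t \in S'}$. This tuple has $|S'| \le |S|$ coordinates, each of which is an $|S|$-vector of entries from a set of size at most $4\bigM\cost(w) + 2$. Hence the number of possible values of $F(k)$ is bounded by
\[
(4\bigM\cost(w) + 2)^{|S'|\cdot |S|} \;\le\; (2 + 4\bigM\cost(w))^{|S|^2}.
\]
Applying the pigeonhole principle to the values $F(0), F(1), \ldots, F\!\left((2 + 4\bigM\cost(w))^{|S|^2}\right)$, I obtain indices $m < n \le (2 + 4\bigM\cost(w))^{|S|^2}$ with $F(m) = F(n)$, which is exactly the desired equality of configurations for every $t \in S'$ simultaneously.

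I do not expect any serious obstacle here: the hard work of bounding the weights in a degenerate stable cycle was already done in \cref{prop:degenerate cycle has bounded weights}, and the remainder is a counting argument. The only mild care needed is to count the $\infty$ value as an additional option alongside the bounded integer range (accounting for the ``$+2$'' in the base of the exponent), and to be sure that the bound from \cref{prop:degenerate cycle has bounded weights} applies uniformly for every $k$ — which it does, since that proposition is stated for arbitrary $k$ whenever reachability holds, and we absorbed the non-reachable case into the $\infty$ option.
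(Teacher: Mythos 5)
Your proof is correct and follows essentially the same route as the paper: invoke \cref{prop:degenerate cycle has bounded weights} to bound each configuration entry within $\{-2\bigM\cost(w),\ldots,2\bigM\cost(w)\}\cup\{\infty\}$, a set of size $4\bigM\cost(w)+2$, then apply pigeonhole to the tuple of configurations indexed by $t\in S'$. The paper indexes the tuple by all $t\in S$ rather than $t\in S'$, but this is immaterial since the bound $|S|^2$ in the exponent already accommodates the worst case.
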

\begin{proof}
    Fix $t\in S'$. 
    By \cref{prop:degenerate cycle has bounded weights}, for every $k\in \bbN$ it holds that 
    \[ 
    \xconf(\vec{c_t},w^{2\bigM |S| k})\in (\{-2\bigM\cost(w),\ldots, 2\bigM\cost(w)\}\cup\{\infty\})^{S}
    \]
    Denote $D=\{-2\bigM\cost(w),\ldots, 2\bigM\cost(w)\}\cup\{\infty\}$, then $|D|=4\bigM\cost(w)+2$. 
    Taking this view on all states simultaneously, we have that for every $k\in \bbN$, the vector 
    $\mathcal{T}_k=(\xconf(\vec{c_t},w^{2\bigM |S| k}))_{t\in S}$ is in $(D^{S})^{S}$.

    Thus, by the pigeonhole principle, there exists $0\le m<n\le (4\bigM\cost(w)+2)^{|S|^2}$ such $\mathcal{T}_m=\mathcal{T}_n$, which proves the claim.
\end{proof}
Naturally, once a configuration repeats in all states, this cycle can be arbitrarily pumped without any effect on the weight, as follows.
\begin{proposition}
\label{prop:stable cycle config repeat}
    Consider a stable cycle $(S',w)$ and $m<n \in \bbN$ such that for every $t \in S'$ it holds that $\xconf(\vec{c_t},w^{2\bigM m})=\xconf(\vec{c_t},w^{2\bigM n})$. Then for every $t,r \in S'$ and every $i \in \bbN$, we have that $\minweight(w^{2\bigM (m+i (n-m))},t \to r)=\minweight(\alpha_{S',w}, t \to r)$.
\end{proposition}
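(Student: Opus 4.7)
The plan is to combine the periodicity hypothesis with the asymptotic stabilization result \cref{lem:pumping grounded pairs}. The idea is: the hypothesis says that a single period of length $2\bigM(n-m)$ leaves the configuration reached by $\vec{c_t}$ after $w^{2\bigM m}$ invariant, so by the standard tropical semigroup/matrix argument, \emph{every} iterate of the period is invariant. But for large enough iterates, \cref{lem:pumping grounded pairs} pins the min-weight between any two states to the $\alpha_{S',w}$-value. Combining the two gives the equality for every $i$.

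Concretely, I would first establish periodicity: view the action of $w^{2\bigM}$ on configurations as left-multiplication by a tropical matrix $M$. The hypothesis reads $M^m \vec{c_t} = M^n \vec{c_t}$. Then, by induction on $i$, $M^{m+i(n-m)}\vec{c_t} = M^m \vec{c_t}$: the base case $i=0$ is trivial, and the step uses $M^{m+(i+1)(n-m)}\vec{c_t} = M^{n-m} M^{m+i(n-m)} \vec{c_t} = M^{n-m} M^m \vec{c_t} = M^n \vec{c_t} = M^m \vec{c_t}$. Reading off the $r$-coordinate gives
\[
\minweight(w^{2\bigM(m+i(n-m))}, t \to r) \;=\; \minweight(w^{2\bigM m}, t \to r) \qquad \text{for all } i \in \bbN,\ r \in S'.
\]

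Next I would fix $i_0$ large enough so that $m + i_0(n-m) \ge M_0$, where $M_0$ is the constant from \cref{lem:pumping grounded pairs} applied (below) to $n' = |\minweight(w^{2\bigM m}, t \to r)|$. Two cases: if $(t,r) \in \GroundPairs(S',w)$, then by Item 2 of \cref{lem:pumping grounded pairs} together with \cref{def:stabilization}, $\minweight(w^{2\bigM(m+i_0(n-m))}, t \to r) = \minweight(\alpha_{S',w}, t \to r)$, which combined with the periodicity equality finishes this case for all $i$. If $(t,r) \notin \GroundPairs(S',w)$, then $\minweight(\alpha_{S',w}, t \to r) = \infty$ by \cref{def:stabilization}, and I need to show the left-hand side is also $\infty$; this is where Item 1 of \cref{lem:pumping grounded pairs} is crucial.

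The main (minor) obstacle is the non-grounded case: one has to argue that the periodic value cannot be finite. Assume for contradiction $\minweight(w^{2\bigM m}, t \to r) = v \in \bbZ$. Then periodicity forces $\minweight(w^{2\bigM(m+i(n-m))}, t \to r) = v$ for all $i$. But Item 1 of \cref{lem:pumping grounded pairs}, applied with $n' = v$, produces some $M_0$ such that $\minweight(w^{2\bigM k}, t \to r) > v$ for every $k \ge M_0$; choosing $i$ with $m + i(n-m) \ge M_0$ yields a contradiction. Hence $v = \infty$, and both sides equal $\infty$, completing the proof for every $i \in \bbN$.
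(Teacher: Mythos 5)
Your proposal is correct and takes essentially the same route as the paper's own proof: establish periodicity of the configurations by induction, then use \cref{lem:pumping grounded pairs} to force the eventually-stable value onto all iterates, with the non-grounded case disposed of by the same pumping contradiction. The only cosmetic difference is that you organize the case split by grounded vs.\ non-grounded pairs, whereas the paper splits on whether $\minweight(w^{2\bigM(m+i(n-m))},t\to r)$ is finite or $\infty$ and then \emph{derives} groundedness in the finite case; both phrasings cover the same ground.
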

\begin{proof}
    Denote $k=n-m$ and $n_i=ik+m$ for every $i\in \bbN$. In particular, $n_0=m$ and $n_1=n$. By induction it is immediate that  $\xconf(\vec{c_t},w^{2\bigM m})=\xconf(\vec{c_t},w^{2\bigM n_i})$ for every $i\in \bbN$.

    The slightly more involved part of the proof is concluding that for every $t,r\in S'$ it holds that $\minweight(w^{2\bigM (m+i (n-m))},t \to r)=\minweight(\alpha_{S',w}, t \to r)$. 
    Intuitively, this holds because in order for the configurations to repeat, either $r$ is not reachable from $t$ at all, or the weights remain bounded, meaning that $(t,r)$ is a grounded pair. We now formalize this intuition.
    
    First, if $\minweight(w^{2\bigM (m+i (n-m))},t \to r)=\infty$ for some $i\in \bbN$, then in particular $\minweight(w^{2\bigM},t\to r)=\infty$. Indeed, if there is a finite-weight run from $t$ to $r$ on 
    $w^{2\bigM}$, then there is a reflexive state within this run, which can be pumped to a finite-weight run on $w^{2\bigM (m+i (n-m))}$ (indeed, on any multiple of $w^{2\bigM}$).
    
    Next, assume $\minweight(w^{2\bigM (m+i (n-m))},t \to r)=c<\infty$, we claim that $(t,r)\in \GroundPairs(S',w)$. Indeed, assume by way of contradiction that $(t,r)\notin \GroundPairs(S',w)$, then by \cref{lem:pumping grounded pairs} there exists $M_0\in \bbN$ such that for every $m'\ge M_0$ it holds that $\minweight(w^{2\bigM m'},t \to r)>c$. Take $m'=m+M_0(n-m)$, then $m'\ge M_0$, so this inequality holds. However, by first part of the proof (the repetition of configurations), we have that 
    \[\xconf(\vec{c_t},w^{2\bigM m})=\xconf(\vec{c_t},w^{2\bigM (m+i (n-m))})=\xconf(\vec{c_t},w^{2\bigM (m+M_0(n-m))})\]
    and therefore
    \[
    c<\minweight(w^{2\bigM m'},t \to r)=\minweight(w^{2\bigM (m+i (n-m))},t \to r)=c
    \]
    which is a contradiction.

    We conclude that $(t,r)\in \GroundPairs(S',w)$, and therefore by \cref{lem:pumping grounded pairs,def:stabilization} we have $\minweight(w^{2\bigM (m+i (n-m))},t \to r)=\minweight(\alpha_{S',w},t\to r)$.
\end{proof}

We conclude this section with its main result, stating that for a degenerate cycle $(S',w)$, we can find $m$ that is not too large, such that the behavior of $w^{2\bigM |S| m}$ is identical to $\alpha_{S',w}$, but $w^{2\bigM |S| m}$ has lower cost than $\alpha_{S',w}$. Intuitively, this means that it is worthwhile  replacing $\alpha_{S',w}$ with $w^{2\bigM |S| m}$. This, in turn, implies that in order to minimize the cost of a word, degenerate cycles can be eliminated.

\begin{lemma}
    \label{lem:degenerate cycles high cost}
    Consider a degenerate stable cycle $(S',w)$. There exists $m<(2+4\bigM\cost(w))^{|S|^2}$ such that $\cost(w^{2\bigM |S| m})<\cost(\alpha_{S',w})$ and for every $t,r\in S'$ it holds that $\minweight(w^{2\bigM |S| m},t \to r)=\minweight(\alpha_{S',w},t \to r)$.
\end{lemma}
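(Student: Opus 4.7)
The plan is to chain together the two preceding propositions and then finish with a direct cost calculation. First I would invoke \cref{prop:degenerate cycle has repeating configuration}, which, since $(S',w)$ is a degenerate stable cycle, produces integers $m<n\le (2+4\bigM\cost(w))^{|S|^2}$ such that $\xconf(\vec{c_t},w^{2\bigM|S|m})=\xconf(\vec{c_t},w^{2\bigM|S|n})$ for every $t\in S'$. This $m$ is precisely the one claimed by the lemma, so the bound on $m$ is automatic.

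Next, to establish the minimal-weight equality, I would apply \cref{prop:stable cycle config repeat} to the stable cycle $(S',w)$ with the parameters $|S|m<|S|n$ (both in $\bbN$). The hypothesis of that proposition, $\xconf(\vec{c_t},w^{2\bigM\cdot|S|m})=\xconf(\vec{c_t},w^{2\bigM\cdot|S|n})$, is exactly what Step~1 delivers. Its conclusion, instantiated at $i=0$, states that $\minweight(w^{2\bigM|S|m},t\to r)=\minweight(\alpha_{S',w},t\to r)$ for every $t,r\in S'$, which is the desired weight identity.

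It remains to verify the cost inequality $\cost(w^{2\bigM|S|m})<\cost(\alpha_{S',w})$. Unfolding \cref{def:cost depth and sub cactus} gives $\cost(w^{2\bigM|S|m})=2\bigM|S|m\cdot\cost(w)$ on the left, and $\cost(\alpha_{S',w})=2^{16(\bigM|S|\cost(w))^2}$ on the right. Substituting the bound $m\le(2+4\bigM\cost(w))^{|S|^2}$ and noting that $\cost(w)\ge 1$, $\bigM\ge 1$, and $|S|\ge 1$, the left-hand side is bounded above by $(\bigM|S|\cost(w))^{O(|S|^2)}$; taking a logarithm, this is $O(|S|^2\log(\bigM|S|\cost(w)))$, whereas the logarithm of the right-hand side is $16(\bigM|S|\cost(w))^2$. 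The latter dominates as soon as $\bigM|S|\cost(w)\ge|S|$, which is immediate, so the inequality holds.

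The only real subtlety lies in the bookkeeping between exponents: \cref{prop:stable cycle config repeat} is phrased with $w^{2\bigM m}$, while \cref{prop:degenerate cycle has repeating configuration} produces equality of configurations at $w^{2\bigM|S|m}$, so one must be careful to feed the former proposition the parameters $|S|m$ and $|S|n$ rather than $m$ and $n$. The cost calculation itself is routine and exists precisely because \cref{def:cost depth and sub cactus} assigns cactus letters a doubly exponential cost, designed so that any polynomially-long explicit unfolding is strictly cheaper --- this is exactly the feature that makes the lemma true, and degenerate cycles hence never arise in cost-minimal representatives.
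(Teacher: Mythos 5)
Your proof follows exactly the paper's route: invoke \cref{prop:degenerate cycle has repeating configuration} to get the repeating configurations, feed \cref{prop:stable cycle config repeat} the parameters $|S|m<|S|n$ (precisely the $|S|$-factor bookkeeping the paper flags), read off the weight equality at $i=0$, and then close with the cost calculation. That matches the paper's argument step for step.

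One caveat in the cost arithmetic: you claim $|S|^2\log\bigl(\bigM|S|\cost(w)\bigr) < 16\bigl(\bigM|S|\cost(w)\bigr)^2$ holds ``as soon as $\bigM|S|\cost(w)\ge|S|$.'' That condition alone is not sufficient --- setting $u=\bigM|S|\cost(w)$, the bound $u\ge |S|$ only gives $|S|^2\log u\le u^2\log u$, and $u^2\log u<16u^2$ requires $\log u<16$. What actually rescues the inequality is the stronger fact $\bigM\ge|S|$ (recall $\bigM=|S|\cdot\bigN$), which yields $u\ge|S|^2$ and hence $|S|^2\log u\le u\log u<u^2\le 16u^2$. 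So the conclusion is right but the stated sufficiency condition should be $u\ge|S|^2$ (or one should explicitly invoke $\bigM\ge|S|$). This is a small accounting slip; the paper's own $(\dagger)$ chain is similarly terse and has a typo in its final equality, but the underlying ``$a<2^a$'' comparison carries the day in both presentations.
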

\begin{proof}
    Combining \cref{prop:degenerate cycle has repeating configuration,prop:stable cycle config repeat} (note that the constant $|S|$ needs to be introduced in \cref{prop:stable cycle config repeat}), there exist $m<(2+4\bigM\cost(w))^{|S|^2}$ such that for every $t,r\in S'$ it holds that $\minweight(w^{2\bigM |S| m},t\to r)=\minweight(\alpha_{S',w},t\to r)$.

    It remains to work out the cost constraint, using \cref{def:cost depth and sub cactus} (\cref{tab:cost depth and sub}).
    \[
    \begin{split}
        &\cost(w^{2\bigM |S| m})=2\bigM |S| m \cost(w) < 2\bigM |S| (2+4\bigM\cost(w))^{|S|^2} \cost(w)\\
        &<_{(\dagger)}2^{16(\bigM |S| \cost(w))^2}=\cost(\alpha_{S',w})
    \end{split}
    \]
    where the inequality $(\dagger)$ is a crude upper bound that can be seen as follows: denote $|S|=s$ and $\cost(w)=c$, then $\bigM s<\bigM^{s^2}$, and we have
    \[
    \begin{split}
        &2\bigM s (2+4\bigM c)^{s^2} c <  2\bigM s c(8\bigM c)^{s^2} < 2^{s^2}\bigM^{s^2} c^{s^2}(8\bigM c)^{s^2} =\\
        &(16 \bigM^2c^2)^{s^2}< 2^{(16 \bigM^2c^2)^{s^2}}=2^{16 (\bigM s c)^2}
    \end{split}
    \]    
\end{proof}

\subsubsection{Existence of Degenerate Cycles in Cactus Chains}
\label{sec:existence of degenerate cycle in chain}
Our goal in this section is to show that a deep-enough cactus chain must contain a degenerate stable cycle. Specifically, ``deep-enough'' is more than $|S|$. We need some definitions and result along the way.

We now define the \emph{Minimal-Reflexive Graph} of a stable cycle $(S',w)$. This directed graph captures the reachability relation between minimal-reflexive states, upon reading $w^{2\bigM}$.

\begin{definition}[$\MinGraph$]
    Consider a stable cycle $(S',w)$. Its \emph{minimal-reflexive graph} is the directed graph
    $\MinGraph(S',w)=\tup{V,E}$ with vertices
    $V=\MinRefStates(S',w^{\bigM})$ and 
    $E=\GroundPairs(S',w) \cap (\MinRefStates(S',w^{\bigM}) \times \MinRefStates(S',w^{\bigM}))$
\end{definition}
That is, we have $(s,r)\in E$ if $s,r\in \MinRefStates(S',w^{\bigM})$ and there exists $g\in \MinRefStates(S',w^{\bigM})$ such that $s\runsto{w^M}g\runsto{w^M}r$ (as per \cref{def:grounded pairs}).
We present a simple but useful property of $\MinGraph(S',w)$, namely that it is transitively closed.
\begin{proposition}
    \label{prop:mingraph is transitively closed}
    For every stable cycle $(S',w)$, the graph $\MinGraph(S',w)$ is transitively closed. That is, if $(s,r)\in E$ and $(r,t)\in E$, then $(s,t)\in E$.
\end{proposition}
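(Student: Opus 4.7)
The plan is straightforward: unfold the two edges along which transitive closure is tested, concatenate the witnessing paths, and then invoke \cref{prop:grounding states reachable with any bigM k} to conclude that the composed pair is grounded. Since both endpoints are already in $\MinRefStates(S',w^{\bigM})$ by hypothesis, groundedness is the only thing to verify in order to obtain membership in $E$.

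In more detail, suppose $(s,r) \in E$ and $(r,t) \in E$. By definition of $E$ there exist grounding states $g_1, g_2 \in \MinRefStates(S', w^{\bigM})$ with
\[
s \runsto{w^{\bigM}} g_1 \runsto{w^{\bigM}} r \quad\text{and}\quad r \runsto{w^{\bigM}} g_2 \runsto{w^{\bigM}} t.
\]
Concatenating these two runs yields a single run
\[
s \runsto{w^{\bigM}} g_1 \runsto{w^{\bigM}} r \runsto{w^{\bigM}} g_2 \runsto{w^{\bigM}} t,
\]
which can be regrouped as $s \runsto{w^{\bigM \cdot 1}} g_1 \runsto{w^{\bigM \cdot 3}} t$ with $g_1 \in \MinRefStates(S', w^{\bigM})$. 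By \cref{prop:grounding states reachable with any bigM k} (applied with $i=1$, $j=3$), this gives $(s,t) \in \GroundPairs(S',w)$. Combined with $s,t \in \MinRefStates(S',w^{\bigM})$ inherited from membership in $E$, we conclude $(s,t) \in E$, as required.

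I do not expect a real obstacle here: the only thing to watch is that \cref{prop:grounding states reachable with any bigM k} is stated for $i,j \ge 1$ rather than requiring $i=j=1$, which is exactly what we need to handle the asymmetric decomposition $1+3$. All the heavy lifting (that grounding states can be reached on any number of $w^{\bigM}$ iterations) is already packed into that proposition, so the transitivity argument reduces to a one-line concatenation.
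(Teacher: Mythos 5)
Your proof is correct and takes a slightly different route than the paper's. The paper exploits the fact that $r$ itself lies in $\MinRefStates(S',w^{\bigM})$ (being a vertex of $\MinGraph$): it manually pushes $s\runsto{w^{2\bigM}}r$ and $r\runsto{w^{2\bigM}}t$ through $\cref{prop:minimal reflexive cycles stabilize at N}$ and $\cref{prop:tethered states stabilize at N}$ to obtain $s\runsto{w^\bigM}r\runsto{w^\bigM}t$, and then uses $r$ directly as the grounding state for $(s,t)$. You instead keep $g_1$ as the grounding witness, absorb $g_1\runsto{w^\bigM}r\runsto{w^\bigM}g_2\runsto{w^\bigM}t$ into a single $w^{3\bigM}$ leg, and let $\cref{prop:grounding states reachable with any bigM k}$ (with $i=1$, $j=3$) do the stabilization bookkeeping for you. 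Both arguments ultimately rest on the same plateau/tethered stabilization machinery — $\cref{prop:grounding states reachable with any bigM k}$ is proved precisely by that machinery — but yours factors it through the already-packaged lemma, which is a legitimate and arguably cleaner choice; the paper's version shows a bit more of the inner workings and highlights that any vertex of $\MinGraph$ can itself act as a grounding state.
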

\begin{proof}
    Assume $(s,r)\in E$ and $(r,t)\in E$. Thus, there exist $g_1,g_2\in \MinRefStates(S',w^{\bigM})$ such that $s\runsto{w^\bigM}g_1\runsto{w^\bigM}r$ and $r\runsto{w^\bigM}g_2\runsto{w^\bigM}t$. In particular, $s\runsto{w^{2\bigM}}r$ and $r\runsto{w^{2\bigM}}t$. 
    Since $V=\MinRefStates(S',w^{\bigM})$, then $s,r\in \MinRefStates(S',w^{\bigM})$. Since the minimal states stabilize at $w^{\bigM}$ (\cref{prop:minimal reflexive cycles stabilize at N}) we also have $s,r\in \MinRefStates(S',w^{2\bigM})$. 
    Thus, we have $(s,r),(r,t)\in \TethPairs(S',w^{2\bigM})$ (see \cref{def:tethered}). 
    Since tethered states also stabilize at $\bigM$ (\cref{prop:tethered states stabilize at N}), we have in particular that $(s,r),(r,t)\in \TethPairs(S',w^{\bigM})$, so $s\runsto{w^{\bigM}}r\runsto{w^\bigM}t$, and since $r\in \MinRefStates(S',w^{\bigM})$, we conclude that $(s,t)\in \GroundPairs(S',w)$, so $(s,t)\in E$.
\end{proof}
We now decompose $\MinGraph$ into (maximal) \emph{Strongly-Connected Components (SCC)}. Recall that an SCC is $C\subseteq V$ such that for every $s,r\in C$ there is a path from $s$ to $r$ and from $r$ to $s$, and that $C$ is maximal with this property (with respect to containment). By \cref{prop:mingraph is transitively closed}, in the case of transitive graphs we can just define an SCC to be a maximal clique (with respect to containment). 

We now turn to the main technical result of this section, relating the number of SCCs of consecutive levels in a cactus chain, where the inner cactus is non-degenerate.
\begin{lemma}[\lightbulbicon SCCs of Nested $\MinGraph$]
\label{lem:nondegenerate mingraph more SCC}
    Consider a stable cycle $(S',w)$ with $w=u\alpha_{B,x}v$. Then the number of SCCs of $\MinGraph(B,x)$ is at least the number of SCCs of $\MinGraph(S',w)$. Moreover, if $(S',w)$ is a non-degenerate stable cycle, then this number is strictly greater. 
\end{lemma}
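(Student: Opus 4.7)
The natural approach is to construct an injection $\Phi$ from SCCs of $\MinGraph(S',w)$ into SCCs of $\MinGraph(B,x)$, and in the non-degenerate case to exhibit an SCC of $\MinGraph(B,x)$ outside the image of $\Phi$. Since $\MinGraph$ is transitively closed (\cref{prop:mingraph is transitively closed}), its SCCs are precisely the maximal cliques, which simplifies the bookkeeping.

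\textbf{Construction of $\Phi$.} For each $s\in \MinRefStates(S',w^\bigM)$, fix a $0$-weight cycle $\rho_s: s\runsto{w^\bigM}s$ (guaranteed by \cref{def:stable cycle}). Since $w=u\alpha_{B,x}v$, the cycle $\rho_s$ reads $\alpha_{B,x}$ at least $\bigM$ times. Each such reading is a finite-weight transition of $\augA_\infty$ and hence, by \cref{def:stabilization}, occurs between a grounded pair of $(B,x)$ and carries a grounding state $g\in \MinRefStates(B,x^\bigM)$. Define $\Phi(s)$ to be the SCC of $\MinGraph(B,x)$ containing any such $g$; to see that all such $g$ belong to a common SCC, observe that the portion of $\rho_s$ between two successive $\alpha_{B,x}$ transitions has zero total weight (because stable cycles have no negative cycles and $\weight(\rho_s)=0$), and this portion witnesses the grounded-pair edges needed to connect the corresponding grounding states in $\MinGraph(B,x)$. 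Lifting $\Phi$ to SCCs: if $s,s'$ lie in the same SCC of $\MinGraph(S',w)$ there is a $0$-weight path between them obtained by concatenating minimal cycles, and the same argument as above puts their grounding states in one SCC of $\MinGraph(B,x)$.

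\textbf{Injectivity.} Suppose $\Phi(C_1)=\Phi(C_2)$ for two SCCs of $\MinGraph(S',w)$, witnessed by grounding states $g_1,g_2$ of $s_1\in C_1, s_2\in C_2$ in the same inner SCC. Using the inner $0$-weight path $g_1\to g_2$ in $\MinGraph(B,x)$, I would splice it into an outer run that factors through $s_1,g_1,\ldots,g_2,s_2$, using the tethered/plateau stability (\cref{prop:tethered states stabilize at N,prop:plateau states stabilize at N,lem:pumping grounded pairs}) to ensure the overall weight remains $0$. This produces an outer $0$-weight path $s_1\runsto{w^{k\bigM}}s_2$, whence $(s_1,s_2)\in E(\MinGraph(S',w))$ by \cref{prop:grounding states reachable with any bigM k}; by symmetry $C_1=C_2$.

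\textbf{Strict inequality in the non-degenerate case.} Take a witness of non-degeneracy: $s\in S'$, $t\in \RefStates(S',w^{2\bigM})$ with $s\runsto{w^{2\bigM k}}t$ yet $(s,t)\notin \GroundPairs(S',w)$. Consider a minimal-weight run $\rho:t\runsto{w^{2\bigM}}t$, which by \cref{prop:reflexive cycles stabilize at N} exists but must have strictly positive weight (else $t\in\MinRefStates(S',w^\bigM)$ and then $(s,t)$ would be grounded by \cref{prop:grounding states reachable with any bigM k}). The run $\rho$ still passes through each $\alpha_{B,x}$ as a grounded-pair transition, producing an SCC $D$ of $\MinGraph(B,x)$ collecting the associated grounding states. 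I would argue that $D$ is not in the image of $\Phi$: if $D=\Phi(C)$ for some SCC $C$, I could splice the $0$-weight inner paths inside $D$ into the outer run of $\rho$ to manufacture a $0$-weight outer path through $t$, which by \cref{prop:grounding states reachable with any bigM k} would force $t\in\MinRefStates(S',w^\bigM)$ and then $(s,t)\in\GroundPairs(S',w)$, contradicting the choice of witness.

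\textbf{Main obstacle.} The delicate step is the last one: extracting a fresh SCC from the non-degenerate witness. Producing it requires carefully tracking how the positive-weight outer cycle at $t$ interacts with inner grounding structure, and ruling out that the inner SCC it produces coincides with one already used by a $0$-weight outer cycle. Formally this is a splicing argument combining \cref{lem:pumping grounded pairs} with the stabilization of $\TethPairs$ and $\PlatPairs$ (\cref{prop:tethered states stabilize at N,prop:plateau states stabilize at N}), and the bookkeeping of baseline weights along $u$ and $v$ is where the proof must be made technically precise.
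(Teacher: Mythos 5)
Your overall strategy — injecting outer SCCs into inner SCCs via grounding states and exhibiting one extra inner SCC from a non-degenerate witness — is the same as the paper's. There is, however, a concrete gap in the construction of $\Phi$. You claim that all grounding states arising from the several $\alpha_{B,x}$ readings along a single $0$-weight cycle $\rho_s$ lie in a common SCC of $\MinGraph(B,x)$, justifying this by asserting that each segment of $\rho_s$ between consecutive $\alpha_{B,x}$ transitions has zero weight. That assertion is false: $\weight(\rho_s)=0$ together with the absence of negative cycles only constrains cyclic subruns, not arbitrary non-cyclic segments, which can have cancelling positive and negative weights. And even zero-weight segments would not help: between two consecutive $\alpha_{B,x}$ readings the run reads $vu$, not a power of $x$, so it cannot witness an edge of $\MinGraph(B,x)$, whose edges require runs of the form $g_1\runsto{x^\bigM}g\runsto{x^\bigM}g_2$ with $g\in\MinRefStates(B,x^\bigM)$.

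Fortunately that step is unnecessary. The paper fixes a single representative $s_i$ per outer SCC $C_i$, decomposes its minimal cycle as $s_i\runsto{u}t_i\runsto{\alpha_{B,x}}r_i\runsto{v(u\alpha_{B,x}v)^{2\bigM-1}}s_i$, and records only the grounding state $g_i$ of that first grounded pair $(t_i,r_i)$. It then proves directly that an edge $(g_j,g_i)$ of $\MinGraph(B,x)$ forces an edge $(s_j,s_i)$ of $\MinGraph(S',w)$ (by composing the relevant runs and invoking stabilization of plateau and tethered pairs, \cref{prop:tethered states stabilize at N,prop:plateau states stabilize at N}, to compress back to single $\bigM$-powers); injectivity then follows from transitive closure of $\MinGraph$. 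This is precisely the splicing you gesture at. Your treatment of the non-degenerate case lines up with the paper's: extract a grounding state $g'$ from the non-degenerate witness and show, again by splicing, that if $g'$ shared an SCC with some $g_i$ it would force $(s',r)\in\GroundPairs(S',w)$, contradicting non-degeneracy. Drop the well-definedness claim, fix one grounding state per SCC, and the outline goes through.
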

\begin{proof}
    Denote the SCCs of $\MinGraph(S',w)$ by $C_1,...,C_m$. 
    For each SCC $C_i$ choose a representative $s_i\in C_i$. For every $1\le i\le m$, since $s_i\in \MinRefStates(S',w^{\bigM})$, then in particular there is a minimal reflexive run $s_i\runsto{\alpha_{S',w}}s_i$. By \cref{def:stabilization}, this means that $s_i\runsto{w^{2\bigM}}s_i$. That is, $s_i\runsto{(u\alpha_{B,x}v)^{2\bigM}}s_i$. We split the run as follows:
    \[ s_i\runsto{u}t_i\runsto{\alpha_{B,x}}r_i\runsto{v(u\alpha_{B,x}v)^{2\bigM-1}}s_i
    \]
    Again by \cref{def:stabilization}, we have $(t_i,r_i)\in \GroundPairs(B,x)$, so there exists $g_i\in \MinRefStates(B,x^\bigM)$ such that $t_i\runsto{x^\bigM}g_i\runsto{x^\bigM}r_i$. We associate with each $s_i$ the corresponding state $g_i$. Note that $g_i$ is a vertex in $\MinGraph(B,x)$. 

    We claim that for every $j\neq i$, if there is an edge $(g_j,g_i)$ in $\MinGraph(B,x)$, then there is also an edge $(s_j,s_i)$ in $\MinGraph(S',w)$. Indeed, 
    By the assumption we have that $(g_j,g_i)\in \GroundPairs(B,x)$.
    We therefore have the following:
    \[\begin{split}        &s_j\runsto{u}t_j\runsto{x^\bigM}g_j\runsto{x^\bigM}r_j\runsto{v(u\alpha_{B,x}v)^{2\bigM-1}}s_j\\        &s_i\runsto{u}t_i\runsto{x^\bigM}g_i\runsto{x^\bigM}r_i\runsto{v(u\alpha_{B,x}v)^{2\bigM-1}}s_i\\        
        &g_j\runsto{x^\bigM}h\runsto{x^\bigM}g_i
        \end{split}
    \]
    where $h$ is the grounding state of $(g_j,g_i)$. In particular, we have $g_j\runsto{x^{2\bigM}}g_i$, and therefore $t_j\runsto{x^{\bigM}}g_j\runsto{x^{2\bigM}}g_i$, so $t_j\runsto{x^{3\bigM}}g_i$. 
    Since $g_i\in \MinRefStates(B,x^\bigM)= \MinRefStates(B,x^{3\bigM})$ (by \cref{prop:minimal reflexive cycles stabilize at N}), it follows that $t_j\in \PlatPairs(B,x^{3\bigM})$ (\cref{def:plateau}). By the stabilization of Plateau  pairs (\cref{prop:plateau states stabilize at N}) we have that $t_j\runsto{x^{\bigM}}g_i$.
    But then we can compose the run 
    \[ s_j\runsto{u}t_j\runsto{x^\bigM}g_i\runsto{x^\bigM}r_i\runsto{v(u\alpha_{B,x}v)^{2\bigM-1}}s_i
    \]
    So $s_j\runsto{w^{2\bigM}}s_i$. 
    Again by the stabilization of plateau pairs (\cref{prop:plateau states stabilize at N}) we have that $s_i\in \PlatPairs(S',w^{2\bigM})$ and therefore $s_j\runsto{w^\bigM}s_i$. But $s_i\in  \MinRefStates(S',w^\bigM)$, and it can serve as a grounding state. We therefore have $s_j\runsto{w^\bigM}s_i\runsto{w^\bigM}s_i$, so $(s_j,s_i)\in \GroundPairs(S',w)$, and is therefore an edge in $\MinGraph(S',w)$.
    We can now prove the desired properties.

    \paragraph{All the $g_i$ are in different SCCs of $\MinGraph(B,x)$}
    Assume by way of contradiction that there are $i\neq j$ such that $g_i$ and $g_j$ are in the same SCC $D$ of $\MinGraph(B,x)$.
    By \cref{prop:mingraph is transitively closed} we have that $(g_j,g_i)$ and $(g_i,g_j)$ are edges in $\MinGraph(B,x)$, and by the above we get that $(s_j,s_i)$ and $(s_i,s_j)$ are edges in $\MinGraph(S',w)$, which contradicts the fact that $s_j,s_i$ are in difference SCCs.
    
    We conclude that each $g_i$ lies in a separate SCC, so the number of SCCs in $\MinGraph(B,x)$ is at least that of $\MinGraph(S',w)$.

    \paragraph{If $(S',w)$ is non-degenerate, there is some $g'\in \MinRefStates(B,x^\bigM)$ that is not in any of the SCCs of the $g_i$.}
    We now proceed to the case where $(S',w)$ is non-degenerate, and we exhibit at least one additional SCC in $\MinGraph(B,x)$. 

    Since $(S',w)$ is non-degenerate (\cref{def:degenerate stable cycle}), it has a non-degenerate state $s'\in S'$. That is, there is some (non-minimal) reflexive state $r\in \RefStates(S',w^{2\bigM})$ with $s'\runsto{w^{2\bigM k}}r$ for some $k\in \bbN$, such that $(s',r)\notin \GroundPairs(S',w)$. Similarly to the previous case, we expand the run above as
    \[
    s'\runsto{u}t'\runsto{\alpha_{B,x}}r'\runsto{vw^{2\bigM k-1}}r
    \]
    By \cref{def:stabilization} we have $(t',r')\in \GroundPairs(B,x)$, so there exists a state $g'\in \MinRefStates(B,x^\bigM)$ such that
    $t'\runsto{x^{\bigM}}g'\runsto{x^{\bigM}}r'$.
    We therefore have
    \[ s'\runsto{u}t'\runsto{x^{\bigM}}g'\runsto{x^{\bigM}}r'\runsto{vw^{2\bigM k-1}}r
    \]
    We claim that $g'$ is not in any SCC of $g_i$ for $1\le i\le m$. Assume otherwise by way of contradiction, then by \cref{prop:mingraph is transitively closed} there exists some $g_i$ such that $(g',g_i)$ and $(g_i,g')$ are edges in $\MinGraph(B,x)$. Thus, $(g_i,g'), (g',g_i)\in \GroundPairs(B,x)$, and in particular $g_i\runsto{x^{2\bigM}}g'$ and $g'\runsto{x^{2\bigM}}g_i$. We now have the following runs:
    \[
    \begin{split} &s'\runsto{u}t'\runsto{x^{\bigM}}g'\runsto{x^{\bigM}}r'\runsto{vw^{2\bigM k-1}}r\\ &s_i\runsto{u}t_i\runsto{x^{\bigM}}g_i\runsto{x^{\bigM}}r_i\runsto{vw^{2\bigM k-1}}s_i\\
    &g_i\runsto{x^{2\bigM}}g'\\
    &g'\runsto{x^{2\bigM}}g_i
    \end{split}
    \]
    We can therefore obtain the runs:
    \[
      \begin{split} &s'\runsto{u}t'\runsto{x^{\bigM}}g'\runsto{x^{2\bigM}}g_i\runsto{x^{\bigM}}r_i\runsto{vw^{2\bigM k-1}}s_i\\ &s_i\runsto{u}t_i\runsto{x^{\bigM}}g_i\runsto{x^{2\bigM}}g'\runsto{x^{\bigM}}r'\runsto{vw^{2\bigM k-1}}r\\
    \end{split}
    \]
    That is, $s'\runsto{w^{2\bigM}}s_i\runsto{w^{2\bigM}}r$. Since $s_i\in \MinRefStates(S',w^\bigM)$, then by \cref{prop:grounding states reachable with any bigM k} we have $(s',r)\in \GroundPairs(S',w)$, in contradiction to the non-degeneracy assumption.

    We conclude that $g'$ is not in any of the SCCs of the $g_i$, and therefore $\MinGraph(B,x)$ has strictly more SCCs than $\MinGraph(S',w)$.
\end{proof}

Since $\MinGraph(S',w)$ can have at most $|S|$ SCCs, an immediate corollary of \cref{lem:nondegenerate mingraph more SCC} is that a cactus chain of depth more than $|S|$ must have a degenerate cycle.
\begin{corollary}
    \label{cor:deep cactus chain has degenerate cycle}
    Consider a cactus chain $\alpha_{S'_1,w_1},\ldots,\alpha_{S'_n,w_n}$. If $n>|S|$ then there exists $1\le i\le n$ such that $\alpha_{S'_i,w_i}$ is a degenerate stable cycle.
\end{corollary}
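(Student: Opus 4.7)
The plan is to iteratively apply \cref{lem:nondegenerate mingraph more SCC} along the chain and obtain a counting contradiction via the SCCs of the minimal-reflexive graphs. For each $1 \le i \le n$, let $N_i$ denote the number of SCCs of $\MinGraph(S'_i,w_i)$. By \cref{def:cactus chain}, for every $1 \le i < n$ we have $w_i = u_i\,\alpha_{S'_{i+1},w_{i+1}}\,v_i$, so \cref{lem:nondegenerate mingraph more SCC} applies with the outer stable cycle $(S',w) = (S'_i,w_i)$ and the inner stable cycle $(B,x) = (S'_{i+1},w_{i+1})$, yielding $N_i \le N_{i+1}$, with strict inequality whenever $(S'_i,w_i)$ is non-degenerate.

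Next, I would assume toward a contradiction that every $\alpha_{S'_i,w_i}$ for $1 \le i \le n$ is non-degenerate. Applying the strict version of the lemma to each $i \in \{1,\ldots,n-1\}$ gives
\[
N_1 < N_2 < \cdots < N_n,
\]
so the $N_i$ are $n$ pairwise distinct positive integers. On the other hand, the vertex set of $\MinGraph(S'_i,w_i)$ is $\MinRefStates(S'_i,w_i^{\bigM}) \subseteq S'_i \subseteq S$, so each SCC contains at least one state from $S$ and the number of SCCs satisfies $N_i \le |S|$. Combining these bounds gives $n \le |S|$, contradicting the hypothesis $n > |S|$.

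Therefore at least one of the $\alpha_{S'_i,w_i}$ (in fact, one with $1 \le i \le n-1$, since $N_n$ only needs an upper bound) must be a degenerate stable cycle, proving the corollary. The entire technical substance has already been packed into \cref{lem:nondegenerate mingraph more SCC}; here the only task is the pigeonhole counting of SCCs along the chain, so there is no serious obstacle beyond stating the argument cleanly.
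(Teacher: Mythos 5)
Your proof is correct and takes essentially the same route the paper intends: applying \cref{lem:nondegenerate mingraph more SCC} along the chain, observing that non-degeneracy forces a strict increase in the SCC count of $\MinGraph$, and concluding by the pigeonhole bound $N_i \le |S|$ (the paper's one-line justification "Since $\MinGraph(S',w)$ can have at most $|S|$ SCCs" is exactly your counting step made explicit). One minor point you leave implicit is that each $N_i \ge 1$, which holds because a stable cycle always has its baseline state in $\MinRefStates(S'_i, w_i^{\bigM})$, so the graph is non-empty; this is needed to run the pigeonhole argument but is immediate.
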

Finally, by \cref{lem:degenerate cycles high cost}, if a cactus chain $\alpha_{S'_1,w_1},\ldots,\alpha_{S'_n,w_n}$ contains a degenerate stable cycle $\alpha_{S'_i,w_i}$, then we can replace this degenerate stable cycle by $w^{2\bigM |S| m}$ for some $m$ while maintaining equivalent transitions, but lowering the cost. Thus, minimal-cost cactus chains do not contain degenerate cycles. In particular, by \cref{cor:deep cactus chain has degenerate cycle}, their depth is bounded by $|S|$. We conclude this section with this important result.
\begin{corollary}[\keyicon Minimal-Cost Cactus Chains are Not Deep]
    \label{cor:minimal cost cactus chain has bounded depth}
    Consider a cactus chain $\alpha_{S'_1,w_1},\ldots,\alpha_{S'_n,w_n}$ of minimal cost, i.e., \[\cost(\alpha_{S'_1,w_1})=\min\{\cost(\gamma)\in \Gamma_\infty^0\mid \forall s,r\in S, \minweight(\alpha_{S'_1,w_1},s\to r)=\minweight(\gamma,s\to r)\}\]
    Then $n\le |S|$. In particular, $\depth(\alpha_{S'_1,w_1})\le |S|$.
\end{corollary}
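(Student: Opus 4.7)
The plan is to argue the contrapositive by combining the two main results of this section. Suppose for contradiction that the cactus chain $\alpha_{S'_1,w_1},\ldots,\alpha_{S'_n,w_n}$ satisfies $n>|S|$. The goal is to produce a letter $\gamma\in\Gamma_\infty$ whose transitions agree with those of $\alpha_{S'_1,w_1}$ on every pair $(s,r)\in S\times S$ but whose cost is strictly smaller, contradicting minimality. First, I would invoke \cref{cor:deep cactus chain has degenerate cycle} to obtain an index $1\le i\le n$ such that $(S'_i,w_i)$ is a degenerate stable cycle. Then I would apply \cref{lem:degenerate cycles high cost} to this cycle, producing a number $m$ such that $w_i^{2\bigM|S|m}$ has $\cost(w_i^{2\bigM|S|m})<\cost(\alpha_{S'_i,w_i})$ and $\minweight(w_i^{2\bigM|S|m},t\to r)=\minweight(\alpha_{S'_i,w_i},t\to r)$ for every $t,r\in S'_i$.

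Next, I would substitute $w_i^{2\bigM|S|m}$ for the single letter $\alpha_{S'_i,w_i}$ occurring inside $w_{i-1}$, obtaining a new word $w'_{i-1}$. The central verification is that $(S'_{i-1},w'_{i-1})$ is still a stable cycle, that it has the same $\RefStates$, $\MinRefStates$ and $\GroundPairs$ as $(S'_{i-1},w_{i-1})$, and that $\alpha_{S'_{i-1},w'_{i-1}}$ induces exactly the same transitions as $\alpha_{S'_{i-1},w_{i-1}}$. The idea is that every finite-weight run of the outer cycle that reaches the position of $\alpha_{S'_i,w_i}$ must be in a state of $S'_i$ (else the cactus transition is $\infty$), and on those states Lemma~\ref{lem:degenerate cycles high cost} certifies that the minimum weight on $w_i^{2\bigM|S|m}$ coincides with the cactus transition; hence each minimal-weight $w_{i-1}$-run has an equal-weight counterpart on $w'_{i-1}$ and conversely, giving coincidence of the outer grounded-pair weights. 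Because $\cost$ is additive over concatenation and the inner substitution strictly lowers cost, $\cost(w'_{i-1})<\cost(w_{i-1})$, and then the strict monotonicity of $\cost(\alpha_{S',x})$ in $\cost(x)$ (visible in \cref{tab:cost depth and sub}) gives $\cost(\alpha_{S'_{i-1},w'_{i-1}})<\cost(\alpha_{S'_{i-1},w_{i-1}})$.

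Iterating this substitution upward along the chain, I would replace $\alpha_{S'_{j},w_j}$ by the already-built cheaper equivalent inside $w_{j-1}$ for each $j=i,i-1,\ldots,2$, at each step preserving transitions and strictly reducing cost. At the top of the chain this yields a letter $\alpha_{S'_1,w'_1}\in\Gamma_\infty$ with $\minweight(\alpha_{S'_1,w'_1},s\to r)=\minweight(\alpha_{S'_1,w_1},s\to r)$ for all $s,r\in S$ and $\cost(\alpha_{S'_1,w'_1})<\cost(\alpha_{S'_1,w_1})$, contradicting the minimality hypothesis. Consequently $n\le|S|$. The "in particular" clause is then immediate: unwinding \cref{def:cost depth and sub cactus}, the depth of $\alpha_{S'_1,w_1}$ is exactly the length of the longest cactus chain starting at $\alpha_{S'_1,w_1}$ (up to the constant offset from the base letters), and that length is bounded by $|S|$ by the above.

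The step I expect to be the main obstacle is the transition-preservation argument in the substitution phase: one must ensure that substituting $w_i^{2\bigM|S|m}$ for $\alpha_{S'_i,w_i}$ inside $w_{i-1}$ does not introduce any new finite-weight runs between pairs in $S'_{i-1}$ which would strictly \emph{decrease} (rather than merely preserve) the minimum weights used to define $\alpha_{S'_{i-1},w_{i-1}}$. The argument must leverage the fact that finite-weight cactus transitions fire only from grounded states of $S'_i$, so any run that would have been killed by $\alpha_{S'_i,w_i}$ cannot contribute to the outer minimum; one also has to check this uniformly across the $2\bigM$ repetitions used to define the outer cactus letter and verify that the grounding state of each outer grounded pair is genuinely preserved rather than merely bounded above.
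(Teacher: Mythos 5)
Your proposal follows the paper's argument exactly: invoke \cref{cor:deep cactus chain has degenerate cycle} to locate a degenerate cycle in any chain of length more than $|S|$, apply \cref{lem:degenerate cycles high cost} to replace that cactus letter by a cheap unfolding with matching transitions, and propagate the cost reduction up the chain to contradict minimality. The concern you flag about transition preservation during the upward substitution is the right thing to worry about, and the resolution you sketch — that every finite-weight run through the cactus position must enter and leave in $S'_i$ (since the baseline cycle forces the baseline/reachability components to match, and reflexivity of $(S'_i,w_i)$ keeps the exit inside $S'_i$), so \cref{lem:degenerate cycles high cost}'s agreement on $S'_i\times S'_i$ is exactly what is needed — is the argument the paper implicitly relies on but does not spell out either.
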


\subsection{Baseline Shift}
\label{sec: baseline shift}
In \cref{sec:shift POV} we establish the correspondence between the runs of $\augA$ and of $\cA$. In particular, this shows that regardless of the choice of baseline run, the run structure of $\augA$ (i.e., the differences between runs) is fixed.

In this section, we show a similar result in the presence of cactus and rebase letters. Since these letters are not present in $\cA$, we show this directly on $\augA_\infty^\infty$. More precisely, we show that for every word $w\in \augTrans_\infty^\infty$ (without jump letters), and for every run $\rho_0$ on $w$, we can construct a new word $w'$ with the same ``inner letters'' as $w$, and establish a bijection between runs on $w$ and runs on $w'$, such that differences are maintained and such that the run corresponding to $\rho_0$ is the baseline run (and therefore has weight $0$ throughout).

The intuition behind the construction is simple: given the run $\rho_0$, we only need to change the letters of $w$ so that their  baseline component follows the state component dictated by $\rho_0$. Then, $\rho_0$ (or rather, the corresponding run that uses the same state components) becomes a baseline run, and all other runs shift their weights accordingly. In order to change the letters in this manner, cactus letters may become rebase letters (and some rebase letters become cactus letters).

The formal construction mainly tests the limits of the English and Greek alphabets, as well as the reader's patience, but is otherwise simple, as follows.

Consider a word $w=\tau_1\cdots\tau_k$ where for every $1\le i\le k$ we have that 
\[\tau_i\in \{(p_{i-1},\sigma_i,c_i,p_i),\alpha_{S'_i,w_i},\beta_{S'_i,w_i,s_i\to r_i}\}\]
i.e., each letter is either in $\augTrans$, a cactus, or a rebase.
Let $\rho_0=t_1,\ldots,t_k$ be a run of $\augA_{\infty}^\infty$ on $w$, and denote 
$t_i=((q_{i-1},p_{i-1},T_{i-1}),\tau_i,d_i,(q_i,p_i,T_i))$ (where the $d_i$ are the weights).

We define the word $w'=\gamma_1\cdots \gamma_k$, where we split the definition of $\gamma_i$ to cases according to whether $\tau_i$ is a standard letter or a cactus/rebase.
\begin{itemize}
    \item If $\tau_i=(p_{i-1},\sigma_i,c_i,p_i)$, then $d_i=c'_i-c_i$ where $c'_i$ is the weight of the transition $(q_{i-1},\sigma_i,c'_i,q_i)\in \augTrans$ (by \cref{sec:augmented construction}).
    In this case, we define $\gamma_i=(q_{i-1},\sigma_i,c'_i,q_i)$.
    \item If $\tau_i=\alpha_{S'_i,w_i}$ or $\tau_i=\beta_{S'_i,w_i,r_i\to s_i}$, denote $S'_i=\{(q,p,T)\mid q\in T\}$ for some $p,T$. We define $\gamma_i=\beta_{S'_i,w'_i,(q_{i-1},p,T)\to (q_i,p,T)}$. In case $q_{i-1}=q_i=p$, then we identify $\gamma_i=\alpha_{S'_i,w_i}$ as per \cref{rmk:cactus as rebase}.

    In order to show that $\gamma_i$ is a well-defined rebase letter, we need to show that $((q_{i-1},p,T),(q_i,p,T))\in \GroundPairs(S'_i,w'_i)$. 
    Again by \cref{rmk:cactus as rebase} we can uniformly consider $\tau_i$ as $\tau_i=\beta_{S'_i,w_i,r_i\to s_i}$ (identifying $\alpha_{S'_i,w_i}$ with a rebase letter on the baseline state). Denote $r_i=(q',p,T)$ and $s_i=(q'',p,T)$. 
    
    Since $t_i=((q_{i-1},p_{i-1},T_{i-1}),\tau_i,d_i,(q_i,p_i,T_i))$ then by \cref{def:rebase} we have $p_{i-1}=q'$ and $p_i=q''$, as well as $T_{i-1}=T_i=T$. So we can write 
    $t_i=((q_{i-1},q',T),\tau_i,d_i,(q_i,q'',T))$. Still by \cref{def:rebase} (and since $d_i<\infty$) we have $((q_{i-1},p,T),\alpha_{S'_i,w_i},e_i,(q_i,p,T))$ for some $e_i$, and in particular $((q_{i-1},p,T),(q_i,p,T))\in \GroundPairs(S'_i,w'_i)$, otherwise there would not be a transition as per \cref{def:stabilization}.
\end{itemize}
Having defined $w'$, it now remains to show a bijection between runs on $w$ and on $w'$.
\begin{proposition}
    \label{prop:baseline shift run bijection}
    In the notations above, for every sequence of states $f_1,\ldots,f_k\in Q$ (states of $\cA$), the following are equivalent.
    \begin{itemize}
        \item $\eta=x_1,\ldots,x_k$ with $x_i=((f_{i-1},p_{i-1},T_{i-1}),\tau_i,a_i,(f_i,p_i,T_i))$ is a run of $\augA_\infty^\infty$ on $w$.
        \item $\mu=y_1,\ldots,y_k$ with $y_i=((f_{i-1},q_{i-1},T_{i-1}),\gamma_i,a_i-d_i,(f_i,q_i,T_i))$ is a run of $\augA_\infty^\infty$ on $w'$ (recall that $d_i$ are the weights in $\rho_0$).
    \end{itemize}
\end{proposition}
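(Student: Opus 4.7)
The plan is to prove both directions by verifying the equivalence transition-by-transition. Since $\eta$ and $\mu$ have the same length $k$ and share the inner-state sequence $f_0,\ldots,f_k$ and reachable components $T_0,\ldots,T_k$ (the latter determined deterministically by the underlying $\Sigma$-letters, which coincide in $w$ and $w'$), it suffices to show that for each $1\le i\le k$, the transition $x_i$ exists in $\augA_\infty^\infty$ on letter $\tau_i$ if and only if $y_i$ exists on letter $\gamma_i$, and that the weight shift $a_i\mapsto a_i-d_i$ is the correct one dictated by $\augTrans_\infty^\infty$.

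When $\tau_i=(p_{i-1},\sigma_i,c_i,p_i)$ is a standard letter, the claim is essentially the content of \cref{sec:augmented construction}: there is a unique weight $c''_i$ with $(f_{i-1},\sigma_i,c''_i,f_i)\in \Delta$, so $a_i=c''_i-c_i$, and the required weight for $y_i$ on $\gamma_i=(q_{i-1},\sigma_i,c'_i,q_i)$ is $c''_i-c'_i=a_i-(c'_i-c_i)=a_i-d_i$, as claimed. The existence of $x_i$ is equivalent to the existence of $(f_{i-1},\sigma_i,c''_i,f_i)\in \Delta$, which is also exactly what the existence of $y_i$ requires, so the biconditional holds at step $i$.

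The substantive case is when $\tau_i$ is a cactus or rebase letter. Using \cref{rmk:cactus as rebase} we can treat both uniformly as $\tau_i=\beta_{S'_i,w_i,r_i\to s_i}$ with $r_i=(q',p,T)$, $s_i=(q'',p,T)$ (identifying $\alpha_{S'_i,w_i}$ with the rebase letter on its baseline). Unwinding \cref{def:rebase}, the transition $x_i$ exists with weight $a_i=e-c^\star$, where $e$ is the weight of $((f_{i-1},p,T),\alpha_{S'_i,w_i},e,(f_i,p,T))\in \augTrans_\infty$ and $c^\star$ is the $\alpha$-weight of the pair $((q',p,T),(q'',p,T))$ that defines the baseline shift of $\tau_i$. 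Similarly, the weight $d_i$ of $t_i$ in $\rho_0$ equals $c^{\dagger}-c^\star$, where $c^{\dagger}$ is the $\alpha$-weight of $((q_{i-1},p,T),(q_i,p,T))$. Applying \cref{def:rebase} once more to the freshly constructed $\gamma_i=\beta_{S'_i,w_i,(q_{i-1},p,T)\to (q_i,p,T)}$, the transition $y_i$ between $(f_{i-1},q_{i-1},T)$ and $(f_i,q_i,T)$ is valid with weight $e-c^{\dagger}$; and indeed $a_i-d_i=(e-c^\star)-(c^{\dagger}-c^\star)=e-c^{\dagger}$. The existence of the underlying $\alpha$-transition carrying weight $e$ is the same hypothesis on either side, yielding the biconditional at step $i$.

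The main obstacle is simply the bookkeeping of the nested weight subtractions inherent to the rebase construction: one has to track carefully which $\alpha$-transition is subtracted as baseline in each of $\tau_i$ and $\gamma_i$ in order to see that the two subtractions cancel to leave $e-c^{\dagger}$ on both sides. Once this cancellation is verified, the proof reduces to routine case analysis, using the observation (established during the construction preceding the proposition) that $((q_{i-1},p,T),(q_i,p,T))\in \GroundPairs(S'_i,w_i)$, which guarantees that $\gamma_i$ is a well-defined cactus or rebase letter and hence that the transitions of $\mu$ really live in $\augTrans_\infty^\infty$.
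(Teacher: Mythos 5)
Your proof is correct and follows essentially the same route as the paper's: you handle standard letters via the weight-normalization formula of $\augA$, and you collapse the cactus/rebase cases using \cref{rmk:cactus as rebase}, then unwind \cref{def:rebase} twice to show that the baseline $\alpha$-weights subtract out to leave exactly $a_i-d_i$ on both sides. The bookkeeping, including the use of the grounded-pair fact to ensure $\gamma_i$ is well-defined, matches the paper up to renaming ($e,c^\star,c^\dagger$ for the paper's $b_i,b'_i,e_i$).
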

\begin{proof}
    Consider a sequence of states $f_1,\ldots,f_k\in Q$, and $\eta$ and $\mu$ as in the statement. We show that the transitions $x_i$ have finite weight if and only if the transition $y_i$ have finite weight, thus concluding the claim.
    For every $1\le i\le k$ the following hold:
    \begin{itemize}
        \item If $\tau_i=(p_{i-1},\sigma_i,c_i,p_i)$, then the transition $x_i$ is in $\augTrans_\infty^\infty$ if and only if 
        $a_i=c''_i-c_i$ where $c''_i$ is the weight of the transition $(f_{i-1},\sigma_i,c''_i,f_i)\in \augTrans$.  Since $\gamma_i=(q_{i-1},\sigma_i,c'_i,q_i)$ (and in particular has the same letter $\sigma_i$), this happens if and only if  
        \[((f_{i-1},q_{i-1},T_{i-1}),\gamma_i,c''_i-c'_i,(f_i,q_i,T_i))\in \augTrans_{\infty}^\infty\] 
        Recall that by construction we have $d_i=c'_i-c_i$, and therefore 
        \[c''_i-c'_i=c''_i-c_i+c_i-c'_i=c''_i-c_i-(c'_i-c_i)=a_i-d_i\]
        and so the transition above is $y_i$.
        
        \item If $\tau_i=\beta_{S'_i,w_i,r_i\to s_i}$ (we again use \cref{rmk:cactus as rebase} to consider cactus and rebase jointly), denote $S'_i=\{(q,p,T)\mid q\in T\}$ and $r_i=(q',p,T)$ and $s_i=(q'',p,T)$. Since $(r_i,s_i)\in \GroundPairs(S'_i,w_i)$, we have $(r_i,\alpha_{S'_i,w_i},b'_i,s_i)\in \Delta_\infty^\infty$ for some $b'_i<\infty$. 
        As in the construction of $w'$, by \cref{def:rebase} 
        we have that $x_i\in \augTrans_\infty^\infty$ if and only if 
        $p_{i-1}=q'$, $p_i=q''$ and $T_{i-1}=T_i=T$, and $a_i=b_i-b'_i$ where $b_i$ is such that $((f_{i-1},p,T),\alpha_{S'_i,w_i},b_i,(f_i,p,T))\in \augTrans_\infty^\infty$ (and in particular these states are a grounded pair).

        As we show in the construction of $w'$, we have $((q_{i-1},p,T),\alpha_{S'_i,w_i},e_i,(q_i,p,T))$ for some $e_i$.
        By \cref{def:rebase}, the above holds if and only if 
        \[((f_{i-1},q_{i-1},T_{i-1}),\beta_{S'_i,w'_i,(q_{i-1},p,T)\to (q_i,p,T)},b_i-e_i,(f_i,q_i,T_i))\in \augTrans_\infty^\infty\]
        Note that $d_i=e_i-b'_i$, and therefore $a_i-d_i=b_i-b'_i-(e_i-b'_i)=b_i-e_i$.
        Therefore, this transition is exactly $y_i$.
    \end{itemize}
\end{proof}
We wrap up this section by introducing the baseline-shift operator, and providing a concise formalism for using it.
For a word $w$ and $\rho_0$ as above, we denote the constructed word $w'$ by 
$\baseshift{w}{\rho_0}$.
We overload this notation to runs: for a run $\eta=x_1,\ldots,x_k$ with $x_i=((f_{i-1},p_{i-1},T_{i-1}),\gamma_i,a_i,(f_i,p_i,T_i))$ we denote by $\baseshift{\eta}{\rho_0}$ the run $\mu=y_1,\ldots,y_k$ with $y_i=((f_{i-1},q_{i-1},T_{i-1}),\gamma_i,a_i-d_i,(f_i,q_i,T_i))$. Note that a run already encapsulates the information about the word, and therefore we do not need to specify $w$ in the latter notation.

\cref{prop:baseline shift run bijection} readily implies our two main results of this section, namely that the baseline shift of $\rho_0$ by itself is a seamless run, and that baseline shifts maintain the gaps between runs.
\begin{corollary}
    \label{cor:baseline shift to seamless run}
    Consider a word $w$ and a run $\rho_0$ on $w$, then $\baseshift{\rho_0}{\rho_0}$ is a seamless run (and in particular gains $0$ weight in every transition).
\end{corollary}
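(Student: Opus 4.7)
The plan is to derive the corollary as a direct specialization of \cref{prop:baseline shift run bijection}. Writing $\rho_0 = t_1, \ldots, t_k$ with $t_i = ((q_{i-1}, p_{i-1}, T_{i-1}), \tau_i, d_i, (q_i, p_i, T_i))$ as in the paragraph preceding that proposition, I would instantiate the equivalence with the parameter sequence $(f_0, \ldots, f_k) := (q_0, \ldots, q_k)$ and weights $a_i := d_i$. These choices make the first bullet of the equivalence match $\rho_0$ itself on $w$, so the second bullet supplies a run $\mu = y_1, \ldots, y_k$ on $w' = \baseshift{w}{\rho_0}$ whose transitions are $y_i = ((q_{i-1}, q_{i-1}, T_{i-1}), \gamma_i, d_i - d_i, (q_i, q_i, T_i))$. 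By definition of the baseline-shifted run this $\mu$ is exactly $\baseshift{\rho_0}{\rho_0}$; each transition carries weight $0$, and each state is a baseline state $(q_i, q_i, T_i)$, so $\mu$ is in particular a baseline run in the sense of \cref{def:baseline states}. This already settles the parenthetical clause.

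For seamlessness, I plan to use the reverse direction of \cref{prop:baseline shift run bijection}. Fix a prefix length $j$ and consider any run $\eta$ of $\augA_\infty^\infty$ on $\gamma_1 \cdots \gamma_j$ from $\augInitState = (q_0, q_0, \{q_0\})$ to the baseline state $(q_j, q_j, T_j)$. Because the baseline and reachable-set coordinates evolve deterministically, and because $\augInitState$ is the same for $w$ and $w'$, such an $\eta$ must visit states of the form $(f'_i, q_i, T_i)$ with $f'_j = q_j$. Applying the equivalence to the sequence $(q_0, f'_1, \ldots, f'_{j-1}, q_j)$ returns a corresponding run $\eta'$ on the $w$-prefix $\tau_1 \cdots \tau_j$ ending at $(q_j, p_j, T_j)$, with $\weight(\eta') = \weight(\eta) + \sum_{i=1}^j d_i$. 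In other words, the bijection shifts every competing weight by the same constant $\sum_{i=1}^j d_i$, so the $\arg\min$ among runs reaching inner state $q_j$ is preserved. Since $\rho_0|_{[1,j]}$ is itself one such pullback of weight $\sum_{i=1}^j d_i$ and attains the minimum among runs on $\tau_1 \cdots \tau_j$ ending with inner state $q_j$, every competitor satisfies $\weight(\eta) \ge 0$, so $\mu|_{[1,j]}$ of weight $0$ is minimum and seamlessness holds.

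The main obstacle is the seamlessness clause rather than the $0$-weight clause, which reduces to pure substitution into \cref{prop:baseline shift run bijection}. The delicate step is the appeal to $\rho_0|_{[1,j]}$ being the minimum-weight run on $\tau_1 \cdots \tau_j$ ending with inner state $q_j$; once granted, the uniform bijection-shift propagates this minimality from $w$ to $w'$, with the new minimum weight exactly $0$ matching $\mu$'s prefix weight. Care must also be taken that the reverse application of \cref{prop:baseline shift run bijection} is properly aligned with the deterministic second and third coordinates, so that competitors to the baseline state in $w'$ really are in bijection with pullback runs to the inner state $q_j$ in $w$.
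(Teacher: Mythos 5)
Your reasoning correctly specializes \cref{prop:baseline shift run bijection} to obtain that $\baseshift{\rho_0}{\rho_0}$ visits only baseline states and carries weight $0$ on every transition; that part is sound and is exactly the intended derivation. The gap is in the seamlessness clause, and you in fact flag it yourself without resolving it: the pivotal step is the claim that $\rho_0|_{[1,j]}$ ``attains the minimum among runs on $\tau_1\cdots\tau_j$ ending with inner state $q_j$,'' which is precisely the assertion that $\rho_0$ is a seamless run. But the statement of \cref{cor:baseline shift to seamless run} hypothesizes only that $\rho_0$ is \emph{some} run on $w$, not a seamless one, so this minimality is not available to you.

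Without it the conclusion genuinely fails, and your own bijection argument shows why. By \cref{prop:baseline shift run bijection} the runs on $\gamma_1\cdots\gamma_j$ ending at $(q_j,q_j,T_j)$ are in weight-preserving bijection (up to the uniform offset $\sum_{i\le j} d_i$) with the runs on $\tau_1\cdots\tau_j$ ending at $(q_j,p_j,T_j)$. Hence the minimum weight to $(q_j,q_j,T_j)$ in $w'$ equals $\minweight(\tau_1\cdots\tau_j,\, s_0\to(q_j,p_j,T_j))-\sum_{i\le j}d_i$, which is $0$ if and only if $\weight(\rho_0|_{[1,j]})=\sum_{i\le j}d_i$ is the minimum in $w$, i.e., if and only if $\rho_0$ is seamless on that prefix. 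If $\rho_0$ is not seamless at some prefix, there is a competitor run reaching $(q_j,p_j,T_j)$ with strictly smaller weight, whose image on $w'$ reaches $(q_j,q_j,T_j)$ with strictly negative weight, undercutting the baseline run. So the corollary is actually false for arbitrary $\rho_0$. In fairness, the paper itself elides this hypothesis (every place where \cref{cor:baseline shift to seamless run} is invoked, the run being shifted is explicitly a seamless one — see e.g.\ \cref{prop:potential over baseline shift} and the uses of $\rhoss$ in \cref{lem: decomposed seq with cover sparse no ghosts implies unbounded potential}), so your proof is correct provided you add ``seamless'' to the hypothesis; as written, you should state that assumption explicitly rather than treat it as given.
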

\begin{corollary}
    \label{cor:baseline shift maintains gaps}
    Consider a word $w$ and a run $\rho_0$ on $w$. Let $\rho_1,\rho_2$ be runs on $w$, and consider the shifted runs $\mu_1=\baseshift{\rho_1}{\rho_0}$ and $\mu_2=\baseshift{\rho_2}{\rho_0}$ then for every $0\le i\le |w|$ it holds that $\weight(\rho_1[1,i])-\weight(\rho_2[1,i])=\weight(\mu_1[1,i])-\weight(\mu_2[1,i])$.
\end{corollary}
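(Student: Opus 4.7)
The plan is to derive this as an immediate consequence of \cref{prop:baseline shift run bijection}, without any further structural analysis. The key observation is that the baseline-shift operator modifies the weight of the $m$-th transition of any run on $w$ by subtracting exactly the same quantity $d_m$ (namely the weight of the $m$-th transition of $\rho_0$), regardless of which run is being shifted.

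Concretely, I would write $\rho_j = x_1^{(j)}, \ldots, x_k^{(j)}$ for $j \in \{1,2\}$, with transition weights $a_1^{(j)}, \ldots, a_k^{(j)}$. By \cref{prop:baseline shift run bijection}, the shifted run $\mu_j = \baseshift{\rho_j}{\rho_0}$ has $m$-th transition of weight $a_m^{(j)} - d_m$, where $d_m$ is the weight of the $m$-th transition of $\rho_0$. Summing telescopically over the first $i$ transitions:
\[
\weight(\mu_1[1,i]) - \weight(\mu_2[1,i]) = \sum_{m=1}^{i}(a_m^{(1)} - d_m) - \sum_{m=1}^{i}(a_m^{(2)} - d_m) = \sum_{m=1}^{i} a_m^{(1)} - \sum_{m=1}^{i} a_m^{(2)},
\]
where the $d_m$ terms cancel since they do not depend on $j$. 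The right-hand side is precisely $\weight(\rho_1[1,i]) - \weight(\rho_2[1,i])$, yielding the claim.

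There is essentially no obstacle here: the proposition has already done the work of establishing the well-definedness of the shift and the explicit form of the new transition weights. The only small care needed is to handle the boundary case $i = 0$ (trivially both sides are $0$) and to observe that the computation applies uniformly whether each $\tau_m$ is a standard, cactus, or rebase letter, since the $a_m^{(j)} - d_m$ formula in \cref{prop:baseline shift run bijection} is uniform across the cases.
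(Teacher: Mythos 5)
Your proof is correct and matches the paper's (unstated) argument: the paper declares that \cref{prop:baseline shift run bijection} "readily implies" this corollary, and the telescoping cancellation of the $d_m$ terms you write out is exactly the intended reasoning. The only minor point of attribution is that the formula $a_m^{(j)} - d_m$ for the shifted weights is actually given by the definition of the $\baseshift{\cdot}{\cdot}$ operator (stated immediately after the proposition), with \cref{prop:baseline shift run bijection} establishing that the shifted sequence is indeed a valid run; this does not change the substance of your argument.
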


\subsection{Cactus Flattening}
\label{sec: cactus unfolding}
Intuitively, cactus (and rebase) letters encapsulate many repetitions of the same word, where rebase also allows a change in the baseline. Naturally, we wish to consider the corresponding words that are obtained by repeatedly ``unfolding'' cactus and rebase letters, in a process we refer to as \emph{flattening}. 
We make extensive use of flattening throughout the paper.

Recall that for a word $u$ the value $\maxeff{u}$ is an upper-bound on the absolute value of the change in weight that a run can incur upon reading $u$.

\begin{definition}[\keyicon Cactus-Unfolding]
    \label{def:unfolding function}
    Consider a cactus letter $\alpha_{S',w}$, a word $u=x \cdot \alpha_{S',w} \cdot y$ and $F\in \bbN$.

    By \cref{lem:pumping grounded pairs} there exists $M_0\in \bbN$ such that for every $k\ge M_0$ and every $s,r\in S'$ with $(s,r)\notin\GroundPairs(S',w)$ it holds that $\minweight(w^{k\cdot 2\bigM},s\to r)>F$.
    
    The \emph{Unfolding of $u$ with constant $F$} is then $\unfold(x,\alpha_{S',w},y \wr F)=xw^{2\bigM M_0}y$, and any word $xw^{2\bigM m}y$ with $m\ge M_0$ is \emph{an unfolding of $u$ with constant $F$.}
\end{definition}
\begin{remark}[``The'' v.s. ``an'' unfolding]
\label{rmk:increasing repetitions in unfolding}
We remark that the distinction between ``the'' unfolding and ``an'' unfolding merely means that we can add repetitions of $w$ (in multiples of $2\bigM$), and still be referred to as an unfolding.

In the following, we mostly refer to $M_0$ as the number of repetitions, but we sometimes need to increase this number, in which case we remark that all the results still hold.
\end{remark}


Intuitively, while $\alpha_{S',w}$ makes transitions $s\runsto{\alpha_{S',w}}r$ attain value $\infty$ if $(s,r)\notin \GroundPairs(S',w)$, the unfolding repeats $w$ enough times such that these runs may exist, but attain a value as high as we like (namely $F$). 
We always choose $F$ large enough such that any continuation of such runs is irrelevant for the value of the word. Therefore, the values of runs after unfolding are the same as those before unfolding. We capture this as follows.

\begin{proposition}
    \label{prop:unfolding cactus maintains seamlesss gaps}
    Consider a cactus letter $\alpha_{S',w}$, a word $x \cdot \alpha_{S',w} \cdot y$ and $F> 2\maxeff{x \cdot \alpha_{S',w} \cdot y}$.
    
    Let $\rho$ be a seamless run on $x \cdot \alpha_{S',w} \cdot y$, then there exists a \emph{seamless} run $\mu$ on $\unfold(x,\alpha_{S',w},y \wr F)=xw^{2\bigM M_0}y$ such that 
    $\mu[1,|x|]=\rho[1,|x|]$, and for every $i\ge 0$, if $z=x\cdot\alpha_{S',w}\cdot y[1,i]$ is a prefix of $x \cdot \alpha_{S',w} \cdot y$ and the corresponding prefix of $\unfold(x,\alpha_{S',w},y \wr F)$ is $z'=x\cdot w^{2\bigM M_0}\cdot  y[1,i]$, then it holds that 
    $\weight(\rho(z))=\weight(\mu(z'))$.
\end{proposition}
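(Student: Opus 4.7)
The plan is to construct $\mu$ explicitly by splitting $\rho$ at the cactus transition and substituting a concrete run on $w^{2\bigM M_0}$ that carries the same weight, then verify the weight-matching and seamlessness conditions separately.

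First, decompose $\rho = \rho_x \cdot \tau \cdot \rho_y$, where $\tau = (s,\alpha_{S',w},c,r)$ is the single transition on the cactus letter. Since $\rho$ is seamless, $\tau$ has finite weight, so by \cref{def:stabilization} the pair $(s,r)\in \GroundPairs(S',w)$ with some grounding state $g\in \MinRefStates(S',w^{\bigM})$, and $c=\minweight(w^{2\bigM},s\runsto{w^\bigM}g\runsto{w^\bigM}r)$. By \cref{lem:pumping grounded pairs}(2), this same value $c$ equals $\minweight(w^{2\bigM M_0},s\to r)$. I pick a \emph{specific} minimum-weight run $\pi: s\runsto{w^\bigM}g\runsto{w^{(2M_0-2)\bigM}}g\runsto{w^\bigM}r$ that cycles on $g$ via the $0$-weight minimal reflexive cycle (\cref{def:stable cycle}), and define $\mu := \rho_x\cdot\pi\cdot\rho_y$. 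The condition $\mu[1,|x|]=\rho[1,|x|]$ is immediate, and for every $z=x\alpha_{S',w}y[1,i]$ with corresponding $z'=xw^{2\bigM M_0}y[1,i]$, $\weight(\mu(z'))=\weight(\rho_x)+\weight(\pi)+\weight(\rho_y[1,i])=\weight(\rho_x)+c+\weight(\rho_y[1,i])=\weight(\rho(z))$.

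It remains to verify that $\mu$ is seamless at every prefix of $xw^{2\bigM M_0}y$. On prefixes of $x$, $\mu$ agrees with $\rho$, and seamlessness of $\rho$ transfers directly. On a prefix ending inside $\rho_y$ (or immediately after $w^{2\bigM M_0}$), I will argue the two-sided bound: any alternative run $\mu'$ on $z'$ reaching the same state as $\mu$ either (i) traverses only grounded pairs inside the $w^{2\bigM M_0}$ segment, in which case its weight corresponds to a legitimate run on $x\alpha_{S',w}y[1,i]$ (via the grounded-pair transitions of \cref{def:stabilization} together with \cref{lem:pumping grounded pairs}(2)), and is thus bounded below by the seamless weight of $\rho$ at $z$; or (ii) makes some non-grounded crossing inside $w^{2\bigM M_0}$, which by the definition of $M_0$ in \cref{def:unfolding function} incurs weight strictly larger than $F>2\maxeff{x\alpha_{S',w}y}$, and is therefore too large to be minimal (as $|\weight(\mu)|\le\maxeff{x\alpha_{S',w}y}$ by \cref{prop:cost higher than max weight}/direct bound).

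The main obstacle lies in the seamlessness check at prefixes ending \emph{inside} the unfolded segment $w^{2\bigM M_0}$. Here, I will exploit two features of the construction of $\pi$: the $\rho_x$-prefix already sits at the seamless weight of $s$, and the $0$-weight cycles on $g$ mean that the weight along $\pi$ is as low as any stable-cycle trajectory can achieve from $s$. Any alternative run on the prefix reaching $\mu$'s current state must either (a) remain inside the stable cycle $(S',w)$, in which case the absence of negative cycles (\cref{def:stable cycle}) together with the Tethered/Plateau stabilization results (\cref{prop:tethered states stabilize at N,prop:plateau states stabilize at N}) prevent it from beating $\pi$'s weight, or (b) leave the stable cycle and incur, over sufficiently many copies of $w$, weight exceeding $F$, contradicting minimality for the same reason as in case (ii) above. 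Combining the three segments yields a seamless $\mu$ satisfying the stated weight equality, completing the proof.
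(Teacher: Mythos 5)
Your construction of $\mu=\rho_x\cdot\pi\cdot\rho_y$ with $\pi:s\runsto{w^\bigM}g\runsto{w^{(2M_0-2)\bigM}}g\runsto{w^\bigM}r$ and the verification of the weight-matching condition coincide with the paper's. The way you set up the seamlessness check, however, contains a genuine gap.

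You split the seamlessness verification into three sub-cases by position of the prefix, and for prefixes ending \emph{inside} the unfolded segment $w^{2\bigM M_0}$ you argue locally: alternative runs that "remain inside the stable cycle" cannot beat $\pi$ because of the absence of negative cycles plus the Tethered/Plateau stabilization results. This does not follow. Those results control the weight of full $w^\bigM$-to-$w^\bigM$ segments between reflexive/grounding states and the asymptotic behavior of grounded pairs; they say nothing about intermediate prefixes of $\eta$, $\zeta$, or $\xi$. In a stable cycle one can have strictly negative simple paths, and there is no a priori reason a run from a different state $p_1\in\booltrans(s_0,x)$ cannot reach $\pi$'s current intermediate state strictly more cheaply without ever visiting a grounding state. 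Your sub-case (b) ("leaves the stable cycle") is also not well-posed, since $\booltrans(S',w)\subseteq S'$ forces all runs to stay in $S'$; what actually matters is not membership in $S'$ but whether the pair of states at positions $|x|$ and $|x|+2\bigM M_0|w|$ is grounded.

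The fix, which is what the paper does, is to avoid any positional case split on the breaking prefix: if $\mu$ is not seamless at \emph{some} prefix $u$, concatenate the cheaper run $\mu'(u)$ with $\mu$'s suffix after $u$ to obtain a run $\mu''$ on the \emph{entire} word $xw^{2\bigM M_0}y$ ending at the same state $q$ with $\weight(\mu'')<\weight(\mu)=\weight(\rho)$. Only then inspect the two states $p_1=\mu''(x)$ and $p_2=\mu''(xw^{2\bigM M_0})$: if $(p_1,p_2)\in\GroundPairs(S',w)$ you fold the unfolded infix back into a cactus transition (\cref{def:stabilization}, \cref{lem:pumping grounded pairs}) and obtain a run on $x\alpha_{S',w}y$ cheaper than $\rho$, contradicting seamlessness of $\rho$; if not grounded, the choice of $M_0$ forces the infix to cost $>F$, and the bound $|\weight(\rho)|\le\maxeff{x\alpha_{S',w}y}$ gives a contradiction. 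This global contradiction argument handles all prefixes — including those inside $w^{2\bigM M_0}$ — uniformly, and is exactly the reasoning you correctly sketch in your sub-case (b); you should discard the separate sub-case (c) and apply the reduction-to-full-word step before any case analysis.
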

\begin{proof}
    Let $u=x \cdot \alpha_{S',w} \cdot y$ and $v=\unfold(x,\alpha_{S',w},y \wr F)=xw^{2\bigM M_0}y$, and let $\rho$ be a seamless run on $u$.
    Denote $\rho:s_0\runsto{x}s\runsto{\alpha_{S',w}}r\runsto{y}q$

    Since $s\runsto{\alpha_{S',w}}r$ then by the definition of transitions on cactus letters (\cref{def:stabilization}) we have that $(s,r)\in \GroundPairs(S',w)$. Let $g$ be the corresponding grounding state, i.e., $g\in \MinRefStates(S',w^\bigM)$
    and $s\runsto{w^\bigM}g\runsto{w^\bigM}r$. 
    In particular, $\minweight(g\runsto{w^\bigM}g)=0$. Take $\eta:s\runsto{w^\bigM}g$ and $\xi:g\runsto{w^\bigM}r$ to be runs of minimal weights (from/to their respective states), and $\zeta: g\runsto{w^\bigM}g$ with weight $0$.
    We construct $\mu$ as the following concatenation of runs:
    \[
    \mu:\rho(x)\eta\zeta^{(2M_0-2)}\xi\rho(y)
    \]
    and therefore
    \[
    \mu:s_0\runsto{x}s\runsto{w^\bigM}g\runsto{w^{(2M_0-2)\bigM}}g\runsto{w^\bigM}r\runsto{y}q
    \]
    First, note that $\mu$ and $\rho$ are identical on their $x$ prefix. That is,  
    $\mu[1,|x|]=\rho[1,|x|]$, as required.

    Next, let $i\ge 0$ and denote $z=x\cdot\alpha_{S',w}\cdot y[1,i]$ and $z'=x\cdot w^{2\bigM M_0}\cdot  y[1,i]$.
    Recall (from \cref{def:stabilization}) that $\weight(s\runsto{\alpha_{S',w}}r)=\minweight(s\runsto{w^\bigM}g\runsto{w^\bigM} r)$. Since $\zeta: g\runsto{w^\bigM}g$ has weight $0$, and $\xi$ and $\eta$ are of minimal weight, it follows that $\weight(s\runsto{\alpha_{S',w}}r)=\weight(\eta\zeta^{(2M_0-2)}\xi)$. Thus, upon reaching $r$, both $\rho$ and $\mu$ have the same weight.

    Since the suffix read from $r$ in both $z$ and $z'$ is $y[1,i]$ and the runs are identical at this suffix (namely $\rho(y)$), it follows that     
    $\weight(\rho(z))=\weight(\mu(z'))$.

    Finally, it is left to prove that $\mu$ is seamless (and recall that $\rho$ is seamless). 
    Assume by way of contradiction that $\mu$ is not seamless, and let $\mu'$ be a run on some prefix $u$ of $\unfold(x,\alpha_{S',w},y \wr F)$ such that both $\mu(u):s_0\runsto{u} s_1$ and $\mu'(u):s_0\runsto{u}s_1$ but 
    $\weight(\mu'(u))<\weight(\mu(u))$.
    That is, $\mu'$ arrives at $s_1$ with smaller weight than $\mu$, thus ``breaking'' $\mu$.
    We can assume without loss of generality that $\mu$ and $\mu'$ are identical from $s_1$ (i.e., after reading $u$). Indeed, by concatenating the suffix of $\mu$ to the prefix $\mu'(u)$, the obtained run remains strictly below $\mu$ from $s_1$ and on, and also ``breaks'' $\mu$ at $s_1$. 
    We can further assume that $s_1$ is the last state in both runs, i.e., $q$. In particular $\weight(\mu')<\weight(\mu)$.

    Denote by $p_1,p_2$ the states visited by $\mu'$ after reading $x$  and $xw^{2M_0\bigM}$ respectively (we implicitly think of $p_1$ and $p_2$ as states at specific indices, to avoid cumbersome notations). See~\cref{fig:unfolding}.
    
    
      \begin{figure}[ht]
        \centering
        \begin{subfigure}{0.9\textwidth}
        \includegraphics[width=0.9\linewidth]{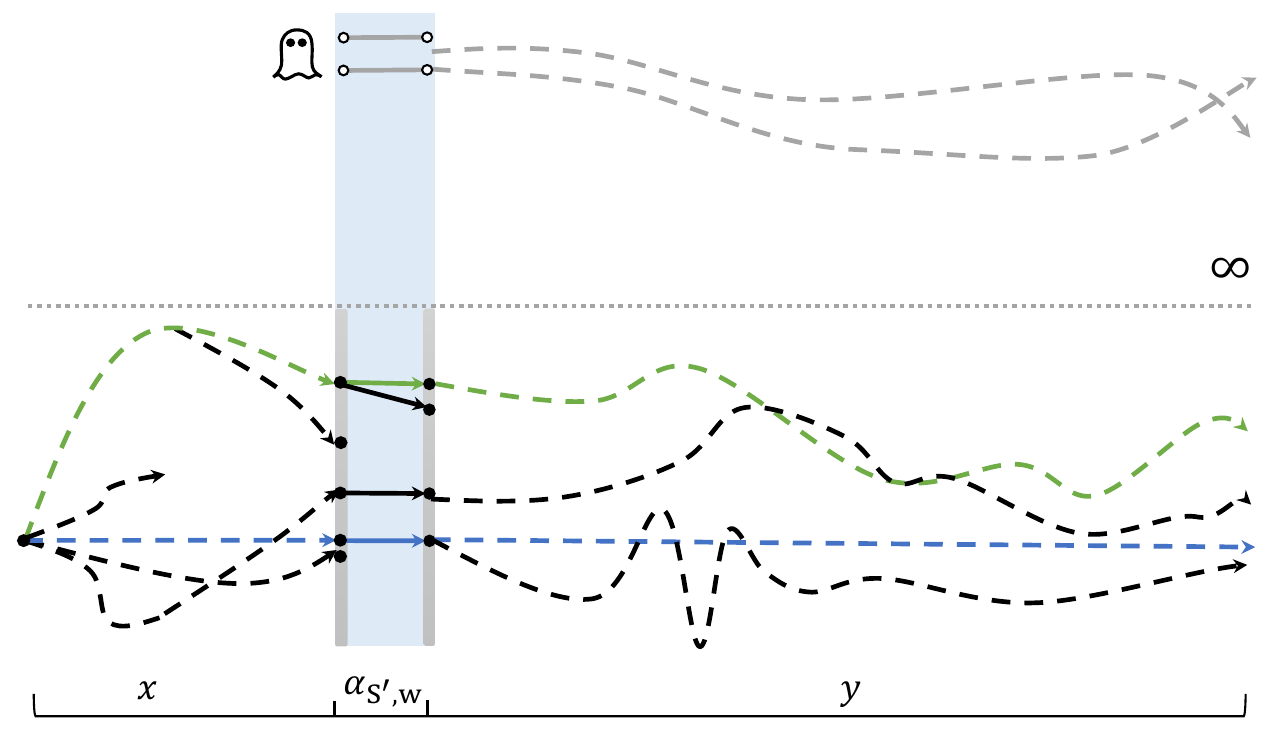}
        \caption{The cactus letter $\alpha_{S',w}$ induces ghost states and runs. The baseline run is in blue, and the green run is seamless.}
        \label{fig:unfolding before}
        \end{subfigure}

        \begin{subfigure}{0.9\textwidth}
        \includegraphics[width=0.9\linewidth]{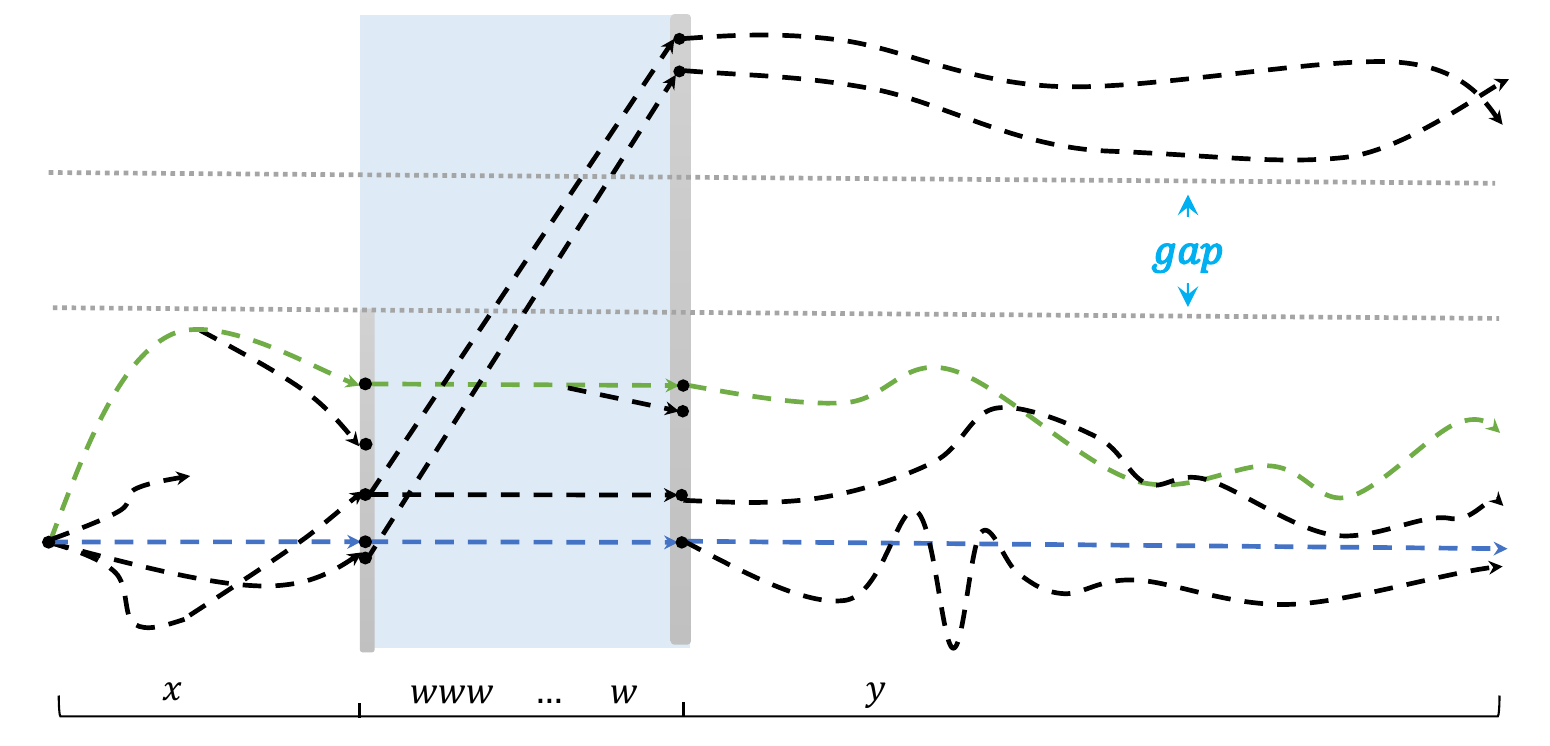}
        \caption{After unfolding, $\alpha_{S',w}$ is replaced by repetitions of $w$. The ghost runs now become concrete, and are far above the original runs. The updated green run is still seamless.}
        \label{fig:unfolding after}
        \end{subfigure}
        \caption{Cactus letter unfolding, before (\cref{fig:unfolding before}) and after (\cref{fig:unfolding after}).}
        \label{fig:unfolding}
    \end{figure}
    
    There are now two possible cases: either $(p_1,p_2)\in \GroundPairs(S',w)$, or not. Note that $p_1,p_2\in S'$ (since $p_1\in \booltrans(s_0,x)\subseteq S'$ and $p_2\in \booltrans(S',w)\subseteq S'$ by \cref{def:stable cycle}).

    If $(p_1,p_2)\in \GroundPairs(S',w)$, then $p_1\runsto{\alpha_{S',w}}p_2$. Moreover, the weight of this transition on $\alpha_{S',w}$ is $\minweight(p_1\runsto{w^{2M_0\bigM}}p_2)$ (by \cref{def:stabilization,lem:pumping grounded pairs}).
    Thus, the weight gained by $\mu'$  over the infix is at least $\minweight(p_1\runsto{w^{2M_0\bigM}}p_2)$.
    We can now consider the run $\rho':s_0\runsto{x}p_1\runsto{\alpha_{S',w}}p_2\runsto{y}q$ obtained by replacing the infix of $\mu'$ on $w^{2M_0\bigM}$ with the transition on $\alpha_{S',w}$. It is immediate that 
    $\weight(\rho')=\weight(\mu')<\weight(\mu)=\weight(\rho)$, but this is a contradiction to the fact that $\rho$ is seamless.

    The most involved case is when $(p_1,p_2)\notin \GroundPairs(S',w)$, where we heavily rely on the choice of $M_0$ (the above does not actually use $M_0$, if examined carefully).

    Note that in this case, there is no transition on $\alpha_{S',w}$ from $p_1$ to $p_2$, and we cannot readily obtain a run $\rho'$ as above.
    However, by \cref{lem:pumping grounded pairs} and the choice of $M_0$ in \cref{def:unfolding function}  we have that $\minweight(w^{2M_0\bigM},p_1\runsto{}p_2)>F$.
    In particular, the weight gained by $\mu'$ in the infix between $p_1$ and $p_2$ is more than $2\maxeff{x\alpha_{S',w}y}$, and therefore we have
    \[
    \weight(\mu')>2\maxeff{x\alpha_{S',w}y}-\maxeff{x}-\maxeff{y}\ge \maxeff{x\alpha_{S',w}y}
    \]
    However, $\weight(\mu)=\weight(\rho)\le \maxeff{x\alpha_{S',w}y}$ and therefore $\weight(\mu')>\weight(\mu)$, in contradiction to our assumption.

    Thus, we conclude that $\mu$ is seamless.
\end{proof}

Naively, we would want a converse of \cref{prop:unfolding cactus maintains seamlesss gaps}, i.e., that a seamless run on an unfolded word induces an equivalent seamless run on the folded word. Unfortunately, this is generally not true. Indeed there may be seamless runs that do not use grounded pairs, and therefore attain a huge weight on the unfolded word, but are simply not reachable in the folded word. Therefore, the converse needs to be stated carefully, as follows.

\begin{proposition}
    \label{prop:un-unfolding maintains cheap seamless runs}
    Consider a cactus letter $\alpha_{S',w}$, a word $x\cdot \alpha_{S',w}\cdot y$ and $F>2\maxeff{x\cdot \alpha_{S',w}\cdot y}$. Let $\mu$ be a seamless run on $\unfold(x,\alpha_{S',w},y \wr F)=xw^{2\bigM M_0}y$, then the following hold.
    \begin{enumerate}
        \item If $\weight(\mu)\le \maxeff{x\alpha_{S',w}y}$ then there exists a seamless run $\rho$ on $x\alpha_{S',w}y$ such that 
    $\mu[1,|x|]=\rho[1,|x|]$, and for every $i\ge 0$, if $z=x\cdot\alpha_{S',w}\cdot y[1,i]$ is a prefix of $x \cdot \alpha_{S',w} \cdot y$ and the corresponding prefix of $\unfold(x,\alpha_{S',w},y \wr F)$ is $z'=x\cdot w^{2\bigM M_0}\cdot  y[1,i]$, then it holds that 
    $\weight(\rho(z))=\weight(\mu(z'))$.
    \item If $\weight(\mu)> \maxeff{x\alpha_{S',w}y}$, then $\weight(\mu)>F-\maxeff{x\cdot \alpha_{S',w}\cdot y}$.
    \end{enumerate}
    
\end{proposition}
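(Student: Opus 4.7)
The plan is to case-analyze on the states $p_1,p_2\in S$ that $\mu$ visits immediately before and after the unfolded infix $w^{2\bigM M_0}$. Since the baseline and reachable-set components of states evolve deterministically along $x$ (and must match those of $S'$ for $\alpha_{S',w}$ to be readable at this boundary in the folded word), one obtains $p_1\in S'$; similarly $p_2\in S'$ whenever $\mu$ stays within the grounded structure. The two relevant cases are whether $(p_1,p_2)\in\GroundPairs(S',w)$ or not, and these will turn out to correspond exactly to Parts 1 and 2.

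For the grounded case I would define $\rho$ as $\mu[1,|x|]$, followed by the cactus transition $p_1\runsto{\alpha_{S',w}}p_2$, followed by the $y$-suffix of $\mu$ starting at $p_2$. By \cref{def:stabilization} and \cref{lem:pumping grounded pairs}, the weight of this cactus transition equals $\minweight(w^{2\bigM M_0},p_1\to p_2)$. Combined with seamlessness of $\mu$ (which forces $\weight(\mu(xw^{2\bigM M_0}))=\minweight(x,s_0\to p_1)+\minweight(w^{2\bigM M_0},p_1\to p_2)$, since $\mu$ realizes the overall minimum and passes through $p_1$), this gives $\weight(\rho(x\alpha_{S',w}))=\weight(\mu(xw^{2\bigM M_0}))$, and the equalities on all later prefixes follow from matching at $p_2$ and identical suffix runs. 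To show that $\rho$ is seamless I would argue by contradiction: any competing run on the folded word strictly outperforming $\rho$ at some prefix would lift, by replacing each occurrence of $\alpha_{S',w}$ between a grounded pair $(p_1',p_2')$ by a minimum-weight $w^{2\bigM M_0}$ run from $p_1'$ to $p_2'$ (using the equality from \cref{lem:pumping grounded pairs} to guarantee weight-preservation), to a competing run on the unfolded word of the same weight, contradicting seamlessness of $\mu$. Since any run on $x\alpha_{S',w}y$ has $|\weight(\rho)|\le\maxeff{x\alpha_{S',w}y}$, this case automatically yields $\weight(\mu)=\weight(\rho)\le\maxeff{x\alpha_{S',w}y}$, so Part 1's hypothesis selects precisely this case.

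For the non-grounded case, the definition of $M_0$ in \cref{def:unfolding function} together with \cref{lem:pumping grounded pairs}(1) gives $\minweight(w^{2\bigM M_0},p_1\to p_2)>F$, so $\mu$'s contribution on the infix exceeds $F$. Bounding the contributions on $x$ and $y$ by $\maxeff{x}$ and $\maxeff{y}$ in absolute value yields $\weight(\mu)>F-\maxeff{x}-\maxeff{y}\ge F-\maxeff{x\alpha_{S',w}y}$, which is Part 2. Since $F>2\maxeff{x\alpha_{S',w}y}$, this lower bound already exceeds $\maxeff{x\alpha_{S',w}y}$, so the two cases partition cleanly according to whether $\weight(\mu)\le\maxeff{x\alpha_{S',w}y}$.

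The main obstacle will be establishing seamlessness of $\rho$ in the grounded case, since \cref{prop:unfolding cactus maintains seamlesss gaps} only provides the opposite direction and must be replayed in reverse here; the lifting argument above is the crucial technical step. A secondary subtlety concerns ghost states and the generally different reachable-set components after $\alpha_{S',w}$ versus after $w^{2\bigM M_0}$, but this is defused by the observation that any infix passing through a non-grounded pair already contributes more than $F\gg\maxeff{x\alpha_{S',w}y}$, so all weight-bounded runs relevant to Part 1 may be assumed to use only grounded transitions, on which the reachable-set components align.
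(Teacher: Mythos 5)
Your proposal is correct and essentially mirrors the paper's proof: both split on whether the boundary pair $(p_1,p_2)$ at the ends of the $w^{2\bigM M_0}$ infix is grounded (observing this aligns exactly with the two weight thresholds in the statement), splice the cactus transition to build $\rho$ in the grounded case, and obtain seamlessness of $\rho$ by lifting a hypothetical cheaper run on the folded word to one on the unfolded word, contradicting seamlessness of $\mu$. The paper packages that last lifting step as an appeal to \cref{prop:unfolding cactus maintains seamlesss gaps}, but the underlying argument is the direct expansion of the cactus transition into a minimum-weight $w^{2\bigM M_0}$ path that you spell out.
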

\begin{proof}
    The proof can be viewed as a ``reverse'' proof of \cref{prop:unfolding cactus maintains seamlesss gaps}.

    Consider a seamless run $\mu$ on $\unfold(x,\alpha_{S',w},y \wr F)=xw^{2\bigM M_0}y$. 
    Write $\mu:s_0\runsto{x}s\runsto{w^{2\bigM M_0}}r\runsto{y}q$. By \cref{lem:pumping grounded pairs}, if such that $\weight(\mu)\le \maxeff{x\alpha_{S',w}y}$ then it must hold that $(s,r)\in \GroundPairs(S',w)$. Indeed, otherwise the weight gained by $\mu$ in the infix $s\runsto{2^{2\bigM M_0}}r$ would be more than $F>2\maxeff{x\alpha_{S',w}y}$ (by \cref{def:cactus extension}), which would in turn imply that $\weight(\mu)>\maxeff{x\alpha_{S',w}y}$, which is a contradiction.

    Since $(s,r)\in \GroundPairs(S',w)$ and $\mu$ is seamless, we can assume without loss of generality that 
    \[\mu:s_0\runsto{x}s\runsto{w^\bigM}g\runsto{w^{(2M_0-2)\bigM}}g\runsto{w^{\bigM}}r\runsto{y}q\]
    where $g$ is the grounding state of $(s,r)$, and the infix $s\runsto{w^\bigM}g\runsto{w^{(2M_0-2)\bigM}}g\runsto{w^{\bigM}}r$ of $\mu$ gains weight $\minweight(s\runsto{w^{2\bigM}}r)$ (as per \cref{lem:pumping grounded pairs}).
    We can now readily construct $\rho$ by replacing this infix with a transition on $\alpha_{S',w}$ with the same weight.
    
    The fact that $\rho$ is seamless now comes ``for free'' from \cref{prop:unfolding cactus maintains seamlesss gaps}. Indeed, if $\rho$ is not seamless, we can take a lower-weight seamless run, and construct from it a seamless run $\mu'$ that would ``break'' $\mu$, contradicting the assumption that $\mu$ is seamless.

    For the second case, if $\weight(\mu)> \maxeff{x\alpha_{S',w}y}$ then again by \cref{lem:pumping grounded pairs} it now follows that $(s,r)\notin \GroundPairs(S',w)$ (by the same analysis as above). By \cref{def:cactus extension} we then have $\minweight(w^{2\bigM M_0},s\to r)>F$, so $\weight(\mu)> F-\maxeff{x}-\maxeff{y}\ge F-\maxeff{x\alpha_{S',w}y}$.
\end{proof}
The combination of \cref{prop:unfolding cactus maintains seamlesss gaps,prop:un-unfolding maintains cheap seamless runs} gives us the following useful tool.
\begin{lemma}[\keyicon The Effect of Unfolding]
    \label{lem:unfolding configuration characterization}
    Consider a cactus letter $\alpha_{S',w}$, a word $x\cdot \alpha_{S',w}\cdot y$ and $F>2\maxeff{x\cdot \alpha_{S',w}\cdot y}$, and let $\unfold(x,\alpha_{S',w},y \wr F)=xw^{2\bigM M_0}y$.
    Let $y'$ be a prefix of $y$, and consider the configurations $\vec{c_1}=\xconf(\vec{c_\init},x\cdot \alpha_{S',w}\cdot y')$ and $\vec{c_2}=\xconf(\vec{c_\init},xw^{2\bigM M_0}y')$, then $\supp(\vec{c_1})\subseteq \supp(\vec{c_2})$ and for every state $q\in S$ we have the following.
    \begin{enumerate}
        \item If $q\in \supp(\vec{c_1})$ then $\vec{c_2}(q)=\vec{c_1}(q)$.
        \item If $q\notin \supp(\vec{c_1})$ then $\vec{c_2}(q)>F-\maxeff{x\alpha_{S',w}y}$ (and could be $\infty$).
    \end{enumerate}
\end{lemma}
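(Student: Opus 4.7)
The plan is to reduce the lemma to the two preceding propositions, \cref{prop:unfolding cactus maintains seamlesss gaps} and \cref{prop:un-unfolding maintains cheap seamless runs}, applied ``at the prefix $y'$'' rather than the full $y$. The first thing to observe is that the constant $M_0$ produced by \cref{def:unfolding function} depends only on $S'$, $w$, and $F$, and that $F > 2\maxeff{x\alpha_{S',w}y} \ge 2\maxeff{x\alpha_{S',w}y'}$. Consequently, $\unfold(x,\alpha_{S',w},y'\wr F)=xw^{2\bigM M_0}y'$ with the same $M_0$, and both propositions may be applied with $y'$ in place of $y$. A second standing observation is that any finite-weight run ending at $q$ with weight $\vec{c_i}(q)$ may be assumed to be seamless: whenever a prefix is non-minimal, it can be replaced by a minimal prefix to the same intermediate state, yielding a seamless run of no greater (hence equal) total weight.

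For Item~1 and the support containment, fix $q\in\supp(\vec{c_1})$ and take a seamless run $\rho$ on $x\alpha_{S',w}y'$ of weight $\vec{c_1}(q)$ ending at $q$. Applying \cref{prop:unfolding cactus maintains seamlesss gaps} (with $y'$) produces a seamless run $\mu$ on $xw^{2\bigM M_0}y'$ also ending at $q$ with $\weight(\mu)=\vec{c_1}(q)$; this shows $q\in\supp(\vec{c_2})$ and $\vec{c_2}(q)\le\vec{c_1}(q)$. For the reverse inequality, take a minimal-weight seamless run $\mu'$ on $xw^{2\bigM M_0}y'$ ending at $q$; its weight $\vec{c_2}(q)\le\vec{c_1}(q)\le\maxeff{x\alpha_{S',w}y}$, so Case~1 of \cref{prop:un-unfolding maintains cheap seamless runs} applies and yields a seamless run on $x\alpha_{S',w}y'$ ending at $q$ with weight $\vec{c_2}(q)$, giving $\vec{c_1}(q)\le\vec{c_2}(q)$.

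For Item~2, suppose $q\notin\supp(\vec{c_1})$, and assume toward contradiction that $\vec{c_2}(q)\le F-\maxeff{x\alpha_{S',w}y}$. In particular $\vec{c_2}(q)<\infty$, so there exists a seamless run $\mu$ on $xw^{2\bigM M_0}y'$ ending at $q$ with $\weight(\mu)=\vec{c_2}(q)$. Apply \cref{prop:un-unfolding maintains cheap seamless runs} (with $y'$) to $\mu$. In Case~1, $\weight(\mu)\le\maxeff{x\alpha_{S',w}y'}$ and there is a seamless run on $x\alpha_{S',w}y'$ ending at $q$ with the same weight, contradicting $q\notin\supp(\vec{c_1})$. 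In Case~2, $\weight(\mu)>F-\maxeff{x\alpha_{S',w}y'}\ge F-\maxeff{x\alpha_{S',w}y}$, contradicting the assumed bound on $\vec{c_2}(q)$. Thus $\vec{c_2}(q)>F-\maxeff{x\alpha_{S',w}y}$.

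The only non-routine point is the ``prefix'' use of the two propositions: they are phrased for a seamless run on the full unfolded word, whereas here we need them at the prefix ending in $y'$. This is handled by applying each proposition to the word $x\alpha_{S',w}y'$, exploiting that the required threshold $F$ and the derived constant $M_0$ are monotone in the bound $\maxeff{\cdot}$, so the hypotheses survive the truncation. Once this is observed, both items follow directly and the proof is essentially a bookkeeping exercise between the two propositions.
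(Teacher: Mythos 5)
Your proof is correct and reduces to the same two propositions (\cref{prop:unfolding cactus maintains seamlesss gaps} and \cref{prop:un-unfolding maintains cheap seamless runs}) that the paper uses, so it is essentially the same approach. One small stylistic difference: the paper exploits that both propositions already carry prefix-level conclusions (they assert the weight equality at every prefix $z = x\alpha_{S',w}y[1,i]$), whereas you re-apply the propositions to the truncated word $x\alpha_{S',w}y'$ and justify this by observing that $M_0$ depends only on $(S',w,F)$ and that $\maxeff{\cdot}$ is monotone under truncation; both routes are valid and the observation about the constants is correct. A minor redundancy: in Item~1, once \cref{prop:unfolding cactus maintains seamlesss gaps} hands you a \emph{seamless} run $\mu$ on $xw^{2\bigM M_0}y'$ to $q$ with weight $\vec{c_1}(q)$, seamlessness already forces $\weight(\mu)=\vec{c_2}(q)$, so the reverse inequality via \cref{prop:un-unfolding maintains cheap seamless runs} is not needed — but including it does no harm.
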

\begin{proof}
    By \cref{prop:unfolding cactus maintains seamlesss gaps} we immediately get that $\supp(\vec{c_1})\subseteq \supp(\vec{c_2})$ and moreover, for every $q\in \supp(\vec{c_1})$ we have $\vec{c_2}(q)=\vec{c_1}(q)$, since a seamless run to $q$ on $x\cdot \alpha_{S',w}\cdot y'$ induces a seamless run of the same weight on $xw^{2\bigM M_0}y'$.

    Next, consider $q\notin \supp(\vec{c_1})$. If $q\notin \supp(\vec{c_2})$ then we are done. Otherwise, it must hold that $\vec{c_2}(q)> \maxeff{x\alpha_{S',w}y}$. Indeed, if (by way of contradiction) $\vec{c_2}(q)\le \maxeff{x\alpha_{S',w}y}$ then by \cref{prop:un-unfolding maintains cheap seamless runs} we have $\vec{c_1}(q)=\vec{c_2}(q)$ and in particular $q\in \supp(\vec{c_1})$, which is a contradiction.

    It then follows (still by \cref{prop:un-unfolding maintains cheap seamless runs}) that $\vec{c_2}(q)>F-\maxeff{x\alpha_{S',x},y}$, and we are done.
\end{proof}

Our next task is to eliminate rebase letters. Intuitively, a rebase letter is similar to a cactus letter that allows a change of baseline runs (with restrictions on what changes are allowed). 
Thus, we replace a rebase letter with the corresponding cactus letter, surrounded by jumps to change the baseline.

\begin{definition}[Rebase Removal]
\label{def:rebase removal}
Consider a rebase letter $\beta_{S',w,s\to r}$ and a word $x\cdot \beta_{S',w,s\to r}\cdot y$ where $s=(q_1,p,T)$ and $r=(q_2,p,T)$. 
Let $t_1=(\cdot,q_1,T)$, $t_p=(\cdot,p,T)$ and $t_2=(\cdot,q_2,T)$ where $\cdot$ stands for some arbitrary state (c.f., \cref{rmk:jump letters first component})

The \emph{Rebase Removal} of $x\cdot \beta_{S',w,s\to r}\cdot y$ is $\rebaserm(x, \beta_{S',w,s\to r}, y)=x\cdot \jl_{t_1\to t_p}\alpha_{S',w}\jl_{t_p\to t_2}y$.
\end{definition}

Rebase removal is depicted in~\cref{fig:rebase removal}.
We claim that applying rebase removal doesn't change the structure of the run tree, in the sense that the gaps between all runs remain the same. The following proposition captures this. 

\begin{figure}[ht]
    \centering
    \includegraphics[width=1\linewidth]{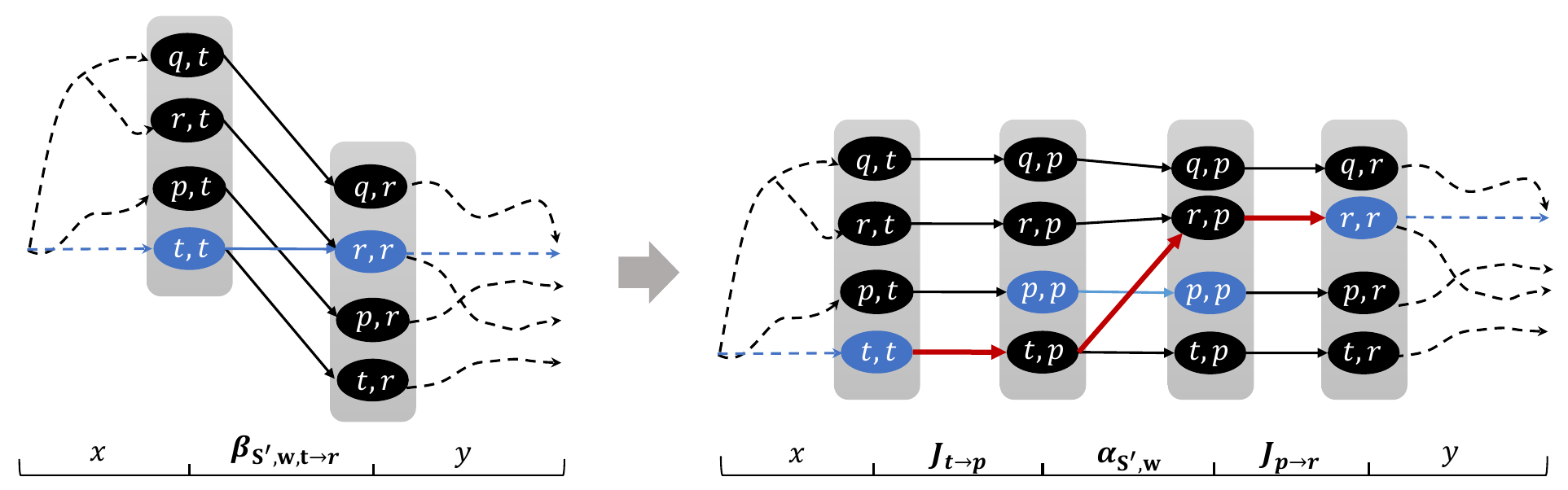}
    \caption{The rebase letter $\beta_{S',w,t\to r}$ is replaced by the sequence $\jl_{t\to p}\alpha_{S',w}\jl_{p\to r}$. The baseline changes}
    \label{fig:rebase removal}
\end{figure}

\begin{proposition}
    \label{prop:rebase removal shifts runs}
    Consider a rebase letter $\beta_{S',w,s\to r}$ with $s=(q_1,p,T)$ and $r=(q_2,p,T)$. Denote $t_1=(\cdot,q_1,T)$, $t_p=(\cdot,p,T)$ and $t_2=(\cdot,q_2,T)$ as per \cref{def:rebase removal}. Let $c\in \bbN$ be the weight of the transition $(s,\alpha_{S',w},c,r)$.

    For every $q',q''\in Q$ (states of the original WFA $\cA$), the transition
    \[
    ((q',q_1,T),\beta_{S',w,s\to r},d-c,(q'',q_2,T))
    \]
    exists if and only if the\footnote{Observe that this is indeed a single run, since each part of it is a single transition.} run
    \[
    (q',q_1,T)\runsto{\jl_{t_1\to t_p}}(q',p,T)\runsto{\alpha_{S',w}}(q'',p,T)\runsto{\jl_{t_p\to t_2}}(q'',q_2,T)
    \] 
    exists and has weight $d$.

    Moreover, there are no other runs from $(q',q_1,T)$ to $(q'',q_2,T)$ on $\jl_{t_1\to t_p}\alpha_{S',w}\jl_{t_p\to t_2}$.
\end{proposition}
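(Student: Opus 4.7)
The plan is to unpack the three definitions involved (rebase, jump, and cactus letters) and verify that they line up. This proposition is essentially bookkeeping: the rebase letter $\beta_{S',w,s\to r}$ was designed precisely to encode a cactus transition on $\alpha_{S',w}$ surrounded by a baseline shift, and jump letters provide exactly that baseline shift at no weight cost.

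First I would analyze the composite run. By \cref{def:jump letters}, reading $\jl_{t_1\to t_p}$ from $(q',q_1,T)$ is deterministic and sends the state to $(q',p,T)$ with weight $0$ --- the baseline component shifts from $q_1$ to $p$, while the inner state $q'$ and reachable set $T$ are preserved. Symmetrically, reading $\jl_{t_p\to t_2}$ from $(q'',p,T)$ is deterministic and produces $(q'',q_2,T)$ with weight $0$. Since both jumps are deterministic and our target final state is $(q'',q_2,T)$, the state before the last jump must be $(q'',p,T)$ and the state after the first jump must be $(q',p,T)$. Hence the only freedom (and the only source of weight) in the composite run from $(q',q_1,T)$ to $(q'',q_2,T)$ is the middle cactus transition $((q',p,T), \alpha_{S',w}, d, (q'',p,T))$, which immediately gives the ``moreover'' clause about uniqueness.

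Next I would invoke \cref{def:rebase}, which says that $((q',q_1,T),\beta_{S',w,s\to r},d-c,(q'',q_2,T)) \in \augTrans_\infty$ exists precisely when $((q',p,T),\alpha_{S',w},d,(q'',p,T)) \in \augTrans_\infty$ exists, with the weight shifted down by $c$. Combining this with the analysis above yields both the equivalence and the equality of weights: the composite run has total weight $0 + d + 0 = d$ exactly when the rebase transition exists with weight $d-c$.

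I do not expect any serious obstacle. There is no combinatorial or quantitative argument beyond tracking which component of a state gets modified by which letter and noting that jumps contribute $0$ weight. The only place one must be careful is matching up the baseline components across the three transitions ($q_1 \to p \to q_2$) so that each jump is applicable where invoked; this is a direct consequence of the fact that, in \cref{def:rebase removal}, the intermediate anchor $t_p = (\cdot,p,T)$ was chosen to coincide with the baseline component $p$ of $S'$, which is exactly the baseline that $\alpha_{S',w}$ operates on.
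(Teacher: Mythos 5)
Your proof is correct and follows essentially the same route as the paper's: unpack \cref{def:rebase} and \cref{def:jump letters} and observe that they compose exactly. One small improvement over the paper's own write-up is that you explicitly justify the ``moreover'' (uniqueness) clause by noting that both jump transitions are deterministic, which pins down the intermediate states and leaves the middle cactus transition as the only degree of freedom; the paper leaves this implicit.
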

\begin{proof}
    The proof is actually immediate by the respective definitions, but does require untangling each definition.

    First, observe that every transition of the form $(s',\beta_{S',w,s\to r},e,r')$ implies by \cref{def:rebase} that $s'=(q',q_1,T)$ and $r'=(q'',q_2,T)$ for some $q',q''\in Q$. Thus, these are the only types of transition that can be considered. Therefore, consider some $q',q''\in Q$.

    Recall that $c$ is the weight of $(s,\alpha_{S',w},c,r)$. Still by \cref{def:rebase}, we have that 
    \[((q',q_1,T),\beta_{S',w,s\to r},d-c,(q'',q_2,T))\] 
    if and only if $((q',p,T),\alpha_{S',w},d,(q'',p,T))$.
    By \cref{def:jump letters}, this holds if and only if we have the following run
   \[
    (q',q_1,T)\runsto{\jl_{t_1\to t_p}}(q',p,T)\runsto{\alpha_{S',w}}(q'',p,T)\runsto{\jl_{t_p\to t_2}}(q'',q_2,T)
    \] 
    Where the transitions on jump letters have weight $0$, and the transition on $\alpha_{S',w}$ has weight $d$. This concludes the claim.
\end{proof}
By applying \cref{prop:rebase removal shifts runs} to entire runs, we obtain a bijection between runs on $u$ and on $\rebaserm(u)$ which, intuitively, preserves the gaps between runs. More precisely, by carefully tracking indices and state names, we have the following.
\begin{corollary}
    \label{cor:rebase removal preserves gaps}
    Consider a word $x\cdot \beta_{S',w,s\to r}\cdot y$ with $s=(q,p,T)$ and $r=(q',p,T)$.
    There exists a constant $c\in \bbN$ such that the following are equivalent for every sequence of transitions $\rho=t_1,t_2,\ldots,t_k$ where $t_i=(s_i,\sigma_i,d_i,s_{i+1})$
 and $s_i=(q_i,p_i,T_i)$ for all $i$.
 \begin{itemize}
        \item $\rho:s_0\runsto{x}s_1\runsto{\beta_{S',w,s\to r}}s_2\runsto{y}s_3$ is a run on $x\cdot \beta_{S',w,s\to r}\cdot y$.
        \item The sequence of transition 
        \[        \rho'=t_1,t_2,\ldots,t_{|x|},t'_{1},t'_2,t'_3,t_{|x|+1},\ldots,t_k
        \] 
        with 
        \[
        \begin{split}
        &t'_1=(s_{|x|+1},\jl_{(\cdot,q,T)\to (\cdot,p,T)},0,(q_{|x|+1},p,T)) \\
        &t'_2=((q_{|x|+1},p,T),\alpha_{S',w},c,(q_{|x|+2},p,T)) \\
        &t'_3=(q_{|x|+2},\jl_{(\cdot,p,T)\to (\cdot,q',T)},0,s_{|x|+2})    
        \end{split}
        \]
        is a run on $\rebaserm(x,\beta_{S',w,s\to r},y)$, i.e.,  
        \[
        \rho':s_0\runsto{x}s_1\runsto{\jl_{(\cdot,q,T)\to (\cdot,p,T)}}s'_1\runsto{\alpha_{S',w}}s'_2\runsto{\jl_{(\cdot,p,T)\to (\cdot,q',T)}}s
        _2\runsto{y}s_3
        \]
    \end{itemize}
    And observe that in case both runs exist, the weights of the runs are identical up to $s_1$, and from $s_2$ the run $\rho'$ has weight higher by $c$ than $\rho$.
\end{corollary}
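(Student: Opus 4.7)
The plan is to apply Proposition \ref{prop:rebase removal shifts runs} directly to the single middle transition of $\rho$, and to observe that the prefix and suffix segments of $\rho$ and $\rho'$ coincide by construction. First I would fix $c$ to be the unique weight such that $(s,\alpha_{S',w},c,r) \in \augTrans_\infty$; this is precisely the constant appearing in Definition \ref{def:rebase} that governs the weight of $\beta_{S',w,s\to r}$. Note that $c$ depends only on the rebase letter $\beta_{S',w,s\to r}$ (and hence on the word $x\cdot \beta_{S',w,s\to r}\cdot y$), not on the particular candidate run $\rho$, as required.

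For the equivalence itself, the transitions $t_1,\ldots,t_{|x|}$ of $\rho$ read exactly the letters of $x$, which are also the first $|x|$ letters of $\rebaserm(x,\beta_{S',w,s\to r},y)$; similarly, $t_{|x|+2},\ldots,t_k$ read the letters of $y$, which form the suffix of $\rebaserm(x,\beta_{S',w,s\to r},y)$ after the three inserted transitions. So the prefix and suffix segments of $\rho'$ are valid transitions of $\augA_\infty^\infty$ if and only if the corresponding segments of $\rho$ are. For the single middle transition $t_{|x|+1} = (s_{|x|+1},\beta_{S',w,s\to r},d_{|x|+1},s_{|x|+2})$, the fact that $\sigma_{|x|+1}=\beta_{S',w,s\to r}$ forces, via Definition \ref{def:rebase}, that $s_{|x|+1}=(q_{|x|+1},q,T)$ and $s_{|x|+2}=(q_{|x|+2},q',T)$ for some $q_{|x|+1},q_{|x|+2}\in Q$ (with $T$ fixed). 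Plugging $q'\mapsto q_{|x|+1}$, $q''\mapsto q_{|x|+2}$ into Proposition \ref{prop:rebase removal shifts runs} gives the biconditional between the existence of $t_{|x|+1}$ and the existence of the three transitions $t'_1,t'_2,t'_3$ defined in the statement, together with the weight relation: if the $\beta$-transition has weight $d_{|x|+1}=d-c$, then the $\alpha$-transition has weight $d = d_{|x|+1}+c$ (the jumps contribute $0$). Moreover, the ``moreover'' clause of Proposition \ref{prop:rebase removal shifts runs} ensures that $\rho'$ is uniquely determined by $\rho$ on this infix, so no ambiguity arises in the construction.

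The weight bookkeeping is then immediate. Up to the end of the $x$-prefix both runs traverse identical transitions, so $\weight(\rho[1,i])=\weight(\rho'[1,i])$ for $i\le |x|$. In the middle, $\rho$ gains $d_{|x|+1}$ while $\rho'$ gains $0 + (d_{|x|+1}+c) + 0 = d_{|x|+1}+c$, an excess of exactly $c$. From $s_2$ onwards the two runs again traverse identical transitions on $y$, so the excess $c$ is preserved, which is precisely the final assertion of the corollary.

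I do not anticipate a serious obstacle: the corollary is essentially a bookkeeping lift of Proposition \ref{prop:rebase removal shifts runs} from the level of a single transition to the level of a whole run, and the only mild subtlety is to verify that the state coordinates of $s_{|x|+1}$ and $s_{|x|+2}$ forced by the $\beta$-letter exactly match those expected by the two $\jl$-letters in $\rebaserm(x,\beta_{S',w,s\to r},y)$ (this follows from $s=(q,p,T)$, $r=(q',p,T)$ and Definition \ref{def:jump letters}).
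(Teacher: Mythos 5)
Your proof is correct and follows essentially the same route as the paper, which treats the corollary as an immediate application of Proposition~\ref{prop:rebase removal shifts runs} to the single $\beta$-transition, after observing that the $x$-prefix and $y$-suffix of $\rho$ and $\rho'$ coincide transition-for-transition. You also correctly note (implicitly, via your weight bookkeeping) that the weight of the $\alpha$-transition $t'_2$ is really $d_{|x|+1}+c$ rather than the $c$ written in the corollary statement, which is what actually makes the ``higher by $c$'' clause hold; this is a slip in the statement, not in your argument.
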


Having established how to remove cactus letters and rebase letters, we can now recursively apply these definition to \emph{flatten} any word $u\in (\Gamma_\infty^\infty)^*$ back to a word in $\Gamma^*$ (with jump letters). Before defining the flattening precisely, note that if $u\in (\Gamma_\infty^\infty)^*$ then in fact $u\in (\Gamma_k^j)^*$ for some finite $k,j$ (indeed, the infinite indices are reached by infinite unions, but cannot be attained in a single finite word).

We remark that the order with which one removes cactus letters (e.g., leftmost first, or highest-depth first) may result in different flattenings. We fix as a canonical order the highest-depth first approach (and left to right among those), but this is arbitrary. It is captures below with the requirement that the prefix $x$ is of ``lower'' rank.

\begin{definition}[Flattening]
    \label{def:cactus rebase flattening}
    Consider a word $u\in (\Gamma_\infty^\infty)^*$ and $F\in \bbN$. 
    We define the \emph{flattening of $u$ with constant $F$}, denoted $\flatten(u\wr F)$,  inductively as follows.
    \begin{itemize}
        \item \textbf{Base case:} If $u\in (\Gamma_0^0)^*=\Gamma^*$, then $\flatten(u \wr F)=u$.
        \item \textbf{Cactus Case:} If $u\in (\Gamma^j_k)^*\setminus (\Gamma^{j}_{k-1})^*$ for $j\ge 0$ and $k\ge 1$, write $u=x\alpha_{S',w}y$ such that  $x\in(\Gamma^{j}_{k-1})^* $ and $\alpha_{S',w}\in \Gamma^j_k\setminus \Gamma^{j}_{k-1}$, then we define 
        \[\flatten(u\wr F)=\flatten(\unfold(x,\alpha_{S',w},y \wr F) \wr F+2\maxeff{u})\]
        \item \textbf{Rebase case:} If $u\in (\Gamma^j_0)^*\setminus (\Gamma^{j-1}_\infty)^*$ for $j\ge 1$, write $u=x\beta_{S',w,s\to r}y$ such that $x\in (\Gamma^{j-1}_\infty)^*$ and  $\beta_{S',w,s\to r}\in \Gamma^j_0\setminus \Gamma^{j-1}_\infty$, then we define $\flatten(u \wr F)=\flatten(\rebaserm(x,\beta_{S',w,s\to r},y) \wr F)$.
    \end{itemize}
\end{definition}
Observe that \cref{def:cactus rebase flattening} always terminates. Indeed, by induction after every application of the inductive cases, the number of maximal-depth cactus letters, or number of rebase letters of highest rank $j$, decreases by one. Note that each application of $\unfold$ or $\rebaserm$ may create additional cacti and rebase letters that were nested, but these are of lower ranks. 

We illustrate the flattening process of the cactus in \cref{fig:cactus} in \cref{fig:flattening example}.
\begin{figure}[ht]
    \centering
    \includegraphics[width=0.8\linewidth]{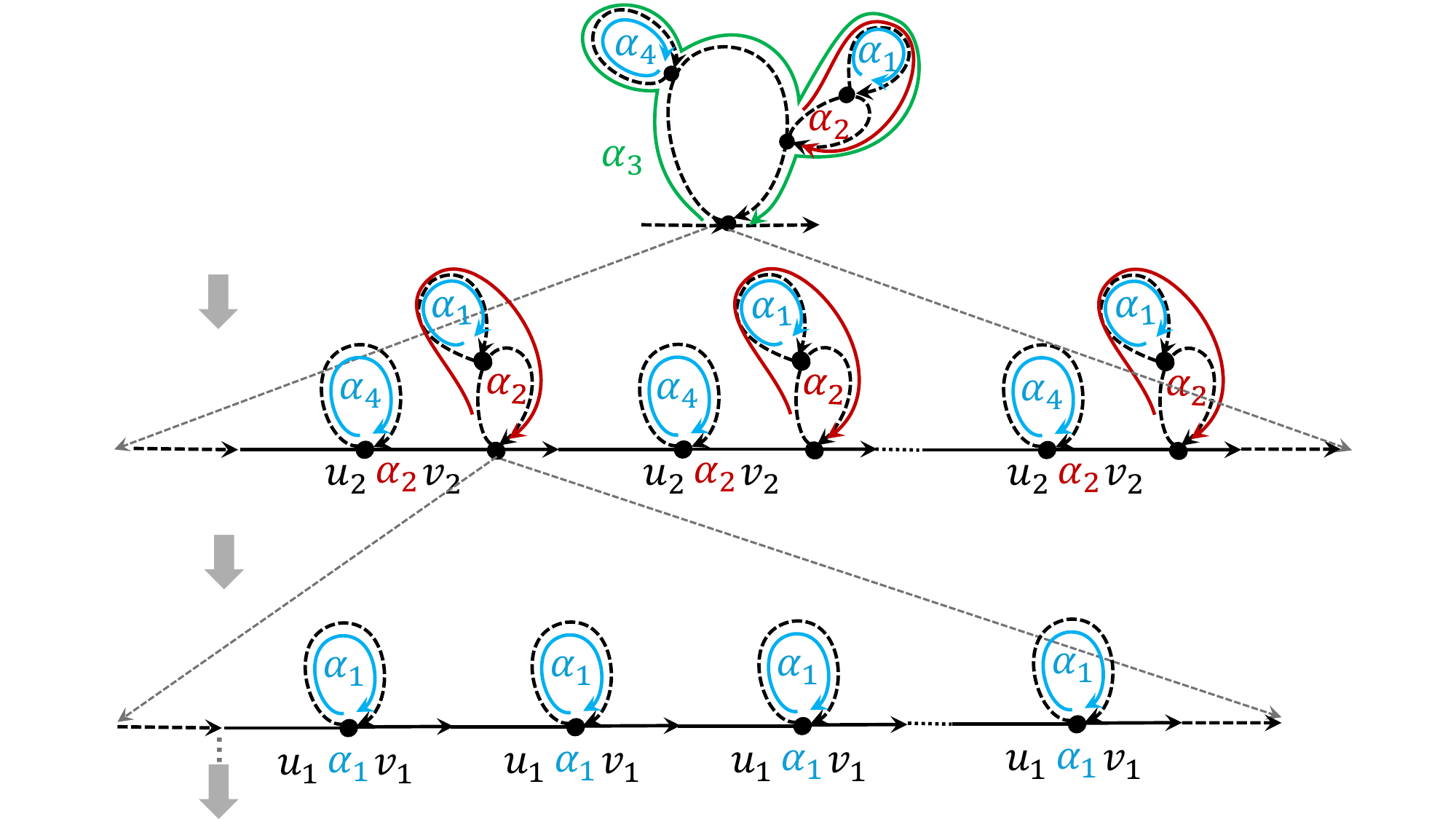}
    \caption{Illustration of flattening the cactus in \cref{fig:cactus}. The horizontal dotted lines represent many more iterations.}
    \label{fig:flattening example}
\end{figure}

Also note that in the recursive cactus case, we increase $F$ by $2\maxeff{u}$. This ensures that if there are ``inner letters'' of higher cost that are nested in the cactus, their weight is also accounted for.

The main characterization of flattening is that it maintains the weights to the reachable states, makes ghost states reachable, but gives the latter very high weight, as we now formulate.
\begin{lemma}
    \label{lem:flattening configuration characterization}
    Consider a word $u\in (\Gamma_\infty^\infty)^*$ and $F> 2\maxeff{u}$. 
    Let $\vec{c}=\xconf(\vec{c_\init},u)$ and $\vec{d}=\xconf(\vec{c_\init},\flatten(u \wr F))$.
    Then \[\booltrans(s_0,u)=\supp(\vec{c})\subseteq \supp(\vec{d})=\booltrans(s_0,\flatten(u \wr F))=\ghostTrans(s_0,\flatten(u \wr F))=\ghostTrans(s_0,u)\]
    and for every $q\in \supp(\vec{d})$, if $q\in \supp(\vec{c})$ then $\vec{c}(q)=\vec{d}(q)$, and otherwise $\vec{d}(q)\ge \max\{\vec{c}(p)\mid p\in \supp(\vec{c})\}+F$.
\end{lemma}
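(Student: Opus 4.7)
The plan is to prove this by induction on the termination of the flattening procedure (\cref{def:cactus rebase flattening}), using the two black-box tools established earlier: \cref{lem:unfolding configuration characterization} for the cactus case and \cref{cor:rebase removal preserves gaps} for the rebase case. The induction measure is lexicographic on $(j, k, \text{number of outermost depth-}k\text{ cactus/rank-}j\text{ rebase letters})$, which decreases by one at each recursive call. Throughout the induction I carry the invariant $F > 2\maxeff{w}$ where $w$ is the word being flattened at the current step. The base case $u \in \Gamma^*$ is trivial, since $\flatten(u \wr F) = u$, both configurations coincide, and $\booltrans = \ghostTrans$.

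For the cactus case, write $u = x \alpha_{S',w} y$ with $x \in (\Gamma^j_{k-1})^*$ as in \cref{def:cactus rebase flattening}, and let $v = \unfold(x,\alpha_{S',w},y \wr F)$. Applying \cref{lem:unfolding configuration characterization} to the full prefix $y$ gives the support inclusion $\supp(\vec{c}) \subseteq \supp(\xconf(\vec{c_\init}, v))$, weight preservation on shared support, and a lower bound of $F - \maxeff{u}$ on any weight of a newly-reachable ghost state in $v$ — which, since $F > 2\maxeff{u}$, pushes these weights above $\max_{p \in \supp(\vec{c})} \vec{c}(p) + F'$ for an appropriate $F'$. The equality $\ghostTrans(s_0, u) = \ghostTrans(s_0, v)$ holds because unfolding only toggles whether a non-grounded pair is ``reachable'' or ``ghost'' while leaving the third (reachable-set) component of each visited state unchanged. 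I then apply the induction hypothesis to $v$ with constant $F + 2\maxeff{u}$ — this is precisely why \cref{def:cactus rebase flattening} inflates $F$ in the recursive call — and chain the two comparisons, with the resulting lower bound being dominated by $F$.

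For the rebase case, write $u = x \beta_{S',w,s\to r} y$ with $x \in (\Gamma^{j-1}_\infty)^*$ and let $v = \rebaserm(x,\beta_{S',w,s\to r},y)$. \cref{cor:rebase removal preserves gaps} gives a bijection between runs on $u$ and runs on $v$: weights agree exactly up to $s_1$, and beyond $s_2$ the $v$-run is uniformly higher by the constant $c$ (the weight of the base $\alpha$-transition). The crucial observation is that since \emph{every} run through the rebase is shifted by the same $c$, the minimum-over-runs is shifted by $c$ too, and both the support and the ghost-reachable set are preserved exactly (jumps are deterministic on the first and third components, and $\alpha$ transitions exactly mirror the grounded pairs that $\beta$ represents). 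The induction hypothesis applied to $v$ (with the same $F$) then completes the case.

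The main obstacle I anticipate is exactly the rebase bookkeeping: reconciling the pointwise shift $c$ introduced by each rebase removal with the lemma's claim $\vec{c}(q) = \vec{d}(q)$. Concretely, one must verify that the shifts along the entire flattening cancel — which is plausible because the shift $c$ arises from the $\alpha$-transition weight along the distinguished base pair $(s,r)$, and the same weight appears on the corresponding $\jl\alpha\jl$ block in $\flatten(u \wr F)$, so the additive discrepancy is absorbed into the identical grounded-pair weights on both sides. The secondary subtlety is choosing the constant $F$ in the cactus recursion so that ghost states remain high enough even after many levels of unfolding, which is precisely handled by the $F + 2\maxeff{u}$ inflation.
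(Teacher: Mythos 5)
Your proof follows the same overall plan as the paper: induction over the flattening recursion, using the unfolding toolbox for the cactus case and rebase removal for the rebase case. The base and cactus cases match the paper closely — you invoke \cref{lem:unfolding configuration characterization} where the paper works directly with \cref{prop:unfolding cactus maintains seamlesss gaps} and \cref{prop:un-unfolding maintains cheap seamless runs}, but since that lemma is just the packaged form of those two propositions this is the same argument, and you correctly note why the constant is inflated to $F + 2\maxeff{u}$ on the recursive call.

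The concern is the rebase case, and you are right to flag it — but your proposed resolution does not work. \cref{cor:rebase removal preserves gaps} says explicitly that after $s_2$ the flattened run has weight \emph{higher by $c$} than the original, where $c$ is the weight of the $\alpha_{S',w}$-transition on the distinguished pair $(s,r)$; this $c$ equals $\minweight(w^{2\bigM}, s\runsto{w^\bigM}g\runsto{w^\bigM} r)$ and is not zero in general. The shift is uniform across all runs, so it preserves gaps (hence the corollary's name), but it does translate $\vec{d}$ relative to $\vec{c}$ by the additive constant $c$, which directly contradicts the claimed equality $\vec{c}(q)=\vec{d}(q)$. Your suggestion that the discrepancy is ``absorbed into the identical grounded-pair weights on both sides'' is not a cancellation: the $\beta$-transition carries weight $d-c$, the replacement $\jl\,\alpha\,\jl$ block carries weight $d$, and nothing later subtracts $c$ back out. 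You should be aware that the paper's own proof of this case is similarly terse — it cites \cref{cor:rebase removal preserves gaps} and simply asserts ``the reachable set of configurations does not change, nor do the weights,'' which is not what the corollary actually delivers. In practice this gap is benign for the paper because \cref{lem:flattening configuration characterization} is only ever invoked on words over $\Gamma_\infty^0$ (no rebase letters), so the rebase branch of the induction never fires; a careful writeup should either restrict the lemma to rebase-free inputs, or relax the equality to equality-up-to-a-uniform-shift when rebase letters are present. As written, neither your proof nor the paper's discharges the rebase case for $c>0$.
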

\begin{proof}
    The containment/equality chain is immediate by the construction of unfolding and rebase removal. The interesting part of the proof is the weight accounting.
    The claim is proved by induction on the word $u$, as per \cref{def:cactus rebase flattening}.

    \subparagraph*{Base case:}
    For the base case, if $u\in (\Gamma_0^0)^*$, then $\flatten(u \wr F)=u$ and the claim is trivial, since $\vec{c}=\vec{d}$.

    \subparagraph*{Cactus case:}
    For the cactus case, write $u=x\alpha_{S',w}y$ and denote $u'=\unfold(x,\alpha_{S',w},y \wr F+2\maxeff{u})$. 
    Define $\vec{c},\vec{d}$ as in the premise, and additionally $\vec{c'}=\xconf(\vec{c_\init},u')$.
    Let $q\in \supp(\vec{d})=\booltrans(s_0,\flatten(u \wr F))$.
    
    If $q\in \supp(\vec{c})$, then in particular there is a seamless run $\rho:s_0\runsto{u}q$, so by \cref{prop:unfolding cactus maintains seamlesss gaps} there is a seamless run with the same weight $\mu:s_0\runsto{u'}q$. Assume $\rho$ has minimal weight, then since $\weight(\rho)\le \maxeff{u}$ (which holds regardless of minimality) it follows by \cref{prop:un-unfolding maintains cheap seamless runs} that $\mu$ is also minimal weight (otherwise there would be a lower run to $q$).
    Thus, we have $\vec{c}(q)=\vec{c'}(q)$ and in particular $q\in \supp(\vec{c'})$. 
    We have that $\flatten(u \wr F)=\flatten(u' \wr F+2\maxeff{u})$, so by $q\in \supp(\vec{c'})$ we can apply the induction hypothesis to obtain $\vec{c'}(q)=\vec{d}(q)$, but then $\vec{c}(q)=\vec{d}(q)$, so this case is proved. Intuitively, all we show here is that if $q$ is already reachable, then this property is maintained through the inductive step (which is fairly obvious by the definition of unfolding).

    Now assume $q\notin \supp(\vec{c})$, we split to two cases. If $q\in \supp(\vec{c'})$ then $q\in\ghostTrans(s_0,u)\setminus \booltrans(s_0,u)$ but $q\in \booltrans(s_0,u')$. 
    In particular, $\vec{c'}(q)\le \maxeff{u'}$.
    Also, by the induction hypothesis, we have $\vec{c'}(q)=\vec{d}(q)$.
    By \cref{prop:un-unfolding maintains cheap seamless runs}, there is no seamless run $\mu:s_0\runsto{u'}q$ with $\weight(\mu)\le \maxeff{u}$ (otherwise we would have $q\in \booltrans(s_0,u)$), and therefore (by the second item of \cref{prop:un-unfolding maintains cheap seamless runs})  we have 
    \[
    \begin{split}
    \vec{d}(q)=&\vec{c'}(q)=\minweight(u',s_0\to q)>F+2\maxeff{u}-\maxeff{u}=\\
    &F+\maxeff{u}\ge F+\max\{\vec{c}(p)\mid p\in \supp(\vec{c})\}    
    \end{split}
    \]
    as required.
    Intuitively, this case amounts to showing that at the first level of the unfolding where $q$ becomes reachable, its weight is very high.

    The remaining case is if $q\notin \supp(\vec{c'})$. Then, by the induction hypothesis we have 
    \[
    \begin{split}
         \vec{d}(q)\ge &\max\{\vec{c'}(p)\mid p\in \supp(\vec{c})\}+F+2\maxeff{u}>\\
         &F+\maxeff{u}\ge F+\max\{\vec{c}(p)\mid p\in \supp(\vec{c})\}
    \end{split}
    \]
    Intuitively, this case just shows that if $q$ is not yet reachable even in the current unfolding level, then by the end of the unfolding its weight is extremely high.

    \subparagraph*{Rebase case:}
    The rebase case is immediate by \cref{cor:rebase removal preserves gaps}. Indeed, the reachable set of configurations does not change, nor do the weights.   
\end{proof}
Conceptually, flattening is simply recursive unfolding (as depicted in \cref{fig:unfolding,fig:flattening example}), with some adjustments to account for rebase letters.

\cref{prop:unfolding cactus maintains seamlesss gaps,prop:un-unfolding maintains cheap seamless runs,cor:rebase removal preserves gaps,lem:flattening configuration characterization} give us the tools to reason about the flattening of words, and show that they maintain the gaps in runs. In \cref{apx:aug inf inf is det iff aug A is det} we introduce some additional reasoning to eliminate jump letters, and show the following.
\begin{theorem}
\label{thm:aug inf inf is det iff aug A is det}
    $\augA_\infty^\infty$ is determinizable if and only if $\augA$ is determinizable.
\end{theorem}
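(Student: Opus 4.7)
The strategy is to use the gap-witness characterization of \cref{thm:det iff bounded gap} in both directions, separating the trivial implication from the substantive one. For the easy direction, suppose $\augA$ is nondeterminizable. Its alphabet $\Gamma$ is finite and $\Gamma \subseteq \Gamma_\infty^\infty$. By the contrapositive of \cref{thm:det iff bounded gap}(1), for every $B \in \bbN$ there exists a $B$-gap witness $(x,y,s)$ of $\augA$ over $\Gamma$. Since the cactus extension only introduces transitions on genuinely new letters and leaves the transitions on $\Gamma$ untouched, the runs of $\augA_\infty^\infty$ on any word in $\Gamma^*$ coincide with those of $\augA$, with identical weights. Hence $(x,y,s)$ serves verbatim as a $B$-gap witness of $\augA_\infty^\infty$ over the finite sub-alphabet $\Gamma \subseteq \Gamma_\infty^\infty$, and \cref{thm:det iff bounded gap}(2) forces $\augA_\infty^\infty$ to be nondeterminizable.

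The substantive direction assumes $\augA_\infty^\infty$ is nondeterminizable and seeks, for every $B$, a $B$-gap witness of $\augA$ over $\Gamma$. The contrapositive of \cref{thm:det iff bounded gap}(1) supplies a $B$-gap witness $(\hat{x},\hat{y},s)$ with $\hat{x},\hat{y} \in (\Gamma_\infty^\infty)^*$. I would pick $F$ substantially larger than $B + 2\maxeff{\hat{x}\hat{y}}$ and form $x' = \flatten(\hat{x} \wr F)$ and $y' = \flatten(\hat{y} \wr F)$. By \cref{lem:flattening configuration characterization}, on every prefix the configuration reached in $\augA_\infty^\infty$ agrees with the flattened configuration on the support of the former, while any newly ghost-reachable state sits at weight at least $F$ above the previous maximum. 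Choosing $F$ large enough prevents these ghost runs from ever undercutting either the minimum or the witness, so the inequality $\minweight(x', s_0 \to s) - \minweight(x', s_0 \to S) > B$ and the equality $\minweight(x'y', s_0 \runsto{x'} s \runsto{y'} S) = \minweight(x'y', s_0 \to S)$ both transfer. This yields a $B$-gap witness in $\augA_\infty^\infty$ whose letters lie in $\Gamma$ together with possibly some jump letters introduced by $\rebaserm$ during flattening.

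The main obstacle is eliminating the jump letters, since the gap-witness definition for $\augA$ requires words purely in $\Gamma^*$. Jumps are deterministic, contribute weight $0$, and rewrite only the baseline component while preserving the state (first) and reachable-set (third) components (\cref{def:jump letters}); every finite-weight run at a jump moves uniformly to the new baseline, so no relative weight between runs is disturbed. My plan is to traverse $x'y'$ from left to right and, at each occurrence of a jump $\jl_{t \to t'}$, invoke the baseline-shift framework of \cref{sec: baseline shift} to rewrite the subsequent $\Gamma$-letters relative to the new baseline $t'$, producing a compatible sequence of $\Gamma$-transitions. Since baseline shifts preserve all relative weights between runs (\cref{cor:baseline shift maintains gaps}) and leave the state and reachable-set components intact, the resulting jump-free word $u'' \in \Gamma^*$ together with the induced split $(x'',y'')$ and a state $s'' \in S$ sharing the first and third components of $s$ remains a $B$-gap witness, now of $\augA$ over the finite alphabet $\Gamma$. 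Applying \cref{thm:det iff bounded gap}(2) to $\augA$ then delivers its nondeterminizability. The careful bookkeeping of baseline rewrites across every jump is the main technical burden; the weight and reachability data transfer cleanly thanks to the determinism and weight-$0$ nature of jumps combined with the gap-preservation guarantee of baseline shifts.
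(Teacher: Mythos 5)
Your easy direction is correct, though the paper proves the same implication more directly by restricting an assumed determinization of $\augA_\infty^\infty$ to the sub-alphabet $\Gamma$; your gap-witness argument is a valid substitute. Your hard direction follows the right overall contour (flatten, then eliminate jumps), but the jump-removal step has a genuine gap.

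You propose to traverse $x'y'$ left to right and, at each jump $\jl_{t\to t'}$ with $t=(\cdot,p,T)$ and $t'=(\cdot,p',T)$, \emph{rewrite the suffix} so that it is compatible with the no-jump situation. For that to work you must exhibit, for the part of the word after the jump, a baseline run of $\cA$ starting from the pre-jump baseline state $p$ that reads the same underlying $\Sigma$-letters. No such run need exist: $p$ merely lies in $T$, which guarantees it is reachable, not that it has outgoing transitions matching the suffix. You also cannot invoke the baseline-shift machinery of \cref{sec: baseline shift} directly here, since it is explicitly defined only for jump-free words, so calling on it to eliminate a jump is circular. The paper resolves this by working in the opposite direction: it replaces the \emph{prefix} $x$ with a run $\rho_x$ of $\cA$ on the same underlying letters that ends at the jump's \emph{target} state $p'$. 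Such a run always exists precisely because $p'\in T$ and $T$ is the Boolean-reachable set; the jump then becomes a no-op and is simply dropped, and \cref{prop: pov shift from A to aug maintains distance}-style reasoning (carried out manually there) shows the gap is preserved. That existential guarantee is the crucial ingredient your plan is missing.

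A smaller point on the flattening step: \cref{lem:flattening configuration characterization} is stated for flattening a single word read from $\vec{c_\init}$, so it does not directly licence flattening $\hat{x}$ and $\hat{y}$ independently and concatenating. The paper instead flattens the concatenation $\hat{x}\hat{y}$ as a whole and tracks the split point by induction on the unfolding steps, which is what the stated lemmas support. Your choice of $F > B + 2\maxeff{\hat{x}\hat{y}}$ shows you noticed the dependence on the whole word, but to make the separate flattenings rigorous you would need to re-derive the configuration characterization for the concatenated flattenings, which is not a one-line appeal to the lemma.
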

By combining this with \cref{lem:A det iff augA det} we also have the following.
\begin{corollary}
    \label{cor:aug inf inf is det iff A is det}
     $\augA_\infty^\infty$ is determinizable if and only if $\cA$ is determinizable.
\end{corollary}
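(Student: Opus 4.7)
The plan is to observe that this corollary is essentially a two-step transitivity argument chaining together the two equivalences already established in the preceding development. Specifically, \cref{lem:A det iff augA det} gives the equivalence $\cA$ determinizable $\iff$ $\augA$ determinizable, and \cref{thm:aug inf inf is det iff aug A is det} (stated just above, with the bulk of its justification deferred to \cref{apx:aug inf inf is det iff aug A is det}) gives the equivalence $\augA$ determinizable $\iff$ $\augA_\infty^\infty$ determinizable. Composing the two biconditionals yields the claim.

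Concretely, the first step is to apply \cref{lem:A det iff augA det}. Its proof already leverages the characterization in \cref{thm:det iff bounded gap} together with \cref{prop: pov shift from A to aug maintains distance,prop: pov shift from aug to A maintains distance}, which show that a $\rhobase$-shift preserves the weight differences between runs and sends minimal runs to minimal runs. Hence $B$-gap witnesses in $\cA$ translate, under shifting, to $B$-gap witnesses in $\augA$, and conversely. This gives the first equivalence.

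The second step is to invoke \cref{thm:aug inf inf is det iff aug A is det}, whose proof uses the toolbox of \cref{sec: cactus unfolding}: given a purported determinization of one side, we translate back and forth by flattening cactus and rebase letters via \cref{def:cactus rebase flattening}, appealing to \cref{lem:flattening configuration characterization} (and its ingredients \cref{prop:unfolding cactus maintains seamlesss gaps,prop:un-unfolding maintains cheap seamless runs,cor:rebase removal preserves gaps}) to argue that gaps between runs are preserved, modulo ghost states that can be made arbitrarily expensive by choosing the flattening constant $F$ large enough. Jump letters require a small extra argument (deferred to \cref{apx:aug inf inf is det iff aug A is det}) showing they do not create new gaps. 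The key point is that a $B$-gap witness over $\Gamma_\infty^\infty$ flattens to a $B'$-gap witness over $\Gamma$ for some $B'$ of the same order, and vice versa, so bounded-gap status (hence determinizability, by \cref{thm:det iff bounded gap}) is preserved in both directions.

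The only genuine content to write for \emph{this} corollary is therefore the composition itself. I would write it as a two-line proof: assume $\cA$ is determinizable; by \cref{lem:A det iff augA det}, so is $\augA$; by \cref{thm:aug inf inf is det iff aug A is det}, so is $\augA_\infty^\infty$. The converse is identical, traversing both equivalences in the opposite direction. There is no real obstacle here — all the technical difficulty has already been absorbed into the two earlier statements; the main work (which lies in \cref{thm:aug inf inf is det iff aug A is det}) is the elimination of jump letters, and that is handled in the appendix rather than in this proof.
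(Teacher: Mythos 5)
Your proposal matches the paper exactly: the corollary is stated immediately after \cref{thm:aug inf inf is det iff aug A is det} with the one-line justification ``By combining this with \cref{lem:A det iff augA det} we also have the following,'' i.e.\ precisely the transitivity composition you describe. The additional detail you give about the internals of the two cited results is accurate but not needed for this corollary; the two-line composition is all that is required.
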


\section{Dominance, Potential and Charge}
\label{sec:dominance and potential}
The construction in \cref{sec:cactus extension} can be thought of as a pumping argument, in that we make sure that states that can be (relatively) easily pumped unboundedly are out of the way. This construction is now equipped with a variety of tools from \cref{sec:cactus toolbox}.
Our next task is to define certain measures of configurations, that essentially refine the notion of $B$-gap witness (\cref{def: B gap witness}), and refer specifically to the cactus extension. We present two such measures, the \emph{potential} and \emph{charge}, and then provide a toolbox to work with them.

In the following section we refer to $\augA^\infty_\infty$ (sometimes with jump letters). Specifically, recall that in $\augA^\infty_\infty$, the weight of baseline transitions is $0$, and that there is at most one baseline run on each word (see \cref{rmk:baseline transitions with cactus are zero}). 

\subsection{Dominant States, Potential and Charge -- Definitions}
\label{sec:dominant states}
Consider a configuration $\vec{c}$ of $\augA^\infty_\infty$. 
Our first definition describes when a state $q$ is still ``relevant'' in $\vec{c}$, in the sense that some suffix read from $\vec{c}$ may make the run starting from $q$ minimal, or at least better (i.e., lower) than states that are currently lower. We make this precise as follows.
\begin{definition}[\keyicon Dominant State]
    \label{def:dominant state}
    Consider a configuration $\vec{c}\in \bbZinf^S$. We say that a state $q\in S$ is \emph{dominant} if there exists $w\in\Sigma^*$ such that 
    $\minweight(w,q\to S)<\infty$ 
    and for every $p\in S$, if $\vec{c}(p)<\vec{c}(q)$ then 
    $\minweight(w,p\to S)=\infty$.

    We say that $q$ is \emph{maximal dominant} if $\vec{c}(q)=\max\{\vec{c}(p)\mid p\text{ is dominant}\}$. We then denote this maximal value by $\domval(\vec{c})=\vec{c}(q)$ and the set of maximal dominant states by $\maxdom(\vec{c})$.
\end{definition}
Intuitively, $q$ is dominant if it can yield a run with finite weight, whereas all states below $q$ only yield infinite weight runs (i.e, they cannot complete a valid run).

Note that states $q'$ with $\vec{c}(q')\ge \vec{c}(q)$ are not considered in the definition, and may yield even lower runs than those starting from $q$. Thus, we distinguish the maximal dominant state as well.
Also note that dominance seems strongly tied to the notion of gap witnesses (\cref{def: B gap witness}). However, there is an important difference: the baseline run can be chosen to be some arbitrary run, which may cause the potential to be large, but such that way below the baseline there is another accepting run, and therefore there are no gap-witnesses. 
The precise connection between dominance and gap witnesses/determinizability is made in \cref{sec:witness}.

Next, recall that some states in $\augA^\infty_\infty$ are baseline, and that a run is a baseline if all the states along it are baselines. We wish to extend the definition of a baseline run to words. Naively, one could require that a word has a baseline run. This, however, is trivial -- every word has a baseline run (this follows from the definition of the baseline-augmented subset construction in \cref{sec:augmented construction}). 
Instead, we require that the baseline run is \emph{seamless} (c.f. \cref{sec:prelim}), i.e., that for every state $q$ visited by the baseline run after a prefix, the baseline run is also minimal to that state (although there may be other, lower runs, that go through other states).

Thus, we say that $w$ \emph{has a seamless baseline run} if its baseline run is seamless.
%
Recall that in the baseline-augmented subset construction, the weight of a baseline run is $0$, as baseline transitions are normalized to weight 0. This special status of the baseline run, provided it is seamless, allows us to capture the growth of other runs. We measure this in two ways: the difference between the highest dominant state to the baseline run is the \emph{potential}, and the difference between the baseline run and the current minimal state is the \emph{charge}\footnote{The term ``potential'' makes sense as it refers to states that can potentially become minimal. The name ``charge'' was chosen to keep in line with the physics-based naming.}. Recall that $\xconf(w,\vec{c}_{\init})$ is the configuration reached by $\augA^\infty_\infty$ after reading $w$. We then have the following.
\begin{definition}[\keyicon Potential and Charge]
    \label{def:potential}
    \label{def:charge}
    Consider a word $w$ that has a seamless jump-free baseline run and let $\vec{c}_w=\xconf(w,\vec{c}_{\init})$.
    \begin{enumerate}
        \item The \emph{potential} of $w$ is  $\pot(w)=\domval(\vec{c}_w)$.
        \item The \emph{charge} of $w$ is  $\charge(w)=-\min\{\vec{c}(q)\mid q\in Q\}$.
    \end{enumerate}
\end{definition}
Note that $\pot(w)$ and $\charge(w)$ are always nonnegative (when they are defined), since the minimal run is of weight at most $0$, due to the seamless baseline run.


\subsection{A Toolbox for Potential and Charge}
\label{sec:growth of potential and charge}
We turn to give some fundamental results about potential and charge, as well as link them to the tools of \cref{sec:cactus toolbox} -- specifically to baseline shift and unfolding.

Observe that if $w\cdot \sigma$ has a seamless baseline run, then so does $w$. In the following lemmas, we show that the difference in potential and in charge between $w\cdot \sigma$ and $w$ is bounded from above. This, however, requires that the alphabet is finite, and therefore the bound is not uniform for all words. Nonetheless, this suffices for our needs.
\begin{lemma}[\keyicon Potential Bounded Growth]
    \label{lem:bounded growth potential}
    For every finite alphabet $\Gamma\subseteq \Gamma^\infty_\infty$ there exists $k\in \bbN$ such that for every $w\in \Gamma^*$ and $\sigma\in \Gamma$, if $w\cdot \sigma$ has a seamless baseline run then $\pot(w\cdot \sigma)-\pot(w)\le k$.
\end{lemma}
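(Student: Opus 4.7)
The plan is to take $k := W = \max_{\sigma\in\Gamma}\wmax{\sigma}$, which is finite because $\Gamma$ is, and show $\pot(w\sigma)-\pot(w)\le W$. Abbreviate $V':=\pot(w\sigma)$, and fix a maximal dominant state $q'\in \maxdom(\vec{c}_{w\sigma})$ with $\vec{c}_{w\sigma}(q')=V'$ together with a witness word $u$ from \cref{def:dominant state}, so $\minweight(u,q'\to S)<\infty$ while $\minweight(u,r'\to S)=\infty$ for every $r'$ with $\vec{c}_{w\sigma}(r')<V'$. The whole argument will revolve around the ``pre-image'' set
\[
A \;=\; \{p\in S : \vec{c}_w(p)<\infty \text{ and } \minweight(\sigma u, p\to S)<\infty\},
\]
which is nonempty because any predecessor $p_0$ of $q'$ attaining the minimum $\vec{c}_{w\sigma}(q')=\vec{c}_w(p_0)+\weight(p_0\runsto{\sigma}q')$ lies in $A$.

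The central numerical claim will be that every $p\in A$ satisfies $\vec{c}_w(p)\ge V'-W$. To see this, fix a finite-weight run $p\runsto{\sigma}p'\runsto{u}$ witnessing $p\in A$; the single-letter bound gives $\vec{c}_{w\sigma}(p')\le \vec{c}_w(p)+W<\infty$, so $p'$ is reachable after $w\sigma$ and admits $\minweight(u,p'\to S)<\infty$. If one had $\vec{c}_{w\sigma}(p')<V'$, then $p'$ would be a state strictly below $q'$ in $\vec{c}_{w\sigma}$ still carrying a finite-weight run on $u$, contradicting the dominance of $q'$ via the witness $u$. Hence $\vec{c}_{w\sigma}(p')\ge V'$, and therefore $\vec{c}_w(p)\ge \vec{c}_{w\sigma}(p')-W\ge V'-W$.

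It then remains to pick $p^*\in \arg\min_{p\in A}\vec{c}_w(p)$ and verify that $p^*$ is dominant in $\vec{c}_w$ with witness $\sigma u$: the run $p^*\runsto{\sigma u}$ has finite weight because $p^*\in A$, and any $r$ with $\vec{c}_w(r)<\vec{c}_w(p^*)$ is automatically reachable, so if additionally $\minweight(\sigma u,r\to S)<\infty$ then $r\in A$, violating the minimality of $p^*$ in $A$; thus $\minweight(\sigma u, r\to S)=\infty$, as required. Consequently $\pot(w)\ge \vec{c}_w(p^*)\ge V'-W$, giving $\pot(w\sigma)-\pot(w)\le W$, and $k:=W$ works. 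The only subtle point will be a clean use of the seamless-baseline hypothesis: seamlessness of the baseline on $w\sigma$ restricts to seamlessness on $w$ (so $\pot(w)$ is defined), and guarantees that the states invoked above are indeed reachable with finite $\vec{c}$-values, so that the crude bound $|\vec{c}_{w\sigma}(p')-\vec{c}_w(p)|\le \wmax{\sigma}\le W$ applies along every finite-weight transition $p\runsto{\sigma}p'$ used in the argument.
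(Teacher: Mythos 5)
Your proof is correct and follows essentially the same approach as the paper's: both exhibit, from the dominance witness $u$ (resp.\ $z$) for the maximal dominant state of $\vec{c}_{w\sigma}$, a state minimizing $\vec{c}_w$ among those with a finite-weight run on $\sigma u$, and show it is dominant in $\vec{c}_w$ and lies within $W$ of $\pot(w\sigma)$. The only difference is presentation: you argue directly via the set $A$ and a numerical lower bound, whereas the paper phrases the same argument as a proof by contradiction starting from $\pot(w\sigma)-\pot(w)>W$.
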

\begin{proof}\ifproofs 
    Intuitively, the proof is simple: since $\Gamma$ is finite, the maximal weight that can be accumulated in a single step is bounded. 
    Thus, if $\maxdom(w\cdot \sigma)$ is much higher than $\maxdom(w)$, then the predecessor of $\maxdom(w\cdot \sigma)$ must have already started higher than $\maxdom(w)$, and is either a dominant state itself, or some state slightly lower but still higher than $\maxdom(w)$ is dominant. This contradicts the maximality of $\maxdom(w)$. 
    We turn to formalize this argument.
        
    Consider a finite alphabet $\Gamma$, and let $W$ be the maximal weight (in absolute value) that appears in a transition on a letter $\gamma\in \Gamma$.

    Consider a word $w\cdot \sigma\in \Gamma^*$ that has a seamless baseline run (whose prefix on $w$ is therefore also a seamless baseline run), and assume by way of contradiction that $\pot(w\cdot \sigma)-\pot(w)> W$.
    Let $p=\maxdom(w)$ and $p'=\maxdom(w\cdot \sigma)$, and let $\vec{c}_w=\xconf(w,\vec{c_\init})$ and $\vec{c}_{w\sigma}=\xconf(\sigma,\vec{c}_{w})$. 
    Consider the predecessor $q$ of $p'$. That is, $\minweight(w\sigma,q_0\runsto{w}q\runsto{\sigma}p')=\minweight(w\sigma,q_0\to p')$. We claim that $\vec{c}_w(q)>\vec{c}_w(p)$. 
    Indeed, more generally, if $r$ is a state for which $\vec{c}_w(p)\ge \vec{c}_w(r)$, then $r$ cannot yield any finite-weight run to $p'$ on $\sigma$, otherwise
    \begin{equation}
    \label{eq:pot bounded growth eq1}
    \begin{split}
    &\minweight(w\sigma,q_0\runsto{w}r\runsto{\sigma}p')\le \minweight(w\sigma,q_0\to r)+W=
    \vec{c}_w(r)+W\le \vec{c}_w(p)+W=\\    &\pot(w)+W<\pot(w\sigma)=\minweight(w\sigma,q_0\runsto{w}q\runsto{\sigma}p')=\minweight(w\sigma,q_0\to p')
    \end{split}
    \end{equation}
    which would be a contradiction.
        
    Since $p'=\maxdom(w\cdot \sigma)$ then by \cref{def:dominant state} there exists a word $z\in \Sigma^*$ (note that $z$ is not assumed to be over $\Gamma$) such that 
    $\minweight(z,p'\to Q)<\infty$ 
    and for every $q'$ with $\vec{c}_{w\sigma}(q')<\vec{c}_{w\sigma}(p')$ it holds that 
    $\minweight(z,q'\to Q)=\infty$.
    In particular, we have that 
    $\minweight(\sigma z,q\to Q)\le \minweight(\sigma z,q\runsto{\sigma}p'\runsto{z}Q)<\infty$.
    
    Let $r$ be a state attaining $\min\{\vec{c}_w(r)\mid \minweight(\sigma z, r\to Q)<\infty\}$. By similar argument to \cref{eq:pot bounded growth eq1} it must hold that $\vec{c}_w(r)> \vec{c}_w(p)$ (since any transition from $r$ on $\sigma$ can accumulate at most $W$, and therefore cannot surpass $\vec{c}_w(p')$ if starting below $\vec{c}_w(p)$).
    It follows that $r$ is a dominant state in $\vec{c}$, contradicting the assumption that $p=\maxdom(w)$. This concludes the proof.
\else \textbf{PROOFS REMOVED} \fi \end{proof}

\begin{lemma}[Charge Bounded Growth]
    \label{lem:bounded growth charge}
    For every finite alphabet $\Gamma\subseteq \Gamma^\infty_\infty$ there exists $k\in \bbN$ such that for every $w\in \Gamma^*$ and $\sigma\in \Gamma$, if $w\cdot \sigma$ has a seamless baseline run then $\charge(w\cdot \sigma)-\charge(w)<k$.
\end{lemma}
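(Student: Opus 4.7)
The plan is to reuse the strategy of \cref{lem:bounded growth potential} but in a much more direct form, since $\charge$ only tracks the minimum entry of the configuration, which behaves predictably under a single-letter extension.

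First, I would fix a finite $\Gamma \subseteq \Gamma^\infty_\infty$ and set
\[
W = \max\bigl\{|c| \ \big|\ \gamma \in \Gamma,\ \exists s,r \in S,\ (s,\gamma,c,r) \in \augTrans^\infty_\infty,\ c < \infty\bigr\}.
\]
Since $\Gamma$ is finite, the set of finite-weight transitions labeled by letters of $\Gamma$ is finite, so $W$ is a well-defined natural number. I will then show that $k := W+1$ works.

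The main step is the following calculation. Let $w \in \Gamma^*$ and $\sigma \in \Gamma$ be such that $w\sigma$ has a seamless baseline run (so in particular $w$ does too, and both $\charge(w)$ and $\charge(w\sigma)$ are defined). Write $\vec{c}_w = \xconf(w, \vec{c}_\init)$ and $\vec{c}_{w\sigma} = \xconf(\sigma, \vec{c}_w)$. Pick $p \in \arg\min\{\vec{c}_{w\sigma}(q) \mid q \in S\}$, so $\charge(w\sigma) = -\vec{c}_{w\sigma}(p)$. If $\vec{c}_{w\sigma}(p) = \infty$ the difference is non-positive and the bound is trivial, so assume $\vec{c}_{w\sigma}(p) < \infty$. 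Then some run witnesses this minimum; let its last transition be of the form $(q, \sigma, c, p) \in \augTrans^\infty_\infty$ with $c < \infty$, so that $|c| \le W$. The prefix of this run on $w$ attains weight $\vec{c}_{w\sigma}(p) - c$ at $q$, hence by definition of $\xconf$,
\[
\vec{c}_w(q) \ \le\ \vec{c}_{w\sigma}(p) - c \ \le\ \vec{c}_{w\sigma}(p) + W.
\]
Rearranging and using $-\vec{c}_w(q) \le \charge(w)$ gives $\charge(w\sigma) \le \charge(w) + W < k$.

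Two minor points I would verify carefully: (i) the predecessor $q$ is always available because any finite-weight configuration entry arises as $\vec{c}_w(q) + c$ for some such transition (this is just the definition of $\xconf$, and does not rely on seamlessness); and (ii) the assumption that $w\sigma$ has a seamless baseline run is needed only for $\charge(w\sigma)$ to be defined as in \cref{def:charge}, not for the inequality itself. Unlike the potential case, there is no obstacle corresponding to the existence of a ``killing suffix'' — the charge sees only the raw minimum of the configuration — so the proof is strictly shorter than that of \cref{lem:bounded growth potential} and the main subtlety is simply isolating the last transition in the minimal run.
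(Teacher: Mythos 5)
Your proof is correct and takes essentially the same route as the paper: bound the weight of the single transition on $\sigma$ by $W$, observe that a minimal-weight run to the new minimum state on $w\sigma$ passes through some state $q$ after $w$, and conclude $-\charge(w\sigma)\ge -\charge(w)-W$. The paper phrases this via the subadditivity inequality $\minweight(w\sigma,q_0\to p')\ge \minweight(w,q_0\to Q)+\minweight(\sigma,Q\to Q)$ rather than by explicitly naming the predecessor state, but this is the same argument; your remark that seamlessness is needed only for $\charge$ to be defined, not for the inequality, is also accurate.
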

\begin{proof}\ifproofs 
    The proof idea is dual to that of \cref{lem:bounded growth potential} (but simpler): the minimal (negative) weight that can be accumulated in a single step is bounded from below, and therefore the minimal run cannot drop too much in a single step.
    
    Consider a finite alphabet $\Gamma$, and let $W$ be the maximal weight (in absolute value) that appears in a transition on a letter $\gamma\in \Gamma$.
    Consider a word $w\cdot \sigma\in \Gamma^*$ that has a seamless baseline run (whose prefix on $w$ is therefore also a seamless baseline run).

    Let $p$ and $p'$ be the states attaining $-\charge(w)$ and $-\charge(w\sigma )$, respectively. In particular, it holds that $\minweight(w,q_0\to Q)=\minweight(w,q_0\to p)=-\charge(w)$ and $\minweight(w\sigma ,q_0\to Q)=\minweight(w\sigma ,q_0\to p')=-\charge(w\sigma)$.
    We then have 
    \[\minweight(w\sigma,q_0\to p')\ge \minweight(w,q_0\to Q)+\minweight(\sigma,Q\to Q)\ge \minweight(w,q_0\to p)-W\]
    We thus have $-\charge(w\sigma)\ge -\charge(w)-W$, so $\charge(w\sigma)-\charge(w)\le W$, concluding the claim.
\else \textbf{PROOFS REMOVED} \fi \end{proof}
In contrast to \cref{lem:bounded growth charge}, $\charge$ can \emph{decrease} abruptly, if a minimal run cannot continue, so a much higher run becomes minimal. 
The following definition, when applicable, guarantees this doesn't happen.
\begin{definition}[Bounded-Decrease Charge]
\label{def:charge bounded decrease}
Consider a word $xyz$ where $\charge$ is defined, and let $B\in \bbN$. 
We say that the infix $y$ has \emph{$B$ bounded-decrease charge in $xyz$} if for every prefix $u\sigma$ of $y$ it holds that $\charge(xu)-\charge(xu\sigma)<B$.
\end{definition}
A useful property linking charge with potential is that when the charge does not have bounded decrease, it implies that the potential is unbounded. For this result, we restrict the alphabet to $\Gamma_\infty^0$, which is our main focus for most of the paper.
More precisely, we have the following.
\begin{lemma}[\lightbulbicon Unbounded Charge Decrease Implies Unbounded Potential]
\label{lem:charge no bounded decrease then potential unbounded}
    Let $\Gamma'\subseteq \Gamma_\infty^0$ be a finite alphabet, and assume that for every $B\in \bbN$ there exist a word $u\in \Gamma'^*$ and a letter $\sigma\in \Gamma'$ such that $\charge(u)-\charge(u \sigma)\ge B$.
    %
    Then $\sup\{\pot(w)\mid w\in (\Gamma_0^0)^*\}=\infty$.
\end{lemma}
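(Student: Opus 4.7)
The plan is: given arbitrarily large $B$, produce a word in $(\Gamma_0^0)^* = \Gamma^*$ of potential at least $B - W$, where $W$ denotes the maximum absolute transition weight over the finite alphabet $\Gamma'$. I would execute this in three steps: extracting a high-weight dominant state, flattening back to $\Gamma$, and a baseline shift.

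Given $u \in \Gamma'^*$ and $\sigma \in \Gamma'$ with $\charge(u) - \charge(u\sigma) \ge B$, set $\vec{c} = \xconf(u, \vec{c}_{\init})$ and let $H \subseteq S$ be the set of states admitting some finite-weight $\sigma$-transition. A one-line calculation using the $W$-bound on $\Gamma'$-transition weights shows that every $r \in H$ satisfies $\vec{c}(r) \ge -\charge(u) + B - W$: any $\sigma$-successor of $r$ cannot have $\xconf(u\sigma, \vec{c}_{\init})$-value below $-\charge(u\sigma) \ge -\charge(u) + B$. Pick $r^* \in \arg\min_{r \in H} \vec{c}(r)$. By minimality every state strictly below $r^*$ in $\vec{c}$ lies outside $H$ and hence admits no finite-weight $\sigma$-transition, so $\sigma$ itself witnesses that $r^*$ is dominant in $\vec{c}$. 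Therefore $\pot(u) \ge -\charge(u) + B - W$.

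Next I would flatten $u$ to $v = \flatten(u \wr F)$ with $F > 2\maxeff{u}$. Because $u \in (\Gamma_\infty^0)^*$ contains no rebase letters, flattening only performs cactus unfolding and outputs a jump-free $v \in \Gamma^*$. By \cref{lem:flattening configuration characterization}, $\xconf(v, \vec{c}_{\init})$ agrees with $\vec{c}$ on $\supp(\vec{c})$ and all new (ghost-reachable) states sit strictly above every entry of $\vec{c}$; in particular $\charge(v) = \charge(u)$, the set of states strictly below $r^*$ is unchanged, and the same witness letter $\sigma$ (allowed to lie outside $\Gamma$ per \cref{def:dominant state}) still certifies dominance of $r^*$ in $\xconf(v, \vec{c}_{\init})$. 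Hence $\pot(v) \ge -\charge(v) + B - W$.

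Finally, I would baseline-shift $v$ to a seamless minimum-weight run $\rho_0$ from $q_0$, producing $v' = \baseshift{v}{\rho_0}$. Because $v \in \Gamma^*$ contains only standard transition letters, the construction in \cref{sec: baseline shift} relabels each such letter to another $\Gamma$-letter, so $v' \in \Gamma^*$; \cref{cor:baseline shift to seamless run} ensures the new baseline is seamless and \cref{cor:baseline shift maintains gaps} implies every configuration value shifts uniformly by $+\charge(v)$ (since $\rho_0$ moves from weight $-\charge(v)$ to $0$). Dominance of $r^*$ is preserved, as both the automaton and the relative ordering of configuration values are untouched, so $r^*$ reaches weight $\vec{c}(r^*) + \charge(v) \ge B - W$ in $\xconf(v', \vec{c}_{\init})$, giving $\pot(v') \ge B - W$ with $v' \in (\Gamma_0^0)^*$; letting $B \to \infty$ yields the claim. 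I expect the main delicate point to be verifying that flattening preserves \emph{seamlessness} of the baseline at prefixes interior to unfolded cacti, which should follow from the $0$-weight baseline cycles together with the no-negative-cycle property of stable cycles in \cref{sec:stable cycles}.
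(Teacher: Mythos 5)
Your plan (extract a dominant state with a gap above the minimum, flatten to $\Gamma^*$, baseline-shift to the minimal run) is exactly the structure of the paper's proof, and the first two steps are sound: the calculation bounding $\vec{c}(r^*)$ is correct, $\sigma$ itself witnesses dominance of $r^*$ in $\vec{c}_u$, and \cref{lem:flattening configuration characterization} transfers this to $v=\flatten(u\wr F)$ since $r^*$ lies in $\supp(\vec{c}_u)$ and the newly reachable ghost states sit above $r^*$.

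The genuine gap is in the final step, and it is not the one you flagged. After the shift $v'=\baseshift{v}{\rho_0}$, the reachable states change their \emph{baseline component}: what was $(q_{r^*},p,T)$ in $\xconf(v,\vec{c}_{\init})$ becomes $(q_{r^*},q,T)$ in $\xconf(v',\vec{c}_{\init})$, where $q$ is the final inner state of $\rho_0$. The witness letter $\sigma$ only has finite-weight transitions from states whose baseline component equals the one prescribed by $\sigma$; after the shift, no reachable state has that baseline component, so $\sigma$ on its own does \emph{not} certify dominance of the shifted $r^*$. Your sentence ``Dominance of $r^*$ is preserved, as both the automaton and the relative ordering of configuration values are untouched'' conflates ``same inner state and same relative ordering'' with ``same dominance witness'', but dominance per \cref{def:dominant state} is a property of the specific state in $S$, not of the ordering. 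The paper repairs exactly this in its Step~5 by prepending a jump letter, taking the suffix $\jl_{(\cdot,q,T)\to(\cdot,p,T)}\cdot\sigma$, which is permitted since dominance witnesses range over $\Gamma_\infty^\infty$ including jump letters. Equivalently you could have invoked \cref{prop:potential over baseline shift} (whose proof carries out the jump-letter argument once and for all), giving $\pot(v')=\pot(v)-\weight(\rho_0(v))=\pot(v)+\charge(v)\ge B-W$ directly.

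As for the worry you end on — that flattening might fail to preserve seamlessness of the baseline at prefixes inside unfolded cacti — this is not a live concern here: \cref{prop:unfolding cactus maintains seamlesss gaps} and \cref{lem:flattening configuration characterization} already deliver a seamless baseline run on $v$, and you do not need to reason about interior prefixes of the unfolded segments at all, only about the full configuration $\vec{c}_v$.
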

\begin{proof}
    Intuitively, since $\sigma$ causes a large jump in the charge, this means that $\sigma$ cannot be read by the minimal run after $u$, and that the only run that can read it is much higher. This already suggests that the potential is unbounded. 
    Technically, all that needs to be done is to perform a baseline shift (\cref{sec: baseline shift}) on a minimal run on $u$. This, in turn, also requires us to first flatten (\cref{sec: cactus unfolding}) this prefix, so that we remain in $\Gamma_0^0$. 
    We recommend following this argument, as gentle preparation to the upcoming horrors of 
    \cref{lem: decomposed seq with cover sparse no ghosts implies unbounded potential}.
    We illustrate the proof in \cref{fig:charge no bounded decrease potential unbounded}.
    
\begin{figure}[ht]
    \centering
    \begin{subfigure}{0.2\textwidth}
        \centering
        \includegraphics[width=\textwidth]{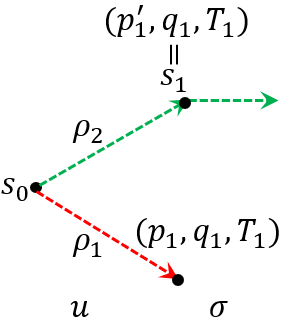}
        \caption{Step 1}
        \label{fig:charge to pot Step1}
    \end{subfigure} 
    ~
    \begin{subfigure}{0.25\textwidth}
        \centering
        \includegraphics[width=\textwidth]{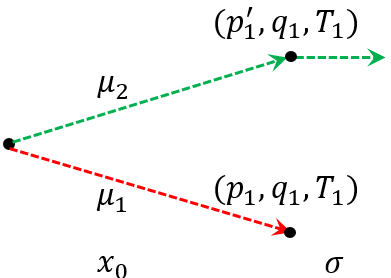}
        \caption{Step 2}
        \label{fig:charge to pot Step2}
    \end{subfigure}    
    ~
    \begin{subfigure}{0.45\textwidth}
        \centering
        \includegraphics[width=\textwidth]{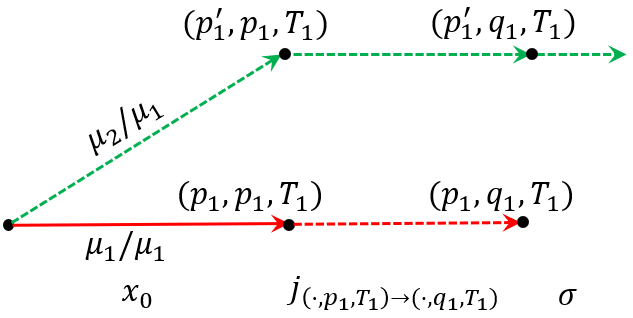}
        \caption{Step 3}
        \label{fig:charge to pot Step3}
    \end{subfigure}
    \caption{The proof flow of \cref{lem:charge no bounded decrease then potential unbounded}}
    \label{fig:charge no bounded decrease potential unbounded}
\end{figure}

    Denote by $W_{\max}$ the maximal weight appearing in a transition over any letter in $\Gamma'$.
    Let $P\in \bbN$, we show that the potential can be greater than $P$. By the assumption, there exist $u,\sigma$ such that $\charge(u)\ge \charge(u\sigma)+ P+2W_{\max}$.

    Since $\charge$ is the negation of the minimal weight, there exists a seamless run $\rho_1:s_0\runsto{u} S$ such that $\weight(\rho_1)=-\charge(u)=\minweight(u,s_0\to S)$. 
    Let $\vec{c_u}=\xconf(\vec{c_\init},u)$, and consider the minimal state $s_1$ that can read $\sigma$. That is, $s_1\in \arg\min\{s\in \supp(\vec{c_u})\mid \minweight(\sigma,s_1\to S)<\infty\}$. It must hold that $\vec{c_u}(s_1)>\weight(\rho_1)+P$ as otherwise (since $W_{\max}$ bounds the weight incurred on $\sigma$) the minimal run on $u\sigma$ has weight at most 
    \[-\charge(u)+P+W_{\max}< -\charge(u)+P+2W_{\max}\le  -\charge(u\sigma)\]
    where the last inequality follows by our requirement on $\charge(u\sigma)$. But this is a contradiction, since $\minweight(s_0,u\sigma)=-\charge(u\sigma)$.
    Put simply -- if the charge has a large jump on $\sigma$, then the minimal state that can read $\sigma$ is much higher than the minimal state up to $\sigma$. Note that $s_1$ is therefore dominant (\cref{def:dominant state}).
    Let $\rho_2:s_0\runsto{u}s_1$ be a seamless run to $s_1$ (see \cref{fig:charge to pot Step1}).
    
    We now turn to flattening the prefix (\cref{def:cactus rebase flattening}).
    Let $x_0=\flatten(u \wr 2W_{\max}|u\sigma|)$. Since $\Gamma'$ does not contain rebase letters, we have $x_0\in (\Gamma_0^0)^*$ (since flattening may only introduce jump letters in case of rebase letters).
    Let $\mu_1:s_0\runsto{x_0}S$ be a seamless run such that $\weight(\mu_1)=-\charge(x_0)$ and similarly $\mu_2:s_0\runsto{x_0}s_1$ a seamless run such that $\weight(\mu_2)=\weight(\rho_2)$. These exist by 
    \cref{lem:flattening configuration characterization}.
    By the same lemma, for every state $s'$ with $\minweight(x_0,s_0\to s')<\weight(\mu_2)$ we have $\minweight(\sigma,s'\to S)=\infty$ i.e., $s_1$ is still the minimal state from which $\sigma$ can be read, and $\weight(\mu_2)-\weight(\mu_1)\ge P$ (see \cref{fig:charge to pot Step2}).
    
    We now perform a baseline shift on $\mu_1$. 
    By \cref{cor:baseline shift maintains gaps}, we have 
    \[P\le \weight(\baseshift{\mu_2}{\mu_1})-\weight(\baseshift{\mu_1}{\mu_1})=\weight(\baseshift{\mu_2}{\mu_1})\]
    (where $\weight(\baseshift{\mu_1}{\mu_1})=0$ as it is the baseline run, as per \cref{cor:baseline shift to seamless run}).
    Essentially, all that remains to do is adapt $\sigma$ to the baseline shift, so that it can be read from the ``shifted $s_1$'' but not from any state below it. This is achieved using jump letters, as follows.

    Denote the state reached by $\rho_1$ as $(p_1,q_1,T_1)$ and $s_1=(p'_1,q_1,T_1)$ (since the last two components are deterministic, they are the same for both states). Keep in mind that there is a finite weight transition from $s_1$ on $\sigma$ to some state. Note that $\mu_1$ also ends in $(p_1,q_1,T_1)$, and $\mu_2$ ends in $(p'_1,q_1,T_1)$ (since flattening does not change the reachable states and the baseline component).

    Therefore, after the baseline shift we have that $\baseshift{\mu_1}{\mu_1}:s_0\to (p_1,p_1,T_1)$ and $\baseshift{\mu_2}{\mu_1}:s_0\to (p'_1,p_1,T_1)$. 
    We now consider the suffix $\jl_{(\cdot,p_1,T_1)\to (\cdot,q_1,T_1)}\cdot \sigma$. There exists some $r'\in S$ such that
    \[
    s_0\runsto{\baseshift{x_0}{\mu_1}}(p'_1,p_1,T_1) \runsto{\jl_{(\cdot,p_1,T_1)\to (\cdot,q_1,T_1)}} (p'_1,q_1,T_1)\runsto{\sigma} r'
    \]
    However, for any state below $(p'_1,p_1,T_1)$ in the configuration $\vec{c}=\xconf(\vec{c_\init},\baseshift{x_0}{\mu_1})$, there is no finite-weight run on $\jl_{(\cdot,p_1,T_1)\to (\cdot,q_1,T_1)}\cdot \sigma$. Indeed, such a run implies a run from a state lower than $s_1$ on $\sigma$, which contradicts the definition of $s_1$ (see \cref{fig:charge to pot Step3}).

    We conclude that $\pot(\baseshift{x_0}{\mu_1})\ge P$, so we are done (note that we do not have equality, since we do not assume that $s_1$ is maximal dominant, only dominant).   
\end{proof}

Our next simple result is that both potential and charge are monotone with respect to the initial configuration, as follows.
\begin{lemma}
\label{lem:potential and charge are monotone}
        Consider configurations $\vec{c},\vec{d}$ where $\vec{c}\le \vec{d}$ such that $\supp(\vec{c})=\supp(\vec{d})$, there is a single baseline state $q\in \supp(\vec{c})$, and $\vec{c}(q)=\vec{d}(q)=0$.
        
        Then for every $w\in \Sigma^*$ with a seamless baseline run from $\vec{c}$ it holds that:
        \begin{enumerate}
            \item The base run of $\xconf(w,\vec{d})$ is seamless.
            \item $\pot(\xconf(w,\vec{c}))\le \pot(\xconf(w,\vec{d}))$.
            \item $\charge(\xconf(w,\vec{c}))\ge \charge(\xconf(w,\vec{d}))$
        \end{enumerate}
    \end{lemma}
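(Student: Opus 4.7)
The proof rests on two elementary observations. Since $\vec{c} \le \vec{d}$ coordinatewise and $\supp(\vec{c}) = \supp(\vec{d})$, for every run $\rho : p \runsto{w'} r$ one has $\vec{c}(p) + \weight(\rho) \le \vec{d}(p) + \weight(\rho)$, and the set of states reachable by $w'$ from either support is the same. Writing $\vec{c}_{w'} = \xconf(w',\vec{c})$ and $\vec{d}_{w'} = \xconf(w',\vec{d})$, these two facts together yield $\vec{c}_{w'} \le \vec{d}_{w'}$ coordinatewise and $\supp(\vec{c}_{w'}) = \supp(\vec{d}_{w'})$ for every prefix $w'$ of $w$.

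For item~1, the baseline component evolves deterministically and every baseline-to-baseline transition has weight $0$ (Remark~\ref{rmk:baseline transitions with cactus are zero}), so the baseline run from $\vec{d}$ exists and accumulates weight $0$ on every prefix. To establish seamlessness, I must show that $0$ is the minimum weight of a run into the current baseline state from $\vec{d}$ at every prefix $w'$; the inequality $\le 0$ is witnessed by the baseline run itself, and the inequality $\ge 0$ follows since any hypothetical run $\rho : p \runsto{w'} b$ into the baseline state $b$ with $\vec{d}(p) + \weight(\rho) < 0$ would, via $\vec{c}(p) \le \vec{d}(p)$, give a run of negative weight from $\vec{c}$ to $b$, contradicting the seamlessness of the baseline run from $\vec{c}$. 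Item~3 is then immediate from the pointwise inequality on configurations: $\charge(\vec{c}_w) = -\min_q \vec{c}_w(q) \ge -\min_q \vec{d}_w(q) = \charge(\vec{d}_w)$.

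The main step is item~2. I pick a max-dominant state $q^* \in \maxdom(\vec{c}_w)$ with a witness word $u$ so that $\minweight(u, q^* \to S) < \infty$ while $\minweight(u, p \to S) = \infty$ whenever $\vec{c}_w(p) < \vec{c}_w(q^*) = \pot(\vec{c}_w)$, and then let
\[
q' \;\in\; \arg\min\bigl\{\vec{d}_w(p) \;\mid\; \minweight(u, p \to S) < \infty\bigr\}.
\]
The set over which this minimum is taken depends only on $\supp(\vec{d}_w) = \supp(\vec{c}_w)$, so it coincides with the corresponding set in $\vec{c}_w$; hence every such $p$ satisfies $\vec{c}_w(p) \ge \pot(\vec{c}_w)$, and combined with $\vec{c}_w \le \vec{d}_w$ this gives $\vec{d}_w(q') \ge \pot(\vec{c}_w)$. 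The word $u$ witnesses dominance of $q'$ in $\vec{d}_w$ by the very choice of $q'$ (any state strictly below $q'$ in $\vec{d}_w$ lies outside the minimized set, and thus has infinite weight on $u$), so $\pot(\vec{d}_w) \ge \vec{d}_w(q') \ge \pot(\vec{c}_w)$. The one place that requires care is that dominance is defined via strict inequality in the underlying configuration, so the set of ``states strictly below $q^*$'' need not be preserved when passing from $\vec{c}_w$ to $\vec{d}_w$; the argmin construction of $q'$ is precisely what sidesteps this issue, by selecting a fresh dominant witness in the shifted configuration rather than trying to reuse $q^*$ directly.
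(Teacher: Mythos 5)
Your proof is correct and takes essentially the same route as the paper's: the pointwise inequality and support-equality propagate along prefixes, item 1 follows by comparing baseline-run weights under $\vec c$ and $\vec d$, item 3 is immediate from pointwise comparison, and for item 2 you construct a fresh dominant witness in the shifted configuration (your $q'$ is the paper's $s''$) rather than reusing $q^*$, precisely to handle the strict-inequality subtlety you flag. The only wrinkle is a slightly loose phrase in item 2 — the set $\{p : \minweight(u,p\to S)<\infty\}$ depends on $u$ alone, not on the supports — but what your argument actually uses (every such $p$ in $\supp(\vec c_w)$ has $\vec c_w(p)\ge\pot(\vec c_w)$, plus $\supp(\vec c_w)=\supp(\vec d_w)$ to guarantee $q'$ has finite $\vec d_w$-value) is correct and matches the paper.
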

\begin{proof}
    Since $\vec{c}\le \vec{d}$, then for every run $\rho:p\runsto{y}q$ we have $\weight_{\vec{c}}(\rho)\le \weight_{\vec{d}}(\rho)$, and in particular 
    \begin{equation}
    \label{eq:potential and charge are monotone main}
        \minweight_{\vec{c}}(y,P\to R)\le \minweight_{\vec{d}}(y,P\to R) \quad  \text{ for every } P,R\subseteq Q.
    \end{equation}
        Using this, we prove the three claims.
    \begin{enumerate}
        \item Consider the baseline run $\rhobase=(t_1,\ldots,t_n)$ on $w$ from $\vec{c}$. Since transitions between baseline states have weight $0$, and since $\vec{c}(q)=\vec{d}(q)=0$ for their single baseline state $q$, it follows that for every prefix $x$ of $w$ we have  
        $\weight_{\vec{c}}(t_1,\ldots,t_{|x|})=\weight_{\vec{d}}(t_1,\ldots,t_{|x|})=0$.
        
        Since $\rhobase$ is seamless from $\vec{c}$, then for every prefix $x$ of $w$ and baseline state $q_{|x|}$ reached by $\rhobase$ after reading $x$ we have 
        $\weight_{\vec{c}}(t_1,\ldots,t_{|x|})=\minweight_{\vec{c}}(x,Q\to q_{|x|})$.
        By \cref{eq:potential and charge are monotone main} we have
        \[\weight_{\vec{d}}(t_1,\ldots,t_{|x|})=\weight_{\vec{c}}(t_1,\ldots,t_{|x|})=\minweight_{\vec{c}}(x,Q\to q_{|x|})\le \minweight_{\vec{d}}(x,Q\to q_{|x|})\]
        so $\rhobase$ is seamless from $\vec{d}$ as well.

        \item Denote $\vec{c'}=\xconf(w,\vec{c})$ and $\vec{d'}=\xconf(w,\vec{d})$
        By \cref{eq:potential and charge are monotone main} we have that $\vec{c'}\le \vec{d'}$. 
        Consider the state $s=\maxdom(\vec{c'})$ and let $z\in \Sigma^*$ be a suffix such that 
        $\minweight(z,s\to Q)<\infty$
        and 
        $\minweight(z,s'\to Q)=\infty$ 
        for every $s'$ such that $\vec{c'}(s')< \vec{c'}(s)$, as per \cref{def:potential}.
        Since $\supp(\vec{c'})=\supp(\vec{d'})$ (following $\supp(\vec{c})=\supp(\vec{d})$), it holds that 
        $\minweight(z,s'\to Q)<\infty$. 
        Let $s''$ be a state that attains the minimum $\min\{\minweight_{\vec{d'}}(z,s''\to Q)<\infty\mid s''\in Q\}$. That is, $s''$ is dominant with respect to $z$. 
        Notice that $\vec{c'}(s)\le \vec{c'}(s'')$ since $s=\maxdom(\vec{c'})$, but $\vec{c'}\le \vec{d'}$, so $\vec{c'}(s)\le \vec{d'}(s'')$. 
        In particular, 
        \[\domval(\vec{c'})=\vec{c'}(s)\le \vec{d'}(s'')\le \domval(\vec{d'})\]

        \item Again denote $\vec{c'}=\xconf(w,\vec{c})$ and $\vec{d'}=\xconf(w,\vec{d})$. Since $\vec{c'}\le \vec{d'}$ and there are seamless baseline runs to $\vec{c'}$ and $\vec{d'}$, we have $\min\{\vec{c'}(q)\mid q\in Q\}\le \min\{\vec{d'}(q)\mid q\in Q\}$, so by \cref{def:potential} we have $\charge(\vec{c'})\ge \charge(\vec{d'})$.
    \end{enumerate}
\end{proof}

We now turn to link the potential with baseline shift (\cref{sec: baseline shift}).
\begin{proposition}
    \label{prop:potential over baseline shift}
    Consider a word $u$ such that $\pot(u)$ is defined, and let $\rho$ be a seamless run on $u$. For every prefix $v$ of $u$, we have that $\pot(\baseshift{v}{\rho})=\pot(v)-\weight(\rho(v))$.
\end{proposition}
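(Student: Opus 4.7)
The plan is to leverage the baseline-shift machinery of \cref{sec: baseline shift}. Writing $\rho(v): s_0 \runsto{v} (q_v, p_v, T)$, where $p_v$ is the baseline component of $v$ at position $|v|$ and $q_v$ is the inner state of $\rho$ there, the baseline run of $\baseshift{v}{\rho}$ corresponds to the shifted $\rho(v)$, which by \cref{cor:baseline shift to seamless run} is seamless and jump-free (inheriting the latter from $\rho$, since $u$ and hence $\rho$ is jump-free). Hence $\pot(\baseshift{v}{\rho})$ is well-defined. Applying \cref{cor:baseline shift maintains gaps} to $\rho(v)$ and any seamless run $\mu$ on $v$, and using that $\baseshift{\rho(v)}{\rho(v)}$ has weight $0$, gives $\weight(\baseshift{\mu}{\rho(v)}) = \weight(\mu) - \weight(\rho(v))$. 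Thus the weight of every reachable state in $\vec{c}' := \xconf(\baseshift{v}{\rho}, \vec{c_\init})$ equals its weight in $\vec{c} := \xconf(v, \vec{c_\init})$ minus $\weight(\rho(v))$.

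By the run bijection of \cref{prop:baseline shift run bijection}, only the baseline (second) component of states changes under the shift; the first and third components are preserved. Therefore there is a canonical bijection between $\supp(\vec{c})$ and $\supp(\vec{c}')$ sending $(f, p_v, T)$ to $(f, q_v, T)$, and the weights satisfy $\vec{c}'((f, q_v, T)) = \vec{c}((f, p_v, T)) - \weight(\rho(v))$. So the two configurations have the same ``shape'' up to a uniform shift by $-\weight(\rho(v))$, and in particular the ordering of states by weight coincides under the bijection.

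The main technical step is to show that dominance is preserved along this bijection: $(f, p_v, T)$ is dominant in $\vec{c}$ iff $(f, q_v, T)$ is dominant in $\vec{c}'$. Given a suffix $w$ witnessing dominance of $(f, p_v, T)$ in $\vec{c}$, extend $\rho$ to a run $\rho^+$ on $vw$ by appending any seamless continuation from $(f, p_v, T)$; applying baseline shift to $vw$ using $\rho^+$ produces $\baseshift{v}{\rho}\cdot w'$ for some $w'$. By the bijection (applied now to the suffix), a run on $w$ from a state $(f', p_v, T)$ has finite weight iff the corresponding run on $w'$ from $(f', q_v, T)$ does, so $w'$ witnesses dominance of $(f, q_v, T)$. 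The converse direction is symmetric, using that baseline shift is effectively invertible: shifting $\baseshift{v}{\rho}$ via its own seamless baseline recovers $v$, per \cref{obs:pov shift are inverses}-style reasoning. Since maximal dominant states thus correspond, their weights differ by exactly $\weight(\rho(v))$, yielding $\pot(\baseshift{v}{\rho}) = \pot(v) - \weight(\rho(v))$. The subtle point to be careful about is ensuring that the witness suffix $w$ need not itself be ``shifted'' to be read from $(f, p_v, T)$ versus $(f, q_v, T)$---the run bijection handles this automatically because the shift only modifies the prefix $v$, with the suffix $w$ being untouched but its shifted counterpart $w'$ obtained from the same $\rho^+$-shift.
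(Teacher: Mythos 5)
Your overall structure mirrors the paper's: use the run bijection and \cref{cor:baseline shift maintains gaps} to observe that weights shift uniformly by $-\weight(\rho(v))$, and then try to transport dominance across the shift. The configuration-level bijection and the weight bookkeeping are fine.

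The gap is in the dominance-preservation step. You propose building a run $\rho^+$ on $vw$ by ``extending $\rho$'' over the witness suffix $w$, and then shifting $vw$ by $\rho^+$. Two problems. First, $\rho(v)$ ends at $(q_v,p_v,T)$, not at the dominant state $(f,p_v,T)$, so you cannot ``append a continuation from $(f,p_v,T)$''---a run must be state-contiguous, and the only place you could append is $(q_v,p_v,T)$. Second, and more fundamentally, even charitably reading your construction as appending a continuation from $(q_v,p_v,T)$, such a finite-weight continuation over $w$ need not exist. The seamless run $\rho$ is arbitrary, so $\weight(\rho(v))=\vec{c}((q_v,p_v,T))$ can be strictly below $\pot(v)=\vec{c}((f,p_v,T))$. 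But then the very definition of the witness suffix $w$ (\cref{def:dominant state}) forces $\minweight(w,(q_v,p_v,T)\to S)=\infty$: $w$ was chosen to kill every state below the dominant one. So $\rho^+$ does not exist as a run, and the ``shifted suffix $w'$'' you want is undefined. Since the baseline shift operator requires a genuine finite-weight run over the full word, you cannot route around this by shifting $vw$.

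The paper sidesteps this cleanly: rather than trying to shift the witness suffix, it builds a \emph{new} witness suffix for the shifted word, namely $\jl_{q'\to q}\cdot z$. The jump letter has weight $0$ and deterministically converts the baseline component back from $q_v$ to $p_v$, after which the original suffix $z$ can be applied unmodified. This is precisely the situation jump letters were introduced to handle. The reverse inequality is then obtained by applying the same argument to $\baseshift{v}{\rho}$ shifted back by (the shift of) the original baseline run, using \cref{obs:pov shift are inverses}-style cancellation, which you also identified; but that direction has the same gap in your version, since it again needs a run extension over a dominance-witness suffix.
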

\begin{proof}
    The idea is quite intuitive: we shift all so that $\rho$ becomes baseline, so while the potential itself may change, it is still attained by the same dominating state, and the difference in weight is exactly $\rho(v)$. 
    The only nontrivial part of the proof is translating a ``separating suffix'' $z$ that works for $v$ to one that works for $\pot(\baseshift{v}{\rho})$. This is achieved using jump letters.
    
    Let $\rho_0$ be the baseline run over $v$, so that $\weight(\rho_0)=0$.
    Let $\vec{c}=\xconf(\vec{c_{\init}},v)$ and $\vec{c'}=\xconf(\vec{c_{\init}},\baseshift{v}{\rho})$, and let $q=\maxdom(\vec{c})$, i.e., $\vec{c}(q)=\pot(v)$. Thus, there exists $z\in \Gamma_\infty^\infty$ (possibly with jump letters) such that $\minweight(z,q\to S)<\infty$ and for every $p$ with $\vec{c}(p)<\vec{c}(q)$ we have $\minweight(z,p\to S)=\infty$. Let $\mu$ be a seamless run $\mu:s_0\runsto{v}q$. We now consider the corresponding ``shifted'' state $q'\in S$ such that $\baseshift{\mu}{\rho}:s_0\runsto{\baseshift{v}{\rho}}q'$. 
    By \cref{cor:baseline shift maintains gaps} we have that $\baseshift{\mu}{\rho}$ is also a seamless run (since its gaps from other runs are the same as those of the seamless $\mu$), and in addition
    \[
    \pot(v)=\weight(\mu)-\weight(\rho_0)=\weight(\baseshift{\mu}{\rho})-\weight(\baseshift{\rho_0}{\rho})=\vec{c'}(q')+\weight(\rho)
    \]
    where the latter equality, namely $\weight(\rho)=-\weight(\baseshift{\rho_0}{\rho})$ follows from \cref{prop:baseline shift run bijection}.

    It thus remains to prove that $\vec{c'}(q')=\pot(\baseshift{v}{\rho})$. To this end, consider the suffix $\jl_{q'\to q}$ (note that by \cref{prop:baseline shift run bijection}) this is indeed a legal jump letter (and in particular has cost $0$ from $q'$ to $q$), then we have 
    \[
    \minweight(\jl_{q'\to q}\cdot z,q'\to S)\le 
    \minweight(\jl_{q'\to q}\cdot z,q'\runsto{\jl_{q'\to q}}q\runsto{z} S)<\infty
    \]
    On the other hand, still by \cref{prop:baseline shift run bijection}, for every state $s'\in \supp(\vec{c'})$, if $\vec{c'}(s')<\vec{c'}(q')$, then the ``corresponding state'' $s\in \supp(\vec{c})$, i.e., the state such that a seamless run $\eta:s_0\runsto{v} s$ has $\baseshift{\eta}{\rho}:s_0\runsto{\baseshift{v}{\rho}} s'$, satisfies $\vec{c}(s)<\vec{c}(q)$. Moreover, we have that $s'\runsto{\jl_{q'\to q}} s$, and this is the only transition from $s'$ on $\jl_{q'\to q}$ (this is immediate by the shape of the states, as described in \cref{prop:baseline shift run bijection}).
    We therefore have
    \[
    \minweight(\jl_{q'\to q}\cdot z,s'\to S)=
    \minweight(\jl_{q'\to q}\cdot z,s'\runsto{\jl_{q'\to q}}s\runsto{z} S)=\infty
    \]
    So indeed $q'$ is dominant. We therefore have $\vec{c'}(q')\le \pot(\baseshift{v}{\rho})$, so we conclude
    $\pot(\baseshift{v}{\rho})\ge \pot(v)-\weight(\rho)$.

    In order to obtain equality, we could use the same technique as above. However, a simpler argument is just to shift again to the original baseline. That is, we apply the proof thus far to the word $\baseshift{v}{\rho}$ and shift to the run $\baseshift{\rho_0}{\rho}$. We thus have
    \[\pot(\baseshift{\baseshift{v}{\rho}}{\rho_0})\ge\pot(\baseshift{v}{\rho})-\weight(\baseshift{\rho_0}{\rho})\]
    However, recall from above that $-\weight(\baseshift{\rho_0}{\rho})=\weight(\rho)$, and observe that nesting the baseline shift operator is equivalent to applying the outermost one, so since $\rho_0$ is the baseline run on $v$, we have
    \[\baseshift{\baseshift{v}{\rho}}{\rho_0}=\baseshift{v}{\rho_0}=v\]
    We thus get $\pot(v)\ge \pot(\baseshift{v}{\rho})+\weight(\rho)$
    which concludes the second inequality.
\end{proof}

Our last result of this section is that the charge is invariant under cactus unfolding (\cref{def:unfolding function}). Intuitively, this is because the charge is attained via a grounded pair, whereas unfolding mainly changes non-grounded pairs.
\begin{proposition}
\label{prop:unfolding maintains charge}
    Consider a word $u=x \cdot \alpha_{B,w} \cdot y$ and let $F>2\maxeff{x \cdot \alpha_{B,w} \cdot y}$. Let $M_0\in \bbN$ such that $xw^{2\bigM M_0}y=\unfold(x , \alpha_{B,w} , y \wr F)$.
    Then for every prefix $u$ of $y$ it holds that
    $\charge(x\cdot w^{2\bigM M_0}\cdot u)=\charge(x\cdot \alpha_{B,w}\cdot u)$.
\end{proposition}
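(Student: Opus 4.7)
The plan is to reduce the claim directly to \cref{lem:unfolding configuration characterization}, which already spells out precisely how the configurations reached by $x\cdot \alpha_{B,w}\cdot u'$ and $x\cdot w^{2\bigM M_0}\cdot u'$ compare. All the work is then to check that the minimum entry of the configuration is attained at a state shared by both supports, so the $\min$ (and hence the charge) is the same.

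First I would spell out the definitions. Write $\vec{c_1}=\xconf(\vec{c_\init},x\cdot \alpha_{B,w}\cdot u')$ and $\vec{c_2}=\xconf(\vec{c_\init},x\cdot w^{2\bigM M_0}\cdot u')$. By \cref{def:charge}, both $\charge(x\cdot \alpha_{B,w}\cdot u')$ and $\charge(x\cdot w^{2\bigM M_0}\cdot u')$ are defined provided that each word has a seamless baseline run. Since baseline transitions on cactus letters have weight $0$ (\cref{rmk:baseline transitions with cactus are zero}) and the seamlessness of the baseline run on $x\cdot \alpha_{B,w}\cdot u'$ transfers to $x\cdot w^{2\bigM M_0}\cdot u'$ by applying \cref{prop:unfolding cactus maintains seamlesss gaps} to the baseline run itself (whose weight is $0$ throughout), both charges are defined, and in particular the minimum entries of $\vec{c_1}$ and $\vec{c_2}$ are both at most $0$.

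Next I would invoke \cref{lem:unfolding configuration characterization}, using the prefix $u'$ of $y$ and the constant $F>2\maxeff{x\cdot \alpha_{B,w}\cdot y}$ supplied in the hypothesis. The lemma gives $\supp(\vec{c_1})\subseteq \supp(\vec{c_2})$, and for every $q\in \supp(\vec{c_2})$:
\begin{itemize}
\item if $q\in \supp(\vec{c_1})$ then $\vec{c_2}(q)=\vec{c_1}(q)$;
\item if $q\notin \supp(\vec{c_1})$ then $\vec{c_2}(q)>F-\maxeff{x\alpha_{B,w}y}>\maxeff{x\alpha_{B,w}y}\ge 0$,
\end{itemize}
where the second inequality uses $F>2\maxeff{x\alpha_{B,w}y}$. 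Thus the ``ghost'' entries present only in $\vec{c_2}$ are strictly positive, while the minimum of $\vec{c_1}$ is at most $0$, so these ghost entries are strictly larger than $\min\{\vec{c_1}(q)\mid q\in Q\}$ and cannot realize the minimum of $\vec{c_2}$.

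Finally I would conclude: the minimum of $\vec{c_2}$ is attained on $\supp(\vec{c_1})$, where the two configurations coincide, so
\[
\min\{\vec{c_2}(q)\mid q\in Q\}=\min\{\vec{c_1}(q)\mid q\in \supp(\vec{c_1})\}=\min\{\vec{c_1}(q)\mid q\in Q\},
\]
and negating yields $\charge(x\cdot w^{2\bigM M_0}\cdot u')=\charge(x\cdot \alpha_{B,w}\cdot u')$. The only subtlety in this whole argument is the verification that both charges are defined, i.e.\ that seamlessness of the baseline run is preserved by unfolding; everything else is a bookkeeping consequence of \cref{lem:unfolding configuration characterization}.
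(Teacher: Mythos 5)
Your proposal is correct and follows essentially the same route as the paper: both invoke \cref{lem:unfolding configuration characterization}, observe that any "ghost" state present only in $\supp(\vec{c_2})$ has strictly positive value (hence cannot be minimal, since the baseline run places a $0$ in the configuration), and conclude that the minimum is achieved on $\supp(\vec{c_1})$ where the two configurations agree. Your write-up is a bit more explicit than the paper's about checking that the charges are actually defined (seamlessness of the baseline run survives unfolding, via \cref{prop:unfolding cactus maintains seamlesss gaps}), which the paper leaves implicit, but the core argument is the same.
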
 
\begin{proof}
    Consider the configurations $\vec{c_1}=\xconf(x\cdot \alpha_{B,w}\cdot u)$ and $\vec{c_2}=\xconf(x\cdot \alpha_{B,w}\cdot u)$. Let $p_1,p_2\in S$ such that $\vec{c_1}(p_1)=-\charge(x\cdot \alpha_{B,w}\cdot u)$ and $\vec{c_2}(p_2)=-\charge(x\cdot w^{2\bigM M_0}\cdot u)$.
    In particular, we have $p_1\in \supp(\vec{c_1})$ and $p_2\in \supp(\vec{c_2})$. 
    By \cref{lem:unfolding configuration characterization} we have $\vec{c_2}(p_1)=\vec{c_1}(p_1)$, so 
    $-\charge(x\cdot w^{2\bigM M_0}\cdot u)\le -\charge(x\cdot \alpha_{B,w}\cdot u)$.

    Conversely, if $\vec{c_1}(p_2)=\vec{c_2}(p_2)$, then we have $-\charge(x\cdot w^{2\bigM M_0}\cdot u)\ge -\charge(x\cdot \alpha_{B,w}\cdot u)$ and we are done. 

    Otherwise (by way of contradiction), we have $\vec{c_1}(p_2)\neq \vec{c_2}(p_2)$, so again by \cref{lem:unfolding configuration characterization} it follows that $p_2\notin \supp(c_1)$, and therefore 
    \[\vec{c_2}(p_2)>F-\maxeff{x \cdot \alpha_{B,w} \cdot y}\ge \maxeff{x \cdot \alpha_{B,w} \cdot y}>0\] 
    and in particular (since the baseline run exists and has weight $0$) it follows that $\vec{c_2}(p_2)$ is not minimal in $\vec{c_2}$, and therefore is not the charge, which contradicts the assumption.
\end{proof}
Ideally, we would now want to prove a similar result regarding potential. Unfortunately, such a result does not generally hold. Nonetheless, in \cref{sec:witness}, we introduce concepts that allow us to obtain a strong-enough similar result.

\section{A Witness for Nondeterminizability}
\label{sec:witness}
We are now ready to describe the type of witnesses we use for nondeterminizability. Intuitively, a witness consists of a prefix $w_1$, a ``pumpable infix'' $w_2$, and a suffix $w_3$, where the idea is that after enough iterations of $w_2$ the potential grows unboundedly, and $w_3$ serves to separate the dominant state from the baseline, thus demonstrating unbounded gaps, which imply nondeterminizability by \cref{thm:det iff bounded gap}.

In this section, after defining witnesses, we prove that they provide a sound characterization for nondeterminizability: if a witness exists, then $\augA_\infty^\infty$ is nondeterminizable. Moreover, we show that checking whether a given input is indeed a witness, is decidable.
The converse, namely that witnesses are a complete characterization, is in fact the main result of this entire paper.

There are two cruxes to the approach. First, $w_1,w_2$ and $w_3$ are over our cactus extension, and thus capture nested cycles. Indeed, it is well known that single cycles (even non-simple) in the original WFA $\cA$ may not suffice to demonstrate unbounded gaps (e.g., \cref{xmp:running pumping}). 
Second, we require $w_3$ to essentially cause a jump from finite to infinite weight. This may seem overly strict. 
Indeed, in some cases all runs have finite weight, and such a separation is impossible in $\cA$, despite it being nondeterminizable (e.g., if all states are accepting and the transition function is total).
However, since the witness is over a cactus alphabet, it actually captures an ``asymptotic'' behavior, and in particular may induce $\infty$ weight transitions, which we rely on. This becomes clearer after the definition and the proof of \cref{lem:witness implies nondet}.

Consider the WFA $\augA_\infty^\infty=\tup{S,\Gamma_{\infty}^\infty,s_0,\augTrans_\infty^\infty}$, and recall that $\Gamma_{\infty}^\infty$ is obtained by alternating the stabilization closure (\cref{def:stab closure}) and rebase (\cref{def:rebase}) (and it also includes jump letters, as per \cref{def:jump letters}). 

For generality and brevity, we define witnesses parameterized by a type $k\in \bbN$. For this paper, however, we only use types $0$ and $1$. 
We explain the intuition behind the definition in the proof of \cref{lem:witness implies nondet} below. 
\begin{definition}[\keyicon Witness]
    \label{def:witness}
    For $k\in \bbN$, a \emph{witness of type $k$} consists of three words $w_1,w_2,w_3$ such that the following hold.
    \begin{enumerate}
        \item $w_1,w_2\in (\Gamma_\infty^k)^*$,  $w_3\in(\Gamma_\infty^\infty)^*$ and there is a seamless baseline run on $w_1w_2$.
        \item Let $S_1= \ghostTrans(s_0,w_1)$, then $(S_1,w_2)$ is a stable cycle with corresponding cactus letter $\alpha_{S_1,w_2}$ (recall that $\ghostTrans$ are the ghost-reachable states, as defined in \cref{def:ghost states}).
        \item $\booltrans(s_0,w_1)=\booltrans(s_0,w_1w_2)$ (i.e., $w_2$ cycles on the reachable states).
        \item $\minweight(w_1w_2w_3,s_0\to S)<\infty$ whereas $\minweight(w_1\alpha_{S_1,w_2}w_3,s_0\to S)=\infty$.
    \end{enumerate}
\end{definition}

We can now give the formal justification for witnesses, namely that they imply nondeterminizability. In the proof we also give further intuition about the details of the definition.

\begin{lemma}[\keyicon]
    \label{lem:witness implies nondet}
    If there is a type-$k$ witness for some $k\in \bbN$, then $\augA^\infty_\infty$ is not determinizable.
\end{lemma}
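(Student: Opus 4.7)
The plan is to apply \cref{thm:det iff bounded gap}(2) to $\augA^\infty_\infty$ with the finite sub-alphabet $\Sigma' \subseteq \Gamma^\infty_\infty$ consisting of the (finitely many) letters appearing in $w_1, w_2, w_3$. For each $B \in \bbN$, I will exhibit a $B$-gap witness of the form $(x_n, y, q_n)$, where $x_n = w_1 w_2^n$, $y = w_3$, and $q_n$ is a state through which some minimum-weight run on $w_1 w_2^n w_3$ passes after reading $x_n$, taking $n = m \cdot 2\bigM$ for $m$ sufficiently large depending on $B$.

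The core observation is that every finite-weight run $\mu : s_0 \runsto{w_1} s \runsto{w_2^n} q \runsto{w_3} r$ on $w_1 w_2^n w_3$ must satisfy $(s, q) \notin \GroundPairs(S_1, w_2)$. Otherwise, by \cref{def:stabilization} there would be a finite-weight transition $(s, \alpha_{S_1, w_2}, c, q) \in \augTrans^\infty_\infty$, and concatenating it with the finite-weight prefix $s_0 \runsto{w_1} s$ and suffix $q \runsto{w_3} r$ of $\mu$ would yield a finite-weight run on $w_1 \alpha_{S_1, w_2} w_3$, contradicting condition~4 of \cref{def:witness}. Note that $s \in \booltrans(s_0, w_1) \subseteq S_1$ and, by reflexivity of the stable cycle $(S_1, w_2)$, also $q \in \booltrans(S_1, w_2) \subseteq S_1$, so the grounded-pair notion is well-posed.

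For the quantitative bound, fix $B$ and apply \cref{lem:pumping grounded pairs}(1) uniformly over the finitely many non-grounded pairs in $S_1 \times S_1$: there exists $M_0$ such that for all $m \geq M_0$ and every $(s,q) \notin \GroundPairs(S_1, w_2)$, $\minweight(w_2^{m \cdot 2\bigM}, s \to q) > B + \maxeff{w_1} + 1$. Fix $n = m \cdot 2\bigM$ with such $m$. Every finite-weight $s_0$-to-$q_n$ run on $x_n$ decomposes as $s_0 \runsto{w_1} s' \runsto{w_2^n} q_n$ where $(s', q_n)$ is non-grounded (by the same argument applied via the finite-weight $w_3$-continuation that witnesses $q_n$); combined with $\minweight(w_1, s_0 \to s') \geq -\maxeff{w_1}$, this gives $\minweight(x_n, s_0 \to q_n) > B + 1$. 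On the other hand, condition~1 yields a seamless baseline run on $w_1 w_2$ of weight $0$, which by the stable-cycle property (condition~2, together with \cref{rmk:baseline transitions with cactus are zero}) extends to a weight-$0$ baseline run on $x_n$, so $\minweight(x_n, s_0 \to S) \leq 0$. Iterating condition~3 gives $\booltrans(s_0, w_1 w_2^n) = \booltrans(s_0, w_1)$, so finite-weight runs on $w_1 w_2^n w_3$ exist for all $n \geq 1$ (inheriting $q_n$'s reachability and a finite-weight $w_3$-suffix from condition~4's first inequality), ensuring $q_n$ is defined. The gap $\minweight(x_n, s_0 \to q_n) - \minweight(x_n, s_0 \to S) > B$ then gives the required $B$-gap witness.

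This is the soundness direction of the main theorem, and is expected to be the easy direction; the depth of the paper lies in the converse (completeness). The main technical points to keep track of are: verifying that both $s$ and $q$ lie in $S_1$ so the $\GroundPairs(S_1, w_2)$-argument is well-posed, handling the uniform quantifier in \cref{lem:pumping grounded pairs}, and confirming that the weight-$0$ baseline extends cleanly across arbitrarily many $w_2$-iterations. The type $k$ of the witness plays no role in this direction.
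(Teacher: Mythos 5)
Your proof is correct and takes essentially the same approach as the paper's: both reduce to the gap characterization of \cref{thm:det iff bounded gap}, apply \cref{lem:pumping grounded pairs} uniformly over the finitely many non-grounded pairs of $(S_1,w_2)$, argue via Requirement~4 of \cref{def:witness} that any finite-weight run on $w_1w_2^nw_3$ must cross a non-grounded pair during the $w_2^n$ infix, and use the seamless baseline run to bound $\minweight(x_n,s_0\to S)\le 0$. The only difference is presentational: the paper names the split $S_2 = \booltrans(s_0,w_1\alpha_{S_1,w_2})$ and $U_2 = \booltrans(s_0,w_1w_2)\setminus S_2$ and shows $w_3$ kills $S_2$ outright, whereas you reason directly about the state $q_n$ on the minimal continuing run; these are the same argument.
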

\begin{proof}
    We start with some intuition regarding the proof and the definition of witnesses.
    The ``core'' of \cref{def:witness} is Requirement $4$: the word $w_1w_2w_3$ can be read with finite weight, but $w_1\alpha_{S_1,w_2}w_3$ cannot. Recall that $\alpha_{S_1,w_2}$ allows finite runs only on grounded pairs, whose weight is bounded when reading $w_2^*$. In addition, $\alpha_{S_1,w_2}$ has a seamless baseline run. Since we also require a seamless baseline run on $w_1w_2$, it follows that there is a run of weight $0$ on $w_1(w_2^{2\bigM})^*$. However, this run, and indeed -- all runs between grounded pairs -- are ``killed'' by $w_3$ (otherwise we would have finite weight runs on $w_1\alpha_{S_1,w_2}w_3$). 
    Therefore,  all the finite weight runs on $w_1w_2w_3$ must traverse, along the $w_2$ infix, pairs of non-grounded states (this is not strictly true, because the grounded pairs require $w_2^{2\bigM}$, which is why we need to pump $w_2$, hence Requirement 3). 
    Since runs on grounded pairs increase arbitrarily, we can construct arbitrarily-high gap witnesses, implying nondeterminizability. 
    This is illustrated in \cref{fig:witness structure}.
    We now turn to the formal details.

    We show that the existence of a type-$k$ witness implies the gap criterion of nondeterminizability as per \cref{thm:det iff bounded gap}. That is, assume there exists a type-$k$ witness $(w_1,w_2,w_3)$, then we claim that for every gap $G\in \bbN$ there exist words $x,y\in (\Gamma_\infty^\infty)^*$ and a state $s\in S$ such that 
    $\minweight(xy,s_0\to S)=\minweight(xy,s_0\runsto{x}s\runsto{y}S)<\infty$, and $\minweight(x,s_0\to s)-\minweight(x,s_0\to S)>G$.

    Let $G\in \bbN$. 
    As per \cref{def:witness}, denote $S_1=\ghostTrans(s_0,w_1)$, then $(S_1,w_2)$ is a stable cycle corresponding to $\alpha_{S_1,w_2}$. Denote by $W=\maxeff{w_1\alpha_{S_1,w_2}w_3}$, and recall that $W$ is an upper bound on the weight (in absolute value) of any run of any prefix of $w_1\alpha_{S_1,w_2}w_3$.
    By \cref{lem:pumping grounded pairs} there exists $M_0\in \bbN$ such that for every $m\ge M_0$ and every pair of states $(r,s)\notin \GroundPairs(S_1,w_2)$ it holds that $\minweight(w_2^{2\bigM m},r\to s)>G+W$. 

    \begin{figure}[ht]
        \centering
        \includegraphics[width=0.7\linewidth]{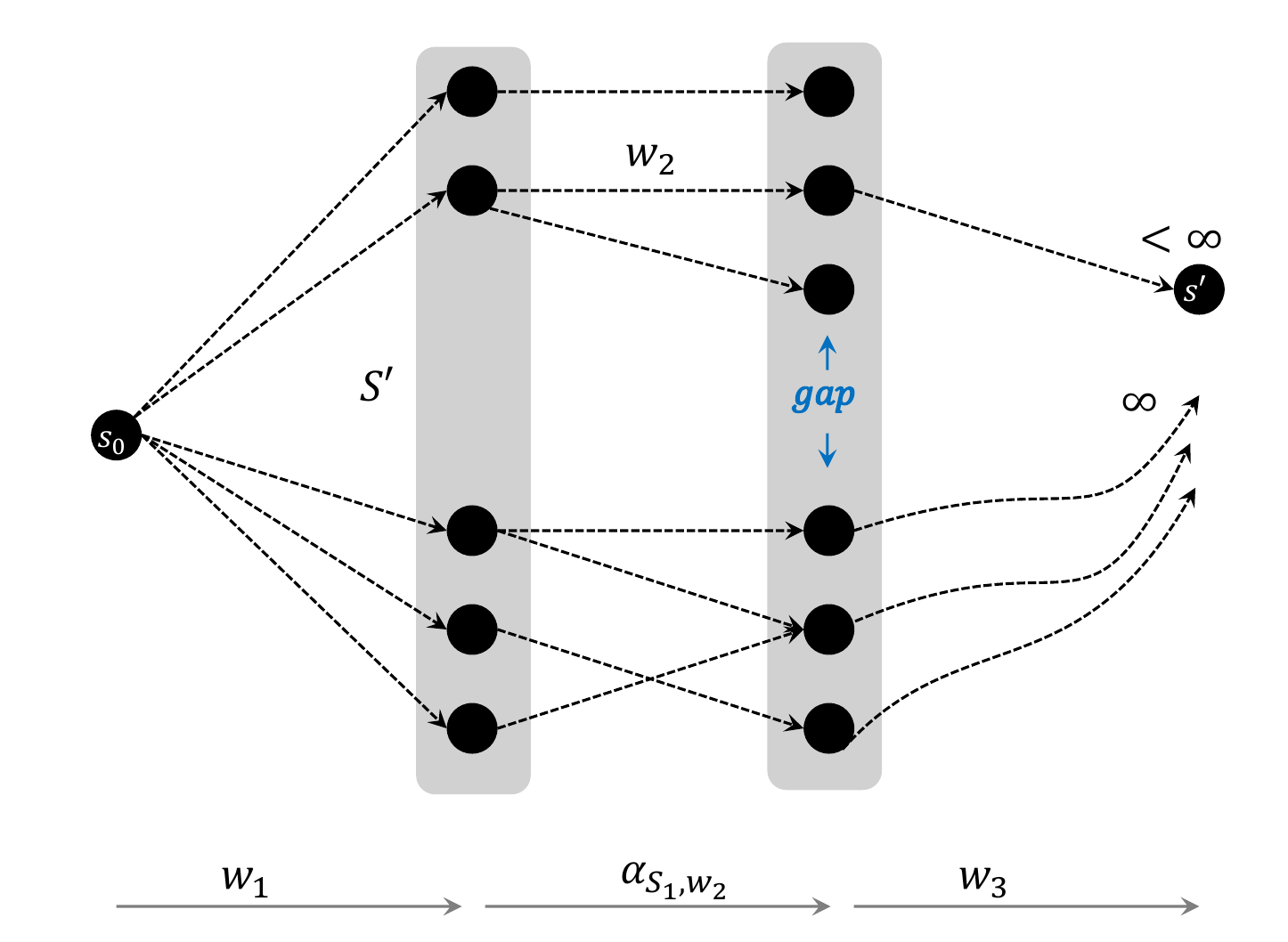}
        \caption{The structure of a witness, where  
        $S_1 = \protect\ghostTrans(s_0,\widetilde{w_1})$. Notice that the grounded-pairs of $\alpha_{S_1,w_2}$ get killed by $w_3$, but there are ghost runs that survive $w_3$.}
        \label{fig:witness structure}
    \end{figure}
    
    Note that by Requirement 3 we have $\booltrans(s_0,w_1)=\booltrans(s_1,w_1w_2^n)$ for every $n\in \bbN$.
    Denote 
    \[S_2=\booltrans(s_0,w_1\alpha_{S_1,w_2}) \quad \text{ and } \quad U_2=\booltrans(s_0,w_1w_2)\setminus S_2=\booltrans(s_0,w_1w_2^{2\bigM M_0})\setminus S_2\]
    By Requirement 4, we have $\minweight(w_3,S_2\to S)=\infty$ (otherwise there is a finite-weight run on $w_1\alpha_{S_1,w_2}w_3$). Since we also have $\minweight(w_1w_2w_3)<\infty$, it follows that there exists $s\in U_2$ such that 
    $\minweight(w_3,s\to S)<\infty$ and moreover, we can assume $s$ is minimal, i.e., 
    \[\minweight(w_1w_2^{2\bigM M_0}w_3,s_0\to S)=\minweight(w_1w_2^{2\bigM M_0}w_3,s_0\runsto{w_1w_2^{2\bigM M_0}}s\runsto{w_3} S)\]
    In particular, by our choice of $s$ we have $\minweight(w_3,s\to S)<\infty$.

    Next, for every $r\in S_1$ such that $r\runsto{w_2^{2\bigM M_0}}s$ we have $(r,s)\notin \GroundPairs(S_1,w_2)$ (otherwise we would have $s\in S_2$, but $s\in U_2$). Thus, we have $\minweight(w_1w_2^{2\bigM M_0},s_0\to s)> -\maxeff{w_1}+ G+W\ge G$. 
    Since $w_1w_2$ has a seamless baseline run, we have
    $\minweight(w_1w_2^{2\bigM M_0}, s_0\to S)\le 0$.
    We therefore have that 
    \[\minweight(w_1w_2^{2\bigM M_0},s_0\to s)-\minweight(w_1w_2^{2\bigM M_0}, s_0\to S)> G-0=G\]
    Thus, for $x=w_1w_2^{2\bigM M_0}$ and $y=w_3$, the gap requirements hold. Moreover, the family of witnesses is over the alphabet of $w_1,w_2,w_3$, and in particular finite.
    Since this is true for all $G$, we conclude from \cref{thm:det iff bounded gap} that $\augA_\infty^\infty$ is nondeterminizable. 
\end{proof}

Algorithmically, \cref{lem:witness implies nondet} is not of much use unless we can detect when words form a witness. Fortunately, this is easy (intuitively -- all the conditions are easy to check, the only mildly-nontrivial part is computing the transition weights on cactus letters).
\begin{theorem}
    \label{thm:verifying witness is decidable}
    The following problem is decidable: given a WFA $\cA$ and words $(w_1,w_2,w_3)\in \Gamma_\infty^
\infty$, decide whether $(w_1,w_2,w_3)$ is a witness.
\end{theorem}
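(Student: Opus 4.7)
The plan is to exhibit an algorithm that mechanically verifies each of the four clauses of \cref{def:witness} on the given input $(w_1,w_2,w_3)$. The core subroutine will be a depth-induction that, for every cactus or rebase letter appearing in $w_1 w_2 w_3$, computes the full transition table of that letter in $\augA_\infty^\infty$. For a base letter in $\Gamma$ the transitions are read off $\augA$ directly; for a cactus letter $\alpha_{S',w}$ whose inner word $w$ has strictly smaller depth, the inductive hypothesis already supplies the transition tables of every letter appearing in $w$, and we then compute $\booltrans$, $\MinRefStates(S',w^{\bigM})$ and $\GroundPairs(S',w)$ on the finite weighted graph induced by $w$ on $S'$, obtaining the weight of each transition on $\alpha_{S',w}$ via a standard shortest-path computation of $\minweight(w^{2\bigM}, s \runsto{w^\bigM} g \runsto{w^\bigM} r)$ as prescribed by \cref{def:stabilization}. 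Rebase letters are handled analogously via \cref{def:rebase}, and jump letters directly from \cref{def:jump letters}. Since the input has bounded cactus depth, this recursion terminates.

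With the transition tables in hand, three of the four clauses are routine. Clause 1 requires checking the depth restrictions $w_1, w_2 \in (\Gamma_\infty^k)^*$ by inspection, and seamlessness of the baseline run on $w_1 w_2$; the baseline component is deterministic, so seamlessness reduces to comparing the running baseline weight (which is $0$ throughout by \cref{rmk:baseline transitions with cactus are zero}) against $\minweight$ to the current baseline state on every prefix, a standard shortest-path task. Clause 3 reduces to computing $\booltrans(s_0, w_1)$ and $\booltrans(s_0, w_1 w_2)$ by deterministic simulation of the reachable-set component and comparing them. Clause 4 reduces to two shortest-path queries in the finite weighted graphs induced by the letters of $w_1 w_2 w_3$ and of $w_1 \alpha_{S_1, w_2} w_3$ respectively (the latter letter's transitions being computed by the inductive subroutine above).

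The only clause with an \emph{a priori} infinitary quantifier is Clause 2, which demands that $(S_1, w_2)$ be a stable cycle in the sense of \cref{def:stable cycle}, i.e., that the baseline state lie in $\MinRefStates(S_1, w_2^n)$ with a zero-weight cycle for \emph{every} $n \in \bbN$. This is where the stabilization results of \cref{sec:stable cycles and bounded behaviours} do the heavy lifting: by \cref{prop:reflexive cycles stabilize at N,prop:minimal reflexive cycles stabilize at N} the reflexive and minimal-reflexive sets stabilize by $n = \bigN$, while a stable cycle contains no negative simple cycle, which is decidable on the finite graph induced by $w_2$ on $S_1$ (e.g.\ by Bellman--Ford). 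Thus it suffices to verify $\booltrans(S_1, w_2) \subseteq S_1$, absence of negative cycles on $S_1$ under $w_2$, and minimality-with-zero-cycle-weight of the baseline state up to $n = \bigM$; by the cited propositions this finite truncation entails the full quantifier. The main subtlety therefore lies not in any individual check but in invoking the stabilization bounds to reduce Clause 2 to a finite verification; everything else is immediate. Concatenating all checks yields the desired decision procedure.
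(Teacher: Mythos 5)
Your proposal is correct and takes essentially the same route as the paper: compute each cactus/rebase/jump letter's transition table by induction on depth, then mechanically verify the four clauses of \cref{def:witness}; you are in fact more careful than the paper's own proof about the hidden universal quantifier in the stable-cycle condition (Clause~2) and about seamlessness of the baseline run (Clause~1). One small caveat: \cref{prop:minimal reflexive cycles stabilize at N} already presupposes a stable cycle and so cannot by itself justify the truncation, but your other checks (reflexivity, absence of negative cycles on the weighted graph induced by $w_2$ on $S_1$, and the baseline state's zero-weight cycle) already force $s\in\MinRefStates(S_1,w_2^n)$ with $\minweight(w_2^n,s\to s)=0$ for every $n$, so the finite verification you describe is sound and complete regardless.
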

\begin{proof}
    We describe an algorithm to check whether the words form a witness. First, for every letter $\sigma$ occurring in $w_1,w_2,w_3$, if $\sigma\in \Gamma_0^0$, then we know its associated transitions by constructing $\augA$ from $\cA$.

    If $\sigma=\alpha_{S',x}\in \Gamma_k^j$ for some $k,j$, we need to compute its transitions between all pairs of states. This is done inductively as follows: first compute the transitions on the letters of $x$ (which are of a lower rank). Next, compute the grounded pairs of $\alpha_{S',x}$ by first computing the set $\MinRefStates(S',x^\bigM)$, which may form the grounding states, and then checking for each $g\in \MinRefStates(S',x^\bigM)$ which states $(s,r)\in S'$ satisfy $s\runsto{x^\bigM}g\runsto{x^\bigM}r$, as per \cref{def:grounded pairs} (which is a simple reachability query).
    The weights of the transitions between grounded states $(s,r)$ are then computable as per \cref{def:stabilization} (by iterating over the possible grounding states).
    In addition, we similarly compute the transitions on $\alpha_{S_1,w_2}$.

    If $\sigma=\rebase_{S',x,s\to r}$, we first verify that $(s,r)$ are indeed a grounded pair, as above, and the readily compute the transitions and their weights (inductively) by \cref{def:rebase}.

    Thus, we have all the transitions weights on the letters of $w_1,w_2,w_3$. We can now easily check the conditions of \cref{def:witness}: condition 1 is syntactic, there is nothing to check. Condition 2 can be checked as follows: $S_1$ can be computed as per \cref{def:ghost states} from the transitions, and checking that the cycle is stable is a simple reachability query as per \cref{def:stable cycle}.
    Condition 3 is checked by composing the transitions on $w_1$ and on $w_1w_2$ and checking reachability equivalence. Finally, since we already computed the transitions over  $\alpha_{S_1,w_2}$, we can evaluate both expressions in Condition 4, and we are done.
\end{proof}

Our next result is that either a type-0 witness exists, or stable cycles can be safely unfolded without affecting the potential. Recall that type-0 means that we allow nesting of cacti, but no rebase or jump letters on the $w_1,w_2$ parts of the witness. 
Intuitively, this result is as close as we can get to an analogue of \cref{prop:unfolding maintains charge} for Potential.

\begin{lemma}[\keyicon \lightbulbicon Unfolding Maintains Potential]
    \label{lem:unfolding maintains potential}
        Either $\augA_\infty^\infty$ has a type-$0$ witness, or the following holds.
        Consider an initial state\footnote{When applying this lemma, we sometimes use a different initial state than that of $\augA$. We therefore allow any initial state, but still call it $s_0$ for clarity.} $s_0$ and $u,\alpha_{B,x},v\in (\Gamma_\infty^0)^*$ (where $(B,x)$ is a stable cycle).
        Let  $F\ge 2\maxeff{u\alpha_{B,x}v}$ and $ux^{2\bigM M_0}v=\unfold(u,\alpha_{B,x},v \wr F)$ as per \cref{def:unfolding function}, then for every prefix $v'$ of $v$ we have:
        \begin{enumerate}
            \item $\xconf(\vec{c_{s_0}},ux^{2\bigM M_0}v')\le \xconf(\vec{c_{s_0}},u\alpha_{B,x}v')$.
            \item $\pot(ux^{2\bigM M_0}v')=\pot(u\alpha_{B,x}v')$ (if the potential is defined, with respect to the given initial state $s_0$).
        \end{enumerate}
\end{lemma}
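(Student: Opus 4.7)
The plan is to split along the two items. Item 1 should follow immediately from \cref{lem:unfolding configuration characterization}. Writing $\vec{c_f} = \xconf(\vec{c_{s_0}}, u\alpha_{B,x}v')$ and $\vec{c_u} = \xconf(\vec{c_{s_0}}, ux^{2\bigM M_0}v')$, that lemma gives $\supp(\vec{c_f}) \subseteq \supp(\vec{c_u})$ with equality of values on $\supp(\vec{c_f})$, while on $\supp(\vec{c_u}) \setminus \supp(\vec{c_f})$ the folded value is $\infty$ and the unfolded value is finite. Hence $\vec{c_u} \le \vec{c_f}$ pointwise.

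For Item 2 I would prove the two inequalities separately and extract a witness only if the second fails. The easy direction $\pot(u\alpha_{B,x}v') \le \pot(ux^{2\bigM M_0}v')$ would transfer a maximal dominant state $q = \maxdom(\vec{c_f})$ together with its witnessing suffix $w$ directly to $\vec{c_u}$: every ghost state $p \in \supp(\vec{c_u}) \setminus \supp(\vec{c_f})$ satisfies $\vec{c_u}(p) > F - \maxeff{u\alpha_{B,x}v} \ge \maxeff{u\alpha_{B,x}v} \ge |\vec{c_f}(q)| = \vec{c_u}(q)$, so ghosts sit strictly above $q$ in $\vec{c_u}$. Thus the set of states strictly below $q$ coincides in both configurations (and the two agree there), so the same $w$ still witnesses dominance of $q$ in $\vec{c_u}$. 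For the converse I take $q' = \maxdom(\vec{c_u})$ and case on whether $q' \in \supp(\vec{c_f})$. In the real case, the symmetric transfer shows $q'$ is dominant in $\vec{c_f}$ with the same value, yielding equality.

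If instead $q'$ is a ghost state, I would construct a type-$0$ witness $(w_1, w_2, w_3)$ with $w_1 = u$, $w_2 = x^{2\bigM k}$ for a power $k$ chosen large enough to stabilize reachability (via \cref{prop:reflexive cycles stabilize at N}) and the minimal-reflexive structure (via \cref{prop:minimal reflexive cycles stabilize at N,prop:cactus letters stabilizes at 2M}), and $w_3 = v' y$, where $y$ is the suffix that witnesses dominance of $q'$ in $\vec{c_u}$. Condition 4 of \cref{def:witness} should then hold: a finite-weight run on $w_1 w_2 w_3$ exists by routing through $q'$ (which becomes reachable once $x$ is unfolded sufficiently many times), while $w_1 \alpha_{S_1, w_2} w_3$ admits none, because $\alpha_{S_1, w_2}$ kills exactly the ghost-reachable states (in particular $q'$), and every state reachable through $\alpha_{S_1, w_2}$ then lies in $\supp(\vec{c_f})$ with weight strictly below $\vec{c_u}(q')$, so by dominance of $q'$ no such state admits a finite continuation on $y$.

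The hard part will be making the witness bookkeeping watertight: I need to verify that with $S_1 = \ghostTrans(s_0, w_1)$, the pair $(S_1, w_2)$ is a stable cycle in the precise sense of \cref{def:stable cycle} (the saturated shape of $\ghostTrans$ is convenient here), that $w_1 w_2$ inherits a seamless baseline run from the one on $u\alpha_{B,x}$ via the $0$-weight minimal-reflexive cycles of $x$, and that $\booltrans(s_0, w_1) = \booltrans(s_0, w_1 w_2)$, which pins down the choice of $k$. Threading these properties through the ghost/real distinction is the main source of technical friction, but each step is underwritten by the stabilization toolbox of \cref{sec:stable cycles and bounded behaviours,sec:cactus letters}.
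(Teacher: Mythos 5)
Your Item 1 and the overall architecture of Item 2 match the paper: split on whether the maximal dominant state is a ghost or not, transfer dominance across the two configurations in the real case, and extract a type-$0$ witness in the ghost case. The transfer argument (ghosts sit strictly above any real state because their values exceed $F - \maxeff{u\alpha_{B,x}v} \ge \maxeff{u\alpha_{B,x}v}$, so the set of states strictly below a real dominant state is the same in both configurations and the same witnessing suffix works) is sound and is essentially what the paper does, just phrased as two separate inequalities rather than one case split followed by a uniform argument over prefixes $v'$. That organizational difference is fine.

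The gap is in the witness. You set $w_1 = u$, $w_2 = x^{2\bigM k}$ and acknowledge you must verify $\booltrans(s_0, w_1) = \booltrans(s_0, w_1 w_2)$, claiming this ``pins down the choice of $k$.'' No choice of $k$ makes that hold with $w_1 = u$: the stable cycle $(B,x)$ only guarantees $\booltrans(s_0, u) \subseteq B$ and $\booltrans(s_0, ux^j) \subseteq B$, but $\booltrans(s_0, u)$ can be a strict subset of $\booltrans(s_0, ux^{\bigM})$ (reading $x$ may spread to more states of $B$ before the Boolean transition idempotizes). What Prop.~\ref{prop: transitions stabilize at M} gives you is that $\booltrans(s_0, ux^{\bigM n})$ is independent of $n \ge 1$, not that it equals $\booltrans(s_0,u)$. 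The paper avoids this by absorbing the already-unfolded block into the prefix, taking $w_1 = ux^{2\bigM M_0}$ and $w_2 = x^{2\bigM M_0}$, so that Requirement~3 of Def.~\ref{def:witness} becomes $\booltrans(s_0, ux^{2\bigM M_0}) = \booltrans(s_0, ux^{4\bigM M_0})$, which does follow from Prop.~\ref{prop: transitions stabilize at M} because both exponents are multiples of $\bigM$. You would need the same adjustment; once $w_1$ ends in at least one block of $x^{\bigM}$, the rest of your witness checking (that $S_1 = \ghostTrans(s_0,w_1) = B$, that $(S_1, x^{2\bigM k})$ is a stable cycle via Prop.~\ref{prop:cactus letters stabilizes at 2M}, and that $\alpha_{S_1, w_2}$ kills exactly the ghost-reachable states so that the dominance suffix forces $\infty$ on everything $\alpha$-reachable) goes through as you sketch. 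You should also note that the seamless baseline on $w_1 w_2$ (Requirement~1) needs Prop.~\ref{prop:unfolding cactus maintains seamlesss gaps} to carry over from the folded word, since the potential being defined is stated for $u\alpha_{B,x}v'$, not directly for the unfolded prefix; this is implicit in the paper as well but worth making explicit once you change $w_1$.
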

\begin{proof}
    Item 1 follows immediately from \cref{lem:unfolding configuration characterization}. Indeed, let $\vec{d}=\xconf(\vec{c_{s_0}},u\alpha_{B,x}v')$ and $\vec{d'}=\xconf(\vec{c_{s_0}},ux^{2\bigM M_0}v')$, then  we have 
    $\supp(\vec{d})\subseteq  \supp(\vec{d'})$ and for every $q\in \supp(\vec{d})$ we have $\vec{d'}(q)=\vec{d}(q)$ (and anyway $\vec{d}(q)=\infty$ otherwise). 

    It is now tempting to use \cref{lem:potential and charge are monotone} and use the superiority of the configurations to conclude the proof. Unfortunately, the condition there also requires that the configurations have equal supports, which fails in our case. We therefore need to work much harder to show the lemma.

    Intuitively, the proof proceeds as follows. We consider the unfolding $ux^{2\bigM M_0}v=\unfold(u,\alpha_{B,x},v \wr F)$, and specifically its prefix $ux^{2\bigM M_0}$. 
    By \cref{prop:unfolding cactus maintains seamlesss gaps}, the unfolding preserves the reachable configuration, i.e. the weights with which states in $\booltrans(s_0,u\alpha_{B,x})$ are reached. However, we might also get finite weights on other states in $\ghostTrans(s_0,u\alpha_{B,x})$, due to the unfolding. 
    The question now is whether these newly reachable states interfere with the potential. 

    In order to reason about the potential, we consider a maximal dominant state $q$, and split to two cases. The first case is when $q$ is in $\booltrans(s_0,ux^{2\bigM M_0})\setminus \booltrans(s_0,u\alpha_{B,x})$ (namely -- very high, reached by pumping $x$ on non-grounded pairs). 
    Then, we show that we can construct a type-0 witness. Specifically, the witness is $(u x^{2\bigM M_0},x^{2\bigM M_0},z)$, where $z$ is the suffix that shows $q$ is dominant. Proving that this is indeed a witness is technical, and uses the intricacies of our definitions of stable cycles and grounded states. 
    The intuition, however, is not too complicated: if the dominant state is high, then replacing $x^{2\bigM M_0}$ by its corresponding cactus letter sends all states above $q$ to $\infty$, and the remaining states are sent to $\infty$ by $z$.

    The second case is when $q$ is in $\booltrans(s_0,u\alpha_{B,x})$, namely low, reachable by grounded states. Then we use the unfolding toolbox of \cref{sec: cactus unfolding} to show that the potential is indeed maintained.

    We now turn to formalize this argument.
    Consider $ux^{2\bigM M_0}v=\unfold(u,\alpha_{B,x},v \wr F)$, and assume that the potential is defined (otherwise we are done).
    Consider the following configurations and sets of states:
    \[
\begin{array}{ l l l }
\vec{c_1} = \xconf(\vec{c_{s_0}}, u) & \vec{c_2} = \xconf(\vec{c_{s_0}}, u\alpha_{B,x}) & \vec{c'_2} = \xconf(\vec{c_{s_0}}, ux^{2\bigM M_0}) \\ 
S_1 = \ghostTrans(s_0, u) & S_2 = \booltrans(s_0, u\alpha_{B,x}) & S'_2 = \booltrans(s_0, ux^{2\bigM M_0}) \\ 
\end{array}
\]
    Note that $S_2=\supp(\vec{c_2})$ and $S'_2=\supp(\vec{c'_2})$. 
    By \cref{lem:unfolding configuration characterization} we have that 
    $S_2\subseteq S'_2$, for every $s\in S_2$ it holds that $\vec{c'_2}(s)=\vec{c_2}(s)$ and 
    for every $s'\notin S_2$ it holds that $\vec{c'_2}(s')> \vec{c_2}(s)$ (where $s'\notin S_2$ means either $s'\in  S'_2\setminus S_2$, or $s'\notin S'_2$).

    Consider a maximal-dominant state $q\in S$ in $\vec{c'_2}$. 
    By \cref{def:dominant state}, this means that there exists $z\in \Gamma^\infty_\infty$ such that 
    $\minweight(z,q\to S)<\infty$ 
    whereas for every $p\in S$ with $\vec{c'_2}(p)<\vec{c'_2}(q)$ we have 
    $\minweight(z,p\to S)=\infty$. We now split to cases according to whether $q\in S'_2\setminus S_2$ (i.e., $q$ is ``very high'') or $q\in S_2$ (i.e., $q$ is ``low'').

    \paragraph*{Dominant State is Very High}
    If $q\in S'_2\setminus S_2$, we claim that $(u x^{2\bigM M_0},x^{2\bigM M_0},z)$ is a type-0 witness. 
    Indeed, following the requirements in \cref{def:witness}, we have:
    \begin{enumerate}
        \item $ux^{2\bigM M_0},x^{2\bigM M_0}\in \Gamma_\infty^0$ and $z\in \Gamma^\infty_\infty$.
        \item Since $\alpha_{B,x}$ is a cactus letter then by \cref{prop:cactus letters stabilizes at 2M} we have that $\alpha_{B,x^{2 \bigM M_0}}$ is a cactus letter with the same transitions. Also, $B=\ghostTrans(s_0,u)=\ghostTrans(s_0,ux)=\ghostTrans(s_0,ux^{2\bigM M_0})$, since $(B,x)$ is a stable cycle.
        \item By \cref{prop: transitions stabilize at M} we have that  $\booltrans(s_0,u x^{2\bigM M_0})=\booltrans(s_0,u x^{4\bigM M_0})=\booltrans(s_0,u x^{2\bigM M_0}x^{2\bigM M_0})$.
        \item Since $q\in S'_2=\booltrans(s_0,u x^{2\bigM M_0})$ then by Item 3 above we have $q\in \booltrans(s_0,u x^{2\bigM M_0}x^{2\bigM M_0})$. In particular, $\minweight(u x^{2\bigM M_0}x^{2\bigM M_0},s_0\to q)<\infty$. Since $z$ satisfies that $\minweight(z,q\to S)<\infty$, we conclude that:
        \[
        \minweight(u x^{2\bigM M_0}\cdot x^{2\bigM M_0}\cdot z,s_0\to S)\le\minweight(u x^{2\bigM M_0}x^{2\bigM M_0},s_0\to q)+\minweight(z,q\to S)<\infty
        \]

        It therefore remains to prove that $\minweight(u x^{2\bigM M_0}\cdot \alpha_{B,x^{2\bigM M_0}}\cdot z,s_0\to S)=\infty$. 
        To this end, we claim that for every state $p$, if $\minweight(u x^{2\bigM M_0}\cdot \alpha_{B,x^{2\bigM M_0}},s_0\to p)<\infty$, then $\vec{c'_2}(p)<\vec{c'_2}(q)$ (from which we can obtain the desired result).
        Indeed, if $\minweight(u x^{2\bigM M_0}\cdot \alpha_{B,x^{2\bigM M_0}},s_0\to p)<\infty$ then 
        there exists a run 
        \[
        \rho:s_0\runsto{u}s_1\runsto{x^{2\bigM M_0}}p'\runsto{\alpha_{B,x^{2\bigM M_0}}}p
        \]
        By \cref{def:stabilization}, this implies that $(p',p)\in \GroundPairs(\alpha_{B,x^{2\bigM M_0}})$. Then, however, it follows by \cref{prop:grounding states reachable with any bigM k} that $(s_1,p)\in \GroundPairs(B,x^{2\bigM M_0})$ (by using the same grounding state of $(p',p)$). 
        Recall that by \cref{prop:cactus letters stabilizes at 2M}, the letters $\alpha_{B,x^{2\bigM M_0}}$ and $\alpha_{B,x}$ have the exact same transitions. That is, $(s_1,p)\in \GroundPairs(B,x)$. 
        Thus, we have the run
        $s_0\runsto{u}s_1\runsto{\alpha_{B,x}}p$
        so $p\in S_2$. 
        As noted above, all the states in $S_2$ the configurations $\vec{c_2}$ and $\vec{c'_2}$ coincide, so $\vec{c_2}(p)=\vec{c'_2}(p)$. On the other hand, $q\in S'_2\setminus S_2$, and again as noted above this means that $\vec{c'_2}(q)>\vec{c_2}(p)=\vec{c'_2}(p)$.

        We are therefore in the following setting:
        \begin{itemize}
            \item If $\minweight(u x^{2\bigM M_0}\cdot \alpha_{B,x^{2\bigM M_0}},s_0\to p)<\infty$, then $\vec{c'_2}(p)<\vec{c'_2}(q)$, but then $\minweight(z,p\to S)=\infty$, so 
            \[\minweight(u x^{2\bigM M_0}\cdot \alpha_{B,x^{2\bigM M_0}}\cdot z,s_0\runsto{u x^{2\bigM M_0}\cdot \alpha_{B,x^{2\bigM M_0}}}p\runsto{z}S)=\infty\]
            \item If $\minweight(u x^{2\bigM M_0}\cdot \alpha_{B,x^{2\bigM M_0}},s_0\to p)=\infty$ then trivially 
            \[\minweight(u x^{2\bigM M_0}\cdot \alpha_{B,x^{2\bigM M_0}}\cdot z,s_0\runsto{u x^{2\bigM M_0}\cdot \alpha_{B,x^{2\bigM M_0}}}p\runsto{z}S)=\infty\]
        \end{itemize}
        We conclude that anyway we have
        $\minweight(u x^{2\bigM M_0}\cdot \alpha_{B,x^{2\bigM M_0}}\cdot z,s_0\to S)=\infty$
    \end{enumerate}
    This shows that all the requirements of \cref{def:witness} hold, so there is a type-0 witness.

    \paragraph*{Dominant State is Low}
    If $q\notin S'_2\setminus S_2$, then $q\in S_2$ (since $q\in \supp(\vec{c'_2})=S'_2$). Then, by the observations above, we have $\vec{c'_2}(q)=\vec{c_2}(q)$. We claim that for every prefix $v'$ of $v$ it holds that $\pot(ux^{2\bigM M_0}v')=\pot(u\alpha_{B,c}v')$. 
    Indeed, fix such a prefix $v'$ and denote
    \[
    \begin{split}
        &\vec{c_3}=\xconf(s_0,u\alpha_{B,x} v')=\xconf(c_2,v') \quad S_3=\supp(\vec{c_3})\\
        &\vec{c'_3}=\xconf(s_0,u x^{2\bigM M_0} v')=\xconf(c'_2,v') \quad S'_3=\supp(\vec{c_3})
    \end{split}
    \]
    Similarly to the above, we have $S_3\subseteq S'_3$, and due to the choice of $F$ as the unfolding constant (\cref{def:unfolding function,lem:unfolding configuration characterization}) for every $p\in S_3$ it holds that $\vec{c_3}(p)=\vec{c'_3}(p)$, and for every $p\in S'_3\setminus S_3$ we have $\vec{c'_3}(p)>\max\{\vec{c_3}(r)\mid r\in S_3\}+\maxeff{u\alpha_{B,c}v}$.

    Let $p,p'$ be maximal dominant states in $\vec{c_3}$ and $\vec{c'_3}$, respectively. We claim that $p,p'\in S_3$. Indeed, $p\in S_3$ by definition. If $p'\in S'_3\setminus S_3$, then there exists a suffix $z$ such that $\minweight(z,p'\to S)<\infty$ and for every $r'$ with $\vec{c'_3}(r')<\vec{c'_3}(p')$ it holds that $\minweight(z,r'\to S)=\infty$. In particular, for every $r'\in S_3$ this holds. 
    Since $p'\in S'_3\setminus S_3$, then $p'\in \booltrans(S'_2,v')\setminus \booltrans(S_2,v')$. But then $v'z$ is a suffix that exhibits a dominant state in $S'_2\setminus S_2$, and in particular the maximal dominant state of $\vec{c'_2}$ is in $S'_2\setminus S_2$, so we are back in the previous case of a very high dominant state (which we assume does not hold now).

    Thus, $p'\in S_3$. It now readily follows that $\pot(ux^{2\bigM M_0}v')=\pot(u\alpha_{B,c}v')$: restricted to $S_3$, the configurations $\vec{c_3}$ and $\vec{c'_3}$ are equivalent. Moreover, for every $r'\in S'_3\setminus S_3$ we have $\vec{c'_3}(r')>\vec{c'_3}(p')$, so does not come into consideration for the dominance of $p'$. 

    It thus follows that $\vec{c'_3}(p')=\vec{c_3}(p)$ (in fact, we can assume without loss of generality that $p=p'$, since we can choose any maximal-dominant state, if there are several), and since the baseline run is reachable and its state is also in $S_3$, we conclude that
    \[
    \pot(ux^{2\bigM M_0}v')=\vec{c'_3}(p')=\vec{c_3}(p)=\pot(u\alpha_{B,c}v').
    \]
\end{proof}

\begin{remark}[The Prefix ``Either $\augA_\infty^\infty$ has a type-0 witness, or...'']
\label{rmk:the either witness prefix}
    Most of our reasoning about potential in the coming sections is based on \cref{lem:unfolding maintains potential}. 
    Therefore, the prefix ``Either $\augA_\infty^\infty$ has a type-0 witness, or the following holds'' appears in many other arguments. Since having such a witness immediately implies nondeterminziability (by \cref{lem:witness implies nondet}), this condition can be read as ``either we are done, or the following holds''.
\end{remark}

\section{A Ramsey Theorem for Colored Infixes}
\label{sec:ramsey}
In this section we present a general Ramsey-type theorem, concerning finite coloring of an infinite set of words. Specifically, we assume that we have an infinite set of words such that all the infixes of each word are colored. 
We show that we can find arbitrarily-long words with arbitrarily-many indices such that all infixes between these indices are identically colored. 

These results are then used to reason about the types of runs between configurations in \cref{sec:existence of inc inf assuming covered}.

For $n\in \bbNinf$ denote $\interval_n=\{(i,j)\mid 1\le i\le j\le n\wedge  i,j\in \bbN\}$ denote the set of finite \emph{intervals} of $\{1,\ldots, n\}$ (think of $(i,j)$ as denoting the set $\{i,i+1,\ldots, j\}$). Note that for $n=\infty$, the set $\interval_\infty$ has all finite intervals. 
Fix a finite set $\colset$ of \emph{colors}. 
An \emph{interval coloring} is a function $\lambda:\interval_n\to \colset$ that assigns a color to each interval.

Fix a finite alphabet $\Sigma$. An \emph{infix-colored set} is an infinite set $C=\{(w_1,\lambda_1),(w_2,\lambda_2),\ldots\}$ where $w_i\in \Sigma^*$ and $\lambda_i:\interval_{|w_i|}\to \colset$ is an interval coloring for all $i\in \bbN$. 

Intuitively, the following lemma shows that given an infix-colored set $C$, there is a single infinite word $w$ and a corresponding interval coloring such that $w$ has infinitely many prefixes in $C$, and their infix coloring coincides with that of $w$. Technically, we state the lemma slightly differently. 
For $\lambda:\interval_{k}\to \colset$ and $k'<k$, we denote $\lambda|_{k'}$ as $\lambda$ restricted to the domain $\interval_{k'}$.
We denote by $\Sigma^\omega$ is the set of infinite words over $\Sigma$.

\begin{lemma}
    \label{lem:ramsey infinite word}
    Let $C$ be an infix-colored set, then there exists an infinite word $w\in \Sigma^\omega$ and an interval coloring $\lambda_\omega:\interval_\infty\to \colset$ such that for every finite prefix $v$ of $w$ there exists $(x,\lambda')\in C$ such that $v$ is a prefix of $x$ and $\lambda_\omega|_{|v|}\equiv \lambda'|_{|v|}$
\end{lemma}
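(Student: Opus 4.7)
The plan is to prove this by a standard König / compactness style construction: I will build $w$ and $\lambda_\omega$ one letter at a time while maintaining the invariant that the current prefix-and-coloring is shared by infinitely many members of $C$.

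First I would observe that for each $n \in \bbN$, the set of pairs $(v,\lambda)$ with $v \in \Sigma^n$ and $\lambda : \interval_n \to \colset$ is finite, being a subset of $\Sigma^n \times \colset^{\interval_n}$. Since $C$ is an \emph{infinite} set of pairs (and not merely a sequence) and every bounded-length level is finite, $C$ must contain elements whose words are arbitrarily long; in particular, for every $n$ there are infinitely many $(x,\lambda') \in C$ with $|x| \ge n$. This observation will let the pigeonhole step in the induction go through.

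Next I would inductively construct, for every $n \ge 0$, a word $v_n \in \Sigma^n$ and an interval coloring $\lambda_n : \interval_n \to \colset$ such that the set
\[
C_n \;=\; \{ (x,\lambda') \in C \mid v_n \in \pref(x) \text{ and } \lambda'|_{n} \equiv \lambda_n \}
\]
is infinite, and such that the pairs are consistent: $v_n \in \pref(v_{n+1})$ and $\lambda_{n+1}|_n \equiv \lambda_n$. The base case is $v_0 = \varepsilon$ and $\lambda_0$ the empty coloring, giving $C_0 = C$. For the inductive step, assume $C_n$ is infinite; by the observation above, infinitely many $(x,\lambda') \in C_n$ have $|x| \ge n+1$. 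Each such element induces a pair $(x[1,n+1], \lambda'|_{n+1})$, and there are only finitely many such pairs extending $(v_n, \lambda_n)$. By the pigeonhole principle some extension $(v_{n+1}, \lambda_{n+1})$ arises from infinitely many elements, and this gives an infinite $C_{n+1}$.

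Finally, I would define $w \in \Sigma^\omega$ as the unique infinite word whose prefix of length $n$ is $v_n$, and $\lambda_\omega : \interval_\infty \to \colset$ by $\lambda_\omega(i,j) = \lambda_j(i,j)$; this is well-defined and consistent because $\lambda_{n+1}|_n \equiv \lambda_n$. Now for any finite prefix $v$ of $w$, setting $n = |v|$, we have $v = v_n$ and $\lambda_\omega|_n \equiv \lambda_n$, and the infinite (hence non-empty) set $C_n$ yields an $(x,\lambda') \in C$ with $v \in \pref(x)$ and $\lambda'|_n \equiv \lambda_\omega|_n$, as required. There is no substantive obstacle here — the only care needed is in the observation that infinite $C$ forces unbounded word lengths at every induction stage, which is what licenses the pigeonhole step on the $(n{+}1)$-st letter together with the coloring of the newly-introduced intervals.
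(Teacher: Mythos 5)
Your proof is correct and takes essentially the same approach as the paper's: both are König-style compactness arguments, with the paper invoking König's lemma on a finitely-branching tree of colored prefixes that lead into $C$, while you unroll the same argument directly by maintaining the (slightly stronger) invariant that infinitely many elements of $C$ extend the current colored prefix. The pigeonhole step you use to extend by one letter together with its new intervals is exactly what drives the König argument under the hood.
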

\begin{proof}
    Consider a pair $(w,\lambda)\in C$, then for every prefix $v$ of $w$, we can consider the pair $(v,\lambda|_{|v|})$, which is a coloring of $v$ that coincides with $\lambda$ over the relevant indices.

    We now construct a finitely-branching forest\footnote{The graph is a forest since $\epsilon$ may appear with different colorings.} (i.e., a union of trees) $F=\tup{V,E}$ \`a la K\"onig, as follows. 
    The vertices of the tree are $F=\{(v,\lambda)\mid v\in \Sigma^*, \lambda:\interval_{|v|}\to \colset\}$, i.e., each vertex is an infix-colored word. The edges are
    \[
    E=\{((v,\lambda),(v\sigma,\lambda')\mid \lambda'|_{|v|}\equiv \lambda,\ \sigma\in \Sigma)\}
    \]
    That is, we connect a colored word $v$ to any word obtained by adding a single letter $\sigma$ to $v$, such that the coloring of $v\sigma$ coincides with that of $v$ on the relevant indices.

    Note that all the outgoing edges from vertices $(v,\lambda)$ with $|v|=n$ are to vertices $(v',\lambda')$ with $|v'|=n+1$. 
    Since there are only finitely many vertices $(v,\lambda)$ with $|v|=n$ (since $\colset$ is finite), then clearly the branching degree of the tree is also finite.

    We now restrict $F$ to the sub-forest $F'$ obtained by keeping only the vertices in $F$ from which there is a path to some $(x,\lambda')\in C$ (observe that the elements of $C$ are also vertices in $F$ by definition). 
    Since $C$ is infinite and $\Sigma$ is finite, it follows that $C$ contains arbitrarily long words. Since we keep all vertices along paths from $\epsilon$ (with some coloring) to words in $C$, it follows that $F'$ is a finitely-branching forest with arbitrarily long paths.

    By K\"onig's lemma, it follows that there is an infinite path in $F'$. We claim that such a path induces an infinite word as required. 
    Indeed, we can assume without loss of generality that the infinite path begins in $(\epsilon,\lambda_\epsilon)$ (otherwise we add prefixes to it until reaching $\epsilon$). We can therefore write the path as $(\epsilon,\lambda_\epsilon),(\sigma_1,\lambda_1),(\sigma_1\sigma_2,\lambda_2),\ldots$, i.e., the $n$-th vertex in the path is $(\sigma_1\cdots \sigma_n,\lambda_n)$. 

    We now define the infinite word $w$ and $\lambda_\omega$ as follows. $w=\sigma_1\sigma_2\cdots$ (i.e., $w[n,n]=\sigma_n$ for all $n\in \bbN$). For every $(i,j)\in \interval_\omega$ we define $\lambda_\omega(i,j)=\lambda_j(i,j)$.

    It remains to show that $(w,\lambda_\omega)$ satisfies the required conditions. Consider a finite prefix $v$ of $w$, then in particular $(v,\lambda_{|v|})\in F'$ (since $w$ is described by a path in $F'$), and $\lambda_{|v|}\equiv \lambda_\omega|_{|v|}$.
    Moreover, by the definition of $F'$, it follows that there is a path in $F$ from $(v,\lambda_{|v|})$ to some $(x,\lambda')\in C$. By the way we construct $F$, we have that $\lambda'|_{|v|}\equiv \lambda_{|v|}$. Putting these together, we have $\lambda_{\omega}|_{|v|}=\lambda'|_{|v|}$, as required.   
\end{proof}

We can now show that every infix-colored set $C$ has words that can be decomposed to arbitrarily many infixes such that all infixes have the same coloring, and moreover -- this coloring is idempotent with respect to concatenation. In addition, we are able to place some constraints on the cost of these infixes, where ``cost'' here is a general extension of the length of a word.
Naturally, our usage is with the cost functions of \cref{def:cost depth and sub cactus}.

Formally, a \emph{prefix-cost function} is a function $\precost:\Sigma^+\to \bbN\setminus \{0\}$ without any restrictions.
 A letter-cost function is $\lc:\Sigma^+\to \bbN\setminus \{0\}$ such that $\lc$ is ``strictly increasing on prefixes'', i.e., if $x$ is a strict prefix of $y$ then $\lc(x)<\lc(y)$ (c.f., \cref{rmk:cost is strictly increasing}).
We also denote $w[i,j]=\sigma_i\cdots \sigma_{j}$ for $1\le i\le j\le n$. Before we proceed, we recall a specific theorem by Ramsey that we use in the proof.
For a set $A$, write $(A)_n=\{A'\subseteq A: |A'|=n\}$
\begin{theorem}[Ramsey, 1930]
	\label{thm: ramsey original}
Let $f:(\bbN)_n\to \colset$, then there exists an infinite set $A\subseteq \bbN$ on which $f$ is homogeneous, i.e., $f$ is constant on $(A)_n$.
\end{theorem}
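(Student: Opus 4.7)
The plan is to prove this classical result by induction on $n$, using the standard construction that alternates between invoking the inductive hypothesis and extracting infinite monochromatic subsets via pigeonhole.

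For the base case $n=1$, the coloring $f:\bbN\to \colset$ partitions $\bbN$ into $|\colset|$ color classes, and since $\colset$ is finite, at least one class must be infinite by pigeonhole. That class is the desired homogeneous set $A$.

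For the inductive step, assume the result for $n-1$. Given $f:(\bbN)_n\to \colset$, I would inductively build a strictly increasing sequence $a_1<a_2<\cdots$ of natural numbers together with a decreasing chain of infinite sets $A_0\supseteq A_1\supseteq A_2\supseteq\cdots$ with $a_i\in A_{i-1}$ and $A_i\subseteq A_{i-1}\setminus\{a_i\}$, as follows. Start with $A_0=\bbN$ and pick $a_1=\min A_0$. Given $a_i\in A_{i-1}$, define an auxiliary coloring $g_i:(A_{i-1}\setminus\{a_i\})_{n-1}\to\colset$ by $g_i(X)=f(\{a_i\}\cup X)$. By the inductive hypothesis applied (after identifying $A_{i-1}\setminus\{a_i\}$ with $\bbN$) to $g_i$, there is an infinite $A_i\subseteq A_{i-1}\setminus\{a_i\}$ on which $g_i$ is constant, say with value $c(a_i)\in\colset$. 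Choose $a_{i+1}=\min A_i$, and continue.

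This produces a sequence $(a_i)_{i\in\bbN}$ together with an induced coloring $c:\{a_1,a_2,\ldots\}\to\colset$. By the pigeonhole principle (the base case applied to $c$), some color $\gamma\in\colset$ is attained on an infinite subsequence $a_{i_1}<a_{i_2}<\cdots$; let $A=\{a_{i_1},a_{i_2},\ldots\}$. Then for any $n$-element subset $\{a_{i_{k_1}}<\cdots<a_{i_{k_n}}\}\subseteq A$, the construction guarantees that the remaining elements $a_{i_{k_2}},\ldots,a_{i_{k_n}}$ all lie in $A_{i_{k_1}}$, so $f(\{a_{i_{k_1}},\ldots,a_{i_{k_n}}\})=g_{i_{k_1}}(\{a_{i_{k_2}},\ldots,a_{i_{k_n}}\})=c(a_{i_{k_1}})=\gamma$. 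Thus $f$ is constant $\gamma$ on $(A)_n$, completing the induction.

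The only subtle point is bookkeeping: making sure the inductive construction is well-defined (each $A_i$ stays infinite, which is exactly what the inductive hypothesis provides) and that homogeneity propagates correctly from the auxiliary colorings $g_i$ back to $f$. Since this is the classical Erd\H{o}s--style proof and the statement is being invoked as a black box later in the paper, I would expect the authors either to cite it directly or to present only a brief sketch of this argument.
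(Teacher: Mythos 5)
Your proof is a correct rendition of the classical ``Erd\H{o}s''-style argument for the infinite Ramsey theorem, and the verification step (that $a_{i_{k_2}},\ldots,a_{i_{k_n}}\in A_{i_{k_1}}$ by the nested chain, so $f$ collapses to $g_{i_{k_1}}$ and hence to $c(a_{i_{k_1}})=\gamma$) is handled correctly. As you yourself anticipated, the paper does not prove this statement at all --- it is cited as the classical 1930 result and invoked as a black box --- so there is no ``paper's approach'' to compare against; your argument is simply the standard textbook proof.
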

\begin{theorem}[\keyicon \lightbulbicon A Ramsey Theorem for Infix Coloring]
    \label{thm: our ramsey}
    Consider an infix-colored set $C$, a prefix-cost function $\precost$ and a letter-cost function $\lc$. 
    For every $L\in \bbN$ there exists a pair $(v,\lambda)$ and an index set $1\le i_0<i_1<\ldots <i_L\le |v|+1$ such that there are infinitely many pairs $(w_i,\lambda_i)\in C$ (for $i\in \bbN$) where $v$ is a prefix of $w_i$, and the following hold:
    \begin{enumerate}
        \item 
        $\lambda\equiv \lambda_i|_{|v|}$ for all $i\in \bbN$. I.e., the coloring of $v$ matches that of the $w_i$.
        \item For every $0\le j<k\le L$ we have $\lambda(i_0,i_1-1)=\lambda(i_j,i_{k-1})$, i.e., all the infixes between the indices in the set have the same color.
        \item $\precost(v[i_0,i_1-1])< \lc(v[i_1,i_{L}])$, i.e., the first infix is much ``cheaper'' than the remaining infix.
    \end{enumerate}
\end{theorem}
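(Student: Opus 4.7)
The plan is to combine two Ramsey-style ingredients: Lemma~\ref{lem:ramsey infinite word}, which collapses the infix-colored set $C$ into a single infinite word carrying a canonical coloring, and the classical Ramsey Theorem~\ref{thm: ramsey original} applied to pair-colorings of $\bbN$, which will pick out a homogeneous index set. Condition~(3), the cost gap, will then be handled essentially ``for free'' by leaving enough room after position $i_1$, exploiting the strict monotonicity of $\lc$ on prefixes.

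Concretely, I would first invoke Lemma~\ref{lem:ramsey infinite word} on $C$ to obtain an infinite word $w\in\Sigma^\omega$ and a coloring $\lambda_\omega:\interval_\infty\to\colset$ such that every finite prefix $v$ of $w$ is the prefix of some $(x,\lambda')\in C$ with $\lambda'|_{|v|}\equiv\lambda_\omega|_{|v|}$. Since $\colset$ is finite there are only finitely many colored words of any fixed length, so the infinite set $C$ must contain words of arbitrarily large length; applying the Lemma to ever-longer prefixes of $w$ then yields infinitely many distinct $(x,\lambda')\in C$ matching any fixed prefix $v$, giving the ``infinitely many pairs'' part of condition~(1). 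Next I would apply Theorem~\ref{thm: ramsey original} to the pair-coloring $f(\{i,j\}):=\lambda_\omega(i,j-1)$ (for $i<j$), obtaining an infinite homogeneous set $A=\{a_1<a_2<\cdots\}\subseteq\bbN$ on which $f$ takes a constant value $c$.

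Now set $i_0=a_1$ and $i_1=a_2$, fixing $P:=\precost(w[a_1,a_2-1])$. Since $\lc$ is $\bbN$-valued and strictly increasing on prefixes, the sequence $\lc(w[a_2,n])$ grows without bound, so I can pick $N$ with $\lc(w[a_2,N])>P$. Choose $i_2=a_3,\ldots,i_{L-1}=a_L$, and $i_L=a_M$ for any $M>L$ with $a_M\ge N$. Finally take $v$ to be the prefix of $w$ of length $i_L$, and $\lambda:=\lambda_\omega|_{|v|}$. Condition~(2) holds because every pair $\{i_j,i_k\}$ with $0\le j<k\le L$ lies in $(A)_2$, so $\lambda(i_j,i_k-1)=c$ uniformly. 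Condition~(3) holds because $w[a_2,N]$ is a prefix of $v[i_1,i_L]=w[a_2,a_M]$, hence by strict monotonicity of $\lc$ on prefixes, $\lc(v[i_1,i_L])>\lc(w[a_2,N])>P=\precost(v[i_0,i_1-1])$. There is no real technical obstacle: the only minor point is the implicit passage from ``one matching element of $C$'' (as stated in Lemma~\ref{lem:ramsey infinite word}) to infinitely many, which the pigeonhole observation above supplies.
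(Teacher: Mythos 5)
Your proof is correct and follows essentially the same route as the paper: apply Lemma~\ref{lem:ramsey infinite word} to collapse $C$ into a single colored infinite word, apply Theorem~\ref{thm: ramsey original} to get a homogeneous set $A$, then choose $i_0,\dots,i_{L-1}$ as the first $L$ elements of $A$ and push $i_L$ far enough out in $A$ to satisfy the cost gap, with the ``infinitely many pairs'' coming from matching ever-longer prefixes of $w$. Your explicit handling of the given $L$ by skipping ahead to $a_M$ for the last index is a small clarification of the paper's slightly looser phrasing, but the argument is the same.
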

\begin{proof}
Let $w\in \Sigma^\omega$ and $\lambda_\omega$ be the infix-colored infinite word obtained as per \cref{lem:ramsey infinite word}. Note that $\lambda_\omega:(\bbN)_2\to \colset$, and we can therefore apply \cref{thm: ramsey original} to $\lambda_\omega$.
Thus, there exists an infinite subset $A\subseteq \bbN$ and a color $r\in \colset$ such that $\lambda_{\omega}(i,j)=r$ for all $i<j\in A$. Sort $A$ as $i_0<i_1<\ldots$. 

Recall that the cost function $\lc$ strictly increasing on prefixes, and in particular strictly increasing with \emph{length} along any infix of $w$.
Since $A$ is infinite, then there exists $L\in \bbN$ large enough such that $\precost(w[i_0,i_1-1])<\lc(w[i_1,i_L])$ (note that $\precost(w[i_0,i_1-1])$ is a constant).

Take $v=w[1,i_L]$. By the above, Requirement \textit{3} holds (since $v$ is identical to $w$ up to index $i_L$). 
Requirement \textit{2} holds by the application of \cref{thm: ramsey original}, since all the infixes with indices in $A$ are colored $r$.
The existence of infinitely many pairs in $C$ where $v$ is a prefix follows from \cref{lem:ramsey infinite word}. Indeed, for every prefix $x$ of $w$ such that $v$ is a prefix of $x$, there exists $(w_i,\lambda_i)\in C$ such that $x$ is a prefix of $w_i$ and $\lambda_\omega|_{|x|}\equiv \lambda_i|_{|x|}$. In particular, since there are arbitrarily long such prefixes $x$, it follows that there are infinitely many such pairs $(w_i,\lambda_i)$. This also shows that Requirement \textit{1} holds, concluding the proof.
\end{proof}

\section{Separated Increasing Infixes}
\label{sec:separated increasing infix}
In this section we define word infixes whose runs exhibit a very ``organized'' behavior, which are later used in our proof. We then show that such infixes exist.

\subsection{Separated Increasing Infixes -- Definition}
\label{sec:separated increasing infix definition}
We start with an intuitive overview of the properties we require, follows by the formal definition.
\begin{figure}[ht]
    \centering
    \includegraphics[scale=0.5]{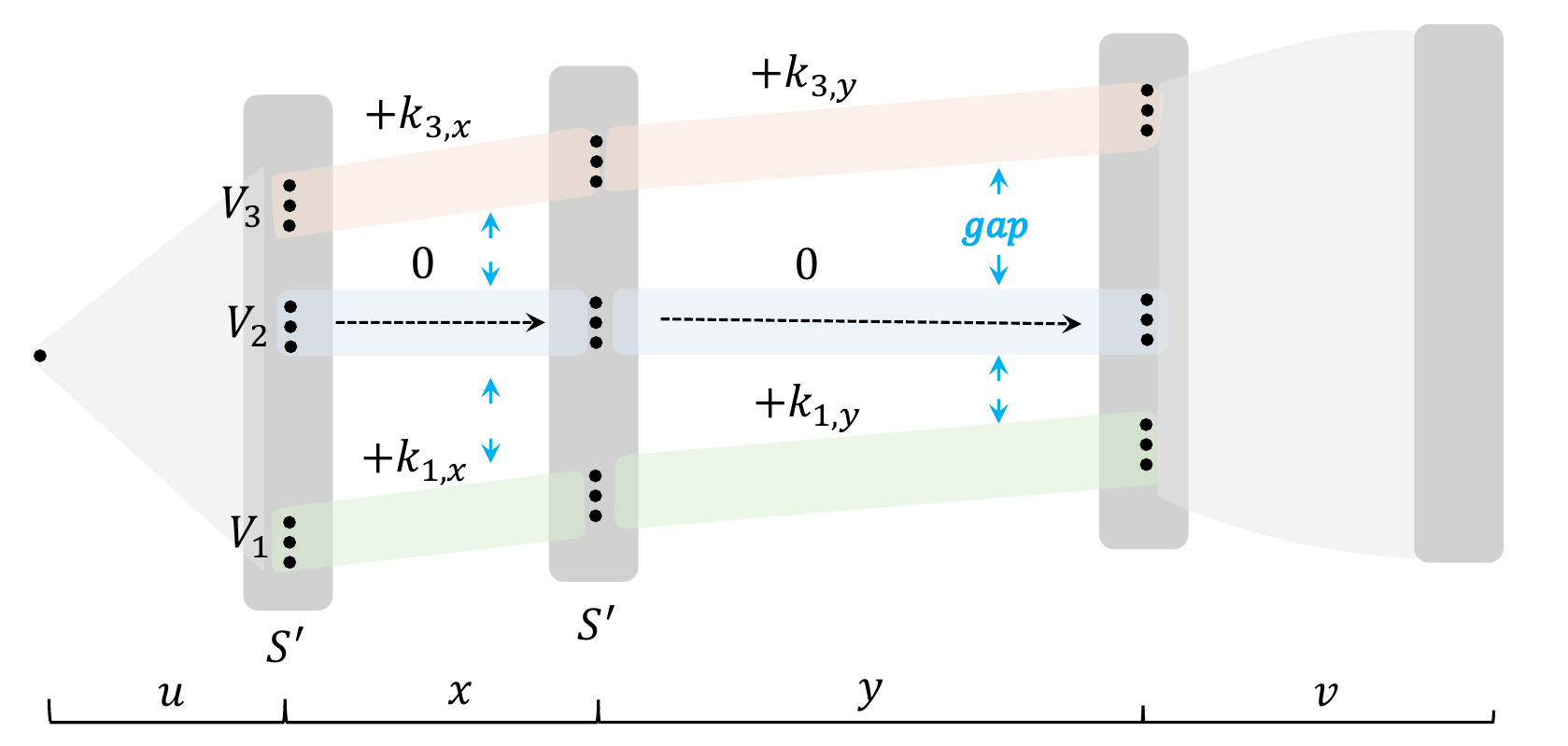}
    \caption{A Separated Increasing Infix from State $p$. After reading $u$, the configuration is separated to sub-configurations corresponding to the $V_i$. Reading $x$ maintains this partition and the exact distances within each set. The same holds for $y$, but with different shifts. 
    In addition, $y$ is much longer than $x$ (or rather, has much higher cost). Moreover, the gaps between the sub-configurations are larger still -- much larger than the maximal effect of $x$ and $y$.
    }
    \label{fig:increasing infix}
\end{figure}

Consider the WFA $\augA_\infty^0=\tup{S,\Gamma^0_\infty,s_0,\augTrans^0_\infty}$. That is, we focus on the construction with cactus letters, but without rebase or jump.
An separated increasing infix (from state $p$) consists of a concatenation $uxyv$ with the following structure (see \cref{fig:increasing infix}): 
\begin{itemize}
    \item Reading $u$ from $p$ reaches a configuration that is partitioned to ``sub-configurations'' that have a huge gap between them, we call the state sets in these sub configurations $V_1,\ldots, V_{\ell}$. 
    \item Next, reading $x$ maintains the structure of these $V_j$ sets, and increases all the states in each set $V_j$ by a constant $k_{j,x}\ge 0$. In particular, the entire configuration maintains its support after $x$.
    \item Reading $y$ has very similar behavior to $x$, in the sense that the $V_j$ are maintained, and each is increased by a constant $k_{j,y}$, with the property that $k_{j,y}=0$ if and only if $k_{j,x}=0$. 

    However, an additional property of $y$ is that it is \emph{very} long (or rather, very costly). Specifically, its cost (in the sense of \cref{def:cost depth and sub cactus}) is larger than the cost of a cactus letter on $x$. This means that replacing the infix $xy$ by the cactus $\alpha_{B,x}$ decreases the overall cost (which indeed we do in \cref{lem:increasing infix budding}).

    Despite the huge cost of $y$, the gaps between the $V_j$ are even higher, so much so that ``pumping'' $x$ should be safe. These notions are made concrete throughout the section.
    \item Finally, $v$ is just a harmless suffix.
\end{itemize}

We now turn to the formal definition, starting with some notations.
For a word $w$, recall that $\wmax{w}$ is the maximal absolute value of a weight over any transition on a letter of $w$.
Further recall that $\wmax{w}\cdot |w|$ upper-bounds the maximal finite change in weights that can be incurred upon reading $w$.
For a state $p\in S$, let $\vec{c_p}$ be the configuration that assigns $p$ weight $0$ and $\infty$ otherwise.
\begin{definition}[\keyicon Separated Increasing Infix from $p$]
    \label{def:separated increasing infix from state}
    Consider a state $p\in S$ and a word $w\in (\Gamma_\infty^0)^*$ decomposed as $w=u\cdot x\cdot y\cdot v$. 
    Denote $\vec{c_u}=\xconf(\vec{c_p},u)$ and similarly $\vec{c_{ux}}=\xconf(\vec{c_p},ux)$ and $\vec{c_{uxy}}=\xconf(\vec{c_p},uxy)$.
    We say that \emph{$w$ is a separated increasing infix from $p$} if the following conditions hold.
    \begin{enumerate}
        \item \label{itm:separated infix length diff} 
        $\cost(y)>2^{16(\bigM |S| \cost(x))^2}$
        (where $\cost$ is as per \cref{def:cost depth and sub cactus}).
        \item \label{itm:separated infix reachable states} $\supp(\vec{c_u})=\supp(\vec{c_{ux}})=\supp(\vec{c_{uxy}})=B$ for some subset $B\subseteq S$. 
        That is, the subsets of states reached after $u,ux,uxy$ are the same.
        \item \label{itm:separated infix partition}
        $B$ can be partitioned to $B=V_1\cup \ldots \cup V_\ell$ for some $\ell$ such that the following holds.
        \begin{enumerate}
            \item 
            \label{itm:separated infix gaps V}
            For every $1\le j<\ell$, every $s\in V_j,s'\in V_{j+1}$ and  every $\xi\in \{u,ux,uxy\}$ we have 
            $\vec{c_\xi}(s')-\vec{c_\xi}(s)> \incinfGapConst$. 
            \acctodo{If this constant changes, the accounting in the definition of $\sparse$ needs to be updated, along with every part influenced by it. These parts are next to comments labeled ACCOUNTING-1}
            
            That is, the values assigned to states in $V_{j+1}$ are significantly larger than those assigned to $V_{j}$ in the configurations $\vec{c_{u}}, \vec{c_{ux}}$ and $\vec{c_{uxy}}$.
            \item 
            \label{itm:separated infix linear k}
            For every $1\le j\le \ell$ there exist $k_{j,x},k_{j,y}\in \bbN$ such that for every $s\in V_j$ we have  $\vec{c_{ux}}(s)=\vec{c_u}(s)+k_{j,x}$ and $\vec{c_{uxy}}(s)=\vec{c_{ux}}(s)+k_{j,y}$.

            Moreover, $k_{j,x}=0$ if and only if $k_{j,y}=0$.
        \end{enumerate}
        \item \label{itm:separated infix seamless baseline}  If $p$ is a baseline state, then there is a seamless baseline run $\rho:p\runsto{uxyv}p$ from $\vec{c_p}$ (and so $\rho$ is constantly of weight $0$ in all prefixes).
    \end{enumerate}
\end{definition}
The partition and gaps described in \cref{itm:separated infix partition,itm:separated infix gaps V} give rise to a restriction on the runs on the infixes $x,y$. Specifically, there are only runs from $V_j$ to $V_{j'}$ for $j\ge j'$, and there is always a run from $V_{j'}$ to $V_{j'}$. In addition, the constants $k_{j,x}$ and $k_{j,y}$ are bounded.
\begin{proposition}
\label{prop:run characteristics of increasing infix from state}
    Consider a separated increasing infix $uxyv$ from state $p$, and let $B=V_1\cup\ldots \cup V_\ell$ and $k_{j,x},k_{j,y}$ as per \cref{def:separated increasing infix from state}. Then the following hold for every $1\le j\le \ell$,.
    \begin{itemize}
        \item $k_{j,x}\le \maxeff{x}$ and $k_{j,y}\le \maxeff{y}$.
        \item For every $z\in \{x,y\}$ and $s\in S$, if $s\in V_{j}$ then there exists $s'\in V_{j}$ such that $s'\runsto{z}s$.
        \item For every $z\in \{x,y\}$ and $s,s'\in S$ such that $s\runsto{z}s'$, if $s'\in V_{j'}$ then $s\in V_{j}$ for some $j\ge j'$.
    \end{itemize}
\end{proposition}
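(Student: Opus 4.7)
I would prove the three claims in the order (3), then (1) and (2) jointly by downward induction on $j$ from $\ell$ to $1$. The pervasive tool is the gap condition (\cref{itm:separated infix gaps V}), combined with $|\minweight(z,s\to s')|\le \maxeff{z}$ for single transitions on $z\in\{x,y\}$, exploiting that $\incinfGapConst = 8\bigM(|S|+1)\maxeff{xy}^2$ vastly exceeds $\maxeff{x}$ and $\maxeff{y}$.

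\textbf{Proof of (3).} Fix $z\in\{x,y\}$ and states $s,s'$ with $s\runsto{z}s'$ and $s'\in V_{j'}$. Since $s$ is a possible predecessor on the minimum-weight run to $s'$ after $uz$, we have $\vec{c}_{uz}(s')\le \vec{c}_u(s)+\minweight(z,s\to s')$. Substituting $\vec{c}_{uz}(s')=\vec{c}_u(s')+k_{j',z}$ and rearranging yields $\minweight(z,s\to s')\ge \vec{c}_u(s')-\vec{c}_u(s)+k_{j',z}$. If $s\in V_j$ with $j<j'$, the gap condition forces $\vec{c}_u(s')-\vec{c}_u(s)>\incinfGapConst$, so the right-hand side exceeds $\incinfGapConst$ (since $k_{j',z}\ge 0$), contradicting $\minweight(z,s\to s')\le \maxeff{z}\ll \incinfGapConst$. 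Hence $j\ge j'$.

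\textbf{Joint downward induction for (1) and (2).} For the \emph{base case} $j=\ell$: by (3), any predecessor of a state in $V_\ell$ lies in $V_{j''}$ with $j''\ge\ell$, hence in $V_\ell$; this gives (2). For (1), pick $s^*\in V_\ell$ maximizing $\vec{c}_u$. By (2) there exists $s'\in V_\ell$ with $s'\runsto{z}s^*$, and $\vec{c}_u(s')\le\vec{c}_u(s^*)$ by maximality, so $\vec{c}_{uz}(s^*)\le \vec{c}_u(s')+\maxeff{z}\le \vec{c}_u(s^*)+\maxeff{z}$, giving $k_{\ell,z}\le\maxeff{z}$. For the \emph{inductive step} at $V_j$, assume (1) and (2) hold for $V_{j+1},\ldots,V_\ell$, and suppose toward contradiction that some $s\in V_j$ has no $V_j$-predecessor via $z$. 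By (3), all predecessors of $s$ lie in $V_{>j}$, giving $k_{j,z}>\incinfGapConst-\maxeff{z}$. Now take $s^*\in V_j$ maximizing $\vec{c}_u$: either $s^*$ has a $V_j$-predecessor, in which case the max-element argument yields $k_{j,z}\le\maxeff{z}$, contradicting the lower bound; or else $s^*$'s minimum predecessor $p$ lies in some $V_{j''}$ with $j''>j$, giving $\vec{c}_u(p)\le \vec{c}_u(s^*)+k_{j,z}+\maxeff{z}$. The inductive hypothesis (1) supplies $k_{j'',z}\le\maxeff{z}$, so $\vec{c}_{uz}(p)=\vec{c}_u(p)+k_{j'',z}\le \vec{c}_{uz}(s^*)+2\maxeff{z}$. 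But the gap condition applied to $s^*\in V_j$ and $p\in V_{j''}$ with $j''>j$ forces $\vec{c}_{uz}(p)-\vec{c}_{uz}(s^*)>\incinfGapConst$, yielding $\incinfGapConst<2\maxeff{z}$, a contradiction. Hence (2) holds for $V_j$, and (1) for $V_j$ follows by the same max-element argument as in the base case.

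\textbf{Main obstacle.} The crux is recognizing that (1) and (2) must be interleaved in a downward induction: the inductive hypothesis (1) on higher partitions $V_{j+1},\ldots,V_\ell$ is precisely what bounds the ``jump'' $k_{j'',z}$ needed to close the contradiction for (2) at step $j$ via the gap in $\vec{c}_{uz}$. A naive ``cascade upward'' approach that propagates only the lower bound on $k_{j,z}$ does not yield a contradiction without this interplay, because the $\vec{c}_u$-gap between consecutive partitions can in principle be much larger than $\incinfGapConst$ and absorb the propagated bound.
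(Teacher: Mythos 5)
Your proof is correct. It takes essentially the same approach as the paper: the paper proves item~(1) first by a ``take the maximal bad $j$'' contradiction (which is the one-shot version of your downward induction, and crucially also does not depend on item~(2)---it compares the $\vec{c_u}$-maximal state $s$ of $V_j$ with its minimum-weight predecessor $s'$, forces $s'$ into a strictly higher $V_{j'}$, and invokes the gap after using maximality to bound $k_{j',x}$), then derives~(2) and~(3) from~(1) and the gap condition. You instead prove~(3) directly (which the paper also does, essentially independently of~(1)), and then interleave~(1) and~(2) in a downward induction. Your ``main obstacle'' remark slightly overstates the necessity of interleaving: the paper shows~(1) can be obtained directly via the maximal-bad-$j$ argument without going through~(2), after which~(2) becomes a straightforward corollary of~(1) and~(3); your interleaved induction is valid but is a somewhat heavier packaging of the same underlying computation (namely $\vec{c_{uz}}(s') - \vec{c_{uz}}(s) \le 2\maxeff{z} \ll \incinfGapConst$).
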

\begin{proof}
    We prove the claim for $z=x$, and the case of $y$ identical up to working with $\vec{c_{uxy}}$ instead of $\vec{c_{ux}}$.

    We start with the first item. Assume by way of contradiction that there exists $1\le j\le \ell$ such that $k_{j,x}>\maxeff{x}$, and let $j$ be maximal with this property (i.e., the ``highest'' part in the partition where this holds). We depict the proof in \cref{fig:increasing infix run characterization}.
    \begin{figure}[ht]
        \centering
        \includegraphics[width=0.4\linewidth]{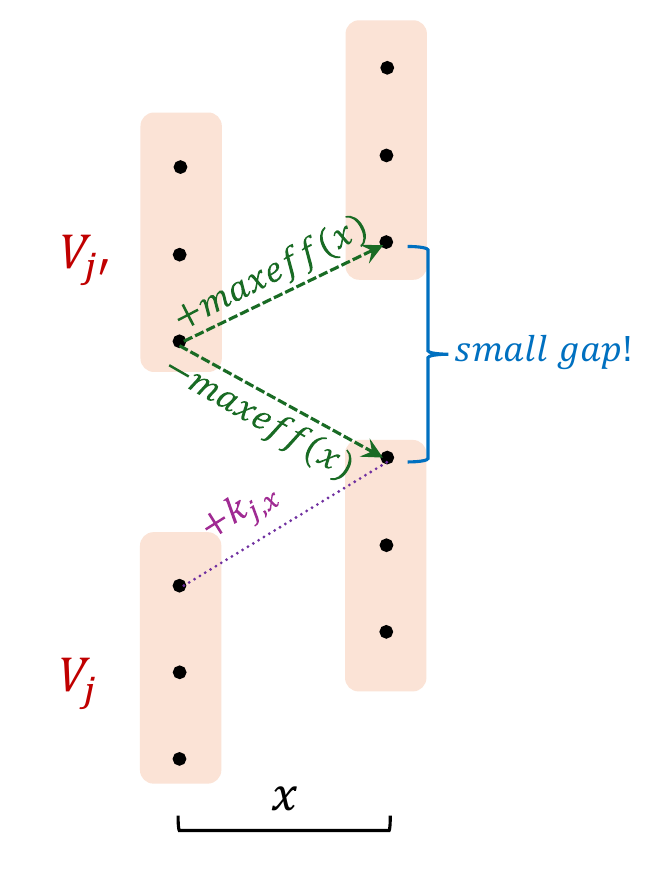}
        \caption{The part $V_j$ and the state $s$ whose minimal run stems from $V_{j'}$, implying a small gap.}
        \label{fig:increasing infix run characterization}
    \end{figure}
    
    Consider a state $s\in \arg\max\{\vec{c_u}(q)\mid q\in V_j\}$. We then have $\vec{c_{ux}}(s)>\vec{c_u}(s)+\maxeff{x}$. This means that the minimal seamless run $\rho:s'\runsto{x}s$ is from $s'\in V_{j'}$ with $j'>j$. Indeed, no run from any state in $V_j$ or lower can get to weight $\vec{c_u}(s)+\maxeff{x}$. In particular, we get $j<\ell$.
    We therefore have $\vec{c_{ux}(s)}\ge \vec{c_{u}}(s')-\maxeff{x}$.

    In the following, it is useful to think of $j'$ as $j+1$ (although we do not actually assume that). 
    Consider $k_{j',x}$. By the maximality assumption on $j$, we have that $k_{j',x}\le \maxeff{x}$. Therefore, $\vec{c_{ux}}(s')\le \vec{c_{u}}(s')+\maxeff{x}$. This, however, contradicts the gap in $\vec{c_{ux}}$: we have that $s'\in V_{j'}$ and $s\in V_j$, but 
    \[
    \vec{c_{ux}}(s')-\vec{c_{ux}}(s)\le 2\maxeff{x}\ll \incinfGapConst
    \]
    \acctodo{check}
    in contradiction to \cref{itm:separated infix gaps V} of \cref{def:separated increasing infix from state}.
    We conclude that $k_{j,x}\le \maxeff{x}$ for all $j$, and similarly for $y$.

    We now proceed to the latter two items.
    Let $1\le j\le \ell$ and $s\in V_{j}$ and let $s'\in V_{j'}$ such that $s'\runsto{x}s$. 
    Assume by way of contradiction that $j'<j$, then by the gap criterion \cref{itm:separated infix gaps V}, since the gap is much larger than $\maxeff{x}$, we have 
    $\vec{c_{ux}}(s)\le \vec{c_u}(s')+\maxeff{x}<\vec{c_{u}}(s)$, but this is a contradiction to \cref{itm:separated infix linear k} of \cref{def:separated increasing infix from state}.
    Thus, $j'\ge j$, as required.

    Next, assume by way of contradiction that for every $s'$ such that $s'\runsto{x}s$ we have $s'\in V_{j'}$ for $j'>j$. As we show above, this immediately implies $k_{k,x}>\maxeff{x}$, which cannot hold.
\end{proof}

Our next observation is that an increasing infix does not allow negative cycles on $x^k$ for any $k$. Note that negative runs on $x$ may occur. For example, a state in $V_7$ may have a negative run on $x$, but to a state in $V_3$, which is much lower and therefore this negative run is not seamless, and thus does not really ``matter''.

\begin{proposition}
\label{increasing infix no negative cycles on x}
    Consider a separated increasing infix $uxyv$ from $p\in S$. In the notations of \cref{def:separated increasing infix from state}, for every $r\in B$ and $k\in \bbN$ we have $\minweight(x^k,r\to r)\ge 0$.
\end{proposition}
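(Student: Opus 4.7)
The plan is to exploit the rigid partition structure of a separated increasing infix. Fix $r \in B$ and $k \in \bbN$, and let $j$ be the index with $r \in V_j$. By way of contradiction, suppose $\minweight(x^k, r \to r) = -w < 0$.

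First I would show that such a cycle must stay entirely within $V_j$. Applying the third bullet of Proposition~\ref{prop:run characteristics of increasing infix from state} transition-by-transition along any run $r \runsto{x^k} r$, the sequence of parts is non-increasing along $x$-steps; since the run begins and ends in $V_j$, every intermediate state also lies in $V_j$. In particular, the weight $\minweight(x^k, r \to r)$ coincides with $(N_j)^k[r,r]$, where $N_j$ is the $V_j \times V_j$ matrix defined by $N_j[s',s]=\minweight(x,s'\to s)$ and $(N_j)^k$ denotes its $k$-th $(\min,+)$ power.

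Next I would establish the key identity: for every $s \in V_j$,
\[
\vec{c_u}(s) + k_{j,x} \;=\; \min_{s' \in V_j}\bigl\{\vec{c_u}(s') + \minweight(x,s'\to s)\bigr\}.
\]
The left-hand side equals $\vec{c_{ux}}(s)$ by Condition~\ref{itm:separated infix linear k}, and the minimum on the right taken over \emph{all} of $S$ equals $\vec{c_{ux}}(s)$ by definition. By Proposition~\ref{prop:run characteristics of increasing infix from state}, the optimal $s'$ lies in some $V_{j''}$ with $j'' \ge j$. If $j''>j$, then Condition~\ref{itm:separated infix gaps V} forces $\vec{c_u}(s') > \vec{c_u}(s) + \incinfGapConst$, while Proposition~\ref{prop:run characteristics of increasing infix from state} gives $k_{j,x} \le \maxeff{x}$ and clearly $\minweight(x,s'\to s) \ge -\maxeff{x}$. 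Combining, we would get $k_{j,x} > \incinfGapConst - \maxeff{x}$, which contradicts $k_{j,x} \le \maxeff{x}$ since $\incinfGapConst = 8\bigM(|S|+1)\maxeff{xy}^2 \gg 2\maxeff{x}$. So the optimum lives in $V_j$, establishing the identity.

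Writing the identity as the $(\min,+)$ matrix equation $\vec{v} \otimes N_j = \vec{v} + k_{j,x}$, where $\vec{v} = \vec{c_u}|_{V_j}$, a trivial induction using the fact that adding a scalar commutes with $\otimes N_j$ yields
\[
\vec{v} \otimes (N_j)^m \;=\; \vec{v} + m\cdot k_{j,x} \qquad \text{for all } m \ge 1.
\]
Evaluating at coordinate $r$ and bounding the left side by the diagonal entry,
\[
\vec{v}(r) + mk\cdot k_{j,x} \;=\; \bigl(\vec{v}\otimes (N_j)^{mk}\bigr)[r] \;\le\; \vec{v}(r) + (N_j)^{mk}[r,r] \;\le\; \vec{v}(r) - mw.
\]
Since $\vec{v}(r) = \vec{c_u}(r)<\infty$, this forces $mk\cdot k_{j,x} + mw \le 0$; but $k_{j,x}\ge 0$, $w>0$, and $m\ge 1$ make the left-hand side strictly positive, a contradiction. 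The delicate step is the second paragraph's gap argument, which pins the optimum down to $V_j$; once that is in place, the matrix calculation is immediate.
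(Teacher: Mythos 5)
Your proof is correct, and it is essentially the same argument as the paper's: both first use Proposition~\ref{prop:run characteristics of increasing infix from state} to confine any cycle on $x^k$ to a single block $V_j$, both rely on the pointwise bound $\vec{c_u}(s') + \minweight(x,s'\to s)\ge \vec{c_u}(s)+k_{j,x}$ within $V_j$, and both then ``pump'' that bound along the cycle. The paper does the pumping by a direct telescoping sum over the cycle, obtaining $\weight(\rho)\ge k\cdot k_{j,x}\ge 0$ in one step; you package the same telescoping as the $(\min,+)$-matrix inequality $\vec{v}\otimes (N_j)^m\ge \vec{v}+mk_{j,x}$ (you actually establish equality, though only the one direction is used), and then derive the contradiction from the diagonal entry. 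The extra iterate $m$ in your final step is superfluous---$m=1$ already yields $k\cdot k_{j,x}+w\le 0$---but it does no harm; the matrix language is a faithful reformulation rather than a new route.
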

\begin{proof}
    Assume by way of contradiction that the claim does not hold, i.e., $\minweight(x^k,r\to r)<0$ for some $r\in B$ and $k\in \bbN$. Let 
    $\rho:r=r_0\runsto{x}r_1\runsto{x}r_2\cdots r_{k-1}\runsto{x}r_k=r$ be a run such that $\weight(\rho)<0$. In particular, there exists some $1\le i_0\le k$ such that $\minweight(x,r_{i_0-1}\to r_{i_0})<0$.
    
    By \cref{itm:separated infix partition} of \cref{def:separated increasing infix from state}, for every $0\le i\le k$ we have $r_i\in V_{j_i}$ for some $1\le j_i\le \ell$. 
    By \cref{prop:run characteristics of increasing infix from state}, we have $j_{i-1}\ge j_i$ for all $1\le i\le k$, since there are only runs from $V_j$ to ``lower'' parts.
    Since $r_0=r_k$, it therefore follows that all the states must be in the same part $V_j$.

    Let $k_{j,x}$ be the corresponding constant to $V_j$. Observe that for every $r',r''\in V_j$ it holds that $\minweight(x,r'\to r'')\ge \vec{c_u}(r'')-\vec{c_u}(r')+k_{j,x}$ (otherwise we would have $\vec{c_{ux}}(r'')<\vec{c_u}(r'')+k$).
    Thus, as a telescopic sum, and since $r_k=r_0=r$, we have
    \[
    \begin{split}
        &\weight(\rho)\ge \sum_{i=1}^k \minweight(x,r_{i-1}\to r_i)\ge  \sum_{i=1}^k (\vec{c_u}(r_i)-\vec{c_u}(r_{i-1})+k_{j,x}) \\
        &= \vec{c_u}(r_k)-\vec{c_u}(r_0)+k\cdot k_{j,x}=k\cdot k_{j,x}
    \end{split}
    \]
    This contradicts the negativity of $\rho$, and we are done.
\end{proof}
We now have that for every $k\in \bbN$, the minimal weight that can be attained by a run over $x^k$ cannot be too small (as the run cannot contain negative cycles). Specifically, we have the following.
\begin{corollary}
\label{cor:increasing infix no very negative runs on x k}
For every $k\in \bbN$ and run $\rho:B\runsto{x^k} B$ we have $\weight(\rho)\ge -\maxeff{x}|S|$.
\end{corollary}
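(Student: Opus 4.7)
The plan is to reduce the run $\rho$ on $x^k$ to a short, acyclic-in-states run by repeatedly stripping off internal cycles, each of which is non-negative by \cref{increasing infix no negative cycles on x}. The surviving trimmed run traverses at most $|S|$ copies of $x$, and so its weight is trivially bounded by $-\maxeff{x}|S|$. Since each stripping step only removes non-negative weight, the weight of the original $\rho$ is at least that of the trimmed run.

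Concretely, I will consider the sequence of states $r_0,r_1,\ldots,r_k$ visited by $\rho$ after reading each copy of $x$, where $r_0,r_k \in B$. If $k\le |S|$, then there is nothing to do: by \cref{prop:cost higher than max weight} (or just the definition of $\maxeff{}$) we have $\weight(\rho)\ge -k\maxeff{x}\ge -|S|\maxeff{x}$. If $k > |S|$, then by the pigeonhole principle some state repeats, i.e., $r_i = r_j$ for indices $0\le i<j\le k$. Then $\rho$ factors as
\[
\rho:\ r_0\runsto{x^i}r_i\runsto{x^{j-i}}r_j=r_i\runsto{x^{k-j}}r_k,
\]
and by \cref{increasing infix no negative cycles on x} applied to the middle segment (note $r_i\in B$ since $B=\booltrans(\vec{c_u})$ is preserved by $x$, as per \cref{itm:separated infix reachable states} of \cref{def:separated increasing infix from state}), we have $\minweight(x^{j-i},r_i\to r_i)\ge 0$, hence the middle factor of $\rho$ has non-negative weight. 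Removing it yields a run $\rho'$ on $x^{k-(j-i)}$ from $r_0$ to $r_k$ with $\weight(\rho')\le \weight(\rho)$.

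Iterating this removal terminates after finitely many steps, producing a run $\rho^*$ on $x^{k^*}$ from $r_0$ to $r_k$ whose visited state sequence has no repeats. Thus $k^*+1\le |S|$, so $k^*\le |S|-1\le |S|$, and consequently
\[
\weight(\rho)\ \ge\ \weight(\rho^*)\ \ge\ -k^*\maxeff{x}\ \ge\ -|S|\maxeff{x}.
\]

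There is no real obstacle here: the only subtlety is to confirm that the intermediate states $r_i$ stay inside $B$ so that \cref{increasing infix no negative cycles on x} is applicable. This follows immediately because $\supp(\vec{c_u})=\supp(\vec{c_{ux}})=B$ and the runs on $x$ (starting from $B$) remain in $B$, as guaranteed by the partition structure and the run-characterization in \cref{prop:run characteristics of increasing infix from state}. Hence the corollary reduces to a single clean pigeonhole argument on top of \cref{increasing infix no negative cycles on x}.
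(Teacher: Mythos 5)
Your proof is correct and follows exactly the approach the paper intends for this corollary (which it states without proof, as an immediate consequence of the no-negative-cycles proposition). The cycle-stripping via pigeonhole, the observation that each removed segment is non-negative by \cref{increasing infix no negative cycles on x}, and the final $k^*\le |S|-1$ bound are all sound, and all intermediate states $r_i$ do lie in $B$ because $\supp(\vec{c_u})=\supp(\vec{c_{ux}})=B$ forces $\booltrans(B,x)=B$. One tiny notational slip: you write ``$B=\booltrans(\vec{c_u})$'' where you mean $B=\supp(\vec{c_u})$ (and the relevant invariance is that $\booltrans(B,x)=B$, not $\booltrans$ of a configuration); the substance is right.
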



%
We now lift \cref{def:separated increasing infix from state} to a set of states, obtaining the main definition of this section.
\begin{definition}[Separated Increasing Infix]
    \label{def:separated increasing infix}
    Consider a set of states $S'\subseteq S$ and a word $w$ decomposed as $w=u\cdot x\cdot y\cdot v$. 
    We say that \emph{$w$ is a separated increasing infix from $S'$} if $w$ is a separated increasing infix from every $p\in S'$ and in addition $(\ghostTrans(S',u),x)$ is a stable cycle.
\end{definition}

The next central lemma is key to our proofs, and captures our usage of separated increasing infixes. Intuitively, it states that given a separated increasing infix $u\cdot x\cdot y\cdot v$, we can replace it with $u\cdot \alpha_{\ghostTrans(S',u),x}\cdot v$, and then many ``nice'' properties hold. Specifically, the cost of the word becomes lower, its potential becomes higher, and it induces higher runs. A caveat to the lemma is that these guarantees might not hold, but in such a case we have a type-$0$ witness (this is based on \cref{lem:unfolding maintains potential}).

\begin{lemma}[\keyicon \lightbulbicon Increasing Infix to Cactus]
\label{lem:increasing infix budding} 
Either $\augA_\infty^\infty$ has a type-$0$ witness, or the following holds. 
Consider a separated increasing infix $S',w$ where $w=u\cdot x\cdot y\cdot v$ and $S'\subseteq S$. Write $B=\ghostTrans(S',u)$ and let  $w'=u\alpha_{B,x}v$, then we have:
\begin{enumerate}
    \item $\cost(w)>\cost(w')$.
    \item For every $r\in S',t\in S$ it holds that $\minweight(w,r\to t)\le \minweight(w',r\to t)$.
    \item For every $w_1,w_2\in (\Gamma_\infty^0)^*$ with $\booltrans(s_0,w_1)\subseteq S'$ we have $\pot(w_1ww_2)\le \pot(w_1w'w_2)$ (if the potential is defined).
\end{enumerate}
\end{lemma}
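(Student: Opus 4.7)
Proof plan. The lemma asserts a dichotomy: either a type-$0$ witness exists for $\augA_\infty^\infty$ (in which case we are done by \cref{lem:witness implies nondet}), or the three numbered items hold. Item 1 is a direct cost calculation: by Condition 1 of \cref{def:separated increasing infix from state}, $\cost(y) > 2^{16(\bigM |S| \cost(x))^2} = \cost(\alpha_{B,x})$, where the equality is from \cref{def:cost depth and sub cactus}. Since cost is additive over concatenation and $\cost(x) \ge 1$, it follows that $\cost(w) = \cost(u)+\cost(x)+\cost(y)+\cost(v) > \cost(u) + \cost(\alpha_{B,x}) + \cost(v) = \cost(w')$, giving Item 1.

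For Item 2, the plan is to take a minimum-weight run $\mu: r \runsto{u} s \runsto{\alpha_{B,x}} s' \runsto{v} t$ on $w'$ and construct a run on $w$ of no greater weight (the case $\minweight(w',r\to t) = \infty$ being trivial). Existence of the $\alpha_{B,x}$ transition forces $(s,s') \in \GroundPairs(B,x)$ with weight $c_{s,s'} = \minweight(x^{2\bigM}, s \runsto{x^\bigM} g \runsto{x^\bigM} s')$ for some grounding state $g$. Minimal reflexivity of $g$ places it in some $V_{j_g}$ with $k_{j_g,x}=0$, and by the ``moreover'' in Condition 3(b) of \cref{def:separated increasing infix from state}, also $k_{j_g,y}=0$. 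I would then route through $V_{j_g}$: exhibit an intermediate state $s_1$ so that $s \runsto{x} s_1 \runsto{y} s'$ is a run whose total weight matches $c_{s,s'}$. Existence of $s_1$ would use \cref{prop:run characteristics of increasing infix from state} (transitions go downward in the partition; every state has a predecessor in its own part) together with the stable cycle structure of $(B,x)$ providing grounded pathways.

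For Item 3, I would invoke \cref{lem:unfolding maintains potential} on $w_1 u \alpha_{B,x} v w_2$: either a type-$0$ witness exists (and we are done), or $\pot(w_1 u \alpha_{B,x} v w_2) = \pot(w_1 u x^{2\bigM M_0} v w_2)$ for an appropriate unfolding constant $M_0$. It then suffices to show $\pot(w_1 u x y v w_2) \le \pot(w_1 u x^{2\bigM M_0} v w_2)$. The idea is to compare configurations after $w_1 uxy$ and $w_1 u x^{2\bigM M_0}$ via the partition: on a state $s \in V_j$, the configuration value grows by $k_{j,x}+k_{j,y} \le \maxeff{xy}$ under $xy$ but by roughly $2\bigM M_0 \cdot k_{j,x}$ under $x^{2\bigM M_0}$; for $M_0$ large, on parts with $k_{j,x} > 0$ the unfolded configuration strictly dominates, and on parts with $k_{j,x}=0=k_{j,y}$ the two configurations agree. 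Combined with \cref{lem:potential and charge are monotone} (after aligning baseline and support), this would yield the desired inequality.

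The main obstacle is the support mismatch in Item 3: \cref{lem:potential and charge are monotone} requires equal supports, yet the configurations under $xy$ and $x^{2\bigM M_0}$ can differ precisely at ghost-reachable states of $\alpha_{B,x}$ that become genuinely reachable upon unfolding. The workaround I anticipate is to replace the monotonicity invocation with a direct dominance argument -- any maximal dominant state after $w_1 u x y v w_2$ must correspond to a state at no lesser value in $w_1 u x^{2\bigM M_0} v w_2$, since a separating suffix $z$ witnessing dominance is preserved when the configuration coordinates in its support are raised. Item 2 carries an analogous subtlety: constructing the intermediate $s_1$ with the sharp weight bound requires bridging a single $x$-step from $s \in V_j$ (which may land in $V_{j_1}$ for $j_g \le j_1 \le j$, not directly in $V_{j_g}$) with the $y$-step into $V_{j'}$; overcoming this may require appealing to \cref{lem:pumping grounded pairs} to justify that the $xy$-trajectory can be squeezed to match the grounded-pair weight, using that $y$ inherits the zero-increment structure of $x$ on $V_{j_g}$.
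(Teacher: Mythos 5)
Your Item 1 matches the paper. For Item 3, your high-level plan — reduce to comparing $\pot(w_1 uxy v w_2)$ with $\pot(w_1 ux^{2\bigM M_0} v w_2)$ via \cref{lem:unfolding maintains potential}, then argue configuration dominance on the $V_j$-partition — is exactly the paper's route. However, the ``main obstacle'' you anticipate is misdiagnosed: the supports of $\xconf(\vec{c_r},uxy)$ and $\xconf(\vec{c_r},ux^{2\bigM M_0})$ are \emph{both} equal to $B$ (both $x$ and $y$ cycle on $B$ by Requirement 2 of \cref{def:separated increasing infix from state}, and so does any power of $x$), so \cref{lem:potential and charge are monotone} applies directly in that comparison; the support discrepancy you worry about only appears between $ux^{2\bigM M_0}$ and $u\alpha_{B,x}$, and that is precisely what \cref{lem:unfolding maintains potential} is engineered to absorb, so no ad-hoc ``direct dominance argument'' is needed. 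What your sketch of the $xy$-vs-$x^{2\bigM M_0}$ comparison does omit — and this is a real gap — is the case where a minimal run on $x^{2\bigM M_0}$ to a state $r'\in V_j$ climbs into some $V_{j'}$ with $j'>j$ along the way; the paper handles this separately (using \cref{cor:increasing infix no very negative runs on x k} and the $\incinfGapConst$ gap bound) to show such runs stay far above anything $xy$ can produce. Without that case, ``the unfolded configuration strictly dominates'' is not established.

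For Item 2, your proposed argument is genuinely different from the paper's and likely does not work as stated. The paper treats Item 2 and Item 3 \emph{by the same two-step reduction}: first $\minweight(uxyv,r\to t)\le \minweight(ux^{2\bigM M_0}v,r\to t)$ via the partition analysis of Step 1, then $\minweight(ux^{2\bigM M_0}v,r\to t)\le \minweight(u\alpha_{B,x}v,r\to t)$ via Item 1 of \cref{lem:unfolding maintains potential}. Your plan instead tries to bypass the intermediate $x^{2\bigM M_0}$ word and directly exhibit a run $s\runsto{x}s_1\runsto{y}s'$ of weight at most $c_{s,s'}=\minweight(x^{2\bigM},s\runsto{x^\bigM}g\runsto{x^\bigM}s')$. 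There is no reason a single $x$-step followed by a single $y$-step should match a quantity that is optimized over $2\bigM$ iterations of $x$: the negative simple-path contributions in $\minweight(x^\bigM,s\to g)$ can accumulate over many $x$-iterations and are not reproducible by one pass through $x$ then one pass through $y$ (and $y$ offers different transitions than $x$ — the increasing-infix definition only constrains the \emph{aggregate} shifts $k_{j,\cdot}$, not per-pair minimum weights). You gesture at \cref{lem:pumping grounded pairs} as a patch, but that lemma compares iterated $x$'s to $\alpha_{B,x}$; it does not give a bound on runs through $y$. The clean fix is to follow the same two-step decomposition you already use in Item 3.
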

\begin{proof}
    The first requirement follows immediately from the \cref{def:separated increasing infix from state}: since $\cost(y)>2^{16(\bigM |S| \cost(x))^2}=\cost(\alpha_{B,x})$  (by \cref{def:cost depth and sub cactus}) so 
\[
\begin{split}
&\cost(w)=\cost(u)+\cost(x)+\cost(y)+\cost(v)>\\
&\cost(u)+\cost(x)+\cost(\alpha_{B,x})+\cost(v)\ge \cost(u)+\cost(\alpha_{B,x})+\cost(v)=\cost(w')
\end{split}
\]
The remainder of the proof is broken into two steps.
\paragraph{Step 1: From $y$ to $x^*$}
The first step is to replace $y$ with many iterations of $x$. Intuitively, we imagine that $x$ is already replaced with $\alpha_{B,x}$, and we unfold it.

Let $w_1,w_2\in (\Gamma_\infty^0)^*$ with $\booltrans(s_0,w_1)\subseteq S'$, and consider the word $\unfold(w_1u,\alpha_{B,x},vw_2 \wr F)=w_1ux^{2\bigM M_0}vw_2$ for some $F>2\maxeff{w_1u\alpha_{B,x}yvw_2}$.
We assume without loss of generality that $M_0$ is large enough so that for every state $r\in S'$ it holds that $ux^{2\bigM M_0}v$ is also an unfolding $\unfold(u,\alpha_{B,x},v \wr F)$ when starting\footnote{Note that the initial state $r$ is implicit in \cref{def:unfolding function}, but the unfolding may depend on it nonetheless.} from $r$, and that 
$M_0> \maxeff{xy}$. Note that $y$ does not appear in the latter unfolding -- this is deliberate, and makes the conclusion of the proof simpler.

We can assume these requirements on $M_0$ by \cref{rmk:increasing repetitions in unfolding}, which states that increasing the number of repetitions of $x$ retains all the properties of unfolding.

 We show an analogue of the properties of the lemma. That is, we show:
\begin{itemize}
    \item For every $r\in S',t\in S$ it holds that $\minweight(uxyv,r\to t)\le \minweight(ux^{2\bigM M_0}v,r\to t)$.
    \item $\pot(w_1 uxyv  w_2)\le \pot(w_1ux^{2\bigM M_0}v w_2)$ (if the potential is defined).
\end{itemize}
We start with the first property.
The intuitive idea is as follows. We repeat the infix $x$ many times. By the increasing infix property, for each $V_j$ in the partition, this either causes all states to increase, or stay in place. However, this is only true as long as the $V_j$ sub-configurations stays far from one another. At some point, it could be that e.g., $V_2$ rises a lot, while $V_3$ stays in place, and when the weights ``cross'', the states in $V_3$ now become minimal. 
Now, while state in $V_3$ grow when reading $x$, it may still be the case that $V_3$ generates \emph{negative} runs to states in $V_2$. Before pumping $x$, this had no effect. Now, however, it generates lower runs.

Nonetheless, these negative runs are ``local'', since $V_3$ cannot generate any negative cycle in $V_2$ (otherwise $V_2$ would have a negative cycle itself). Since the gap we start with is huge, we are guaranteed that even after these local decreasing runs, the overall configuration increases in comparison to the first $x$.
Moreover, even when removing $y$, the negative effect this might create is not large enough to overcome the gaps we create by pumping $x$, so we end up with a superior configuration. This idea is depicted in~\cref{fig:inc infix cactus unfolding} 

An important point to stress is that the pumped runs of the $V_j$ sets might indeed cross each other, but if this happens, it must be so far above the original configuration, that it maintains the overall increase.

\begin{figure}[ht]
    \centering
    \includegraphics[width=0.8\linewidth]{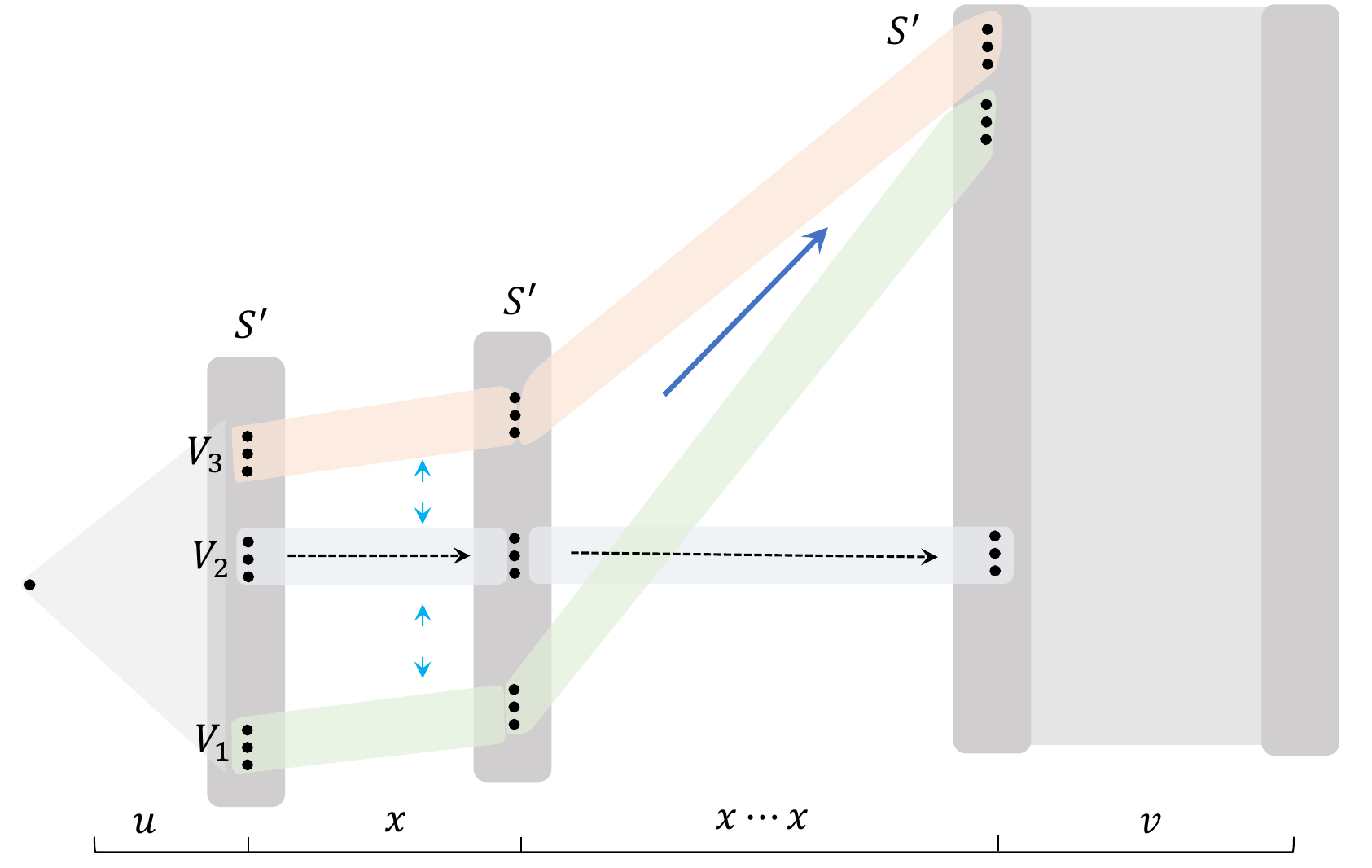}
    \caption{Replacing $y$ with $x^{2\bigM M_0}$. Visually, the increase seems ``steeper'', but it is the same slope as $x$, only much much longer. Also notice that runs from $V_1$ eventually cross those of $V_2$ (and so some runs might be overtaken by $V_2$), but this happens way above the original $V_1$ states.}
    \label{fig:inc infix cactus unfolding}
\end{figure}

We now turn this wild hand-waving to a precise argument. 
Let $r\in S'$. 
We start by analyzing runs on the words $uxy$ and $ux^{2\bigM M_0}$. Specifically, we show that 
$\xconf(\vec{c_r},uxy)\le \xconf(\vec{c_r},ux^{2\bigM M_0})$.
Using the notations of \cref{def:separated increasing infix from state} (with $r$ as the starting state), 
since $\supp(\vec{c_u})=\supp(\vec{c_{ux}})=\supp(\vec{c_{uxy}})=B$, then by induction we also have that $B= \supp(\vec{c_{ux^{k}}})$ for any $k\in \bbN$.

Consider some $r'\in B$, and assume $r'\in V_j$ in the partition given by \cref{itm:separated infix partition} of \cref{def:separated increasing infix from state}. 
By \cref{prop:run characteristics of increasing infix from state}, there exist $k_{j,x},k_{j,y}\in \bbN$ such that $k_{j,x}=0$ if and only if $k_{j,y}=0$ if and only if $k_{j,x}+k_{j,y}=0$. 

Let 
\[\rho:r\runsto{u}p_0\runsto{x}p_1\runsto{x}p_2\cdots p_{2\bigM M_0-1} \runsto{x} p_{2\bigM M_0}=r'\]
be a minimal-weight run. By \cref{prop:run characteristics of increasing infix from state}, for all $0\le i\le 2\bigM M_0$ we have that  $p_i\in V_{j'}$ for $j'\ge j$ 
(otherwise there is a transition from lower $V_{j''}$ to a higher one, which is a contradiction).
Denote $\vec{d_i}=\xconf(r,ux^i)$. We aim to show that restricted to $V_j$, the configuration $\vec{d_{2\bigM M_0}}$ is superior to $\vec{c_{uxy}}$.
Note that for every $0\le i<2\bigM M_0$ we have $\vec{d_{i-1}}\le \vec{d_i}$.
Indeed, by \cref{def:separated increasing infix from state} each $V_j$ can only (weakly) increase. This, however, is not enough to prove our claim, since once $i$ gets high enough, the gaps are no longer maintained and the $V_j$ sets may ``interfere'' with each other, and prevent certain states from strictly increasing. Then, removing $y$ may decrease things, which can be problematic.

We consider two cases.
\begin{itemize}
    \item If $p_i\in V_j$ for all $0\le i\le 2\bigM M_0$. In particular, the minimal run leading to each $p_i$ stems from $V_j$, when starting at $\vec{d_{i-1}}$. 
    We claim (by induction) that $\vec{d_i}(s)=\vec{c_u}(s)+ik_{j,x}$ for all $i$.
    
    The base case $i=0$ is trivial, since $\vec{d_0}=\vec{c_u}$. Assume correctness for $i$, we prove for $i+1$.
    By the induction hypothesis, we have that $\vec{d_i}(s)=\vec{c_u}(s)+ik_{j,x}$.
    By the observation above, we have that
    $\vec{d_{i+1}}(s)=\minweight_{\vec{d_i}}(x,V_j\to s)$. 
    Since this expression depends only on states in $V_j$, then by the above we can restrict attention to $V_j$, and therefore write
    \[\vec{d_{i+1}}(s)=\minweight_{\vec{c_u}}(x,V_j\to s)+ik_{j,x}=\vec{c_{ux}}(s)+ik_{j,x}=\vec{c_u}(s)+(i+1)k_{j,x}\]
    which completes the induction.

    Now, if $k_{i,j}=0$ then for every $s'\in V_j$ we have $\vec{d_{ 2\bigM M_0}}(s')=\vec{c_u}(s')=\vec{c_{ux}}(s')=\vec{c_{uxy}}(s')$ (since by \cref{def:separated increasing infix from state} we also have $k_{j,y}=0$). So the ``sub-configurations'' of $uxy$ and $ux^{2\bigM M_0}$ restricted to $V_j$ are equal. In particular, $\minweight(uxy,r\to r')=\minweight(ux^{2\bigM M_0},r\to r')$.
    
    If $k_{j,x}>0$, then by our choice that $M_0> \maxeff{xy}$ and the inductive claim, we have 
    \[
    \vec{d_{2\bigM M_0}}(r')> \vec{c_u}(r')+2\bigM \maxeff{xy} k_{j,x}\ge \vec{c_u}(r')+\maxeff{xy}
    \]
    However, by \cref{def:separated increasing infix from state} and \cref{prop:run characteristics of increasing infix from state} we have
    \[
    \vec{c_{uxy}}(r')\le \vec{c_u}(r')+k_{j,x}+k_{k,y}\le \vec{c_u}(r')+\maxeff{x}+\maxeff{y}=\vec{c_u}(r')+\maxeff{xy}
    \]
    and we conclude that restricted to $V_j$, the sub-configuration of $ux^{2\bigM M_0}$ is superior to that of $uxy$. In particular, $\minweight(uxy,r\to r')<\minweight(ux^{2\bigM M_0},r\to r')$.

    \item  The second case is when there exists $0\le i\le 2\bigM M_0$ such that $p_i\in V_{j'}$ for some $j'>j$. 
    Intuitively, in this case the run on $x^{2\bigM M_0}$ goes so high that it mingles with e.g., $V_{j+1}$. From there, it must remain much higher than any weight that $xy$ can attain starting from $V_j$ (and in particular the weight of reaching $r'$)
    
    By \cref{cor:increasing infix no very negative runs on x k}, for every $s\in V_{j''}$ with $j''>j$ and every $k\in \bbN$ we have $\minweight_{\vec{c_u}}(x^k,s\to B)\ge \vec{c_u}(s)-\maxeff{x}|S|$. 
    As observed above, for every $i'$ we have $\vec{d_{i'-1}}\le \vec{d_{i'}}$, and therefore we in particular have 
    \[\minweight_{\vec{d_i}}(x^{2\bigM M_0-i},p_i\to B)\ge \vec{c_u}(s)-\maxeff{x}|S|\]
    but by \cref{itm:separated infix gaps V} of \cref{def:separated increasing infix from state}, for the maximal state $s'\in V_j$ we have
    \[
        \begin{split}
        &\vec{c_u}(s)-\maxeff{x}|S| > \vec{c_u}(s')+\incinfGapConst -\maxeff{x}|S| \gg \\
        &\vec{c_u}(s')+\maxeff{xy}\ge \minweight_{\vec{c_u}}(xy,V_j\to B)
        \end{split}
    \]
    Since $p_0\runsto{xy}r'$ and $p_0\in V_j$, it follows that $\minweight(uxy,r\to r')< \minweight(ux^{2\bigM M_0},r\to r')$.

    We remark that this latter case cannot occur if $k_{j,x}=0$, but we do not use this fact.
\end{itemize}

Since the above works for every $j$ and every $r'$, we conclude that $\vec{c_{uxy}}\le \vec{d_{2\bigM M_0}}$. 
Therefore, we immediately have that
$\xconf(\vec{c_{uxy}},v)\le \xconf(\vec{d_{2\bigM M_0}},v)$.  Recall that this is all under $r$ as the initial state, and works for every $r\in S'$.
We can conclude the first property we want to prove, namely:
for every $r\in S',t\in S$ it holds that $\minweight(uxyv,r\to t)\le \minweight(ux^{2\bigM M_0}v,r\to t)$.

Obtaining the second item is now easy. Observe that since both $x$ and $y$ are cycles on $B$, i.e., $\booltrans(B,x)=\booltrans(B,y)=B$, then 
\[\supp(\xconf(\vec{c_{uxy}},v))= \supp(\xconf(\vec{d_{2\bigM M_0}},v))\]
Also, by the first property above, we can also prefix $w_1$ and still maintain configuration superiority. That is,
\[
    \xconf(\vec{c_{\init}},w_1uxyv)\le \xconf(\vec{s_{\init}},w_1ux^{2\bigM M_0}v)
\]
and again the configurations have the same support. 
This brings us under the conditions of \cref{lem:potential and charge are monotone} (two support-equivalent configurations with a superiority relation). Applying the lemma with the suffix $w_2$, we have
\[\pot(\xconf(\xconf(\vec{c_{\init}},w_1uxyv),w_2))\le \pot(\xconf(\xconf(\vec{c_{\init}}w_1ux^{2\bigM M_0}v),w_2))\]
And since we start from $\vec{c_{\init}}$, this simplifies to $\pot(w_1uxyvw_2)\le \pot(w_1ux^{2\bigM M_0}vw_2)$, as required.

\paragraph{Step 2: from $x^*$ to $\alpha_{B,x}$}
Using the previous step, we now need to prove the following.
\begin{itemize}
    \item For every $r\in S'$, $t\in S$ it holds that $\minweight(ux^{2\bigM M_0}v,r\to t)\le \minweight(u\alpha_{B,x}v,r\to t)$.
    \item $\pot(w_1ux^{2\bigM M_0}vw_2)\le \pot(w_1u\alpha_{B,x} vw_2)$.
\end{itemize}
Intuitively, we already repeat $x$ enough times so that all its behaviors are ``stabilized'', and can be replaced by $\alpha_{B,x}$.

Technically, all the work is already done by \cref{lem:unfolding maintains potential}, as follows.
Recall that $M_0$ is chosen so that for every state $r\in S'$ it holds that $ux^{2\bigM M_0}v$ is an unfolding $\unfold(u,\alpha_{B,x},v \wr F)$ with $F$ large enough. 
We are therefore within the conditions of \cref{lem:unfolding maintains potential}, with $r$ as the initial state. We then have that either $\augA_\infty^\infty$ has a type-0 witness (in which case we are done), or from Item 1 \cref{lem:unfolding maintains potential} in we get in particular that 
$\xconf(\vec{c_r},ux^{2\bigM M_0v})\le \xconf(\vec{c_r},u\alpha_{B,x}v)$ (by applying the lemma with the suffix $v$). This in means that for every $t\in S$ we have 
\[
\begin{split}
&\minweight(ux^{2\bigM M_0}v,r\to t)=\xconf(\vec{c_r},ux^{2\bigM M_0v})(t)\le\\ 
&\xconf(\vec{c_r},u\alpha_{B,x}v)=\minweight(u\alpha_{B,x}v,r\to t)
\end{split}
\]
concluding the first item, and combined with Step 1 concludes Item 2 of our proof.

Finally, to conclude the potential inequality we again apply \cref{lem:unfolding maintains potential}. This time, we recall that $M_0$ is also chosen such that $M_0=\unfold(w_1u,\alpha_{B,x},vw_2 \wr F)$ for $F$ large enough. Again we are within the scope of \cref{lem:unfolding maintains potential}, this time with the initial state $s_0$ (i.e., the ``real'' initial state). 
Thus, again either $\augA_\infty^\infty$ has a type-0 witness (in which case we are done), or $\pot(w_1ux^{2\bigM M_0}vw_2)=\pot(w_2u\alpha_{B,x}vw_2)$, and combined with the potential inequality of Step 1, we conclude Item 3 of our proof for the specific $w_1,w_2$ we start with.
Since the above works for every $w_1,w_2$, we are done. 

Note that while the proof requires us to first fix $w_1,w_2$ in order to choose $M_0$, eventually we end up with the word $u\alpha_{B,x}v$, which satisfies the claim regardless of $w_1,w_2$.
\end{proof}

\subsection{Existence of Separated Increasing Infixes}
\label{sec:existence of separated increasing infixes}
We now turn to show that separated increasing infixes exist. The overall approach is to analyze the run structure of words, combined with the Ramsey-style argument we show in \cref{thm: our ramsey}. 
In the remainder of the section, we fix a finite alphabet $\Gamma'\subseteq \Gamma_\infty^\infty$ and a corresponding constant $W_{\max}$ that is the maximal weight (in absolute value) appearing in any transition on the letters in $\Gamma'$. 

In the following we focus on words over $\Gamma'$ decomposed as $w_1w_2w_3w_4$. The intuition behind this decomposition is the following: $w_1$ is a prefix leading to some set of states. 
$w_2$ separates several outgoing runs from each state after $w_1$ to a large gap. 
$w_3$ is a very long infix that maintains a separation between runs (and induces a separated increasing infix in some form), and $w_4$ is a suffix. 
\begin{remark}[Implicit Decomposition to $(w_1,w_2,w_3,w_4)$]
    \label{rmk:w1w2w3w4 is a decomposition}
    When we write $w_1w_2w_3w_4$ we assume that the decomposition is explicit (instead of writing $(w_1,w_2,w_3,w_4)$). When we wish to consider the entire word $w_1w_2w_3w_4$ we explicitly mention that we consider it ``as a concatenation''.
\end{remark}
We start with several definitions regarding the runs structure of different parts of this decomposition, mainly focused around $w_3$.

\begin{definition}[Independent Runs]
    \label{def:independent run}
    Consider a word $w_1w_2w_3w_4$, a state $s\in \ghostTrans(s_0,w_1)$ and two seamless runs $\rho_1:s\runsto{w_2}p_1\runsto{w_3}q_2$ and $\rho_2:s\runsto{w_2}p_3\runsto{w_3}q_4$. We say that $\rho_1$ and $\rho_2$ are \emph{independent with respect to $s$ and $w_3$} if there exists $G\in \bbN$ such that for every prefix $u$ of $w_3$ we have $\weight(\rho^{w_2u}_2)-\weight(\rho^{w_2u}_1)\ge G$ where $\rho^{w_2u}_i$ is the prefix of $\rho_i$ on the word $w_2u$.

    In this case we also say that \emph{$\rho_2$ is above $\rho_1$ with gap $G$} (we also use ``not within gap $G$'' when the order is not important).

    Similarly, a set of runs $\{\rho_1,\ldots,\rho_\ell\}$ from $s$ on $w_2w_3$ is \emph{independent with gap $G$} if its runs are pairwise-independent with gap $G$ (a singleton set $\{\rho_1\}$ is independent by definition).
\end{definition}
Observe that two seamless runs with distinct weights in some prefix, must be state-disjoint from that prefix and on (otherwise at least one of them is not seamless). It follows that for a state $s$, every set of independent runs is of size at most $|S|$. 

Our overall approach in the next few sections is to ``untangle'' runs on words in order to obtain a very structured behavior. To do so, we consider a sequence of words $w_1w_2w_3w_4$ with $w_3$ increasing in length. This is captured by our first definition.

\begin{definition}[\keyicon Elongating Words Sequence]
    \label{def:elongated words sequence}
    A function from $\bbN$ to $\Gamma'^*$ is an \emph{elongating words sequence}, denoted $\words:\bbN\to \Gamma'^*$, if for every $m\in \bbN$ we have $\words(m)=w_1w_2w_3w_4$ such that $w_1w_2w_3w_4$ has a seamless baseline run and $|w_3|\ge m$. 

    For elongating words sequences $\words,\words'$, we say that $\words'$ is a \emph{faithful restriction} of $\words$ if for every $m\in \bbN$ there exists $t\in \bbN$ with $\words'(m)=w'_1w'_2w'_3w'_4$ and $\words(t)=w_1w_2w_3w_4$ such that $w'_1=w_1$ and $w'_1w'_2w'_3w'_4=w_1w_2w_3w_4$ as a concatenation.
\end{definition}
Intuitively, a faithful restriction $\words'$ of $\words$ only returns ``similar'' words to $\words$, but $\words$ may return things that are not returned by $\words'$. We typically use this when we modify $\words$ to obtain other function in a way that is not too ``disruptive''.

Since $|w_3|\ge m$ in an elongated words sequence, we can obtain a faithful restriction of it by truncating each $w_3$ to be of length exactly $m$, and assigning the suffix of $w_3$ to $w_4$. For convenience, we often make this assumption in this section (but drop it in following sections). We explicitly define it as follows.
\begin{definition}[Exact Words Sequence]
\label{def:exact word sequence}
An elongated words sequence is \emph{exact}, denoted $\xwords$, if for every $m\in \bbN$ and $\xwords(m)=w_1w_2w_3w_4$ we have $|w_3|=m$.
\end{definition}

We now specify two important possible properties of elongating word sequences. The first concerns the gaps between independent runs.

\begin{definition}[$\ell$-Sparse]
    \label{def:sparse words}
    For $\ell\in \bbN$, an exact word sequence $\xwords$ is \emph{$\ell$-sparse} if for every $m\in \bbN$ with $\xwords(m)=w_1w_2w_3w_4$ and     
    for every $p\in \ghostTrans(s_0,w_1)$ there is a set $I_p$ of independent runs with gap $\sparseGapBound$ such that $|\bigcup_{p}I_p|= \ell$.
    \acctodo{Changed -- Main thing}
\end{definition}
Intuitively, an $\ell$-sparse function returns words where the gaps between independent runs are large (and grow with $m$). We assume that when $\xwords$ is $\ell$-sparse, the sets $I_p$ are returned together with the word $\xwords(m)$.

Notice that the requirement on $\ell$ is that all words have the same number of independent runs. We show that in fact it is sufficient that infinitely many words have $\ell$ independent runs.

\begin{proposition}
    \label{prop:faithful restriction of words infinitely many ells}
    Let $\xwords$ be an exact word sequence such that there are infinitely many $m\in \bbN$  with $\xwords(m)=w_1w_2w_3w_4$ and     
    for every $p\in \ghostTrans(s_0,w_1)$ there is a set $I_p$ of independent runs with gap $\sparseGapBound$ such that $|\bigcup_{p}I_p|= \ell$.
    Then there is a faithful restriction $\xwords'$ of $\xwords$ that is $\ell$-sparse.
\end{proposition}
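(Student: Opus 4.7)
The plan is to build $\xwords'$ by, for each target length $m' \in \bbN$, selecting an ``oversized'' good index $t \geq m'$ from the hypothesis and then repartitioning $\xwords(t)$ so that $w_3$ has the required length $m'$, dumping the excess into $w_4$. Concretely, let $A \subseteq \bbN$ be the infinite set of indices $m$ on which the sparsity assumption holds. For each $m' \in \bbN$, pick any $t = t(m') \in A$ with $t \geq m'$ and write $\xwords(t) = w_1 w_2 w_3 w_4$ with $|w_3| = t$. Split $w_3 = w'_3 \cdot w''_3$ where $|w'_3| = m'$, and define
\[
\xwords'(m') = w_1 \cdot w_2 \cdot w'_3 \cdot (w''_3 w_4).
\]

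First I would verify the easy structural properties. The concatenation of $\xwords'(m')$ equals $w_1 w_2 w_3 w_4 = \xwords(t)$ and its first component coincides with $w_1$, so $\xwords'$ is a faithful restriction of $\xwords$ in the sense of \cref{def:elongated words sequence}. The middle-infix has length exactly $m'$, so $\xwords'$ is exact. Since the concatenation is unchanged, the seamless baseline run of $\xwords(t)$ is also a seamless baseline run of $\xwords'(m')$, so $\xwords'$ is a bona fide elongating words sequence.

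The main step is verifying $\ell$-sparsity. Since $w'_1 = w_1$, we have $\ghostTrans(s_0, w'_1) = \ghostTrans(s_0, w_1)$, so the ``source set'' of states over which independent runs must be chosen is identical. For each $p$ in this set, let $I_p$ be the set of independent runs on $w_2 w_3$ with gap $\sparseGapBound[t]$ (using $t$ in place of $m$ in the gap formula) guaranteed by $t \in A$. For every $\rho \in I_p$, let $\rho'$ be the prefix of $\rho$ of length $|w_2| + m'$, i.e., the run on $w_2 w'_3$ obtained by stopping $\rho$ at the point where $w'_3$ ends. Each $\rho'$ is seamless (as a prefix of the seamless run $\rho$), and every prefix of $w'_3$ is also a prefix of $w_3$, so the pairwise gap property of $I_p$ restricted to these shorter prefixes yields
\[
\weight(\rho'^{w_2 u}_2) - \weight(\rho'^{w_2 u}_1) \;\geq\; 16\bigM (|S|+1)(t W_{\max})^2 \;\geq\; 16\bigM (|S|+1)(m' W_{\max})^2
\]
for every prefix $u$ of $w'_3$ and every pair $\rho_1, \rho_2 \in I_p$. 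Hence the restricted set $I'_p = \{\rho' : \rho \in I_p\}$ is a set of independent runs with gap $\sparseGapBound[m']$, and $|\bigcup_p I'_p| = |\bigcup_p I_p| = \ell$ since the restriction is a bijection on each $I_p$ (distinct seamless runs from the same starting state remain distinct after restriction, as they have differing weights on some prefix). This establishes that $\xwords'$ is $\ell$-sparse.

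There is no real obstacle here; the only point requiring a moment of care is that the $\sparseGapBound$ quantity in \cref{def:sparse words} depends on the index $m$, and I leverage the fact that $t \geq m'$ makes the gap at index $t$ at least as large as the gap required at index $m'$, so the independence transfers via monotonicity of the gap in $m$.
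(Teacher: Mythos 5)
Your proposal is correct and follows essentially the same approach as the paper's own proof: for each target index pick a larger index from the infinite set satisfying the hypothesis, truncate $w_3$ to the required length while moving the excess into $w_4$, and observe that restricting the independent runs to the shorter $w_3$ preserves the gap condition because $\sparseGapBound$ is monotone increasing in $m$. The only cosmetic difference is that you spell out the bijectivity of restriction on the $I_p$ explicitly, which the paper leaves implicit.
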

\begin{proof}
    Consider $m\in \bbN$ and let $\xwords(m)=w_1w_2w_3w_4$. 
    For every $n<m$, consider the word $w'_1w'_2w'_3w'_4$ where $w'_1=w_1$, $w'_2=w_2$, $w'_3=w_3[1,n]$ and $w'_4=w_3[n+1,m]w_4$ (i.e., we move letters from $w_3$ to $w_4$ so that $|w'_3|=n$). 
    Clearly we can replace $\xwords(n)$ by $w'_1w'_2w'_3w'_4$ and obtain a faithful restriction. Moreover, the sparsity requirement holds in $w'_1w'_2w'_3w'_4$, since the independent runs on $w_3$ can be shortened to independent runs in $w'_3$, and the gap requirement for $m$ is stricter than for $n<m$.

    Thus, if there are infinitely many $m$ for which the sparsity requirement holds, we can ``fill the gaps'' between these values of $m$ by shortening the respective $w_3$'s.
\end{proof}

The next property concerns words that have a long infix in which no run decreases too much.
\begin{definition}[\keyicon $D$-Dip]
    \label{def:dip words}
    For $D\in \bbN$, a word $w_1w_2w_3w_4$ is \emph{$D$-dip} if for every decomposition $w_3=xyz$, state $p\in \ghostTrans(s_0,w_1w_2x)$ and run $\rho:p\runsto{y}S$ it holds that $\weight(\rho)>-D$.
    
    A function $\words$ is \emph{$D$-dip} if $\words(m)$ is $D$-dip for every $m\in \bbN$.
\end{definition}
Intuitively, a word $w_1w_2w_3w_4$ is a dip word if every infix of $w_3$ induces ``dips'' (i.e. decreasing runs) losing at most $D$ weight ($D$ stands for ``Dip'', also in the following). 
Thus, the overall trend of runs on $w_3$ in $\words$ is non-decreasing, with some small exceptions.

Our final definition for this section concerns the states that are not part of the independent runs. Specifically, we measure whether such states are close to independent runs prescribed by the $\ell$-sparse function. 
If all states are within gap $G$ of an independent run, we say that the function is a \emph{$G$-cover}.
\begin{definition}[State $G$-Cover]
    \label{def:G cover inc infix}
    Consider $G\in \bbN$ and an $\ell$-sparse function $\words$. 
    We say that $\words$ is a \emph{$G$-cover} if for every $m\in \bbN$ the word $\words(m)=w_1w_2w_3w_4$ with the corresponding set of independent runs $\{I_p\}_{p\in \ghostTrans(s_0,w_1)}$ satisfies the following: for every $p\in \ghostTrans(s_0,w_1)$ and prefix $x$ of $w_3$, consider the configuration $\vec{c_{w_2x}}=\xconf(\vec{c_p},w_2x)$, then for every $q\in \supp(\vec{c_{w_2x}})$ we have that $\vec{c_{w_2x}}(q)$ is within gap $G$ of a state in one of the independent runs in $I_p$ (at the prefix $w_2x$).
\end{definition}

\subsubsection{Existence of Increasing Infixes Assuming A $G$-Cover}
\label{sec:existence of inc inf assuming covered}
Fix some dip level $D$. We now turn to analyze the setting where a $G$-cover exists, and we show that under this assumption, a separated increasing infix exists. This is the first instance where we use a mild version of our \emph{Zooming} technique (\cref{sec:abs:zooming}). 
\begin{lemma}[\keyicon \lightbulbicon From $D$-Dip and Cover to Increasing Infix]
    \label{lem:unbounded long dip and G imply separated inc infix}
    Let $D,\ell,G\in \bbN$ such that there exists a function $\xwords$ that is $\ell$-sparse, $D$-dip and a $G$-cover. 
 
    Then there exists some $m\in \bbN$ with $\xwords(m)=w_1w_2w_3w_4$ and a decomposition $w_3=u'\cdot x\cdot y\cdot z'$ such that $u\cdot x\cdot y\cdot z$ is a separated increasing infix from 
    $\ghostTrans(s_0,w_1)$
    for $u=w_2\cdot u'$ and $z=z'\cdot w_4$.
\end{lemma}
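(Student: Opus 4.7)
The plan is to apply the Ramsey-style Theorem~\ref{thm: our ramsey} to a finite coloring of the infixes of $w_3^m$, and then extract a separated increasing infix by combining Ramsey's structure with the $D$-dip hypothesis to force nonnegative cluster shifts. First, by iterated pigeonhole I restrict $\xwords$ to an infinite subsequence of $m$'s on which: (a) $S':=\ghostTrans(s_0, w_1^m)$ is a fixed subset of $S$; (b) for each $p\in S'$ the ``initial profile'' of the configuration $\xconf(\vec{c_p},w_2^m)$---namely its support, the partition of the support into clusters induced by the independent runs $I_p^m$ together with the $G$-cover, and each state's integer offset (in $[-G,G]$) from its cluster leader---takes a fixed value; and (c) the (unique) baseline state of $\augA_\infty^0$ after $w_1^m$ and the one after all of $\xwords(m)$ coincide at a common value $p_b$, so that $w_2w_3w_4$ is a baseline cycle on $p_b$ (needed for condition~4 of Definition~\ref{def:separated increasing infix from state}).

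Next, for each $m$ and interval $(i,j)$ of positions in $w_3^m$, define $\lambda^m(i,j)$ as the tuple recording, for every $p\in S'$: the profile at positions $i-1$ and $j$, the induced cluster-to-cluster map between them, the sign in $\{+,0,-\}$ of each cluster-leader's weight shift from $i-1$ to $j$, and the baseline state at positions $i-1$ and $j$. This color space is finite. I then apply Theorem~\ref{thm: our ramsey} to $C=\{(w_3^m,\lambda^m)\}_m$ with $\precost(w)=2^{16(\bigM|S|\cost(w))^2}$, $\lc(w)=\cost(w)$ (strictly increasing on prefixes by Remark~\ref{rmk:cost is strictly increasing}), and $L = D+1$. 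This yields a prefix $v$ of some $w_3^m$ and markers $i_0<i_1<\cdots<i_L$ with all inner intervals identically colored and $2^{16(\bigM|S|\cost(v[i_0..i_1-1]))^2}<\cost(v[i_1..i_L])$. Set $x=v[i_0..i_1-1]$, let $y$ be the correspondingly same-colored longer infix starting at $i_1$, $u'=v[1..i_0-1]$, and $z'$ the remaining suffix of $w_3^m$.

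Verification of the conditions of Definitions~\ref{def:separated increasing infix from state} and~\ref{def:separated increasing infix} then proceeds as follows. The cost condition is immediate from Ramsey. Profile matching at positions $i_0-1, i_1-1, i_{L-1}$ gives identical supports and partitions at $u, ux, uxy$, and the $\ell$-sparse gap ensures condition~3(a) since $\sparseGapBound>\incinfGapConst$. Matching within-cluster offsets force every state in a cluster to shift by exactly its leader's shift (uniformity), and the sign component of the color enforces $k_{j,x}=0 \iff k_{j,y}=0$. Preprocessing step~(c) yields condition~4, and stability of $(\ghostTrans(S',u),x)$ follows because $x$ cycles the baseline (by matching baseline states at markers) with zero weight, and any negative cycle iterated would violate $D$-dip.

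The main obstacle, and what dictates the choice $L=D+1$, is ensuring nonnegative cluster shifts. If some cluster $k$ of some $p\in S'$ had sign $-$ on every same-colored interval, the leader weights $w^p_{i_0,k},\ldots,w^p_{i_L,k}$ would form a strictly decreasing integer sequence; by $D$-dip applied to the independent run over the infix $w_3^m[i_0..i_L]$, its total decrease is less than $D$, forcing the length of the sequence to be at most $D$, hence $L<D$, contradicting $L=D+1$. This pumping-style argument tying the $D$-dip hypothesis to the Ramsey output---along with the matching care needed to simultaneously control all clusters of all $p\in S'$---is the crux of the proof.
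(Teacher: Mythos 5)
Your overall route is the same as the paper's: use \cref{thm: our ramsey} with the $\cost$/$\precost$ functions to extract identically-colored infixes of $w_3$, and then verify the conditions of \cref{def:separated increasing infix from state,def:separated increasing infix}. Your coloring by ``profile at the endpoints'' is essentially the paper's $\difftype_{p,G}$, and your non-negativity argument (offsets from cluster leaders are fixed by the color, so every state shifts by its leader's shift; then $D$-dip on the independent run over $[i_0,i_L)$ forbids sign $-$) is a legitimate, somewhat lighter alternative to the paper's route via $\gaintype$ and \cref{prop:exists inc inf no decreasing runs}. Your inclusion of the baseline component in the color is also on the right track for the seamless-baseline and baseline-cycle requirements.

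The gap is in the \textbf{stable-cycle condition} for $(\ghostTrans(S',u),x)$, and it is a real one. You write that ``any negative cycle iterated would violate $D$-dip,'' but iterating a negative cycle on $x$ produces runs on $x^k$, and $x^k$ is \emph{not} an infix of $w_3$ --- the $D$-dip hypothesis simply does not apply to it. The paper's fix (\cref{prop:exists inc inf x stable cycle}) is to include $\gaintype_D$ in the color: it records, for every ordered pair of states in $\ghostTrans$ of the prefix, the exact weights in $\{-D,\dots,D|S|\}$ of runs between them on the infix. Because the coloring is idempotent, the negative cycle $p_1\!\to\!p_2\!\to\cdots\!\to\!p_1$ on $x$ can be replicated step by step across the distinct, same-colored segments $y_1,\dots,y_{L-1}$, each of which \emph{is} an infix of $w_3$; chaining at least $D$ such copies then contradicts $D$-dip. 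Your coloring tracks only the configuration profile (offsets of \emph{reachable} states from cluster leaders) and the sign of leader shifts --- it carries no information at all about runs between non-seamless or ghost states, so a non-seamless or ghost-state negative cycle is invisible to it and cannot be transferred to a genuine infix. Relatedly, your choice $L=D+1$ is too small: the replication needs $k$ consecutive segments per copy of a length-$k\le|S|$ cycle, so the paper takes $L>D\cdot|S|$, and your non-negativity argument happens to survive with $L\ge D$ but the stable-cycle argument does not. To close the gap you need to add (the equivalent of) $\gaintype_D$ to your color space and increase $L$ to exceed $D\cdot|S|$.
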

We prove the lemma in the remainder of the section.
Let $G\in \bbN$ as assumed in the lemma. 
Our first step is to define coloring functions on infixes of infinitely many $w_3$ words, so that we are in our Ramsey framework (\cref{sec:ramsey}). 

We again start with some definitions. Consider a state $p$ and the configuration $\vec{c_p}$ that assigns $0$ to $p$ and $\infty$ elsewhere. 
Motivated by increasing infixes, we consider two characteristics of infixes.
The first characteristic, dubbed $\difftype_{p,G}(x)$, stores the order of the states in the configuration reached from $\vec{c_p}$ by $x$, and also stores which states are within distance $G$ from each other.

\begin{definition}[$\difftype$]
\label{def:difftype}
    For a constant $G\in \bbN$ and a word $x$, let $\vec{c_x}=\xconf(x,\vec{c_p})$. 
    The \emph{$G$-capped difference type} of $x$ from $p$ is 
    $\difftype_{p,G}(x)=\tup{\supp(\vec{c_x}),\le_x,\near_{p,G}(x)}$ where $\le_x\subseteq \supp(\vec{c_x})$ is a partial order on the reachable states defined by $s_1\le_x s_2$ if $\vec{c_x}(s_1)\le \vec{c_x}(s_2)$, and 
    \[\near_{x,G}=\{(s_1,i,s_2)\mid i=|\vec{c_x}(s_2)-\vec{c_x}(s_1)|\in \{0,\ldots,G\}\}\]
    We denote the set of all possible $\difftype$ by $\difftypeset_{p,G}=\{\difftype_{p,G}(x)\mid x\in (\Gamma_\infty^\infty)^*\}$
\end{definition}

Orthogonally to $\difftype$, our second characteristic measures the weight gained along a run, capped between two specific valued.
\begin{definition}[$\gaintype$]
    For a constant $D\in \bbN$ and words $u,v$, we define the \emph{$D$-capped gain type} of $(u,v)$ as $\gaintype_D(u,v)=\{(s_1,i,s_2)\in S\times \{-D,\ldots,D\cdot |S|\}\times S\mid s_1\in \ghostTrans(s_0,u)\wedge \exists \rho:s_1\runsto{v}s_2.\ \weight(\rho)=i\}$.
    
    We denote the set of all possible $\gaintype$ by $\gaintypeset_{D}=\{\gaintype_{D}(u,v)\mid u,v\in (\Gamma_\infty^\infty)^*\}$
\end{definition}
That is, the gain type consists of triples of the form $(s_1,i,s_2)$ where $s_1$ is reachable from $s_0$ via $u$, and there is a run that gains weight $i\in \{-D,\ldots,D\cdot |S|\}$ from $s_1$ to $s_2$.
Notice that here we allow a small decrease of at most $-D$, but no further, and we keep track of an increase up to $D\cdot |S|$.

Crucially, note that for a given $p,G$ and $D$, both $\difftypeset_{p.G}$ and $\gaintypeset_D$ are finite.

Recall that for increasing values of $m$, the length of the $w_3$ infix of $\xwords(m)$ also increases. 
We construct an infix-colored set (as in \cref{sec:ramsey}) as follows.
%
%
For every $n\in \bbN$, denote $\xwords(n)=w^n_1w^n_2w^n_3w^n_4$ and note that since $|w^n_3|=n$, then these words are unique\footnote{A nit-picking note: the word $w^n_1w^n_2w^n_3w^n_4$ is unique with respect to this particular decomposition (see \cref{rmk:w1w2w3w4 is a decomposition}. It may be equal to some other $w^k_1w^k_2w^k_3w^k_4$ as a concatenation.}

We now construct the infix-colored set $C=\{(w_3^n,\lambda_n)\}_{n\in \bbN}$ as follows. Our color set consists of tuples of a $\gaintype$ and two $\difftype$ for each reachable state (and a placeholder $\bot$ for unreachable states). More precisely, the color set is
\[
\begin{split}
\colset=\{(\chi,(\tau_{p,1},\tau_{p,2})_{p\in S})\mid \chi\in \gaintypeset_{D} 
\wedge \forall p\in S.\ \tau_{p,1},\tau_{p,2}\in \difftypeset_{p,G}\cup \{\bot\}\}
\end{split}
\]

For every $n\in \bbN$ we define a coloring function $\lambda_n:\interval_{|w_3^n|}\to \colset$ as follows: let $1\le i\le j\le |w_3^n|$ (i.e., $(i,j)\in \interval_{|w_3^n|}$ in the notation of \cref{sec:ramsey}), let $u=w_3[1, i-1]$ and $v=w_3[i,j]$ then 
$\lambda_n(i,j)=(\chi,(\tau_{p,1},\tau_{p,2})_{p\in S})$ where 
$ \chi = \gaintype_D(w_1w_2u,w_1w_2uv)$ and for every $p\in S$ we have \[(\tau_{p,1},\tau_{p,2})=(\difftype_{p,G}(w_2u),\difftype_{p,G}(w_2uv)) \text { if } p\in \ghostTrans(s_0,w_1)\]
 and $(\tau_{p,1},\tau_{p,2})=(\bot,\bot)$ otherwise.

Intuitively, each infix of $w_3$ is colored by three elements: the gain difference upon reading the infix, and the configurations before and after reading the infix for each state $p$ reachable after reading $w_1$ (including ghost states), described by their $G$-capped difference, so that we only have finitely many colors.
Recall that the assumption of the lemma that $\xwords$ is a $G$-cover states that every run from $p$ on $w_2x$ remains within gap $G$ of an independent run in $I_p$. Tracking $\difftype_{p,G}$ therefore suffices to track all relevant information about the runs from $p$.

We now invoke \cref{thm: our ramsey} with the letter-cost function $\cost$ of \cref{def:cost depth and sub cactus} and with the 
prefix-cost function $\precost(x)=2^{16(\bigM |S| \cost(x))^2}$.
(c.f., Item \textit{1} of \cref{def:separated increasing infix}). 
\acctodo{Changed}
We obtain that for every $L\in\bbN$ there exists a pair $(w_3',\lambda')$
and an index set $1\le i_0<\ldots <i_L\le |w_3'|+1$ such that there are infinitely many pairs $(w_3^n,\lambda_n)$ in our infix-colored set $C$ where $w_3'$ is a prefix of $w_3^n$ and the conditions of \cref{thm: our ramsey} hold. 

Fix $L>D\cdot |S|$.  We take such a pair $(w_3,\lambda)$ for our chosen value of $L$, and without loss of generality, we assume $(w_3,\lambda)$ is equal to an arbitrarily long $(w_3^n,\lambda_n)$ (by elongating it if needed, as it is a prefix of infinitely many such $(w_3^n,\lambda_n)$). 
In particular, we take $w_3$ long enough such that $\xwords(m)=w_1w_2w_3w_4$ for 
$m>2G$.

Denote $u'=w_3[1,i_0-1]$, $x=w_3[i_0,i_1-1]$, $y=w_3[i_1,i_L-1]$, and $z'=w_3[i_L,|w_3|]$, so that $w_3=u'xyz'$. Set $u=w_2\cdot u'$, $z= z'w_4$ and let $S'=\ghostTrans(s_0,w_1)$. 
We illustrate this word in \cref{fig:inc inf word decomposition}

\begin{figure}[ht]
    \centering
    \includegraphics[width=1\linewidth]{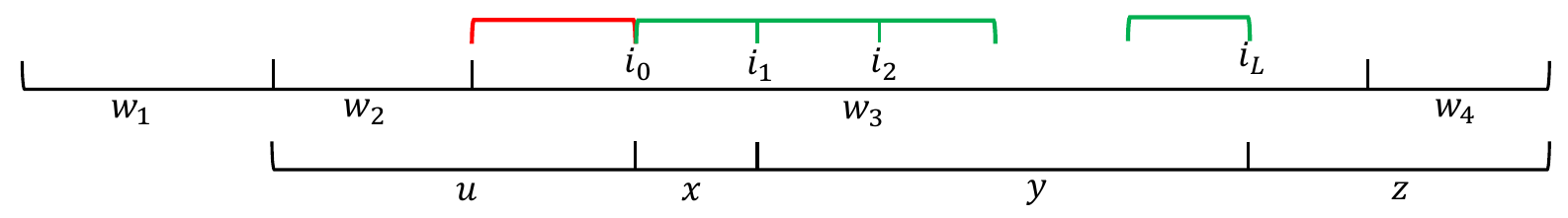}
    \caption{The increasing infix found by the colored Ramsey decomposition. The red and green intervals represent the coloring of the segments in the decomposition of $w_3$.}
    \label{fig:inc inf word decomposition}
\end{figure}

We claim that $uxyz$ is a separated increasing infix from $S'$, which would then conclude the proof.
To this end, we show each of the requirements of being a separated increasing infix (\cref{def:separated increasing infix from state,def:separated increasing infix}).

First, by our choice of 
the prefix-cost $\precost$
and by Item \textit{3} of \cref{thm: our ramsey} we have that 
$\cost(y)>2^{16(\bigM |S| \cost(x))^2}$
thus satisfying Requirement \textit{1} of \cref{def:separated increasing infix from state}.
\acctodo{Changed}

\begin{proposition}
\label{prop:exists inc inf Req 2}
Requirement \textit{2} of \cref{def:separated increasing infix from state} holds.

That is, denote $\vec{c_u}=\xconf(\vec{c_p},u)$, $\vec{c_{ux}}=\xconf(\vec{c_p},ux)$, and $\vec{c_{uxy}}=\xconf(\vec{c_p},uxy)$, then $\supp(\vec{c_u})=\supp(\vec{c_{ux}})=\supp(\vec{c_{uxy}})$.    
\end{proposition}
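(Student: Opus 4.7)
The plan is to deduce the support equality directly from the monochromaticity obtained via \cref{thm: our ramsey}. The key observation is that the coloring $\lambda$ was designed so that each color contains, for every $p \in S' = \ghostTrans(s_0,w_1)$, a pair of $G$-capped difference types $\tau_{p,1},\tau_{p,2}$, and the first component of $\difftype_{p,G}(\cdot)$ is precisely the support of the reached configuration. Thus once we force equality of the relevant difftypes, the support equality is immediate.

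Concretely, I will fix $p \in S'$ and look at three distinguished intervals inside $w_3$: $[i_0,i_1-1]$, $[i_0,i_L-1]$, and $[i_1,i_L-1]$, corresponding to the infixes $x$, $xy$, and $y$ respectively. By Item 2 of \cref{thm: our ramsey} all three intervals receive the very same color. Unpacking the definition of $\lambda_n$ at each of these intervals (with $u = w_3[1,i_m-1]$ and $v = w_3[i_m,i_{m'}-1]$ for the appropriate indices), the $p$-component of the color reads as follows: for $[i_0,i_1-1]$ we get $(\difftype_{p,G}(w_2u'), \difftype_{p,G}(w_2u'x))$; for $[i_0,i_L-1]$ we get $(\difftype_{p,G}(w_2u'), \difftype_{p,G}(w_2u'xy))$; for $[i_1,i_L-1]$ we get $(\difftype_{p,G}(w_2u'x), \difftype_{p,G}(w_2u'xy))$. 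Equating the second coordinate across the first two intervals gives $\difftype_{p,G}(w_2u'x) = \difftype_{p,G}(w_2u'xy)$, and equating the first coordinate across the first and third intervals gives $\difftype_{p,G}(w_2u') = \difftype_{p,G}(w_2u'x)$.

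Combining these identities and recalling $u = w_2 u'$, we obtain $\difftype_{p,G}(u) = \difftype_{p,G}(ux) = \difftype_{p,G}(uxy)$. Projecting onto the first component of $\difftype$ (the support of the configuration reached from $\vec{c_p}$) yields $\supp(\vec{c_u}) = \supp(\vec{c_{ux}}) = \supp(\vec{c_{uxy}})$, which is exactly Requirement 2 for the state $p$. Since $p \in S'$ was arbitrary, the claim holds uniformly.

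There is essentially no hard step here; this is a bookkeeping argument whose only subtlety is correctly matching up the Ramsey indices $(i_j,i_{k-1})$ with the natural decomposition $w_3 = u' x y z'$, and confirming that for $p \in S' = \ghostTrans(s_0,w_1)$ the color records a genuine $\difftype$ rather than the placeholder $\bot$. The real content of the proposition was already paid for by setting up the Ramsey coloring with the right color alphabet; what remains is the short derivation above.
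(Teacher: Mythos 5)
Your argument is correct and takes essentially the same route as the paper: unfold the Ramsey monochromaticity, equate the $\difftype$ components recorded in the color, and project to the support component. The paper is slightly more economical, using only the two intervals $x=[i_0,i_1-1]$ and $y=[i_1,i_L-1]$ (whose shared color already gives both equalities $\difftype_{p,G}(w_2u')=\difftype_{p,G}(w_2u'x)$ and $\difftype_{p,G}(w_2u'x)=\difftype_{p,G}(w_2u'xy)$ by comparing the two coordinates of the color pair), whereas your third interval $[i_0,i_L-1]$ is redundant but harmless.
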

\begin{proof}
By Item \textit{2} of \cref{thm: our ramsey} we have that $\lambda(x)=\lambda(y)$ (we slightly abuse notation and color the infix word instead of the index pair). By the choice of our coloring function, this implies that
$\gaintype_D(w_1u,w_1ux)=\gaintype_D(w_1ux,w_1uxy)$ and for every $p\in \ghostTrans(s_0,w_1)$ we have $\difftype_{p,G}(w_1w_2u)=\difftype_{p,G}(w_1w_2ux)$ and $\difftype_{p,G}(w_2ux)=\difftype_{p,G}(w_2uxy)$. The latter two equalities therefore imply 
\[\difftype_{p,G}(w_2u)=\difftype_{p,G}(w_2ux)=\difftype_{p,G}(w_2uxy)\]
In particular, by \cref{def:difftype} we immediately get that the first components are equal, i.e.,  $\supp(\vec{c_u})=\supp(\vec{c_{ux}})=\supp(\vec{c_{uxy}})$.
\end{proof}

We now address Requirement \cref{itm:separated infix seamless baseline} from \cref{def:separated increasing infix from state}: 
let $p$ be a baseline state in $\ghostTrans(s_0,w_1)$, we want to show that the baseline run $\rho:p\runsto{uxyz}q$ (whose weight is constantly $0$) is seamless from $\vec{c_p}$. 
This, however, follows immediately from \cref{def:elongated words sequence} -- since $w_1w_2w_3w_4= \xwords(m)$, then its baseline run is seamless. Since $w_1w_2w_3w_4=w_1uxyz$ (as concatenations), then the baseline run $\rho:s_0 \runsto{w_1}p\runsto{uxyz} q$ is seamless, and in particular its infix $\rho:p\runsto{uxyz}q$ is seamless when starting from $\vec{c_p}$.

We turn to the more involved requirements, namely Requirement \textit{3} of \cref{def:separated increasing infix from state}, and showing that $(\ghostTrans(S',u),x)$ is a stable cycle, as per \cref{def:separated increasing infix}. 

Let $p\in \ghostTrans(s_0,w_1)$ and let $I_p$ be a set of independent runs as assumed by the $\ell$-sparsity of $\xwords$. 
From \cref{def:sparse words}, the gap between the independent runs in $I_p$ is at least $\sparseGapBound$, and since $|w_3|= m> 2G$, the gap is (much) greater 
than $2G$.

Consider a configuration $\vec{c}$ and states $p,q\in S$, we write $p\sim_{\vec{c}} q$ if $|\vec{c}(p)-\vec{c}(q)|\le 2G$. We start by claiming that for every prefix $v$ of $xy$, for $\vec{c_{uv}}=\xconf(\vec{c_p},uv)$ the relation $\sim_{\vec{c_{uv}}}$ is an equivalence relation.
Indeed, it is trivially reflexive and symmetric. Moreover, if $|\vec{c_{uv}}(p)-\vec{c_{uv}}(q)|\le 2G$, then by 
the assumption that $\xwords$ is a $G$-cover (\cref{def:G cover inc infix}),
there exists some states $r,r'$ that are along independent runs in $I_p$ satisfying $|\vec{c_{uv}}(p)-\vec{c_{uv}}(r)|\le G$ and $|\vec{c_{uv}}(q)-\vec{c_{uv}}(r')|\le G$. 
However, since the gap between independent runs in $I_p$ is much greater than $2G$, 
it follows that $r=r'$. 
That is, two equivalent states are within gap $G$ of a corresponding state in an independent run. In particular, this entails the transitivity of $\sim_{\vec{c_{uv}}}$.

Consider $B=\supp(\vec{c_u})=\supp(\vec{c_{ux}})=\supp(\vec{c_{uxy}})$, we partition $B$ according to the equivalence classes of these configurations, as follows.
Let $I_p=\{\rho_1,\ldots, \rho_k\}$ be the independent runs from $p$, and assume that $\rho_{i+1}$ is above $\rho_{i}$ for all $1\le i<k$. 
Denote by $\rho_i(v)$ the state of $\rho_i$ after reading the prefix $v$.
For $\xi\in \{u,ux,uxy\}$ we define a partition $B=V^\xi_1\cup \ldots \cup V^\xi_k$ by setting $V^\xi_i=\{q\mid q\sim_{\vec{c_{\xi}}} \rho_i(\xi)\}$, i.e., the set of states that are equivalent to $\rho_i(\xi)$. We now show that these partitions are not dependent on $\xi$.

\begin{proposition}
\label{prop:exists inc inf partitions}
The partitions induced for each $\xi\in \{u,ux,uxy\}$ are identical. That is, $V^{\xi}_i=V^{\zeta}_i$ for every $\xi,\zeta\in \{u,ux,uxy\}$.
\end{proposition}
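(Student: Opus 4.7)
The plan is to exploit the equalities $\difftype_{p,G}(w_2u) = \difftype_{p,G}(w_2ux) = \difftype_{p,G}(w_2uxy)$ handed to us by the Ramsey decomposition (via $\lambda(x)=\lambda(y)=\lambda(xy)$), together with the $G$-cover hypothesis and the fact that the independent-run separation $\sparseGapBound$ is much larger than any constant multiple of $G$ appearing below.

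First I would record what $\difftype$-equality buys us on $B$: the partial order $\le_\xi$ on $\supp(\vec{c_\xi})=B$ is the same across $\xi\in\{u,ux,uxy\}$, and every pair $(r,s)$ of states is within distance $\le G$ in one of the three configurations if and only if it is in each of them, with the same exact distance. I would then sharpen the description of $V_i^\xi$: although the definition uses a $2G$-threshold, $G$-cover places every reachable state within $G$ of some $\rho_j(\xi)$, and if $q$ were within $2G$ of $\rho_i(\xi)$ while being within $G$ of $\rho_j(\xi)$ with $j\ne i$, then $\rho_i(\xi)$ and $\rho_j(\xi)$ would lie within $3G$, contradicting the $\sparseGapBound$ gap. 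Hence $V_i^\xi=\{q\in B:\ |\vec{c_\xi}(q)-\vec{c_\xi}(\rho_i(\xi))|\le G\}$.

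The main step is to match the labelings across $\xi$. For each $i$, applying $G$-cover in $\vec{c_{ux}}$ to the state $\rho_i(u)\in B$ produces a unique index $f(i)$ with $\rho_i(u)$ within $G$ of $\rho_{f(i)}(ux)$; symmetrically, $G$-cover in $\vec{c_u}$ applied to $\rho_j(ux)\in B$ gives $g(j)$. Both $f$ and $g$ are monotonic, since the sequences $\rho_1(\xi)<\dots<\rho_k(\xi)$ are ordered the same way in each configuration by the common $\le$. I would then transport the pair $(\rho_i(u),\rho_{f(i)}(ux))$ from $\vec{c_{ux}}$ back to $\vec{c_u}$ via $\near$-equality and apply $G$-cover to $\rho_{f(i)}(ux)$ in $\vec{c_u}$; the $\sparseGapBound$-uniqueness forces $g(f(i))=i$, and symmetrically $f(g(j))=j$. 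Thus $f$ is a monotonic self-bijection of $\{1,\dots,k\}$, hence the identity.

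With $f=\mathrm{id}$, a short chain concludes the proposition: for $q\in V_i^u$, the pair $(q,\rho_i(u))$ is within $G$ in $\vec{c_u}$, and by $\near$-equality also in $\vec{c_{ux}}$; since $f(i)=i$ gives $\rho_i(u)$ within $G$ of $\rho_i(ux)$ in $\vec{c_{ux}}$, the triangle inequality puts $q$ within $2G$ of $\rho_i(ux)$, so $q\in V_i^{ux}$. The reverse inclusion and the case $\xi=uxy$ are identical. The hard part will be identifying $f$ as the identity: the three ingredients—$G$-cover to attach each state to a unique independent run, $\near$-equality to transport this attachment between configurations, and the $\sparseGapBound$ gap to forbid any re-indexing—have to be combined carefully. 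Everything else reduces to order and triangle-inequality manipulations.
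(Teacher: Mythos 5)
Your proof is correct and relies on the same two ingredients as the paper's: the common partial order $\le_\xi$ supplied by the repeated $\difftype$-color, and the fact that the sparse gap $\sparseGapBound$ dominates any constant multiple of $G$, so states' attachment to independent runs cannot be scrambled. The packaging differs slightly---the paper first observes that $\sim_{\vec{c_\xi}}$ is literally the same relation for every $\xi$ and then rules out any non-identity relabeling permutation by exhibiting an inversion that contradicts $\le_u\equiv\le_{ux}$, whereas you construct the attachment maps $f,g$ explicitly and argue that a monotone self-bijection of $\{1,\ldots,k\}$ is the identity---but these are equivalent arguments (your monotonicity step, stated tersely, should be spelled out by the same gap calculation you use for $g(f(i))=i$).
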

\begin{proof}
    Recall that we obtain $x,y$ using our Ramsey argument, and due to its coloring we have
    \[\difftype_{p,G}(w_2u)=\difftype_{p,G}(w_2ux)=\difftype_{p,G}(w_2uxy)\]
    First, observe that $\sim_{\vec{c_{\xi}}}$ is the same relation for all $\xi$. Indeed, the relation $\sim_{\vec{c_{\xi}}}$ only depends on $\vec{c_{\xi}}$, and the way we obtained our observation above (that $\sim_{\vec{c_{uv}}}$ is an equivalence relation for all prefixes $v$ of $xy$) in fact shows that for every $\xi$, the states corresponding to $\{\rho_i(\xi)\}_{i=1}^k$ are representatives of the $k$ equivalence classes.

    Thus, all that remains to be proved is that there is no permutation in the order of the $V_i$ between the infixes. 
    That is, consider e.g., $V^u_1$ and $V^{ux}_1$. These sets are equivalence classes of $\sim_{\vec{\xi}}$ (which as we just showed is independent of $\xi$). However, it is not a-priori guaranteed that $V^u_1=V^{ux}_1$, rather it could be that e.g., $V^u_1=V^{ux}_3$.
    Intuitively, however, this cannot happen because the equivalence of $\difftype$ implies in particular an equivalent ordering of the states (i.e., $\le_u,\le_{ux}$ and $\le_{uxy}$, as per \cref{def:difftype}).
    We show the equivalence of $V^u_i$ and $V^{ux}_i$, the proof for $V^{ux}_i$ and $V^{uxy}_i$ is word-for-word identical, up to replacing $u$ by $ux$ and $ux$ by $uxy$.

    Assume by way of contradiction that there exists some permutation $\pi$ of $\{1,\ldots,k\}$ such that $\pi$ is not the identity permutation, and $V^u_i=V^{ux}_{\pi(i)}$ for all $i\in \{1,\ldots, k\}$.
    Since $\pi$ is not the identity, it has an inversion -- some $1\le j<h\le k$ such that $\pi(j)>\pi(h)$. 
    Let $q\in V^u_j$ and $r\in V^u_h$, then $q\sim_{\vec{c_\xi}} \rho_j(u)$ and $r\sim_{\vec{c_\xi}} \rho_h(u)$, and therefore $q\le_u r$ (since $\rho_j(u)\le_u \rho_h(u)$ and the gap between them is much greater than $2G$),
    \acctodo{Should be ok}
    and moreover there is a strict inequality in the weights: $\vec{c_u}(q)<\vec{c_u}(r)$.
    However, we then have $q\in V^{ux}_{\pi(j)}$ and $r\in V^{ux}_{\pi(h)}$, and with the same argument, namely $\rho_{\pi(h)}(ux)\le_{ux} \rho_{\pi(j)}(ux)$, we have $r\le_{ux} q$ (again with a strict inequality in weights).

    This is in contradiction to the fact that $\le_u\equiv \le_{ux}$, which concludes the proof.
\end{proof}

We are now ready to show that Item \cref{itm:separated infix gaps V} of  \cref{def:separated increasing infix from state} holds, for the partition $V_1\cup\ldots \cup V_k$ (we can omit the superscript due to \cref{prop:exists inc inf partitions}). 
To this end, consider $q\in V_j$ and $r\in V_{j+1}$ for some $1\le j<k$, we claim that for every $\xi\in \{u,ux,uxy\}$ it holds that 
$\vec{c_\xi}(r)-\vec{c_\xi}(q)>\incinfGapConst$. 

Recall again that the gap between runs in $I_p$ is at least $\sparseGapBound$ and that $|w_3|=m>2G$ and $|w_3|\ge |xy|$ (since $xy$ is an infix of $w_3$). Note that $\maxeff{xy}\le |xy|W_{\max}$, then we have that 
\begin{equation}
\label{eq:sparse gap lower bound}
    \begin{split}
&\sparseGapBound= 8\bigM(|S|+1)(mW_{\max})^2+ 8\bigM(|S|+1)(mW_{\max})^2\\
&> 8\bigM(|S|+1)(mW_{\max})^2 + 16G \ge 8\bigM(|S|+1)(|xy|W_{\max})^2 + 16G\ge \\
&8\bigM(|S|+1)\maxeff{xy}^2 + 16G
\end{split}
\end{equation}
\acctodo{Changed}
%
However, we have $|\vec{c_\xi}(q)-\vec{c_{\xi}}(\rho_{j}(\xi))|\le G$ and $|\vec{c_\xi}(r)-\vec{c_{\xi}}(\rho_{j+1}(\xi))|\le G$ (i.e., both $q$ and $r$ are close to their respective independent runs, as we observe above). Combining these inequalities (with coarse lower bounds) we conclude the claim:
\[\begin{split}
&\vec{c_\xi}(r)-\vec{c_\xi}(q)>8\bigM(|S|+1)\maxeff{xy}^2 + 16G - 2G>\incinfGapConst
\end{split}
\]
\acctodo{Changed}

We turn to show part \textit{(b)} in Requirement \textit{3}. That is, we need to show that for every $1\le j\le k$ there exist $k_{j,x},k_{j,y}$ such that for every $r\in V_j$ we have $\vec{c_{ux}}(r)=\vec{c_{u}}(r)+k_{j,x}$ and $\vec{c_{uxy}}(r)=\vec{c_{ux}}(r)+k_{j,y}$, and that in addition $k_{j,x}=0$ if and only if $k_{j,y}=0$.
The existence of \emph{integer} constants is easily implied by the equivalence of $\difftype$ and by \cref{prop:exists inc inf partitions}. The harder part here is showing that these constants are non-negative.

Recall our set of indices $1\le i_0<\ldots<i_L$ such that $x=w_3[i_0,i_1-1]$ and $y=w_3[i_1,i_L-1]$. We decompose $y$ to $y=y_1y_2\cdots y_{L-1}$ such that $y_j=[i_j,i_{j+1}-1]$. These infixes are obtained from \cref{thm: our ramsey}, and are therefore colored identically, and also identically to $x$. Also recall that $L>D\cdot |S|$.
Since the analysis carried out in \cref{prop:exists inc inf partitions} only relies on the equivalence in $\difftype$, we can conclude that for every $1\le t< L$, the partition induced by $\sim_{\vec{c_{ux\cdot y_1\cdots y_t}}}$ is also $V_1\cup\cdots \cup V_k$, with the same ordering as above.

Let $1\le j\le k$ and consider the ``minimal'' state $q\in V_j$, i.e., $\vec{c_{u}}(q)=\min\{\vec{c_{u}}(r)\mid r\in V_j\}$. Note that $q$ is also minimal after $ux$ and after $uxy_1\cdots y_t$ for every $t$, since the ordering of the states in the corresponding configurations are the same (due to the equivalence in $\difftype$).
 By the minimality of $q$, for every $r\in V_j$ we have $0\le \vec{c_{u}}(r)-\vec{c_{u}}(q)\le 2G$. By the equivalence in $\difftype$ (specifically, since the gaps are tracked up to size $2G$), we also have for every $\xi\in \{u,ux,uxy_1,\ldots, uxy_1\cdots y_{L-1}\}$ that $\vec{c_{\xi}}(r)-\vec{c_{\xi}}(q)=\vec{c_{u}}(r)-\vec{c_{u}}(q)$. By rearranging, we have
 $\vec{c_{\xi}}(r)=\vec{c_{u}}(r)+(\vec{c_{\xi}}(q)-\vec{c_{u}}(q))$. In particular, by setting $k_{j,x}=(\vec{c_{ux}}(q)-\vec{c_{u}}(q))$ (and similarly $k_{j,y}$), we have the result for \emph{integer} constants. It remains to prove the non-negativity of the constant, and the special condition for $0$.

To this end, we start by observing that for any infix $\xi$ as above and state $q\in V_j$, there are no runs leading to $q$ from any state in $V_{i}$ with $i<j$, and there is a minimal run to $q$ from some state in $V_j$ (as well as possibly-minimal runs from ``above'' $V_j$). We prove this in two parts.

Intuitively, if $i<j$, then $\rho_i$ is far below $\rho_j$ along all of $uxy$, and therefore no state in $\rho_i$ can yield a run to a corresponding state of $\rho_j$, as that would ``drag down'' $\rho_j$ too close to $\rho_i$. By extension, no state that is equivalent to $\rho_i$ can reach a state that is equivalent to $\rho_j$.
We formalize this intuition, the approach is illustrated in \cref{fig:independent run gaps proof}.

\begin{figure}[ht]
    \centering
    \includegraphics[width=0.5\linewidth]{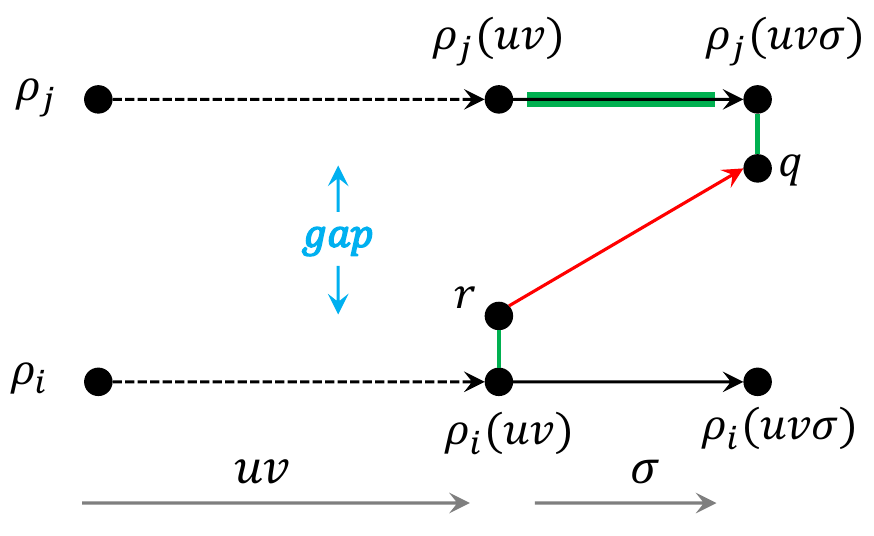}
    \caption{The red arrow is the assumption by contradiction. The green lines signify ``small'' differences. If the red arrow is a run, then it has a small gap, which pulls together $\rho_j$ and $\rho_i$, leading to a contradiction.}
    \label{fig:independent run gaps proof}
\end{figure}

\begin{proposition}
    \label{prop:exists inc inf independent runs no transitions}
    Let $\sigma\in \Gamma'$ be the next letter read after a prefix $v$ of $xy$, let $i<j$ and let $r,q\in S$ with $r\sim_{\vec{c_{uv}}}\rho_i(uv)$ and $q\sim_{\vec{c_{uv\sigma}}}\rho_j(uv\sigma)$, then there is no transition $(r,\sigma,c,q)\in \augTrans_\infty^\infty$ with $c<\infty$.
\end{proposition}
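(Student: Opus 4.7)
The plan is to argue by contradiction: assume there is a finite-weight transition $(r,\sigma,c,q)\in \augTrans_\infty^\infty$, and derive that the independent runs $\rho_i$ and $\rho_j$ must come much closer together after reading $\sigma$ than their independence gap permits. Recall that $|c|\le W_{\max}$ since $\sigma\in \Gamma'$, and that the runs $\rho_i,\rho_j$ are seamless, so $\vec{c_{u\xi}}(\rho_i(u\xi))=\weight(\rho_i^{u\xi})$ for every prefix $\xi$ of $w_3$, and similarly for $\rho_j$.

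First I would propagate weights through the (hypothetical) transition: since $r$ is reachable from $p$ at prefix $uv$ with weight $\vec{c_{uv}}(r)$, and the transition on $\sigma$ has weight $c\le W_{\max}$, we get $\vec{c_{uv\sigma}}(q)\le \vec{c_{uv}}(r)+W_{\max}$. Next I would use the two equivalences: $r\sim_{\vec{c_{uv}}}\rho_i(uv)$ yields $\vec{c_{uv}}(r)\le \vec{c_{uv}}(\rho_i(uv))+G$, while $q\sim_{\vec{c_{uv\sigma}}}\rho_j(uv\sigma)$ yields $\vec{c_{uv\sigma}}(\rho_j(uv\sigma))\le \vec{c_{uv\sigma}}(q)+G$. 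Chaining these three inequalities gives
\[
\vec{c_{uv\sigma}}(\rho_j(uv\sigma))\;\le\; \vec{c_{uv}}(\rho_i(uv))+2G+W_{\max}.
\]
On the other hand, $\rho_i$ traverses a single $\sigma$-transition from $\rho_i(uv)$ to $\rho_i(uv\sigma)$, whose weight is at least $-W_{\max}$, so $\vec{c_{uv\sigma}}(\rho_i(uv\sigma))\ge \vec{c_{uv}}(\rho_i(uv))-W_{\max}$.

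Subtracting these two bounds gives $\vec{c_{uv\sigma}}(\rho_j(uv\sigma))-\vec{c_{uv\sigma}}(\rho_i(uv\sigma))\le 2G+2W_{\max}$. However, $\rho_i,\rho_j\in I_p$ are independent with gap at least $\sparseGapBound$, so by \cref{def:independent run} the left-hand side is at least $\sparseGapBound$. By the accounting done in \cref{eq:sparse gap lower bound} (and the choice $m>2G$), we have $\sparseGapBound\gg 2G+2W_{\max}$, which is the desired contradiction.

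I do not anticipate a real obstacle here; the only subtlety is keeping straight that independent runs are seamless, so that the values of $\vec{c_{u\xi}}$ at their visited states coincide with the weights $\weight(\rho_i^{u\xi})$ that the gap condition of \cref{def:independent run} actually controls. Everything else is a straightforward chain of triangle-inequality-style bounds against the pre-chosen ``huge'' sparse gap.
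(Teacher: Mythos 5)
Your proposal is correct and takes essentially the same approach as the paper: a triangle-inequality chain linking the near-relation bounds for $r,q$ and the single-letter weight bound for the seamless $\rho_i$ (resp.\ $\rho_j$), contradicting the $\sparseGapBound$ gap between independent runs; you compare $\rho_i,\rho_j$ at prefix $uv\sigma$ whereas the paper compares them at $uv$, which is an inessential difference. One small constant slip: the relation $\sim_{\vec{c}}$ in that proof is defined by $|\vec{c}(p)-\vec{c}(q)|\le 2G$ (not $G$), so your final estimate should read $4G+2W_{\max}$ rather than $2G+2W_{\max}$ --- still far below $\sparseGapBound$, so the contradiction stands.
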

\begin{proof}
Let $r,q\in S$ with $r\sim_{\vec{c_{uv}}}\rho_i(uv)$ and $q\sim_{\vec{c_{uv\sigma}}}\rho_j(uv\sigma)$, and assume by way of contradiction that there is a transition $(r,\sigma,c,q)\in \augTrans_\infty^\infty$ with $c<\infty$ (and in particular $c\le W_{\max}$). It follows that $\vec{c_{uv\sigma}}(q)-\vec{c_{uv}}(r)\le W_{\max}$. Further assume that $r$ is the minimal state with this property with respect to $\vec{c_{uv}}$. It follows that $|\vec{c_{uv\sigma}}(q)-\vec{c_{uv}}(r)|\le W_{\max}$ (intuitively, we are merely avoiding the case where the absolute value grows because $q$ is dragged even further down below $r$, which would only simplify things for us).

Observe that $|\vec{c_{uv}}(\rho_j(uv))-\vec{c_{uv\sigma}}(\rho_j(uv\sigma))|\le W_{\max}$, as this is a single transition in a seamless run.
We then have
\[
\begin{split}
    &|\vec{c_{uv}}(\rho_j(uv))-\vec{c_{uv}}(\rho_i(uv))|=\\
    &|\vec{c_{uv}}(\rho_j(uv))-\vec{c_{uv\sigma}}(\rho_j(uv\sigma))+\vec{c_{uv\sigma}}(\rho_j(uv\sigma))-\vec{c_{uv}}(\rho_i(uv))|\le \\
    &|\vec{c_{uv}}(\rho_j(uv))-\vec{c_{uv\sigma}}(\rho_j(uv\sigma))|+|\vec{c_{uv\sigma}}(\rho_j(uv\sigma))-\vec{c_{uv}}(\rho_i(uv))|\le \\
    & W_{\max}+ |\vec{c_{uv\sigma}}(\rho_j(uv\sigma))-\vec{c_{uv}}(\rho_i(uv))|=\\
    & W_{\max}+ |\vec{c_{uv\sigma}}(\rho_j(uv\sigma))-\vec{c_{uv\sigma}}(q)+\vec{c_{uv\sigma}}(q)-\vec{c_{uv}}(r)+\vec{c_{uv}}(r)-\vec{c_{uv}}(\rho_i(uv))|\le \\
    & W_{\max}+ |\vec{c_{uv\sigma}}(\rho_j(uv\sigma))-\vec{c_{uv\sigma}}(q)|+|\vec{c_{uv\sigma}}(q)-\vec{c_{uv}}(r)|+|\vec{c_{uv}}(r)-\vec{c_{uv}}(\rho_i(uv))|\le \\
    & W_{\max}+ 2G+W_{\max}+2G\le 2W_{\max}+4G\\
\end{split}
\]
This, however, is a contradiction, since by the second inequality of \cref{eq:sparse gap lower bound} we have (in particular)
$|\vec{c_{uv}}(\rho_j(uv))-\vec{c_{uv}}(\rho_i(uv))|>8 W_{\max}+ 16 G$.
\acctodo{Checked}
\end{proof}

A corollary of \cref{prop:exists inc inf independent runs no transitions} is that any run on infix $\xi$ reaching $q\in V_j$ must start from some $r\in V_{i}$  with $i\ge j$. We now claim that at least one minimal run actually comes from $V_j$.

\begin{proposition}
    \label{prop:exists inc inf run from Vj to Vj}
    Let $\zeta,\xi\in \{u,ux,uxy_1,\ldots, uxy_1\cdots y_{L-1}\}$ such that $\xi$ is a strict prefix of $\zeta$ and $\zeta=\xi \cdot \chi$, and let $q\in V_j$ for some $j$. 
    There exists $r\in V_j$ and a run $\rho:p\runsto{\xi}r\runsto{\chi}q$ such that $\weight(\rho)=\minweight(\zeta,p\to q)$.
\end{proposition}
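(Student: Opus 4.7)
The plan is to take an arbitrary minimum-weight run $\rho^* : p \runsto{\xi} r \runsto{\chi} q$, i.e.\ with $\weight(\rho^*) = \minweight(\zeta, p \to q) = \vec{c_\zeta}(q)$, and show that the intermediate state $r$ already lies in $V_j$. If I can do this, the statement follows immediately.

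First I would extend the partition to every intermediate prefix. For each prefix $v$ of $xy$, the $G$-cover hypothesis together with the independent-run gap $\sparseGapBound \gg 2G$ makes $\sim_{\vec{c_{uv}}}$ an equivalence relation with exactly $k$ classes, namely $V^v_i = \{s \mid s \sim_{\vec{c_{uv}}} \rho_i(uv)\}$ for $1 \le i \le k$. The argument is a verbatim repetition of the transitivity proof used earlier for $\sim_{\vec{c_\xi}}$, and by \cref{prop:exists inc inf partitions} these extended partitions agree with $V_1,\ldots,V_k$ at the Ramsey endpoints $\xi$ and $\zeta$.

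Next, writing $\chi = \sigma_1 \cdots \sigma_N$ and denoting the intermediate states of $\rho^*$ along $\chi$ by $s_0 = r, s_1, \ldots, s_N = q$, each $s_t$ lies in $\supp(\vec{c_{uv_t}})$ and therefore in some class $V^{v_t}_{i_t}$. Every finite-weight transition $s_t \runsto{\sigma_{t+1}} s_{t+1}$ forces $i_t \ge i_{t+1}$, otherwise it would contradict \cref{prop:exists inc inf independent runs no transitions}. Iterating along $\chi$ yields $i_0 \ge i_N = j$, so $r \in V_{i_0}$ for some $i_0 \ge j$.

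The heart of the proof is to rule out $i_0 > j$. Assuming otherwise, combining $r \in V_{i_0}$ (so $\vec{c_\xi}(r) \ge \vec{c_\xi}(\rho_{i_0}(\xi)) - 2G$) with the gap $\vec{c_\xi}(\rho_{i_0}(\xi)) - \vec{c_\xi}(\rho_j(\xi)) > \sparseGapBound$ gives $\vec{c_\xi}(r) > \vec{c_\xi}(\rho_j(\xi)) + \sparseGapBound - 2G$. Since the $\chi$-portion of $\rho^*$ can lose at most $\maxeff{\chi}$, one obtains
\[
\vec{c_\zeta}(q) \;\ge\; \vec{c_\xi}(r) - \maxeff{\chi} \;>\; \vec{c_\xi}(\rho_j(\xi)) + \sparseGapBound - 2G - \maxeff{\chi}.
\]
On the other hand the seamless run $\rho_j$ provides the upper bound $\vec{c_\zeta}(\rho_j(\zeta)) \le \vec{c_\xi}(\rho_j(\xi)) + \maxeff{\chi}$, and $q \in V_j$ gives $\vec{c_\zeta}(q) \le \vec{c_\zeta}(\rho_j(\zeta)) + 2G$. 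Stacking the two estimates forces $\sparseGapBound < 2\maxeff{\chi} + 4G$, which contradicts \cref{eq:sparse gap lower bound} because $|\chi| \le |xy| \le m$, so $\maxeff{\chi} \le m W_{\max}$. Hence $i_0 = j$, i.e.\ $r \in V_j$, finishing the proof.

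The main obstacle I expect is the bookkeeping in the second step: \cref{prop:exists inc inf independent runs no transitions} is only stated for single transitions between an arbitrary prefix $v$ and its one-letter extension $v\sigma$, so I need the extended partitions $V^v_i$ to make sense at \emph{every} intermediate point on $\chi$, not merely at the finitely many Ramsey endpoints where the $\difftype$-equivalence gives a well-defined common partition. The $G$-cover does the job, but one has to verify carefully that the labels $V^{v_t}_{i_t}$ really agree with $V_i$ at $t = 0$ and $t = N$ so that the conclusion $r \in V_j$ is phrased in the intended partition; the remaining weight bookkeeping is then a clean application of the sparsity estimate.
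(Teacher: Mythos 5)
Your proof is correct, and it takes a genuinely different route from the paper's. The paper argues by an \emph{extremal} selection: assuming no minimal-weight run passes through $V_j$ at $\xi$, it picks the minimal-weight run whose suffix of states near $\rho_j$ has maximal length, and then shows the ``crossing'' transition $q'\runsto{\sigma}q''$ at the boundary of that suffix cannot bridge the gap from the class above $V_j$ down to within $G$ of $\rho_j$; this forces a seamless predecessor $r''$ near $\rho_j$ and hence a replacement run with a longer near-$\rho_j$ suffix, contradicting maximality. Your argument instead works \emph{forward} along any fixed minimal-weight run $\rho^*$: you label each intermediate state by its equivalence class $i_t$, use \cref{prop:exists inc inf independent runs no transitions} letter-by-letter to show the sequence $(i_t)$ is non-increasing so $i_0\ge i_N=j$, and then close the loop with a clean $\maxeff{\chi}$-vs-$\sparseGapBound$ weight inequality that rules out $i_0>j$. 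Both proofs lean on the same two ingredients (the no-transitions-upward proposition and the $G$-cover making $\sim_{\vec{c_{uv}}}$ an equivalence relation at \emph{every} intermediate prefix, not only the Ramsey endpoints), and you correctly identified the latter as the bookkeeping point to verify. What your approach buys is avoidance of the extremal selection and the associated ``maximal suffix'' bookkeeping, and it actually yields the slightly stronger statement that \emph{every} minimal-weight run to $q$ enters $\chi$ at $V_j$ (indeed stays in $V_j$ throughout, since the labels are sandwiched between $i_0=j$ and $i_N=j$); what you give up is a bit of locality — the paper's proof only reasons about a single crossing transition, whereas yours needs the gap estimate over the whole infix $\chi$, but $\maxeff{\chi}\le mW_{\max}$ is dwarfed by $\sparseGapBound$, so that costs nothing. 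One small point worth making explicit when you write it up: the lower bound $\vec{c_\zeta}(q)\ge\vec{c_\xi}(r)-\maxeff{\chi}$ only needs $\weight(\rho^*(\xi))\ge\vec{c_\xi}(r)$, which holds trivially; you do not actually need to normalize the prefix of $\rho^*$ to be seamless.
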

\begin{proof}
    \begin{figure}[ht]
        \centering
        \includegraphics[width=0.8\linewidth]{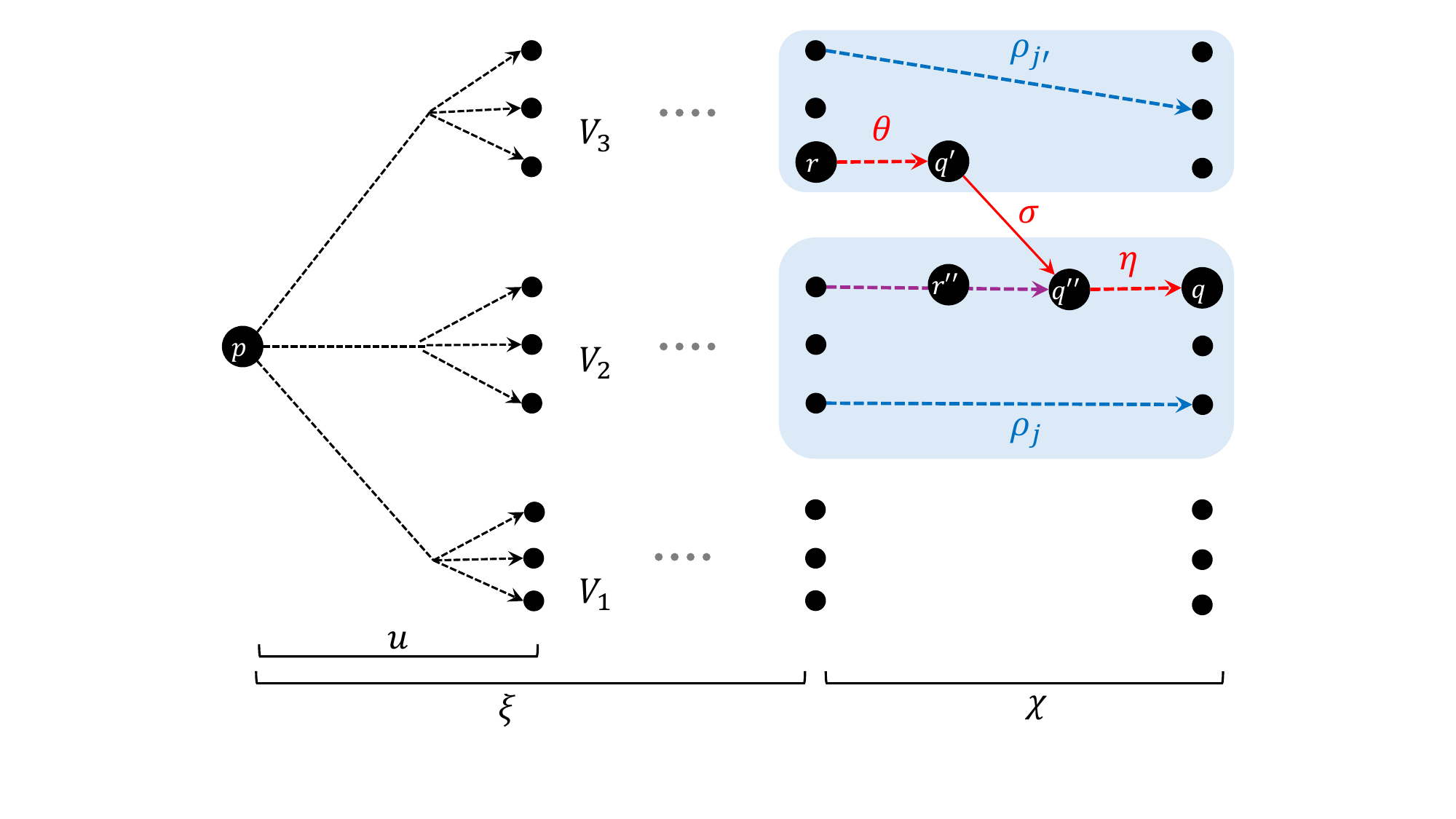}
        \caption{The highlighted areas are ``near'' $\rho_{j'}$ (above) and $\rho_j$ (below). The red run is $\mu$, and seems to cross on the letter $\sigma$ between the areas, which cannot happen. Thus, $q''$ is actually reachable from $r''$ within the lower area.}
        \label{fig:inc inf no minimal run from above}
    \end{figure}
    The proof idea is depicted in \cref{fig:inc inf no minimal run from above}.
    Assume by way of contradiction that the claim does not hold, and let $\mu:p\runsto{\xi}r'\runsto{\chi}q$ be a minimal-weight run to $q$ with $r'\in V_{j'}$ for $j'>j$. Recall that in every configuration along $uxy$, every state is within gap $G$ of an independent run in $I_p$ (since $\xwords$ is a $G$-cover). Since $r'\in V_{j'}$ and $q\in V_j$, then at some point along the run $\mu$ the states got into within gap $G$ of $\rho_j$ (whereas $\mu$ starts within gap $G$ of $\rho_{j'}$).
    Take $\mu$ to be the minimal-weight run with a maximal-length suffix near $\rho_j$. That is, $\mu$ gets within gap $G$ of $\rho_j$ the earliest among all minimal-weight runs. Write 
    $\mu:p\runsto{\xi}r'\runsto{\theta}q'\runsto{\sigma}q''\runsto{\eta}q$ where $q'$ is not near $\rho_j$, and the suffix $q''\runsto{\eta}q$ is near $\rho_j$. Note that $\sigma$ is the last letter before getting near $\rho_j$.

    Consider the transition $(q',\sigma,c,q'')$, then $|c|<W_{\max}$. However, by the gaps in $I_p$ from \cref{eq:sparse gap lower bound} we have $\vec{c_{\xi\theta}}(q')-\vec{c_{\xi\theta}}(\rho_j(\xi\theta))>W_{\max}+2G$.
    \acctodo{Checked}
    Thus, a single transition cannot decrease the weight enough to attain the value already assigned to $q''$ in $\vec{c_{\xi\theta}}$. It follows that there exists another run from $p$ that reaches $q''$ with weight $\vec{c_{\xi\theta}}$, and the last transition to $q''$ must be from some state $r''$ that is near $\rho_j$ (note that it cannot come from any lower independent run due to \cref{prop:exists inc inf independent runs no transitions}).

    This, however, contradicts the maximality of the suffix $\eta$, since the new run has a lower weight and a longer suffix.

    Thus, there is a minimal weight run whose states along the entire suffix $\chi$ remain near $\rho_j$, namely a run from $V_j$ to $V_j$.
\end{proof}

 \begin{proposition}
     \label{prop:exists inc inf no decreasing runs}
     Let $\xi,\zeta\in \{u,ux,uxy_1,\ldots, uxy_1\cdots y_{L-1}\}$ such that $\xi$ is a prefix of $\zeta$, and let $q\in \supp(\vec{c_\xi})$, then $\vec{c_{\zeta}}(q)-\vec{c_{\xi}}(q)\ge 0$. 
     
     Moreover, if $\vec{c_{\zeta}}(q)-\vec{c_{\xi}}(q)=0$, then $\vec{c_{\zeta'}}(q)-\vec{c_{\xi'}}(q)=0$
     for every $\xi',\zeta'$ such that $\xi'$ is a prefix of $\zeta'$.
 \end{proposition}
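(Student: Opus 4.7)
The plan is to combine the uniform per-step shift $k_j$ on $V_j$ (whose existence is set up in the preceding paragraph) with a direct pumping argument against the $D$-dip hypothesis. Let $q \in \supp(\vec{c_\xi})$; by \cref{prop:exists inc inf Req 2}, $q \in V_j$ for a unique $j$, and the preceding paragraph produces an integer constant $k_j$ so that for every $\xi \in \{u, ux, uxy_1, \ldots, uxy_1 \cdots y_{L-1}\}$ we have $\vec{c_\xi}(q) = \vec{c_u}(q) + i_\xi \cdot k_j$, where $i_\xi$ denotes the step-index of $\xi$ (so $i_u = 0$, $i_{ux} = 1$, $i_{uxy_1} = 2$, etc.). The per-step uniformity follows from the Ramsey homogeneity of \cref{thm: our ramsey}: all single-step intervals $x, y_1, \ldots, y_{L-1}$ share the same color, hence induce identical $\difftype_G$-endpoints and $\gaintype_D$ of runs, which combined with \cref{prop:exists inc inf run from Vj to Vj} (confining minimal runs within $V_j$ to originate in $V_j$) forces the sub-configuration on $V_j$ to undergo the same shift $k_j$ at every step.

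For the non-negativity claim I would argue by contradiction. Suppose $k_j \le -1$. Pick $q^* \in V_j$ minimizing $\vec{c_{ux}}$ and apply \cref{prop:exists inc inf run from Vj to Vj} to $y = y_1 \cdots y_{L-1}$ to obtain $r' \in V_j$ supporting a minimal-weight run $\rho : r' \runsto{y} q^*$. Its weight satisfies
\[
\weight(\rho) = \vec{c_{uxy}}(q^*) - \vec{c_{ux}}(r') = (L-1)k_j + \bigl(\vec{c_{ux}}(q^*) - \vec{c_{ux}}(r')\bigr) \le (L-1)k_j \le -(L-1),
\]
using the minimality of $q^*$ in $V_j$. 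The hypothesis $L > D \cdot |S|$ yields $L-1 \ge D$, so $\weight(\rho) \le -D$. Since $y$ is an infix of $w_3$ and $r' \in B \subseteq \ghostTrans(s_0, w_1 ux)$, this contradicts the $D$-dip property of $\xwords$. Hence $k_j \ge 0$ for every $j$, and consequently $\vec{c_\zeta}(q) - \vec{c_\xi}(q) = (i_\zeta - i_\xi)k_j \ge 0$ whenever $\xi$ is a prefix of $\zeta$.

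For the moreover clause, suppose $\vec{c_\zeta}(q) - \vec{c_\xi}(q) = 0$ for some $\xi$ \emph{strictly} prefixing $\zeta$ (the case $\xi = \zeta$ being immediate). Then $(i_\zeta - i_\xi)k_j = 0$ with $i_\zeta - i_\xi \ge 1$ forces $k_j = 0$, whence $\vec{c_{\zeta'}}(q) - \vec{c_{\xi'}}(q) = (i_{\zeta'} - i_{\xi'})k_j = 0$ for every pair $\xi' \preceq \zeta'$ in our set.

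The delicate step -- and what I expect to be the main obstacle -- is the rigorous justification of the per-step uniformity of $k_j$. While the preceding paragraph already shows that within each $V_j$ the relative positions of states are preserved across all intermediate configurations, one must use both the $G$-capped $\difftype$ and the weight-window $\gaintype_D$ information to argue that the shift induced by reading $x$ coincides with the shift induced by reading each $y_i$, checking in particular that the relevant minimal run-weights between states of $V_j$ lie in the $\{-D, \ldots, D|S|\}$ range where $\gaintype_D$ is faithful. Once that uniformity is secured, the pumping contradiction against $D$-dip and the telescoping of shifts make both claims follow.
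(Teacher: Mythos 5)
Your overall strategy matches the paper's: pump a negative single-step run through all $L-1$ Ramsey-homogeneous segments and invoke the $D$-dip hypothesis with $L > D|S|$, using \cref{prop:exists inc inf run from Vj to Vj} to confine the run to $V_j$ and \cref{thm: our ramsey} (via $\gaintype$ and $\difftype$) to transfer it between segments. However, your intermediate assertion that the preceding paragraph delivers a uniform per-step shift $k_j$ with $\vec{c_\xi}(q)=\vec{c_u}(q)+i_\xi k_j$ is a genuine gap, and it does not resolve the way you hope. The text preceding the proposition only establishes the two aggregate constants $k_{j,x}$ (shift over $x$) and $k_{j,y}$ (shift over the full $y=y_1\cdots y_{L-1}$); it never claims that each $y_i$ induces the same shift as $x$, and no amount of bookkeeping from $\gaintype_D$ and $\difftype_G$ will secure it unconditionally. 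The reason is exactly the window issue you flag at the end: when a step's shift is large and positive, the weight of the minimal $V_j\to V_j$ run can exceed $D|S|$, escaping the $\gaintype$ window, so the coloring says nothing about how shifts compare across steps. The uniformity you need is not a delicate lemma awaiting proof; in the non-negative regime it may simply fail.

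The fix -- and what the paper does -- is to use an \emph{inequality} rather than an equality, and only in the negative regime where it is free. If some step has shift $s<0$ at the minimal $q^*\in V_j$, then by \cref{prop:exists inc inf run from Vj to Vj} there is a run $\rho:r\to q^*$ within $V_j$ of weight $\weight(\rho)\le s<0$, and $\weight(\rho)>-D$ by $D$-dip, so $(r,\weight(\rho),q^*)$ lies in the $\gaintype$ window. Idempotent coloring then makes this same triple available on every segment; together with the $\difftype$-preserved offset $\vec{c}(r)-\vec{c}(q^*)$, this bounds \emph{every} step shift by $s<0$, so the cumulative drop over the $L-1$ segments is $\le -(L-1)\le -D$, and tracing a single run via \cref{prop:exists inc inf run from Vj to Vj} contradicts $D$-dip. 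Thus all step shifts are $\ge 0$ and the first claim telescopes. The ``moreover'' clause follows by the same transfer: a zero aggregate shift over non-negative steps forces each to be zero, and a zero step admits a run of weight $\le 0$ (which is in the window), so by idempotence every step has shift $\le 0$, hence $=0$. Your telescoping conclusion and the contradiction magnitude $L-1\ge D$ are both fine; only the equality-based intermediate claim needs to be replaced by the one-sided bound.
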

 \begin{proof}
     Denote $\zeta=\xi\chi$. Recall that $\xi$ and $\zeta$ have the same color, and in particular $\difftype_{p,G}(w_2\xi)=\difftype_{p,G}(w_2\zeta)$.
     Assume by way of contradiction that $\vec{c_{\zeta}}(q)-\vec{c_{\xi}}(q)< 0$ with $q\in V_j$ for some $1\le j\le k$. We can assume without loss of generality that $q$ is the minimal state in $V_j$, i.e., that $\vec{c_{\xi}}(q)=\min\{\vec{c_{\xi}}(r)\mid r\in V_j\}$. 
     Indeed, the gaps between the states in $V_j$ are identical in $\vec{c_{\xi}}$ and in $\vec{c_{\zeta}}$, and therefore for every $r\in V_j$ it holds that $\vec{c_{\zeta}}(r)-\vec{c_{\xi}}(r)< 0$, so we can take $q$ to be the minimal state.
     
     By \cref{prop:exists inc inf run from Vj to Vj}, there is some $r\in V_j$ and a minimal-weight run $\rho:r\runsto{\chi}q$. Since $q$ is minimal, it follows that  $\weight(\rho)=\vec{c_{\zeta}}(q)-\vec{c_{\xi}}(r)< 0$.

     Recall that $\xwords$ (and therefore $w_1w_2w_3w_4$) is $D$-dip (\cref{def:dip words}). 
     In particular $\weight(\rho)=\vec{c_{\zeta}}(q)-\vec{c_{\xi}}(r)> -D$. 
     Further recall that the $\gaintype$ component of our Ramsey-coloring tracks runs of weights in $\{-D,\ldots, D\cdot |S|\}$. Thus, we have $(r,\weight(\rho),q)\in \gaintype(w_1w_2\xi,w_1w_2,\xi\chi)$. 
     Looking at the $\gaintype$ component of the idempotent coloring, we have that for every infix $y_i$ for $1\le i <L$, since $y_i$ is colored the same as $\chi$, then $(r,\weight(\rho),q)\in \gaintype(w_1w_2uxy_1\cdots y_{i-1},w_1w_2uxy_1\cdots y_{i})$.
     
     We claim that this decreasing run induces a longer decreasing run from $V_j$ to $V_j$. Specifically, consider the configuration $\vec{c_{ux}}$. Since $\difftype_{p,G}(w_2ux)=\difftype_{p,G}(w_2\xi)$ then 
     \[\vec{c_{ux}}(r)-\vec{c_{ux}}(q)=\vec{c_{\xi}}(r)-\vec{c_{\xi}}(q)\]
     and reading $y_1$ from $\vec{c_{ux}}$ also admits the run $\rho$, so $\vec{c_{ux}}(q)>\vec{c_{uxy_1}}(q)$.
     But $\vec{c_{uxy_1}}(r)-\vec{c_{uxy_1}}(q)=\vec{c_{\xi}}(r)-\vec{c_{\xi}}(q)$, so we can repeat this argument for $y_2$, and by induction for all $1\le i<L$. 

     Since $L>D\cdot |S|>D$, it follows that $\vec{c_{uxy_1\cdots y_{L-1}}}(q)-\vec{c_{ux}}(q)<-D$ (since $q$ decreases by at least $1$ with every $y_i$). 
     By \cref{prop:exists inc inf run from Vj to Vj} and the minimality of $q$, we can trace a single run $\rho':r'\runsto{y_1\cdots y_{L-1}}q$ from some $r'\in V_j$ with $\weight(\rho')<-D$. This is in contradiction to the fact that $w_1w_2w_3w_4$ is $D$-dip. We therefore conclude that there are no such negative runs.

     Finally, consider the case where $\vec{c_{\zeta}}(q)-\vec{c_{\xi}}(q)= 0$. Following the same reasoning as above (i.e., taking a minimal run from $V_j$ to the minimal $q$ satisfying this), we can conclude by a similar induction that for every $\xi\in \{u,ux,uxy_1,\ldots, uxy_1\cdots y_{L-1}\}$ and $r\in V_j$, we have $\vec{c_{\xi}}(r)=\vec{c_{u}}(r)$, so the configuration restricted to $V_j$ remains unchanged throughout the run on $xy$, as required.
 \end{proof}
 \cref{prop:exists inc inf no decreasing runs} then concludes part \textit{(b)} in Requirement \textit{3} of \cref{def:separated increasing infix from state}.

 Finally, in order to conclude the requirements of \cref{def:separated increasing infix}, we need to show that $(\ghostTrans(S',u),x)$ is a stable cycle (\cref{def:stable cycle}). Recall that $S'=\ghostTrans(s_0,w_1)$. 
 \begin{proposition}
     \label{prop:exists inc inf x stable cycle}
     $(\ghostTrans(S',u),x)$ is a stable cycle.
 \end{proposition}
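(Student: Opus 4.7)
The plan is to verify the three requirements of \cref{def:stable cycle}: that $(\ghostTrans(S',u), x)$ is a reflexive cycle of the correct saturated shape, that its baseline state admits a $0$-weight self-cycle on $x^n$ for every $n \in \bbN$, and that this baseline state is minimal reflexive on every $x^n$. The first two items will follow almost mechanically from structural facts already in hand, together with the determinism of the second and third components in $\augA_\infty^\infty$; the last item will require combining \cref{increasing infix no negative cycles on x} with a saturation argument on $S'$, and this is where the main subtlety lies.

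For the reflexive cycle shape, I will observe that since $S' = \ghostTrans(s_0, w_1) = \{(p, q_{w_1}, T_{w_1}) \mid p \in T_{w_1}\}$ and the baseline and reachable-subset components evolve deterministically, $\ghostTrans(S', u)$ has the saturated form $\{(p, q_u, T_u) \mid p \in T_u\}$ required by \cref{def:reflexive cycle and state}. The cycle condition $\booltrans(\ghostTrans(S',u), x) \subseteq \ghostTrans(S',u)$ will be delivered by \cref{prop:exists inc inf Req 2}: the support equality $\supp(\vec{c_u}) = \supp(\vec{c_{ux}})$, together with determinism of the second and third components (which forces all reachable states to share them), yields $q_{ux} = q_u$ and $T_{ux} = T_u$, so reading $x$ from $\ghostTrans(S', u)$ stays inside the set. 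The same equality ensures that the baseline state $s_B = (q_u, q_u, T_u)$ maps to itself along $x$ via a baseline transition, and by \cref{rmk:baseline transitions with cactus are zero} this transition carries weight $0$; iteration then yields a $0$-weight self-cycle on every $x^n$.

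The hard part will be the minimal-reflexivity of $s_B$: I need $\minweight(x^n, r \to r) \geq 0$ for every reflexive $r \in \ghostTrans(S', u)$, whereas \cref{increasing infix no negative cycles on x} only delivers this for states in the support $B = \supp(\vec{c_u})$ measured from a specific starting state $p$. My plan is to bridge the gap via a saturation observation: because $S'$ is already the full ghost set at $w_1$, reading $u$ from it in the aug WFA reaches the full saturated set $\ghostTrans(S', u)$, i.e., $\booltrans(S', u) = \ghostTrans(S', u)$. Consequently every $r \in \ghostTrans(S', u)$ lies in the $B$ associated to some $p \in S'$ --- concretely, pick $p \in S'$ whose first component has $r$'s first component among its $u$-successors in $\cA$. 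Since the preceding work in this proof has already shown that $uxyz$ is a separated increasing infix from every such $p$, \cref{increasing infix no negative cycles on x} will yield $\minweight(x^k, r \to r) \geq 0$ for all $k$. Combined with the $0$-weight baseline cycle, this certifies $s_B \in \MinRefStates(\ghostTrans(S', u), x^n)$ for every $n$, completing the stable cycle verification.
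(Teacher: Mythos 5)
The first two parts of your plan (reflexive cycle shape and the $0$-weight baseline cycle) follow the paper's structure and are fine. The problem is in the third part, specifically the "saturation observation" that $\booltrans(S',u) = \ghostTrans(S',u)$.

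This equality is false in general when $u$ contains cactus letters, which it can here: the lemma is proved over a finite alphabet $\Gamma' \subseteq \Gamma_\infty^\infty$, so $u = w_2 u'$ may contain cactus letters. A cactus letter $\alpha_{S'',w}$ preserves the $T$ component (because $(S'',w)$ is a stable cycle with $\booltrans(S'',w)\subseteq S''$), but it only has transitions between grounded pairs. So starting from the fully saturated $S'$, you can lose reachability to some first components that nonetheless remain inside the deterministically-tracked $T$ --- those are exactly the ghost states. Concretely, it is entirely possible that $(p_2,q_u,T_u)\in\ghostTrans(S',u)$ yet there is no run in $\augA_\infty^\infty$ from any state of $S'$ reaching it on $u$. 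Your "pick $p\in S'$ whose first component has $r$'s first component among its $u$-successors in $\cA$" finds a $\cA$-level path, but not an $\augA_\infty^\infty$-level run through cactus transitions. Consequently such $r$ lies in no $B_p = \supp(\xconf(\vec{c_p},u))$, so \cref{increasing infix no negative cycles on x} says nothing about it, and the non-negativity of cycles on $x^k$ through $r$ is not established.

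This is precisely where the paper's proof diverges from yours: it does \emph{not} try to reduce ghost states to reachable ones. Instead it bounds each $x$-segment of a hypothetical negative cycle using the $D$-dip property --- which, by \cref{def:dip words}, explicitly quantifies over states in $\ghostTrans(s_0,w_1w_2x)$ and therefore covers ghost runs --- and then uses the idempotent $\gaintype$ coloring from the Ramsey step to replicate a small negative cycle across many of the $y_i$ infixes until the accumulated loss exceeds $D$, contradicting $D$-dip. Your approach reconstructs the non-negativity from the partition structure (via \cref{increasing infix no negative cycles on x}), which is a perfectly good argument for states in the support, but the D-dip/Ramsey-replication machinery the paper deploys is there specifically to reach the ghost states that your coverage argument misses. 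You would need either to prove an analogue of \cref{increasing infix no negative cycles on x} for ghost states directly, or to fall back to the D-dip argument as the paper does.
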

 \begin{proof}
     We start by showing that $(\ghostTrans(S',u),x)$ is a reflexive cycle, i.e., that $\booltrans(\ghostTrans(S',u),x)\subseteq \ghostTrans(S',u)$. 
     This is straightforward, but subtle. 
     
     Recall that the states of $\augA_\infty^\infty$ are tuples of the form $(p_1,p_2,T)$ where $p_2$ and $T$ behave deterministically. Since $\booltrans(S',u)=\booltrans(S',ux)$ (a corollary of \cref{prop:exists inc inf Req 2}, by taking union over all $p\in S'$), it follows that all states in this set share the $p_2$ and $T$ components.
     However, $\ghostTrans(S',u)$ essentially ``fills in'' the states with components $p_2$ and $T$ that are not already in $\booltrans(S',u)$ (see \cref{sec:reachable ghost states}), and therefore $\ghostTrans(S',u)=\ghostTrans(S',ux)$.

     Now, if $q\in \booltrans(\ghostTrans(S',u),x)$, then $q$ has the same $p_2$ and $T$ components as states in $\booltrans(S',ux)$, and therefore $q\in \ghostTrans(S',ux)=\ghostTrans(S',u)$.

     We now turn to show that $(\ghostTrans(S',u),x)$. That is, we need to show (c.f., \cref{def:stable cycle}) that the baseline state $s\in \ghostTrans(S',u)$ satisfies $\minweight(x^n,s\to s)=0$ for all $n\in \bbN$, and there are no negative cycles on states in $\ghostTrans(S',u)$.
     By the assumption of \cref{lem:unbounded long dip and G imply separated inc infix}, the entire word $w_1w_2w_3w_4$ has a seamless baseline run, which therefore has weight $0$ by definition. It follows that $\minweight(x,s\to s)=0$, and therefore $\minweight(x^n,s\to s)\le 0$ by induction for all $n\in \bbN$.
     We show that $\minweight(x^n,s\to s)=0$ and that there are no negative cycle in a single argument.

     Assume by way of contradiction that there is some $k \in \bbN$ and $p_1 \in \ghostTrans(S',u)$ with a negative reflexive run $\rho': p_1 \runsto{x^k} p_1$ with $\weight(\rho')=d<0$,
     We can assume without loss of generality that $k\le |S|$. Indeed, otherwise the run has the form $\rho':p_1\runsto{x^j}r\runsto{x^\ell}r\runsto{x^t}p_1$, and either the cycle on $r$ is also negative, in which case we can look at that cycle instead, or the cycle on $r$ is positive, in which case we can shorten $\rho'$ and still have a negative run.

     Write $\rho':p_1\runsto{x}p_2\runsto{x}\cdots\runsto{x}p_{k-1}\runsto{x}p_k$ with $p_k=p_1$. 
     Since $(\ghostTrans(S',u),x)$ is a reflexive cycle, then in particular $p_j\in \ghostTrans(S',u)$ for all $1\le j\le k$. Since $x$ is an infix of $w_3$, and $w_1w_2w_3w_4$ is $D$-dip, then in particular $\minweight(x,p_j\to p_{j+1})> -D$ for all $1\le j<k$. We claim that in addition it holds that 
     $\minweight(x,p_j\to p_{j+1})\le |S|D$. 
     Indeed, assume by way of contradiction that $\minweight(x,p_j\to p_{j+1})>|S|D$ for some $1\le j<k$. Since $k\le |S|$ and each sub-run on $x$ decreases by at most $D$, the total weight of $\rho'$ cannot be negative, in contradiction to the assumption.

     Thus, $-D\le \minweight(x,p_j\to p_{j+1})\le |S|D$, and it follows that $(p_j,\minweight(x,p_j\to p_{j+1}),p_{j+1})\in \gaintype(w_1w_2u,w_1w_2ux)$ for all $1\le j<k$.
     By the equivalent coloring of $x$ and each of the $y_i$, for every $1\le i<L$, it also holds that \[(p_j,\minweight(x,p_j\to p_{j+1}),p_{j+1})\in \gaintype(w_1w_2uxy_1\cdots y_{i-1},w_1w_2uxy_1\cdots y_i)\]
     This means that we can ``replicate'' the run $\rho'$ using a sequence of at most $|S|$ subwords from $y_i$. E.g., a run $p_1\runsto{y_1}p_2\runsto{y_3}\cdots \runsto{y_k}p_k$ can be constructed such that each transition on $y_j$ gains weight $\minweight(x,p_j\to p_{j+1})$.

     Now, since $L>|S|D$, we can concatenate at least $D$ copies of this replication of $\rho'$, so the corresponding run has weight at most $-D$ (since $\rho'$ is negative and has weight at most $-1$). This is a contradiction to $w_1w_2w_3w_4$ being $D$-dip.

     We therefore see that there are no negative reflexive cycles in $(\ghostTrans(S',u),x)$, and it is therefore a stable cycle.         
 \end{proof}

This concludes the proof of \cref{lem:unbounded long dip and G imply separated inc infix}. Admittedly, even the most tenacious reader may have (justifiably) forgotten by now that we are within a proof of a lemma.

\subsubsection{Existence of Increasing Infixes Without A Sparse $G$-Cover}

In \cref{lem:unbounded long dip and G imply separated inc infix} we establish the existence of a separated increasing infix under the assumption that there is a sparse $G$-cover.
In this section we show that this assumption in fact follows from the existence of a $D$-dip $\xwords$ function, and can therefore be dropped.

More precisely, the existence of a sparse $G$-cover for some $G$ can be obtained without any assumptions (and $D$-dip is an ``orthogonal'' assumption).
Intuitively, the idea is that if a seamless run stays far enough away from every independent run, then it induces another independent run. This allows us to apply an inductive argument to show the claim. 
We start by showing that sparsity is enough to ensure a $G$-cover.
\begin{lemma}[\lightbulbicon Zooming on $\xwords$]
    \label{lem:unbounded long dip imply separated inc infix}
    Let $\xwords$ be an $\ell$-sparse function for some $\ell\in \bbN$. 
    Then there exists a faithful restriction $\xwords'$ of $\xwords$ that is an $\ell'$-sparse $G$-cover for some $G,\ell'\in \bbN$. 
\end{lemma}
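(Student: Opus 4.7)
The plan is to prove the lemma by downward induction on $\ell$, using the fact that $\ell$ is bounded above by $|S|\cdot|\ghostTrans(s_0,w_1)|\le |S|^2$. This bound follows because any two independent seamless runs from the same state $p$ must eventually be state-disjoint (since they differ in weight at every prefix), so at most $|S|$ independent runs can emanate from a single $p$, and there are at most $|S|$ choices of $p$. Hence the induction terminates after finitely many steps.

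For the \emph{base case}, we take $\ell$ to be maximal. Then every seamless run on $w_2w_3$ from every $p\in\ghostTrans(s_0,w_1)$ must eventually coincide state-wise with some run in $I_p$ (else it could be added to $I_p$, contradicting maximality after absorbing the small prefix through $w_2$). Hence every reachable state in every configuration along $w_3$ lies within a uniformly bounded distance of a state on some independent run, so $\xwords$ is automatically a $G$-cover with $G$ depending only on $\maxeff{w_2}$ and $W_{\max}$. A mild faithful restriction extracts a clean witness.

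For the \emph{inductive step}, assume the lemma holds for every $(\ell+1)$-sparse function. Given an $\ell$-sparse $\xwords$, either it is already a $G$-cover for some $G$ (done), or we build a faithful restriction $\xwords''$ that is $(\ell+1)$-sparse and invoke the induction hypothesis on $\xwords''$; the result is a faithful restriction of $\xwords$ because faithful restriction is transitive. The construction of $\xwords''$ exploits the failure of being a $G$-cover: for every threshold $K\in\bbN$ there exist $t\in\bbN$, a state $p\in\ghostTrans(s_0,w_1)$ in $\xwords(t)=w_1w_2w_3w_4$, a prefix $x$ of $w_3$, and a reachable state $q$ in $\xconf(\vec{c_p},w_2x)$ whose weight is more than $K$ away from every state on runs of $I_p$ at that prefix. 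A seamless run witnessing $q$ then becomes a new independent run with gap at least $K$ from all existing independent runs (at that prefix, and hence by state-disjointness, at every longer prefix). For each target $m$, we choose $K=\sparseGapBound$ evaluated at $m$ and a sufficiently deep $\xwords(t)$ with such a drift inside its $w_3$; truncating $w_3$ at the drift position (\`a la Proposition~\ref{prop:faithful restriction of words infinitely many ells}) produces $\xwords''(m)$ with the existing $\ell$ independent runs (whose gap is $\sparseGapBound$ evaluated at $t\ge m$, hence at least the required value) plus the new drifting run as the $(\ell+1)$-th.

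The \emph{main obstacle} is producing a drift at every prescribed length $m$ uniformly: the failure of the cover property only yields drift events at \emph{some} positions. This is resolved by combining two observations. First, the negation of ``being a $G$-cover for any $G$'' yields drifts with unboundedly large gaps at infinitely many positions across the family. Second, the flexibility of faithful restriction — which allows us, for each $m$, to pick any $t\ge m$ and to redistribute suffixes of $w_3$ into $w_4$ — lets us realign the drift point to length exactly $m$, exactly as in Proposition~\ref{prop:faithful restriction of words infinitely many ells}. The interplay between the $m$-dependent gap requirement $\sparseGapBound$ and the existence of unboundedly large drifts is what makes the construction work.
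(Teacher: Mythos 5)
Your overall structure mirrors the paper: downward induction on $\ell$, a case split on whether a cover already exists, and the use of a new drifting run to increment $\ell$. However, the crucial step of your inductive case contains an error that would sink the proof.

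You claim that once the witnessing seamless run to $q$ is far (gap $\geq K$) from every run in $I_p$ at the drift prefix, ``hence by state-disjointness, at every longer prefix.'' This is false. State-disjointness says only that two seamless runs with different weights at some index must visit different states from then on; it says nothing about their \emph{weight} gap, which can shrink arbitrarily. So the new run has no guaranteed gap at positions other than the drift point itself, and adding it to $I_p$ does not yield an $(\ell+1)$-sparse set in the sense of \cref{def:independent run} (which requires the gap to hold on \emph{every} prefix of the new $w_3'$).

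The paper circumvents this by going \emph{backward}, not forward. Given a drift of size $G > 2W_{\max}t + \sparseGapBound(t)$ at the end of a prefix $x$, it writes $x = uv$ with $|v| = t$ and takes $v$ as the new $w_3'$. Over these last $t$ steps, every run changes weight by at most $W_{\max}$ per letter, so the pairwise gap can only shrink by at most $2W_{\max}t$ along $v$; the gap therefore stays above $\sparseGapBound(t)$ throughout $v$. This bounded-weight-change argument over a short window is the mechanism that turns a drift \emph{at a single position} into an independent run \emph{on an interval}, and it is missing from your proposal. To make room for this window, the paper also refines the case split: Case 1 assumes a $G$-cover on the \emph{second half} of every $w_3$ (not all of $w_3$), so that Case 2 is guaranteed a drift at a prefix $x$ with $|x| \geq \frac12 m > t$, leaving at least $t$ steps to carve out $v$. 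Without that refinement, a drift near the start of $w_3$ would give you no window at all.

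Your base case is also imprecise. You say that at maximal $\ell$ every seamless run ``must eventually coincide state-wise with some run in $I_p$ (else it could be added).'' But ``adding'' a run requires the very large gap $\sparseGapBound$, not just any nonzero difference, so maximality of $\ell$ does not yield this conclusion. The paper avoids the issue by starting the induction from the hard upper bound $\ell = |S|^2$: there, $|I_p| = |S|$ for every $p$, so every reachable state literally lies on an independent run and the sequence is a $0$-cover. The inductive step then covers every actual value of $\ell$ with no appeal to ``maximality.''
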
 
\begin{proof}
    Recall that for every state $p$ and a set of independent runs $I_p$, it holds that $|I_p|\le |S|$. Also recall that the parameter $\ell$ in $\ell$-sparse is $\ell=|\bigcup_{p}I_p|$, and therefore $\ell\le |S|^2$. 

    We prove the lemma by reverse induction on $\ell$, starting from $|S|^2$ and down to $1$. Note that there must be at least one independent run, since the baseline run is seamless in the word $\xwords$ for all $m$ (\cref{def:elongated words sequence}).

    \paragraph{Base case: $\ell=|S|^2$.}
    Intuitively, in this case every seamless run is independent, and therefore every state along a seamless run is within gap $0$ of an independent run.

    More precisely, consider a word $w_1w_2w_3w_4=\xwords(m)$ for some $m$. 
    For every $p\in \ghostTrans(s_0,w_1)$ let $I_p$ be a set of independent runs. Since $\ell=|S|^2$, it holds that $|I_p|=|S|$. 
    For every prefix $x$ of $w_3$ consider the configuration $\vec{c_{w_2x}}=\xconf(\vec{c_p},w_2x)$, then it follows that $\supp(\vec{c_{w_2x}})=S$ (by size considerations) and for every $q\in S$ we have that $\vec{c_{w_2x}}(q)$ is along some seamless run from $p$, which is therefore one of the runs in $I_p$, so we conclude the lemma with $\xwords'=\xwords$ and $G=0$.


    \paragraph{Induction step: $\ell<|S|^2$.}
    We assume correctness of the claim for $\ell+1$, and we prove for $\ell$. We obtain $\xwords'$ by modifying $\xwords$, and the construction is split to two cases. We sketch the intuition before proceeding with the technical details.

    First, consider the case where $\xwords$ is already a $G$-cover for some $G\in \bbN$. In fact, we relax the definition further and only require that for every $m\in \bbN$ and $w_1w_2w_3w_4=\xwords(m)$, states visited after reading a prefix $x$ of $w_3$ of length at least $\frac12 m$ are $G$-covered. This seemingly-overcomplicated assumption then helps us in the second case, where we can require something more strict.
    We now split $w_3$ in the middle, obtaining $w_3=uv$ where $u=\lfloor\frac12 m \rfloor$ and $v=\lceil\frac12 m \rceil$, we denote $t=|v|$. 
    We then define $\xwords'(t)=w_1'w_2'w_3'w_4'$ where $w_1'=w_1$, $w_2'=w_2u$, $w_3'=v$ and $w_4'=w_4$. We show that $\xwords'$ is a $G$-cover, which follows trivially from $\xwords$ begin a $G$-cover. 
    Note that this case does not require the induction hypothesis.

    In the second case, we assume the complement of the first case. That is, for every $G\in \bbN$ there exists $m\in \bbN$ such that for $w_1w_2w_3w_4=\xwords(m)$ there exists a prefix $x$ of $w_3$ with $|x|\ge \frac12 m$ and states $p\in \ghostTrans(s_0,w_1)$ and $q$ such that after reading $w_2x$ from $p$, the state $q$ is not within gap $G$ from any independent run in $I_p$.
    
    We apply this assumption for a large enough constant $G$, which then allows us to observe that in order to create a gap of at least $G$ from all independent runs, the seamless run leading up to $q$ must be very far from all independent runs for a long infix. Therefore, this run is a new independent run in the infix, and we can now use the induction hypothesis.

    We now turn to the formal details, with depictions in \cref{fig:sparse induction first case,fig:sparse induction second case}.
    \begin{figure}[ht]
        \centering
        \includegraphics[width=1.0\linewidth]{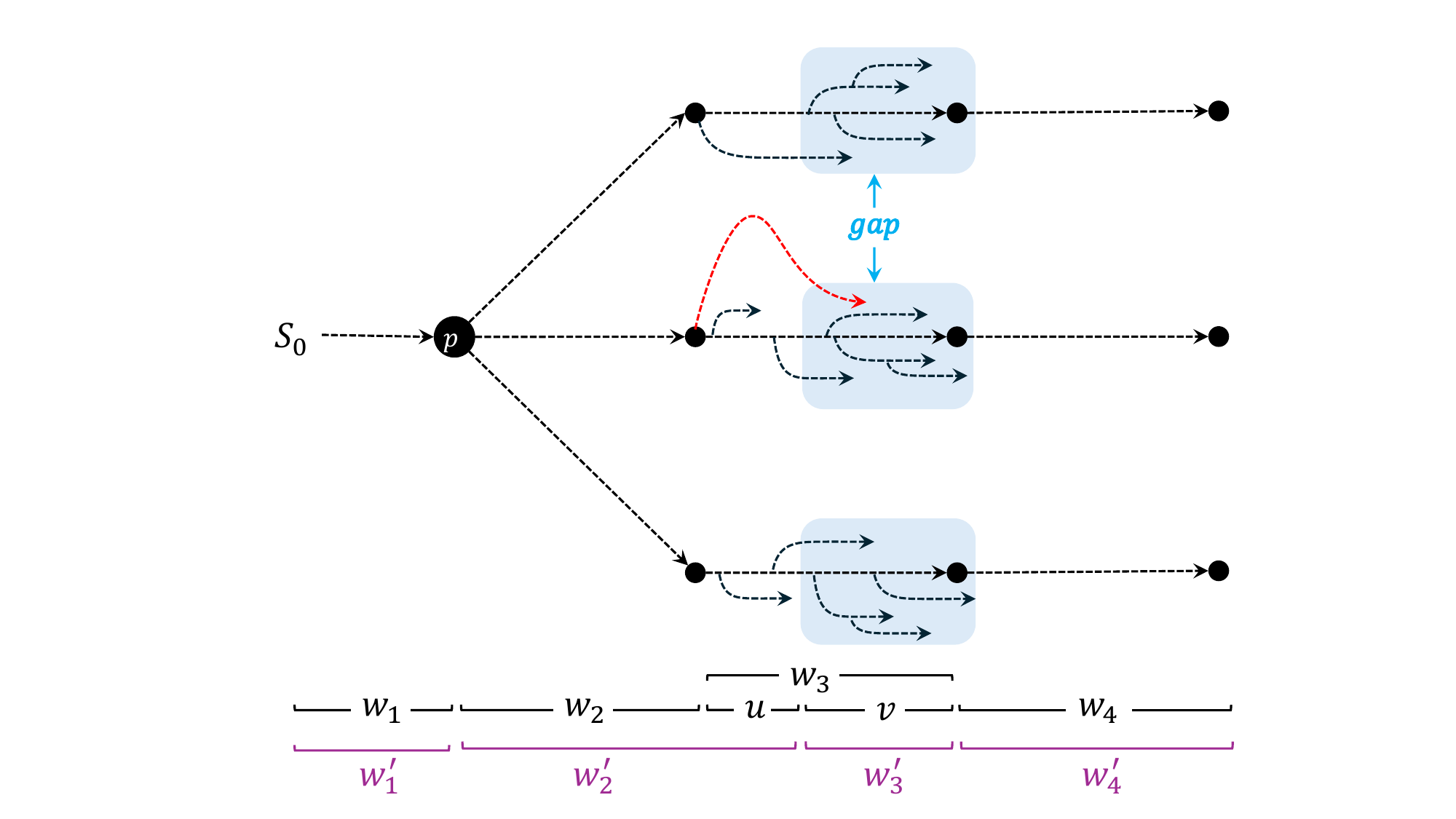}
        \caption{The first case in the proof of \cref{lem:unbounded long dip imply separated inc infix}. The highlighted regions are within $G$ of the independent runs. Note that the red run goes outside this region, but only in the first half of $w_3$.}
        \label{fig:sparse induction first case}
    \end{figure}
    \begin{figure}[ht]
        \centering
        \includegraphics[width=0.7\linewidth]{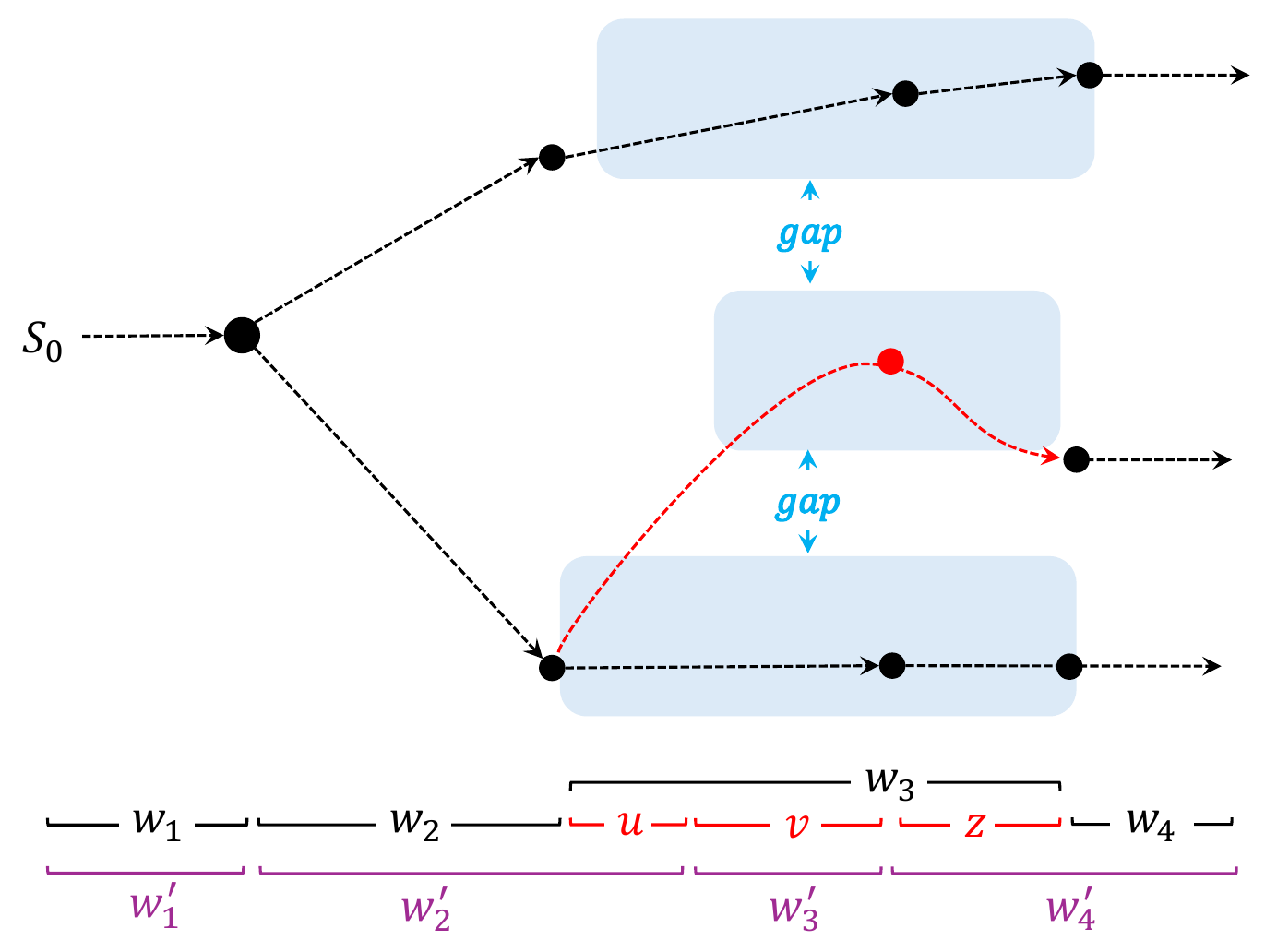}
        \caption{The second case in the proof of \cref{lem:unbounded long dip imply separated inc infix}. The red run goes far from the independent runs, and to do so it must be far for a long suffix ($v$). 
        }
        \label{fig:sparse induction second case}
    \end{figure}
    
    For the first case, assume that there exists $G\in \bbN$ such that for every $m\in \bbN$ and $w_1w_2w_3w_4=\xwords(m)$, for every state $p\in S$ and corresponding set of independent runs $I_p$, and for every prefix $x$ of $w_3$ with $|x|\ge \frac12|w_3|$, if $\vec{c_{w_2x}}=\xconf(w_2x,\vec{c_p})$ then for every state $q$, $\vec{c_{w_2x}}(q)$ is within gap $G$ of some independent run in $I_p$.

    Let $t\in \bbN$, we construct $\xwords'(t)$ as follows. Plugging $m=2t$, consider $w_1w_2w_3w_4=\xwords(2t)$ and write $w=uv$ with $|u|=|v|=t$.  We now set $\xwords'(t)=w'_1w'_2w'_3w'_4$ with $w_1'=w_1$, $w_2'=w_2u$, $w_3'=v$ and $w_4'=w_4$. We claim that $\xwords'$ is indeed a faithful restriction of $\xwords$ that is an $\ell$-sparse $G$-cover for some $G$.

    Indeed, by the decomposition (specifically, $w'_1=w_1$ and $w'_1w'_2w'_3w'_4=w_1w_2w_3w_4$) we have that $\xwords'$ is a faithful restriction of $w$. In particular, this implies that the baseline run is seamless.
    Next, for every $p\in \ghostTrans(s_0,w'_1)$ there same set $I_p$ of independent runs on $w_2w_3$ can serve as independent runs on $w'_2w'_3$, noticing that the gap requirement for $w_1w_2w_3w_4$ is $\sparseGapBound$, whereas for $w'_1w'_2w'_3w'_4$ it is 
    $16\bigM(|S|+1)(tW_{\max})^2$,
    \acctodo{If the constant changes, change this}
    but since $t<m$, the former gap suffices. 
    Thus, $\xwords'$ is indeed an $\ell$-sparse function.
    Finally, $\xwords'$ is a $G$-cover by the assumption: each state reachable over $w'_3=v$ after $w'_2=w_2u$ is reachable from $w_2$ by a prefix $x$ of $uv=w_3$, and is therefore within gap $G$ of a run in $I_p$.

    We now proceed to second case, which uses the induction hypothesis. We assume the complement of the previous assumption. That is, for every $G\in \bbN$ there exists $m\in \bbN$ such that for $w_1w_2w_3w_4=\xwords(m)$ there exists a prefix $x$ of $w_3$ with $|x|\ge \frac12 m$ and states $p\in \ghostTrans(s_0,w_1)$ and $q$ such that after reading $w_2x$ from $p$, the state $q$ is not within gap $G$ from any independent run in $I_p$.

    We define $\xwords'$ to be an $\ell+1$-sparse function as follows.
    Let $t\in \bbN$ and take $G>2W_{\max}t+16\bigM(|S|+1)(tW_{\max})^2$
    \acctodo{If the constant changes, change this}
    such that the corresponding $m$ in the assumption satisfies $m>2t$. 
    Note that we can indeed require $m>2t$, since for every $G$ there are only a finite number of values of $m$ for which there exist states  that are not within gap $G$.

    Let $w_1w_2w_3w_4=\xwords(m)$, then $|x_3|=m>2t$ and by the assumption there exists a prefix $x$ of $w_3$ with $|x|\ge \frac12 m>t$ and states $p\in \ghostTrans(s_0,w_1)$ and $q$ such that after reading $w_2x$ from $p$, the state $q$ is not within gap $G$ from any independent run in $I_p$. Write $x=uv$ where $|v|=t$ (which is possible since $|x|>t$), and write $w_3=uvz$. 
    We now define $w'_1=w_1$, $w'_2=w_2u$, $w'_3=v$ and $w'_4=zw_4$ and set $\xwords'(t)=w'_1w'_2w'_3w'_4$. 
    By identical considerations to the above, we get that $\xwords'$ is a faithful restriction of $\xwords$.

    Next, let $p\in \ghostTrans(s_0,w_1)$, consider the set $I_p$ that corresponds to $w_1w_2w_3w_4$, and let $q$ be a state such that $\vec{c_{w_2x}}(q)$ is not within gap $G$ of $I_p$ (which exists by our assumption).

    Since $G>2W_{\max}t+16\bigM(|S|+1)(tW_{\max})^2$, 
    \acctodo{If the constant changes, change this}
    and since in each transition the run to $q$, as well as any independent run, can change its weight by at most $W_{\max}$, it follows that for the seamless run $\rho:p\runsto{w_2x}q$, the $t$ transitions leading up to $q$, namely the suffix of $\rho$ on $v$, can change the difference between to runs by at most $2W_{\max}t$. In particular, the suffix of $\rho$ on $v$ is not within gap $16\bigM(|S|+1)(tW_{\max})^2$ of any independent run in $I_p$. 
    Thus, adding $\rho$ to $I_p$, we have a set of $\ell+1$ independent runs on $v$.
    \acctodo{Checked}

    Finally, by the induction hypothesis, the existence of the $\ell+1$-sparse function $\xwords'$ implies the existence of such a $G$-cover for some $G$ (possibly for some $\ell''$), and we are done.
\end{proof}    
Now, assume a $D$-dip $\xwords$ function exists and for every $m\in \bbN$, consider $\xwords(m)=w^m_1w^m_2w^m_3w^m_4$. For every state $p\in \ghostTrans(s_0,w_1)$, if there is a run $\rho:p\runsto{w_2w_3w_4}S$, then there is also such a seamless run. We arbitrarily select one seamless run $\rho_p^m$ for each $p$ and $m$. 
Since a single run is independent by definition, we therefore obtain a set of independent runs $I^m_p$ with $|I^m_p|=1$. Taking $\ell$ to be the maximal value such that $\ell=|\bigcup_{p}I^m_p|$ for infinitely many $m$, we can use \cref{prop:faithful restriction of words infinitely many ells} to obtain a faithful restriction of $\xwords$ that is $\ell$-sparse (for some $\ell\ge 1$). For brevity, we assume $\xwords$ itself is already $\ell$-sparse.

Crucially, note that taking a faithful restriction maintains the property of being a $D$-dip.

We can now apply \cref{lem:unbounded long dip imply separated inc infix}, to obtain a sparse $G$-cover $\xwords'$. Again, since $\xwords'$ is a faithful restriction of $\xwords$, then $\xwords'$ is also $D$-dip. Thus, we have the following.
\begin{corollary}
    \label{cor:there is a G cover}
    If there exists an elongated word sequence $\xwords$ that is a $D$-dip for some $D\in \bbN$, then there is a faithful restriction $\xwords'$ of $\xwords$ that is a $D$-dip $\ell$-sparse $G$-cover for some $\ell,G\in \bbN$. 
\end{corollary}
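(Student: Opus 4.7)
The plan is to simply chain the pieces that have just been developed, since the corollary is essentially an assembly lemma rather than a new combinatorial result. The main point is that ``faithful restriction'' is a very weak operation: by \cref{def:elongated words sequence}, $\xwords'(m)$ just agrees with some $\xwords(t)$ on the $w_1$ component and on the total concatenation. Hence any property that is stated purely in terms of runs of the underlying automaton on the full word (such as being $D$-dip) is automatically inherited from $\xwords$ to any faithful restriction. I would therefore begin by recording this observation explicitly: if $\xwords$ is $D$-dip and $\xwords'$ is a faithful restriction, then for every $m$, the word $\xwords'(m) = w_1' w_2' w_3' w_4'$ equals (as a concatenation) some $\xwords(t) = w_1 w_2 w_3 w_4$, with $w_1' = w_1$; since being $D$-dip only constrains runs over infixes of $w_3'$, and any such infix is also an infix of the original $w_3$, the $D$-dip property transfers.

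Next I would bootstrap an initial sparsity out of nothing. Given $m$, write $\xwords(m) = w_1^m w_2^m w_3^m w_4^m$; by \cref{def:elongated words sequence} this word has a seamless baseline run, so for every $p \in \ghostTrans(s_0, w_1^m)$ that admits a seamless run on $w_2^m w_3^m w_4^m$ we can pick one such run $\rho_p^m$ and set $I_p^m = \{\rho_p^m\}$. A singleton set of runs is trivially independent (as noted after \cref{def:independent run}), and the quantity $\ell_m := \lvert \bigcup_p I_p^m\rvert$ is a positive integer bounded by $|S|^2$. By the pigeonhole principle there exists some $\ell \ge 1$ with $\ell_m = \ell$ for infinitely many $m$. \cref{prop:faithful restriction of words infinitely many ells} then produces a faithful restriction $\xwords_1$ of $\xwords$ that is $\ell$-sparse; by the opening observation $\xwords_1$ remains $D$-dip.

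Now I would apply \cref{lem:unbounded long dip imply separated inc infix} to $\xwords_1$, which yields a faithful restriction $\xwords'$ of $\xwords_1$ that is an $\ell'$-sparse $G$-cover for some $\ell', G \in \bbN$. Since faithful restriction is transitive (unrolling \cref{def:elongated words sequence}: a faithful restriction of a faithful restriction is a faithful restriction of the original), $\xwords'$ is a faithful restriction of $\xwords$. Applying the opening observation once more, $\xwords'$ inherits the $D$-dip property from $\xwords$. This gives the three properties required in the statement simultaneously for $\xwords'$.

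There is really no hard step here; the only things to be careful about are (i) verifying that ``faithful restriction'' composes, which is direct from the definition, and (ii) confirming that $D$-dip is genuinely preserved under faithful restriction, which is where a reader might worry because $w_4'$ might contain parts that used to be inside $w_3$. The point is that $D$-dip (\cref{def:dip words}) quantifies only over decompositions $w_3 = xyz$ of the \emph{current} $w_3$, so shrinking $w_3$ by moving letters into $w_4$ only restricts the universe of quantification, and hence preserves the property. Once that micro-check is made, everything else is just invoking \cref{prop:faithful restriction of words infinitely many ells} and \cref{lem:unbounded long dip imply separated inc infix} in the right order.
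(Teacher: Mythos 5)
Your proposal is correct and matches the paper's argument essentially step for step: bootstrap a trivial set of singleton independent runs, use the pigeonhole principle together with \cref{prop:faithful restriction of words infinitely many ells} to get a $\ell$-sparse faithful restriction, apply \cref{lem:unbounded long dip imply separated inc infix}, and observe that faithful restrictions preserve $D$-dip and compose. The only cosmetic difference is that the paper fixes $\ell$ as the maximal value occurring infinitely often rather than an arbitrary one, but that makes no difference to the argument.
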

Then, by combining \cref{cor:there is a G cover} with \cref{lem:unbounded long dip and G imply separated inc infix} we have the following.
\begin{corollary}[\keyicon $D$-Dip to Increasing Infix]
    \label{cor:dip implies increasing infix}
    If there exists an elongated word sequence $\xwords$ that is $D$-dip for some $D\in \bbN$, then there exists $m\in \bbN$ with $\xwords(m)=w_1w_2w_3w_4$ and a decomposition $w_3=u'xyv'$ such that $uxyv$ is a separated increasing infix from $\ghostTrans(s_0,w_1)$ for $u=w_2u'$ and $v=v'w_4$.
\end{corollary}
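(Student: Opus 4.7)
The plan is to simply chain together the two preceding results, namely \cref{cor:there is a G cover} and \cref{lem:unbounded long dip and G imply separated inc infix}. Since the statement is essentially a direct composition, the proof will be short and will mostly consist of bookkeeping to transfer the decomposition obtained for a faithful restriction back to the original sequence $\xwords$.

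First, I would invoke \cref{cor:there is a G cover} on the given $D$-dip elongated word sequence $\xwords$. This produces a faithful restriction $\xwords'$ of $\xwords$ that is simultaneously $D$-dip, $\ell$-sparse, and a $G$-cover, for some $\ell,G \in \bbN$. Note that these are precisely the three hypotheses required by \cref{lem:unbounded long dip and G imply separated inc infix}.

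Next, I would apply \cref{lem:unbounded long dip and G imply separated inc infix} to $\xwords'$. This yields some $t \in \bbN$ with $\xwords'(t) = w'_1 w'_2 w'_3 w'_4$ and a decomposition $w'_3 = u' \cdot x \cdot y \cdot z'$ such that $u \cdot x \cdot y \cdot z$ is a separated increasing infix from $\ghostTrans(s_0,w'_1)$, where $u = w'_2 u'$ and $z = z' w'_4$.

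The last step is purely bookkeeping: by \cref{def:elongated words sequence}, since $\xwords'$ is a faithful restriction of $\xwords$, there exists $m \in \bbN$ and $\xwords(m) = w_1 w_2 w_3 w_4$ such that $w_1 = w'_1$ and $w_1 w_2 w_3 w_4 = w'_1 w'_2 w'_3 w'_4$ as concatenations. In particular, $\ghostTrans(s_0,w'_1) = \ghostTrans(s_0,w_1)$, so the separated increasing infix property obtained for $\xwords'(t)$ transfers verbatim to a decomposition of $w_3$ in $\xwords(m)$. The main ``obstacle'' here is nothing more than realigning the split point between prefix/infix/suffix inside $w_1 w_2 w_3 w_4$: since $w_1 = w'_1$ is fixed and the concatenations agree, the infix $u \cdot x \cdot y \cdot z$ (whose leftmost endpoint lies inside $w_2$ of $\xwords(m)$ because $w'_2 = w_2 u'$ for some $u'$ that is a prefix of $w_3$) produces the required decomposition $w_3 = u' x y v'$ with $u = w_2 u'$ and $v = v' w_4$.
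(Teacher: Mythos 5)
Your proposal is correct and takes essentially the same route as the paper: the paper derives this corollary precisely by chaining \cref{cor:there is a G cover} with \cref{lem:unbounded long dip and G imply separated inc infix}, and treats the transfer from the faithful restriction $\xwords'$ back to $\xwords$ as immediate. Your explicit bookkeeping of the realignment is more than the paper itself spells out, and it is sound given that the faithful restrictions constructed in \cref{prop:faithful restriction of words infinitely many ells} and \cref{lem:unbounded long dip imply separated inc infix} always extend $w_2$ (and $w_4$) by moving material out of $w_3$, so the $x,y$ infix returned by the lemma indeed sits inside the original $w_3$.
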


\section{Cactus Budding}
\label{sec:cactus budding}
In this section we discuss how a cactus chain (\cref{def:cactus chain}) may grow a new cactus at its end (\emph{bud}, in cactus terms). The overall intuition is that a chain may bud a new inner cactus if doing so both decreases the cost of the chain (as per \cref{def:cost depth and sub cactus}), and does not decrease the potential. 
To this end, the main tool at our disposal is our study of Separated Increasing Infixes (\cref{def:separated increasing infix}).
As usual, there are several definitions and technical tools to develop before getting to the main result.
\begin{definition}[Superior Stable Cycle]
    \label{def:superior stable cycle}
    Consider two stable cycles $(S',u),(S',v)$ sharing the same first component $S'$. 
    We say that $(S',v)$ is \emph{superior} to $(S',u)$ if for every $q,p\in S'$ 
    we have $\minweight(\alpha_{S',u},q\to p)\le \minweight(\alpha_{S',v},q\to p)$.
\end{definition}
Our first (simple) observation is that a word that ``almost'' induces a stable cycle, and satisfies a superiority condition with respect to a stable cycle, actually induces a superior stable cycle.
\begin{proposition}
    \label{prop:superior word is a superior cycle}
    Consider a stable cycle $(S',u)$ and let $v\in (\Gamma_\infty^0)^*$ (i.e., without rebase and  jump letters) be a word such the following hold.
    \begin{enumerate}
        \item For every $q,p\in S'$ it holds that $\minweight(u,q\to p)\le \minweight(v,q\to p)$.
        \item For the baseline state $g \in S'$ it holds that $\minweight(v,g\to g)=0$.
    \end{enumerate}
    Then $(S',v)$ is a stable cycle, and it is superior to $(S',u)$.
\end{proposition}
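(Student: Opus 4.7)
The plan is to leverage Condition 1 to show that run-weights in $v$ dominate those in $u$, and then to transfer the stability of $(S',u)$ to $(S',v)$. The only non-trivial inputs we will actually use are (a) $(S',u)$ is stable and (b) the two inequalities in the hypothesis. Note that no cactus-unfolding or baseline-shift machinery is needed here; the proof is ``bookkeeping'' on $\minweight$.

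First I would verify that $(S',v)$ is a reflexive cycle in the sense of \cref{def:reflexive cycle and state}. The shape $S'=\{(p,q,T)\mid p\in T\}$ is inherited from $(S',u)$, so only $\booltrans(S',v)\subseteq S'$ needs checking: if $s\in S'$ and $s\runsto{v}r$ then $\minweight(v,s\to r)<\infty$, and Condition 1 gives $\minweight(u,s\to r)\le \minweight(v,s\to r)<\infty$, so $r\in \booltrans(S',u)\subseteq S'$. Next I would prove by induction on $n$ the amplified inequality
\[
\minweight(u^n,q\to p)\le \minweight(v^n,q\to p) \qquad \text{for all } q,p\in S',
\]
by splitting a minimal run on $v^{n+1}$ at an intermediate state $r\in S'$ (legitimate by the reflexivity we just established) and applying Condition 1 to the final $v$-segment together with the induction hypothesis to the $v^n$-prefix.

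With this inequality in hand, I can verify the baseline condition of \cref{def:stable cycle}. For the $\le 0$ direction, iterating Condition 2 gives $\minweight(v^n,g\to g)\le 0$; for the $\ge 0$ direction, the amplified inequality together with stability of $(S',u)$ (which forces $\minweight(u^n,g\to g)=0$ being the minimum reflexive weight) gives $\minweight(v^n,g\to g)\ge \minweight(u^n,g\to g)=0$. To conclude $g\in\MinRefStates(S',v^n)$, I need to rule out any $s\in \RefStates(S',v^n)$ with $\minweight(v^n,s\to s)<0$. Assume such an $s$ exists with some cycle $\rho$ of weight $c<0$; then $\rho^k$ shows $\minweight(v^{nk},s\to s)\le kc\to -\infty$, so by the amplified inequality $\minweight(u^{nk},s\to s)\to -\infty$, contradicting the fact that the minimal reflexive weight in $(S',u^{nk})$ is $0$ (again by stability of $(S',u)$). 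The main obstacle conceptually is this last step: ensuring that pumping a hypothetical $v$-cycle on $s$ really does ``leak'' into $u$ via the inequality; this works because each iteration is dominated separately.

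Finally, for superiority, fix $q,p\in S'$. If $(q,p)\notin\GroundPairs(S',v)$ then $\minweight(\alpha_{S',v},q\to p)=\infty$ and the inequality is trivial. Otherwise pick a grounding state $g'\in\MinRefStates(S',v^{\bigM})$ with $q\runsto{v^\bigM} g'\runsto{v^\bigM}p$. Since $g'$ is minimal reflexive in $(S',v^\bigM)$ and the minimum there is $0$ (shown above), $\minweight(v^\bigM,g'\to g')=0$; the amplified inequality then gives $\minweight(u^\bigM,g'\to g')\le 0$, and stability of $(S',u)$ gives $\ge 0$, so equality holds and $g'\in\MinRefStates(S',u^\bigM)$. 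The finiteness of $\minweight(v^\bigM,q\to g')$ and $\minweight(v^\bigM,g'\to p)$ plus the amplified inequality yield finite $u^\bigM$-runs through $g'$, so $(q,p)\in\GroundPairs(S',u)$ and we may use $g'$ as a candidate grounding state. Applying \cref{def:stabilization} and the amplified inequality segment-by-segment:
\[
\minweight(\alpha_{S',u},q\to p)\le \minweight(u^{2\bigM},q\runsto{u^\bigM}g'\runsto{u^\bigM}p)\le \minweight(v^{2\bigM},q\runsto{v^\bigM}g'\runsto{v^\bigM}p).
\]
Taking the minimum on the right over all valid grounding states $g'$ for $(q,p)$ in $(S',v)$ yields $\minweight(\alpha_{S',u},q\to p)\le \minweight(\alpha_{S',v},q\to p)$, which completes the proof.
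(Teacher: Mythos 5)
Your proof is correct and follows essentially the same strategy as the paper's: establish the reflexive-cycle property via Condition~1, prove the amplified inequality $\minweight(u^k,\cdot\to\cdot)\le\minweight(v^k,\cdot\to\cdot)$ by decomposing minimal $v^k$-runs at intermediate states in $S'$, and then derive both the stable-cycle property and superiority from it. The only differences are cosmetic: you rule out negative $v$-cycles via a pumping argument where a single application of the amplified inequality suffices (as the paper does), and you spell out explicitly that the grounding state for $(S',v)$ lies in $\MinRefStates(S',u^{\bigM})$, a fact the paper states parenthetically.
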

\begin{proof}
    By the condition on the baseline state $g$
    and since $v$ does not have jump letters,
    the baseline run $\rho:g\runsto{v}g$ exists and $\weight(\rho)=0$. 
    Therefore in order to show that $(S',v)$ is a stable cycle, all that remains is to prove that there are no negative cycles on $v^k$ for any $k\in \bbN$, and that $\booltrans(S',v)\subseteq S'$.
    
    The latter is simple: if $q\in \booltrans(S',v)$ then $\minweight(v,S'\to q)<\infty$, but then $\minweight(u,S'\to q)<\infty$, so $q\in \booltrans(S',u)\subseteq S'$ (since $(S',u)$ is a stable cycle), so $\booltrans(S',v)\subseteq S'$.
    Before addressing the main claim, we prove a slightly more general property: for every $k\in \bbN$ and $r_0,r_k\in S'$ it holds that $\minweight(u^k,r_0\to r_k)\le \minweight(v^k,r_0\to r_k)$. 
    Indeed, let $\rho:r_0\runsto{v^k}r_k$ such that $\weight(\rho)=\minweight(v^k,r_0\to r_k)$, and write $\rho:r_0\runsto{v}r_1\cdots r_{k-1}\runsto{v}r_k$. By the premise we have
    \[
    \begin{split}
    &\minweight(v^k,r_0\to r_k)=\weight(\rho)=\sum_{i=0}^{k-1}\minweight(v,r_{i}\to r_{i+1})\ge\\
    &\sum_{i=0}^{k-1}\minweight(u,r_{i}\to r_{i+1})\ge \minweight(u^k,r_0\to r_k).
    \end{split}
    \]

    Now, assume by way of contradiction that $\minweight(s \to s, v^k)<0$ for some $k\in \bbN$, then by the property above we have $\minweight(s \to s, u^k)<0$ in contradiction to the fact that $(S',u)$ is a stable cycle.

    We conclude that $(S',v)$ is a stable cycle. 
    Similarly, we can now prove that it is superior to $(S',u)$. Consider $p,q\in S'$. If $\minweight(\alpha_{S',v},p\to q)=\infty$, there is nothing to prove. Thus, assume $\minweight(\alpha_{S',v},p\to q)<\infty$. By \cref{def:stabilization}, we have that $(p,q)\in \GroundPairs(S',v)$ and there exists $g\in S'$ such that 
    $\minweight(\alpha_{S',v},p\to q)=\minweight(v^{2\bigM},p\runsto{v^\bigM}g\runsto{v^\bigM} q)$.
    By the property above, we now have
    \[
    \begin{split}
    &\minweight(\alpha_{S',v},p\to q)=\minweight(v^{2\bigM},p\runsto{v^\bigM}g\runsto{v^\bigM} q)=
    \minweight(v^{\bigM},p\to g)+\minweight(v^{\bigM},g\to q)\ge\\
    &    \minweight(u^{\bigM},p\to g)+\minweight(u^{\bigM},g\to q)=
    \minweight(u^{2\bigM},p\runsto{u^\bigM}g\runsto{u^\bigM} q)\ge
    \minweight(\alpha_{S',u},p\to q)
    \end{split}
    \]
    where the last inequality is by \cref{def:stabilization}, keeping in mind that the grounding state for $p,q$ with $u$ may differ from $g$, but that $g\in \MinRefStates(S',u^\bigM)$ (that is, it is a possible grounding state), so we indeed have the inequality.
    We thus have that $(S',v)$ is superior to $(S',u)$.
\end{proof}
Our next observation is that superiority ``propagates'' through to outer-cactus letters.

\begin{proposition}
\label{prop:superiority propagates to outer cactus}
    Consider a stable cycle $(S',w)$ such that $w=w_1  \alpha_{B,u} w_2$ with $B=\ghostTrans(s_0,w_1)$. Let $(B,v)$ be a superior stable cycle to $(B,u)$, then $(S',w_1\alpha_{B,v}  w_2)$ is a superior stable cycle to $(S',w)$.
\end{proposition}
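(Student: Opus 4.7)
The plan is to reduce this directly to \cref{prop:superior word is a superior cycle} applied to the stable cycle $(S',w)$ and the candidate word $w' := w_1 \alpha_{B,v} w_2$. For this, I need to verify the two hypotheses of that proposition: (i) $\minweight(w, q\to p)\le \minweight(w', q\to p)$ for all $q,p\in S'$, and (ii) $\minweight(w', g\to g)=0$ for the baseline state $g\in S'$. I will also need to note that $w'$ has no rebase/jump letters, which is immediate from the analogous assumption on $w$ and the fact that we only replace one cactus letter by another.

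First I will handle (i). Any finite-weight run on $w'$ from $q$ to $p$ must decompose as $q\runsto{w_1} r \runsto{\alpha_{B,v}} s \runsto{w_2} p$ for some $r,s\in B$, because by \cref{def:stabilization} the cactus letter $\alpha_{B,v}$ only admits finite-weight transitions between states of $B$ (indeed only between grounded pairs of $(B,v)$). By the superiority hypothesis $\minweight(\alpha_{B,u}, r\to s)\le \minweight(\alpha_{B,v}, r\to s)<\infty$, so in particular a finite-weight transition $r\runsto{\alpha_{B,u}}s$ exists whose weight is no greater. Stitching this replacement transition back into the run yields a run on $w=w_1\alpha_{B,u}w_2$ from $q$ to $p$ of weight at most that of the original, giving the inequality.

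Next I turn to (ii). Applying (i) with $q=p=g$ gives $\minweight(w',g\to g)\ge \minweight(w,g\to g)=0$, the latter holding since $(S',w)$ is a stable cycle. For the converse, the baseline run of $w'$ from $g$ to $g$ exists (baseline components are deterministic, and the replacement letter $\alpha_{B,v}$ is itself a cactus letter, so it admits a baseline transition) and each of its transitions has weight $0$ by \cref{rmk:baseline transitions with cactus are zero}. Hence $\minweight(w', g\to g)\le 0$, and we conclude equality.

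With both hypotheses verified, \cref{prop:superior word is a superior cycle} yields at once that $(S',w')$ is a stable cycle and that it is superior to $(S',w)$. The only step requiring any thought is the decomposition in (i): it is important that finite-weight transitions on $\alpha_{B,v}$ force $r,s\in B$ (so that superiority can be invoked), and that superiority also implies finiteness of the corresponding $\alpha_{B,u}$ transition, so that the replacement run actually exists. The rest is a routine bookkeeping argument.
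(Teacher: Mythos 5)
Your proof is correct and takes the same route as the paper: both reduce to \cref{prop:superior word is a superior cycle} applied to $w'=w_1\alpha_{B,v}w_2$, verify condition (i) by decomposing a minimal run on $w'$ through the (necessarily $B$-internal) transition on $\alpha_{B,v}$ and swapping it for the cheaper $\alpha_{B,u}$ transition, and verify condition (ii) via the $0$-weight baseline transition on $\alpha_{B,v}$. Your treatment of (ii) is slightly more explicit than the paper's (you derive the $\ge 0$ direction from (i) rather than appealing to seamlessness), but the substance is identical.
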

\begin{proof}
Write $w'=w_1  \alpha_{B,v} w_2$, we prove that $w'$ satisfies the conditions of \cref{prop:superior word is a superior cycle} with respect to $(S',w)$.
Let $p,q\in S'$. If $\minweight(w_1\alpha_{B,v}w_2,p\to q)=\infty$, there is nothing to prove. Otherwise, let $r,t\in B$ such that 
\[\minweight(w_1\alpha_{B,v}w_2)=\minweight(w_1\alpha_{B,v}w_2,p\runsto{w_1}r\runsto{\alpha_{B,v}}t\runsto{w_2}q)\]
Since $(B,v)$ is superior to $(B,u)$, we now have
\[
\begin{split}
&\minweight(w_1\alpha_{B,v}w_2)=\minweight(w_1\alpha_{B,v}w_2,p\runsto{w_1}r\runsto{\alpha_{B,v}}t\runsto{w_2}q)\\
&=\minweight(w_1,p\to r)+\minweight(\alpha_{B,v},r\to t)+\minweight(w_2,t\to q)\\
&\ge \minweight(w_1,p\to r)+\minweight(\alpha_{B,u},r\to t)+\minweight(w_2,t\to q)\\
& = \minweight(w_1\alpha_{B,v}w_2,p\runsto{w_1}r\runsto{\alpha_{B,u}}t\runsto{w_2}q) \ge \minweight(w_1\alpha_{B,u}w_2)
\end{split}
\]
satisfying the first item in \cref{prop:superior word is a superior cycle}.

Finally, since $\alpha_{B,v}$ is a stable cycle, and since $w_1\alpha_{B,u}w_2$ has a seamless baseline run with weight $0$, then so does $w_1\alpha_{B,v}w_2$ (since $\alpha_{B,v}$ also has a $0$ baseline transition).
We conclude that $w'=w_1\alpha_{B,v}w_2$ satisfies the conditions in \cref{prop:superior word is a superior cycle}, and therefore $(S',w')$ is a superior stable cycle to $(S',w)$.
\end{proof}

Next, we show that the charge \emph{decreases} with superiority. Intuitively, this is because the charge measures how negative runs can be come, and superiority increases all the runs.

\begin{proposition}
    \label{prop:charge decreases superior}
    Consider stable cycles $(B,u),(B,v)$ such that $(B,v)$ is superior to $(B,u)$. Then for every $x,y$ we have that $\charge(x\alpha_{B,v}y)\le \charge(x\alpha_{B,u}y)$.
\end{proposition}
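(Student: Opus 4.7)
The plan is to show directly that the entire configuration reached by $x\alpha_{B,v}y$ is component-wise superior to that reached by $x\alpha_{B,u}y$, from which the charge inequality follows immediately from the definition of $\charge$ as the negation of the minimal coordinate.

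First I would verify that both expressions admit a well-defined charge. By \cref{rmk:baseline transitions with cactus are zero}, both $\alpha_{B,u}$ and $\alpha_{B,v}$ have weight-$0$ baseline transitions, and since the baseline component of states is deterministic, the baseline runs on $x\alpha_{B,u}y$ and on $x\alpha_{B,v}y$ traverse the same baseline states and accumulate weight $0$ in each prefix. Because superiority (\cref{def:superior stable cycle}) only increases transition weights from $\alpha_{B,u}$ to $\alpha_{B,v}$, if $x\alpha_{B,u}y$ has a seamless baseline run then so does $x\alpha_{B,v}y$: the baseline weight ($0$) remains a lower bound on the minimal weight to each baseline state, since any competing run under $v$ has weight $\ge$ the corresponding run under $u$.

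Next, for an arbitrary prefix $y'$ of $y$ and an arbitrary state $t \in S$, I would compare $\minweight(x\alpha_{B,v}y', s_0 \to t)$ with $\minweight(x\alpha_{B,u}y', s_0 \to t)$. Any finite-weight run on $x\alpha_{B,v}y'$ decomposes as $\rho: s_0 \runsto{x} p \runsto{\alpha_{B,v}} q \runsto{y'} t$ for some $p,q \in B$. Replacing the single middle transition by the corresponding one on $\alpha_{B,u}$ produces a run $\rho': s_0 \runsto{x} p \runsto{\alpha_{B,u}} q \runsto{y'} t$, which is valid whenever the middle transition on $\alpha_{B,u}$ has finite weight. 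By superiority, $\minweight(\alpha_{B,u}, p \to q) \le \minweight(\alpha_{B,v}, p \to q) < \infty$, so $\rho'$ exists and $\weight(\rho') \le \weight(\rho)$. Minimizing over $\rho$ yields
\[
\minweight(x\alpha_{B,u}y', s_0 \to t) \le \minweight(x\alpha_{B,v}y', s_0 \to t).
\]

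Plugging in $y' = y$, this gives a component-wise inequality $\xconf(\vec{c}_\init, x\alpha_{B,u}y) \le \xconf(\vec{c}_\init, x\alpha_{B,v}y)$, so the minimum coordinate of the right-hand configuration is at least the minimum coordinate of the left-hand one. Negating, $\charge(x\alpha_{B,v}y) \le \charge(x\alpha_{B,u}y)$, as required. I expect no real obstacle here: the statement is essentially an immediate unfolding of \cref{def:superior stable cycle} applied pointwise. The only mild technicality is checking that replacing the $\alpha_{B,v}$-transition by an $\alpha_{B,u}$-transition never turns a finite-weight run into an infinite one, which is handled by the superiority inequality precisely since the right-hand side is finite.
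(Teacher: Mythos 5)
Your proposal is correct and takes essentially the same approach as the paper: decompose a run on $x\alpha_{B,v}y$ at the $\alpha_{B,v}$-transition and replace that transition by the corresponding one on $\alpha_{B,u}$, using the superiority inequality to conclude the weight does not increase. The only cosmetic difference is that you prove the component-wise configuration inequality for every prefix and every target state before specializing, whereas the paper applies the same replacement directly to the single minimal-weight seamless run that attains the charge.
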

\begin{proof}
    Let $q\in S$ such that $\rho:s_0\runsto{x\alpha_{B,v}y}q$ satisfies $\weight(\rho)=-\charge(x\alpha_{B,v}y)$ and $\rho$ is seamless.
    Write $\rho:s_0\runsto{x}r\runsto{\alpha_{B,v}}t\runsto{y}q$, then by superiority and since $\rho$ is seamless, we have
    \[
    \begin{split}
        &-\charge(x\alpha_{B,v}y)=\weight(\rho)=
        \minweight(x,s_0\to r)+\minweight(\alpha_{B,v},r\to t)+\minweight(y,t\to q)\ge\\
        &\minweight(x,s_0\to r)+\minweight(\alpha_{B,u},r\to t)+\minweight(y,t\to q)\ge \minweight(x\alpha_{B,u}y,s_0\to q)\ge -\charge(x\alpha_{B,u}y)
    \end{split}
    \]
    So $\charge(x\alpha_{B,v}y)\le \charge(x\alpha_{B,u}y)$
    and we are done.
\end{proof}
Our next major goal is to show a dual result for the potential (namely that the potential increases with superiority), and moreover -- that this carries through cactus chains (\cref{def:cactus chain}).
Specifically, we do not prove this claim in full generality. Rather, we focus on cactus chains that end in an increasing infix (\cref{def:separated increasing infix}), and show how they induce superior cactus chains with increasing potential.

Recall that separated increasing infixes (\cref{def:separated increasing infix}) can be ``folded'' into stable cycles, as per \cref{lem:increasing infix budding}. 
We make use of this fact in order to ``bud'' a new cactus from an increasing infix, while maintaining nice properties, as follows (see depiction in \cref{fig:inc inf budding potential}).
\begin{figure}[ht]
    \centering
    \includegraphics[width=0.75\linewidth]{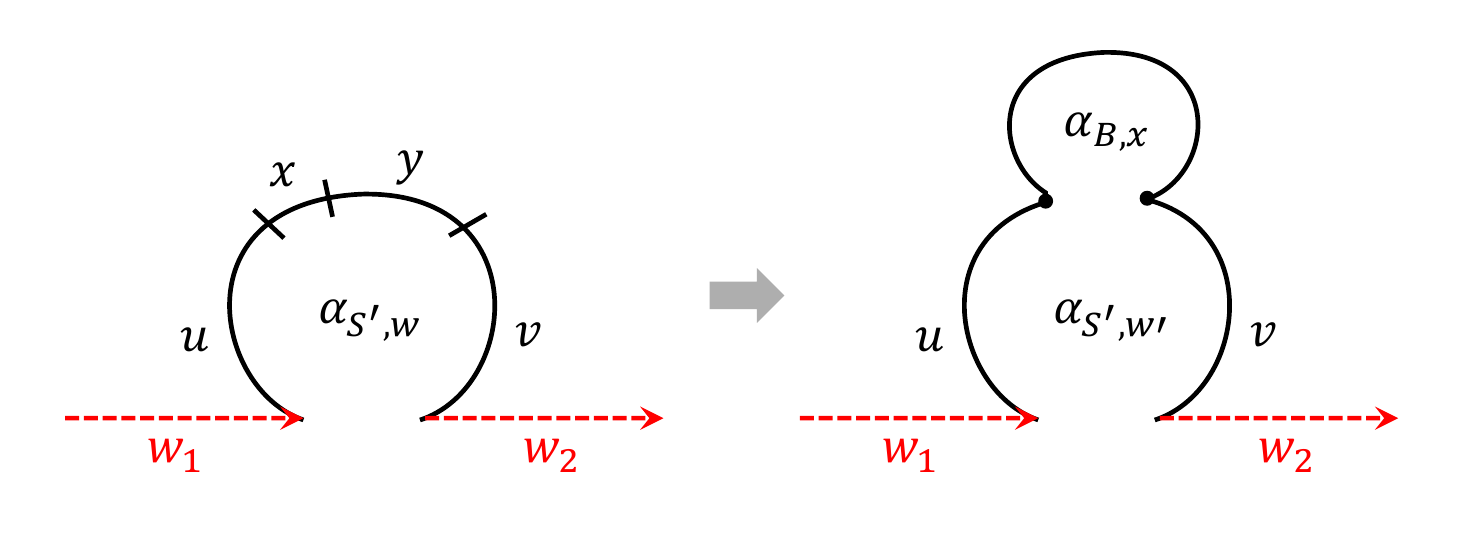}
    \caption{Budding a cactus from an increasing infix. Observe that $y$ is depicted as longer than $x$, to reflect its higher cost.}
    \label{fig:inc inf budding potential}
\end{figure}
\begin{lemma}[\lightbulbicon Increasing Infix Budding]
    \label{lem:increasing infix budding in chain base case}
    Either $\augA_{\infty}^\infty$ has a type-0 witness, or the following holds.
    Consider a cactus letter $\alpha_{S',w}\in (\Gamma_\infty^0)^*$ where $w=uxyz$ is a separated increasing infix from $S'$. Let $w'=u\alpha_{B,x}v$ where $B=\ghostTrans(S',u)$. Then $(S',w')$ is a stable cycle superior to $(S',w)$, and for every $w_1,w_2\in (\Gamma_\infty^0)^*$ we have $\pot(w_1\alpha_{S',w}w_2)\le \pot(w_1\alpha_{S',w'}w_2)$, if the potential is defined.
\end{lemma}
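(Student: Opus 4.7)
The lemma splits into two claims: that $(S', w')$ is a stable cycle superior to $(S', w)$, and that reading $\alpha_{S',w'}$ yields at least as high a potential as reading $\alpha_{S',w}$ in any surrounding context.

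For the first claim, the plan is to invoke \cref{lem:increasing infix budding} on the increasing infix $w = uxyz$ from $S'$. Either this produces a type-$0$ witness (and we are done), or it yields the key weight inequality $\minweight(w, r \to t) \le \minweight(w', r \to t)$ for every $r \in S'$ and $t \in S$. I would then verify the hypotheses of \cref{prop:superior word is a superior cycle}: its min-weight condition is the specialization of the above to $p, q \in S'$, while the baseline condition $\minweight(w', g \to g) = 0$ for the baseline state $g$ follows from Item 4 of \cref{def:separated increasing infix from state} (which gives a seamless weight-$0$ baseline run on $w$ from $g$ to $g$) combined with \cref{rmk:baseline transitions with cactus are zero} (the baseline transition on $\alpha_{B,x}$ inside $w'$ has weight $0$). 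Invoking \cref{prop:superior word is a superior cycle} then yields that $(S', w')$ is a stable cycle superior to $(S', w)$.

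For the potential inequality $\pot(w_1 \alpha_{S',w} w_2) \le \pot(w_1 \alpha_{S',w'} w_2)$, I would shuttle between cactus letters and their unfoldings. Since $(S', w')$ is now known to be a stable cycle, $\alpha_{S',w'}$ is a bona-fide cactus letter, so I can apply \cref{lem:unfolding maintains potential} to both sides; via \cref{rmk:increasing repetitions in unfolding} I would pick a common exponent $k = 2\bigM M_0$ large enough for both unfoldings, obtaining (modulo a possible type-$0$ witness, which finishes the lemma) the equalities $\pot(w_1 \alpha_{S',w} w_2) = \pot(w_1 w^k w_2)$ and $\pot(w_1 \alpha_{S',w'} w_2) = \pot(w_1 w'^k w_2)$. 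It then suffices to prove $\pot(w_1 w^k w_2) \le \pot(w_1 w'^k w_2)$, which I plan to obtain by iteratively applying \cref{lem:increasing infix budding}, swapping one copy of $w$ for $w'$ at a time: after $i$ swaps from the right, the word is $w_1 w^{k-i} w'^{i} w_2$, and the next application should give $\pot(w_1 w^{k-i} w'^{i} w_2) \le \pot(w_1 w^{k-i-1} w'^{i+1} w_2)$, chaining into the desired inequality.

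The hard part will be discharging the hypothesis $\booltrans(s_0, \cdot) \subseteq S'$ required by \cref{lem:increasing infix budding} at each swap, which the prefix $w_1 w^{k-i-1}$ need not satisfy for arbitrary $w_1$. To get around this, I would exploit that transitions on $\alpha_{S',w}$ from outside $S'$ carry weight $\infty$, so $\supp(\xconf(\vec{c_\init}, w_1 \alpha_{S',w})) \subseteq S'$, and hence the potential of $w_1 \alpha_{S',w} w_2$ depends only on the $S'$-restriction of $\xconf(\vec{c_\init}, w_1)$. In the unfolded setting this suggests an absorbed-prefix argument: any run that would have survived $\alpha_{S',w}$ in the folded form must have entered $S'$ by the time a single copy of $w$ has been consumed in $w_1 w^k$, and by the stable-cycle property $\booltrans(S', w) \subseteq S'$ it stays in $S'$ for all subsequent copies. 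Combining this observation with \cref{lem:potential and charge are monotone} to align configurations of matching support where needed should close the chain; the residual contribution from states staying outside $S'$ throughout $w_1 w^k$ evolves identically under $w$ and $w'$ except for the $xy \to \alpha_{B,x}$ swap, which can only send extra runs to $\infty$ and hence can only raise the potential.
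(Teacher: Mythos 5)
Your plan is structurally the same as the paper's: derive the superiority and baseline condition from \cref{lem:increasing infix budding} and \cref{prop:superior word is a superior cycle}, then handle the potential inequality by unfolding both sides (via \cref{lem:unfolding maintains potential} and \cref{rmk:increasing repetitions in unfolding}) and swapping copies of $w$ for $w'$ one at a time, applying \cref{lem:increasing infix budding} at each swap --- the paper calls this ``synchronous unfolding,'' and your right-to-left swap order is an immaterial reindexing of the paper's left-to-right one. The stable-cycle-superiority verification you give is correct.

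Where you go astray is the hypothesis $\booltrans(s_0,w_1 w^{k-i-1})\subseteq S'$: you correctly flag that \cref{lem:increasing infix budding} needs it, but the ``absorbed-prefix argument'' you propose is both unnecessary and not well-founded. You speak of runs from $\booltrans(s_0,w_1)\setminus S'$ ``entering $S'$ by the time a single copy of $w$ has been consumed,'' but no such runs exist. Since $\pot(w_1\alpha_{S',w}w_2)$ is assumed defined, there is a finite-weight run through $\alpha_{S',w}$, so $\booltrans(s_0,w_1)\cap S'\neq\emptyset$. But in $\augA_\infty^\infty$ the second and third components of a state evolve deterministically, so every state in $\booltrans(s_0,w_1)$ shares them with that witness in $S'$; and $S'$ is saturated in those components (being part of a reflexive cycle, $S'=\{(p,q,T)\mid p\in T\}$). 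Hence $\booltrans(s_0,w_1)\subseteq S'$ outright, and since $\booltrans(S',w)\subseteq S'$ by stability, $\booltrans(s_0,w_1w^j)\subseteq S'$ for every $j$ by induction. This is what the paper uses (stated without elaboration), and it removes any need for your support-matching appeal to \cref{lem:potential and charge are monotone}, which the paper's proof does not invoke.
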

\begin{proof}
    The fact that $(S',w')$ is a superior stable cycle follows directly from \cref{lem:increasing infix budding}: we show there that either $\augA_{\infty}^\infty$ has a type-0 witness, or for every $r\in S'$ and $t\in S$ we have $\minweight(w,r\to t)\le \minweight(w',r\to t)$. Moreover, the baseline run exists in $(S',w')$, since $\alpha_{B,x}$ maintains the baseline run that is present in $xy$. 
    Therefore, we can use \cref{prop:superior word is a superior cycle} and we have that $(S',w')$ is a stable cycle superior to $(S',w)$.

    The interesting part of the lemma is obtaining the potential inequality. 
    Intuitively, we first unfold both $w_1\alpha_{S',w}w_2$ and $w_1\alpha_{S',w'}w_2$ with a high enough constant. This leaves us with many repetitions of $w$ and of $w'$, respectively. By \cref{lem:unfolding maintains potential}, we know that the potential is maintained by this unfolding. 
    We then use \cref{lem:increasing infix budding} again, this time deriving the potential inequality $\pot(w'_1ww'_2)\le \pot(w'_1w'w'_2)$ for every $w'_1,w'_2$.
    We then use this to replace each $w$ by $w'$ in the unfolding, while maintaining the potential inequality, which allows us to conclude the proof. We refer to the latter argument as \emph{synchronous unfolding}, and reuse it later.
    We turn to the precise details.

    Consider $w_1,w_2\in (\Gamma_\infty^0)^*$ such  that $\pot(w_1\alpha_{S',w}w_2)$ is defined. Note that this only means that the baseline run is seamless, and this is preserved in all the words we consider henceforth, so we do not mention this every time we discuss the potential.
    
    Let $F\ge \max\{2\maxeff{w_1\alpha_{S',w}w_2},2\maxeff{w_1\alpha_{S',w'}w_2}\}$, and consider $\unfold(w_1,\alpha_{S',w},w_2 \wr F)=w_1w^{2\bigM M_0}w_2$ and $\unfold(w_1,\alpha_{S',w'},w_2 \wr F)=w_1w'^{2\bigM M_0}w_2$. By \cref{rmk:increasing repetitions in unfolding} we can indeed assume that $M_0$ is the same for both unfoldings (by taking the maximum of the two unfolding constants).

    By \cref{lem:unfolding maintains potential}, either $\cA_\infty^\infty$ has a type-0 witness, in which case we are done, or we have 
    \[\pot(w_1\alpha_{S',w}w_2)=\pot(w_1w^{2\bigM M_0}w_2)\quad \text{ and }\quad \pot(w_1\alpha_{S',w'}w_2)=\pot(w_1w'^{2\bigM M_0}w_2)\]

    By \cref{lem:increasing infix budding}, for every $w'_1,w'_2\in (\Gamma_\infty^0)^*$ such that $\booltrans(s_0,w'_1)\subseteq S'$, it holds that $\pot(w'_1ww'_2)\le \pot(w'_1w'w'_2)$. 

    We claim that 
$\pot(w_1w^{2\bigM M_0}w_2)\le \pot(w_1w'^{2\bigM M_0}w_2)$. In order to show this, define for every $0\le j\le 2\bigM M_0$ the word $e_j=w_1 w^j w'^{2\bigM M_0-j}w_2$, then this is equivalent to showing that $\pot(e_{2\bigM M_0})\le \pot(e_0)$. We prove that for every $0< j\le 2\bigM M_0$ it holds that $\pot(e_{j})\le \pot(e_{j-1})$, which implies the claim.

Indeed, consider such $j$ and write $w'_1=w_1 w^{j-1}$ and $w'_2= w'^{2\bigM M_0-j}w_2$.  Note that $e_j=w'_1 w w'_2$ whereas $e_{j-1}=w'_1 w' w'_2$.
Since $(S',w)$ is a stable cycle and $\booltrans(s_0,w_1)\subseteq S'$, then in particular $\booltrans(s_0,w'_1)\subseteq S'$. Therefore, as we mention above, we have
$\pot(w'_1 w w'_2)\le \pot(w'_1w'w'_2)$, i.e., $\pot(e_{j})\le \pot(e_{j-1})$, concluding the claim.
We now conclude the proof with: 
\[\pot(w_1\alpha_{S',w}w_2)=\pot(w_1w^{2\bigM M_0}w_2)\le \pot(w_1w'^{2\bigM M_0}w_2)=\pot(w_1\alpha_{S',w'}w_2)\]
We remark that the latter argument is the \emph{synchronous unfolding} mentioned above.
\end{proof}

The results in this section so far pertain to stable cycles, and can be viewed as if they occur at the ``apex'' of a cactus \emph{chain} (recall \cref{def:cactus chain}).
We now turn to lift these results to the entire cactus chain.

\begin{definition}[Superior Cactus Chain]
    \label{def:superior cactus chain}
    Consider two cactus chains $\Theta_1=\alpha_{S'_1,w_1},...,\alpha_{S'_n,w_n}$ and  $\Theta_2=\beta_{S'_1,w'_1},...,\beta_{S'_n,w'_n}$. 
    We say that $\Theta_2$ is \emph{superior} to $\Theta_1$ if $\beta_{S'_i,w'_i}$ is superior to $\alpha_{S'_i,w_i}$
for every $1\le i\le n$.  
\end{definition}

In case a cactus chain ends with a separated increasing infix, we call it \emph{pre-bud}, with the intuition that we can now bud a new cactus letter at its end. The cactus chain obtained by budding this increasing infix is referred to as a \emph{post-bud} chain, as follows. 
\begin{definition}[Pre-bud and Post-bud Cactus Chain]
    \label{def: pre and post bud chain}
    A cactus chain $\alpha_{S'_1,w_1},...,\alpha_{S'_n,w_n}$ is \emph{pre-bud} if $w_n=uxyv$ is a separated increasing infix from $S'_n$. Denote $w_i=u_i\alpha_{S'_{i+1},w_{i+1}}v_{i}$ for every $1\le i<n$ as per \cref{def:cactus chain}.

    The corresponding \emph{post-bud} cactus chain $\alpha_{S'_1,w'_1},...,\alpha_{S'_n,w'_n}$ is defined as follows:
    \begin{itemize}
        \item $w'_n=u\alpha_{B,x}v$ with $B=\ghostTrans(S'_n,u)$.
        \item For every $1\le i<n$ we define $w'_i=u_i\alpha_{S'_{i+1},w'_{i+1}}v_i$.
    \end{itemize}
\end{definition}
It is easy to see (by induction) that the post-bud cactus chain is indeed a valid cactus chain, as per \cref{def:cactus chain}. The only non trivial part is the base case, which is also immediate by the definition of an increasing infix from $S'$ (\cref{def:separated increasing infix}).

Intuitively, the post-bud cactus chain can be elongated by placing $\alpha_{B,x}$ as a new element, but at this point we do not do that, in order to keep the number of elements in the pre-bud and post-bud the same, so that they can be compared by the superiority relation. 
Indeed, our main result of this section is that the post-bud chain is superior to the pre-bud chain, and also has increased potential.
\begin{lemma}[\keyicon The Post-Bud Chain is Superior]
    \label{lem:post bud chain is superior higher potential}
    Either $\augA_{\infty}^\infty$ has a type-0 witness, or the following holds.
    Let $\Theta_1=\alpha_{S'_1,w_1},...,\alpha_{S'_n,w_n}$ be a pre-bud cactus chain and $\Theta_2=\alpha_{S'_1,w'_1},...,\alpha_{S'_n,u\alpha_{B,x}v}$ its post-bud cactus chain, then $\Theta_2$ is superior to $\Theta_1$ and for every $z_1,z_2$ with $\booltrans(s_0,z_1)\subseteq S'_1$ it holds that 
    $\pot(z_1 \alpha_{S'_1,w_1}z_2)\le \pot(z_1 \alpha_{S'_1,w'_1}z_2)$, if the potential is defined.
\end{lemma}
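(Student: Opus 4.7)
The plan is to proceed by induction on the chain length $n$, using the base case established by Lemma~\ref{lem:increasing infix budding in chain base case} and propagating both superiority and the potential inequality one layer at a time.

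\textbf{Base case ($n=1$).} When the chain has a single element $\alpha_{S'_1,w_1}$ with $w_1=uxyv$ a separated increasing infix from $S'_1$, the post-bud chain is $\alpha_{S'_1, u\alpha_{B,x}v}$, which is precisely the situation of Lemma~\ref{lem:increasing infix budding in chain base case}. That lemma gives us directly both superiority and the potential inequality, or a type-0 witness (in which case we are done).

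\textbf{Inductive step ($n>1$).} Write $w_1 = u_1 \alpha_{S'_2, w_2} v_1$ and $w'_1 = u_1 \alpha_{S'_2, w'_2} v_1$. The sub-chain $\alpha_{S'_2, w_2}, \ldots, \alpha_{S'_n, w_n}$ is a pre-bud chain of length $n-1$, with post-bud $\alpha_{S'_2, w'_2}, \ldots, \alpha_{S'_n, u\alpha_{B,x}v}$. By the induction hypothesis, either $\augA_\infty^\infty$ has a type-0 witness (and we are done), or the inner post-bud chain is superior and satisfies the potential inequality for every $z'_1, z'_2$ with $\booltrans(s_0, z'_1) \subseteq S'_2$. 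Superiority of $\Theta_2$ then follows immediately: since $(S'_2, w'_2)$ is superior to $(S'_2, w_2)$, Proposition~\ref{prop:superiority propagates to outer cactus} gives that $(S'_1, w'_1) = (S'_1, u_1\alpha_{S'_2,w'_2}v_1)$ is superior to $(S'_1, w_1)$; the superiority of the remaining elements $\alpha_{S'_i, w'_i}$ for $i \ge 2$ is the induction hypothesis itself.

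\textbf{Potential inequality via synchronous unfolding.} Let $z_1, z_2$ with $\booltrans(s_0, z_1) \subseteq S'_1$ be such that $\pot(z_1 \alpha_{S'_1, w_1} z_2)$ is defined. Choose $F \ge \max\{2\maxeff{z_1\alpha_{S'_1,w_1}z_2}, 2\maxeff{z_1\alpha_{S'_1,w'_1}z_2}\}$ and pick a single $M_0$ large enough to simultaneously serve as an unfolding constant for both $\alpha_{S'_1,w_1}$ and $\alpha_{S'_1,w'_1}$ (Remark~\ref{rmk:increasing repetitions in unfolding}). By Lemma~\ref{lem:unfolding maintains potential}, either we obtain a type-0 witness (done), or
\[
\pot(z_1 \alpha_{S'_1, w_1} z_2) = \pot(z_1 w_1^{2\bigM M_0} z_2) \quad \text{and} \quad \pot(z_1 \alpha_{S'_1, w'_1} z_2) = \pot(z_1 w_1'^{2\bigM M_0} z_2).
\]
Define the hybrid words $e_j = z_1 \, w_1^{\,j} \, w_1'^{\,2\bigM M_0 - j} \, z_2$ for $0 \le j \le 2\bigM M_0$. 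It suffices to show $\pot(e_{j}) \le \pot(e_{j-1})$ for each $j$, since this telescopes to $\pot(e_{2\bigM M_0}) \le \pot(e_0)$. Fix $j$ and set $z'_1 = z_1 w_1^{j-1} u_1$ and $z'_2 = v_1 w_1'^{\,2\bigM M_0 - j} z_2$, so that $e_j = z'_1 \alpha_{S'_2, w_2} z'_2$ and $e_{j-1} = z'_1 \alpha_{S'_2, w'_2} z'_2$. Because $(S'_1, w_1)$ is a stable cycle and $\booltrans(s_0, z_1) \subseteq S'_1$, we have $\booltrans(s_0, z_1 w_1^{j-1}) \subseteq S'_1$; since $\alpha_{S'_2, w_2}$ is the next letter in $w_1 = u_1\alpha_{S'_2,w_2}v_1$, the transitions on it force $\booltrans(S'_1, u_1) \subseteq S'_2$, so $\booltrans(s_0, z'_1) \subseteq S'_2$. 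We may therefore invoke the induction hypothesis with $z'_1, z'_2$ to conclude $\pot(e_j) \le \pot(e_{j-1})$.

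\textbf{Main obstacle.} The delicate point is the hybrid argument: one must verify that the containment $\booltrans(s_0, z'_1) \subseteq S'_2$ survives every prefix of the hybrid, and that the ``either type-0 witness or \ldots'' alternative is threaded consistently through each invocation of the induction hypothesis and each use of Lemmas~\ref{lem:unfolding maintains potential} and \ref{lem:increasing infix budding in chain base case}. These invocations are finite in number for any fixed $z_1, z_2$, so discharging the alternative only once (for the whole statement) is safe: whenever any intermediate step produces a witness, the conclusion holds vacuously.
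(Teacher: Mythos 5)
Your proof is correct and follows essentially the same route as the paper's: base case from Lemma~\ref{lem:increasing infix budding in chain base case}, superiority propagated outward via Proposition~\ref{prop:superiority propagates to outer cactus}, and the potential inequality by the synchronous-unfolding hybrid argument with Lemma~\ref{lem:unfolding maintains potential}. In fact your treatment of the hybrid step is a touch more careful than the paper's: you correctly place $u_1$ inside $z'_1$ and $v_1$ inside $z'_2$ (so that $e_j = z'_1\alpha_{S'_2,w_2}z'_2$ and $\booltrans(s_0,z'_1)\subseteq S'_2$ is what the induction hypothesis needs), whereas the paper's write-up conflates $w_0$ with its decomposition $u_0\alpha_{S'_1,w_1}v_0$ and states the reachability containment in a slightly inconsistent way. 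Both proofs discharge the ``type-0 witness'' alternative once, globally, as you note; no substantive gap.
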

\begin{proof}
    The proof is by induction over the length of the chain. In a nutshell, the base case is exactly \cref{lem:increasing infix budding in chain base case}, whereas the inductive step follows by a similar ``synchronized unfolding'' argument as in the proof of \cref{lem:increasing infix budding in chain base case}.

    Before starting the induction, we assume that $\augA_{\infty}^\infty$ does not have a type-0 witness. If it does, we are done. Accordingly, whenever we use results that are prefixed by this assumption, we ignore it (c.f., \cref{rmk:the either witness prefix}).

    \paragraph{Induction base: $n=1$} In this case $\Theta_1=\alpha_{S'_1,w_1}$ with $w_1=uxyv$ an increasing infix from $S'_1$. By \cref{lem:increasing infix budding in chain base case} for $w'_1=u\alpha_{B,x}$ we have that $(S'_1,w'_1)$ is a superior stable cycle to $(S'_1,w_1)$ and for every $z_1,z_2\in (\Gamma_\infty^0)^*$ with $\booltrans(s_0,z_1)\subseteq S'_1$ it holds that $\pot(z_1\alpha_{S'_1,w_1}z_2)\le \pot(z_1\alpha_{S'_1,w'_1}z_2)$, if the potential is defined. This concludes the base case.

    \paragraph{Inductive step:} Assume correctness for $n$, we prove for $n+1$.
    Consider the pre-bud chain $\Theta_1=\alpha_{S'_0,w_0},\alpha_{S'_1,w_1},...,\alpha_{S'_n,w_n}$ and its post-bud chain $\Theta_2=\alpha_{S'_0,w'_0}\alpha_{S'_1,w_1},...,\alpha_{S'_n,u\alpha_{B,x}v}$. 

    Observe that \cref{def: pre and post bud chain} is inductive from the end of the chain to its start. In particular, we have that $\Theta'_1=\alpha_{S'_1,w_1},...,\alpha_{S'_n,w_n}$ is a pre-bud chain and its corresponding post-bud chain is conveniently $\Theta'_2=\alpha_{S'_1,w_1},...,\alpha_{S'_n,u\alpha_{B,x}v}$. 
    By the induction hypothesis, it holds that $\Theta'_2$ is superior to $\Theta'_1$. In particular, $(S'_1,w'_1)$ is superior to $(S'_1,w_1)$. By the definition of the post-bud chain (\cref{def: pre and post bud chain}), we have that $w_0=u_0 \alpha_{S'_1,w_1} v_0$ and $w'_0=u_0 \alpha_{S'_1,w'_1} v_0$. 
    By the superiority-propagation of \cref{prop:superiority propagates to outer cactus} it follows that $(S'_0,w'_0)$ is superior to $(S'_0,w_0)$, as required.  

    It remains to show the potential inequality. Let $z_1,z_2\in (\Gamma_\infty^0)^*$ with $\booltrans(s_0,z_1)\subseteq S'_0$. We need to show that $\pot(z_1\alpha_{S'_0,w_0}z_2)\le \pot(z_1\alpha_{S'_0,w'_0}z_2)$.
    We proceed with the synchronous unfolding argument, similarly to \cref{lem:increasing infix budding in chain base case}, with the main difference being that the ``inner'' potential inequality is now obtained by the induction hypothesis.

    Let $F\ge \max\{ 2 \maxeff{z_1\alpha_{S'_0,w_0}z_2},2\maxeff{z_1\alpha_{S'_0,w'_0}z_2}\}$, 
    and consider $\unfold(z_1,\alpha_{S'_0,w_0},z_2 \wr F)=z_1w_0^{2\bigM M_0}z_2$ 
    and $\unfold(z_1,\alpha_{S'_0,w'_0},z_2 \wr F)=z_1 {w'_0}^{2\bigM {M_0}}z_2$. 
    By \cref{rmk:increasing repetitions in unfolding} we can indeed assume that $M_0$ is the same for both unfoldings (by taking the maximum of the two unfolding constants).

    By \cref{lem:unfolding maintains potential} we have 
    \[
    \pot(z_1\alpha_{S'_0,w_0}z_2)=\pot(z_1w_0^{2\bigM M_0}z_2)\quad \text{ and }\quad
    \pot(z_1\alpha_{S'_0,w'_0}z_2)=\pot(z_1{w'_0}^{2\bigM M_0}z_2)
    \]
    It is therefore enough to show that $\pot(z_1w_0^{2\bigM M_0}z_2)\le \pot(z_1{w'_0}^{2\bigM M_0}z_2)$.
    Recall that $w_0=u_0 \alpha_{S'_1,w_1} v_0$ and $w'_0=u_0 \alpha_{S'_1,w'_1} v_0$ (as per \cref{def: pre and post bud chain}). 
    For every $0\le j\le 2\bigM M_0$, define the word $e_j=z_1(u_0w_0v_0)^j(u_0w'_0v_0)^{2\bigM M_0-j}z_2$, then we want to prove $\pot(e_{2\bigM M_0})\le \pot(e_0)$. 
    We prove that for every $0<j\le 2\bigM M_0$ it holds that $\pot(e_j)\le \pot(e_{j-1})$, which implies the claim.

    Indeed, consider such $j$ and write $z'_1=z_1(u_0w_0v_0)^{j-1}$ and $z'_2=(u_0w'_0v_0)^{2\bigM M_0- j}z_2$. 
    Note that $e_j=z'_1w_0 z'_2$ and $e_{j-1}=z'_1w'_0 z'_2$.
    Further note that since $\booltrans(S'_0,u_0)\subseteq S'_1$, and $w_0$ and $w'_0$ are stable cycles on $S'_1$, and $\booltrans(S'_1,v_0)\subseteq S'_0$, and  $\booltrans(s_0,z_1)\subseteq S'_0$ by the assumption, it holds that $\booltrans(s_0,z'_1)\subseteq S'_1$. 
    Thus, by the induction hypothesis (applied with $z'_1$ and $z'_2$), we have that 
    $\pot(z'_1 w_0 z'_2)\le \pot(z'_1 w'_0 z'_2)$. That is, $\pot(e_j)\le \pot(e_{j-1})$, from which we conclude that $\pot(z_1w_0^{2\bigM M_0}z_2)\le \pot(z_1{w'_0}^{2\bigM M_0}z_2)$.
    Thus, we conclude the proof with:
    \[
    \pot(z_1\alpha_{S'_0,w_0}z_2)=\pot(z_1w_0^{2\bigM M_0}z_2)\le 
    \pot(z_1{w'_0}^{2\bigM M_0}z_2)=\pot(z_1\alpha_{S'_0,w'_0}z_2)
    \]    
\end{proof}

\section{On Charge and Unbounded Potential}
\label{sec:discharging and unbounded potential}
In this section we continue our examination of organized structures in runs on arbitrarily long words, akin to \cref{sec:existence of separated increasing infixes} (which we also rely on). 
Our focus now is on the behavior of the charge ($\charge$) and its implication on the potential ($\pot$) (see \cref{def:charge,def:potential}).

The structure of this section resembles that of \cref{sec:existence of separated increasing infixes}: we consider words of the form $w_1w_2w_3w_4$ (with an implicit decomposition as in \cref{rmk:w1w2w3w4 is a decomposition}) and define a new specialization of elongated words sequences, as well as concepts of independent runs, types and dips, and some new characteristics. 
We then prove certain properties of words that have these characteristics. 
The proofs of this section, however, are more involved than those of \cref{sec:existence of separated increasing infixes}, and make use of baseline shifts (\cref{sec: baseline shift}) and flattening (\cref{sec: cactus unfolding}).

In this section we fix a finite alphabet $\Gamma'\subseteq \Gamma^0_\infty$ -- that is, without rebase letters (and also without jump letters). In addition, we require that $\Gamma_0^0\subseteq \Gamma'$ (i.e., $\Gamma'$ contains the original letters of $\augA$).
We discuss these assumption when it becomes relevant.
Let $W_{\max}$ be the maximal weight appearing in a transition over $\Gamma'$.

\subsection{A Condition for Unbounded Potential}
\label{sec: closing runs condition implies unbounded potential}
Our starting point is now an elongated words sequence $\words$ (note that we drop the assumption that $|w_3|=m$ added in \cref{def:exact word sequence}). Unlike \cref{sec:existence of inc inf assuming covered}, in this section we do not focus on each state reachable after $w_1$, but rather on the configurations visited after $w_1w_2$.

Our first definition is a variant of independent runs. 
\begin{definition}[Configuration-Independent Runs]
    \label{def:configuration independent runs}
    Consider a word $w_1w_2w_3w_4$ and two seamless runs $\rho_1,\rho_2$ over it. We say that $\rho_1$ and $\rho_2$ are \emph{configuration-independent on $w_3$} if there exists $G\in \bbN$ such that for every prefix $v$ of $w_3$ we have $\rho_2(w_1w_2v)-\rho_1(w_1w_2v)>G$. 
    In this case we say that $\rho_2$ is \emph{above $\rho_1$ with gap $G$}.

    We say that $\rho_2$ and $\rho_1$ are \emph{not within gap $G$} when the order is not important.

    A set $\{\rho_1,\ldots, \rho_k\}$ of runs is \emph{configuration-independent with gap $G$} if the runs are independent in pairs with gap $G$. A singleton is independent by definition.
\end{definition}
Note that since we now focus on all the runs starting from $\xconf(s_0,w_1w_2)$, it follows that the maximal number of independent runs is at most $|S|$.

We proceed with an analogue of sparse words. Since we now look at general $\words$, we define the gaps between runs as a function of $|w_3|$ instead of a function of $m$. 
In particular, when we apply the definition, we define the gap explicitly as an expression in $|w_3|$ (see e.g., \cref{lem: decomposed seq with cover sparse no ghosts implies unbounded potential} below).
\begin{definition}[$\ell$-Sparse $K$-Gap Words]
\label{def:discharging ell sparse K gap}
For $\ell\in \bbN$ and $K:\bbN\to \bbN$ we say that an elongated word sequence $\words$ is \emph{$\ell$-sparse} with $K$-gap if for every $m\in \bbN$ with $\words(m)=w_1w_2w_3w_4$ there is a set $I$ of independent runs with gap $K(|w_3|)$
such that $|I|=\ell$.
\end{definition}

In the following, we consider \emph{decompositions} of $w_3$. Intuitively, a decomposition of $w_3$ is simply a concatenation $w_3=u_1u_2\cdots u_k$ for some $k\in \bbN$, such that the number of segments increases with $m$. We later impose further conditions on these decompositions, but for now they only serve as parts of definitions.
\begin{definition}[Decomposed Words Sequence]
    \label{def:decomposed words sequence}
    A sequence $\words$ is \emph{decomposed} if for every $m\in \bbN$ with $\words(m)=w_1w_2w_3w_4$ there is an associated decomposition $w_3=u_1u_2\cdots u_{d(m)}$ such that $\lim_{m\to \infty}d(m)=\infty$ and all the $u_i$ are nonempty.
\end{definition}
As usual, we assume that a decomposed word sequence has a fixed associated decomposition function, so we can refer to ``the decomposition of $\words(m)$'' for every $m\in \bbN$.

Our next definitions are variants of a $G$-cover (\cref{def:G cover inc infix}). 
\begin{definition}[Configuration $G$-Cover]
    \label{def: config G cover words}
    Consider an $\ell$-sparse decomposed sequence $\words$. 
    For every $m\in \bbN$ and $\words(m)=w_1w_2w_3w_4$ let $I=\{\rho_1,\ldots, \rho_\ell\}$ be the corresponding $\ell$-sparse independent runs. 
    
    For $G\in \bbN$, we say that $\words$ is a \emph{$G$-cover} if for every prefix $u$ of $w_3$ and configuration $\vec{c}=\xconf(s_0,w_1w_2u)$, for every state $q\in \booltrans(s_0,w_1w_2u)$ we have $\min\{|\vec{c}(q)-\weight(\rho_i(w_1w_2u))|: 1\le i\le \ell\}\le G$. 
\end{definition}
The next cover definition differs from \cref{def: config G cover words} in that we require the cover to hold only at end points of the decomposition segments.
\begin{definition}[Decomposition $G$-Cover]
    \label{def: G cover decomposed words}
    Consider an $\ell$-sparse decomposed sequence $\words$. 
    For every $m\in \bbN$ and $\words(m)=w_1w_2w_3w_4$ with decomposition $w_3=u_1\cdots u_{d(m)}$ let $I=\{\rho_1,\ldots, \rho_\ell\}$ be the corresponding $\ell$-sparse independent runs. 
    
    For $G\in \bbN$, we say that $\words$ is a \emph{decomposition $G$-cover} if for every $0\le j\le k$ and configuration $\vec{c_{j}}=\xconf(s_0,w_1w_2u_1\cdots u_j)$, for every state $q\in \booltrans(s_0,w_1w_2u_1\cdots u_j)$ we have $\min\{|\vec{c_j}(q)-\weight(\rho_i(w_1w_2u_1\cdots u_j))|: 1\le i\le \ell\}\le G$. 
\end{definition}

As mentioned above, our definitions here focus on reachable states. Nonetheless, the presence of ghost runs still introduces complications. The following definition (when applicable) enables us to mitigate the effect of these runs.
\begin{definition}[Ghost-Free Decomposition]
\label{def:ghost free decomposition}
 Consider a decomposed sequence $\words$.
 For every $m\in \bbN$ and $\words(m)=w_1w_2w_3w_4$ with decomposition $w_3=u_1\cdots u_{d(m)}$ we say that $\words$ is \emph{ghost-free} if there are no ghost runs from $w_1w_2u_1\cdots u_{i-1}$ to $w_1w_2u_1\cdots u_i$ for every $1\le i< d(m)$.
 (note that there is no requirement on the last segment, which can be thought of as a ``remainder'').
\end{definition}
Note the this requirement is not only that there are no ghost runs spanning the entire $w_3$, but that even when focusing on a single segment in the decomposition, there is no seamless ghost run on it. In particular, a ghost-free  decomposition has no ghost runs on any concatenation $u_i\cdots u_{j}$ for any $i\le j$ (this follows from the fact that an infix of a ghost run is also a ghost run, as noted after \cref{def:ghost run}).
Still, there could be ghost runs that start e.g., in the middle of segment $u_i$ and are cut in the middle of $u_{i+1}$.

Our next requirement of the decomposition is more involved, and assumes there exists a run whose gap from the minimal states gets smaller. 
\begin{definition}[\keyicon Decreasing-Gap Fair Decomposition]
    \label{def:decreasing gap fair decomposition}
    Consider a decomposed word sequence $\words$ and assume there exists a function $b:\bbN\to \bbN$ with $\lim_{n\to \infty}b(n)=\infty$ such that for every $m\in \bbN$ with $\words(m)=w_1w_2w_3w_4$, there is some seamless run $\rho$ on $w_1w_2w_3$ on such that   $\weight(\rho(w_1w_2))+\charge(w_1w_2)>\weight(\rho(w_1w_2w_3))+\charge(w_1w_2w_3)+b(m)$.

    We say that the decomposition is \emph{decreasing-gap fair} (with respect to $\rho$ and $b(n)$) if there is a function $\sqrtb:\bbN\to \bbN$ with $\lim_{n\to \infty}\sqrtb(n)=\infty$ such that the decomposition $w_3=u_1\cdots u_{d(m)}$ satisfies for every $1\le i<d(m)$ that
    \[ \weight(\rho(w_1w_2u_1\cdots u_{i-1}))+\charge(w_1w_2u_1\cdots u_{i-1})>\weight(\rho(w_1w_2u_1\cdots u_{i}))+\charge(w_1w_2u_1\cdots u_{i})+\sqrtb(m)
    \]
    (note that there is no requirement on the last segment, which can be thought of as a ``remainder'').
\end{definition}
Intuitively, a decreasing-gap fair decomposition guarantees that not only the number of segments increases, but also the gap between the minimal state and $\rho$ decreases significantly with every segment (except possibly the last segment). We remark that the expressions ``$+\charge$'' should be read as ``difference from the minimal weight'', since the charge is the inverse of the minimal weight. This idea is depicted in~\cref{fig:fair decomp}

\begin{figure}[ht]
    \centering
    \includegraphics[width=0.75\linewidth]{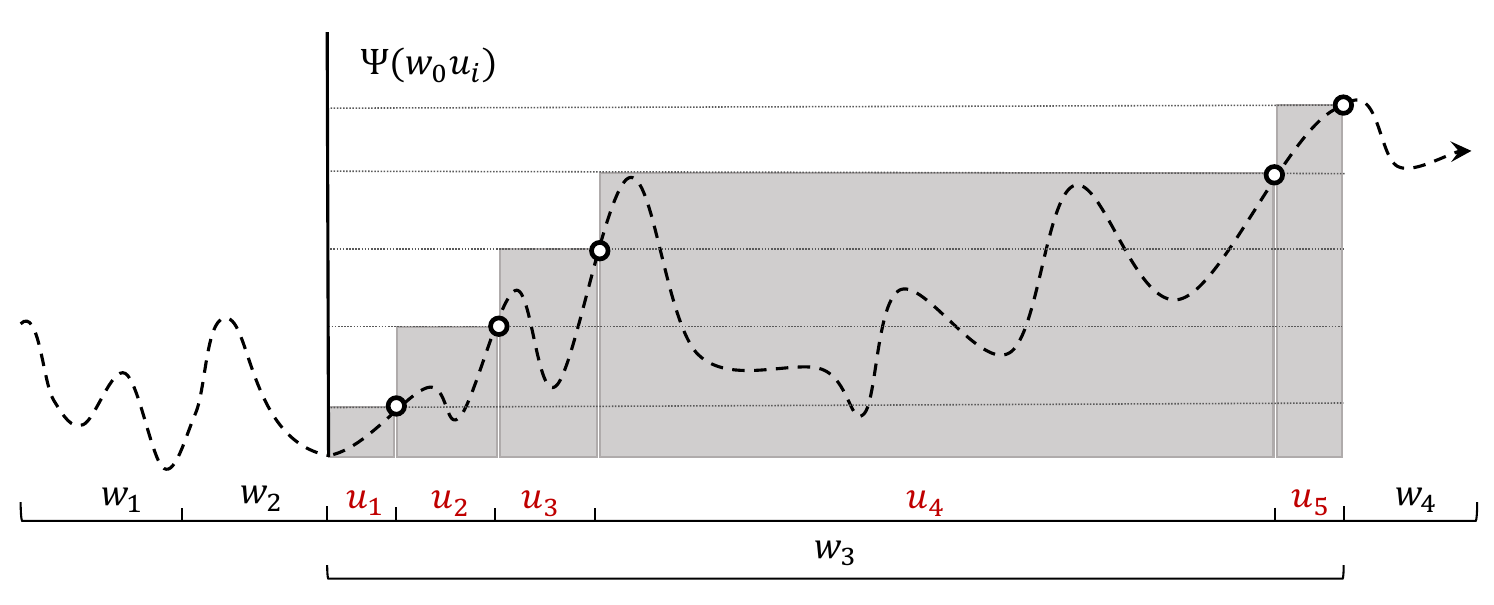}
    \caption{For visual simplicity, we assume $\rho$ is constant $0$. The overall charge in every segment decreases (with respect to $\rho$), where the minimal charge for every segment is at its end. The segments are marked by the first time the charge crosses the threshold.}
    \label{fig:fair decomp}
\end{figure}

In this section we also make use of the $\difftype$, always starting from $s_0$. 
We remind that for a word $x$ we have $\difftype_{s_0,G}(x)=\tup{\supp(\vec{c_x}),\le_x,\near_{s_0,G}(x)}$ where $\le_x$ is the relative order of the states according to $\vec{c_x}$ and $\near_{s_0,G}$ specifies which states are assigned values within gap $G$ of each other by $\vec{c_x}$. Recall that the (finite) set of all $\difftype$ is denoted $\difftypeset$.

\paragraph{Main result of this section -- overview}
Consider a decomposed sequence $\words$ that satisfies enough of the ``nice'' properties above (i.e., it is sparse with a large gap, has a cover, is ghost-free, and has a decreasing-gap fair decomposition). 
Our main result in this section is that under this assumption, the potential ($\pot$, see \cref{def:potential}) over $\Gamma_0^0$ is unbounded. 

A high-level overview of this result is given in \cref{sec:abs:from decomposition to result horror}, and is depicted in \cref{fig:proofHorror}.
We now give a slightly more technical overview, followed by the complete proof. 

Recall that for a word $w$ with a seamless baseline run, the \emph{potential} $\pot(w)$ is the gap between the maximal dominant state and the baseline, which in turn means that there is some suffix $z$ (over $\Gamma_\infty^\infty$) such that reading $z$ from the dominant state yields finite weight, but all states lower than the dominant state (after reading $w$) do not yield finite-weight run on $z$.

Under the assumptions on $\words$, consider $\words(m)=w_1w_2w_3w_4$ for some very large $m$. In particular, we take $m$ large enough so that the decomposition has a repetition of $\difftype$ in some infix $x$ of $w_3$ (this is possible since the number of segments increases with $m$).

By our assumption, the minimal run $\rho_{\min}$ gets close to some seamless run $\rho$. For intuition purposes, it's useful to think of the minimal run as increasing and on $\rho$ as decreasing (\cref{fig:abs:proofStep1}).
Thus, there are significantly-decreasing runs, and in particular we find such runs on $x$ (note that for this we need the properties of the fair decomposition).

We then \emph{shift} our perspective so that the most decreasing seamless run $\rhoss$ becomes baseline (see Step 2). We observe that $\rhoss$ is then far above the minimal run, which is now highly increasing. Moreover, there are now no decreasing runs (\cref{fig:abs:proofStep2}).

We then show (Step 3) that under sufficient assumptions, $x$ actually induces a stable cycle $\alpha_{S',x}$ (this is the more technical part of the proof). 
Now, since the minimal run increases so much, it follows that it has no grounded pairs, and therefore attains weight $\infty$ upon reading $\alpha_{S',x}$ (see Step 4\footnote{In \cref{fig:proofHorror} we combine Steps 3 and 4, so there is an offset in the numbering, i.e. Step 5 here is Step 4 in the figure.}, depicted in \cref{fig:abs:proofStep3}). 
Intuitively, we are already close to showing that the potential is unbounded, since we have a large gap between runs, and the lower runs jump to $\infty$. However, the precise definition of potential requires the gap to be from the baseline run, not the minimal run (indeed, if e.g., $\rhoss$ is the baseline and contains the dominant state, the potential would be $0$). 

We therefore want to make the minimal run the baseline. Thus, in Step 5 we \emph{shift} again, this time making the prefix of the minimal run up to $x$ the baseline (\cref{fig:abs:proofStep4}). Then, all that remains (almost) is to construct a suffix that would attain finite value from the shifted $\rhoss$, but not from minimal run. This is achieved by using a \emph{jump letter} and $\alpha_{S',x}$.

A caveat to the above is that it uses the general alphabet $\Gamma'$, whereas we want to show the unboundedness of the potential already over $\Gamma_0^0$. Therefore, already in Step 1 we \emph{flatten} the prefix before $x$. This guarantees that all the letters before $\alpha_{S',x}$ are from $\Gamma_0^0$, a property that is maintained by shifts.

\begin{lemma}[\keyicon \lightbulbicon Decomposed Words to Unbounded Potential]
    \label{lem: decomposed seq with cover sparse no ghosts implies unbounded potential}
    Assume there exists $G,\ell\in \bbN$ and a decomposed word sequence $\words$ that is $\ell$-sparse with gap 
    $4G + (|S|+1)4\bigM W_{\max} |w_3|$ and a ghost-free decomposition $G$-cover.
    \acctodo{ACCOUNTING}
    In addition, assume the decomposition is decreasing-gap fair with respect to a seamless run $\rho$ and function $b:\bbN\to \bbN$.
    Then $\sup\{\pot(w)\mid w\in (\Gamma^0_0)^*\}=\infty$.
\end{lemma}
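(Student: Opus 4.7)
The plan is to realize, in full generality, the five-step outline from \cref{sec:abs:from decomposition to result horror} (illustrated in \cref{fig:proofHorror}) under the stated hypotheses. Fix a target $P \in \bbN$; we aim to produce $w \in (\Gamma_0^0)^*$ with $\pot(w) \ge P$. Take $m$ so large that (i) $b(m), \sqrtb(m)$ exceed $P$ plus crude polynomial corrections in $|S|, \bigM, W_{\max}$; (ii) the decomposition $w_3 = u_1 \cdots u_{d(m)}$ has $d(m) > |\difftypeset_{s_0,G}|^2 \cdot |S|$. Consider $\words(m) = w_1 w_2 w_3 w_4$, the associated independent runs $I = \{\rho_1,\ldots,\rho_\ell\}$ (all with gap $4G + (|S|+1)4\bigM W_{\max}|w_3|$), and the decreasing-gap run $\rho$. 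By the pigeonhole principle applied to $\difftype_{s_0,G}(w_1 w_2 u_1 \cdots u_i)$ across the decomposition indices, I can locate $0 \le i_0 < i_1 \le d(m)$ with $\difftype$-matching endpoints; set $x = u_{i_0+1}\cdots u_{i_1}$, the prefix $\xi = w_1 w_2 u_1 \cdots u_{i_0}$, and the suffix $\zeta$ so that the concatenation is $\xi x \zeta$. The fair-decomposition property guarantees that the charge drops by at least $(i_1 - i_0)\sqrtb(m)$ across $x$ relative to $\rho$, so some seamless run $\rhoss$ on $x$ decreases by an arbitrarily large amount from its start to its end.

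\textbf{Step 1 (Flatten the prefix).} Apply $\flatten(\xi \wr F)$ with $F$ larger than twice the maximum effect of the entire word in play. By \cref{lem:flattening configuration characterization} the configuration reached after $\xi$ is unchanged on already-reachable states (ghosts become reachable but only at weight exceeding $F$, so they are irrelevant to the minimal/baseline/$\rhoss$ runs). Since $\Gamma' \subseteq \Gamma_\infty^0$ has no rebase letters, the flattening produces a word in $(\Gamma_0^0)^*$ (no jump letters are introduced either). Denote the flattened prefix by $\xi_0$; all seamless runs of interest (baseline, minimal, $\rhoss$) lift to seamless runs on $\xi_0 x \zeta$ with identical weights.

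\textbf{Step 2 (Baseline shift to $\rhoss$).} Apply \cref{sec: baseline shift} with the seamless run $\rhoss$ as the new baseline, obtaining $\baseshift{\xi_0 x \zeta}{\rhoss}$. By \cref{cor:baseline shift to seamless run,cor:baseline shift maintains gaps} the shifted $\rhoss$ now has weight identically $0$, while every other run's relative position is preserved, so the run originally responsible for the charge decrease (which after Step 1 is still present) now becomes a strictly increasing seamless run over $x$, rising by at least $(i_1 - i_0)\sqrtb(m)$. The idempotent $\difftype$ over the endpoints of $x$ guarantees $\supp$-invariance and $G$-capped order-preservation across the shift. Crucially, the $\ell$-sparse cover with gap $\gg 4\bigM W_{\max} |w_3|$ ensures that after the shift, no run decreases by more than the small $G$-cover tolerance on any infix of $x$ (a decreasing run would have to leave its $G$-band around some $\rho_i$, forcing the gap from $\rhoss$ to collapse -- impossible given the sparse gap).

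\textbf{Step 3 (Fold $x$ into a cactus).} Let $B = \ghostTrans(s_\xi, \epsilon)$ for the shifted state reached after $\xi_0$ (so $B$ is the ``saturated'' version of $\booltrans(s_\xi, \epsilon)$; ghost-freeness of the decomposition ensures that within $x$ no seamless run visits a ghost). I claim $(B, x)$ is a stable cycle: reflexivity (i.e.\ $\booltrans(B, x) \subseteq B$) follows from the equality of $\difftype$ at the endpoints; the shifted baseline is a $0$-weight self-loop on $x$; and absence of negative cycles on any $x^k$ is derived from the fact that after the baseline shift every seamless run on $x$ is non-decreasing, combined with the sparse-cover argument used analogously to \cref{prop:exists inc inf x stable cycle}. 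Form $\alpha_{B,x}$ and replace the $x$ infix by it. Following \cref{def:stabilization}, all transitions on $\alpha_{B,x}$ are defined only between grounded pairs; the strongly-increasing seamless run on $x$ corresponds to a non-grounded pair (by \cref{lem:pumping grounded pairs}, since its weight grew by $\gg 2\bigM \cost(x)$), so after substitution that source--target pair has weight $\infty$ on $\alpha_{B,x}$.

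\textbf{Step 4 (Shift to minimal run and extract a suffix).} Let $\rho_{\min}$ be the seamless run on $\xi_0$ realizing the minimum weight at the end of $\xi_0$ (i.e.\ witnessing $-\charge$ at that point). Apply a further baseline shift with respect to $\rho_{\min}$ on the prefix $\xi_0$ only, introducing a \emph{jump letter} $\jl$ just before $\alpha_{B,x}$ to reconcile the baseline component (this is precisely the role \cref{sec:jump letters} is designed for, cf.\ \cref{lem:charge no bounded decrease then potential unbounded}, and is also where the ``Step 4'' argument of \cref{sec:abs:from decomposition to result horror} introduces its jump). By \cref{prop:potential over baseline shift}, the potential is preserved (up to a known shift) under this operation. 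After the shift, the minimal state sits at weight $0$ (the new baseline), and the state corresponding to $\rhoss$ sits at height $\ge (i_1 - i_0)\sqrtb(m) \ge P$. Now take as witness suffix the single cactus letter $\alpha_{B,x}$: it has a finite-weight transition from the lifted-$\rhoss$ state (grounded pair, by construction of the stable cycle with $\rhoss$ as baseline inside $x$), but it has $\infty$-weight transition from every state below the lifted-$\rhoss$ state, because (i) the original minimal run is now non-grounded as just argued and (ii) the $G$-cover forces every other low state to sit within $G$ of some $\rho_i$ distinct from $\rhoss$, so those pairs are also non-grounded and sent to $\infty$ by \cref{lem:pumping grounded pairs} provided the sparse gap exceeds $2\bigM \cost(x)W_{\max}$ (this is why the $(|S|+1)4\bigM W_{\max}|w_3|$ term appears in the sparsity hypothesis). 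The composite prefix is in $(\Gamma_0^0)^*$ by Step 1 and the definition of flatten/shift on rebase-free alphabets, and has potential at least $P$ by \cref{def:potential}.

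\textbf{Main obstacle.} The delicate step is Step 3: verifying that $(B, x)$ is genuinely a stable cycle under the shifted baseline, in particular excluding negative reflexive cycles on $x^k$. The argument must combine (a) the endpoint $\difftype$-equality giving $\booltrans$-closure, (b) the strictly-positive growth of $\rhoss$ giving the baseline $0$-self-loop status via the shift, and (c) the sparse-cover gap (this is where the precise $4G + (|S|+1)4\bigM W_{\max}|w_3|$ bound is used) to rule out any negative cycle by a ``cover-crossing'' argument in the spirit of \cref{prop:exists inc inf independent runs no transitions}: any negative reflexive cycle would drag a state across the sparsity gap, contradicting either the cover bound or the equivalence of $\difftype$. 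Once this is established, Steps 4--5 are essentially bookkeeping via the baseline-shift and jump-letter toolbox.
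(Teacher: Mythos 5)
Your plan reproduces the paper's five-step outline (flatten, shift to the sharpest-decreasing run, exhibit a stable cycle, stabilize to a cactus, shift to the minimal run and use a jump letter to build the dominance suffix) and uses the same key lemmas (\cref{lem:flattening configuration characterization}, \cref{cor:baseline shift to seamless run}, \cref{cor:baseline shift maintains gaps}, \cref{lem:pumping grounded pairs}, \cref{prop:potential over baseline shift}, \cref{def:jump letters}). The structure is correct and the bound you ultimately derive, potential $\ge (i_1-i_0)\sqrtb(m)$, is a valid lower bound on the height of the $\rhoss$ state after the second shift --- indeed, the paper remarks that the bound can be strengthened to this form but settles for $\sqrtb(m)$; the paper instead invokes the sparse gap $> m > P$ to separate $V_1$ from higher $V_j$'s, so the two derivations differ slightly in emphasis but both work.

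There are two imprecisions worth flagging, though neither is fatal to the plan. First, in Step~2 you write that the sparse cover ``ensures that after the shift, no run decreases by more than the small $G$-cover tolerance on any infix of $x$.'' The shift to $\rhoss$ only guarantees non-decrease \emph{from the start of $x$ to the end of $x$} (because $\rhoss$ had the sharpest decrease over that interval), not over arbitrary infixes of $x$; and the guarantee comes from the choice of $\rhoss$, not from the sparse cover. The paper never needs infix-monotonicity --- the stable-cycle argument in Step~3 establishes $\vec{c_0}(q)\le\vec{c_x}(q)$ via the $V_j$-partition (runs to $V_i$ must come from $V_i$, and the endpoint configurations are componentwise monotone), and a negative reflexive cycle then contradicts this monotonicity; your ``cover-crossing'' heuristic gestures at the same ingredients but the actual contradiction is configuration superiority, not a state ``crossing a gap.'' Second, in Step~4 you assert that $\alpha_{B,x}$ has $\infty$-weight transitions ``from every state below the lifted-$\rhoss$ state.'' This is stronger than what is proved or needed: the paper shows only that the $V_1$ states (within $2G$ of the minimal run) are non-grounded and die on $\alpha_{S',x}$, and then takes the \emph{minimal surviving} state $s$ as the dominant witness; dominance only requires that everything strictly below $s$ dies, and the sparse gap between $V_1$ and any $V_j$ with $j>1$ guarantees $s$ sits at weight $\ge P$. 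Your conclusion is still reached, so this is an overclaim rather than a gap, but a fleshed-out version would need to retreat to the paper's ``minimal surviving state'' phrasing rather than killing everything below $\rhoss$.
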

\begin{proof}
Let $P\in \bbN$, we construct a word over $\Gamma_0^0$ whose potential is at least $P$. Let $G,\ell\in \bbN$ and $\words$ be as in the premise of the lemma, and $b,\sqrtb:\bbN\to \bbN$ and $\rho$ as per \cref{def:decreasing gap fair decomposition}. 
Let $d:\bbN\to \bbN$ be the decomposition length function, i.e., for $m\in \bbN$ and $\words(m)=w_1w_2w_3w_4$ we have $w_3=u_1\cdots u_{d(m)}$.

Take $m\in \bbN$ large enough so that $m>P$,  $d(m)>|\difftypeset_{s_0,G}|+1$ and $\sqrtb(m)>4G$. Since both $d$ and $\sqrtb$ are increasing, this is possible.
Consider the word $w_1w_2w_3w_4=\words(m)$ and let 
$w_3=u_1\cdots u_{d(m)}$ be its decreasing-gap fair decomposition.

By the pigeonhole principle, there exist $0\le i_1<i_2<d(m)$ such that 
\[\difftype_{s_0,G}(w_1w_2u_1\cdots u_{i_1})=\difftype_{s_0,G}(w_1w_2u_1\cdots u_{i_2})\]
(Observe that we take $d(m)$ large enough so that we can ignore the last segment, which is a remainder).

Denote 
$\widehat{x_0}=w_1w_2u_1\cdots u_{i_1}$ (the reason for the hat is that we modify $\widehat{x_0}$ to obtain a ``simpler'' $x_0$ below), $x=u_{i_1+1}\cdots u_{i_2}$ and $x_1=u_{i_2+1}\cdots u_{d(m)}w_4$. 

\paragraph{Step 1: flatten the prefix before $x$.}
Recall that $\Gamma'\subseteq \Gamma_\infty^0$, i.e., has no rebase letters. It follows that the letters in $w_1w_2w_3w_4$ are either in $\Gamma_0^0$ or are nested cacti. 
Let $F>2\maxeff{\widehat{x_0}x}$, we now define $x_0=\flatten(\widehat{x_0} \wr F)$.
By \cref{def:cactus rebase flattening}, it follows that in order to obtain $x_0$, we only apply the cactus unfolding of \cref{def:cactus extension}. This implies that no jump letters are introduced in $x_0$ (indeed, those are introduced only in the case of rebase removal). Thus, $x_0\in (\Gamma_0^0)^*$. 
In the following we wish to reason about $x_0xx_1$ instead of $w_1w_2w_3w_4$. 
However, the main difference between the words is that due to the flattening, we now have $\booltrans(s_0,x_0)=\ghostTrans(s_0,\widehat{x_0})$. 
Consider the configurations $\vec{c'_0}=\xconf(\vec{c_{\init}},\widehat{x_0})$ and $\vec{c_0}=\xconf(\vec{c_\init},x_0)$, and similarly $\vec{c'_x}=\xconf(\vec{c_{\init}},\widehat{x_0}x)$ and $\vec{c_x}=\xconf(\vec{c_\init},x_0x)$.
By \cref{lem:flattening configuration characterization} for every $s,s'\in \booltrans(s_0,\widehat{x_0})$ we have $\vec{c'_0}(s)-\vec{c'_0}(s')=\vec{c_0}(s)-\vec{c_0}(s')$. Additionally, for $r\in \ghostTrans(s_0,\widehat{x_0})\setminus \booltrans(s_0,\widehat{x_0})$ and $r\in \booltrans(s_0,\widehat{x_0})$ we have 
$\vec{c_0}(r)-\vec{c_0}(s)\ge 2\maxeff{\widehat{x_0}x}$.
\acctodo{ACCOUNTING}
And in particular ghost states attain higher values than states reachable by $\widehat{x_0}$. In the following, we only use the fact that these are higher, and not the concrete gap size.

Let $\rho_{\min}$ 
be the minimal-weight independent run on $w_3$ among the assigned set of independent runs on $w_1w_2w_3w_4$ (recall that by  \cref{def:configuration independent runs} the independence is measured on the $w_3$ infix). In particular, $\rho_{\min}$ is seamless.
Note that since $x$ is a concatenation of segments from the fair decomposition, then the gap between the minimal states and $\rho$ 
decreases significantly upon reading $x$. 
We claim that this implies the gap between $\rho$ and $\rho_{\min}$ also significantly decreases. 
Intuitively, this holds because $\rho_{\min}$ remains close to the minimal state (by the $G$-cover property).

Formally, let $q\in \arg\min\{\vec{c_0}(p)\mid p\in S\}$. By 
\cref{lem:flattening configuration characterization}
we also have $q\in \arg\min\{\vec{c'_0}(p)\mid p\in S\}$. In particular, $\charge(x_0)=-\vec{c_0}(q)$. 
Since we assume that the independent runs form a $G$-cover, then the minimal state $q$ must by within distance $G$ from $\rho_{\min}$. That is, $|\vec{c_0}(q)-\rho_{\min}(x_0)|\le G$. Similarly, $\charge(x_0x)=-\vec{c_x}(q')$ where $q'\in \arg\min\{\vec{c_x}(p)\mid p\in S\}$ and $|\vec{c_x}(q')-\rho_{\min}(x_0x)|\le G$.

Since $\weight(\rho(x_0))+\charge(x_0)>\weight(\rho(x_0x))+\charge(x_0x)+\sqrtb(m)$
we have by the triangle inequality that 
\[\weight(\rho(x_0))-\weight(\rho_{\min}(x_0))>\weight(\rho(x_0x))-\weight(\rho_{\min}(x_0x))+\sqrtb(m)-2G\]
(note that we replace e.g., $\charge(x_0)$ with $-\weight(\rho_{\min}(x_0))$, since the charge is the inverse of the minimal weight).
%
In fact we can increase the bound to $(i_2-i_1)\sqrtb(m)$, but the bound we state suffices.
Rearranging, we get that 
\[
\weight(\rho_{\min}(x_0x))-\weight(\rho_{\min}(x_0))>\weight(\rho(x_0x))-\weight(\rho(x_0))+\sqrtb(m)-2G
\]
i.e., $\rho_{\min}$ is significantly increasing on $x$ relatively to $\rho$.

\paragraph{Step 2: baseline shift to sharpest decrease.}
Define $\rhoss$ to be the seamless run that has the sharpest decrease on $x$. 
Specifically, $\rhoss$ is the seamless run that minimizes $\weight(\eta(x_0x))-\weight(\eta(x_0))$ among all seamless runs $\eta$ on $x_0x$. Note that $\rhoss$ is not $\rho_{\min}$, as at least $\rho$ is certainly decreasing more sharply than $\rho_{\min}$ by the above (in case of several such runs, we choose one arbitrarily).

We now perform a baseline shift (see \cref{sec: baseline shift}) on $\rhoss$. Observe that $\rhoss$ is an entire run on $w_1w_2w_3w_4$, and therefore this baseline shift acts on all runs on $w_1w_2w_3w_4$. 
Since $\rhoss$ has the sharpest decrease on $x$, and since shifts maintain the gaps between runs (\cref{cor:baseline shift maintains gaps}), we have that all the shifted runs are non-decreasing. That is, for every run $\eta$ on $\baseshift{x_0x}{\rhoss}$ it holds that 
\begin{equation}
\label{eq: shifted runs are nonnegative in discharging}
    \weight(\baseshift{\eta}{\rhoss}(x_0x))-\weight(\baseshift{\eta}{\rhoss}(x_0))\ge 0
\end{equation}
Indeed, by \cref{cor:baseline shift to seamless run} we now have that $\baseshift{\rhoss}{\rhoss}$ is the baseline run (and in particular remains at weight $0$). Therefore, by \cref{cor:baseline shift maintains gaps} we have
\[\weight(\eta(x_0x))-\weight(\rhoss(x_0x))=\weight(\baseshift{\eta}{\rhoss}(x_0x))-0\] 
and similarly 
\[\weight(\eta(x_0))-\weight(\rhoss(x_0))=\weight(\baseshift{\eta}{\rhoss}(x_0))-0\] 
By subtracting the equations we get 
\[
\begin{split}
&\weight(\baseshift{\eta}{\rhoss}(x_0x))-\weight(\baseshift{\eta}{\rhoss}(x_0))=\\
&\weight(\eta(x_0x))-\weight(\rhoss(x_0x)) - \weight(\eta(x_0)) + \weight(\rhoss(x_0))=\\
&\weight(\eta(x_0x))- \weight(\eta(x_0))- (\weight(\rhoss(x_0x))  - \weight(\rhoss(x_0)))\ge 0
\end{split}
\]
Where the last inequality is because $\rhoss$ minimizes $(\weight(\eta(x_0x))  - \weight(\eta(x_0)))$ by definition.

To avoid the cumbersome notation of baseline shifts in the following, we wish to reuse the names $x$, $w_1w_2w_3w_4$, and  the runs $\rho_{\min}, \rhoss$ and $\rho$. 
To justify this abuse, we notice that since baseline shifts preserve the run structure (\cref{cor:baseline shift maintains gaps}), then the shifts of the independent runs of $w_1w_2w_3w_4$ retain the properties of being $\ell$-sparse with gap $4G+(|S|+1)|w_3|$ and being a ghost-free $G$-cover.
In particular, we now have that $\rhoss$ is the baseline run (and in particular remains at weight $0$). 

\paragraph{Step 3: $x$ induces a stable cycle.}
Since all the runs on $x$ are nonnegative (\cref{eq: shifted runs are nonnegative in discharging}) and $x$ yields a repetition of configurations, we gain the intuition that $x$ should induce a stable cycle. We show this is indeed the case. The difficult part is to show that there are no negative cycles induced by $x^n$ for any $n\in \bbN$. 
The proof relies on the fact that there are large gaps between the runs, and that the runs are a $G$-cover. Broadly, this is a similar proof to \cref{prop:exists inc inf x stable cycle}.

Let $S'=\ghostTrans(s_0,x_0)$, we claim that $(S',x)$ is a stable cycle. Let $\vec{c_0}=\xconf(s_0,x_0)$ as before and $\vec{c_x}=\xconf(s_0,x_0x)$. 
Since $\difftype_{s_0,G}(\widehat{x_0})=\difftype_{s_0,G}(\widehat{x_0}\cdot x)$, it follows that $S'=\ghostTrans(s_0,x_0 \cdot x)$. Indeed, we have $\booltrans(s_0,x_0)=\supp(\vec{c_0})=\supp(\vec{c_x})=\booltrans(s_0,x_0\cdot x)$, and by \cref{sec:reachable ghost states} this also implies the equivalent ghost states. Thus, we have in particular that $\booltrans(S',x)\subseteq S'$, so $(S',x)$ is a reflexive cycle.

The next step (as per \cref{def:stable cycle}) is to show that for the baseline state $s\in S'$ we have that $s\in\MinRefStates(S',x^n)$ for all $n\in \bbN$, and  that $\minweight(x^n,s\to s)=0$. 
Recall that after the baseline shift, we have that $\rho_\ssearrow$ is a baseline run on $x$ and has weight $0$, in particular $\weight(\rho_\ssearrow)=0$, and therefore $\weight(\rho^n_\ssearrow)=0$, so $x^n$ has a cycle of weight $0$. 

It remains to show that there are no negative cycles in $(S',x^n)$ for any $n$. That is, for every $s'\in S'$, we want to show that $\minweight(x^n,s'\to s')\ge 0$.
We start with an overview of how this is proved. 
Recall that our baseline shift operation ensures that $\rho_\ssearrow$ has weight $0$, and $\rho_\ssearrow$ before the rebase has the strongest decrease, so all seamless runs after rebase are non-negative (on $x$). 
This, however, does not immediately extend to $x^n$. Indeed, it could conceptually be that there are e.g., (non-seamless) negative runs $r_1\runsto{x} r_2$ and $r_2\runsto{x} r_1$, but then their concatenation yields a negative seamless cycle on $x^2$.

We therefore take an approach similar to the proof of \cref{lem:unbounded long dip and G imply separated inc infix}. Specifically, we notice that the equivalence of $\difftype_{s_0,G}$ before and after reading $x$ implies that we can partition these configurations to ordered sub-configurations $V_1,\cdots V_k$ such that all the ``interesting'' transitions are from $V_i$ are to $V_i$. We then show that after reading $x$, the actual values of the elements in the configurations of each $V_i$ does not decrease (due to $\rho_\ssearrow$ being the baseline). This allows us to conclude there are no negative cycles. 
For the ghost-states, we use the ghost-freeness assumption to conclude there cannot be negative cycles on them.
We proceed with the formal details.

We remind of the following notations of Step 1. $\vec{c'_0}=\xconf(s_0,\widehat{x_0})$ and $\vec{c_0}=\xconf(s_0,x_0)$. Also denote $\vec{c'_x}=\xconf(s_0,\widehat{x_0}x)$ and $\vec{c_x}=\xconf(s_0,x_0x)$. 
Strictly speaking, note that in Step 1 these configurations are defined before our baseline shift of Step 2. However, the gaps are maintained by the shift, so this overloading is justified.

Since $\difftype_{s_0,G}(\widehat{x_0})=\difftype_{s_0,G}(\widehat{x_0}x)$, and by the preservation of the reachable set, we have that for every $s\in \booltrans(s_0,\widehat{x_0})=\booltrans(s_0,\widehat{x_0}x)$ that $\vec{c_0}(s)=\minweight(x_0,s_0\to s)$ and $\vec{c_x}(s)=\minweight(x_0x,s_0\to s)$.
Put simply -- the configurations correctly track the weights of all states reachable by $\widehat{x_0}$ (we deal later with the remaining states, namely $S'\setminus\booltrans(s_0,\widehat{x_0})$).

We partition $\booltrans(s_0,\widehat{x_0})$ as $V_1\cup \ldots \cup V_k$ according to $\difftype_{s_0,G}(\widehat{x_0})$, similarly to the proof of \cref{lem:unbounded long dip and G imply separated inc infix}. Specifically, we write $p\equiv q$ if $|\vec{c_0}(p)-\vec{c_0}(q)|\le 2G$. Since $w_1w_2w_3w_4$ is a $G$-cover, this is an equivalence relation. We then define $V_1,\ldots,V_k$ to be the equivalence classes of $\sim$, ordered by $\le_{\vec{c_0}}$, i.e., all the states in $V_1$ are lower than those of $V_2$, etc.
By the equivalence of $\difftype$, we have that the partition induces by $\vec{c_x}$ is the same as that of $\vec{c_0}$.

Still similarly to \cref{lem:unbounded long dip and G imply separated inc infix}, since $w_1w_2w_3w_4$  is $\ell$-sparse with a huge gap, we have that if $p\in V_i$, $q\in V_j$ and there is a run $\eta:p\runsto{x} q$, then $i\ge j$. That is, there are no runs that go ``up'' between the sets (as the gap is to large to be bridged by $x$, so this would ``drag down'' $q$).
Moreover, if $q\in V_j$ there exists some $p\in V_j$ such that $p\runsto{x} q$, i.e., there must exist a predecessor of $q$ in $V_j$ (otherwise all predecessors of $q$ come from higher sets, so $q$ would be higher). Additionally, there must exist such $p\in V_j$ such that $\eta:p\runsto{x} q$ is seamless (indeed, runs from higher sets to $q$ cannot be seamless).

We claim that for every $q\in \booltrans(s_0,\widehat{x_0})$ it holds that $\vec{c_0}(q)\le \vec{c_x}(q)$. Intuitively, $x$ only induces increasing behaviors.
To show this, let $1\le i\le k$ such that $q\in V_i$. Let $q^0_i$ be the minimal state in $V_i$ (i.e., $q^0_i\in \arg\min\{\vec{c_0}(q)\mid q\in V_i\}$).

Let $p^0_i$ be a state in $V_i$ such that there exists a seamless run $\eta:p^0_i\runsto{x}q^0_i$. Recall that our baseline shift ensures there are no seamless negative runs. It follows that $\vec{c_x}(q^0_i)=\vec{c_0}(p^0_i)+\weight(\eta)\ge \vec{c_0}(p^0_i)\ge \vec{c_0}(q^0_i)$, where the last inequality is because we know that $q^0_i$ is the minimal in $V_i$. 
Therefore, the claim holds for $q^0_i$. 
For every other state $q\in V_i$ we have that $q\sim q^0_i$, and therefore $\vec{c_x}(q)-\vec{c_x}(q^0_i)=\vec{c_0}(q)-\vec{c_0}(q^0_i)$, so $\vec{c_x}(q)-\vec{c_0}(q)=\vec{c_x}(q^0_i)-\vec{c_0}(q^0_i)\ge 0$.

Notice that the argument above only assumes that the gaps are larger than $|x|$. Recall, however, that we originally start with gaps larger than $(|S|+1) |w_3|\ge  (|S|+1)|x|$ (this includes the gap from the ghost states, which is even larger). It therefore follows that we can repeat this argument for up to $|S|$ times, and obtain that the weights of each state is increased after reading $x^n$ for any $n\le |S|$.  

Now, assume by way of contradiction that there is a negative cycle $\eta:q\runsto{x^n}q$ such that $q\in \booltrans(s_0,\widehat{x_0})$ and $n\in\bbN$. We first notice that we can assume $n\le |S|$ (otherwise we can find an inner $x^{n'}$-cycle that is either itself a negative cycle, or is positive and we can shorten the original cycle). This, however, implies that $\minweight(x_0,s_0\to q)> \minweight(x_0 x^n,s_0\to q)$, which is a contradiction to our observation above.

It remains to show that there are no negative cycles stemming from ghost-states. 
In fact, we prove that there are no cycles (on $x^n$) on ghost states at all, negative or not.
This follows from ghost-freeness assumed in the lemma, as follows.
Assume by way of contradiction that there exists $g\in \ghostTrans(s_0,\widehat{x_0})\setminus \booltrans(s_0,\widehat{x_0})$ and $\eta:g\runsto{x^n}g$. We decompose $\eta$ as $\eta:g\runsto{x}g'\runsto{x^{n-1}}g$ according to the state reached after $x$.
Since $x$ spans at least one $u_i$ segment by definition, it follows from ghost-freeness (\cref{def:ghost free decomposition}) that $\rho$ is not seamless already before $g'$, i.e., we can write $x=y_1y_2$ such that $\eta:g\runsto{y_1}h\runsto{y_2}g'\runsto{x^{n-1}}g$ and we can assume $h\in \booltrans(s_0,\widehat{x_0}y_1)$. 

There is a slightly delicate point here: it could be that $\eta$ is not seamless but that the run that ``undercuts'' it is also from a ghost-state. However, we can then repeat this argument until eventually reaching a ``minimal'' ghost run, after which the undercutting run indeed uses a state in $\booltrans(s_0,\widehat{x_0}y_1)$.

Since $\eta$ is a run, and in particular has finite weight, it follows that its suffix $\eta':h\runsto{y_2}g'\runsto{x^{n-1}}g$ is also finite weight, and since $h\in \booltrans(s_0,\widehat{x_0}y_1)$, we have that $g\in \booltrans(s_0,\widehat{x_0}x^n)$, but $x$ preserves the set of reachable states (by the equivalence of $\difftype$), so it follows that $g\in \booltrans(s_0,\widehat{x_0})$, in contradiction to the assumption that $g\notin \booltrans(s_0,\widehat{x_0})$.

We conclude that $(S',x)$ is a stable cycle, and so $\alpha_{S',x}$ is a cactus letter.  

\paragraph{Step 4: replace $x$ with $\alpha_{S',x}$.}
We now focus on the word $x_0\alpha_{S',x}$. 
Our goal is to show that the minimal run after $x_0$, upon reading $\alpha_{S',x}$, ``jumps'' to $\infty$. 
Intuitively, due to the 
increase of the minimal run, we can show that the minimal class $V_1$ (of states equivalent to the minimal run) does not contain any grounded pairs of $\alpha_{S',x}$ (see \cref{def:grounded pairs}). Hence, there are no finite-value transitions on $\alpha_{S',x}$ from this set.

Formally, recall from Step 1 that the minimal independent run (on $x$) $\rho_{\min}:s_0\runsto{x_0}s_1\runsto{x}s_2$ satisfies 
\begin{equation}
\label{eq: lower bound on rho min gap on x}
  \weight(\rho_{\min}(x_0x))-\weight(\rho_{\min}(x_0))>\weight(\rho(x_0x))-\weight(\rho(x_0))+\sqrtb(m)-2G>2G  
\end{equation}
Where the last inequality is since after the baseline shift we have that $\rho$ is non-decreasing, so  $\weight(\rho(x_0x))-\weight(\rho(x_0))\ge 0$, and by our choice of $\sqrtb(m)> 4G$.

Let $V_1$ be the minimal equivalence class of states defined in Step 3. Due to the $G$-cover property, we have that $s_1\in V_1$ and for each $s'\in V_1$ we have $|\vec{c_0}(s_1)-\vec{c_0}(s')|\le G$ and similarly $|\vec{c_x}(s_1)-\vec{c_x}(s')|\le G$. Consequently, for every $s',s''\in V_1$ we have by the triangle inequality and \cref{eq: lower bound on rho min gap on x} that
$\minweight(x,s'\to s'')>2G-2G=0$. 

We now claim that for every $s',s''\in V_1$ we have that $(s',s'')\notin \GroundPairs(S',x)$. In order to prove this, we use the contrapositive of \cref{lem:pumping grounded pairs}: it suffices to show that 
$\minweight(x^{4\bigM},s'\to s'')>\minweight(x^{2\bigM},s'\to s'')$, since if $(s',s'')$ were a grounded pair, these weights would be equal.

By our observation above, we \emph{almost} have that $\minweight(x^{k+1},s'\to s'')> \minweight(x^k,s'\to s'')$. Indeed, we already have that any runs on $x$ between states in $V_1$  are strictly increasing. However, a-priori it could be that after enough repetitions of $x$, runs from e.g., $V_2$ would decrease and become the minimal on $x^k$.
However, we now recall that the gap between the independent runs is greater than $(|S|+1) 4\bigM W_{\max} |w_3|>  (|S|+1)4\bigM W_{\max}|x|$. It follows that with each repetition of $x$, the gap between runs from $V_1$ and from $V_2$ (or higher) can become closer by at most $2|x| W_{\max}$. In particular, after $4\bigM$ iterations, the gap is still at least 
\[(|S|+1)4\bigM W_{\max}|x|-8\bigM W_{\max}|x|>0\]
where we assume $|S|\ge 2$ (which holds if the original WFA $\cA$ has at least one state).
and therefore no runs starting at $V_2$ or higher can become minimal. We thus conclude that $(s',s'')\notin \GroundPairs(S',x)$.

It follows (by \cref{def:stabilization}) that $\minweight(\alpha_{S',x},V_1\to S)=\infty$, i.e., all states in $V_1$ ``jump to infinity'' upon $\alpha_{S',x}$.




\paragraph*{Step 5: baseline shift to minimal run}
After completing Step 4, we are now at the following setting: upon reading the prefix $x_0\in (\Gamma_0^0)^*$, the next letter is $\alpha_{S',x}$, and all states in $V_1$ cannot read it, i.e, jump to $\infty$.

Intuitively, this almost shows that the potential of $x_0$ is at least $P$ -- there is a run much higher than $V_1$, and all the states from $V_1$ go to $\infty$. However, our precise definition of potential requires that this happens with the baseline run. We therefore need to perform some baseline shifts. Unfortunately, this baseline shift makes $\alpha_{S',x}$ ``unreadable'' after the shift, as it now has the wrong format. We fix this by introducing jump letters. We proceed with the details.

Consider the seamless run $\mu$ that ends with minimal weight on $x_0$, and perform a baseline shift \emph{only on $x_0$} with respect to $\mu$. Note that $\mu$ ends in $V_1$ (since the minimal state is there).

We now have that $\baseshift{\mu}{\mu}$ is the baseline run on $\baseshift{x_0}{\mu}$, and we denote $\vec{c_1}=\xconf(\vec{c_\init},\baseshift{x_0}{\mu})$.
Recall that $m>P$ (this is chosen before Step 1). 

In the following, we still wish to reason about the $V_j$ partition. However, following the baseline shift, the state have changed. Understanding the change and showing that it is very manageable requires us to dig slightly deeper into the states reached after $x_0$ and after $\baseshift{x_0}{\mu}$.

Specifically, since $x_0$ is obtained by flattening $\widehat{x_0}$, there are no ghost states. That is,
the states reachable after $x_0$ are $S'=\{(p, q,T)\mid p\in T\}$  for some $q,T$ with $q\in T\subseteq Q$ (as per the construction of $\augA$ in \cref{sec:augmented construction}). Note that this is the same $S'$ as in the cactus letter $\alpha_{S',x}$. 

After baseline shift (\cref{prop:baseline shift run bijection}), the states reachable when reading $\baseshift{x_0}{\mu}$ are 
$S''=\{(p,q',T)\mid p\in T\}$ for some $q'\in T$. 

Apart from the change in the baseline component, however, there is no change in the state components or their gaps. We can therefore denote $\baseshift{V_j}{\mu}$ when we wish to refer to the set of states corresponding to $V_j$ after the baseline shift.
We can now proceed with showing the potential is unbounded.

Since the gap between independent runs is much larger than $m$, then in particular all states in $\baseshift{V_j}{\mu}$ with $j>1$ are at least $P$ above any state in $\baseshift{V_1}{\mu}$, after reading $\baseshift{x_0}{\mu}$. 
An important detail here is that any reachable state after $x_0$ that is not in any of the $V_j$, i.e., in $\ghostTrans(s_0,x_0)$, is still reachable, and the minimal weight to it is above all the $V_j$, and in particular these states also have a huge gap from $V_1$.

In order to conclude that $\pot(\baseshift{x_0}{\mu})>P$, it suffices (see \cref{def:potential,def:dominant state}) to find a suffix $z$ such that for some state $s\in S''$ such that 
$\minweight(z,s\to S)<\infty$ and for every $s'\in S''$ with $\vec{c_1}(s')<\vec{c_1}(s)$ we have 
$\minweight(z,s\to S)=\infty$.  
Crucially, the suffix $z$ need not be jump-free or over $\Gamma^0_0$.

Note that $\alpha_{S',x}$ has no transitions of finite weight from states in $S''$ due to the ``wrong'' baseline component $q'$.
Consider therefore the jump letter $\jl_{(\cdot,q',T)\to (\cdot,  q,T)}$ as per \cref{def:jump letters} and the suffix $z=\jl_{(\cdot,q',T)\to (\cdot,  q,T)}\alpha_{S',x}$. 

We first claim that there are no finite-weight runs from any $s'\in \baseshift{V_1}{\mu}$ on $z$. Indeed, for every state $s'\in \baseshift{V_1}{\mu}$, write $s'=(p,q',T)$, then after reading $\jl_{(\cdot,q',T)\to (\cdot,  q,T)}$, the next state is (deterministically) $(p,q,t)$, and the weight of the transition is $0$. However, this is now a state in $V_1$, and therefore has no finite-weight transition on $\alpha_{S',x}$, as per Step 4.

All that remains is to show that there is some reachable state $s$ for which 
$\minweight(z,s\to S)<\infty$.
Indeed, we can then take the minimal such state, and conclude the potential is above $P$.

The immediate candidate for $s$ is the baseline state of $S'$. That is, take $(q,q,T)\in S'$ and consider the corresponding state $(q,q',T)\in S''$, then after reading $\jl_{(\cdot,q',T)\to (\cdot,  q,T)}$ we arrive at $(q,q,T)$ with the same weight, and since $\alpha_{S',x}$ is a cactus letter and its baseline state is $(q,q,T)$, then $\minweight((q,q,T),\alpha_{S',x})=0$. In particular, we have that $(q,q,T)\notin V_1$, so $(q,q',T)\notin \baseshift{V_1}{\mu}$. We conclude that $\minweight((q,q',T),\jl_{(\cdot,q',T)\to (\cdot,  q,T)}\alpha_{S',x})=0<\infty$, and therefore as mentioned above, we conclude that $\pot(\baseshift{x_0}{\mu})>P$, where $\baseshift{x_0}{\mu}\in (\Gamma_0^0)^*$, as required. 
\end{proof}

We conclude this section with one more definition, to be used later on, and some simple corollaries of it. 
It is an analogue of $D$-dip words (\cref{def:dip words}), now requiring that there are no large dips only on an infix of $w_3$, as follows.
\begin{definition}[Infix $D$-dip]
\label{def:infix D dip}
For $D\in \bbN$, we say that $\words$ is \emph{Infix $D$-dip} if for every $n\in \bbN$ there exists some $m\in \bbN$ and $\words(m)=w_1w_2w_3w_4$ such that we can write $w_3=xvy$ with $|v|>n$ and for every $p\in \ghostTrans(s_0,w_1w_2x)$ and run $\rho:p\runsto{x}S$ it holds that $\weight(\rho)>-D$.
\end{definition}
Clearly there is a lot of similarity between Infix $D$-dip and the $D$-dip words of \cref{def:dip words}. As we now show, the former implies the latter. Intuitively, this is simple: just select the words that witness the infix $D$-dip property, and rewrite them as concatenations so that the infix $v$ becomes $w_3'$.

Note that the central result of \cref{sec:separated increasing infix}, namely \cref{cor:dip implies increasing infix}, shows that $D$-dip words implies the existence of a separated increasing infix. These observations are central tools in our final proof (\cref{sec:final nondet implies witness}).

\begin{proposition}
\label{prop:infix D dip implied D dip}
    Let $D\in \bbN$. If $\words$ is Infix $D$-dip, then there exists a faithful restriction (see \cref{def:elongated words sequence}) $\xwords$ of $\words$ such that $\xwords$ is an exact words sequence (\cref{def:exact word sequence}) and is $D$-dip.
\end{proposition}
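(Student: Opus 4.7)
My plan is to construct $\xwords(m)$ from $\words$ by, for each $m$, locating a suitable witness of the infix $D$-dip property and re-grouping the letters of the resulting word so that the $w_3$-component has length exactly $m$ and lies inside the ``dip-free'' middle region.

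More concretely: fix $m\in\bbN$ and instantiate \cref{def:infix D dip} with $n=m$. This yields some $m'\in\bbN$ with $\words(m')=w_1w_2w_3w_4$ and a decomposition $w_3=xvy$ such that $|v|>m$ and every run on any infix of $v$ starting from a (possibly ghost-reachable) state decreases by less than $D$. Write $v=v_0v_1$ where $|v_0|=m$ (possible since $|v|>m$), and set
\[
\xwords(m) \;=\; w_1'\,w_2'\,w_3'\,w_4' \qquad \text{with}\qquad w_1'=w_1,\;\; w_2'=w_2x,\;\; w_3'=v_0,\;\; w_4'=v_1 y w_4.
\]

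I then need to verify three things. First, $\xwords$ is a faithful restriction of $\words$: by construction $w_1'=w_1$ and $w_1'w_2'w_3'w_4'=w_1w_2w_3w_4=\words(m')$ as concatenations, matching \cref{def:elongated words sequence}. In particular the seamless baseline run of $\words(m')$ is inherited, so $\xwords(m)$ is a legitimate elongated word. Second, $\xwords$ is exact: $|w_3'|=|v_0|=m$, as required by \cref{def:exact word sequence}. Third, $\xwords(m)$ is $D$-dip: any decomposition $w_3'=x'y'z'$ expresses $y'$ as an infix of $v$, and the prefix $w_1'w_2'x'=w_1w_2xx'$ satisfies $\ghostTrans(s_0,w_1'w_2'x')=\ghostTrans(s_0,w_1w_2xx')$, so the bound $\weight(\rho)>-D$ for runs $\rho:p\runsto{y'} S$ is exactly the guarantee inherited from the infix $D$-dip property of $v$.

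The only genuine subtlety — and really the main thing to get right — is the transfer of the dip property from ``$v$'' to ``all infixes of $w_3'=v_0$'', which requires that $v_0$ be a \emph{prefix} of $v$ rather than an arbitrary infix, and that the ghost-reachable set after the new prefix $w_1'w_2'x'$ coincide with the ghost-reachable set appearing in the original statement of infix $D$-dip for $v$. Both hold directly from the construction, so no further argument is needed beyond unpacking the definitions.
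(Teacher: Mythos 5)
Your proof is correct and is essentially the paper's own proof: instantiate the infix $D$-dip definition to locate a long dip-free infix $v$ inside $w_3$, take its length-$m$ prefix $v_0$ as the new $w_3'$, push $x$ into $w_2'$ and the remainder into $w_4'$, and observe that the $D$-dip bound for $v_0$ (and all its sub-decompositions) is inherited from the $D$-dip property of $v$ because the ghost-reachable sets after $w_1'w_2'x'$ and $w_1w_2xx'$ coincide by construction. The only cosmetic difference is variable naming (your $m,m'$ versus the paper's $n,m$).
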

\begin{proof}
    We define $\xwords$ as follows. For every $n\in \bbN$, by \cref{def:infix D dip} let $m\in \bbN$ such that $\words(m)=w_1w_2w_3w_4$ and we can write $w_3=xvy$ with $|v|>n$, where for every $p\in \ghostTrans(s_0,w_1w_2x)$ and run $\rho:p\runsto{x}S$ it holds that $\weight(\rho)>-D$.
    Define $\xwords(n)=w'_1w'_2w'_3w'_4$ with $w'_1=w_1$, $w'_2=w_2x$, $w'_3=v[1,n]$ and $w'_4=v[n,|v|]yw_4$. By definition, $\xwords$ is a faithful restriction of $\words$. Since we define $w'_3=v[1,n]$, we have $|w_3|=n$, so $\xwords$ is indeed an exact words sequence. In addition, the property of $v$ is exactly the $D$-dip property of \cref{def:dip words} (and in particular holds for any prefix of $v$, namely $w'_3$), so we are done.
\end{proof}
Combining \cref{prop:infix D dip implied D dip} with \cref{cor:dip implies increasing infix}, we have the following.
\begin{corollary}
    \label{cor:infix D dip implies increasing infix}
    Let $D\in \bbN$. If $\words$ is Infix $D$-dip, then there exists $m\in \bbN$ with $\words(m)=w_1w_2w_3w_4$ and a decomposition $w_3=u'xyv'$ such that $uxyv$ is a separated increasing infix from $\ghostTrans(s_0,w_1)$ for $u=w_2u'$ and $v=v'w_4$.
\end{corollary}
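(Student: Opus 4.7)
The plan is simply to chain the two immediately preceding results. First I would invoke \cref{prop:infix D dip implied D dip} on the assumption that $\words$ is Infix $D$-dip, obtaining a faithful restriction $\xwords$ of $\words$ which is an exact words sequence and is $D$-dip (in the sense of \cref{def:dip words}). Next I would apply \cref{cor:dip implies increasing infix} to $\xwords$: it yields some $m\in\bbN$ with $\xwords(m)=w'_1w'_2w'_3w'_4$ and a decomposition $w'_3=u''xyv''$ such that $(w'_2 u'')\cdot x\cdot y\cdot (v''w'_4)$ is a separated increasing infix from $\ghostTrans(s_0,w'_1)$.

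The only substantive work is to translate this decomposition back into one of some $\words(t)$, and this is a bookkeeping step rather than a conceptual one. By \cref{def:elongated words sequence}, since $\xwords$ is a faithful restriction of $\words$, there exists $t\in\bbN$ with $\words(t)=w_1w_2w_3w_4$ such that $w_1=w'_1$ and $w_1w_2w_3w_4 = w'_1w'_2w'_3w'_4$ as concatenations. In particular $\ghostTrans(s_0,w_1)=\ghostTrans(s_0,w'_1)$, which is already the correct starting set for the conclusion.

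It remains to argue that the selected infix $xy$ sits inside $w_3$, so that the decomposition $w_3 = u'xyv'$ (with $u = w_2 u'$ and $v = v'w_4$) really matches the statement. This is where I would appeal to the \emph{explicit} construction in the proof of \cref{prop:infix D dip implied D dip}: there, $w'_2 = w_2\cdot \xi$ and $w'_4 = \eta\cdot w_4$ where $\xi\eta$ is the portion of $w_3$ that is absorbed into the new prefix/suffix, and $w'_3$ is an infix of $w_3$. Consequently $u''xyv''=w'_3$ lies inside $w_3$, and setting $u' = \xi\, u''$ and $v' = v''\,\eta$ gives $w_3 = u'xyv'$, with $u = w_2 u' = w'_2 u''$ and $v = v'w_4 = v''w'_4$ as required. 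The separated-increasing-infix property then transfers verbatim from $(w'_2u'')\cdot x\cdot y\cdot (v''w'_4)$ to $u\cdot x\cdot y\cdot v$, since these are literally the same concatenated word.

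The main (and really only) obstacle is the notational alignment of the two decomposition schemes; there are no new automata-theoretic ideas needed here, and no additional hypotheses on the alphabet or on runs must be verified beyond what \cref{prop:infix D dip implied D dip} and \cref{cor:dip implies increasing infix} already guarantee.
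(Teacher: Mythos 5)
Your proposal is correct and mirrors the paper's own justification, which is the one-line remark ``Combining \cref{prop:infix D dip implied D dip} with \cref{cor:dip implies increasing infix}, we have the following.'' You have also rightly noted that the generic notion of faithful restriction alone does not pin $xy$ inside $w_3$, and that one must appeal to the explicit decomposition built in the proof of \cref{prop:infix D dip implied D dip} (where $w'_3$ is literally an infix of $w_3$ and the absorbed prefix/suffix go into $w'_2$ and $w'_4$) to recover $w_3=u'xyv'$.
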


\subsection{Leveled Words and Discharging Words}
\label{sec:leveled and discharging words}
In this section we focus on two types of $\words$ sequences. The first, dubbed \emph{leveled words}, models the case where the charge remains bounded. 
It's dual, dubbed \emph{discharging words}, models the case where the charge significantly decreases along $w_3$ (i.e., the minimal run is increasing).

We show that the existence of either type of words sequence (under very mild assumptions) implies that either the potential is unbounded, or the sequence has the $D$-dip property. These properties are central to our reasoning in the final proof (\cref{sec:final nondet implies witness}).

\subsubsection{Leveled Words Sequences}
\label{sec:leveled words sequences}

An elongated words sequence $\words$ is leveled if, intuitively, the charge along the $w_3$ infix remains in some constant width ``band''.
\begin{definition}[$\kappa$-Leveled Words Sequence]
    \label{def:leveled words sequence}
    For $\kappa\in \bbN$, a function from $\bbN$ to $\Gamma'^*$ is a \emph{$\kappa$-leveled words sequence}, denoted $\levwords:\bbN\to \Gamma'^*$, if it is an elongated words sequence, and for every $m\in \bbN$ with $\levwords(m)=w_1w_2w_3w_4$ and prefix $u$ of $w_3$ it holds that 
    $|\charge(w_1w_2)-\charge(w_1w_2u)|\le \kappa$.
\end{definition}
That is, in a $\kappa$-leveled word sequence, the charge on the $w_3$ infix remains within a band of ``width'' $\kappa$ around the charge at $w_2$. Note that in particular, the charge at $w_1w_2w_3$ is also within this band. Also note that this is in particular an elongated words sequence, so the length of $w_3$ increases. The behavior of leveled words is depicted in \cref{fig:leveled}. 

\begin{figure}[ht]
    \centering
    \includegraphics[width=0.9\linewidth]{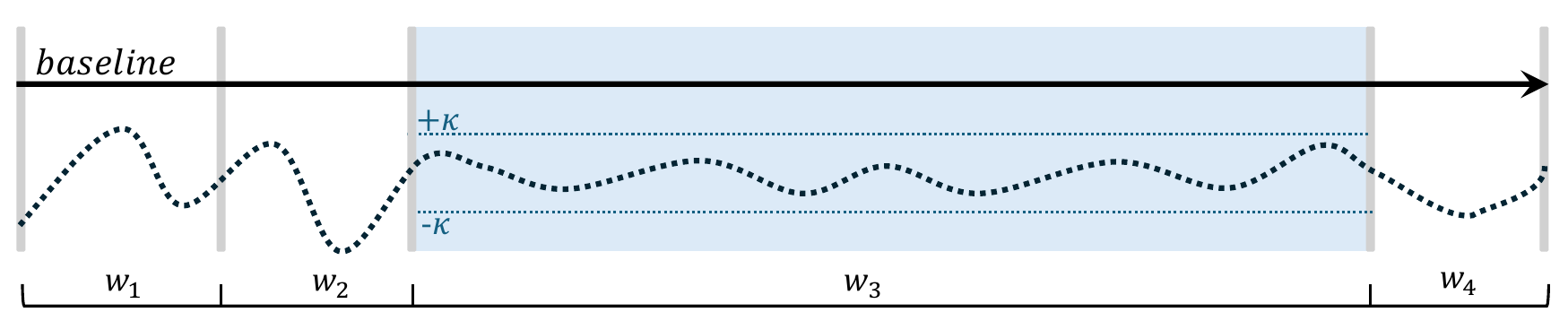}
    \caption{A leveled word $w_1w_2w_3w_4$. The charge upon reading the infix $w_3$ stayes bounded within $\kappa$ of its starting point.}
    \label{fig:leveled}
\end{figure}

Our main technical result in this section that the existence of leveled words with large gap between independent runs implies that either the potential is unbounded, or the sequence has the infix $D$-dip property (\cref{def:infix D dip}).
This result is proved by induction on the number of independent runs. Before proving the claim, we present a subcase of it, used in the induction.

\begin{proposition}
    \label{prop:leveled with large gaps no ghost G cover implies unbounded potential or D dip}
    Let $\ell,G\in \bbN$ and consider a leveled words sequence $\levwords$ that is $\ell$-sparse with gap $4G+(|S|+1)4\bigM W_{\max}|w_3|$ and is a configuration $G$-cover. Further assume there exists $l\in \bbN$ such that for every $m$ with $\levwords(m)=w_1w_2w_3w_4$, all ghost runs on $w_3$ are of length at most $l$.
    Then either $\sup\{\pot(w)\mid w\in (\Gamma_0^0)^*\}=\infty$, or $\levwords$ has the Infix $D$-dip property for some $D\in \bbN$.
\end{proposition}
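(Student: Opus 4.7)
The plan is to prove the contrapositive: assuming the Infix $D$-dip property fails for every $D \in \bbN$, derive $\sup\{\pot(w) \mid w \in (\Gamma_0^0)^*\} = \infty$ via \cref{lem: decomposed seq with cover sparse no ghosts implies unbounded potential}. To do so I will construct, for arbitrarily large $m$, a decomposition of $w_3$ in $\levwords(m) = w_1 w_2 w_3 w_4$ that is ghost-free and decreasing-gap fair with respect to some seamless run, after which the cited lemma closes the argument.

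From the failure of Infix $D$-dip for every $D$, for each $D$ there is $n_D$ such that every infix of length $> n_D$ of every $w_3$ contains a run $\rho: p \runsto{v} S$ of weight $\le -D$ originating from some $p \in \ghostTrans(s_0, w_1 w_2 x)$. The bounded-ghost-run hypothesis forces any such run to decompose into a ghost portion of length at most $l$ (carrying at most $l \cdot W_{\max}$ in weight) and a reachable-state portion that carries the bulk of the decrease. In the leveled setting, the charge is confined to a band of width $\kappa$ around $\charge(w_1 w_2)$, so the minimum-weight configuration is nearly stationary. Hence a large-magnitude negative run ending at a reachable state must \emph{originate} at a state whose minimal-weight seamless run sits correspondingly far above the running minimum, since otherwise the charge would spike, contradicting the $\kappa$-leveled property. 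This identifies a seamless run whose gap above the minimum shrinks substantially across the infix -- precisely the shape required for a fair decomposition.

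Next, apply Ramsey-style uniformization (\cref{thm: our ramsey}) to the finite $\difftype$ and $\gaintype$ colorings of infixes of $w_3$, passing to a faithful restriction of $\levwords$. This lets us select, uniformly in $m$, a single seamless run $\rho$ from among the $\ell$ independent runs of the sparse $G$-cover that exhibits the decreasing behavior. Define $u_1, \ldots, u_{d(m)}$ greedily by advancing along $w_3$ and marking each first position at which $\rho$'s gap above the minimum (i.e.\ $\weight(\rho(w_1 w_2 u_1 \cdots u_i)) + \charge(w_1 w_2 u_1 \cdots u_i)$) has dropped by a threshold $\sqrtb(m) \to \infty$; the total drop is $\Theta(d(m) \sqrtb(m))$, which is the fair-decomposition function $b(m)$. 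Forcing segment lengths to exceed $l$ guarantees ghost-freeness (\cref{def:ghost free decomposition}), and choosing $\sqrtb(m) > 4G$ satisfies the quantitative condition of \cref{def:decreasing gap fair decomposition}. The $\ell$-sparsity with gap $4G + (|S|+1) 4\bigM W_{\max} |w_3|$ and the $G$-cover transfer directly from the hypotheses, so \cref{lem: decomposed seq with cover sparse no ghosts implies unbounded potential} applies and yields unbounded potential over $(\Gamma_0^0)^*$.

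The main obstacle is the conversion from ``some run decreases by at least $D$'' (the combinatorial output of Infix $D$-dip failure, over all runs including non-seamless, possibly ghost-initiated ones) to ``one of the $\ell$ independent seamless runs decreases by $\Omega(\sqrtb(m))$ per segment'' (the structured input required by the decomposed-words lemma). The bounded-ghost-run assumption is essential here: without it, ghost-initiated runs could absorb arbitrary decreases without leaving an imprint on any seamless run. With it, the reachable portion of each decreasing run forces a seamless witness, and the $G$-cover then pins this witness to one of the $\ell$ independent runs, allowing the Ramsey extraction to produce the single $\rho$ needed for the fair decomposition.
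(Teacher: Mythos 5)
Your overall strategy matches the paper's: negate Infix $D$-dip, use the bound $l$ on ghost-run lengths to extract a long decreasing run on reachable states, massage this into a decreasing-gap fair decomposition, and hand off to \cref{lem: decomposed seq with cover sparse no ghosts implies unbounded potential}. The target is correctly identified. However, there are two concrete issues with the execution.

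The main gap is in the step you describe as ``a large-magnitude negative run ending at a reachable state must originate at a state whose minimal-weight seamless run sits correspondingly far above the running minimum \ldots This identifies a seamless run whose gap above the minimum shrinks substantially.'' The leveled hypothesis does force the \emph{starting} state $p'$ of the reachable portion to sit far above the current minimum, as you argue; but that alone does not produce a seamless run that \emph{decreases}. The decreasing run $\rho''$ need not be seamless, its cumulative weight at intermediate positions can sit strictly above every seamless trajectory (so the $G$-cover, which constrains the configuration values $\vec{c}(\cdot)$ and hence the seamless runs, says nothing direct about $\rho''$), and the seamless run passing through $p'$ at the start has no obligation to follow $\rho''$ afterward. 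Knowing that $\vec{c_1}(p')$ is high does not tell you anything about the height of $\vec{c_2}(q)$, nor about the trajectory of any single seamless run across the infix. The paper closes exactly this gap with a pigeonhole argument: it decomposes the reachable decreasing run $\rho'$ into $|S|+1$ segments, each carrying a drop of $W_{\max} m$, and observes that since there are at most $|S|$ seamless runs, two of the segment endpoints must land on the same seamless run, whose weight therefore also drops by at least $W_{\max} m$. Your proposal asserts the conclusion of this step (``forces a seamless witness'') but does not supply an argument that would survive the counterexample where $\rho''$ slides down through several independent runs' neighborhoods without any one seamless run moving much. You need a quantitative pigeonhole of this kind, not the $G$-cover plus leveledness alone.

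Second, the invocation of \cref{thm: our ramsey} to ``select, uniformly in $m$, a single seamless run $\rho$'' is unnecessary: \cref{def:decreasing gap fair decomposition} quantifies the seamless run $\rho$ inside the ``for every $m$'' clause, so a different witness run for each $m$ is perfectly acceptable, and there is no need to pin the witness to one of the $\ell$ independent runs either. Dropping the Ramsey detour simplifies the argument to match what \cref{lem: decomposed seq with cover sparse no ghosts implies unbounded potential} actually requires. The remaining bookkeeping you outline --- marking indices via $\weight(\rho(\cdot)) + \charge(\cdot)$, choosing segment lengths that exceed $l$ to get ghost-freeness, and transferring the $\ell$-sparsity and $G$-cover hypotheses --- is sound and in fact streamlines the paper's presentation, which tracks $\weight(\rho)$ alone and then appeals to the $2\kappa$-band to convert; but it rests on the unproved ``seamless witness'' step, which must be fixed first.
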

\begin{proof}
    If $\levwords$ has the Infix $D$-dip property for some $D\in \bbN$, we are done. We therefore assume this is not the case. Thus, by the converse of \cref{def:infix D dip}, for every $D,n\in \bbN$, there exists $m$ with $\levwords(m)=w_1w_2w_3w_4$ such that we can write $w_3=xvy$ with $|v|>n$ and there exists $p\in \ghostTrans(s_0,w_1w_2x)$ and a run $\rho:p\runsto{x}S$ with $\weight(\rho)\le -D$.

    Our goal now is to obtain from $\levwords$ another leveled sequence $\levwords'$ that has a decreasing-gap fair decomposition (\cref{def:decreasing gap fair decomposition}). Intuitively, we obtain the decreasing gap using the decreasing runs on $v$ above.

    We define $\levwords'$ as follows. For every $m\in \bbN$, let $m'>m$ such that for $\levwords(m')=w_1w_2w_3w_4$ we can write $w_3=xvy$ with $|v|>(|S|+1)m+l$  (recall that $l$ is the bound on the length of a ghost run) and there exists $p\in \ghostTrans(s_0,w_1w_2x)$ and a run $\rho:p\runsto{v}S$ with $\weight(\rho)\le -W_{\max}((|S|+1)m+l)$. 
    Since there are no ghost runs of length greater than $l$ over $w_3$, it follows that $\rho$ reaches a reachable state $q$ within $l$ steps from $p$. We then still have a run $\rho'$ over a suffix $v'$ of $v$ such that $\rho':q\runsto{v'}S$ with $\weight(\rho')\le -W_{\max}(|S|+1)m$. Indeed, $\rho$ can lose at most $W_{\max}l$ weight within $l$ steps. 
    Recall that for decreasing-gap fair decompositions we need a \emph{seamless} decreasing run (\cref{def:decreasing gap fair decomposition}). Intuitively, the existence of a long-enough decreasing run clearly implies the existence of a decreasing seamless run, since a run cannot decrease without ``dragging down'' with it some seamless run. 
    
    Formally, if $\rho'$ has an infix of length at least $m$ that is decreasing and seamless, we redefine $v'$ as that infix and we are done. Otherwise, decompose $\rho'$ into $|S|+1$ segments by the first time $\rho'$ loses an additional $W_{\max}m$ weight.     
    Since there are at most $|S|$ seamless runs, it follows that $\rho'$ intersects the same seamless runs at least twice at the end of these segments, but then this seamless run is decreasing at least as much as $\rho'$, so there is a seamless run decreasing by at least $W_{\max}m$, as required. We reuse $v'$ to denote the corresponding suffix, and write 
 $v=v_0v'$.

    Define $\levwords'(m)=w'_1w'_2w'_3w'_4$ with $w'_1=w_1,w'_2=w_2xv_0,w'_3=v'$ and $w'_4=yw_4$.
    Denote by $\kappa$ the charge band of $\levwords$.
    Observe that $\levwords'$ is a $2\kappa$-leveled word sequence: the infixes $v'$ have increasing length, and since the words in $\levwords'$ are equal as concatenations to words from $\levwords$, then the bounded-charge property is preserved, with the modification that previously all charges were within gap $\kappa$ of $\charge(w_1w_2)$, whereas now it is possible that e.g., $\charge(w'_1w'_2)=\charge(w_1w_2)-\kappa$, so the remaining charge is within distance at most $2\kappa$. 
    Moreover, $\levwords'$ is $\ell$-sparse with gap $4G+(|S|+1)4\bigM W_{\max}|w_3|$ and a configuration $G$-cover. Indeed, since we choose $v'$ that are shorter than their corresponding $|w_3|$, the gap requirement that is already met in $\levwords$ is even larger with respect to $\levwords'$, and the decomposition $G$-cover is implied by the configuration $G$-cover.

    It remains to show that $\levwords'$ has a decreasing-gap fair decomposition. To this end, for every $m\in \bbN$ consider $\levwords'(m)=w'_1w'_2w'_3w'_4$ as above, where by construction there exists $q\in \booltrans(s_0,w'_1w'_2)$ and a run $\rho:q\runsto{w'_3}S$ with $\weight(\rho)\le -W_{\max}m$ and $|w'_3|>m$. 
    We decompose $w'_3=u_1u_2\cdots u_{d(m)+1}$ as follows. 
    Let $d(m)$ be the maximal number such that the following ordered indices are distinct: 
    $0=i_0\le i_1\le \ldots \le i_{d(m)}<|w_3|$ where for every $1\le k\le d(m)$ we have
    \[i_k=\min\{j\mid \weight(\rho(w'_1w'_2w'_3[1,i_{k-1}]))-\weight(\rho(w'_1w'_2w'_3[1,j]))\ge \sqrt{m}W_{\max}\}\]
    We then set $u_{k}=w'_3[i_{k-1}+1,i_k]$ for all $1\le k\le d(m)$ and $u_{d(m)+1}$ the remaining suffix.

    Observe that for every $1\le k\le d(m)$ we now have by definition that $\weight(\rho(w'_1w'_2u_1\cdots u_{k-1}))-\weight(\rho(w'_1w'_2u_1\cdots u_{k}))\ge \sqrt{m}W_{\max}$. Therefore, since the charge remains within a $2\kappa$ band, we have that 
    \[
    \begin{split}
    &\weight(\rho(w'_1w'_2u_1\cdots u_{k-1}))+\charge(\rho(w'_1w'_2u_1\cdots u_{k-1}))\ge \\
    &\weight(\rho(w'_1w'_2u_1\cdots u_{k}))+\sqrt{m}W_{\max}+\charge(w'_1w'_2u_1\cdots u_{k})-2\kappa
    \end{split}
    \]
    Moreover, since with each letter the run $\rho$ loses at most $W_{\max}$ weight, then the number of segments in the decomposition is (roughly) $\sqrt{m}$, i.e., $\lim_{m\to \infty}d(m)=\infty$.
    
    Thus, $\levwords'$ has a decreasing-gap fair decomposition with $\sqrtb(m)=\sqrt{m}W_{\max}-2\kappa$.   
    We conclude that $\levwords'$ satisfies the conditions of \cref{lem: decomposed seq with cover sparse no ghosts implies unbounded potential}, and therefore $\sup\{\pot(w)\mid w\in (\Gamma_0^0)^*\}=\infty$, so we are done.
\end{proof}
    
We are now ready for the main inductive argument.

\begin{lemma}[\lightbulbicon Leveled Words and Gap to Unbounded Potential (or $D$-Dip)]
    \label{lem:leveled with large gaps implies unbounded potential or D dip}
    Let $\ell\in \bbN$ and consider a leveled words sequence $\levwords$ that is $\ell$-sparse with gap $(|S|+1)4\bigM W_{\max}|w_3|$. Then either $\sup\{\pot(w)\mid w\in (\Gamma_0^0)^*\}=\infty$, or $\levwords$ has the Infix $D$-dip property for some $D\in \bbN$.
\end{lemma}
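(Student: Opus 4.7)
The plan is to prove the lemma by reverse induction on $\ell$, starting from the maximum possible value $\ell = |S|$ and descending to $\ell = 1$, in direct analogy with the zooming argument of \cref{lem:unbounded long dip imply separated inc infix}. The bound $\ell \le |S|$ holds because any two configuration-independent seamless runs must occupy distinct states at every common position (if they share a state then by seamlessness their weights coincide there, contradicting independence), so the number of simultaneously independent seamless runs on $w_3$ is bounded by $|S|$.

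For the base case $\ell = |S|$ every state in $\booltrans(s_0,w_1w_2u)$ already lies on one of the $|S|$ independent runs by cardinality, so $\levwords$ is automatically a configuration $0$-cover. Moreover, a ghost run longer than $|S|$ must repeat a ghost state, so by the analysis of \cref{sec:stable cycles and bounded behaviours} its induced cycle is either $0$-weight and can be excised (bounding the ghost-run length by $|S|$) or generates enough accumulated weight to yield a type-$0$ witness via \cref{lem:witness implies nondet}, which by the contrapositive of the desired conclusion already gives unbounded potential. Thus in the base case ghost runs may be assumed uniformly length-bounded and \cref{prop:leveled with large gaps no ghost G cover implies unbounded potential or D dip} applies directly.

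For the inductive step $\ell+1 \to \ell$, I split into the same two-case dichotomy as \cref{lem:unbounded long dip imply separated inc infix}. If for some $G$ the sequence $\levwords$ admits a faithful restriction (obtained by absorbing a prefix of $w_3$ into $w_2$, so that the leveled band $\kappa$ is at most doubled) that is a configuration $G$-cover with uniformly bounded ghost-run length, we invoke \cref{prop:leveled with large gaps no ghost G cover implies unbounded potential or D dip} and are done. Otherwise, for arbitrarily large $G$ there exists $\levwords(m)=w_1w_2w_3w_4$ and a prefix $u$ of $w_3$ of length at least $\tfrac12|w_3|$ at which some reachable state is at distance $>G$ from every one of the $\ell$ independent runs. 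Since each letter shifts the relative weight between seamless runs by at most $2W_{\max}$, the seamless run leading to this far state must have remained outside distance $(|S|+1)4\bigM W_{\max}|w_3|$ of every existing independent run throughout an infix of length on the order of $G/W_{\max}$; choosing $G$ large enough yields an $(\ell+1)$-th independent run over a shortened $w_3'$, producing an $(\ell+1)$-sparse leveled faithful restriction to which the induction hypothesis applies.

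The main obstacle will be propagating the bounded-ghost-runs hypothesis through both the re-decomposition in Case 1 and the induction itself, since ghost-reachability is exquisitely sensitive to the choice of prefix assigned to $w_1$ and to the cactus depth of the letters therein. I plan to address this by running an auxiliary induction on the maximal ghost-run length: whenever ghost runs cannot be bounded uniformly, an appropriate application of $\flatten$ from \cref{def:cactus rebase flattening} together with a baseline shift (\cref{sec: baseline shift}) onto the cactus responsible for the ghost conversion either promotes the ghost run into a genuine seamless run that contributes a further independent run, or (when the ghost cycle cannot be absorbed) produces a type-$0$ witness through \cref{lem:witness implies nondet}. Managing this interaction between cactus flattening, the leveled invariant, and ghost reachability is the most technically delicate part of the argument, and is the point at which the ``either something good happens, or there is a witness'' philosophy of the paper earns its keep.
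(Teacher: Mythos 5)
Your overall architecture (reverse induction on $\ell$, zooming in two regimes: $G$-cover vs.\ diverging run, with the $G$-cover regime discharging to \cref{prop:leveled with large gaps no ghost G cover implies unbounded potential or D dip}) matches the paper's. There are, however, two genuine gaps.

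\textbf{Base case.} When $\ell=|S|$ the $|S|$ configuration-independent seamless runs occupy distinct states at every position of $w_3$, so \emph{every} state is reachable along every prefix of $w_3$. This means there are no ghost-reachable states at all, hence \emph{no ghost runs of any length}. Your excursion into repeating ghost states, excising $0$-weight cycles, and extracting a type-$0$ witness is both unnecessary and not well-founded here; the clean observation that $\ghostTrans = \booltrans$ when the cover is total is what makes the base case go.

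\textbf{Handling of ghost runs in the inductive step.} This is the more serious gap. You fold the ghost-run issue into a vaguely specified auxiliary induction involving flattening \emph{plus a baseline shift}, and claim the fallback is a type-$0$ witness. Neither move is right for the charge setting. First, a type-$0$ witness is not one of the two admissible conclusions of this lemma (``unbounded potential'' or ``Infix $D$-dip''), so a branch that terminates in ``witness'' does not prove the statement; the ``either something good, or a witness'' pattern belongs to the \emph{potential} analogue (\cref{lem:potleveled with large gaps implies type 1 or D dip}), whose conclusion explicitly allows a type-$1$ witness, and it is driven by \cref{lem:unfolding maintains potential}, which has that dichotomy built in. Second, no baseline shift is needed. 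The paper's Case~1 (unboundedly long ghost runs) is handled by flattening the prefix $w_1w_2x$ alone: since the alphabet $\Gamma'\subseteq\Gamma_\infty^0$ contains no rebase letters, $\flatten$ produces a word over $\Gamma_0^0$ with no jump letters; \cref{lem:flattening configuration characterization} then guarantees (i) the former ghost state is promoted to a genuinely reachable state lying far above all existing runs (yielding the $(\ell{+}1)$-th independent run with the required gap), and (ii) reachable states keep their weights, so in particular the \emph{minimal} weight, hence the \emph{charge}, is unchanged on every prefix. Point~(ii) is the essential observation you are missing: charge is automatically invariant under flattening because the minimal run is never a ghost run, so the leveled invariant survives for free. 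This is exactly what fails for potential (the maximal dominant state may become a promoted ghost), which is why potential needs the witness escape hatch and charge does not. Finally, you also need, and do not spell out, the lifting of the Infix $D$-dip conclusion for $\levwords'$ back to $\levwords$ across the flattening (the key fact is that $D$-dip is local to the $w_3$ infix and $\ghostTrans$ of the prefix is unchanged by $\flatten$).
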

\begin{proof}
    The proof is by reverse induction on $\ell$, and shares common elements with the proof of \cref{lem:unbounded long dip imply separated inc infix}.

    \paragraph*{Base case: $\ell=|S|$} Recall that the maximal number of configuration-independent runs is $|S|$ (see \cref{def:configuration independent runs}). 
    In this case, for every $m\in \bbN$ with $\levwords(m)=w_1w_2w_3w_4$ we have that in every configuration in $w_3$, each state belongs to one of the independent runs. 
    In particular, $\levwords$ is a configuration $G$-cover with $G=0$, and is also $\ell$-sparse by assumption, with gap $4G+(|S|+1)4\bigM W_{\max}|w_3|$ (since $G=0$).

    Moreover, since $\ell=|S|$, then in particular every state is reachable along any prefix of $w_3$, and therefore there are no ghost runs at all along $w_3$. 
    Thus, the conditions of \cref{prop:leveled with large gaps no ghost G cover implies unbounded potential or D dip} are met, so either $\levwords$ has the infix $D$-dip property for some $D$, or $\sup\{\pot(w)\mid w\in (\Gamma_0^0)^*\}=\infty$, and we are done.

    \paragraph*{Inductive case: $\ell<|S|$}
    The induction is split into three subcases. 
    \subparagraph*{Case 1: long ghost runs.}
    In this case we assume that there are unboundedly long ghost runs. Specifically, assume that for every $l\in \bbN$ there is some $m\in \bbN$ with $\levwords(m)=w_1w_2w_3w_4$ and we can write $w_3=xuy$ such that $|u|\ge l$ and there exists $p\in \ghostTrans(s_0,w_1w_2x)\setminus \booltrans(s_0,w_1w_2x)$ and a ghost run $\rho:p\runsto{u}S$.
    Intuitively, while this ghost run is not a real run on $w_1w_2w_3w_4$, it becomes a real run if we flatten the prefix up to it. Moreover, since flattening guarantees that ghost runs are very high, this implies the existence of another independent run, so we can apply the induction.

    Formally, we define a new sequence $\levwords'$ as follows. For every $m'\in \bbN$ let $m>m'$ be such that $\levwords(m)=w_1w_2w_3w_4$ has a ghost run as above with $|u|>m'$. Define $\levwords'(m')=w'_1w'_2w'_3w'_4$ with
    $w'_1=\flatten(w_1w_2x \wr (|S|+1)4\bigM W_{\max}|w_1w_2w_3w_4|), w'_2=\epsilon, w'_3=y$ and $w'_4=yw_4$.\footnote{We remark that unlike \cref{sec:separated increasing infix}, the roles of $w_1$ and $w_2$ here are combined, so it is fine selecting $w_2=\epsilon$ and moving its content to $w_1$ (or vice versa). We keep the four-part formalism for uniformity.} 

    Similarly to \cref{lem: decomposed seq with cover sparse no ghosts implies unbounded potential}, since our alphabet does not contain rebase letters, by \cref{def:cactus rebase flattening} the resulting prefix $x'_1$ is over $\Gamma_0^0$. 
    Denote $\vec{c}=\xconf(\vec{c_{init}},w_1w_2x)$ and $\vec{c'}=\xconf(\vec{c_{init}},w'_1w'_2)$, then by 
    \cref{lem:flattening configuration characterization}
    we have that $\vec{c}(q)=\vec{c'}(q)$ for every $q\in \booltrans(s_0,w_1w_2)$, and 
    $\vec{c'}(p)\ge (|S|+1)4\bigM W_{\max}|w_1w_2w_3w_4| + \max\{\vec{c}(p')\mid p'\in \supp(\vec{c})\}$.
    Specifically, the entire run $\rho$ from $p$ is above all the existing $\ell$ independent runs with gap at least $(|S|+1)4\bigM W_{\max}|w_3|$.
    It follows that $\rho$ can be added as an independent run. Thus, $\levwords'$ has $\ell+1$ independent runs. Moreover, note that $\levwords'$ remains a leveled words sequence, since the minimal states are always reachable and therefore maintain their weight after flattening. 
    We can therefore apply the induction hypothesis on $\levwords'$. By the hypothesis, if $\sup\{\pot(w)\mid w\in (\Gamma_0^0)^*\}=\infty$, then we are done.
    
    Otherwise, we get that $\levwords'$ has the $D$-dip property. Recall, however, that we need to show $\levwords$ has the $D$-dip property. Fortunately, observe that $D$-dip is ``local'' in that it only considers runs over the relevant infix, and since the $D$-dip infixes of $\levwords'$ also appear in $\levwords$, so we can indeed show this as follows.

    For every $n\in \bbN$, consider $m'$ such that $\levwords(m')=w'_1w'_2w'_3w'_4$ has an infix $u$ with $w'_3=xuy$ with $|u|\ge n$ and for every $p\in \ghostTrans(s_0,w'_1w'_2x)$ and run $\rho:p\runsto{u}S$ we have $\weight(\rho)>-D$ (these exist by the $D$-dip property, see \cref{def:infix D dip}). 
    By the construction of $\levwords'$, there exists $m>m'$ such that $\levwords(m)=w_1w_2w_3w_4$ and $w'_3$ is an infix of $w_3$. Moreover, while $w'_1w'_2x$ differs from the corresponding prefix $z$ of $\levwords(m)$ due to the flattening, it still holds that $\ghostTrans(s_0,z)=\ghostTrans(s_0,w'_1w'_2x)$ by the flattening procedure.
    It follows that the same condition on $u$ holds also in $w_1w_2w_3w_4$. Therefore, $\levwords$ also has the infix $D$-dip property, and we are done.

    In the following cases, we therefore assume there is a bound $l\in \bbN$ such that all ghost runs on infixes of any $w_3$ are of length at most $l$.

    \subparagraph*{Case 2: a diverging run on a suffix of $w_3$.}
    In this case we essentially assume that the independent runs are not a $G$-cover for any $G$, and use this to construct a new independent run from states that stay far away from all independent runs. We then apply the induction hypothesis. This is very similar to the setting in \cref{fig:sparse induction second case}, where instead of starting from a single state, we start from a configuration. Specifically, we assume the divergence occurs in some long-enough prefix of $w_3$.

    Formally, assume that for every $G\in \bbN$ there exists $m>2G$ with $\levwords(m)=w_1w_2w_3w_4$ and we can write $w_3=uv$ with $|u|\ge \frac12|w_3|$ such that for $\vec{c}=\xconf(s_0,w_1w_2u)$ there exists a state $q\in \supp(\vec{c})$ for which $\vec{c}(q)$ is not withing gap $G$ of any independent run.

    We define a new sequence $\levwords'$ as follows. For every $m'\in \bbN$, take some $G>2W_{\max}m'+(|S|+1)4\bigM W_{\max}m'$ 
    and let $m>2m'$ large enough such that the gap of $q$ as above is at least $G$.
    Write $\levwords(m)=w_1w_2w_3w_4$ with $w_3=uv$ and $q$ as above. 
    We can now use the fact that all seamless runs change their weight by at most $W_{\max}$ at each step. Consider the seamless run $\rho$ leading up to $q$, and write $u=u_1u_2$ with $|u_2|=m'$ (this is possible since $|u|\ge \frac12|w_3|\ge \frac12 m>m'$).
    Since $q$ is not within gap $G$ of any independent run, for every prefix $u'$ of $u_2$ it holds that $\weight(\rho(w_1w_2u_1u'))$ is not within gap $G-2W_{\max}m'>(|S|+1)4\bigM W_{\max}m'$ of any independent run.

    We can therefore define $\levwords(m')=w'_1w'_2w'_3w'_4$ by $w'_1=w_1,w'_2=w_2u_1,w'_3=u_2,w'_4=vw_4$. As above, this is a leveled words sequence, since the charge requirement remains valid. Moreover, we now have $\ell+1$ independent runs satisfying the gap constraints. We can therefore apply the induction hypothesis. As in Case 1, if the potential is unbounded then we are done, and if $\levwords'$ has the infix $D$-dip property, this lifts back to $\levwords$ (here we do not even have to consider the flattening).

    \subparagraph*{Case 3: $G$-cover on suffix of $w_3$.}
    We now consider the complement of Case 2, still under the assumption that all ghost runs are of length at most $l$.
    Specifically, we assume that there exists $G\in \bbN$ such that for every $m>2G$ with $\levwords(m)=w_1w_2w_3w_4$, for every decomposition $w_3=uv$ with $|u|\ge \frac12|w_3|$ and for every state $q\in \booltrans(s_0,w_1w_2u)$, the weight of $q$ is within gap $G$ of some independent run. We therefore have a $G$-cover of the ``second half'' of $w_3$. 
    Our goal is now to apply \cref{prop:leveled with large gaps no ghost G cover implies unbounded potential or D dip}. For this, we only need to define a sequence where the $w_3$ infix corresponds to these ``second halves''. A slight caveat is that \cref{prop:leveled with large gaps no ghost G cover implies unbounded potential or D dip} requires a larger gap than $\levwords$ has. This, however, is an ``artificial'' problem: the gaps in $\levwords$ grow with $m$, so we just need to pick words that are further away, so we have larger gaps.
    
    Formally, we proceed as follows. For every $m'\in \bbN$, let $m=2(m'+G)$ with $\levwords(m)=w_1w_2w_3w_4$ and write $w_3=uv$ with $|v|=m'$ (in particular $|u|\ge \frac12 |w_3|$). Define $\levwords(m')=w'_1w'_2w'_3w'_4$ by $w'_1=w_1,w'_2=w_2u,w'_3=v,w'_4=w_4$. We observe that $\levwords'$ is still a leveled words sequence, since the charge constraints do not change. We claim that $\levwords'$ is $\ell$-sparse with gap $4G+(|S|+1)4\bigM W_{\max}|w_3|$. Indeed, recalling that $m=2(m'+G)$, the gap guaranteed by $\levwords$ is 
    \[
    \begin{split}
    &(|S|+1)4\bigM W_{\max}|w_3|\ge (|S|+1)4\bigM W_{\max}2(m'+G)\ge \\
    &4G+ (|S|+1)4\bigM W_{\max}m'=4G+ (|S|+1)4\bigM W_{\max}|w'_3|    
    \end{split}
    \]
    Finally, $\levwords'$ has a configuration $G$-cover, as we discuss above.
    
    We can therefore apply \cref{prop:leveled with large gaps no ghost G cover implies unbounded potential or D dip} to $\levwords'$ and we conclude that either $\levwords'$ has the infix $D$-dip property, or $\sup\{\pot(w)\mid w\in (\Gamma_0^0)^*\}=\infty$. Again, the infix $D$-dip property lifts back to $\levwords$, so we are done.
    Note that the last case is not inductive, but direct (indeed, it can be thought of as a different base case).
    \end{proof}

Recall by the definition of sparse words (\cref{def:sparse words}), that if we consider a set with a single independent run (i.e., $\ell=1$), then this set is vacuously sparse with any size of gap. We can then already apply \cref{lem:leveled with large gaps implies unbounded potential or D dip}. Indeed, this aligns precisely with the inductive proof above: we start with just the seamless baseline run. If it is a $G$-cover, we are done. Otherwise show that there is another independent run, and so on. Thus, in order to use \cref{def:sparse words}, we only need the leveled sequence, and not the gap requirement, giving us the main tool of this section.
\begin{corollary}[\keyicon Leveled Words to Unbounded Potential or $D$-Dip]
    \label{cor:leveled implies unbounded potential or D dip}
    Consider a $\levwords$ sequence, then either $\sup\{\pot(w)\mid w\in (\Gamma_0^0)^*\}=\infty$, or $\levwords$ has the Infix $D$-dip property for some $D\in \bbN$. 
\end{corollary}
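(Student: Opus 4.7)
The plan is to obtain this corollary as an immediate instantiation of Lemma \ref{lem:leveled with large gaps implies unbounded potential or D dip} at the smallest possible value $\ell = 1$. The key observation that makes this work is that the sparsity condition degenerates when the family of independent runs is a singleton: by \cref{def:configuration independent runs}, a single run is independent by convention, and the gap requirement in \cref{def:discharging ell sparse K gap} only constrains \emph{pairs} of independent runs. So for $\ell = 1$ the gap requirement is vacuous, regardless of the choice of $K$.

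Concretely, given a leveled words sequence $\levwords$, for each $m \in \bbN$ write $\levwords(m) = w_1 w_2 w_3 w_4$. By \cref{def:elongated words sequence} this word admits a seamless baseline run $\rho_0$. Taking $I = \{\rho_0\}$ as the associated set of independent runs makes $\levwords$ trivially $1$-sparse with any gap function, in particular with gap $(|S|+1) 4\bigM W_{\max} |w_3|$ as required by \cref{lem:leveled with large gaps implies unbounded potential or D dip}.

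All hypotheses of \cref{lem:leveled with large gaps implies unbounded potential or D dip} are then met, so applying it yields the dichotomy: either $\sup\{\pot(w) \mid w \in (\Gamma_0^0)^*\} = \infty$, or $\levwords$ has the Infix $D$-dip property for some $D \in \bbN$. There is essentially no obstacle here; the entire technical content lives inside the reverse induction on $\ell$ already carried out in the lemma, whose base of the induction (when $\ell$ is large) is worked out through \cref{prop:leveled with large gaps no ghost G cover implies unbounded potential or D dip}, and whose inductive step handles the three subcases of long ghost runs, diverging runs, and suffix $G$-covers. The corollary merely names the smallest starting point $\ell = 1$, which is admissible precisely because of the vacuous sparsity observation above.
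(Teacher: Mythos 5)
Your proposal is correct and is essentially the paper's own argument: instantiate \cref{lem:leveled with large gaps implies unbounded potential or D dip} at $\ell=1$ with the seamless baseline run as the singleton set $I$, noting that a singleton is configuration-independent by convention so the gap constraint is vacuous. Your reference to \cref{def:discharging ell sparse K gap} (rather than the paper's citation of \cref{def:sparse words}) is in fact the more precise pointer for this context, but the substance is identical.
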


\subsubsection{Discharging Words Sequences}
\label{sec:discharging word sequences}
A counterpart to leveled words are \emph{discharging words}, where the charge significantly decreases.
\begin{definition}[\keyicon Discharging Words Sequence]
\label{def:discharging word sequence}
An elongated words sequence $\words$ is a \emph{discharging words sequence}, denoted $\diswords:\bbN\to \Gamma'^*$, if for every $m\in \bbN$ with $\diswords(m)=w_1w_2w_3w_4$ the following hold:
\begin{itemize}
    \item $\charge(w_1w_2)-\charge(w_1w_2w_3)>m$ (the minimal states get much closer to the baseline run upon reading $w_3$).
    \item For every prefix $v$ of $w_3$ we have $\charge(w_1w_2v)\ge \charge(w_1w_2w_3)$ (the minimal state stays below its value at $w_1w_2w_3$).
\end{itemize}
\end{definition}


We illustrate the ``shape'' of $\diswords$ in \cref{fig:discharging}.

\begin{figure}[ht]
    \centering
    \includegraphics[width=0.8\linewidth]{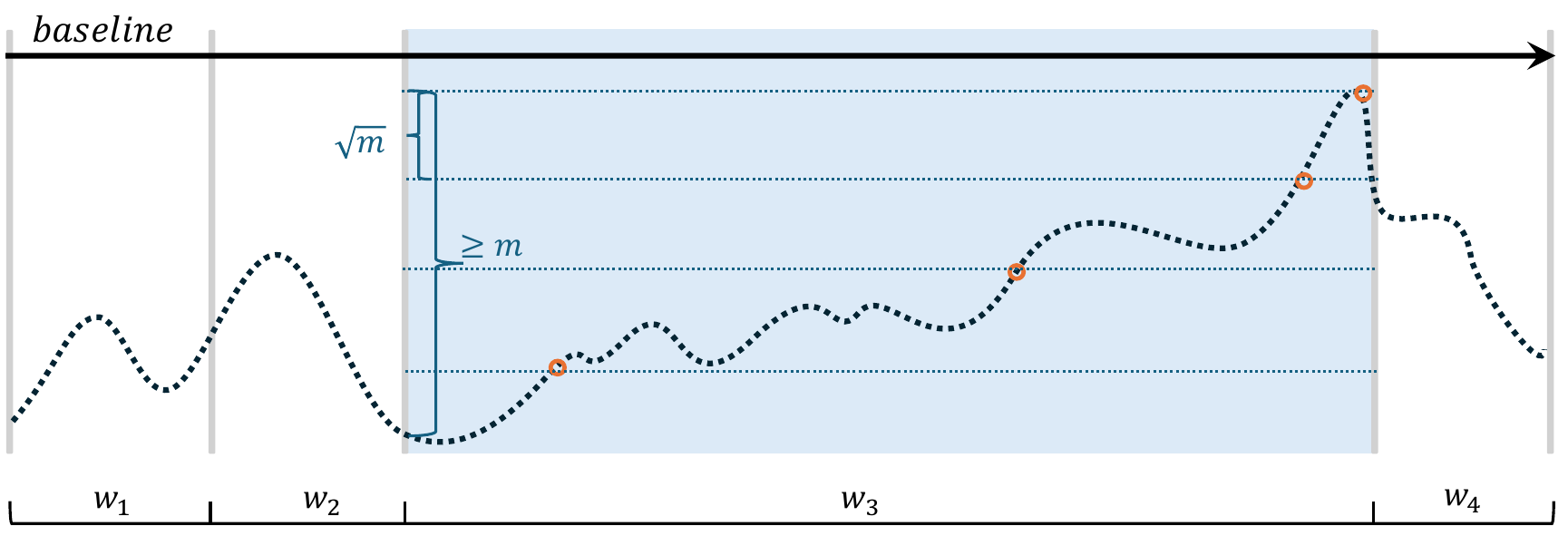}
    \caption{A discharging word $w_1w_2w_3w_4$. The infix $w_3$ brings the charge much closer to the baseline, being at its closest at the end of $w_3$. 
    The orange dots signify each segment in the decomposition, i.e., the first time the charge decreases by another $\sqrt{m}$.}
    \label{fig:discharging}
    
\end{figure}

Unless stated otherwise, in the following when we write $\diswords$ we implicitly assume the function is a discharging words sequence.

Discharging words sequences have a natural notion of decomposition: since the charge between $w_1w_2$ and $w_1w_2w_3$ decreases by at least $m$, we can mark each index where it decreases by $\sqrt{m}$, and thus obtain roughly $\sqrt{m}$ segments, each of length roughly $\sqrt{m}$. This is similar to the proof of \cref{prop:leveled with large gaps no ghost G cover implies unbounded potential or D dip}.
Intuitively, each segment in the decomposition can be thought of as ``zooming in'' on $w_3$ and finding a self-similar structure. Indeed, the characteristic of the $w_3$ infixes is that the charge decreases along them, while they get longer, and this is exactly the property satisfied by the infixes in the decomposition.

Unfortunately, things are not quite so simple, since the charge can behave in nasty ways. Specifically, it could be that $\charge(x)$ is much larger than $\charge(x\sigma)$ for some $x$, in case the minimal run on $x$ cannot continue with $\sigma$, thus making the minimal state much higher. Put more succinctly, this behavior is the opposite of \cref{def:charge bounded decrease}.
In such cases, the decomposition we propose may have much less than $\sqrt{m}$ segments.

Nonetheless, we define this decomposition, and show that under sufficient assumptions (namely \cref{def:charge bounded decrease}), it behaves well, i.e., it is a fair decreasing-gap decomposition. 
Keeping in mind that the general goal of this section is to ultimately prove that the potential is unbounded, then by \cref{lem:charge no bounded decrease then potential unbounded} we already know that if the charge does not have bounded decrease, the potential is unbounded. Therefore, we need not be worried about this too much.

\begin{definition}[\keyicon Discharging Decomposition]
    \label{def:discharging decomposition}
    For $m\in \bbN$ and $\diswords(m)=w_1w_2w_3w_4$, let $d(m)$ be the maximal number such that the following ordered indices are distinct:
    $0=i_0 \le i_1\le i_2\le \ldots\le i_{d(m)}< |w_3|$ where for every $1\le k\le d(m)$ we have
    $i_k=\min\{j\mid \charge(w_1w_2w_3[1,i_{k-1}])-\charge(w_1w_2w_3[1,j])>\sqrt{m}\}$.
    
    The \emph{discharging decomposition} of $w_3$ is then $w_3=u_1u_2\cdots u_{d(m)+1}$ where $u_k=w_3[i_{k-1}+1,i_k]$ for every $1\le k\le d(m)$, and $u_{d(m)+1}=w_3[i_{d(m)+1},|w_3|]$.

    For every $1\le k\le d(m)$ we then have: 
    \begin{itemize}
        \item $\charge(w_1w_2u_1\cdots u_{k-1})-\charge(w_1w_2u_1\cdots u_{k})> \sqrt{m}$.
        \item For every prefix $v$ of $u_1\cdots u_k$ it holds that 
        $\charge(w_1w_2v)\ge \charge(w_1w_2u_1\cdots u_{k})$
    \end{itemize}
\end{definition}
Observe that the indices in \cref{def:discharging decomposition} are indeed ordered, since we always take the first index after which the charge decreases by a certain amount.

We now show that if $\charge$ behaves nicely, then the discharging decomposition is a decreasing-gap fair decomposition. Specifically, we require that the charge has bounded decrease (as per \cref{def:charge bounded decrease}). 
Recall that otherwise, we have by \cref{lem:charge no bounded decrease then potential unbounded} that the potential is unbounded, which we utilize later on in the proof.
\begin{proposition}
    \label{prop:discharging words with bounded decrease charge is fair decomposition}
    Let $B\in \bbN$. Assume $\diswords$ is such that for every $m\in \bbN$ and $\diswords(m)=w_1w_2w_3w_4$ we have that $\charge$ has $B$-Bounded Decrease on the infix $w_3$ (as per \cref{def:charge bounded decrease}).
    Then the discharging decomposition is a decreasing-gap fair decomposition (as per \cref{def:decreasing gap fair decomposition}).
\end{proposition}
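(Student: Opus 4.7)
The plan is to exhibit the required seamless run $\rho$ and functions $b, \sqrtb$, and then separately verify that $d(m)\to\infty$, which is the one place where the $B$-bounded decrease hypothesis actually bites. For each $m\in\bbN$ with $\diswords(m)=w_1w_2w_3w_4$, I take $\rho$ to be the \emph{baseline} run of $w_1w_2w_3w_4$; this exists and is seamless because $\diswords$ is an elongated words sequence (Definition~\ref{def:elongated words sequence}). By Observation~\ref{obs:baseline runs have weight 0} (and, more generally, because every transition in a baseline run has weight $0$), we have $\weight(\rho(v))=0$ for every prefix $v$ of $w_1w_2w_3w_4$. With this choice the two fairness inequalities in Definition~\ref{def:decreasing gap fair decomposition} simplify to pure statements about $\charge$.

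For the global condition, the inequality to verify becomes $\charge(w_1w_2) > \charge(w_1w_2w_3) + b(m)$, which by the first bullet of Definition~\ref{def:discharging word sequence} holds with $b(m):=m$, and clearly $b(m)\to\infty$. For the per-segment condition, the inequality becomes, for every $1\le k\le d(m)$,
\[
\charge(w_1w_2u_1\cdots u_{k-1}) \;>\; \charge(w_1w_2u_1\cdots u_{k}) + \sqrtb(m),
\]
and this is exactly the first bulleted property recorded in Definition~\ref{def:discharging decomposition}, with $\sqrtb(m):=\sqrt{m}$ (again $\sqrtb(m)\to\infty$).

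The only remaining point is that $\diswords$ equipped with the discharging decomposition is a genuine decomposed words sequence in the sense of Definition~\ref{def:decomposed words sequence}, i.e., $d(m)\to\infty$. This is where the $B$-bounded decrease assumption enters. Writing $c_j:=\charge(w_1w_2\,w_3[1,j])$, the minimality of each $i_k$ gives $c_{i_{k-1}}-c_{i_k-1}\le \sqrt{m}$, and $B$-bounded decrease (Definition~\ref{def:charge bounded decrease}) forces $c_{i_k-1}-c_{i_k}<B$; hence every segment drops the charge by strictly less than $\sqrt{m}+B$. Now examine why the decomposition terminates at $d(m)$: either no $j<|w_3|$ witnesses another $\sqrt{m}$-drop beyond $c_{i_{d(m)}}$, or the first such $j$ is exactly $|w_3|$. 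In the first case $c_{i_{d(m)}}-c_{|w_3|}\le \sqrt{m}$; in the second, bounded decrease applied at the last step gives $c_{i_{d(m)}}-c_{|w_3|}<\sqrt{m}+B$. Either way $c_{i_{d(m)}}\le c_{|w_3|}+\sqrt{m}+B$, and combining with the discharging bound $c_0-c_{|w_3|}\ge m$ yields $c_0-c_{i_{d(m)}}\ge m-\sqrt{m}-B$. Since the total drop is also bounded above by $d(m)(\sqrt{m}+B)$, we conclude
\[
d(m)\;\ge\;\frac{m-\sqrt{m}-B}{\sqrt{m}+B}\;\xrightarrow[m\to\infty]{}\;\infty.
\]

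The main (mild) obstacle is precisely this last paragraph: one must not lose the case distinction that $i_{d(m)+1}$ might ``want to'' be $|w_3|$ itself, since the definition forbids that index. Bounded decrease handles this case uniformly with the case where the decomposition simply runs out of $\sqrt{m}$-drops; without bounded decrease the per-segment drop cannot be bounded above and the lower bound on $d(m)$ fails, which is exactly why the hypothesis is needed (and consistent with Lemma~\ref{lem:charge no bounded decrease then potential unbounded}, which handles the complementary regime).
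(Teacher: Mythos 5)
Your proof is correct and follows essentially the same route as the paper's: take $\rho$ to be the baseline run so the $\weight$ terms vanish, set $b(m)=m$ and $\sqrtb(m)=\sqrt m$, and then establish $d(m)\to\infty$ by bounding each segment's charge drop from above via $B$-bounded decrease and from below via the discharging total $\ge m$.

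One small but genuine point in your favor: your case analysis for the remainder segment is more careful than the paper's. The paper asserts ``trivially'' that $\charge(w_1w_2u_1\cdots u_{d(m)})-\charge(w_1w_2u_1\cdots u_{d(m)+1})\le \sqrt m$, on the grounds that otherwise we would have another segment. That implication is not quite right: Definition~\ref{def:discharging decomposition} requires $i_{d(m)+1}<|w_3|$, so if the first index $j$ achieving a $\sqrt m$-drop were exactly $|w_3|$, the decomposition would still stop at $d(m)$, and the remainder's drop could exceed $\sqrt m$. You catch this, and you correctly use $B$-bounded decrease at the final letter to bound the remainder's drop by $\sqrt m + B$ in that case, which is all that is needed; the resulting bound $d(m)\ge (m-\sqrt m - B)/(\sqrt m + B)$ differs from the paper's only by the $-B$ in the numerator and still tends to infinity. (A minor phrasing issue: your first case should be stated as ``no $j\le |w_3|$ witnesses a drop'' rather than ``no $j<|w_3|$''; as written the two cases are not disjoint. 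The substance is right.) You also omit the paper's observation that $|u_i|>\sqrt m /B$, but Definition~\ref{def:decomposed words sequence} only requires nonemptiness (automatic from the indices being distinct) and $d(m)\to\infty$, so that omission is harmless.
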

\begin{proof}
    Let $m\in \bbN$ and $\diswords(m)=w_1w_2w_3w_4$. Write $w_3=u_1\cdots u_{d(m)+1}$.

    We claim that the discharging decomposition is a decreasing-gap fair decomposition with respect to the baseline run $\rho_0$. Note that the baseline run has weight $0$ in all prefixes. Therefore, we need to show that the length of the $u_i$ increases with $m$, that $d(m)$ increases with $m$, and that $\charge$ decreases by some function $\sqrtb(m)$ between each segment (as per \cref{def:decreasing gap fair decomposition}, instantiated with $\rho_0$) except for the last ``remainder'' segment. 

    The latter requirement is trivial by the \cref{def:discharging decomposition} -- the segments are defined such that 
    \[
     \charge(w_1w_2u_1\cdots u_{i-1})-\charge(w_1w_2u_1\cdots u_i)>\sqrt{m}
    \]
    i.e, their charge gap is at least $\sqrt{m}$.

    By the bounded-decrease charge property of \cref{def:charge bounded decrease}, we inductively have that for every segment $u_i$ (indeed, every infix of $w_3$) it holds that
    \[
    \charge(w_1w_2u_1\cdots u_{i-1})-\charge(w_1w_2u_1\cdots u_i)<|u_i|B
    \]
    Combining this with the charge gap above, we have that $|u_i|> \sqrt{m}/B$, and in particular the length of $u_i$ increases with $m$.

    We turn to show that $d(m)$ increases. We initially claim that 
    \[
     \charge(w_1w_2u_1\cdots u_{i-1})-\charge(w_1w_2u_1\cdots u_i)\le \sqrt{m}+B
    \]
    Indeed write $u_i=v\cdot \sigma$ where $\sigma$ is the last letter, then 
     \[
    \begin{split}
        &\charge(w_1w_2u_1\cdots u_{i-1})-\charge(w_1w_2u_1\cdots u_{i-1}v\sigma)=\\
        &\charge(w_1w_2u_1\cdots u_{i-1})-\charge(w_1w_2u_1\cdots u_{i-1}v)+ \charge(w_1w_2u_1\cdots u_{i-1}v) -\charge(w_1w_2u_1\cdots u_{i-1}v\sigma)\le \\
        &\sqrt{m} + B
    \end{split}
    \]
    Where the last inequality follows from \cref{def:charge bounded decrease} and from the fact that $u_i=v\sigma$ is minimal with the gap property in \cref{def:discharging decomposition} (i.e., $\charge(w_1w_2u_1\cdots u_{i-1})-\charge(w_1w_2u_1\cdot u_{i-1}v)\le \sqrt{m}$ otherwise $v$ would have been selected instead of $u_i$).

    We notice that for segment $u_{d(m)+1}$ we do not provide bounds, as it is essentially a ``remainder''. Trivially, we have $\charge(w_1w_2u_1\cdots u_{d(m)})-\charge(w_1w_2u_1\cdots u_{d(m)+1})\le \sqrt{m}$ (otherwise we would have another segment).

    By the definition of $\diswords$ we now have that 
    \[
    \sum_{i=1}^{d(m)+1}(\charge(w_1w_2u_1\cdots u_{i-1})-\charge(w_1w_2u_1\cdots u_i))=\charge(w_1w_2)-\charge(w_1w_2w_3)\ge m
    \]
    and on the other hand by the above we have
    \[
    \sum_{i=1}^{d(m)+1}(\charge(w_1w_2u_1\cdots u_{i-1})-\charge(w_1w_2u_1\cdots u_i))\le d(m)(\sqrt{m}+B)+\sqrt{m}
    \]
    We therefore have
    $d(m)\ge \frac{m-\sqrt{m}}{\sqrt{m}+B}$
    so $\lim_{m\to \infty}d(m)=\infty$, as required.
\end{proof}

\cref{prop:discharging words with bounded decrease charge is fair decomposition} lays the ground to use \cref{lem: decomposed seq with cover sparse no ghosts implies unbounded potential}: if the charge has bounded decrease, then the decomposition is fair, so under the appropriate conditions, the potential is unbounded. That is, we have the following.
\begin{corollary}
    \label{cor: discharging bounded charge cover sparse no ghosts implies unbounded potential}
    If there exists $G,\ell,B\in \bbN$ and a discharging word sequence $\diswords$ that satisfies the following:
    \begin{enumerate}
        \item $\charge$ has $B$-bounded decrease on $w_3$ for every $\diswords(m)=w_1w_2w_3w_4$, and
        \item $\diswords$ (with the discharging decomposition) is a ghost-free decomposition $G$-cover and is $\ell$-sparse with gap $4G + (|S|+1)4\bigM W_{\max} |w_3|$,
        \acctodo{ACCOUNTING}
    \end{enumerate}
    then $\sup\{\pot(w)\mid w\in (\Gamma^0_0)^*\}=\infty$.
\end{corollary}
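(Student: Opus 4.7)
The plan is to combine Proposition~\ref{prop:discharging words with bounded decrease charge is fair decomposition} with Lemma~\ref{lem: decomposed seq with cover sparse no ghosts implies unbounded potential}. Concretely, I would observe that all the ``structural'' hypotheses of Lemma~\ref{lem: decomposed seq with cover sparse no ghosts implies unbounded potential} are already present in the hypothesis of the corollary: $\diswords$ is $\ell$-sparse with gap $4G+(|S|+1)4\bigM W_{\max}|w_3|$, has a ghost-free decomposition $G$-cover, and is an elongated word sequence. The only missing ingredient is the existence of a seamless run $\rho$ and a function $b:\bbN\to\bbN$ with $\lim_{m\to\infty}b(m)=\infty$ with respect to which the associated decomposition of $w_3$ is decreasing-gap fair.

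The first step is to supply this missing ingredient. For every $m\in\bbN$ with $\diswords(m)=w_1w_2w_3w_4$, take $\rho$ to be the (seamless) baseline run on $w_1w_2w_3w_4$, which exists by the definition of an elongated words sequence (\cref{def:elongated words sequence}) and satisfies $\weight(\rho(v))=0$ for every prefix $v$. Then for every prefix $v$ of $w_3$ we have
\[
\weight(\rho(w_1w_2v))+\charge(w_1w_2v)=\charge(w_1w_2v),
\]
so the decreasing-gap fairness condition of \cref{def:decreasing gap fair decomposition} reduces exactly to the charge-decrease condition in the second bullet of \cref{def:discharging decomposition}. Since $\diswords$ is a discharging word sequence by hypothesis, we have $\charge(w_1w_2)-\charge(w_1w_2w_3)>m$, providing the function $b(m)=m$ required in \cref{def:decreasing gap fair decomposition} for $\rho$.

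Next I would invoke Proposition~\ref{prop:discharging words with bounded decrease charge is fair decomposition}: since $\charge$ has $B$-bounded decrease on $w_3$ by the first hypothesis of the corollary, the discharging decomposition is a decreasing-gap fair decomposition with gap function $\sqrtb(m)=\sqrt{m}$ (up to constants depending on $B$) relative to $\rho$. In particular $\lim_{m\to\infty}d(m)=\infty$ and $\lim_{m\to\infty}\sqrtb(m)=\infty$, so the decomposition fits Definition~\ref{def:decreasing gap fair decomposition}.

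Finally, all hypotheses of Lemma~\ref{lem: decomposed seq with cover sparse no ghosts implies unbounded potential} are now verified for $\words=\diswords$ with the discharging decomposition, the baseline run $\rho$, and $b(m)=m$; applying the lemma directly yields $\sup\{\pot(w)\mid w\in(\Gamma_0^0)^*\}=\infty$. I expect no significant obstacle here, since the work is essentially a bookkeeping reduction: the only subtlety is confirming that ``decreasing gap relative to $\rho$'' coincides with ``decreasing charge'' when $\rho$ is baseline (and hence identically $0$), which is immediate from the telescoping rewriting above.
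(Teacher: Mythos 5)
Your proposal is correct and follows the same route the paper takes (the paper simply chains Proposition~\ref{prop:discharging words with bounded decrease charge is fair decomposition} into Lemma~\ref{lem: decomposed seq with cover sparse no ghosts implies unbounded potential}). The only minor redundancy is that your first paragraph re-derives the fact that the baseline run (constantly $0$) together with $b(m)=m$ witnesses the premise of \cref{def:decreasing gap fair decomposition}; this is already implicit in the proof of Proposition~\ref{prop:discharging words with bounded decrease charge is fair decomposition}, which itself instantiates the decreasing-gap fairness with the baseline run.
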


Our main result of the section is again an inductive lemma (similar to \cref{lem:leveled with large gaps implies unbounded potential or D dip}), guaranteeing either a $D$-dip or unbounded potential, given a discharging words sequence.
\begin{lemma}[\keyicon \lightbulbicon Discharging and Gap to Unbounded Potential (or $D$-Dip)]
    \label{lem:discharge with large gaps implies unbounded potential or D dip}
    Let $\ell,B\in \bbN$ and consider a discharging words sequence $\diswords$ that is $\ell$-sparse with gap $(|S|+1)4\bigM W_{\max}|w_3|$. 
    Also assume $\charge$ has $B$-bounded decrease on $w_3$ for every $\diswords(m)=w_1w_2w_3w_4$.
    Then either $\sup\{\pot(w)\mid w\in (\Gamma_0^0)^*\}=\infty$ or $\diswords$ has the infix $D$-dip property for some $D\in \bbN$.
\end{lemma}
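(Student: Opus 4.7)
The proof will proceed by reverse induction on $\ell$ from $|S|$ down to $1$, directly paralleling the structure of \cref{lem:leveled with large gaps implies unbounded potential or D dip} but with \cref{cor: discharging bounded charge cover sparse no ghosts implies unbounded potential} playing the role that \cref{prop:leveled with large gaps no ghost G cover implies unbounded potential or D dip} plays in the leveled argument. The $B$-bounded decrease hypothesis on $\charge$ is preserved by every subsequence I construct, since \cref{def:charge bounded decrease} is a local statement about single-letter jumps inside the $w_3$ infix; similarly, the $\ell$-sparse gap $(|S|+1)4\bigM W_{\max}|w_3|$ only grows when $|w_3|$ shrinks, so it is inherited ``for free'' by truncations.

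For the base case $\ell = |S|$, every reachable state lies on one of the $|S|$ configuration-independent runs, so there are no ghost runs along $w_3$ and the configuration $G$-cover holds with $G = 0$. The sparsity gap assumption coincides with the requirement $4G + (|S|+1)4\bigM W_{\max}|w_3|$ of \cref{cor: discharging bounded charge cover sparse no ghosts implies unbounded potential} when $G = 0$. Combined with \cref{prop:discharging words with bounded decrease charge is fair decomposition} (which turns the bounded-decrease hypothesis into a decreasing-gap fair discharging decomposition, \cref{def:discharging decomposition}), the corollary applies and gives $\sup\{\pot(w)\mid w\in (\Gamma_0^0)^*\} = \infty$ outright.

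For the inductive step, I split into three cases exactly as in the leveled lemma. \textbf{Case 1 (unboundedly long ghost runs)}: for each bound $l$, pick $\diswords(m)=w_1w_2w_3w_4$ with a ghost run of length $\geq l$ inside $w_3$ and flatten the prefix up to that ghost run with an unfolding constant large enough to push the newly-reachable states above every existing independent run by more than the required gap (via \cref{lem:flattening configuration characterization}). Because flattening preserves weights on states already in $\booltrans(s_0,\cdot)$, it preserves $\charge$ (cf.\ \cref{prop:unfolding maintains charge}) and hence the discharging property; this produces a new $(\ell+1)$-sparse $\diswords'$ to which the induction hypothesis applies, and a resulting infix $D$-dip conclusion transfers back to $\diswords$ since $\ghostTrans$ is unchanged by flattening. \textbf{Case 2 (non-cover)}: if for every $G$ there is an $m$ and a state drifting more than $G$ from all $\ell$ independent runs on a suffix of $w_3$ of length at least $\tfrac{1}{2}|w_3|$, the seamless run to that state is an $(\ell+1)$-th independent run on that suffix, and we again invoke the induction hypothesis. \textbf{Case 3 (good)}: otherwise we have a configuration $G$-cover on a long suffix and no long ghost runs, so \cref{cor: discharging bounded charge cover sparse no ghosts implies unbounded potential} directly delivers unbounded potential.

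The main obstacle is a genuine subtlety absent from the leveled proof: in Cases~2 and 3 I must restrict $\diswords$ to a suffix $v$ of $w_3$, and the restricted sequence must remain \emph{discharging}, i.e.\ the charge drop of magnitude at least the new parameter $m'$ must occur across $v$. Unlike the leveled band, the drop $\charge(w_1w_2)-\charge(w_1w_2w_3)\geq m$ guaranteed by $\diswords$ need not split evenly between the discarded prefix $u$ and the kept suffix $v$. I plan to overcome this by choosing the split \emph{adaptively}: for each target $m'$, pick $m$ large enough (using that $m \to \infty$) and let $u$ be the maximal prefix of $w_3$ with $\charge(w_1w_2u)-\charge(w_1w_2w_3)\geq m'$. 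Then the suffix $v$ inherits a clean discharging statement $\charge(w'_1w'_2)-\charge(w'_1w'_2w'_3)\geq m'$, while the second property of \cref{def:discharging word sequence} is automatic on any infix. The $B$-bounded decrease hypothesis simultaneously forces $|u|$ to grow with $m$, so $|v|\to\infty$ and Cases~2--3 can be set up on $v$ while staying inside the class of discharging words sequences.
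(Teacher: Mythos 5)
Your base case, Case~1 (long ghost runs), and Case~3 are essentially the paper's argument. The genuine gap is in how you handle the non-cover case, and your ``adaptive split'' does not close it.

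The problem is that the location of the drifting state and the location of the charge drop are independent of each other, and your adaptive split ties the truncation point to the charge. Concretely: suppose the state $q$ drifts more than $G$ away from all independent runs somewhere in the \emph{middle} of $w_3$, while essentially all of the charge decrease of size $m$ happens near the end. Your adaptive rule lets $u$ be the maximal prefix with $\charge(w_1w_2u)-\charge(w_1w_2w_3)\geq m'$, which then forces $u$ to extend past the middle; the retained suffix $v$ is discharging but no longer witnesses any drift, so the $(\ell+1)$-st independent run you wanted to add does not exist on $v$. Conversely, if you instead truncate so that the drift is retained (as in the leveled lemma, keeping a tail of length about $m'$ on which $q$ stays far away), the charge decrease across that tail may be \emph{bounded} (in fact it is bounded by roughly $\sqrt{m}+B$ when the drift falls inside a single segment of the discharging decomposition), so the restricted sequence fails to be discharging and your induction hypothesis does not apply.

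The paper avoids exactly this by measuring the non-cover condition against the \emph{decomposition} $G$-cover --- i.e.\ drift at the endpoints of segments $u_1,\dots,u_{d(m)}$ of the discharging decomposition --- and then splitting the non-cover case into two subcases according to the behavior of $\charge$ on a tail of the offending segment $u_i$: if the charge drops unboundedly on some suffix of $u_i$ (Case~1a) one does get a new discharging sequence with $\ell+1$ independent runs and can recurse; but if the charge is essentially flat on $u_i$ (Case~1b, which is forced when the drop is bounded by some $\kappa$), the restricted sequence is \emph{leveled}, not discharging, and the argument must hand off to \cref{lem:leveled with large gaps implies unbounded potential or D dip}. This handoff to the leveled lemma is the essential missing ingredient in your proposal: without it, the induction simply cannot be closed in the regime where a segment carries a large drift but only a moderate charge decrease. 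So, besides the case distinction, your proof also needs \cref{lem:leveled with large gaps implies unbounded potential or D dip} as a subroutine, which you nowhere invoke.
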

\begin{proof}
    We prove the lemma by reverse induction on $\ell$. Again, there are similarities to the proofs of \cref{lem:leveled with large gaps implies unbounded potential or D dip,lem:unbounded long dip imply separated inc infix}.
    
    \paragraph*{Base case: $\ell=|S|$} Recall that the maximal number of configuration-independent runs is $|S|$ (see \cref{def:configuration independent runs}). 
    In this case, for every $m\in \bbN$ with $\diswords(m)=w_1w_2w_3w_4$ we have that in every configuration in $w_3$, each state belongs to one of the independent runs. 
    In particular, $\diswords$ is a configuration $G$-cover with $G=0$, and is also $\ell$-sparse by assumption, with gap $4G+(|S|+1)4\bigM W_{\max}|w_3|$ (since $G=0$).

    Moreover, since $\ell=|S|$, then in particular every state is reachable along any prefix of $w_3$, and therefore there are no ghost runs at all along $w_3$. 
    Thus, the conditions of \cref{cor: discharging bounded charge cover sparse no ghosts implies unbounded potential} are met, so $\sup\{\pot(w)\mid w\in (\Gamma_0^0)^*\}=\infty$, and we are done.

    As a side-note, observe that for this case we do not actually need the assumption that $\charge$ has bounded decrease, as this readily follows from the fact that there are $|S|$ independent runs.

    \paragraph*{Inductive case: $\ell<|S|$}
    For the inductive case, we start by decomposing $\diswords$ to its discharging decomposition (\cref{def:discharging decomposition}) $w_3=u_1\cdots u_{d(m)}u_{d(m)+1}$. 
    Since $\charge$ has $B$-bounded decrease, then by \cref{prop:discharging words with bounded decrease charge is fair decomposition}, this decomposition is a decreasing-gap fair decomposition.
    We now split to three subcases. 
    
    \subparagraph*{Case 1: a diverging run on a segment}
    In this case we essentially assume that the independent runs are not a decomposition $G$-cover for any $G$, and use this to construct a new independent run from states that stay far away from all independent runs. We then apply the induction hypothesis. This starts similar to the analogous case in \cref{lem:leveled with large gaps implies unbounded potential or D dip}, but differs in that the charge constraints are not necessarily met, and requires further case analysis.

    Formally, assume that for every $G\in \bbN$ there exists $m>2G$ with $\diswords(m)=w_1w_2w_3w_4$ and $w_3=u_1\cdots u_{d(m)}u_{d(m)+1}$ such that there exist $1\le i\le d(m)$ where for 
    $\vec{c_i}=\xconf(s_0,w_1w_2u_1\cdots u_i)$ there is a state $q\in \supp(\vec{c_i})$ for which $\vec{c_i}(q)$ is not withing gap $G$ of any independent run.
    That is, we assume that $\diswords(m)$ does not have a decomposition $G$-cover (\cref{def: G cover decomposed words}).
    
    By the assumption, for every $n\in \bbN$ there exists $m\in \bbN$ such that for $1\le i\le d(m)$ and $q$ above it holds that $q$ is not within gap $2W_{\max}n+(|S|+1)4\bigM W_{\max}n$ of any independent run, and $|u_i|>n$. 
    In particular, there is a seamless runs $\rho:s_0\runsto{w_1w_2u_1\cdots u_i}q$ such that for every split $u_i=v_0v$ with $|v|\le n$ we have that $\weight(\rho(w_1w_2u_1\cdots u_{i-1}v_0))$ is not within gap $(|S|+1)4\bigM W_{\max}n$ of any independent run. 

    With these notations, we divide into cases, depending on the behavior of $\charge$ on $u_i$.

    \subparagraph*{Case 1a: increasing charge}
    In this case, we assume that for every $n\in \bbN$ there exists $m>n$ and  $\diswords(m)=w_1w_2w_3w_4$ with $u_i$, $q$ and $\rho$ as above such that there is a split $u_i=v_0v$ with
    $\charge(w_1w_2u_1\cdots u_{i-1}v_0)-\charge(w_1w_2u_1\cdots u_{i-1}v_0v)>n$. That is, there is a suffix $v$ of $w_3$ on which the charge decreases significantly. Note the similarity to the condition of a discharging words sequence (\cref{def:discharging word sequence}). 

    We therefore construct a discharging words sequence $\diswords'$ by defining $\diswords'(n)=w'_1w'_2w'_3w'_4$ with $w'_1=w_1, w'_2=w_2u_1\cdots u_{i-1}v_0, w'_3=v$ and $w'_4=u_{i+1}\cdots u_{d(m)+1}w_4$.

    Since $\diswords$ is already a discharging words sequence, and $u_i$ is a segment in the decomposition, then by the minimality criterion of \cref{def:discharging decomposition}, we have that the charge along $u_i$ does not exceed $\charge(w_1w_2u_1\cdots u_i)$. Therefore, combined with the condition on $v$ above, we have that $\diswords'$ satisfies both the conditions of \cref{def:discharging word sequence}, so it is a discharging words sequence.

    Moreover, as we observed above, the run $\rho$ on the suffix $v$ can be added as an independent run with the gap required in the induction assumption. Note that the remaining runs certainly have large enough gaps, since $m>n$. We therefore satisfy the induction assumption, so by the induction hypothesis, either the potential is unbounded (and we are done), or $\diswords'$ has the Infix $D$-dip property for some $D\in \bbN$. 

    By an identical argument to \cref{lem:leveled with large gaps implies unbounded potential or D dip}, we can lift this $D$-dip property to $\diswords$ itself, and we are done.

    \subparagraph*{Case 1b: non-increasing charge}
    Complementing Case 1a, we now assume that there exists $\kappa$ such that for every $m>\kappa$ and $\diswords(m)=w_1w_2w_3w_4$ with $u_i$, $q$ and $\rho$ as above, every split $u_i=v_0v$ satisfies
    \[\charge(w_1w_2u_1\cdots u_{i-1}v_0)-\charge(w_1w_2u_1\cdots u_{i-1}v_0v)\le \kappa\] 

    Our goal in this case is to obtain from these segments a leveled words sequence. Indeed, the condition above is almost the condition for leveled words, but is missing the absolute value. 
    However, since $u_i$ is a \emph{minimal} infix of $w_3$ upon which the charge decreases enough, as per \cref{def:discharging decomposition}, it follows that $\charge(w_1w_2u_1\cdots u_{i-1}v_0)-\charge(w_1w_2u_1\cdots u_{i-1}v_0v)>0$, so we can add the absolute value and assume $|\charge(w_1w_2u_1\cdots u_{i-1}v_0)-\charge(w_1w_2u_1\cdots u_{i-1}v_0v)|\le \kappa$. 
    
    We construct a sequence $\levwords$ as follows: for every $n\in \bbN$ take $m$ for which $|u_i|>n$ (recall that the segments increase in length by \cref{prop:discharging words with bounded decrease charge is fair decomposition}). Define $\levwords(n)=w'_1w'_2w'_3w'_4$ with $w'_1=w_1,w'_2=w_2u_1\cdots u_{i-1}$, $w'_3=u_i$ and $w'_4=u_{i+1}\cdots u_{d(m)+1}w_4$. 
    By the above, we have that $\levwords$ is indeed a leveled words sequence. Moreover, it inherits from $\diswords$ the properties of being $\ell$-sparse and having gap $(|S|+1)4\bigM W_{\max}|w_3|$. 
    We can therefore invoke \cref{lem:leveled with large gaps implies unbounded potential or D dip}, and we obtain that either the potential is unbounded, or $\levwords$ has the Infix $D$-dip property for some $D\in \bbN$. As with the case above, an identical argument to \cref{lem:leveled with large gaps implies unbounded potential or D dip} shows that this lifts to $\diswords$, so we are done.

    \subparagraph*{Case 2: long ghost runs}
    The next case is not mutually exclusive to Case 1, so we assume Case 1 does not hold (even though we do not use this fact).
    We assume that there are infinitely many $m\in \bbN$ with $\diswords(m)=w_1w_2w_3w_4$ and decomposition $w_3=u_1\cdots u_{d(m)+1}$ such that there exists a ghost run over an entire segment $u_i$. More precisely, there exists $p\in \ghostTrans(s_0,w_1w_2u_1\cdots u_{i-1})$ and a ghost run $\rho:p\runsto{u_i}S$. 

    Our approach in this case is similar to that taken in \cref{lem:leveled with large gaps implies unbounded potential or D dip} -- we flatten the prefix up to $u_i$, thus making the ghost run a real run that satisfies the gap constraints. Thus, we can apply induction.

    Formally, we define a new sequence $\diswords'$ as follows. For every $m'\in \bbN$ let $m>m'$ be such that $\diswords(m)=w_1w_2w_3w_4$ has a ghost run on $u_i$, and we require
    \begin{equation}
    \label{eq:discharging induction condition on ui}
    \charge(w_1w_2u_1\cdots u_{i-1})- \charge(w_1w_2u_1\cdots u_{i})>m'
    \end{equation}
    Note that we can require this by since $u_i$ is a segment in a discharging decomposition, as per \cref{def:discharging decomposition}.
    Define $\diswords'(m')=w'_1w'_2w'_3w'_4$ with
    $w'_1=\flatten(w_1w_2u_1\cdots u_{i-1} \wr (|S|+1)4\bigM W_{\max}|w_1w_2w_3w_3|)$, $w'_2=\epsilon$, $w'_3=u_i$ and $w'_4=u_i\cdots u_{d(m)+1}w_4$.

    Since our alphabet does not contain rebase letters, by \cref{def:cactus rebase flattening} the resulting prefix $x'_1$ is over $\Gamma_0^0$. 
    Denote $\vec{c}=\xconf(\vec{c_{init}},w_1w_2u_1\cdots u_{i-1})$ and $\vec{c'}=\xconf(\vec{c_{init}},w'_1w'_2)$, then by 
    \cref{lem:flattening configuration characterization}
    we have that $\vec{c}(q)=\vec{c'}(q)$ for every $q\in \booltrans(s_0,w_1w_2)$, and $\vec{c'}(p)$ is much larger than any finite entry in $\vec{c}$ (according to the flattening constant above). 
    Specifically, the entire run $\rho$ from $p$ is above all the existing $\ell$ independent runs with gap at least $(|S|+1)4\bigM W_{\max}|w_3|$.
    It follows that $\rho$ can be added as an independent run. Thus, $\diswords'$ has $\ell+1$ independent runs. Moreover, note that $\diswords'$ remains a discharging words sequence. Indeed, 
    since the minimal states and baseline states are always reachable and therefore maintain their weight after flattening. In particular, by \cref{eq:discharging induction condition on ui} $u_i$ satisfies the discharge constraints of \cref{def:discharging word sequence} (to be more precise, it satisfies the second requirement by the minimality constraint of \cref{def:discharging decomposition}).
    
    We can therefore apply the induction hypothesis on $\diswords'$. By the hypothesis, if $\sup\{\pot(w)\mid w\in (\Gamma_0^0)^*\}=\infty$, then we are done. Otherwise, we get that $\diswords'$ has the $D$-dip property, and we again repeat the argument in \cref{lem:leveled with large gaps implies unbounded potential or D dip} to lift this to $\diswords$ (note that here the argument is the involved case therein, since we use flattening) and we are done.

    \subparagraph*{Case 3: Ghost free $G$-cover}
    Our final case assumes both Case 1 and Case 2 do not hold. Since Case 2 does not hold, there are at most finitely many $m$ for which there are ghost runs over segments in the decomposition. 
    We slightly strengthen this to assume there are no ghost runs over any segments at all. This is possible since we can replace $\diswords$ by duplicating $\diswords(m)$ for some very large $m$ into $\diswords(i)$ for all $i\le m$. Note that this does not interfere with the Infix $D$-dip property, which anyway refers to infinitely many infixes and ignores the replacement of finitely many elements.
    We therefore assume that the discharging decomposition of $\diswords$ is Ghost Free, as per \cref{def:ghost free decomposition}.

    Next, since Case 1 does not hold, then there exists $G\in \bbN$ such that for every $m>2G$ with $\diswords(m)=w_1w_2w_3w_4$ and $w_3=u_1\cdots u_{d(m)+1}$, for every $1\le i\le d(m)$ and $\vec{c_i}=\xconf(\vec{c_{\init}},w_1w_2u_1\cdots u_i)$, every state $q\in \supp(\vec{c_i})$ is within gap $G$ from some independent run. 
    As above, we can assume this holds in fact for every $m$, by possibly replacing finitely many elements of $\diswords$.
    Thus, the discharging decomposition of $\diswords$ is a decomposition $G$-cover (as per \cref{def: G cover decomposed words}).

    We can now almost apply \cref{cor: discharging bounded charge cover sparse no ghosts implies unbounded potential}, except the gap required there is slightly bigger than the gap in $\diswords$. Identically to \cref{lem:leveled with large gaps implies unbounded potential or D dip}, this is an artificial problem, and can be resolved by shifting the elements of $\diswords$, i.e., defining $\diswords'(m)=\diswords(m+G)$, which increases the bound by the necessary $4G$. 

    We can therefore invoke \cref{cor: discharging bounded charge cover sparse no ghosts implies unbounded potential} to conclude that the potential is unbounded, and we are done.
\end{proof}

In order to eliminate the dependency on bounded-decrease of charge, we observe that if for every $B\in \bbN$ there exists $m\in \bbN$ such that for $\diswords(m)=w_1w_2w_3w_4$ the charge upon reading $w_3$ does not have $B$-bounded decrease, then we can apply \cref{lem:charge no bounded decrease then potential unbounded} with the word $u=w_1w_2x$ and $\sigma\in \Gamma_\infty^0$ for an appropriate prefix $x\sigma$ of $w_3$ (upon which the charge decreases by $B$). We then get the the potential over $\Gamma_0^0$ is unbounded.
Combining this with \cref{lem:discharge with large gaps implies unbounded potential or D dip} we conclude with the following.
\begin{corollary}
\label{cor:discharge with sparse gap implies unbounded potential}
    Let $\ell\in \bbN$. If there exists a discharging words sequences $\diswords$ that is $\ell$-sparse with gap $(|S|+1)4\bigM W_{\max}|w_3|$. Then either $\sup\{\pot(w)\mid w\in (\Gamma_0^0)^*\}=\infty$ or $\diswords$ has the infix $D$-dip property for some $D\in \bbN$.
\end{corollary}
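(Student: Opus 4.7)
The plan is to case-split on whether the charge is uniformly bounded-decrease across the sequence $\diswords$, and dispatch each case using an existing lemma.

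First, consider the easy case: suppose there exists $B\in \bbN$ such that for every $m\in \bbN$ with $\diswords(m)=w_1w_2w_3w_4$, the infix $w_3$ has $B$-bounded-decrease charge (\cref{def:charge bounded decrease}). Then $\diswords$ satisfies the full hypothesis of \cref{lem:discharge with large gaps implies unbounded potential or D dip} (it is $\ell$-sparse with the required gap by assumption, and the extra bounded-decrease condition now holds), so we directly conclude that either $\sup\{\pot(w)\mid w\in (\Gamma_0^0)^*\}=\infty$ or $\diswords$ has the Infix $D$-dip property, and we are done.

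Now suppose the complementary case: for every $B\in \bbN$ there exists $m\in \bbN$ such that, writing $\diswords(m)=w_1w_2w_3w_4$, the charge along $w_3$ fails to have $B$-bounded decrease. Unpacking \cref{def:charge bounded decrease}, this means there is a prefix $x$ of $w_3$ and a letter $\sigma\in \Gamma'$ (the next letter of $w_3$ after $x$) with $\charge(w_1w_2x)-\charge(w_1w_2x\sigma)\ge B$. Setting $u=w_1w_2x$, the pair $(u,\sigma)\in \Gamma'^*\times \Gamma'$ witnesses charge-decrease of at least $B$. Since this can be done for every $B\in \bbN$ over the fixed finite alphabet $\Gamma'\subseteq \Gamma_\infty^0$, the premise of \cref{lem:charge no bounded decrease then potential unbounded} is satisfied. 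Applying that lemma yields $\sup\{\pot(w)\mid w\in (\Gamma_0^0)^*\}=\infty$, which again gives the desired conclusion.

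The two cases are exhaustive, so the corollary follows. The only subtlety is the bookkeeping in the second case: we must verify that $\sigma$ indeed lies in the finite alphabet $\Gamma'$ (which holds since $w_3\in \Gamma'^*$), that $u$ lies in $\Gamma'^*$, and that the charge is defined at $u$ and $u\sigma$ (which holds since $w_1w_2w_3w_4$ has a seamless baseline run, whose restriction to any prefix is still seamless). These are routine, and no further machinery is needed beyond the two lemmas cited.
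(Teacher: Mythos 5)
Your proof is correct and matches the paper's own argument essentially verbatim: the paper likewise dichotomizes on whether a uniform bound $B$ for bounded-decrease charge exists, dispatching the bounded case via \cref{lem:discharge with large gaps implies unbounded potential or D dip} and the unbounded case via \cref{lem:charge no bounded decrease then potential unbounded} (applied with $u=w_1w_2x$ and the next letter $\sigma$ of $w_3$). Your extra bookkeeping at the end — confirming $u,\sigma$ live in the fixed finite $\Gamma'$ and that $\charge$ is defined on the relevant prefixes — is a correct and welcome precision that the paper leaves implicit.
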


Similarly to \cref{cor:leveled implies unbounded potential or D dip} for leveled words, we can start with a set of independent runs of size $1$, that vacuously has a large gap, and obtain our main tool of this section.
\begin{corollary}[\keyicon Discharging Words to Unbounded Potential or $D$-Dip]
    \label{cor:discharge implies unbounded potential or D dip}
    Consider a $\diswords$ sequence, then either $\sup\{\pot(w)\mid w\in (\Gamma_0^0)^*\}=\infty$, or $\diswords$ has the Infix $D$-dip property for some $D\in \bbN$. 
\end{corollary}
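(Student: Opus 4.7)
The plan is to reduce Corollary \ref{cor:discharge implies unbounded potential or D dip} to the already-established Corollary \ref{cor:discharge with sparse gap implies unbounded potential} by supplying, for free, a trivial set of independent runs. Since Corollary \ref{cor:discharge with sparse gap implies unbounded potential} requires $\diswords$ to be $\ell$-sparse with gap $(|S|+1)4\bigM W_{\max}|w_3|$, our task is to verify that every $\diswords$ sequence is (vacuously) $1$-sparse with this gap.

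Concretely, I would argue as follows. Given any $\diswords$ sequence and any $m \in \bbN$ with $\diswords(m)=w_1w_2w_3w_4$, recall from Definition \ref{def:elongated words sequence} that the elongating words sequence assumption guarantees that $w_1w_2w_3w_4$ admits a seamless baseline run $\rho_{\mathrm{base}}$. Take $I=\{\rho_{\mathrm{base}}\}$, which has $|I|=1=\ell$. By the last sentence of Definition \ref{def:configuration independent runs}, a singleton set of runs is independent by definition, and the gap requirement is vacuous since there are no pairs to compare. Thus $\diswords$ is $1$-sparse with gap $K(|w_3|)=(|S|+1)4\bigM W_{\max}|w_3|$ (indeed, with arbitrary gap). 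This exactly matches the hypotheses of Corollary \ref{cor:discharge with sparse gap implies unbounded potential}, so applying it yields the desired dichotomy: either $\sup\{\pot(w)\mid w\in (\Gamma_0^0)^*\}=\infty$, or $\diswords$ has the Infix $D$-dip property for some $D\in \bbN$.

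There is no real obstacle here; this is a packaging step. The heavy machinery -- the reverse induction on $\ell$ in Lemma \ref{lem:discharge with large gaps implies unbounded potential or D dip}, handled via the three cases (a diverging run on a segment, long ghost runs, and the ghost-free $G$-cover case that invokes Corollary \ref{cor: discharging bounded charge cover sparse no ghosts implies unbounded potential}), together with the removal of the bounded-decrease-of-charge hypothesis via Lemma \ref{lem:charge no bounded decrease then potential unbounded} -- has already been carried out in establishing Corollary \ref{cor:discharge with sparse gap implies unbounded potential}. The only point worth emphasizing is the parallel to Corollary \ref{cor:leveled implies unbounded potential or D dip}: both corollaries succeed precisely because the $\ell=1$ base of the induction is free, so the sparsity assumption in the lemma is not really an assumption once one has a seamless baseline run in hand.
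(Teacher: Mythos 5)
Your proposal is correct and takes essentially the same route as the paper: the paper likewise obtains this corollary by applying \cref{cor:discharge with sparse gap implies unbounded potential} with $\ell=1$, using the seamless baseline run as the vacuously-independent singleton (the paper explicitly notes that "we can start with a set of independent runs of size $1$, that vacuously has a large gap"). Your observation that the real content already lives in \cref{lem:discharge with large gaps implies unbounded potential or D dip} and \cref{lem:charge no bounded decrease then potential unbounded} is also accurate.
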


\section{On Potential and Type-1 Witnesses}
\label{sec:potential to type 1 witness}
In \cref{sec:discharging and unbounded potential} we discuss the behavior of the charge, and relate it to unbounded potential. 
In this section, we complete the picture by relating unbounded potential to the existence of a type-1 witness.
In broad strokes, this section is very similar to \cref{sec:discharging and unbounded potential}, and we reuse its proof structures (as well as specific techniques).
However, there are many small differences that present new complications, and are therefore treated differently. We emphasize them as they occur.
Due to the similarities, we reuse some terms from \cref{sec:discharging and unbounded potential}, but add a ``$\pot$'' symbol to them, to emphasize their connection to potential.

Fix a finite alphabet $\Gamma'\subseteq \Gamma^0_\infty$ -- that is, without rebase letters (and also without jump letters). In addition, we require that $\Gamma_0^0\subseteq \Gamma'$ (i.e., $\Gamma'$ contains the original letters of $\augA$).
Let $W_{\max}$ be the maximal weight appearing in a transition over $\Gamma'$.

\subsection{A Condition for Type-1 Witness}
\label{sec:condition for type 1 witness}
In this section we introduce a notion of fair decompositions with respect to potential, and use it to establish a condition for the existence of a type-1 witness, analogously to \cref{sec: closing runs condition implies unbounded potential}. 
Our starting point remains elongated words sequence (\cref{def:elongated words sequence}) and its related definitions (e.g., sparsity, independent runs, gaps, etc.).

Our first notion is an analogue of decreasing-gap fair decompositions (\cref{def:decreasing gap fair decomposition}). In this case, however, the potential is increasing (with respect to some run). Note that unlike charge, the potential is generally not attained by the minimal run, and therefore the intuition of ``gap'' somewhat loses its meaning. Indeed, this later causes some case analysis that is not present in the charge setting.

\begin{definition}[\keyicon Increasing-Potential Fair Decomposition]
    \label{def:increasing potential fair decomposition}
    Consider a decomposed word sequence
    $\words$ and assume there exists a function $b:\bbN\to \bbN$ with $\lim_{n\to \infty}b(n)=\infty$ such that for every $m\in \bbN$ with $\words(m)=w_1w_2w_3w_4$, there is some seamless run $\rho$ on $w_1w_2w_3$ on such that $\pot(w_1w_2w_3)-\weight(\rho(w_1w_2w_3))>\pot(w_1w_2)+b(m)-\weight(\rho(w_1w_2))$.

    We say that the decomposition is \emph{increasing-potential fair} (with respect to $\rho$ and $b(n)$) if there is a function $\sqrtb:\bbN\to \bbN$ with $\lim_{n\to \infty}\sqrtb(n)=\infty$ such that the decomposition $w_3=u_1\cdots u_{d(m)}$ satisfies for every $1\le i<d(m)$ that
    \[ 
        \pot(w_1w_2u_1\cdots u_{i-1})+\sqrtb(m)- \weight(\rho(w_1w_2u_1\cdots u_{i-1}))\le \pot(w_1w_2u_1\cdots u_{i})-\weight(\rho(w_1w_2u_1\cdots u_{i}))
    \]
    (there is no requirement on the last ``remainder''segment).
\end{definition}

We prove an analogue of \cref{lem: decomposed seq with cover sparse no ghosts implies unbounded potential}, this time showing that an increasing-potential fair decomposition (with enough ``nice'' properties) implies the existence of a type-1 witness.

\begin{lemma}[\keyicon \lightbulbicon Increasing Potential Decomposition to Witness]
    \label{lem: decomposed inc pot with cover sparse no ghosts implies type 1 witness}
    Assume there exists $G,\ell\in \bbN$ and a decomposed word sequence $\words$ that is $\ell$-sparse with gap 
    $4G + (|S|+1)4\bigM W_{\max} |w_3|$ and a ghost-free decomposition $G$-cover.
    \acctodo{ACCOUNTING}
    In addition, assume the decomposition is increasing-potential fair with respect to a seamless run $\rho$ and function $b:\bbN\to \bbN$.
    Then $\augA_\infty^\infty$ has a type-1 witness.
\end{lemma}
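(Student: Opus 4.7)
The plan is to mirror the five-step strategy of \cref{lem: decomposed seq with cover sparse no ghosts implies unbounded potential}, replacing the concluding ``construct a high-potential word over $\Gamma_0^0$'' step with an explicit construction of a type-$1$ witness. As usual (\cref{rmk:the either witness prefix}), any intermediate appeal to \cref{lem:unfolding maintains potential} either preserves the potential or yields a type-$0$ witness, in which case we are done; so I assume preservation throughout. First, I would choose $m$ with $d(m) > |\difftypeset_{s_0,G}|$ and $\sqrtb(m)$ sufficiently large, decompose $w_3 = u_1 \cdots u_{d(m)+1}$, and by pigeonhole find $0 \le i_1 < i_2 < d(m)$ with $\difftype_{s_0,G}(w_1 w_2 u_1 \cdots u_{i_1}) = \difftype_{s_0,G}(w_1 w_2 u_1 \cdots u_{i_2})$. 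Setting $\widehat{x_0} = w_1 w_2 u_1 \cdots u_{i_1}$, $x = u_{i_1+1} \cdots u_{i_2}$, and $x_1 = u_{i_2+1} \cdots u_{d(m)+1} w_4$, the increasing-potential fairness (\cref{def:increasing potential fair decomposition}) forces $\pot(\widehat{x_0}x) - \weight(\rho(\widehat{x_0}x))$ to exceed $\pot(\widehat{x_0}) - \weight(\rho(\widehat{x_0}))$ by at least $(i_2 - i_1)\sqrtb(m)$. Next, I would flatten the prefix to $x_0 := \flatten(\widehat{x_0} \wr F) \in (\Gamma_0^0)^*$ for a sufficiently large $F$, using \cref{lem:flattening configuration characterization} to preserve the gaps among reachable states, and then pick a seamless run $\rhoss$ on $x_0 x$ minimizing $\weight(\eta(x_0 x)) - \weight(\eta(x_0))$. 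Baseline-shifting onto $\rhoss$ (\cref{sec: baseline shift}) makes $\rhoss$ the baseline run and every other seamless run non-decreasing along $\baseshift{x}{\rhoss}$ by \cref{cor:baseline shift maintains gaps}.

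I then define the candidate witness by $w_1 := \baseshift{x_0}{\rhoss} \in (\Gamma_0^0)^*$, $w_2 := \baseshift{x}{\rhoss} \in (\Gamma_\infty^1)^*$ (the shift may upgrade cactus letters inside $x$ to rebase letters of level $1$, but $x_0$ contained no cacti so is untouched), and $w_3 := z \in (\Gamma_\infty^\infty)^*$, where $z$ is a suffix witnessing the maximal dominance of $s^* := \maxdom(\xconf(\vec{c_\init}, w_1 w_2))$. The next step is to show that $(S', w_2)$ is a stable cycle with $S' := \ghostTrans(s_0, w_1)$, using the same partition/ghost-freeness argument as Step~3 of \cref{lem: decomposed seq with cover sparse no ghosts implies unbounded potential}: the $\difftype_{s_0,G}$ equivalence splits $S'$ into sub-configurations $V_1 \cup \cdots \cup V_k$ with large gaps, $V_1$ sitting at baseline level, and runs on $w_2$ map each $V_i$ into itself (non-decreasing plus sparsity gaps rule out level-crossing, and ghost-freeness rules out negative ghost cycles).

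Three of the four conditions of \cref{def:witness} should then fall out directly: condition~(1) on alphabet levels and the seamless baseline run, the stable-cycle condition~(2), and $\booltrans(s_0, w_1) = \booltrans(s_0, w_1 w_2)$ in~(3) from the preserved $\difftype$. The main obstacle I anticipate is condition~(4). ``$\minweight(w_1 w_2 z, s_0 \to S) < \infty$'' is easy: the seamless run reaches $s^*$ with finite weight and $z$ survives from $s^*$. For ``$\minweight(w_1 \alpha_{S', w_2} z, s_0 \to S) = \infty$'', the plan is to argue that $\GroundPairs(S', w_2)$ is confined to $V_1$: no pair in $V_j$ with $j>1$ admits a weight-$0$ reflexive cycle, by the strict growth under iteration of $w_2$ that the sparsity gap guarantees over $2\bigM$ copies (as in Step~4 of the earlier proof). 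Consequently $\supp(\xconf(\vec{c_\init}, w_1 \alpha_{S', w_2})) \subseteq V_1$, whose states keep weights bounded near $0$ via \cref{lem:pumping grounded pairs}, hence strictly below $\vec{c}(s^*) \ge (i_2 - i_1)\sqrtb(m) - O(G)$; the choice of $z$ then forces $\minweight(z, s \to S) = \infty$ for every such $s$. The truly delicate point will be comparing weights in $\xconf(\vec{c_\init}, w_1 \alpha_{S', w_2})$ (where $w_2$ is effectively pumped) against those in $\xconf(\vec{c_\init}, w_1 w_2)$ (one copy) that drive the choice of $z$, which is exactly where \cref{lem:pumping grounded pairs} is essential.
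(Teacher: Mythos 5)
Your overall plan—mirroring the charge-side proof (\cref{lem: decomposed seq with cover sparse no ghosts implies unbounded potential}), flattening, shifting to the sharpest-decreasing run $\rhoss$, and proving $x$ induces a stable cycle—matches the paper's Steps~1--3 exactly. The gap is in your handling of Requirement~4 of \cref{def:witness}.

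You claim that $\GroundPairs(S',w_2)$ is confined to $V_1 \times V_1$, deducing $\supp(\xconf(\vec{c_\init},w_1\alpha_{S',w_2}))\subseteq V_1$. This does not hold. After the baseline shift, all seamless runs are \emph{non-decreasing} on $x$, but ``non-decreasing'' does not mean ``strictly increasing''. The only level where you can certify a strict increase is $V_i$, the sub-configuration containing the maximal-dominant state $q_x$: there the potential identity $\vec{c_x}(q_x)=\vec{c_0}(q_x)+P$ with $P>4G$ forces every $x$-iteration to shift $V_i$ up by $P$, and the pumping argument of \cref{lem:pumping grounded pairs} then rules out grounded pairs inside $V_i$. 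Any other level $V_j$ with $j\ne i$ (in particular the level containing the baseline state $\rhoss(x_0)$, whose $0$-weight self-cycle immediately produces grounded pairs, and that level need not be $V_1$) can perfectly well carry $0$-weight reflexive cycles and hence grounded pairs. So the premise of your last step is false, and the desired containment of the post-$\alpha_{S',w_2}$ support in $V_1$ does not follow.

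The paper closes the argument differently, and you would need to adopt some version of it. First it shows (using the preserved $\difftype_{s_0,G}$) that the maximal-dominant state of $\vec{c_0}$ can be taken to be the same $q_x\in V_i$, so that $z$ simultaneously witnesses dominance at both endpoints. Then it splits by level: for $p\in V_j$ with $j<i$, seamlessness plus the gap bound forces $\booltrans(p,\alpha_{S',x})\subseteq\bigcup_{j'\le j}V_{j'}$, which $z$ kills; for $p\in V_i$, the ``no grounded pairs inside $V_i$'' fact sends everything strictly below $V_i$, which $z$ also kills; and for $p\in V_{i'}$ with $i'>i$, survival of $\alpha_{S',x}z$ would exhibit a dominant state strictly above $q_0$ in $\vec{c_0}$, contradicting maximal dominance of $q_0$. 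Your proposal addresses only the $V_1$ case and omits both the coincidence $q_0=q_x$ and the maximal-dominance contradiction for levels above $V_i$, which are the load-bearing parts of the argument. (A minor additional issue: your bound $\vec{c}(s^*)\ge(i_2-i_1)\sqrtb(m)-O(G)$ is not derived; the paper only needs and proves the weaker $P>4G$ via \cref{eq:type 1 existence potential is increasing}.)
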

\begin{proof}
Let $G,\ell\in \bbN$ and $\words$ be as in the premise of the lemma, and $b,\sqrtb:\bbN\to \bbN$ and $\rho$ as per \cref{def:decreasing gap fair decomposition}. 
Let $d:\bbN\to \bbN$ be the decomposition length function, i.e., for $m\in \bbN$ and $\words(m)=w_1w_2w_3w_4$ we have $w_3=u_1\cdots u_{d(m)}$.

Take $m\in \bbN$ large enough so that $d(m)>|\difftypeset_{s_0,G}|+1$ and $\sqrtb(m)>4G$. Since both $d$ and $\sqrtb$ are increasing, this is possible.
Consider the word $w_1w_2w_3w_4=\words(m)$ and let 
$w_3=u_1\cdots u_{d(m)}$ be its decreasing-gap fair decomposition.

By the pigeonhole principle, there exist $0\le i_1<i_2<d(m)$ such that 
\[\difftype_{s_0,G}(w_1w_2u_1\cdots u_{i_1})=\difftype_{s_0,G}(w_1w_2u_1\cdots u_{i_2})\]
(Observe that we take $d(m)$ large enough so that we can ignore the last segment, which is a remainder).

Denote 
$\widehat{x_0}=w_1w_2u_1\cdots u_{i_1}$ (the reason for the hat is that we modify $\widehat{x_0}$ to obtain a ``simpler'' $x_0$ below), $x=u_{i_1+1}\cdots u_{i_2}$ and $x_1=u_{i_2+1}\cdots u_{d(m)}w_4$.

\paragraph{Step 1: flatten the prefix before $x$.}
Recall that $\Gamma'\subseteq \Gamma_\infty^0$, i.e., has no rebase letters. It follows that the letters in $w_1w_2w_3w_4$ are either in $\Gamma_0^0$ or are (nested) cacti. Let $F>2\maxeff{\widehat{x_0}x}$, we define 
$x_0=\flatten(\widehat{x_0} \wr F)$. 
By \cref{def:cactus rebase flattening}, it follows that in order to obtain $x_0$, we only apply the cactus unfolding of \cref{def:cactus extension}. This implies that no jump letters are introduced in $x_0$ (indeed, those are introduced only in the case of rebase removal). Thus, $x_0\in (\Gamma_0^0)^*$. 
In the following we wish to reason about $x_0xx_1$ instead of $w_1w_2w_3w_4$. 
However, the main difference between the words is that due to the flattening, we now have $\booltrans(s_0,x_0)=\ghostTrans(s_0,\widehat{x_0})$. 
Consider the configurations $\vec{c'_0}=\xconf(\vec{c_{\init}},\widehat{x_0})$ and $\vec{c_0}=\xconf(\vec{c_\init},x_0)$, and similarly $\vec{c'_x}=\xconf(\vec{c_{\init}},\widehat{x_0}x)$ and $\vec{c_x}=\xconf(\vec{c_\init},x_0x)$.
By \cref{lem:flattening configuration characterization} for every $s,s'\in \booltrans(s_0,\widehat{x_0})$ we have $\vec{c'_0}(s)-\vec{c'_0}(s')=\vec{c_0}(s)-\vec{c_0}(s')$. Additionally, for $r\in \ghostTrans(s_0,\widehat{x_0})\setminus \booltrans(s_0,\widehat{x_0})$ and $r\in \booltrans(s_0,\widehat{x_0})$ we have 
$\vec{c_0}(r)-\vec{c_0}(s)\ge 2\maxeff{\widehat{x_0}x}$
\acctodo{ACCOUNTING}
and in particular ghost states attain higher values than states reachable by $\widehat{x_0}$. In the following, we only use the fact that these are higher, and not the concrete gap size.

We claim that for every prefix $x'$ of $x$ we have that $\pot(\widehat{x_0}x')=\pot(x_0x')$ (and in particular the potential is defined in both).
Indeed, since flattening in our case (of $\Gamma_\infty^0$) amounts to repeated unfolding, then by \cref{lem:unfolding maintains potential} either $\augA_\infty^\infty$ has a type-0 witness (and in particular a type-1 witness), in which case we are done, or $\pot(\widehat{x_0}x')=\pot(x_0x')$.

\paragraph{Step 2: baseline shift to sharpest decrease}
In this step, we use the fact that the potential is increasing with respect to $\rho$ upon reading $x$, and we shift the runs so that the potential becomes increasing in value (i.e., with respect to the baseline). Moreover, our shifting ensures that all runs are non-decreasing, which is used in Step 3.

Define $\rhoss$ to be the seamless run that has the sharpest decrease on $x$. Specifically, $\rhoss$ is the seamless run that minimizes $\weight(\eta(x_0x))-\weight(\eta(x_0))$ among all seamless runs $\eta$ on $x_0x$. 
If there are several such runs, we arbitrarily choose one. 

Recall that the increasing-potential assumption with respect to $\rho$ means that $\rho$ is decreasing relatively to the potential. Therefore, $\rhoss$ is also decreasing relatively to the potential, as it has the sharpest decrease among all runs.

We now perform a baseline shift of $x_0x$ on $\rhoss$, identically to \cref{lem: decomposed seq with cover sparse no ghosts implies unbounded potential}.
Since $\rhoss$ has the sharpest decrease on $x$, and since shifts maintain the gaps between runs (\cref{cor:baseline shift maintains gaps}), we have that all the shifted runs are non-decreasing. That is, for every run $\eta$ on $\baseshift{x_0x}{\rhoss}$ it holds that 
\begin{equation}
\label{eq: shifted runs are nonnegative in increasing potential}
    \weight(\baseshift{\eta}{\rhoss}(x_0x))-\weight(\baseshift{\eta}{\rhoss}(x_0))\ge 0
\end{equation}
Indeed, by \cref{cor:baseline shift to seamless run} we now have that $\baseshift{\rhoss}{\rhoss}$ is the baseline run (and in particular remains at weight $0$). Therefore, by \cref{cor:baseline shift maintains gaps} we have
\[\weight(\eta(x_0x))-\weight(\rhoss(x_0x))=\weight(\baseshift{\eta}{\rhoss}(x_0x))-0\] 
 and similarly 
\[\weight(\eta(x_0))-\weight(\rhoss(x_0))=\weight(\baseshift{\eta}{\rhoss}(x_0))-0\] 
 By subtracting the equations we get 
\[
\begin{split}
&\weight(\baseshift{\eta}{\rhoss}(x_0x))-\weight(\baseshift{\eta}{\rhoss}(x_0))=\\
&\weight(\eta(x_0x))-\weight(\rhoss(x_0x)) - \weight(\eta(x_0)) + \weight(\rhoss(x_0))=\\
&\weight(\eta(x_0x))- \weight(\eta(x_0))- (\weight(\rhoss(x_0x))  - \weight(\rhoss(x_0)))\ge 0
\end{split}
\]
Where the last inequality is because $\rhoss$ minimizes $\weight(\eta(x_0x))  - \weight(\eta(x_0))$ by definition.

Moreover, since the potential is also shifted (see \cref{prop:potential over baseline shift}), we have in particular that for every prefix $x'$ of $x$ it holds that $\pot(\baseshift{x_0x'}{\rhoss})=\pot(x_0x')-\weight(\rhoss(x_0x'))$. 
Recall that by the increasing-potential fair decomposition and by our choice of $\sqrtb(m)>4G$ we have 
$\pot(x_0x)-\weight(\rho(x_0x))>\pot(x_0)-\weight(\rho(x_0))+4G$, and by rearranging: $\pot(x_0x)-\pot(x_0)>\weight(\rho(x_0x))-\weight(\rho(x_0))+4G$.
Since $\rhoss$ has the sharpest decrease, then by the gap-preservation of baseline shift (\cref{cor:baseline shift maintains gaps}) and by the above we have
\[
\begin{split}
&\pot(\baseshift{x_0x}{\rhoss})-\pot(\baseshift{x_0}{\rhoss}) = \\
& \pot(x_0x)-\weight(\rhoss(x_0x))-\pot(x_0)+\weight(\rhoss(x_0)) >\\
& \weight(\rho(x_0x))-\weight(\rho(x_0))+4G -\weight(\rhoss(x_0x))+\weight(\rhoss(x_0))\ge 4G
\end{split}
\]
where the last inequality is obtained by plugging in $\eta=\rho$ in the inequalities above.
We conclude that
\[
\pot(\baseshift{x_0x}{\rhoss})-\weight(\baseshift{\rho(x_0x)}{\rhoss})>\pot(\baseshift{x_0}{\rhoss})-\weight(\baseshift{\rho(x_0)}{\rhoss})+4G
\]
That is, $\baseshift{x_0x}{\rhoss}$ maintains the property that its potential increases upon reading $x$ from $x_0$ by at least $4G$ with respect to $\baseshift{\rho}{\rhoss}$.

Finally, note that since $\baseshift{\rho}{\rhoss}$ is non-decreasing (as all the shifted runs are), then in particular $\weight(\baseshift{\rho(x_0x)}{\rhoss})-\weight(\baseshift{\rho(x_0)}{\rhoss})\ge 0$, and therefore by the inequality above we conclude that the potential increases (a lot) upon reading $\baseshift{x_0x}{\rhoss}$.
\begin{equation}
    \label{eq:type 1 existence potential is increasing}
    \pot(\baseshift{x_0x}{\rhoss})>\pot(\baseshift{x_0}{\rhoss})+4G
\end{equation}

A crucial detail to notice at this point is that since $x$ may contain cactus letters, then its baseline shift may contain \emph{rebase letters} (as per \cref{sec: baseline shift}). Thus, the alphabet after the shift is $\Gamma_\infty^1$, which is why the result of this section is a type-1 witness, rather than type-0.

To avoid the cumbersome notation of baseline shifts in the following, we reuse the names $x_0,x$, $w_1w_2w_3w_4$, and  the runs $\rhoss$ and $\rho$ for their baseline-shifted counterpart. This is sound, since the potential property is maintained, as we show above, and
the properties of being $\ell$-sparse with gap $4G+(|S|+1)4\bigM W_{\max}|w_3|$ and being a ghost-free decomposition $G$-cover are retained by the preservation of run structure as per \cref{cor:baseline shift maintains gaps}.

\paragraph{Step 3: $x$ induces a stable cycle.}
We remark that this step is identical, word-for-word, with the same step in \cref{lem: decomposed seq with cover sparse no ghosts implies unbounded potential}. The reason for the duplication is that we utilize the details of the proof in Step 4 (in particular, the partition to $V_j$ sub-configurations).

Since all the runs on $x$ are nonnegative (\cref{eq: shifted runs are nonnegative in increasing potential}) and $x$ yields a repetition of configurations, we gain the intuition that $x$ should induce a stable cycle. We show this is indeed the case. The difficult part is to show that there are no negative cycles induced by $x^n$ for any $n\in \bbN$. 
The proof relies on the fact that there are large gaps between the runs, and that the runs are a $G$-cover. Broadly, this is a similar proof to \cref{prop:exists inc inf x stable cycle}.

Let $S'=\ghostTrans(s_0,x_0)$, we claim that $(S',x)$ is a stable cycle. Let $\vec{c_0}=\xconf(s_0,x_0)$ as before and $\vec{c_x}=\xconf(s_0,x_0x)$. 
Since $\difftype_{s_0,G}(\widehat{x_0})=\difftype_{s_0,G}(\widehat{x_0}\cdot x)$, it follows that $S'=\ghostTrans(s_0,x_0 \cdot x)$. Indeed, we have $\booltrans(s_0,x_0)=\supp(\vec{c_0})=\supp(\vec{c_x})=\booltrans(s_0,x_0\cdot x)$, and by \cref{sec:reachable ghost states} this also implies the equivalent ghost states. Thus, we have in particular that $\booltrans(S',x)\subseteq S'$, so $(S',x)$ is a reflexive cycle.

The next step (as per \cref{def:stable cycle}) is to show that for the baseline state $s\in S'$ we have that $s\in\MinRefStates(S',x^n)$ for all $n\in \bbN$, and  that $\minweight(x^n,s\to s)=0$. 
Recall that after the baseline shift, we have that $\rho_\ssearrow$ is a baseline run on $x$ and has weight $0$, in particular $\weight(\rho_\ssearrow)=0$, and therefore $\weight(\rho^n_\ssearrow)=0$, so $x^n$ has a cycle of weight $0$. 

It remains to show that there are no negative cycles in $(S',x^n)$ for any $n$. That is, for every $s'\in S'$, we want to show that $\minweight(x^n,s'\to s')\ge 0$.
We start with an overview of how this is proved. 
Recall that our baseline shift operation ensures that $\rho_\ssearrow$ has weight $0$, and $\rho_\ssearrow$ before the rebase has the strongest decrease, so all seamless runs after rebase are non-negative (on $x$). 
This, however, does not immediately extend to $x^n$. Indeed, it could conceptually be that there are e.g., (non-seamless) negative runs $r_1\runsto{x} r_2$ and $r_2\runsto{x} r_1$, but then their concatenation yields a negative seamless cycle on $x^2$.

We therefore take an approach similar to the proof of \cref{lem:unbounded long dip and G imply separated inc infix}. Specifically, we notice that the equivalence of $\difftype_{s_0,G}$ before and after reading $x$ implies that we can partition these configurations to ordered sub-configurations $V_1,\cdots V_k$ such that all the ``interesting'' transitions are from $V_i$ are to $V_i$. We then show that after reading $x$, the actual values of the elements in the configurations of each $V_i$ does not decrease (due to $\rho_\ssearrow$ being the baseline). This allows us to conclude there are no negative cycles. 
For the ghost-states, we use the ghost-freeness assumption to conclude there cannot be negative cycles on them.
We proceed with the formal details.

We remind of the following notations of Step 1. $\vec{c'_0}=\xconf(s_0,\widehat{x_0})$ and $\vec{c_0}=\xconf(s_0,x_0)$. Also denote $\vec{c'_x}=\xconf(s_0,\widehat{x_0}x)$ and $\vec{c_x}=\xconf(s_0,x_0x)$. 
Strictly speaking, note that in Step 1 these configurations are defined before our baseline shift of Step 2. However, the gaps are maintained by the shift, so this overloading is justified.

Since $\difftype_{s_0,G}(\widehat{x_0})=\difftype_{s_0,G}(\widehat{x_0}x)$, and by the preservation of the reachable set, we have that for every $s\in \booltrans(s_0,\widehat{x_0})=\booltrans(s_0,\widehat{x_0}x)$ that $\vec{c_0}(s)=\minweight(x_0,s_0\to s)$ and $\vec{c_x}(s)=\minweight(x_0x,s_0\to s)$.
Put simply -- the configurations correctly track the weights of all states reachable by $\widehat{x_0}$ (we deal later with the remaining states, namely $S'\setminus\booltrans(s_0,\widehat{x_0})$).

We partition $\booltrans(s_0,\widehat{x_0})$ as $V_1\cup \ldots \cup V_k$ according to $\difftype_{s_0,G}(\widehat{x_0})$, similarly to the proof of \cref{lem:unbounded long dip and G imply separated inc infix}. Specifically, we write $p\equiv q$ if $|\vec{c_0}(p)-\vec{c_0}(q)|\le 2G$. Since $w_1w_2w_3w_4$ is a $G$-cover, this is an equivalence relation. We then define $V_1,\ldots,V_k$ to be the equivalence classes of $\sim$, ordered by $\le_{\vec{c_0}}$, i.e., all the states in $V_1$ are lower than those of $V_2$, etc.
By the equivalence of $\difftype$, we have that the partition induces by $\vec{c_x}$ is the same as that of $\vec{c_0}$.

Still similarly to \cref{lem:unbounded long dip and G imply separated inc infix}, since $w_1w_2w_3w_4$  is $\ell$-sparse with a huge gap, we have that if $p\in V_i$, $q\in V_j$ and there is a run $\eta:p\runsto{x} q$, then $i\ge j$. That is, there are no runs that go ``up'' between the sets (as the gap is to large to be bridged by $x$, so this would ``drag down'' $q$).
Moreover, if $q\in V_j$ there exists some $p\in V_j$ such that $p\runsto{x} q$, i.e., there  exist a predecessor of $q$ in $V_j$ (otherwise all predecessors of $q$ come from higher sets, so $q$ would be higher). Additionally, there must exist such $p\in V_j$ such that $\eta:p\runsto{x} q$ is seamless (indeed, runs from higher sets to $q$ cannot be seamless).

We claim that for every $q\in \booltrans(s_0,\widehat{x_0})$ it holds that $\vec{c_0}(q)\le \vec{c_x}(q)$. Intuitively, $x$ only induces increasing behaviors.
To show this, let $1\le i\le k$ such that $q\in V_i$. Let $q^0_i$ be the minimal state in $V_i$ (i.e., $q^0_i\in \arg\min\{\vec{c_0}(q)\mid q\in V_i\}$).

Let $p^0_i$ be a state in $V_i$ such that there exists a seamless run $\eta:p^0_i\runsto{x}q^0_i$. Recall that our baseline shift ensures there are no seamless negative runs. It follows that $\vec{c_x}(q^0_i)=\vec{c_0}(p^0_i)+\weight(\eta)\ge \vec{c_0}(p^0_i)\ge \vec{c_0}(q^0_i)$, where the last inequality is because we know that $q^0_i$ is the minimal in $V_i$. 
Therefore, the claim holds for $q^0_i$. 
For every other state $q\in V_i$ we have that $q\sim q^0_i$, and therefore $\vec{c_x}(q)-\vec{c_x}(q^0_i)=\vec{c_0}(q)-\vec{c_0}(q^0_i)$, so $\vec{c_x}(q)-\vec{c_0}(q)=\vec{c_x}(q^0_i)-\vec{c_0}(q^0_i)\ge 0$.

Notice that the argument above only assumes that the gaps are larger than $|x|$. Recall, however, that we originally start with gaps larger than $(|S|+1) |w_3|\ge  (|S|+1)|x|$ (this includes the gap from the ghost states, which is even larger). It therefore follows that we can repeat this argument for up to $|S|$ times, and obtain that the weights of each state is increased after reading $x^n$ for any $n\le |S|$.  

Now, assume by way of contradiction that there is a negative cycle $\eta:q\runsto{x^n}q$ such that $q\in \booltrans(s_0,\widehat{x_0})$ and $n\in\bbN$. We first notice that we can assume $n\le |S|$ (otherwise we can find an inner $x^{n'}$-cycle that is either itself a negative cycle, or is positive and we can shorten the original cycle). This, however, implies that $\minweight(x_0,s_0\to q)> \minweight(x_0 x^n,s_0\to q)$, which is a contradiction to our observation above.

It remains to show that there are no negative cycles stemming from ghost-states. 
In fact, we prove that there are no cycles (on $x^n$) on ghost states at all, negative or not.
This follows from ghost-freeness assumed in the lemma, as follows.
Assume by way of contradiction that there exists $g\in \ghostTrans(s_0,\widehat{x_0})\setminus \booltrans(s_0,\widehat{x_0})$ and $\eta:g\runsto{x^n}g$. We decompose $\eta$ as $\eta:g\runsto{x}g'\runsto{x^{n-1}}g$ according to the state reached after $x$.
Since $x$ spans at least one $u_i$ segment by definition, it follows from ghost-freeness (\cref{def:ghost free decomposition}) that $\rho$ is not seamless already before $g'$, i.e., we can write $x=y_1y_2$ such that $\eta:g\runsto{y_1}h\runsto{y_2}g'\runsto{x^{n-1}}g$ and we can assume $h\in \booltrans(s_0,\widehat{x_0}y_1)$. 

There is a slightly delicate point here: it could be that $\eta$ is not seamless but that the run that ``undercuts'' it is also from a ghost-state. However, we can then repeat this argument until eventually reaching a ``minimal'' ghost run, after which the undercutting run indeed uses a state in $\booltrans(s_0,\widehat{x_0}y_1)$.

Since $\eta$ is a run, and in particular has finite weight, it follows that its suffix $\eta':h\runsto{y_2}g'\runsto{x^{n-1}}g$ is also finite weight, and since $h\in \booltrans(s_0,\widehat{x_0}y_1)$, we have that $g\in \booltrans(s_0,\widehat{x_0}x^n)$, but $x$ preserves the set of reachable states (by the equivalence of $\difftype$), so it follows that $g\in \booltrans(s_0,\widehat{x_0})$, in contradiction to the assumption that $g\notin \booltrans(s_0,\widehat{x_0})$.

We conclude that $(S',x)$ is a stable cycle, and so $\alpha_{S',x}$ is a cactus letter.

\paragraph*{Step 4: obtaining a type-$1$ witness.}
Continuing with the notations of Step 3, consider a maximal-dominant state $q_x$ of $\vec{c_x}=\xconf(s_0,x_0x)$, and let $z$ be a corresponding suffix such that 
$\minweight(z,q_x\to S)<\infty$ and for every $q'$ with $\vec{c_x}(q')<\vec{c_x}(q)$ it holds that $\minweight(z,q'\to S)=\infty$.
Assume $q_x\in V_i$ for some $1\le i\le k$. 
Consider also a maximal-dominant state $q_0$ of $\vec{c_0}=\xconf(s_0,x_0)$.

Intuitively, this step proceeds as follows. We first show that we can assume $q_0=q_x$, and in particular $q_0\in V_i$ as well. That is, the potential before reading $x$ stems from the same $V_i$ sub-configuration as the potential after $x$. In fact, we show that the same suffix $z$ works for $q_0$.
Then, since the potential increases on $x$ (by Step 1), we conclude that the sub-configuration corresponding to $V_i$ also increases. We use this to show by a pumping argument that all the runs on $x$ from $V_i$ to $V_i$ are between non-grounded pairs of $(S',x)$. 
This means that after replacing $x$ with $\alpha_{S',x}$, the configuration $V_i$ no longer yields runs to $V_i$, only to ``lower'' $V_j$. We then consider the suffix $\alpha_{S',x}z$ from $\vec{c_0}$, and show that with this suffix, all the states up to and including $V_i$ attain weight $\infty$. 
Finally, this shows that either $q_0$ is not maximal-dominating (which is a contradiction) or all states go to $\infty$, meaning that $(x_0,x,z)$ is a (type-1) witness. 
We now proceed with the formal details.

We start by claiming that we can assume $q_0=q_x$. 
That is, we show that $q_x$ is maximal-dominant in $\vec{c_0}$.
We already have that $\minweight(z,q_x\to S)<\infty$. Consider $q'$ with $\vec{c_0}(q')<\vec{c_0}(q_x)$. Since $\vec{c_0}$ and $\vec{c_x}$ have the same $\difftype_{s_0,G}$, then $\vec{c_x}(q')<\vec{c_x}(q_x)$, and therefore we also have $\minweight(z,q'\to S)=\infty$. Thus, $q_x$ is dominant in $\vec{c_0}$. To show it is also maximal dominant, we repeat the same argument only going from $\vec{c_0}$ to $\vec{x}$: assume $q''$ satisfies $\vec{c_0}(q'')>\vec{c_0}(q_x)$, but $q''$ is dominant with suffix $z''$. Then an identical argument shows that $q''$ is dominant in $\vec{c_x}$, and by the order preservation, it is also higher than $q_x$ in $\vec{c_x}$, contradicting the fact that $q_x$ is maximal dominant.

Recall by \cref{eq:type 1 existence potential is increasing} that (under the baseline-shifted notations for $x_0,x$) we have $\pot(x_0x)>\pot(x_0)+4G$. Specifically, let $P=\pot(x_0x)-\pot(x_0)$, then $P>4G$ and $\vec{c_x}(q_x)=\pot(c_0)(q_x)+P$. We now intuitively show that pumping $x$ must keep increasing the sub-configuration $V_i$, and therefore there are no grounded pairs from $V_i$ to itself.
To show this, consider a minimal-weight run
$\rho:s_0\runsto{x_0}p_0\runsto{x}p_1\runsto{x}\cdots \runsto{x}p_{k-1}\runsto{x}p_k$ for $p_k\in V_i$ on $x^k$ with $k\le 4\bigM$. 
From Step 3 we know that there are no transitions on $x$ from $V_j$ to $V_i$ if $j<i$, and that for every state in $V_i$ there is a transition from some state in $V_i$. 
Since $\rho$ is minimal-weight to $p_k$, it is in particular seamless. By the gap and cover criteria in the lemma, for every $p'\in V_j$ with $j>i$ we have 
\[
\begin{split}
&\vec{c_0}(p')-\vec{c_0}(p_0)>4G+(|S|+1)4\bigM W_{\max}|w_3|-2G=\\
&2G+(|S|+1)4\bigM W_{\max}|w_3|>
8\bigM W_{\max}|x| \ge 2\maxeff{x^k}
\end{split}
\]
\acctodo{ACCOUNTING}
Where the $-2G$ in the first inequality comes because we measure the gap between $V_i$ and $V_j$, but the gap in the premise is between independent runs (which form a $G$-cover), and the last inequality is because $k\le 4\bigM$. 

We now see that all the $p_m$ are in $V_i$, otherwise there is a run from a higher $V_j$, but such a run cannot reach as low as $V_i$ due to the gap.
More precisely, since $\rho$ is seamless, we have that $p_m\in V_i$ for every $1\le m\le k$ (as there is a run through the $V_i$, and any run through higher $V_j$ cannot be seamless due to the gap). 

Assume by way of contradiction that there are $s,r\in V_i$ such that $(s,r)\in \GroundPairs(S',x)$. By \cref{lem:pumping grounded pairs}, this means that $\minweight(x^{4\bigM},s\to r)=\minweight(x^{2\bigM},s\to r)$. In particular, $\xconf(\vec{c_0},x^{4\bigM})(r)\le 2G + \xconf(\vec{c_0},x^{2\bigM})(r)$, since in the worst case the minimal run to $r$ starts from a state other than $s$ that is lower by $2G$ than $\vec{c_0}(s)$ (due to the $G$-cover, as it also starts from $V_i$).

But due to the repetition of $\difftype_{s_0,G}$, and by our observation that runs from higher $V_j$ are not seamless on $x^k$, combined with \cref{eq:type 1 existence potential is increasing} (namely that the sub-configuration of $V_i$ grows by exactly $P$ with $x$), we have
$\xconf(\vec{c_0},x^{4\bigM})(r)= \vec{c_0}(r)+ 4\bigM P$ and $\xconf(\vec{c_0},x^{2\bigM})(r)= \vec{c_0}(r)+ 2\bigM P$.
Plugging these into the inequality above, we have:
$\vec{c_0}(r)+4\bigM P\le 2G+\vec{c_0}(r)+ 2\bigM P$, so $2\bigM P\le 2G$, but $P>4G$, so this is a contradiction. 
Intuitively, this is similar to the situation depicted in \cref{fig:increasing infix run characterization}

We conclude that for every $s,r\in V_i$ we have $(s,r)\notin \GroundPairs(S',w)$.
Now consider the suffix $\alpha_{S',x}z$ from $\vec{c_0}$. For every $p\in V_j$ with $j<i$, we have that $\booltrans(p,\alpha_{S',x})\subseteq \bigcup_{j'\le j}V_{j'}$, and since $z$ has no finite-weight runs from any $V_{j'}$ lower than $V_i$, we have 
\[\minweight(\alpha_{S',x}z,p\to S)=\minweight(\alpha_{S',x}z,p\runsto{\alpha_{S',x}} \bigcup_{j'\le j}V_j\runsto{z}S)=\infty\]
Moreover, for every $q\in V_i$ we have $\booltrans(q,\alpha_{S',x})\in \bigcup_{j'<i} V_{j'}$ (note the strict inequality). This is because there are no grounded pairs from $V_i$ to itself, so all the transitions on $\alpha_{S',x}$ go to strictly lower $V_{j'}$ sets. In particular, we again have
$\minweight(\alpha_{S',x}z,q\to S)=\infty$.

Recall that $q_0\in V_i$ is maximal dominant. Therefore, if for some $q''\in V_{i'}$ for $i'>i$ we have $\minweight(\alpha_{S',x}z,q''\to S)<\infty$, this would contradict the maximal-dominance of $q_0$. We therefore conclude that for every $q\in \supp(\vec{c_0})$ we have  $\minweight(\alpha_{S',x}z,q\to S)=\infty$.

We thus have the following setting: $\minweight(x_0\cdot x\cdot z,s_0\to S)<\infty$, and $\minweight(s_0\alpha_{S',x}z,s_0\to S)=\infty$, thus fulfilling Requirement 4 of \cref{def:witness}, showing that $(x_0,x,z)$ is a witness (the other properties are immediate by the fact that $x$ is a stable cycle and preserves the reachability set).

As mentioned in Step 2, due to the baseline shift we might have $x\in (\Gamma_\infty^1)^*$. Since $x_0\in (\Gamma_0^0)^*$ due to the flattening in Step 1, we conclude that this is a type-1 witness.
\end{proof}

\subsection{$\pot$-Leveled Words and $\pot$-Discharging Words}
\label{sec:potleveled and potdischarging words}
In this section we present the potential analogue to leveled and discharging words, as well as the corresponding theorems of \cref{sec:leveled and discharging words}. Again, the overall structure is the same, but there are technical differences.

\subsubsection{$\pot$-Leveled Words Sequences}
\label{sec:potleveled words sequences}
An elongated words sequence $\words$ is $\pot$-leveled if, intuitively, the potential along the $w_3$ infix remains in some constant width ``band''.
\begin{definition}[$\kappa$-$\pot$-Leveled Words Sequence]
    \label{def:potleveled words sequence}
    For $\kappa\in \bbN$, a function from $\bbN$ to $\Gamma'^*$ is a \emph{$\kappa$-$\pot$-leveled words sequence}, denoted $\potlevwords:\bbN\to \Gamma'^*$, if it is an elongated words sequence, and for every $m\in \bbN$ with $\potlevwords(m)=w_1w_2w_3w_4$ and prefix $u$ of $w_3$ it holds that 
    $|\pot(w_1w_2)-\pot(w_1w_2u)|\le \kappa$.
\end{definition}
We say that a sequence is $\pot$-leveled if it is $\kappa$-$\pot$-leveled for some $\kappa$. 
 
\begin{proposition}
    \label{prop:potleveled with large gaps no ghost G cover implies type 1 or D dip}
    Let $\ell,G\in \bbN$ and consider a $pot$-leveled words sequence $\potlevwords$ that is $\ell$-sparse with gap $4G+(|S|+1)4\bigM W_{\max}|w_3|$ and is a configuration $G$-cover. Further assume there exists $l\in \bbN$ such that for every $m$ with $\potlevwords(m)=w_1w_2w_3w_4$, all ghost runs on $w_3$ are of length at most $l$.
    Then either $\augA_{\infty}^\infty$ has a type-1 witness, or $\potlevwords$ has the Infix $D$-dip property for some $D\in \bbN$.
\end{proposition}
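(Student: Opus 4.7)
The plan is to mirror the structure of the proof of \cref{prop:leveled with large gaps no ghost G cover implies unbounded potential or D dip}, replacing the dichotomy (leveled charge, negative decreasing run) with (leveled potential, negative decreasing run) and replacing the terminal appeal to \cref{lem: decomposed seq with cover sparse no ghosts implies unbounded potential} with an appeal to \cref{lem: decomposed inc pot with cover sparse no ghosts implies type 1 witness}. I begin by assuming that $\potlevwords$ does not have the Infix $D$-dip property for any $D\in\bbN$; otherwise the conclusion already holds. Unpacking \cref{def:infix D dip}, this means that for every $D,n\in\bbN$ there is some $m$ with $\potlevwords(m)=w_1w_2w_3w_4$ and a decomposition $w_3=xvy$ with $|v|>n$, together with a run $\rho\colon p\runsto{v}S$ from some $p\in\ghostTrans(s_0,w_1w_2x)$ satisfying $\weight(\rho)\le -D$.

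Using this, I build a new sequence $\potlevwords'$ exactly as in the charge case. For each $m'$, pick $m$ large enough so that $|v|>(|S|+1)m'+l$ and $\weight(\rho)\le -W_{\max}((|S|+1)m'+l)$. Since ghost runs on $w_3$ have length at most $l$, truncating the initial ghost portion of $\rho$ yields a run starting from a genuinely reachable state that still decreases by at least $W_{\max}(|S|+1)m'$. A pigeonhole argument over the at most $|S|$ seamless runs, cutting this run into $|S|+1$ sub-segments of $W_{\max}m'$ decrease each and observing that two cut points must lie on the same seamless run, extracts a \emph{seamless} sub-run $\rho'$ on some suffix $v'$ of $v$ with $\weight(\rho')\le -W_{\max}m'$. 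Writing $v=v_0v'$, I set $\potlevwords'(m')=w'_1w'_2w'_3w'_4$ with $w'_1=w_1$, $w'_2=w_2xv_0$, $w'_3=v'$, $w'_4=yw_4$. Because $|\pot(w_1w_2)-\pot(w_1w_2u)|\le\kappa$ for every prefix $u$ of $w_3$, the new sequence is $2\kappa$-$\pot$-leveled: both $\pot(w'_1w'_2)$ and every $\pot(w'_1w'_2u')$ with $u'$ a prefix of $w'_3$ are within $\kappa$ of $\pot(w_1w_2)$. The $\ell$-sparsity with gap $4G+(|S|+1)4\bigM W_{\max}|w_3|$, the configuration $G$-cover, and the ghost-run-length bound $l$ are all inherited by restriction, since $|w'_3|\le|w_3|$ makes the gap requirement strictly easier.

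Finally, I equip $\potlevwords'$ with an increasing-potential fair decomposition with respect to $\rho'$. Let $d(m')$ be the maximal number of cuts obtained by marking the first index where $\rho'$ has lost an additional $\sqrt{m'}W_{\max}$ weight since the previous cut, yielding $w'_3=u_1\cdots u_{d(m')+1}$; since the overall decrease is at least $W_{\max}m'$ and each letter accounts for at most $W_{\max}$, we get $d(m')\ge \sqrt{m'}-1\to\infty$. For every $1\le i\le d(m')$ the two-$\kappa$-leveling gives
\[
    \pot(w'_1w'_2u_1\cdots u_i)-\pot(w'_1w'_2u_1\cdots u_{i-1})\ge -2\kappa,
\]
while the cut rule gives $\weight(\rho'(w'_1w'_2u_1\cdots u_{i-1}))-\weight(\rho'(w'_1w'_2u_1\cdots u_i))\ge \sqrt{m'}W_{\max}$, which together yield the inequality of \cref{def:increasing potential fair decomposition} with $\sqrtb(m')=\sqrt{m'}W_{\max}-2\kappa$. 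The overall gap function $b(m')=W_{\max}m'-2\kappa$ handles the global hypothesis of that definition analogously. Hence every hypothesis of \cref{lem: decomposed inc pot with cover sparse no ghosts implies type 1 witness} is met for $\potlevwords'$, so $\augA_\infty^\infty$ admits a type-1 witness. The main obstacle is the first step: verifying that the pigeonhole extraction of a seamless decreasing sub-run $\rho'$ is compatible with the inherited configuration $G$-cover and ghost-freeness of $\potlevwords'$, so that the enlarged prefix $w_2xv_0$ does not silently degrade any of the sparsity or cover constants needed when we hand $\potlevwords'$ off to \cref{lem: decomposed inc pot with cover sparse no ghosts implies type 1 witness}.
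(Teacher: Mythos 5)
Your proposal is correct and takes essentially the same approach as the paper: assume no Infix $D$-dip, extract arbitrarily long infixes $v$ with a highly decreasing ghost run, truncate past the ghost portion, pigeonhole over the at most $|S|$ seamless runs to extract a seamless decreasing sub-run $\rho'$, package the result as a $2\kappa$-$\pot$-leveled sequence $\potlevwords'$ inheriting sparsity and the $G$-cover, equip it with the $\sqrt{m}$-threshold decomposition that makes it increasing-potential fair, and hand it to \cref{lem: decomposed inc pot with cover sparse no ghosts implies type 1 witness}. The ``obstacle'' you flag at the end (compatibility of the extraction with the cover and ghost-freeness) is also left implicit in the paper's own proof, so you are in good company.
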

\begin{proof}
    The proof is nearly identical to that of \cref{prop:leveled with large gaps no ghost G cover implies unbounded potential or D dip}, with only minor changes between charge and potential, and with eventually calling upon \cref{lem: decomposed inc pot with cover sparse no ghosts implies type 1 witness} to obtain a type-1 witness.

    If $\potlevwords$ has the Infix $D$-dip property for some $D\in \bbN$, we are done. We therefore assume this is not the case. That is, by the converse of \cref{def:infix D dip}, for every $D,n\in \bbN$, there exists $m$ with $\potlevwords(m)=w_1w_2w_3w_4$ such that we can write $w_3=xvy$ with $|v|>n$ and there exists $p\in \ghostTrans(s_0,w_1w_2x)$ and a run $\rho:p\runsto{x}S$ with $\weight(\rho)\le -D$.

    Our goal now is to obtain from $\potlevwords$ another $\pot$-leveled sequence $\potlevwords'$ that has an increasing-potential fair decomposition (\cref{def:increasing potential fair decomposition}). Intuitively, we obtain the increasing potential using the decreasing runs on $v$ above. 
    Note that despite the counterintuitive phrasing (increasing v.s. decreasing), this is still similar to the $\charge$ case, as there also a decreasing charge essentially means an increasing minimal run.

    We define $\potlevwords'$ as follows. For every $m\in \bbN$, let $m'>m$ such that for $\potlevwords(m')=w_1w_2w_3w_4$ we can write $w_3=xvy$ with $|v|>(|S|+1)m+l$  (recall that $l$ is the bound on the length of a ghost run) and there exists $p\in \ghostTrans(s_0,w_1w_2x)$ and a run $\rho:p\runsto{v}S$ with $\weight(\rho)\le -W_{\max}((|S|+1)m+l)$. 
    Since there are no ghost runs of length greater than $l$ over $w_3$, it follows that $\rho$ reaches a reachable state $q$ within $l$ steps from $p$. We then still have a run $\rho'$ over a suffix $v'$ of $v$ such that $\rho':q\runsto{v'}S$ with $\weight(\rho')\le -W_{\max}(|S|+1)m$. Indeed, $\rho$ can lose at most $W_{\max}l$ weight within $l$ steps. 
    Recall that for increasing-potential fair decompositions we need a \emph{seamless} run (\cref{def:increasing potential fair decomposition}). Intuitively, the existence of a long-enough decreasing run clearly implies the existence of a decreasing seamless run, since a run cannot decrease without ``dragging down'' with it some seamless run. 
    
    Formally, if $\rho'$ has an infix of length at least $m$ that is decreasing and seamless, we redefine $v'$ as that infix and we are done. Otherwise, decompose $\rho'$ into $|S|+1$ segments by the first time $\rho'$ loses an additional $W_{\max}m$ weight.     
    Since there are at most $|S|$ seamless runs, it follows that $\rho'$ intersects the same seamless runs at least twice at the end of these segments, but then this seamless run is decreasing at least as much as $\rho'$, so there is a seamless run decreasing by at least $W_{\max}m$, as required. We reuse $v'$ to denote the corresponding suffix, and write 
 $v=v_0v'$.

    Define $\potlevwords'(m)=w'_1w'_2w'_3w'_4$ with $w'_1=w_1,w'_2=w_2xv_0,w'_3=v'$ and $w'_4=yw_4$.
    Denote by $\kappa$ the potential band of $\potlevwords$.
    Observe that $\potlevwords'$ is a $2\kappa$-$\pot$-leveled word sequence: the infixes $v'$ have increasing length, and since the words in $\potlevwords'$ are equal as concatenations to words from $\potlevwords$, then the bounded-charge property is preserved, with the modification that previously all potentials were within gap $\kappa$ of $\pot(w_1w_2)$, whereas now it is possible that e.g., $\pot(w'_1w'_2)=\pot(w_1w_2)-\kappa$, so the remaining charge is within distance at most $2\kappa$. 
    Moreover, $\potlevwords'$ is $\ell$-sparse with gap $4G+(|S|+1)4\bigM W_{\max}|w_3|$ and a configuration $G$-cover. Indeed, since we choose $v'$ that are shorter than their corresponding $|w_3|$, the gap requirement that is already met in $\potlevwords$ is even larger with respect to $\potlevwords'$, and the decomposition $G$-cover is implied by the configuration $G$-cover.

    It remains to show that $\potlevwords'$ has an increasing-potential fair decomposition. 
    Intuitively, this follows because the potential stays within a band, whereas $\rho$ is decreasing, and therefore the difference of $\rho$ from the potential must increase.
    Formally, for every $m\in \bbN$ consider $\potlevwords'(m)=w'_1w'_2w'_3w'_4$ as above, where by construction there exists $q\in \booltrans(s_0,w'_1w'_2)$ and a run $\rho:q\runsto{w'_3}S$ with $\weight(\rho)\le -W_{\max}m$ and $|w'_3|>m$. 
    We decompose $w'_3=u_1u_2\cdots u_{d(m)+1}$ as follows. 
    Let $d(m)$ be the maximal number such that the following ordered indices are distinct: 
    $0=i_0\le i_1\le \ldots \le i_{d(m)}<|w_3|$ where for every $1\le k\le d(m)$ we have
    \[i_k=\min\{j\mid \weight(\rho(w'_1w'_2w'_3[1,i_{k-1}]))-\weight(\rho(w'_1w'_2w'_3[1,j]))\ge \sqrt{m}W_{\max}\}\]
    We then set $u_{k}=w'_3[i_{k-1}+1,i_k]$ for all $1\le k\le d(m)$ and $u_{d(m)+1}$ the remaining suffix.

    Observe that for every $1\le k\le d(m)$ we now have by definition that \[\weight(\rho(w'_1w'_2u_1\cdots u_{k-1})-\weight(\rho(w'_1w'_2u_1\cdots u_{k}))\ge \sqrt{m}W_{\max}\] 
    or equivalently 
    \[-\weight(\rho(w'_1w'_2u_1\cdots u_{k-1})+\sqrt{m}W_{\max}\le -\weight(\rho(w'_1w'_2u_1\cdots u_{k}))\]
    Also, since the potential remains within a $2\kappa$ band, we have that 
    $\pot(w'_1w'_2u_1\cdots u_{k-1})-2\kappa \le \pot(w'_1w'_2u_1\cdots u_{k})$. Thus, we have
    \[
    \begin{split}
    &\pot(w'_1w'_2u_1\cdots u_{k-1})+\sqrt{m}W_{\max}-2\kappa -\weight(\rho(w'_1w'_2u_1\cdots u_{k-1}))\le  \\
    &\pot(w'_1w'_2u_1\cdots u_{k})-\weight(\rho(w'_1w'_2u_1\cdots u_{k}))
    \end{split}
    \]
    Moreover, since with each letter the run $\rho$ loses at most $W_{\max}$ weight, then the number of segments in the decomposition is (roughly) $\sqrt{m}$, i.e., $\lim_{m\to \infty}d(m)=\infty$.
    
    Thus, $\potlevwords'$ has an increasing-potential fair decomposition with $\sqrtb(m)=\sqrt{m}W_{\max}-2\kappa$.   
    We conclude that $\potlevwords'$ satisfies the conditions of \cref{lem: decomposed inc pot with cover sparse no ghosts implies type 1 witness}, and therefore $\augA_\infty^\infty$ has a type-1 witness, and we are done.
\end{proof}
    
We are now ready for the main inductive argument.

\begin{lemma}
    \label{lem:potleveled with large gaps implies type 1 or D dip}
    Let $\ell\in \bbN$ and consider a $\pot$-leveled words sequence $\potlevwords$ that is $\ell$-sparse with gap $(|S|+1)4\bigM W_{\max}|w_3|$. Then either $\augA_\infty^\infty$ has a type-1 witness, or $\potlevwords$ has the Infix $D$-dip property for some $D\in \bbN$.
\end{lemma}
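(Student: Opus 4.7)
The plan is to mirror the proof of \cref{lem:leveled with large gaps implies unbounded potential or D dip} almost verbatim, replacing $\charge$ by $\pot$, ``unbounded potential'' by ``type-$1$ witness'', and invoking \cref{prop:potleveled with large gaps no ghost G cover implies type 1 or D dip} where its charge-analogue \cref{prop:leveled with large gaps no ghost G cover implies unbounded potential or D dip} was used. The proof proceeds by reverse induction on $\ell$ from $|S|$ downwards. For the base case $\ell=|S|$, every configuration state lies on an independent run, so $\potlevwords$ is trivially a configuration $0$-cover, the sparsity gap hypothesis is satisfied with $G=0$, and there are no ghost runs at all on any $w_3$. These are precisely the hypotheses of \cref{prop:potleveled with large gaps no ghost G cover implies type 1 or D dip}, so we conclude either a type-$1$ witness or the Infix $D$-dip property.

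For the inductive step $\ell<|S|$, I split into the same three sub-cases as the leveled proof. In \emph{Case 1} (long ghost runs), for each $m'$ I pick an $m>m'$ with $\potlevwords(m)=w_1w_2w_3w_4$ containing a ghost run over a long infix $u$ of $w_3=xuy$, and define $\potlevwords'(m')$ using $w'_1=\flatten(w_1w_2x \wr F)$ with $F$ large enough that the flattened prefix makes the ghost run a real run lying at least $(|S|+1)4\bigM W_{\max}|w_3|$ above every previous independent run (by \cref{lem:flattening configuration characterization}). In \emph{Case 2} (diverging run on a suffix of $w_3$), under the assumption that the cover property fails for every $G$, I extract a seamless run that stays more than $(|S|+1)4\bigM W_{\max}|w_3|$-far from all independent runs over a long suffix, and adjoin it as a new independent run. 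In both cases the new sequence admits $\ell+1$ independent runs satisfying the sparsity hypothesis, so the induction hypothesis applies and yields either a type-$1$ witness (done) or the Infix $D$-dip property, which then lifts back to the original $\potlevwords$ exactly as in the leveled proof (the infix $D$-dip property is ``local'', depending only on the infix $u$, and the ghost-set over the flattened prefix equals $\ghostTrans$ over the original prefix by the characterization of flattening). In \emph{Case 3} (complement), we have a configuration $G$-cover on second halves of $w_3$; restricting to these second halves of $\potlevwords(2(m+G))$ gives a $\pot$-leveled sequence whose sparsity gap now exceeds $4G+(|S|+1)4\bigM W_{\max}|w_3|$, and direct application of \cref{prop:potleveled with large gaps no ghost G cover implies type 1 or D dip} finishes.

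The main obstacle, which does not occur in the charge proof, is verifying that the $\pot$-leveled property is preserved after the flattening in Case 1: for charge we used the fact that the minimal state is always reachable and so its weight survives flattening, but for potential the maximal dominant state could in principle change when ghost states become reachable. This is resolved by \cref{lem:unfolding maintains potential}: iterated unfolding of the cactus letters inside $w_1w_2x$ either produces a type-$0$ witness, in which case we are already done (since every type-$0$ witness is a type-$1$ witness as $\Gamma_\infty^0\subseteq \Gamma_\infty^1$), or preserves the potential along every prefix of the flattened word, which in turn preserves the $\kappa$-bounded band property along $w'_3$ up to replacing $\kappa$ by $2\kappa$. With this observation the induction closes smoothly.
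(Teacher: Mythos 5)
Your proposal is correct and follows essentially the same route as the paper: the same reverse induction on $\ell$, the same three sub-cases (long ghost runs, diverging run, $G$-cover on second halves), the same invocation of \cref{prop:potleveled with large gaps no ghost G cover implies type 1 or D dip} at the base case and in Case 3, and — crucially — the same identification that the only real novelty relative to the charge version is in Case 1, where one must invoke \cref{lem:unfolding maintains potential} to argue that flattening either produces a type-$0$ (hence type-$1$) witness or preserves potential, so that the $\pot$-leveled property survives. Your extra remark about the band widening from $\kappa$ to $2\kappa$ after truncating to the suffix is accurate and in fact slightly more explicit than the paper's own phrasing.
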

\begin{proof}
    The proof is by reverse induction on $\ell$, and is very similar to that of \cref{lem:leveled with large gaps implies unbounded potential or D dip}, with the exception of the fact that for flattening in the inductive step, we must use \cref{lem:unfolding maintains potential}, which is less straightforward than the case of charge.

    \paragraph*{Base case: $\ell=|S|$} Recall that the maximal number of configuration-independent runs is $|S|$ (see \cref{def:configuration independent runs}). 
    In this case, for every $m\in \bbN$ with $\potlevwords(m)=w_1w_2w_3w_4$ we have that in every configuration in $w_3$, each state belongs to one of the independent runs. 
    In particular, $\potlevwords$ is a configuration $G$-cover with $G=0$, and is also $\ell$-sparse by assumption, with gap $4G+(|S|+1)4\bigM W_{\max}|w_3|$ (since $G=0$).

    Moreover, since $\ell=|S|$, then in particular every state is reachable along any prefix of $w_3$, and therefore there are no ghost runs at all along $w_3$. 
    Thus, the conditions of \cref{prop:potleveled with large gaps no ghost G cover implies type 1 or D dip} are met, so either $\potlevwords$ has the infix $D$-dip property for some $D$, or $\augA_\infty^\infty$ has a type-1 witness, and we are done.

    \paragraph*{Inductive case: $\ell<|S|$}
    The induction is split into three subcases. 
    \subparagraph*{Case 1: long ghost runs.}
    In this case we assume that there are unboundedly long ghost runs. Specifically, assume that for every $l\in \bbN$ there is some $m\in \bbN$ with $\potlevwords(m)=w_1w_2w_3w_4$ and we can write $w_3=xuy$ such that $|u|\ge l$ and there exists $p\in \ghostTrans(s_0,w_1w_2x)\setminus \booltrans(s_0,w_1w_2x)$ and a ghost run $\rho:p\runsto{u}S$.
    Intuitively, while this ghost run is not a real run on $w_1w_2w_3w_4$, it becomes a real run if we flatten the prefix up to it. Moreover, since flattening guarantees that ghost runs are very high, this implies the existence of another independent run, so we can apply the induction.

    Formally, we define a new sequence $\potlevwords'$ as follows. For every $m'\in \bbN$ let $m>m'$ be such that $\potlevwords(m)=w_1w_2w_3w_4$ has a ghost run as above with $|u|>m'$. Define $\potlevwords'(m')=w'_1w'_2w'_3w'_4$ with
    $w'_1=\flatten(w_1w_2x \wr (|S|+1)4\bigM W_{\max}|w_1w_2w_3w_4|), w'_2=\epsilon, w'_3=y$ and $w'_4=yw_4$.\footnote{We remark that unlike \cref{sec:separated increasing infix}, the roles of $w_1$ and $w_2$ here are combined, so it is fine selecting $w_2=\epsilon$ and moving its content to $w_1$ (or vice versa). We keep the four-part formalism for uniformity.} 

    Similarly to \cref{lem: decomposed inc pot with cover sparse no ghosts implies type 1 witness}, since our alphabet does not contain rebase letters, by \cref{def:cactus rebase flattening} the resulting prefix $x'_1$ is over $\Gamma_0^0$. 
    Denote $\vec{c}=\xconf(\vec{c_{init}},w_1w_2x)$ and $\vec{c'}=\xconf(\vec{c_{init}},w'_1w'_2)$, then by 
    \cref{lem:flattening configuration characterization}
    we have that $\vec{c}(q)=\vec{c'}(q)$ for every $q\in \booltrans(s_0,w_1w_2)$, and 
    $\vec{c'}(p)\ge (|S|+1)4\bigM W_{\max}|w_1w_2w_3w_4| + \max\{\vec{c}(p')\mid p'\in \supp(\vec{c})\}$.
    Specifically, the entire run $\rho$ from $p$ is above all the existing $\ell$ independent runs with gap at least $(|S|+1)4\bigM W_{\max}|w_3|$.
    It follows that $\rho$ can be added as an independent run. Thus, $\potlevwords'$ has $\ell+1$ independent runs. 

    In order to apply the induction we need to show that $\potlevwords'$ remains a $\pot$-leveled words sequence. This, however, might not be the case. Indeed, the flattening may have changed the potential. Fortunately, since our alphabet is $\Gamma'\subseteq \Gamma_\infty^0$, then in particular it does not contain rebase or jump letters. This means that flattening amounts to repeated unfolding of cactus letters (\cref{def:unfolding function,def:cactus rebase flattening}). Thus, we can apply \cref{lem:unfolding maintains potential} and get that either $\augA_\infty^\infty$ has a type-0 witness (and in particular a type-1 witness), or we have that the potential remains unchanged after flattening, so $\potlevwords'$ remains a $\pot$-leveled words sequence.
    We can therefore apply the induction hypothesis on $\potlevwords'$. By the hypothesis, if $\augA_\infty^\infty$ has a type-1 witness, then we are done.
    
    Otherwise, we get that $\potlevwords'$ has the $D$-dip property. Recall, however, that we need to show $\potlevwords$ has the $D$-dip property. Fortunately, observe that $D$-dip is ``local'' in that it only considers runs over the relevant infix, and since the $D$-dip infixes of $\potlevwords'$ also appear in $\potlevwords$, so we can indeed show this as follows.

    For every $n\in \bbN$, consider $m'$ such that $\potlevwords(m')=w'_1w'_2w'_3w'_4$ has an infix $u$ with $w'_3=xuy$ with $|u|\ge n$ and for every $p\in \ghostTrans(s_0,w'_1w'_2x)$ and run $\rho:p\runsto{u}S$ we have $\weight(\rho)>-D$ (these exist by the $D$-dip property, see \cref{def:infix D dip}). 
    By the construction of $\potlevwords'$, there exists $m>m'$ such that $\potlevwords(m)=w_1w_2w_3w_4$ and $w'_3$ is an infix of $w_3$. Moreover, while $w'_1w'_2x$ differs from the corresponding prefix $z$ of $\potlevwords(m)$ due to the flattening, it still holds that $\ghostTrans(s_0,z)=\ghostTrans(s_0,w'_1w'_2x)$ by the flattening procedure.
    It follows that the same condition on $u$ holds also in $w_1w_2w_3w_4$. Therefore, $\potlevwords$ also has the infix $D$-dip property, and we are done.

    In the following cases, we therefore assume there is a bound $l\in \bbN$ such that all ghost runs on infixes of any $w_3$ are of length at most $l$.

    \subparagraph*{Case 2: a diverging run on a suffix of $w_3$.}
    In this case we essentially assume that the independent runs are not a $G$-cover for any $G$, and use this to construct a new independent run from states that stay far away from all independent runs. We then apply the induction hypothesis. This is very similar to the setting in \cref{fig:sparse induction second case}, where instead of starting from a single state, we start from a configuration. Specifically, we assume the divergence occurs in some long-enough prefix of $w_3$.

    Formally, assume that for every $G\in \bbN$ there exists $m>2G$ with $\potlevwords(m)=w_1w_2w_3w_4$ and we can write $w_3=uv$ with $|u|\ge \frac12|w_3|$ such that for $\vec{c}=\xconf(s_0,w_1w_2u)$ there exists a state $q\in \supp(\vec{c})$ for which $\vec{c}(q)$ is not withing gap $G$ of any independent run.

    We define a new sequence $\potlevwords'$ as follows. For every $m'\in \bbN$, take some $G>2W_{\max}m'+(|S|+1)4\bigM W_{\max}m'$ 
    and let $m>2m'$ large enough such that the gap of $q$ as above is at least $G$.
    Write $\potlevwords(m)=w_1w_2w_3w_4$ with $w_3=uv$ and $q$ as above. 
    We can now use the fact that all seamless runs change their weight by at most $W_{\max}$ at each step. Consider the seamless run $\rho$ leading up to $q$, and write $u=u_1u_2$ with $|u_2|=m'$ (this is possible since $|u|\ge \frac12|w_3|\ge \frac12 m>m'$).
    Since $q$ is not within gap $G$ of any independent run, for every prefix $u'$ of $u_2$ it holds that $\weight(\rho(w_1w_2u_1u'))$ is not within gap $G-2W_{\max}m'>(|S|+1)4\bigM W_{\max}m'$ of any independent run.

    We can therefore define $\potlevwords(m')=w'_1w'_2w'_3w'_4$ by $w'_1=w_1,w'_2=w_2u_1,w'_3=u_2,w'_4=vw_4$. As above, this is a $\pot$-leveled words sequence, since the potential requirement remains valid on infixes. 
    Moreover, we now have $\ell+1$ independent runs satisfying the gap constraints. We can therefore apply the induction hypothesis. As in Case 1, if $\augA_\infty^\infty$ has a type-1 witness then we are done, and if $\potlevwords'$ has the infix $D$-dip property, this lifts back to $\potlevwords$ (here we do not even have to consider the flattening).

    \subparagraph*{Case 3: $G$-cover on suffix of $w_3$.}
    We now consider the complement of Case 2, still under the assumption that all ghost runs are of length at most $l$.
    Specifically, we assume that there exists $G\in \bbN$ such that for every $m>2G$ with $\potlevwords(m)=w_1w_2w_3w_4$, for every decomposition $w_3=uv$ with $|u|\ge \frac12|w_3|$ and for every state $q\in \booltrans(s_0,w_1w_2u)$, the weight of $q$ is within gap $G$ of some independent run. We therefore have a $G$-cover of the ``second half'' of $w_3$. 
    Our goal is now to apply \cref{prop:potleveled with large gaps no ghost G cover implies type 1 or D dip}. For this, we only need to define a sequence where the $w_3$ infix corresponds to these ``second halves''. A slight caveat is that \cref{prop:potleveled with large gaps no ghost G cover implies type 1 or D dip} requires a larger gap than $\potlevwords$ has. This, however, is an ``artificial'' problem: the gaps in $\potlevwords$ grow with $m$, so we just need to pick words that are further away, so we have larger gaps.
    
    Formally, we proceed as follows. For every $m'\in \bbN$, let $m=2(m'+G)$ with $\potlevwords(m)=w_1w_2w_3w_4$ and write $w_3=uv$ with $|v|=m'$ (in particular $|u|\ge \frac12 |w_3|$). Define $\potlevwords(m')=w'_1w'_2w'_3w'_4$ by $w'_1=w_1,w'_2=w_2u,w'_3=v,w'_4=w_4$. We observe that $\potlevwords'$ is still a leveled words sequence, since the potential constraints do not change. We claim that $\potlevwords'$ is $\ell$-sparse with gap $4G+(|S|+1)4\bigM W_{\max}|w_3|$. Indeed, recalling that $m=2(m'+G)$, the gap guaranteed by $\potlevwords$ is 
    \[
    \begin{split}
    &(|S|+1)4\bigM W_{\max}|w_3|\ge (|S|+1)4\bigM W_{\max}2(m'+G)\ge \\
    &4G+ (|S|+1)4\bigM W_{\max}m'=4G+ (|S|+1)4\bigM W_{\max}|w'_3|    
    \end{split}
    \]
    Finally, $\potlevwords'$ has a configuration $G$-cover, as we discuss above.
    
    We can therefore apply \cref{prop:potleveled with large gaps no ghost G cover implies type 1 or D dip} to $\potlevwords'$ and we conclude that either $\potlevwords'$ has the infix $D$-dip property, or $\augA_\infty^\infty$ has a type-1 witness. Again, the infix $D$-dip property lifts back to $\potlevwords$, so we are done.
    Note that the last case is not inductive, but direct (indeed, it can be thought of as a different base case).
    \end{proof}

As a potential-counterpart to \cref{cor:leveled implies unbounded potential or D dip} for leveled words, we have the main tool for $\potlevwords$ (by applying \cref{lem:potleveled with large gaps implies type 1 or D dip} with $\ell=1$, for which the gap is unbounded).
\begin{corollary}
    \label{cor:potleveled implies witness or D dip}
    Consider a $\potlevwords$ sequence, then either $\augA_\infty^\infty$ has a type-1 witness, or $\potlevwords$ has the Infix $D$-dip property for some $D\in \bbN$. 
\end{corollary}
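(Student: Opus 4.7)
The plan is to derive this corollary directly from \cref{lem:potleveled with large gaps implies type 1 or D dip} by instantiating it with $\ell = 1$. The key observation is that the sparsity-with-large-gap assumption in that lemma concerns the pairwise gaps between independent runs, and so becomes vacuous when the set of independent runs is a singleton (see the parenthetical at the end of \cref{def:configuration independent runs} and \cref{def:discharging ell sparse K gap}). Hence any $\potlevwords$ automatically satisfies the $\ell$-sparsity hypothesis for $\ell=1$ with arbitrarily large gap, provided it comes equipped with at least one seamless run on $w_1w_2w_3$.

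First I would fix an arbitrary $\potlevwords$ sequence. By \cref{def:potleveled words sequence} it is an elongated words sequence, and by \cref{def:elongated words sequence} this means that for every $m\in\bbN$, the word $\potlevwords(m) = w_1w_2w_3w_4$ has a seamless baseline run; in particular there exists at least one seamless run over it. Take $I$ to consist solely of this baseline run, so $|I|=1$. Since $|I|=1$ has no pair of distinct elements to separate, the condition of \cref{def:discharging ell sparse K gap} holds trivially with gap $(|S|+1)4\bigM W_{\max}|w_3|$ (indeed, with any gap function), and so $\potlevwords$ is $1$-sparse with this gap.

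Next I would invoke \cref{lem:potleveled with large gaps implies type 1 or D dip} with $\ell=1$. The lemma yields exactly the dichotomy we want: either $\augA_\infty^\infty$ has a type-$1$ witness, or $\potlevwords$ has the Infix $D$-dip property for some $D\in\bbN$. This is precisely the statement of the corollary, completing the proof.

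The ``hard part'' here is essentially non-existent: all of the heavy lifting has already been done in \cref{lem:potleveled with large gaps implies type 1 or D dip}, whose reverse induction on $\ell$ handles the increment from $\ell=1$ up to $|S|$ through the three cases (long ghost runs, diverging runs, and the ghost-free $G$-cover base of \cref{prop:potleveled with large gaps no ghost G cover implies type 1 or D dip}). The only thing to double-check is that our choice of $I = \{\rho_{\mathrm{base}}\}$ really does qualify for the sparsity requirement; this is a purely formal matter since the pairwise condition in \cref{def:configuration independent runs} is vacuously true on singletons, as noted explicitly just after that definition.
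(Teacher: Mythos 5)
Your proof is correct and follows exactly the approach the paper takes: the paper's justification is a one-line remark that the corollary follows by applying \cref{lem:potleveled with large gaps implies type 1 or D dip} with $\ell=1$, for which the gap requirement is vacuous. Your observation that the singleton consisting of the seamless baseline run witnesses $1$-sparsity for any gap function is precisely the intended argument.
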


\subsubsection{$\pot$-Discharging Words Sequences}
\label{sec:potdischarging word sequences}
A counterpart to $\pot$-leveled words are \emph{$\pot$-discharging words}, where the potential significantly increases. We remark that while the naming scheme may seem confusing (i.e., ``discharging'' is used for ``increasing potential''), recall that in the charge case, a decreasing charge amounts to an increasing minimal run (intuitively), so in this respect the scenarios are similar. We therefore keep the name, to maintain the analogy.
\begin{definition}[$\pot$-Discharging Words Sequence]
\label{def:potdischarging word sequence}
An elongated words sequence $\words$ is a \emph{$\pot$-discharging words sequence}, denoted $\potdiswords:\bbN\to \Gamma'^*$, if for every $m\in \bbN$ with $\potdiswords(m)=w_1w_2w_3w_4$ the following hold:
\begin{itemize}
    \item $\pot(w_1w_2w_3)-\pot(w_1w_2)>m$ (the potential significantly increases upon reading $w_3$).
    \item For every prefix $v$ of $w_3$ we have $\pot(w_1w_2v)\le \pot(w_1w_2w_3)$ (the potential does not exceed its level on $w_1w_2w_3$).
\end{itemize}
\end{definition}
Similarly to the discharging words setting (\cref{sec:discharging word sequences}), we introduce a specialized notion of decomposition for $\pot$-discharging words, whereby in each segment the potential increases.
Recall that for the charge setting, when using the decomposition in \cref{def:discharging decomposition}, we encounter the problem that the charge may have unbounded ``jumps''. To overcome this, we resort there to assuming the bounded charge decrease property of \cref{def:charge bounded decrease}.

Fortunately, in this section things are simpler: the potential always has bounded increase, as per \cref{lem:bounded growth potential}. This means that we get simplified assumptions when reasoning about $\pot$-discharging decompositions, as follows.

\begin{definition}[$\pot$-Discharging Decomposition]
    \label{def:potdischarging decomposition}
    For $m\in \bbN$ and $\potdiswords(m)=w_1w_2w_3w_4$, let $d(m)$ be the maximal number such that the following ordered indices are distinct:
    $0=i_0 \le i_1\le i_2\le \ldots\le i_{d(m)}< |w_3|$ where for every $1\le k\le d(m)$ we have
    $i_k=\min\{j\mid \pot(w_1w_2w_3[1,j])-\pot(w_1w_2w_3[1,i_{k-1}])>\sqrt{m}\}$.
    
    The \emph{discharging decomposition} of $w_3$ is then $w_3=u_1u_2\cdots u_{d(m)+1}$ where $u_k=w_3[i_{k-1}+1,i_k]$ for every $1\le k\le d(m)$, and $u_{d(m)+1}=w_3[i_{d(m)+1},|w_3|]$.

    For every $1\le k\le d(m)$ we then have: 
    \begin{itemize}
        \item $\pot(w_1w_2u_1\cdots u_{k})-\pot(w_1w_2u_1\cdots u_{k-1})> \sqrt{m}$.
        \item For every prefix $v$ of $u_1\cdots u_k$ it holds that 
        $\pot(w_1w_2v)\le \pot(w_1w_2u_1\cdots u_{k})$
    \end{itemize}
\end{definition}
As mentioned above, due to the bounded growth of the potential, we readily have that the $\pot$-discharging decomposition is an increasing potential decomposition, as follows.
\begin{proposition}
    \label{prop:potdischarging words has fair decomposition}
    Let $\potdiswords$ be a $\pot$-discharging sequence, then its $\pot$-discharging decomposition is an increasing-potential fair decomposition (as per \cref{def:increasing potential fair decomposition}).
\end{proposition}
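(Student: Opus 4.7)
The plan is to take the reference seamless run $\rho$ in \cref{def:increasing potential fair decomposition} to be the baseline run $\rho_0$ on $w_1w_2w_3$ (which exists and is seamless because $\pot$ is defined for every relevant prefix; recall that baseline transitions have weight $0$, so $\weight(\rho_0(v)) = 0$ for every prefix $v$). With this choice, the requirement on the existence of a significantly decreasing run relative to the potential collapses to $\pot(w_1w_2w_3) - \pot(w_1w_2) > b(m)$, and we can simply set $b(m) = m$, which is exactly the first bullet of \cref{def:potdischarging word sequence}.

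For the per-segment gap requirement, the same cancellation of $\weight(\rho_0)$ reduces the inequality to
\[
\pot(w_1w_2u_1\cdots u_{i-1}) + \sqrtb(m) \le \pot(w_1w_2u_1\cdots u_i),
\]
which is built into the definition of the $\pot$-discharging decomposition with $\sqrtb(m) = \sqrt{m}$. Thus the only non-trivial thing left to verify is that $d(m) \to \infty$.

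To establish $\lim_{m\to\infty}d(m) = \infty$, I will apply the bounded-growth property of potential (\cref{lem:bounded growth potential}) to the finite alphabet $\Gamma'$ (which contains all letters appearing in the words of $\potdiswords$) to obtain a constant $K \in \bbN$ such that reading a single letter increases $\pot$ by at most $K$. Since each cut index $i_k$ is chosen \emph{minimally} with the property that $\pot$ grows by more than $\sqrt{m}$ from the previous cut, the increment on each segment $u_k$ (for $1 \le k \le d(m)$) satisfies
\[
\pot(w_1w_2u_1\cdots u_k) - \pot(w_1w_2u_1\cdots u_{k-1}) \le \sqrt{m} + K,
\]
because stripping the final letter of $u_k$ leaves an infix whose potential increase is at most $\sqrt{m}$ (by minimality), and the final letter adds at most $K$. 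The remainder segment $u_{d(m)+1}$ contributes at most $\sqrt{m}$ by minimality as well.

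Summing the increments (telescopically) and using $\pot(w_1w_2w_3) - \pot(w_1w_2) > m$ from \cref{def:potdischarging word sequence}, I obtain
\[
m < d(m)(\sqrt{m} + K) + \sqrt{m},
\]
so $d(m) \ge (m - \sqrt{m})/(\sqrt{m} + K) \to \infty$, as required. The main (minor) obstacle is simply the bookkeeping of the ``remainder'' segment $u_{d(m)+1}$ and the invocation of \cref{lem:bounded growth potential} with the correct finite sub-alphabet; note that, unlike the charge case in \cref{prop:discharging words with bounded decrease charge is fair decomposition}, no extra bounded-decrease hypothesis is needed because bounded growth of potential is automatic from \cref{lem:bounded growth potential}.
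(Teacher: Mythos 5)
Your proof is correct and follows essentially the same approach as the paper's: you choose the baseline run $\rho_0$ (whose weight is identically $0$) as the reference run, so the charge-like cancellation reduces the fair-decomposition inequalities to pure statements about $\pot$; you use \cref{lem:bounded growth potential} to bound each segment's increment above by $\sqrt{m}+K$; and the same telescopic sum against the total potential growth of $> m$ yields $d(m)\ge (m-\sqrt{m})/(\sqrt{m}+K)\to\infty$. The only cosmetic difference is that the paper also explicitly records, from $|u_i|>\sqrt{m}/K$, that the segment lengths themselves tend to infinity — a fact implicit in your argument but worth stating, since \cref{lem:potdischarge with large gaps implies type 1 or D dip} (Case 1b) later cites this proposition precisely for that consequence.
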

\begin{proof}
    Let $m\in \bbN$ and $\potdiswords(m)=w_1w_2w_3w_4$. Write $w_3=u_1\cdots u_{d(m)+1}$.

    We claim that the $\pot$-discharging decomposition is an increasing-potential fair decomposition with respect to the baseline run $\rho_0$. Note that the baseline run has weight $0$ in all prefixes. Therefore, we need to show that the length of the $u_i$ increases with $m$, that $d(m)$ increases with $m$, and that $\pot$ increases by some function $\sqrtb(m)$ between each segment (as per \cref{def:increasing potential fair decomposition}, instantiated with $\rho_0$) except for the last ``remainder'' segment. 

    The latter requirement is trivial by the \cref{def:potdischarging decomposition} -- the segments are defined such that 
    \[
     \pot(w_1w_2u_1\cdots u_{i})-\pot(w_1w_2u_1\cdots u_{i-1})>\sqrt{m}
    \]
    i.e, their potential increase is at least $\sqrt{m}$.

    By \cref{lem:bounded growth potential}, the potential can grow by at most $B$ upon reading a letter, for some constant $B$ (that depends on the finite alphabet $\Gamma'$).
    We then inductively have that for every segment $u_i$ (indeed, every infix of $w_3$) it holds that
    \[
    \pot(w_1w_2u_1\cdots u_{i})-\pot(w_1w_2u_1\cdots u_{i-1})< |u_i|B
    \]
    Combining this with the potential difference above, we have that $|u_i|> \sqrt{m}/B$, and in particular the length of $u_i$ increases with $m$.

    We turn to show that $d(m)$ increases. We initially claim that 
    \[
     \pot(w_1w_2u_1\cdots u_{i})-\pot(w_1w_2u_1\cdots u_{i-1})\le \sqrt{m}+B
    \]
    Indeed write $u_i=v\cdot \sigma$ where $\sigma$ is the last letter, then 
     \[
    \begin{split}
        &\pot(w_1w_2u_1\cdots u_{i-1}v\sigma)-\pot(w_1w_2u_1\cdots u_{i-1}v)=\\
        &\pot(w_1w_2u_1\cdots u_{i-1}v\sigma)-\pot(w_1w_2u_1\cdots u_{i-1}v)+ \pot(w_1w_2u_1\cdots u_{i-1}v) -\pot(w_1w_2u_1\cdots u_{i-1})\le \\
        &\sqrt{m} + B
    \end{split}
    \]
    Where the last inequality follows from \cref{lem:bounded growth potential} and from the fact that $u_i=v\sigma$ is minimal with the gap property in \cref{def:potdischarging decomposition} (i.e., $\pot(w_1w_2u_1\cdots u_{i-1}v)-\pot(w_1w_2u_1\cdot u_{i-1})\le \sqrt{m}$ otherwise $v$ would have been selected instead of $u_i$).

    We notice that for segment $u_{d(m)+1}$ we do not provide bounds, as it is essentially a ``remainder''. 
    Trivially, we have $\pot(w_1w_2u_1\cdots u_{d(m)+1})-\pot(w_1w_2u_1\cdots u_{d(m)})\le \sqrt{m}$ (otherwise we would have another segment).

    By the definition of $\potdiswords$ we now have that 
    \[
    \sum_{i=1}^{d(m)+1}(\pot(w_1w_2u_1\cdots u_{i})-\pot(w_1w_2u_1\cdots u_{i-1}))=\pot(w_1w_2w_3)-\pot(w_1w_2w_3)\ge m
    \]
    and on the other hand by the above we have
    \[
    \sum_{i=1}^{d(m)+1}(\pot(w_1w_2u_1\cdots u_{i})-\pot(w_1w_2u_1\cdots u_{i-1}))\le d(m)(\sqrt{m}+B)+\sqrt{m}
    \]
    We therefore have
    $d(m)\ge \frac{m-\sqrt{m}}{\sqrt{m}+B}$
    so $\lim_{m\to \infty}d(m)=\infty$, as required.
\end{proof}

\cref{prop:potdischarging words has fair decomposition} allows us to weaken the conditions in \cref{lem: decomposed inc pot with cover sparse no ghosts implies type 1 witness} to require only that the sequence is $\pot$-discharging, since this now implies an increasing-potential fair decomposition. Thus, we have the following.
\begin{corollary}
    \label{cor: potdischarging cover sparse no ghosts implies type 1}
    If there exists $G,\ell\in \bbN$ and a $\pot$-discharging word sequence $\potdiswords$ whose discharging decomposition is a ghost-free decomposition $G$-cover and is $\ell$-sparse with gap $4G + (|S|+1)4\bigM W_{\max} |w_3|$,
    \acctodo{ACCOUNTING}
    then $\augA_\infty^\infty$ has a type-1 witness.
\end{corollary}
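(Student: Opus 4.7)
The plan is to derive this corollary essentially by concatenating two previously-established results: \cref{prop:potdischarging words has fair decomposition} and \cref{lem: decomposed inc pot with cover sparse no ghosts implies type 1 witness}. There are no new technical obstacles; the work has already been done in those statements, and all that remains is to verify that their hypotheses line up.

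Concretely, I would start from the given $\potdiswords$ sequence and form its associated $\pot$-discharging decomposition as in \cref{def:potdischarging decomposition}. By \cref{prop:potdischarging words has fair decomposition}, which relies on the bounded-growth property of potential (\cref{lem:bounded growth potential}) together with the ``minimality by first crossing'' definition of the segments, this decomposition is automatically an increasing-potential fair decomposition with respect to the baseline run $\rho_0$ (whose weight is $0$ throughout) and with the explicit gain function $\sqrtb(m)=\sqrt{m}$. In particular, $\lim_{m\to\infty}d(m)=\infty$ and consecutive segments witness a potential increase of more than $\sqrtb(m)$.

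Next, I would observe that the remaining assumptions needed by \cref{lem: decomposed inc pot with cover sparse no ghosts implies type 1 witness} are exactly those we have hypothesized here: the decomposition is $\ell$-sparse with the required gap $4G + (|S|+1)4\bigM W_{\max}|w_3|$, and it is a ghost-free decomposition $G$-cover. Combined with the fair-decomposition property obtained from \cref{prop:potdischarging words has fair decomposition}, all the hypotheses of that lemma are satisfied. Invoking the lemma yields a type-$1$ witness for $\augA_\infty^\infty$, which is the desired conclusion.

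If anything requires a moment of care, it is only the verification that the associated decomposition used to check ghost-freeness and the $G$-cover property in the corollary's statement is indeed the same decomposition fed into \cref{lem: decomposed inc pot with cover sparse no ghosts implies type 1 witness}; this is immediate from \cref{def:potdischarging decomposition}, since the $\pot$-discharging decomposition is the canonical one attached to $\potdiswords$. Thus the corollary follows with no additional argument.
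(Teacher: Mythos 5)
Your proof is correct and follows exactly the route the paper intends: the corollary is stated immediately after \cref{prop:potdischarging words has fair decomposition} with the explicit remark that it lets one weaken the increasing-potential-fairness hypothesis of \cref{lem: decomposed inc pot with cover sparse no ghosts implies type 1 witness} to simply being $\pot$-discharging. You have correctly identified the two lemmas to chain, verified that the $\pot$-discharging decomposition is the one under consideration, and checked that the remaining hypotheses (sparsity with the stated gap, ghost-freeness, decomposition $G$-cover) are inherited directly.
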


Our main result of the section is again an inductive lemma analogous to \cref{lem:discharge with large gaps implies unbounded potential or D dip}, guaranteeing either a $D$-dip or a type-1 witness, given a $\pot$-discharging words sequence.
\begin{lemma}[\keyicon \lightbulbicon Potential Discharge and Gap to Witness or $D$-Dip]
    \label{lem:potdischarge with large gaps implies type 1 or D dip}
    Let $\ell\in \bbN$ and consider a $\pot$-discharging words sequence $\potdiswords$ that is $\ell$-sparse with gap $(|S|+1)4\bigM W_{\max}|w_3|$. 
    Then $\cA_\infty^\infty$ has a type-1 witness, or $\potdiswords$ has the infix $D$-dip property for some $D\in \bbN$.
\end{lemma}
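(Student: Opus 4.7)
The plan is to mirror the reverse induction on $\ell$ used in \cref{lem:discharge with large gaps implies unbounded potential or D dip}, replacing every role of $\charge$ by $\pot$ and every appeal to ``unbounded potential'' by ``$\augA_\infty^\infty$ has a type-1 witness''. A significant simplification over the charge case is that the potential has bounded growth by \cref{lem:bounded growth potential}, so we never need a bounded-decrease hypothesis, and \cref{prop:potdischarging words has fair decomposition} already gives us that the $\pot$-discharging decomposition (\cref{def:potdischarging decomposition}) is an increasing-potential fair decomposition unconditionally.

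For the base case $\ell = |S|$, the $\ell$-sparsity already exhausts the configuration at every prefix of $w_3$, so $\potdiswords$ is automatically a configuration $0$-cover and there are no ghost runs on $w_3$; hence the $\pot$-discharging decomposition is ghost-free, and \cref{cor: potdischarging cover sparse no ghosts implies type 1} directly yields a type-1 witness.

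For the inductive step $\ell < |S|$, I would decompose each word via \cref{def:potdischarging decomposition} and split into three mutually non-exclusive cases exactly as in \cref{lem:discharge with large gaps implies unbounded potential or D dip}. \emph{Case 1} (not a decomposition $G$-cover on some segment $u_i$): a state $q$ drifts gap $G$ away from all $\ell$ independent runs inside some $u_i$. Sub-case 1a, where the potential increases significantly on a suffix of $u_i$, lets me recut $u_i$ into a new $\potdiswords'$ with $\ell+1$ independent runs and conclude by induction; sub-case 1b, where the potential stays within a bounded band on all splits of $u_i$, produces a $\kappa$-$\pot$-leveled sequence $\potlevwords$ (with $|u_i|\to\infty$) to which \cref{cor:potleveled implies witness or D dip}, and hence \cref{lem:potleveled with large gaps implies type 1 or D dip}, applies, again yielding either a type-1 witness or an infix $D$-dip that lifts back to $\potdiswords$. \emph{Case 2} (infinitely many $m$ carry a ghost run spanning an entire segment $u_i$): I would flatten the prefix $w_1 w_2 u_1\cdots u_{i-1}$ with a large constant as in the analogous case of \cref{lem:potleveled with large gaps implies type 1 or D dip}. \emph{Case 3} (the complement of Cases 1 and 2, after discarding finitely many $m$): the decomposition is a ghost-free decomposition $G$-cover for some fixed $G$, so \cref{cor: potdischarging cover sparse no ghosts implies type 1} applies directly (shifting indices by $G$ to absorb the extra $4G$ in the required gap bound).

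The main obstacle is Case 2, which is genuinely more delicate than in the charge setting. After flattening the prefix, the newly promoted ghost run sits very far above everything else and contributes the desired $(\ell{+}1)$-st independent run with the required gap, but we must also check that the flattened sequence remains $\pot$-discharging. This is precisely where the potential differs qualitatively from the charge: flattening (repeated cactus unfolding, since $\Gamma'\subseteq\Gamma_\infty^0$ contains no rebase letters) can a priori move the potential. To close this gap I would invoke \cref{lem:unfolding maintains potential}: it gives a binary alternative, either $\augA_\infty^\infty$ has a type-0 witness (which is in particular a type-1 witness and we are done), or the potential is preserved by every unfolding step and hence by flattening, so $\potdiswords'$ is still a $\pot$-discharging sequence satisfying the induction hypothesis. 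The infix $D$-dip conclusion then lifts back to $\potdiswords$ because the infix $D$-dip property is local to $w_3$ and $\ghostTrans$ is preserved by flattening, exactly as in the lifting argument of \cref{lem:potleveled with large gaps implies type 1 or D dip}.
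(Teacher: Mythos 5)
Your proposal matches the paper's proof essentially step for step: the reverse induction on $\ell$, the base case via \cref{cor: potdischarging cover sparse no ghosts implies type 1}, the three-way inductive case split (diverging run with sub-cases 1a/1b, long ghost runs handled by flattening plus \cref{lem:unfolding maintains potential}, and ghost-free $G$-cover with the index shift to absorb the $4G$), and the routing of sub-case 1b through the $\pot$-leveled machinery. You also correctly pinpointed Case 2 as the place where the potential setting diverges from the charge setting and correctly resolved it via the type-0-witness-or-potential-preserved dichotomy, which is exactly what the paper does.
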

\begin{proof}
    We prove the lemma by reverse induction on $\ell$. Again, the proof is nearly identical to that of \cref{lem:discharge with large gaps implies unbounded potential or D dip}, but has some tweaks where potential differs from charge (in particular in Case 2, which requires flattening).
    
    \paragraph*{Base case: $\ell=|S|$} Recall that the maximal number of configuration-independent runs is $|S|$ (see \cref{def:configuration independent runs}). 
    In this case, for every $m\in \bbN$ with $\potdiswords(m)=w_1w_2w_3w_4$ we have that in every configuration in $w_3$, each state belongs to one of the independent runs. 
    In particular, $\potdiswords$ is a configuration $G$-cover with $G=0$, and is also $\ell$-sparse by assumption, with gap $4G+(|S|+1)4\bigM W_{\max}|w_3|$ (since $G=0$).

    Moreover, since $\ell=|S|$, then in particular every state is reachable along any prefix of $w_3$, and therefore there are no ghost runs at all along $w_3$. 
    Thus, the conditions of \cref{cor: potdischarging cover sparse no ghosts implies type 1} are met, 
    so $\augA_\infty^\infty$ has a type-1 witness.
    \paragraph*{Inductive case: $\ell<|S|$}
    For the inductive case, we start by decomposing $\potdiswords$ to its discharging decomposition (\cref{def:potdischarging decomposition}) $w_3=u_1\cdots u_{d(m)}u_{d(m)+1}$, and by \cref{prop:potdischarging words has fair decomposition} this decomposition is an increasing-potential fair decomposition.
    We now split to three subcases. 
    
    \subparagraph*{Case 1: a diverging run on a segment}
    In this case we essentially assume that the independent runs are not a decomposition $G$-cover for any $G$, and use this to construct a new independent run from states that stay far away from all independent runs. We then apply the induction hypothesis. 
    
    This starts similar to the analogous case in \cref{lem:potleveled with large gaps implies type 1 or D dip}, but differs in that the potential constraints are not necessarily met, and requires further case analysis.

    Formally, assume that for every $G\in \bbN$ there exists $m>2G$ with $\potdiswords(m)=w_1w_2w_3w_4$ and $w_3=u_1\cdots u_{d(m)}u_{d(m)+1}$ such that there exist $1\le i\le d(m)$ where for 
    $\vec{c_i}=\xconf(s_0,w_1w_2u_1\cdots u_i)$ there is a state $q\in \supp(\vec{c_i})$ for which $\vec{c_i}(q)$ is not withing gap $G$ of any independent run.
    That is, we assume that $\potdiswords(m)$ does not have a decomposition $G$-cover (\cref{def: G cover decomposed words}).
    
    By the assumption, for every $n\in \bbN$ there exists $m\in \bbN$ such that for $1\le i\le d(m)$ and $q$ above it holds that $q$ is not within gap $2W_{\max}n+(|S|+1)4\bigM W_{\max}n$ of any independent run, and $|u_i|>n$. 
    In particular, there is a seamless runs $\rho:s_0\runsto{w_1w_2u_1\cdots u_i}q$ such that for every split $u_i=v_0v$ with $|v|\le n$ we have that $\weight(\rho(w_1w_2u_1\cdots u_{i-1}v_0))$ is not within gap $(|S|+1)4\bigM W_{\max}n$ of any independent run. 

    With these notations, we divide into cases, depending on the behavior of $\pot$ on $u_i$.

    \subparagraph*{Case 1a: increasing potential}
    In this case, we assume that for every $n\in \bbN$ there exists $m>n$ and  $\potdiswords(m)=w_1w_2w_3w_4$ with $u_i$, $q$ and $\rho$ as above such that there is a split $u_i=v_0v$ with
    $\pot(w_1w_2u_1\cdots u_{i-1}v_0v)-\pot(w_1w_2u_1\cdots u_{i-1}v_0)>n$. 
    That is, there is a suffix $v$ of $w_3$ on which the potential increases significantly. Note the similarity to the condition of a $\pot$-discharging words sequence (\cref{def:potdischarging word sequence}). 

    We therefore construct a $\pot$-discharging words sequence $\potdiswords'$ by defining $\potdiswords'(n)=w'_1w'_2w'_3w'_4$ with $w'_1=w_1, w'_2=w_2u_1\cdots u_{i-1}v_0, w'_3=v$ and $w'_4=u_{i+1}\cdots u_{d(m)+1}w_4$.

    Since $\potdiswords$ is already a $\pot$-discharging words sequence, and $u_i$ is a segment in the decomposition, then by the minimality criterion of \cref{def:potdischarging decomposition}, we have that the charge along $u_i$ does not exceed $\pot(w_1w_2u_1\cdots u_i)$. Therefore, combined with the condition on $v$ above, we have that $\potdiswords'$ satisfies both the conditions of \cref{def:potdischarging word sequence}, so it is a discharging words sequence.

    Moreover, as we observed above, the run $\rho$ on the suffix $v$ can be added as an independent run with the gap required in the induction assumption. Note that the remaining runs certainly have large enough gaps, since $m>n$. We therefore satisfy the induction assumption, so by the induction hypothesis, either $\augA_\infty^\infty$ has a type-1 witness (and we are done), or $\potdiswords'$ has the Infix $D$-dip property for some $D\in \bbN$. 

    By an identical argument to \cref{lem:potleveled with large gaps implies type 1 or D dip}, we can lift this $D$-dip property to $\potdiswords$ itself, and we are done.

    \subparagraph*{Case 1b: non-increasing potential}
    Complementing Case 1a, we now assume that there exists $\kappa$ such that for every $m>\kappa$ and $\potdiswords(m)=w_1w_2w_3w_4$ with $u_i$, $q$ and $\rho$ as above, every split $u_i=v_0v$ satisfies
    $\pot(w_1w_2u_1\cdots u_{i-1}v_0v)-\pot(w_1w_2u_1\cdots u_{i-1}v_0)\le \kappa$. 

    Our goal in this case is to obtain from these segments a $\pot$-leveled words sequence. Indeed, the condition above is almost the condition for $\pot$-leveled words, but is missing the absolute value. 
    However, since $u_i$ is a \emph{minimal} infix of $w_3$ upon which the potential increases enough, as per \cref{def:potdischarging decomposition}, it follows that $\pot(w_1w_2u_1\cdots u_{i-1}v_0v)-\pot(w_1w_2u_1\cdots u_{i-1}v_0)>0$, so we can in fact add the absolute value and assume 
    \[|\pot(w_1w_2u_1\cdots u_{i-1}v_0v)-\pot(w_1w_2u_1\cdots u_{i-1}v_0)|\le \kappa\] 
    
    We construct a sequence $\potlevwords$ as follows: for every $n\in \bbN$ take $m$ for which $|u_i|>n$ (recall that the segments increase in length by \cref{prop:potdischarging words has fair decomposition}). Define $\potlevwords(n)=w'_1w'_2w'_3w'_4$ with $w'_1=w_1,w'_2=w_2u_1\cdots u_{i-1}$, $w'_3=u_i$ and $w'_4=u_{i+1}\cdots u_{d(m)+1}w_4$. 
    By the above, we have that $\potlevwords$ is indeed a leveled words sequence. Moreover, it inherits from $\potdiswords$ the properties of being $\ell$-sparse and having gap $(|S|+1)4\bigM W_{\max}|w_3|$. 
    We can therefore invoke \cref{lem:potleveled with large gaps implies type 1 or D dip}, and we obtain that $\augA_\infty^\infty$ has a type-1 witness, or $\potlevwords$ has the Infix $D$-dip property for some $D\in \bbN$. As with the case above, an identical argument to \cref{lem:potleveled with large gaps implies type 1 or D dip} shows that this lifts to $\potdiswords$, so we are done.

    \subparagraph*{Case 2: long ghost runs}
    The next case is not mutually exclusive to Case 1, so we assume Case 1 does not hold (even though we do not use this fact).
    We assume that there are infinitely many $m\in \bbN$ with $\potdiswords(m)=w_1w_2w_3w_4$ and decomposition $w_3=u_1\cdots u_{d(m)+1}$ such that there exists a ghost run over an entire segment $u_i$. More precisely, there exists $p\in \ghostTrans(s_0,w_1w_2u_1\cdots u_{i-1})$ and a ghost run $\rho:p\runsto{u_i}S$. 

    Our approach in this case is similar to that taken in \cref{lem:potleveled with large gaps implies type 1 or D dip} -- we flatten the prefix up to $u_i$, thus making the ghost run a real run that satisfies the gap constraints. Thus, we can apply induction.

    Formally, we define a new sequence $\potdiswords'$ as follows. For every $m'\in \bbN$ let $m>m'$ be such that $\potdiswords(m)=w_1w_2w_3w_4$ has a ghost run on $u_i$, and we require
    \begin{equation}
    \label{eq:potdischarging induction condition on ui}
    \pot(w_1w_2u_1\cdots u_{i})- \pot(w_1w_2u_1\cdots u_{i-1})>m'
    \end{equation}
    Note that we can require this by since $u_i$ is a segment in a $\pot$-discharging decomposition, as per \cref{def:potdischarging decomposition}.
    Define $\potdiswords'(m')=w'_1w'_2w'_3w'_4$ with
    $w'_1=\flatten(w_1w_2u_1\cdots u_{i-1} \wr (|S|+1)4\bigM W_{\max}|w_1w_2w_3w_3|)$, $w'_2=\epsilon$, $w'_3=u_i$ and $w'_4=u_i\cdots u_{d(m)+1}w_4$.

    Since our alphabet does not contain rebase letters, by \cref{def:cactus rebase flattening} the resulting prefix $x'_1$ is over $\Gamma_0^0$. 
    Denote $\vec{c}=\xconf(\vec{c_{init}},w_1w_2u_1\cdots u_{i-1})$ and $\vec{c'}=\xconf(\vec{c_{init}},w'_1w'_2)$, then by 
    \cref{lem:flattening configuration characterization}
    we have that $\vec{c}(q)=\vec{c'}(q)$ for every $q\in \booltrans(s_0,w_1w_2)$, and $\vec{c'}(p)$ is much larger than any finite entry in $\vec{c}$ (according to the flattening constant above). 
    Specifically, the entire run $\rho$ from $p$ is above all the existing $\ell$ independent runs with gap at least $(|S|+1)4\bigM W_{\max}|w_3|$.
    It follows that $\rho$ can be added as an independent run. Thus, $\potdiswords'$ has $\ell+1$ independent runs. 

    Now, however (and unlike \cref{lem:discharge with large gaps implies unbounded potential or D dip}), it is not necessarily the case that the potential is maintained after flattening. We therefore resort again to \cref{lem:unfolding maintains potential} and obtain that either $\augA_\infty^\infty$ has a type-0 witness (and in particular type-1), or indeed the potential is unchanged by the flattening.
    In the latter case, we have that $\potdiswords'$ remains a discharging words sequence. In particular, by \cref{eq:potdischarging induction condition on ui} $u_i$ satisfies the discharge constraints of \cref{def:potdischarging word sequence} (to be more precise, it satisfies the second requirement by the minimality constraint of \cref{def:potdischarging decomposition}).
    
    We can therefore apply the induction hypothesis on $\potdiswords'$. By the hypothesis, if $\augA_\infty^\infty$ has a type-1 witness, then we are done. Otherwise, we get that $\potdiswords'$ has the $D$-dip property, and we again repeat the argument in \cref{lem:potleveled with large gaps implies type 1 or D dip} to lift this to $\potdiswords$ (note that here the argument is the involved case therein, since we use flattening) and we are done.

    \subparagraph*{Case 3: Ghost free $G$-cover}
    Our final case assumes both Case 1 and Case 2 do not hold. Since Case 2 does not hold, there are at most finitely many $m$ for which there are ghost runs over segments in the decomposition. 
    We slightly strengthen this to assume there are no ghost runs over any segments at all. This is possible since we can replace $\potdiswords$ by duplicating $\potdiswords(m)$ for some very large $m$ into $\potdiswords(i)$ for all $i\le m$. Note that this does not interfere with the Infix $D$-dip property, which anyway refers to infinitely many infixes and ignores the replacement of finitely many elements.
    We therefore assume that the $\pot$-discharging decomposition of $\potdiswords$ is Ghost Free, as per \cref{def:ghost free decomposition}.

    Next, since Case 1 does not hold, then there exists $G\in \bbN$ such that for every $m>2G$ with $\potdiswords(m)=w_1w_2w_3w_4$ and $w_3=u_1\cdots u_{d(m)+1}$, for every $1\le i\le d(m)$ and $\vec{c_i}=\xconf(\vec{c_{\init}},w_1w_2u_1\cdots u_i)$, every state $q\in \supp(\vec{c_i})$ is within gap $G$ from some independent run. 
    As above, we can assume this holds in fact for every $m$, by possibly replacing finitely many elements of $\potdiswords$.
    Thus, the $\pot$-discharging decomposition of $\potdiswords$ is a decomposition $G$-cover (as per \cref{def: G cover decomposed words}).

    We can now almost apply \cref{cor: potdischarging cover sparse no ghosts implies type 1}, except the gap required there is slightly bigger than the gap in $\potdiswords$. Identically to \cref{lem:potdischarge with large gaps implies type 1 or D dip}, this is an artificial problem, and can be resolved by shifting the elements of $\potdiswords$, i.e., defining $\potdiswords'(m)=\potdiswords(m+G)$, which increases the bound by the necessary $4G$. 

    We can therefore invoke \cref{cor: potdischarging cover sparse no ghosts implies type 1}, and conclude that  $\augA_\infty^\infty$ has a type-1 witness, and we are done.
\end{proof}

As a potential-counterpart to \cref{cor:discharge implies unbounded potential or D dip} for discharge words, we have the main tool for $\potdiswords$ (by applying \cref{lem:potdischarge with large gaps implies type 1 or D dip} with $\ell=1$, for which the gap is unbounded).
\begin{corollary}[\keyicon Potential Discharge to Witness or $D$-Dip]
    \label{cor:potdischarge implies type 1 or D dip}
    Consider a $\potdiswords$ sequence, then either $\augA_\infty^\infty$ has a type-1 witness, or $\potdiswords$ has the Infix $D$-dip property for some $D\in \bbN$. 
\end{corollary}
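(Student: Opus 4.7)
The plan is to apply Lemma~\ref{lem:potdischarge with large gaps implies type 1 or D dip} with $\ell = 1$, for which the gap requirement becomes vacuous and hence trivially satisfied by the baseline run alone. Concretely, given any $\potdiswords$ sequence, I would construct, for each $m \in \bbN$, a singleton set of independent runs consisting of the baseline run on $w_1w_2w_3w_4 = \potdiswords(m)$. Such a baseline run exists and is seamless by the very definition of an elongated words sequence (Definition~\ref{def:elongated words sequence}), which is built into the definition of $\potdiswords$ (Definition~\ref{def:potdischarging word sequence}).

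Next I would observe that a singleton set of runs is trivially configuration-independent (as noted explicitly after Definition~\ref{def:configuration independent runs}), so $\potdiswords$ is $1$-sparse. The gap requirement of Lemma~\ref{lem:potdischarge with large gaps implies type 1 or D dip} for $\ell=1$ is a pairwise condition on an empty collection of pairs, and is therefore satisfied vacuously regardless of the expression $(|S|+1)4\bigM W_{\max}|w_3|$. Hence all hypotheses of the lemma are met.

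Applying Lemma~\ref{lem:potdischarge with large gaps implies type 1 or D dip} with these choices yields immediately that either $\augA_\infty^\infty$ has a type-$1$ witness, or $\potdiswords$ has the Infix $D$-dip property for some $D \in \bbN$, which is exactly the conclusion of the corollary. No obstacle is expected here, since the corollary is explicitly stated as an instantiation of the preceding lemma at the base value $\ell = 1$; all the difficult work, namely the reverse induction on the number of independent runs through the flattening/diverging-run/$G$-cover case analysis, has already been carried out in the proof of the lemma.
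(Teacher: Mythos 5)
Your proposal is correct and is exactly the paper's argument: the paper's remark immediately preceding the corollary reads ``by applying \cref{lem:potdischarge with large gaps implies type 1 or D dip} with $\ell=1$, for which the gap is unbounded,'' and like you it takes the singleton set consisting of the seamless baseline run (guaranteed by \cref{def:elongated words sequence}) as the witness of $1$-sparsity, with the pairwise gap condition holding vacuously.
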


\section{A Nondeterminizable WFA Has a Witness}
\label{sec:final nondet implies witness}
We are finally ready to prove our main result, namely that if $\cA$ is a nondeterminizable WFA, then $\augA_\infty^\infty$ has a witness. We then use this to conclude the proof that determinizability of WFA is decidable.
\begin{theorem}
    \label{thm:nondet to witness}
    Consider a WFA $\cA$. If $\cA$ is nondeterminizable, then $\augA_\infty^\infty$ has a type-1 witness.
\end{theorem}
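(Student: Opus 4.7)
The plan is to follow the outline in \cref{sec:abs:nondet implies witness} and split the argument into two stages: (a) if $\cA$ is nondeterminizable then $\sup\{\pot(w)\mid w\in (\Gamma_0^0)^*\}=\infty$, and (b) if this supremum is infinite then $\augA_\infty^\infty$ has a type-$1$ witness. Throughout I would exploit the recurring disjunction ``either $\augA_\infty^\infty$ already has a (type-$0$, hence type-$1$) witness, or the potential behaves well'' supplied by \cref{lem:unfolding maintains potential,lem:post bud chain is superior higher potential}: whenever the witness branch fires we are done, and otherwise we proceed.

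For stage (a), the starting point is \cref{lem:A det iff augA det} and \cref{thm:det iff bounded gap}, yielding a sequence of $B$-gap witnesses $(x_n,y_n)$ over the original finite alphabet with $B\to\infty$. For each $n$ I would pick a minimum-weight run $\rho_n$ on $x_n y_n$ and apply the baseline-shift of \cref{sec: baseline shift} with $\rho_n$: in the shifted perspective, $\rho_n$ is baseline of weight $0$, the charge at the end of $x_n y_n$ is $0$, but the gap-witness property forces $\charge(\baseshift{x_n}{\rho_n})\ge n$. Now replace the suffix $\baseshift{y_n}{\rho_n}$ by a \emph{cost-minimal} word $y'_n\in (\Gamma_\infty^\infty)^*$ that still realizes the same discharge profile. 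By \cref{cor:minimal cost cactus chain has bounded depth} the depth of every cactus letter in any such $y'_n$ is at most $|S|$, so all the $y'_n$ live in a common finite alphabet $\Gamma'\subseteq \Gamma_\infty^0$ (after absorbing rebase letters produced by the shift into the prefix via \cref{def:rebase removal}). This gives a discharging words sequence $\diswords$ in the sense of \cref{def:discharging word sequence}, and by \cref{cor:discharge implies unbounded potential or D dip} either the potential over $(\Gamma_0^0)^*$ is already unbounded and stage (a) is done, or $\diswords$ has the Infix $D$-dip property. In the latter case \cref{cor:infix D dip implies increasing infix} produces a separated increasing infix inside some $y'_n$, and \cref{lem:post bud chain is superior higher potential} buds a new cactus that strictly decreases the cost of $y'_n$ while (by superiority and \cref{prop:charge decreases superior}) preserving the discharge profile, contradicting minimality of $\cost(y'_n)$.

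For stage (b), assume $\sup\{\pot(w)\mid w\in (\Gamma_0^0)^*\}=\infty$ and fix $u_n\in(\Gamma_0^0)^*$ with $\pot(u_n)\ge n$. Replace each $u_n$ by a cost-minimal $u'_n\in(\Gamma_\infty^\infty)^*$ with $\pot(u'_n)\ge n$; such $u'_n$ exists because $u_n$ itself is a candidate and $\cost$ takes values in $\bbN$. Again \cref{cor:minimal cost cactus chain has bounded depth} places all $u'_n$ in a common finite alphabet $\Gamma'\subseteq\Gamma_\infty^0$, and the sequence is $\pot$-discharging in the sense of \cref{def:potdischarging word sequence} when decomposed with $w_2=\epsilon$ and $w_3=u'_n$. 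By \cref{cor:potdischarge implies type 1 or D dip}, either $\augA_\infty^\infty$ has a type-$1$ witness (and we are done), or the sequence has the Infix $D$-dip property, whence \cref{cor:infix D dip implies increasing infix} together with \cref{lem:post bud chain is superior higher potential} again produce a strictly cheaper $u'_n$ with $\pot\ge n$ (the inequality on potential in that lemma is exactly what is needed), contradicting minimality of $\cost(u'_n)$. Combining stages (a) and (b) yields the theorem.

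The main obstacle is making the alphabet-finiteness step rigorous: the baseline shift can introduce rebase letters, and minimal-cost replacements can nest cacti at many levels, a priori putting us outside the scope of the lemmas in \cref{sec:leveled and discharging words,sec:potleveled and potdischarging words}, which require a fixed finite $\Gamma'$. The keystone taming this is \cref{cor:minimal cost cactus chain has bounded depth}, which caps nested depth by $|S|$, together with the rebase-removal and flattening tools of \cref{sec: cactus unfolding}; only after these are carefully invoked does the minimal-cost family collectively sit inside one finite alphabet. A secondary subtlety is that cactus budding via \cref{lem:post bud chain is superior higher potential} is local to a particular chain, so one has to check that the globally-optimal $y'_n$ (resp.\ $u'_n$) really is strictly cheapened by the bud --- this is the content of Item~1 of that lemma --- and that the preserved invariant (discharge in stage (a), potential lower bound in stage (b)) propagates through the enclosing context.
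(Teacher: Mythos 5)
Your high‐level two‐stage plan (stage (a): nondeterminizable $\Rightarrow$ unbounded potential over $(\Gamma_0^0)^*$ or witness; stage (b): unbounded potential $\Rightarrow$ type-$1$ witness) matches the paper's decomposition into \cref{lem:A is nondet then potential is unbounded} and \cref{lem:potential is unbounded then type 1}, and the use of cost-minimal replacements, discharging/$\pot$-discharging sequences, and budding contradictions is correct in spirit. However, there is a genuine gap in the central ``finite alphabet'' step, which affects both stages.

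You write that \cref{cor:minimal cost cactus chain has bounded depth} ``places all $y'_n$ (resp.\ $u'_n$) in a common finite alphabet $\Gamma'\subseteq\Gamma_\infty^0$.'' This is false. That corollary caps the \emph{nesting depth} of cactus chains at $|S|$, but it does not bound the \emph{length} of the inner words $w$ inside cactus letters $\alpha_{S',w}$. For instance, $\alpha_{S',a^m}$ for all $m\in\bbN$ are cactus letters of depth $2$ forming an infinite alphabet. Bounding depth does not control which cacti appear, so the minimal-cost words $y'_n$ (resp.\ $u'_n$) can \emph{a priori} use infinitely many distinct cactus letters, and the discharging machinery of \cref{sec:leveled and discharging words,sec:potleveled and potdischarging words} — which explicitly requires a fixed finite $\Gamma'$ for the $W_{\max}$ constants and for \cref{lem:bounded growth potential} — does not apply.

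This is exactly the obstruction that the paper's $\subcact_k$ selection function (\cref{def:deep sub cacti selection}) is designed to overcome. The paper takes $k\le|S|$ \emph{minimal} such that $\bigcup_n\subcact_k(z'_n)$ is finite (this exists because depth is bounded, so $\subcact_{|S|}(\cdot)\subseteq\Gamma_0^0$). The case $k=0$ is what you describe: the top-level alphabet is already finite and the discharging machinery applies. But when $k>0$, the paper must dig $k-1$ levels into the cacti, extract the infinitely many inner words $w_n$ living over the finite alphabet $\bigcup_n\subcact_k(z'_n)$, construct a universal prefix $v_\init$ by flattening, unfold $\alpha_{S',w_n}$ enough times to find a repetition index $i$ where the charge (resp.\ potential) is non-increasing, and then case-split on whether the resulting sequence is leveled or (pot-)discharging, each time passing the $D$-dip alternative through \cref{lem:post bud chain is superior higher potential} and \cref{prop:charge decreases superior} to contradict the cost-minimality of the \emph{original} $z'_n$ (not of the inner word). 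Your proposal collapses this entire $k>0$ branch into the $k=0$ one; without it the argument does not go through. (Two smaller inaccuracies: in stage (a) the gap-witness words $x_n,y_n$ are already over $\Gamma_0^0$, so the baseline shift produces no rebase letters and no rebase removal is needed there; and in the $k=0$ case the budding step is \cref{lem:increasing infix budding} directly, while \cref{lem:post bud chain is superior higher potential} is only needed when the increasing infix sits inside an enclosing cactus chain, i.e.\ in the $k>0$ case.)
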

We prove this theorem in the remainder of this section.
Technically, this result combines all the tools developed thus far. For an overview, see \cref{sec:abs:finale}.

Henceforth, fix a WFA $\cA$, and assume $\cA$ is nondeterminizable.

\subsection{The Potential is Unbounded (Or There is a Witness)}
\label{sec:final nondet implies unbounded potential}
The first step of our proof is to show that if $\cA$ is nondeterminizable, then the potential is unbounded. More precisely, we prove the following.
\begin{lemma}[\keyicon \lightbulbicon Nondeterminizability implies Unbounded Potential or Witness]
    \label{lem:A is nondet then potential is unbounded}
    Consider a WFA $\cA$. If $\cA$ is nondeterminizable, then either $\sup\{\pot(w)\mid w\in (\Gamma_0^0)^*\}=\infty$, or $\cA$ has a type-0 witness.
\end{lemma}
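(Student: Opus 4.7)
The plan is to follow the outline in \cref{sec:abs:proof nondet to unbounded pot}: start with a $B$-gap witnessing family from nondeterminizability, transform it via baseline shift into a discharging sequence, and then apply the discharging toolbox together with increasing-infix budding to derive a contradiction from a minimal-cost assumption, unless the potential becomes unbounded or a type-0 witness emerges along the way. First I would invoke \cref{thm:det iff bounded gap,lem:A det iff augA det} to obtain, for every $B\in\bbN$, a $B$-gap witness $(x_B,y_B,q_B)$ of $\augA$ over $\Sigma$. Let $\rho_B$ be a minimal-weight seamless run on $x_By_B$; performing the baseline shift $\baseshift{x_By_B}{\rho_B}$ (\cref{sec: baseline shift}) produces a word on which the baseline run has weight $0$, while some seamless run ending at the relevant state after $x_B$ sits at depth more than $B$ below the baseline. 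By \cref{cor:baseline shift maintains gaps} these gaps, and hence the fact that $\charge$ decreases by more than $B$ when reading $y_B$ from the configuration after $x_B$, are preserved.

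Next I would package these words into a discharging words sequence $\diswords$ as in \cref{def:discharging word sequence}: $\diswords(m)=w_1w_2w_3w_4$ where $w_1w_2$ is the shifted $x_B$ (split so that $w_2$ carries the charge level), $w_3$ is the shifted $y_B$ (possibly truncated and with additional padding from $y_B$ moved into $w_4$) so that $\charge(w_1w_2)-\charge(w_1w_2w_3)>m$ and $w_3$ is minimal with this property (so the second inequality in \cref{def:discharging word sequence} holds automatically). The alphabet of these shifted words lies in $\Gamma_\infty^1$, but crucially we then \emph{replace} each $w_3$ by a word $w_3'$ over $\Gamma_\infty^\infty$ of \textbf{minimal cost} (as defined in \cref{def:cost depth and sub cactus}) among all infixes realizing the same charge-discharge behavior from the same starting configuration. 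By \cref{cor:minimal cost cactus chain has bounded depth}, minimality forces any cactus chain appearing in $w_3'$ to have depth at most $|S|$, which -- together with the fact that the prefixes $w_1w_2$ are drawn from finitely many baseline-shifted versions of a bounded skeleton -- lets us restrict to a finite alphabet $\Gamma'\subseteq\Gamma_\infty^\infty$ containing $\Gamma_0^0$. Thus after passing to a subsequence I may assume $\diswords$ takes values in $\Gamma'^*$.

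Now I would feed $\diswords$ into \cref{cor:discharge implies unbounded potential or D dip}. One of two things happens. In the first case, $\sup\{\pot(w)\mid w\in(\Gamma_0^0)^*\}=\infty$, which is exactly the first disjunct of the lemma and we are done. In the second case, $\diswords$ has the Infix $D$-dip property for some $D\in\bbN$. Then \cref{cor:infix D dip implies increasing infix} yields some $m$ with $\diswords(m)=w_1w_2w_3w_4$ and a decomposition $w_3=u'xyv'$ such that $uxyv$ (with $u=w_2u'$, $v=v'w_4$) is a separated increasing infix from $\ghostTrans(s_0,w_1)$. Apply \cref{lem:increasing infix budding}: either $\augA_\infty^\infty$ has a type-$0$ witness (hence a type-$0$ witness and we are done with the second disjunct), or folding the increasing infix into $u\alpha_{B,x}v$ yields a strictly smaller cost while preserving the discharging behavior required by the construction of $w_3'$ (the gap-at-end condition is preserved by Item 2 of \cref{lem:increasing infix budding}, which says runs can only weakly increase, and charge is controlled via \cref{prop:charge decreases superior} applied to the induced superior stable cycle).

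The main obstacle is this last step -- closing the loop so that budding really contradicts minimality of $w_3'$. The delicate points are (i) making sure the replacement $u\alpha_{B,x}v$ inside $w_3'$ still realizes the specified charge profile of $\diswords(m)$ rather than a different one (handled by showing that on the endpoints $w_1w_2$ and $w_1w_2w_3$ the configurations coincide up to a uniform shift, using the superiority and the fact that $xy$ and $\alpha_{B,x}$ have the same effect on grounded pairs via \cref{lem:pumping grounded pairs}), and (ii) controlling what ``minimal cost'' is quantified over, since the prefix $w_1w_2$ may itself change between iterations -- this is addressed by doing the whole construction with $w_1$ fixed as the untouched shifted $x_B$ skeleton and minimizing cost only over $w_3'$. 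Assuming these bookkeeping matters go through, cost strictly decreases (Item 1 of \cref{lem:increasing infix budding}), contradicting minimality; hence the $D$-dip branch cannot occur, and the potential must be unbounded over $(\Gamma_0^0)^*$, as required.
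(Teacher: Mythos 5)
Your proposal follows the same high-level arc as the paper — gap witnesses, baseline shift, minimal-cost replacement, discharging toolbox, budding contradiction — but there is a genuine gap at the step where you ``restrict to a finite alphabet.'' Bounded cactus depth (\cref{cor:minimal cost cactus chain has bounded depth}) does \emph{not} imply a finite alphabet: even a depth-1 cactus letter $\alpha_{S',w}$ has an arbitrary inner word $w$, so there are infinitely many distinct cactus letters at every fixed depth, and the finiteness of the prefix skeletons is irrelevant to this. The paper's proof has a substantial case analysis at exactly this point, using the $\subcact_k$ operator: it takes the minimal $k\le|S|$ so that $\bigcup_n\subcact_k(z'_n)$ is finite. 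Only when $k=0$ (all $z'_n$ over a finite alphabet) does the argument you sketch apply directly. When $k>0$ — which is the hard case — there are infinitely many distinct cactus letters $\alpha_{S',w_n}$ at depth $k-1$ with arbitrarily long inner words $w_n$ over a finite inner alphabet; the paper must then unfold these, build a leveled or discharging sequence \emph{from the inner words}, detect an increasing infix \emph{inside the cactus chain}, and invoke the pre-bud/post-bud chain machinery (\cref{lem:post bud chain is superior higher potential}) together with \cref{prop:charge decreases superior} to propagate the cost reduction through the nested chain back to the outermost $z'_n$. None of this appears in your sketch, and it is the heart of the proof.

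Two smaller issues. First, you take $w'_3\in(\Gamma_\infty^\infty)^*$, but the discharging toolbox of \cref{sec:discharging and unbounded potential} is stated for $\Gamma'\subseteq\Gamma_\infty^0$ (no rebase or jump letters); the paper accordingly restricts the replacement $z'_n$ to be cactus-only, and this restriction matters for both the cost function and for flattening to land back in $\Gamma_0^0$. Second, you never argue that the $z'_n$ have unbounded length, which the paper needs before it can speak of elongating/discharging sequences; the paper dispatches this via \cref{lem:charge no bounded decrease then potential unbounded} when the lengths are bounded. Also note that the paper's minimality criterion is just ``minimal cost subject to $\charge(x'_nz'_n)=0$,'' which is deliberately weaker than ``same charge-discharge behavior'' and is what makes the budding contradiction go through cleanly (the budded word is only shown to keep the charge at $0$, not to preserve the full charge profile).
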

We prove the lemma in the remainder of the section. 
since $\cA$ is nondeterminizable, then by \cref{lem:A det iff augA det} we have that $\augA$ is also nondeterminizable.
Our starting point is to use the gap-criterion of \cref{thm:det iff bounded gap}. Specifically, for every $n\in \bbN$ there exists a $n$-gap witness $x_n,y_n\in (\Gamma_0^0)^*$ (\cref{def: B gap witness}). 
That is, for every $n\in \bbN$ there exist seamless runs $\rho:s_0\runsto{x_ny_n}S$ and $\mu:s_0\runsto{x_n}S$ such that $\minweight(x_ny_n,s_0\to S)=\weight(\rho)$ while $\minweight(x_n,s_0\to S)=\weight(\mu)$ and $\weight(\rho(x_n))-\weight(\mu(x_n))\ge n$.

For every $n\in \bbN$, we now perform a baseline shift (\cref{sec: baseline shift}) so that $\rho$ becomes the baseline (see \cref{fig:final baseline shift}). Since the alphabet is $\Gamma_0^0$, then this is maintained by the baseline shift. Specifically, we consider $\rho'=\baseshift{\rho}{\rho}$, $\mu'=\baseshift{\mu}{\rho}$, $x_n'=\baseshift{x_n}{\rho}$ and $y_n'=\baseshift{y_n}{\rho}$.
By \cref{cor:baseline shift maintains gaps,cor:baseline shift to seamless run}, we have that the gap property above is maintained, and $\rho'$ is the baseline run (and therefore has weight $0$ on all prefixes). 
That is, $\minweight(x'_ny'_n,s_0\to S)=\weight(\rho')=0$ while $\minweight(x'_n,s_0\to S)=\weight(\mu')$ and $\weight(\rho(x'_n))-\weight(\mu(x'_n))=-\weight(\mu(x'_n))\ge n$. 
\begin{figure}[ht]
    \centering
    \includegraphics[width=0.8\linewidth]{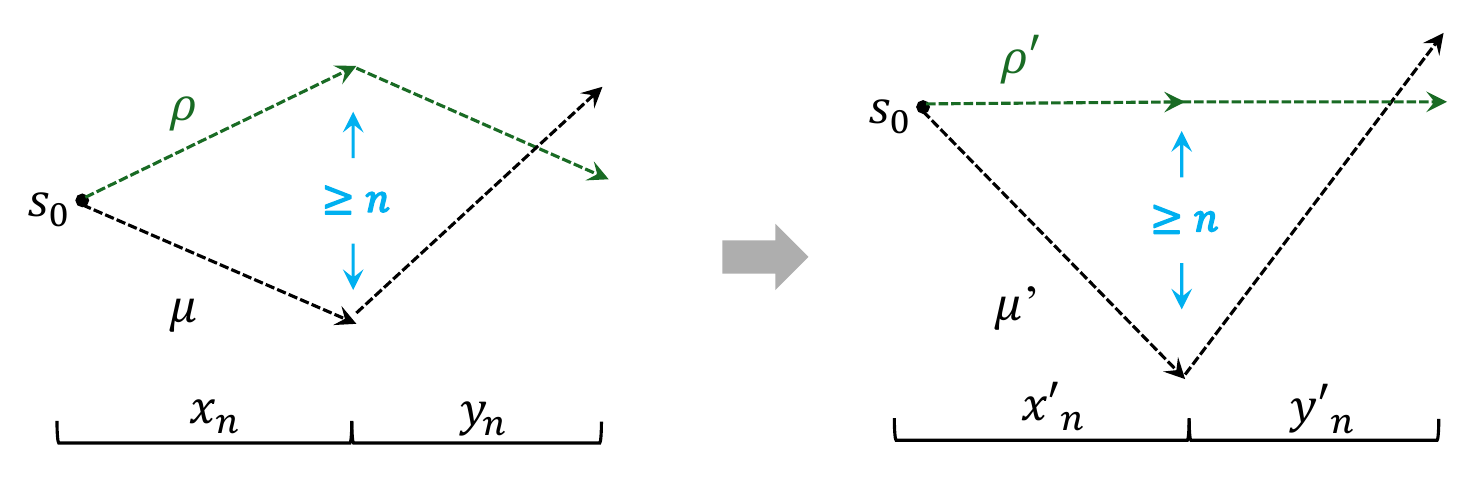}
    \caption{Baseline shift on $\rho$. Note that we do not assume $\mu'$ actually continues after $x'_n$, but this is the more interesting case.}
    \label{fig:final baseline shift}
\end{figure}
By the gap criterion, and since $\mu'$ is a minimal run on $x'_n$, we now see that $\charge(x'_n)=-\weight(\mu(x'_n))\ge n$. However, since $\rho'$ is minimal on $x'_ny'_n$, we also have $\charge(x'_ny'_n)=0$. That is, there is a significant discharge upon reading $y_n$.
Intuitively, we would like to obtain from this sequence a discharging words sequence, as these almost guarantee unbounded potential (\cref{cor:discharge implies unbounded potential or D dip}).
The major barrier we face is that instead of unbounded potential, we might also get a $D$-dip, at which point we would be stuck. 

To avoid this problem, we need to work much harder, and use our framework of cactus budding, as follows. 
For every $n\in \bbN$ we replace $y'_n$ by a word $z'_n\in (\Gamma_0^\infty)^*$ (i.e., containing cactus letters of any rank, but not rebase or jump letters), such that $\charge(x'_nz'_n)=0$ and
$\cost(z'_n)$ \emph{is minimal} with this property.
Note that since $y'_n$ satisfies the charge constraint, then there indeed exist such words $z'_n$ for every $n\in \bbN$ (possibly $y'_n$ itself, or otherwise some more elaborate words over $\Gamma_0^\infty$). 
Intuitively, we allow ``folding'' infixes of $y'_n$ into cacti, as long as this reduces the overall cost of the obtained word. 

Looking a few steps ahead, the idea is that $z'_n$ cannot have $D$-dip infixes, as those would imply the existence of separated increasing infixes, which in turn imply that we can perform cactus budding inside $z'_n$, thus reducing the cost. 
Unfortunately, since we now work over $(\Gamma_0^\infty)^*$, there is no guarantee that the $z'_n$ are constructed from a finite alphabet, which is a requirement for discharging sequences. So we need to first reduce the setting to a finite alphabet. 

To this end, we define a function that selects sub-cacti of a certain depth from the root of a cactus, as follows.
\begin{definition}
\label{def:deep sub cacti selection}
For a word $w\in (\Gamma_\infty^0)^*$ and $k\in \bbN$ we define its \emph{$k$-deep sub-cacti} $\subcact_k(w)\subseteq \Gamma_{\infty}^0$ inductively (both on $k$ and on the structure of $w$) as follows.
    \begin{itemize}
        \item For $k=0$, if $w=\sigma \in \Gamma_\infty^0$ is a letter (either cactus or not) then $\subcact_0(\sigma)=\sigma$.
        \item For $k>0$, if  $w=\sigma \in \Gamma_0^0$ is a (non-cactus) letter, then $\subcact_k(\sigma)=\emptyset$.
        \item For $k>0$, if $w=\alpha_{S',x}\in \Gamma_\infty^0$ is a cactus letter, then $\subcact_k(\alpha_{S',x})=\subcact_{k-1}(x)$.       
        \item If $w=\sigma_1\cdots \sigma_n\in (\Gamma_\infty^0)^*$ then  
        $\subcact_k(\sigma_1\cdots \sigma_n)=\bigcup_{i=1}^n\subcact_k(\sigma_i)$.
    \end{itemize}
\end{definition}
Intuitively, we ``dig'' into the cactus letters for exactly $k$ times, and select the set of letters that appear at \emph{exactly} that depth (cactus or not). If we cannot expand a certain branch to depth $k$, we ignore it (output $\emptyset$ for that branch).

\begin{example}
\label{xmp:sub cacti selection}
To illustrate this selection, recall \cref{xmp:cactus letters}, then in the notations therein we have:
\[
\begin{split}
&\subcact_0(ab\alpha_3 aa)= \{a,b,\alpha_3\}\\
&\subcact_1(ab\alpha_3 aa)= \subcact_0(\alpha_3)=\{a,b,\alpha_4,\alpha_2\}\\
&\subcact_2(ab\alpha_3 aa)= \subcact_1(\alpha_3)=\subcact_0(\alpha_4)\cup \subcact_0(\alpha_2)=\{b\}\cup \{a,b,\alpha_1\}=\{a,b,\alpha_1\}\\
&\subcact_3(ab\alpha_3 aa)= \subcact_2(\alpha_3)=\subcact_1(\alpha_4)\cup \subcact_1(\alpha_2)=\emptyset \cup \subcact_0(\alpha_1)=\{a\}
\end{split}
\]
\end{example}

Now, since all the $z'_n$ have minimal cost, then by \cref{cor:minimal cost cactus chain has bounded depth} we have that every cactus letter appearing in $z'_n$ has depth at most $|S|$. Indeed, otherwise there is some cactus letter $\alpha_{S',w}$ appearing in $z'_n$ that is the root of a depth $|S|+1$ cactus chain, but then we can replace it with a less-deep cactus chain with the same transition weights, and reduce the overall cost of $z'_n$.
In particular, this means that $\subcact_{|S|}(z'_n)\subseteq \Gamma_0^0$, as it does not contain cactus letters.
Thus, $|\bigcup_{n=1}^\infty\subcact_{|S|}(z'_n)|<\infty$.
We now take $k\le |S|$ to be the minimal such that $|\bigcup_{n=1}^\infty\subcact_{k}(z'_n)|<\infty$. 
That is, we dig deep enough so that the alphabet of the inner cacti (in all the $z'_n$) becomes finite.

Before proceeding, we wish to claim that the $z'_n$ have unbounded length, as $n$ increases. This, however, might not be the case (e.g., it could be that all $z'_n$ are a single letter, that does not allow the minimal run to continue).
However, in this case we claim that $\charge$ does not have bounded decrease, and therefore we are under the conditions of \cref{lem:charge no bounded decrease then potential unbounded}, which immediately implies that the potential is unbounded, concluding the proof of \cref{lem:A is nondet then potential is unbounded}.

More precisely, assume that $|z'_n|< C$ for some fixed $C\in \bbN$, and assume by way of contradiction that $\charge$ has the $B$ bounded-decrease property of \cref{def:charge bounded decrease}. Thus, we have $\charge(x'_n)-\charge(x'_nz'_n)<B|z'_n|\le BC$ for all $n\in \bbN$, but this is in contradiction to the fact that $\charge(x'_n)-\charge(x'_nz'_n)\ge n$ by the definition of $x'_n$ and $z'_n$.
We therefore conclude that the sequence of $z'_n$ is unbounded in length.
We now split to two cases:

\paragraph*{First case: $k=0$.}
In this case, all the $z'_n$ are already constructed from some finite alphabet $\Gamma'\subseteq \Gamma_\infty^0$. This makes things easy for us. We define the function $\diswords$ by choosing for every $m\in \bbN$ a large enough $n\in \bbN$ such that $|z'_n|>n$ (as we show above, the $z'_n$ are unbounded in length), and defining $\diswords(m)=w_1w_2w_3w_4$ with $w_1=x'_n$, $w_2=\epsilon$, $w_3=z'_n$ and $w_4=\epsilon$.

We now apply our main result for discharging sequences, namely \cref{cor:discharge implies unbounded potential or D dip}, and get that either $\sup\{\pot(w)\mid w\in (\Gamma_0^0)^*\}=\infty$, concluding the proof of \cref{lem:A is nondet then potential is unbounded}, or $\diswords$ has the Infix $D$-dip property for some $D\in \bbN$. In the latter case, we proceed to either reach a contradiction, or obtain a witness, as follows.

There is no need to remember what the $D$-dip property is at this point, as we immediately apply 
\cref{prop:infix D dip implied D dip} and then \cref{cor:dip implies increasing infix} to conclude that $\diswords$ has a separated increasing infix. More precisely, there exists $m\in \bbN$ with $\diswords(m)=w_1w_2w_3w_4$ and a decomposition $w_3=u'xyv'$ such that $uxyv$ is a separated increasing infix from $S'=\ghostTrans(s_0,w_1)$ for $u=w_2u'$ and $v=v'w_4$. 
By the construction of $\diswords$, we have in fact that $w_1=x'_n$ and $w_3=z'_n$, and $u'=u,v'=v$.
There is also no need to remember what an increasing infix is, only that we can now apply cactus budding, as per \cref{lem:increasing infix budding}. 
The conditions of \cref{lem:increasing infix budding} then tell us that either $\augA_\infty^\infty$ has a type-0 witness, in which case we conclude the proof of \cref{thm:nondet to witness}, or we can proceed to replace $z'_n$ with the word $z''_n=u\alpha_{B,x}v$ where $B=\ghostTrans(S',u)$, and we have that $\cost(z'_n)>\cost(z''_n)$, and that for every $r\in S',t\in S$ it holds that $\minweight(z'_n,r\to t)\le \minweight(z''_n,r\to t)$. 

Since $\minweight(x'_nz'_n,s_0\to S)=0$, and since $\alpha_{B,x}$ maintains a seamless baseline run, then by the inequality above we have that 
\[
\begin{split}
&0\ge \minweight(x'_nz''_n,s_0\to S)=\minweight(x'_nz''_n,s_0\runsto{x'_n} S'\runsto{z''_n}S)=\\
&\minweight(x'_n,s_0\to S')+\minweight(z''_n,S'\to S)\ge
\minweight(x'_n,s_0\to S')+\minweight(z'_n,S'\to S)=\\
&\minweight(x'_nz''_n,s_0\runsto{x'_n} S'\runsto{z'_n}S)=\minweight(x'_nz'_n,s_0\to S) =0
\end{split}
\]
so $\minweight(x'_nz''_n,s_0\to S)=0$ and in particular $\charge(x'_nz''_n)=0$. This, however, is a contradiction to the choice of $z'_n$ as minimal cost with this property (since $\cost(z'_n)>\cost(z''_n)$). 
We conclude that $\diswords$ cannot have the Infix $D$-dip property, so the potential is unbounded in this case.

\paragraph{Second case: $k>0$.}
Recall that $k$ is the minimal number such that 
$|\bigcup_{n=1}^\infty\subcact_{k}(z'_n)|<\infty$. Since $k>0$, then by its minimality, we have that $|\bigcup_{n=1}^\infty\subcact_{k-1}(z'_n)|=\infty$. 
In particular, since $\Gamma_0^0$ is finite, we have that $\Upsilon=\left(\bigcup_{n=1}^\infty\subcact_{k-1}(z'_n)\right)\setminus \Gamma_0^0$ is an infinite set of cactus letters. Consider the set of ``inner words'' of these cactus letters, i.e., the set $\Xi=\{w\mid \exists S'\subseteq S,\ \alpha_{S',w}\in \Upsilon\}$. By \cref{def:deep sub cacti selection} we have that $\bigcup_{w\in \Xi}\subcact_0(w)\subseteq \bigcup_{n=1}^\infty\subcact_{k}(z'_n)$, and in particular, this is a finite set, meaning that the alphabet of the words in $\Xi$ is finite. Since $\Xi$ itself is an infinite set (since $\Upsilon$ is infinite), it follows that there are arbitrarily long words in $\Xi$. 
Moreover, since there are only finitely many options for sets $S'$, it follows that there exists a fixed $S'\subseteq S$ such that for every $n\in \bbN$ there is a word $\alpha_{S',w_n}\in \Upsilon$ with $|w_n|>n$. 

We claim that $S'$ is also reachable by some word $v_{\init}$, i.e., $\booltrans(s_0,v_{\init})=S'$. 
Intuitively, this is because $\alpha_{S',w_n}$ is read along some $z'_m$, so all we need is the prefix up to it. However, this only guarantees reaching a subset of $S'$. We fix this by flattening.

Formally, each $\alpha_{S',w_n}$ is a letter appearing in some $z'_m$, and there are finite-weight runs on $z'_m$. Thus, the prefix $v''$ up to the occurrence of $w_n$ in some $z'_m$ reaches a state $s\in S'$ from which $\alpha_{S',w_n}$ is read. In particular, we have $\ghostTrans(s_0,v'')=S$ (by \cref{def:ghost states}, of ghost states). 
Then, we can define $v_{\init}=\flatten(v''\wr F)$ for $F>2\maxeff{v''}$, and obtain by \cref{lem:flattening configuration characterization} that $\booltrans(s_0,v_{\init})=S'$.

We therefore arrive in the following setting: there is a sequence of words $\{(v_{\init}\alpha_{S',w_n})\}_{n=1}^\infty$ with $v_{\init}\in (\Gamma_0^0)^*$, as well as $|w_n|> n$ and $w_n\in \Gamma'$ for all $n\in \bbN$, where $\Gamma'$ is a finite alphabet (namely $\Gamma'= \bigcup_{n=1}^\infty\subcact_{k}(z'_n)$).

We now unfold each $\alpha_{S',w_n}$ as follows. Let $n\in \bbN$, and set $F>2\maxeff{v_{\init}\alpha_{S',w_n}}$. Define $v_0=\unfold(v_\init,\alpha_{S',w_n},\epsilon\wr F)=v_{\init}w_n^{2\bigM M_0}$.
By \cref{prop:unfolding maintains charge}, we have that $\charge(v_0)=\charge(v_{\init}\alpha_{S',w_n})$.
By \cref{rmk:increasing repetitions in unfolding}, we can arbitrarily increase $M_0$ and the following still holds. It particular, for $M_0+1$ we also have $\charge(v_0w_n^{2\bigM})=\charge(v_{\init}\alpha_{S',w_n})$, and therefore $\charge(v_0)=\charge(v_0w_n^{2\bigM})$.
Consider therefore the sequence of prefixes
\[v_0, v_0w_n,v_0w_n^2,\ldots,v_0w_n^{2\bigM}\]
then there exists $0\le i<2\bigM$ such that $\charge(v_0w_n^i)\ge \charge(v_0w_n^{i+1})$. Indeed, otherwise the charge strictly increases, contradicting the equality of charge between $v_0$ and $v_0w_n^{2\bigM}$.

We again construct an elongating words sequence $\words$ as follows. For every $m\in \bbN$, let $n$ be large enough such that $|w_n|>n$ (recall that we prove above that there is such $n$), and choose $i$ as above, i.e., such that $\charge(v_0w_n^i)\ge \charge(v_0w_n^{i+1})$. 
Define $\words(m)=w'_1w'_2w'_3w'_4$ with $w'_1=v_0w_n^i$, $w'_2=\epsilon$, $w'_3=w_n$ and $w'_4=\epsilon$.

We now analyze two possible scenarios.
\begin{itemize}
    \item If $\words$ is a leveled sequence (\cref{def:leveled words sequence}), then by \cref{cor:leveled implies unbounded potential or D dip}, either the potential is unbounded, concluding the proof of \cref{lem:A is nondet then potential is unbounded}, or $\words$ has the Infix $D$-dip property. In this case, we can reach a similar contradiction as in the $k=0$ case, as follows.

    Since $\words$ has the Infix $D$-dip property, then by \cref{cor:infix D dip implies increasing infix} it has a separated increasing infix. Specifically, there exists $m\in \bbN$ with $\words(m)=v_0w_n^i\cdot \epsilon\cdot w_n\cdot \epsilon$ and a decomposition $w_n=uxyv$ that is a separated increasing infix from $\ghostTrans(s_0,v_0w_n^i)=S'$ (as above, this follows from $w'_2=w'_4=\epsilon$).
    Recall that $w_n$ is a word such that\footnote{$\alpha_{S',w_n}$ actually appears in a sub-cactus of some $z'_{n'}$, and it may be that $n'\neq n$. For clarity we unify them, as neither have any implication on the proof.} $\alpha_{S',w_n}\in \subcact_{k-1}(z'_n)$.

    By expanding \cref{def:deep sub cacti selection}, we can write $z'_n$ as a cactus chain (\cref{def:cactus chain}), i.e., there is a cactus chain
    \[\Theta_1=\alpha_{S'_1,y_1},\ldots,\alpha_{S'_{k-1},y_k}\] 
    such that $z'_n=x_1\alpha_{S'_1,y_1}z_1$ for some words $x_1,y_1,z_1$, and $(S'_{k-1},y_k)=(S',w_n)$. 
    Then, since $w_n=uxyz$ is an increasing infix, then $\Theta_1$ is a \emph{Pre-bud chain} (\cref{def: pre and post bud chain}). 
    We can therefore invoke \cref{lem:post bud chain is superior higher potential}, and obtain that the corresponding Post-bud chain $\Theta_2$ is superior to $\Theta_1$.
    In particular, the first element of $\Theta_2$, denoted $\alpha_{S'_1,y''_1}$ is superior to $\alpha_{S'_1,y_1}$. 

    Then, \cref{prop:charge decreases superior} gives us that $0=\charge(x'_n z'_n)=\charge(x'_n\cdot x_1\alpha_{S'_1,y_1}y_1)\ge \charge(x'_n\cdot x_1 \alpha_{S'_1,y''_1})y_1)\ge 0$ (since the charge is non-negative).
    Thus, $\charge(x'_n\cdot x_1 \alpha_{S'_1,y''_1})y_1)=0$.
    Note, however, that $\alpha_{S'_1,y''_1}$ is obtained by propagating the replacement of $w_n$ with $u\alpha_{B,x}y$ through the chain, and by \cref{lem:increasing infix budding} we have $\cost(w_n)>\cost(u\alpha_{B,x}y)$ (which also propagates through the chain), meaning that $\cost(z'_n)>\cost(x_1\alpha_{S'_1,y''_1}y_1)$. 
    But this is a contradiction to the assumption that $z'_n$ has minimal cost with this property, so we are done.    

    \item The second case is when $\words$ is not a leveled sequence. We claim that in this case we can find a faithful restriction of $\words$ (\cref{def:elongated words sequence}) that is a discharging sequence, and then we can follow the same reasoning as above. Intuitively, not being leveled means that the charge fluctuates ``wildly'', and in particular becomes very high. Since it is non-negative and must end in $0$, then either it drops abruptly, violating the $B$-bounded decrease property, in which case we have unbounded potential, or it decreases slowly in which case we have a discharging sequence.

    Formally, since $\words$ is not leveled, then for every $B\in \bbN$ there exists $m\in \bbN$ with $\words(m)=v_0w_n^i\cdot \epsilon\cdot w_n\cdot \epsilon$ for some $n\in \bbN$, and a prefix $w'_n$ of $w_n$ such that $|\charge(v_0w_n^i)-\charge(v_0w_n^iw'_n)|>B$. Write $w_n=w'_n w''_n$.
    \begin{itemize}
        \item If for every $B$ there exists such $m,n$ for which $\charge(v_0w_n^i)-\charge(v_0w_n^iw'_n)>B$, then we can readily construct a discharging words sequence $\diswords$ by setting $\diswords(m')=v_0w_n^i\cdot \epsilon\cdot w'_n\cdot w''_n$, where we take $n$ large enough so that the discharge is greater than $m'$. Note that we can assume $w'_n$ increase in length, since the charge is assumed to have bounded-decrease, as discussed above.
        \item If for every $B$ there exists such $m,n$ for which $\charge(v_0w_n^iw'_n)-\charge(v_0w_n^i)>B$, recall that by our choice of $i$, we have $\charge(v_0w_n^i)\ge \charge(v_0w_n^{i+1})=\charge(v_0w_n^{i}w'_nw''_n)$, and therefore 
        we have $\charge(v_0w_n^iw'_n)-\charge(v_0w_n^iw'_nw''_n)>B$.
        Since the charge has bounded-decrease, as we already argue before the $k=0$ case, we have that the $w''_n$ suffixes in this case have unbounded length.
        Then we can again construct a discharging\footnote{A nit-picking details is that we need to take $w''_n$ to be the suffix upon which the discharge is maximal, to conform with \cref{def:discharging word sequence}.} words sequence $\diswords$ by setting $\diswords(m')=v_0w_n^iw'_n\cdot \epsilon\cdot w''_n\cdot \epsilon$, where we take $n$ large enough so that the discharge is greater than $m'$.
    \end{itemize}
    Therefore, we have a discharging sequence $\diswords$ such that for every $m\in \bbN$ with $\diswords(m)=w'_1w'_2w'_3w'_4$ it holds that $w'_3$ is an infix (in fact, either a suffix or a prefix) of $w_n$ for some $n\in \bbN$. 
    By our main result for discharging sequences, namely \cref{cor:discharge implies unbounded potential or D dip}, we again have that either the potential is unbounded, concluding the proof of \cref{lem:A is nondet then potential is unbounded}, or $\diswords$ has the Infix $D$-dip property. 
    In the latter case, we proceed identically to the leveled-words case above: we obtain an increasing infix within $w_n$, therefore we have a Pre-bud chain of lower cost, which contradicts the cost-minimality of $z'_n$.    
\end{itemize}

All in all, we conclude that in all cases that did not reach a contradiction, either we have a type-0 witness, or $\sup\{\pot(w)\mid w\in (\Gamma_0^0)^*\}=\infty$. This completes the proof of \cref{lem:A is nondet then potential is unbounded}. 

\subsection{$\augA_\infty^\infty$ Has a Witness}
\label{sec:final pot unbounded then witness}
We are now ready for the second (and more substantial) part of the proof of \cref{thm:nondet to witness}. 
From a high-level perspective, this part is very similar to the first part in \cref{sec:final nondet implies unbounded potential}. That is, we start with a sequence of words over $\Gamma_0^0$ that has some property (previously unbounded gaps, now unbounded potential). We then replace them with their minimal-cost variants over $\Gamma_0^\infty$. 
We proceed to find a finite alphabet deep enough in these words, and from there obtain discharging or leveled sequences. 
While the proof of this part does not seem more complicated, we emphasize that the lemma used in this part all pertain to potential, which uses the more difficult parts of our toolbox (e.g., \cref{sec:potleveled and potdischarging words}, and most of \cref{sec:cactus budding}).

\begin{lemma}[\keyicon \lightbulbicon Unbounded Potential Implies Witness]
    \label{lem:potential is unbounded then type 1}
    Consider a WFA $\cA$. If $\sup\{\pot(w)\mid w\in (\Gamma_0^0)^*\}=\infty$, then  $\cA$ has a type-1 witness.
\end{lemma}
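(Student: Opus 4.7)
The plan is to mirror the structure of the proof of \cref{lem:A is nondet then potential is unbounded} in \cref{sec:final nondet implies unbounded potential}, but working with potential throughout instead of charge. Starting from $\sup\{\pot(w)\mid w\in(\Gamma_0^0)^*\}=\infty$, I pick a sequence $\{u_n\}_{n=1}^\infty\subseteq (\Gamma_0^0)^*$ with $\pot(u_n)\ge n$, and truncate each $u_n$ so that $\pot$ attains its maximum at the end (this only makes the value bigger, so the assumption is preserved). Then for each $n$ I replace $u_n$ by a word $u'_n\in (\Gamma_0^\infty)^*$ of \emph{minimal cost} subject to the two constraints (i) $\pot(u'_n)\ge n$ and (ii) no prefix of $u'_n$ has strictly larger potential. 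Such a $u'_n$ exists (take $u'_n=u_n$ as a candidate), and by \cref{cor:minimal cost cactus chain has bounded depth} every cactus letter occurring in $u'_n$ has depth at most $|S|$. Consequently $\subcact_{|S|}(u'_n)\subseteq \Gamma_0^0$ is finite, so there is a minimal $k\le |S|$ for which $\Gamma'\eqdef\bigcup_n\subcact_k(u'_n)$ is finite.

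For the case $k=0$, all $u'_n$ already lie in a finite alphabet $\Gamma'\subseteq \Gamma_\infty^0$. By construction the sequence $\potdiswords(m)=\epsilon\cdot\epsilon\cdot u'_{n(m)}\cdot\epsilon$ (for $n(m)$ large enough so that $|u'_{n(m)}|>m$) is a $\pot$-discharging words sequence in the sense of \cref{def:potdischarging word sequence} (the bounded-growth of the potential, \cref{lem:bounded growth potential}, guarantees $|u'_n|\to\infty$, or else we are already done via a trivial charge argument analogous to the one used in the previous lemma). Applying \cref{cor:potdischarge implies type 1 or D dip} yields either a type-1 witness (and we are done) or the Infix $D$-dip property. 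In the latter case, \cref{cor:infix D dip implies increasing infix} produces a separated increasing infix $uxyv$ inside some $u'_n$, and \cref{lem:increasing infix budding} lets us replace it by $u\alpha_{B,x}v$: by Item 3 of that lemma the potential does \emph{not} decrease, by Item 1 the cost strictly decreases, and the ``maximum at the end'' property can be restored by truncation, contradicting the minimality of $\cost(u'_n)$.

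For $k>0$, by minimality of $k$ the set $\bigcup_n\subcact_{k-1}(u'_n)$ is infinite, so by pigeonholing on the stable-cycle component $S'$ we obtain an infinite family $\{\alpha_{S',w_n}\}$ with $|w_n|\to\infty$ and all $w_n$ over the finite alphabet $\Gamma'$. As in \cref{sec:final nondet implies unbounded potential}, a flattened prefix $v_\init\in(\Gamma_0^0)^*$ gives $\booltrans(s_0,v_\init)=S'$; unfolding each $\alpha_{S',w_n}$ to $w_n^{2\bigM M_0}$ and using \cref{lem:unfolding maintains potential} preserves the potential (otherwise a type-$0$ witness exists and we are done). By a pigeonhole on the $2\bigM$ prefixes $w_n^i$ I extract an index at which $\pot$ does not strictly increase between $w_n^i$ and $w_n^{i+1}$, and then split: either the potential stays within a bounded band on the $w_n^i$-to-$w_n^{i+1}$ segments, giving a $\potlevwords$ sequence, or it fluctuates sufficiently to carve out a $\potdiswords$ sequence on a prefix/suffix of $w_n$. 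Applying \cref{cor:potleveled implies witness or D dip} or \cref{cor:potdischarge implies type 1 or D dip} respectively produces either a type-1 witness (done) or an Infix $D$-dip inside some $w_n$.

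The common final step, which I expect to be the main technical obstacle, is converting an Infix $D$-dip into a contradiction with the cost-minimality of $u'_n$. In the case $k>0$, the offending increasing infix lives inside $w_n$, which itself sits at depth $k-1$ inside a cactus chain $\Theta_1$ that occurs in $u'_n$. This chain is a pre-bud chain in the sense of \cref{def: pre and post bud chain}, and \cref{lem:post bud chain is superior higher potential} yields a superior post-bud chain $\Theta_2$ with the property that the outer cactus letter $\alpha_{S'_1,w'_1}$ of $\Theta_2$ has strictly smaller cost than the outer cactus letter of $\Theta_1$ (by propagating \cref{lem:increasing infix budding} Item 1 up the chain) while the potential inequality $\pot(z_1\alpha_{S'_1,w_1}z_2)\le \pot(z_1\alpha_{S'_1,w'_1}z_2)$ from that lemma shows the potential of the rewritten $u''_n$ is at least that of $u'_n$. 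Restoring the maximum-at-end property by truncation, $u''_n$ is a strictly cheaper competitor for $u'_n$, a contradiction. The delicate issue here (and in the $k>0$ case more broadly) is that every invocation of \cref{lem:unfolding maintains potential} and \cref{lem:post bud chain is superior higher potential} carries the ``either a type-$0$ witness exists, or $\ldots$'' proviso, so one must carefully track through the argument that a type-$0$ witness immediately yields a type-$1$ witness, and that the chain of inequalities on potential is never broken along the way.
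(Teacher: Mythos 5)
Your proposal follows essentially the same route as the paper: minimal-cost replacements $u'_n$ over $\Gamma_0^\infty$, the bounded-depth/$\subcact_k$ argument to extract a finite alphabet, the split into $k=0$ vs.\ $k>0$, pigeonholing on the stable-cycle component to get an infinite family $\{\alpha_{S',w_n}\}$, unfolding and potential-preservation via \cref{lem:unfolding maintains potential}, the $\pot$-leveled / $\pot$-discharging dichotomy feeding into \cref{cor:potleveled implies witness or D dip} and \cref{cor:potdischarge implies type 1 or D dip}, and finally the Infix $D$-dip $\to$ increasing infix $\to$ post-bud chain contradiction with cost-minimality. Two small slips relative to the paper: the index $i$ you extract by pigeonhole should be one where $\pot$ does not strictly \emph{decrease}, i.e.\ $\pot(v_0 w_n^i)\le\pot(v_0 w_n^{i+1})$, not one where it does not strictly increase --- this orientation is what lets the ``potential drops on the prefix $w'_n$'' subcase be converted into a $\potdiswords$ sequence on the suffix $w''_n$. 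Also, there is no need for a ``trivial charge argument'' fallback in the $k=0$ case: once the alphabet is finite, \cref{lem:bounded growth potential} alone forces $|u'_n|\to\infty$. Your extra preprocessing (truncating to make $\pot$ peak at the end, and adding constraint (ii)) is harmless but redundant, since the paper derives the no-better-prefix property directly from cost-minimality.
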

We prove the lemma in the remainder of the section.

Since the potential is unbounded, there is a sequence of words $\{z_n\}_{n=1}^\infty$ such that $\pot(w_z)>n$ for all $n\in \bbN$. 
Intuitively, we would like to obtain from this sequence a potential-discharging words sequence (\cref{def:potdischarging word sequence}), as these almost guarantee a witness (\cref{cor:potdischarge implies type 1 or D dip}).
We again face the barrier that instead of a witness, we might also get a $D$-dip. However, we are now experienced and ready for the challenge.

Adhering to the naming scheme of \cref{sec:final pot unbounded then witness}, for every $n\in \bbN$ we replace $z_n$ by a word $z'_n\in (\Gamma_0^\infty)^*$ (i.e., containing cactus letters of any rank, but not rebase or jump letters), such that $\pot(z'_n)>n$ and $\cost(z'_n)$ \emph{is minimal} with this property.
Note that since $z_n$ satisfies the potential constraint, then there indeed exist such words $z'_n$ for every $n\in \bbN$ (possibly $z_n$ itself, or otherwise some more elaborate words over $\Gamma_0^\infty$). 
Intuitively, we allow ``folding'' infixes of $z_n$ into cacti, as long as this reduces the overall cost of the obtained word. 

Now, since all the $z'_n$ have minimal cost, then by \cref{cor:minimal cost cactus chain has bounded depth} we have that every cactus letter appearing in $z'_n$ has depth at most $|S|$. Indeed, otherwise there is some cactus letter $\alpha_{S',w}$ appearing in $z'_n$ that is the root of a depth $|S|+1$ cactus chain, but then we can replace it with a less-deep cactus chain with the same transition weights, and reduce the overall cost of $z'_n$.
In particular, this means that $\subcact_{|S|}(z'_n)\subseteq \Gamma_0^0$, as it does not contain cactus letters.
Thus, $|\bigcup_{n=1}^\infty\subcact_{|S|}(z'_n)|<\infty$.
We now take $k\le |S|$ to be the minimal such that $|\bigcup_{n=1}^\infty\subcact_{k}(z'_n)|<\infty$. 
That is, we dig deep enough so that the alphabet of the inner cacti (in all the $z'_n$) becomes finite.

Unlike \cref{sec:final nondet implies unbounded potential}, we cannot claim at this point that the lengths of the $z'_n$ are unbounded. 
Indeed, the bounded-growth property of the potential (\cref{lem:bounded growth potential}) applies only over a finite alphabet, which is not the case at this point.

We split to identical cases as in \cref{sec:final nondet implies unbounded potential}.
\paragraph*{First case: $k=0$.}
In this case, all the $z'_n$ are already constructed from some finite alphabet $\Gamma'\subseteq \Gamma_\infty^0$. In particular, we can now apply the bounded-growth property of \cref{lem:bounded growth potential}, and get that the $z'_n$ are unbounded in length as $n$ increases.

We define the function $\potdiswords$ (\cref{def:potdischarging word sequence}) by choosing for every $m\in \bbN$ a large enough $n\in \bbN$ such that $|z'_n|>n$ and defining $\diswords(m)=w_1w_2w_3w_4$ with $w_1=\epsilon$, $w_2=\epsilon$, $w_3=z'_n$ and $w_4=\epsilon$. Note that \cref{def:potdischarging word sequence} also requires that the potential does not exceed $\pot(z'_n)$ on any prefix of $w_3$. 
This requirement is satisfied because $z'_n$ is assumed to have minimal cost, and if we could drop a suffix of it while increasing the potential, this would decrease the cost.

We now apply our main result for potential-discharging sequences, namely 
\cref{cor:potdischarge implies type 1 or D dip} and get that either $\augA_\infty^\infty$ has a type-1 witness, concluding the proof of \cref{lem:potential is unbounded then type 1} or $\potdiswords$ has the Infix-$D$-dip property.

In the latter case, we proceed to either reach a contradiction, or obtain a witness, as follows.
We apply \cref{prop:infix D dip implied D dip} and then \cref{cor:dip implies increasing infix} to conclude that $\potdiswords$ has a separated increasing infix. More precisely, there exists $m\in \bbN$ with $\diswords(m)=w_1w_2w_3w_4$ and a decomposition $w_3=uxyv$ that is a separated increasing infix from $S'=\ghostTrans(s_0,w_1)$ (again, the format of the infix is due to $w_2=w_4=\epsilon$).

We can now apply cactus budding, as per \cref{lem:increasing infix budding}. 
The conditions of \cref{lem:increasing infix budding} then tell us that either $\augA_\infty^\infty$ has a type-0 witness, in which case we conclude the proof of \cref{thm:nondet to witness}, or we can proceed to replace $z'_n$ with the word $z''_n=u\alpha_{B,x}v$ where $B=\ghostTrans(S',u)$, and we have that $\cost(z'_n)>\cost(z''_n)$, as well as $\pot(z'_n)\le \pot(z''_n)$. This yields a contradiction to the minimal cost of $z'_n$, and we are done.

We remark that the entire framework of increasing infixes (\cref{sec:separated increasing infix}) is developed to obtain these two inequalities.

\paragraph{Second case: $k>0$.}
Recall that $k$ is the minimal number such that 
$|\bigcup_{n=1}^\infty\subcact_{k}(z'_n)|<\infty$. Since $k>0$, then by its minimality, we have that $|\bigcup_{n=1}^\infty\subcact_{k-1}(z'_n)|=\infty$. 
In particular, since $\Gamma_0^0$ is finite, we have that $\Upsilon=\left(\bigcup_{n=1}^\infty\subcact_{k-1}(z'_n)\right)\setminus \Gamma_0^0$ is an infinite set of cactus letters. Consider the set of ``inner words'' of these cactus letters, i.e., the set $\Xi=\{w\mid \exists S'\subseteq S,\ \alpha_{S',w}\in \Upsilon\}$. By \cref{def:deep sub cacti selection} we have that $\bigcup_{w\in \Xi}\subcact_0(w)\subseteq \bigcup_{n=1}^\infty\subcact_{k}(z'_n)$, and in particular, this is a finite set, meaning that the alphabet of the words in $\Xi$ is finite. Since $\Xi$ itself is an infinite set (since $\Upsilon$ is infinite), it follows that there are arbitrarily long words in $\Xi$. 
Moreover, since there are only finitely many options for sets $S'$, it follows that there exists a fixed $S'\subseteq S$ such that for every $n\in \bbN$ there is a word $\alpha_{S',w_n}\in \Upsilon$ with $|w_n|>n$. 

We claim that $S'$ is also reachable by some word $v_{\init}$, i.e., $\booltrans(s_0,v_{\init})=S'$. 
Intuitively, this is because $\alpha_{S',w_n}$ is read along some $z'_m$, so all we need is the prefix up to it. However, this only guarantees reaching a subset of $S'$. We fix this by flattening. 

Formally, each $\alpha_{S',w_n}$ is a letter appearing in some $z'_m$, and there are finite-weight runs on $z'_m$. Thus, the prefix $v''$ up to the occurrence of $w_n$ in some $z'_m$ reaches a state $s\in S'$ from which $\alpha_{S',w_n}$ is read. In particular, we have $\ghostTrans(s_0,v'')=S$ (by \cref{def:ghost states}, of ghost states). 
Then, we can define $v_{\init}=\flatten(v''\wr F)$ for $F>2\maxeff{v''}$, and obtain by \cref{lem:flattening configuration characterization} that $\booltrans(s_0,v_{\init})=S'$.

We therefore arrive in the following setting: there is a sequence of words $\{(v_{\init}\alpha_{S',w_n})\}_{n=1}^\infty$ with $v_{\init}\in (\Gamma_0^0)^*$, as well as $|w_n|> n$ and $w_n\in \Gamma'$ for all $n\in \bbN$, where $\Gamma'$ is a finite alphabet (namely $\Gamma'= \bigcup_{n=1}^\infty\subcact_{k}(z'_n)$).

We now unfold each $\alpha_{S',w_n}$ (but we are wary of the differences between charge and potential, and recall that potential does not behave quite as nicely with unfolding). 
Let $n\in \bbN$, and set $F>2\maxeff{v_{\init}\alpha_{S',w_n}}$. Define $v_0=\unfold(v_\init,\alpha_{S',w_n},\epsilon\wr F)=v_{\init}w_n^{2\bigM M_0}$.
By \cref{lem:unfolding maintains potential}, we have that either $\augA_\infty^\infty$ has a type-0 witness, concluding the proof of \cref{lem:potential is unbounded then type 1}, or $\pot(v_0)=\pot(v_{\init}\alpha_{S',w_n})$.
By \cref{rmk:increasing repetitions in unfolding}, we can arbitrarily increase $M_0$ and the following still holds. It particular, for $M_0+1$ we also have $\pot(v_0w_n^{2\bigM})=\pot(v_{\init}\alpha_{S',w_n})$, and therefore $\pot(v_0)=\pot(v_0w_n^{2\bigM})$.
Consider therefore the sequence of prefixes
\[v_0, v_0w_n,v_0w_n^2,\ldots,v_0w_n^{2\bigM}\]
then there exists $0\le i<2\bigM$ such that $\pot(v_0w_n^i)\le \pot(v_0w_n^{i+1})$. Indeed, otherwise the potential strictly decreases, contradicting the equality of potential between $v_0$ and $v_0w_n^{2\bigM}$.

We again construct an elongating words sequence $\words$ as follows. For every $m\in \bbN$, let $n$ be large enough such that $|w_n|>n$ (recall that we prove above that there is such $n$), and choose $i$ as above, i.e., such that $\pot(v_0w_n^i)\le \pot(v_0w_n^{i+1})$. 
Define $\words(m)=w'_1w'_2w'_3w'_4$ with $w'_1=v_0w_n^i$, $w'_2=\epsilon$, $w'_3=w_n$ and $w'_4=\epsilon$.

We now analyze two possible scenarios.
\begin{itemize}
    \item If $\words$ is a $\pot$-leveled sequence (\cref{def:potleveled words sequence}), then by \cref{cor:potleveled implies witness or D dip}, either $\augA_\infty^\infty$ has a type-1 witness, concluding the proof of \cref{lem:potential is unbounded then type 1}, or $\words$ has the Infix $D$-dip property. In this case, we can reach a similar contradiction as in the $k=0$ case, as follows.

    Since $\words$ has the Infix $D$-dip property, then by \cref{cor:infix D dip implies increasing infix} it has a separated increasing infix. Specifically, there exists $m\in \bbN$ with $\words(m)=v_0w_n^i\cdot \epsilon\cdot w_n\cdot \epsilon$ and a decomposition $w_n=uxyv$ that is a separated increasing infix from $\ghostTrans(s_0,v_0w_n^i)=S'$ (as above, this follows from $w'_2=w'_4=\epsilon$).
    Recall that $w_n$ is a word such that\footnote{$\alpha_{S',w_n}$ actually appears in a sub-cactus of some $z'_{n'}$, and it may be that $n'\neq n$. For clarity we unify them, as neither have any implication on the proof.} $\alpha_{S',w_n}\in \subcact_{k-1}(z'_n)$.

    By expanding \cref{def:deep sub cacti selection}, we can write $z'_n$ as a cactus chain (\cref{def:cactus chain}), i.e., there is a cactus chain
    \[\Theta_1=\alpha_{S'_1,y_1},\ldots,\alpha_{S'_{k-1},y_k}\] 
    such that $z'_n=x_1\alpha_{S'_1,y_1}z_1$ for some words $x_1,y_1,z_1$, and $(S'_{k-1},y_k)=(S',w_n)$. 
    Then, since $w_n=uxyz$ is an increasing infix, then $\Theta_1$ is a \emph{Pre-bud chain} (\cref{def: pre and post bud chain}). 
    We can therefore invoke \cref{lem:post bud chain is superior higher potential}, and obtain that the corresponding Post-bud chain, which starts with $\alpha_{S'_1,y'_1}$, satisfies $\pot(z'_n)=\pot(x_1\alpha_{S'_1,y_1}z_1)\le \pot(x_1\alpha_{S'_1,y'_1}z_1)$.

    Note, however, that $\alpha_{S'_1,y'_1}$ is obtained by propagating the replacement of $w_n$ with $u\alpha_{B,x}y$ through the chain, and by \cref{lem:increasing infix budding} we have $\cost(w_n)>\cost(u\alpha_{B,x}y)$ (which also propagates through the chain), meaning that $\cost(z'_n)>\cost(x_1\alpha_{S'_1,y'_1}z_1)$. 
    But this is a contradiction to the assumption that $z'_n$ has minimal cost with this property, so we are done.    

    \item The second case is when $\words$ is not a $\pot$-leveled sequence. We claim that in this case we can find a faithful restriction of $\words$ (\cref{def:elongated words sequence}) that is a $\pot$-discharging sequence, and then we can follow the same reasoning as above. Intuitively, not being leveled means that the potential fluctuates ``wildly'' (and can become very large or very low, unlike charge). Recall that $\words$ is over a finite alphabet (due to the minimality assumption on $k$, see $\Gamma'$ above). In particular, we have that the potential satisfies the bounded-growth property of \cref{lem:bounded growth potential}. This becomes useful below.

    Since $\words$ is not leveled, then for every $B\in \bbN$ there exists $m\in \bbN$ with $\words(m)=v_0w_n^i\cdot \epsilon\cdot w_n\cdot \epsilon$ for some $n\in \bbN$, and a prefix $w'_n$ of $w_n$ such that $|\pot(v_0w_n^i)-\pot(v_0w_n^iw'_n)|>B$. Write $w_n=w'_n w''_n$.
    \begin{itemize}
        \item If for every $B$ there exists such $m,n$ for which $\pot(v_0w_n^iw'_n)-\pot(v_0w_n^i)>B$, then we can readily construct a $\pot$-discharging words sequence $\potdiswords$ by setting $\potdiswords(m')=v_0w_n^i\cdot \epsilon\cdot w'_n\cdot w''_n$, where we take $n$ large enough so that the discharge is greater than $m'$. Specifically, we can assume $w'_n$ grows unboundedly due to the bounded-growth property of the potential (\cref{lem:bounded growth potential}).
        \item If for every $B$ there exists such $m,n$ for which $\pot(v_0w_n^i)-\pot(v_0w_n^iw'_n)>B$, then by the choice of $i$ we have $\pot(v_0w_n^i)\le \pot(v_0w_n^{i+1})=\pot(v_0w_n^{i}w'_nw''_n)$.
        Thus, we have $\pot(v_0w_n^iw'_nw''_n)-\pot(v_0w_n^iw'_n)>B$.
        Since the potential has bounded-growth over this finite alphabet, as we already argue before the $k=0$ case, we have that the $w''_n$ suffixes in this case have unbounded length.
        Then we can again construct a $\pot$-discharging\footnote{A nit-picking details is that we need to take $w''_n$ to be the suffix upon which the discharge is maximal, to conform with \cref{def:potdischarging word sequence}.} words sequence $\potdiswords$ by setting $\potdiswords(m')=v_0w_n^iw'_n\cdot \epsilon\cdot w''_n\cdot \epsilon$, where we take $n$ large enough so that the discharge is greater than $m'$.
    \end{itemize}
    Therefore, we have a $\pot$-discharging sequence $\potdiswords$ such that for every $m\in \bbN$ with $\potdiswords(m)=w'_1w'_2w'_3w'_4$ it holds that $w'_3$ is an infix (in fact, either a suffix or a prefix) of $w_n$ for some $n\in \bbN$. 
    By our main result for $\pot$-discharging sequences, namely \cref{cor:potdischarge implies type 1 or D dip}, we again have that either $\augA_\infty^\infty$ has a type-1 witness, concluding the proof of \cref{lem:potential is unbounded then type 1}, or  $\potdiswords$ has the Infix $D$-dip property. 
    In the latter case, we proceed identically to the $\pot$-leveled words case above: we obtain an increasing infix within $w_n$, therefore we have a Pre-bud chain of lower cost, which contradicts the cost-minimality of $z'_n$.    
\end{itemize}

All in all, we conclude that in all cases that did not reach a contradiction, $\augA_\infty^\infty$ has a type-1 witness and the proof of \cref{lem:potential is unbounded then type 1}. 
Combining this with \cref{lem:A is nondet then potential is unbounded}, we conclude the proof of \cref{thm:nondet to witness}.

\subsection{Determinizability of WFAs is Decidable}
\label{sec:decidability}
Having proved a sound and complete characterization of nondeterminizable WFAs by witnesses, we can now use the fact that checking a witness is decidable, to conclude the decidability of determinization.

\begin{theorem}
\label{thm:decidability of determinization}
    Determinizability of WFA is decidable.
\end{theorem}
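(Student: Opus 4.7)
The plan is to combine the two directions of our witness characterization with known decidability of equivalence for deterministic tropical WFAs, so that determinizability becomes the intersection of two recursively enumerable (RE) sets and hence decidable.

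First, I would argue that determinizability itself is in RE. Given $\cA$, we enumerate all deterministic WFAs $\cD$ over the alphabet $\Sigma$ (this is a countable set, since each such $\cD$ is finitely described). For each candidate $\cD$, we invoke the known decidability of equivalence between a nondeterministic tropical WFA and a deterministic one (cited in the introduction and used after Theorem 2.2 as well, via~\cite{Almagor2020Whatsdecidableweighted}). Thus if $\cA$ is determinizable, the enumeration will eventually find a witnessing $\cD$ and halt, giving a semi-algorithm for the positive instances.

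Next, I would argue that nondeterminizability is in RE, which is the substantive direction and is exactly what the entire paper has been building toward. By \cref{lem:witness implies nondet} the existence of a type-$k$ witness (for any $k$, in particular $k=1$) in $\augA_\infty^\infty$ implies that $\augA_\infty^\infty$ is nondeterminizable, and by \cref{cor:aug inf inf is det iff A is det} this is equivalent to $\cA$ being nondeterminizable. Conversely, by \cref{thm:nondet to witness}, if $\cA$ is nondeterminizable then $\augA_\infty^\infty$ does have a type-1 witness. So nondeterminizability is characterized by the existence of a witness. Since $\Gamma_\infty^\infty$ is countable (every cactus/rebase letter has a finite syntactic description built inductively from stable cycles over $\cA$), we can enumerate all triples $(w_1,w_2,w_3)\in (\Gamma_\infty^\infty)^*$, and by \cref{thm:verifying witness is decidable} we can decide for each triple whether it forms a witness. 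This yields a semi-algorithm for the negative instances.

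Running both semi-algorithms in parallel, exactly one will halt on any input $\cA$, yielding a decision procedure. The main non-trivial ingredients are already in place (\cref{lem:witness implies nondet}, \cref{thm:nondet to witness}, \cref{thm:verifying witness is decidable}, and the decidability of equivalence between a nondeterministic tropical WFA and a deterministic one); the only remaining ``obstacle'' is essentially bookkeeping: one must verify that the alphabet $\Gamma_\infty^\infty$ admits an effective enumeration, which follows immediately from its inductive definition (\cref{def:stab closure,def:cactus extension,def:jump letters}) since each letter is described by a finite tree of stable cycles over a fixed finite state space $S$, and each stable cycle is a pair $(S',w)$ with $S'\subseteq S$ and $w$ already-enumerated of strictly smaller rank.
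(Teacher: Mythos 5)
Your proposal is correct and takes essentially the same approach as the paper: both directions are established exactly as you describe (enumerate deterministic automata and test equivalence for the positive semi-algorithm; enumerate candidate witness triples and test them via \cref{thm:verifying witness is decidable} for the negative semi-algorithm, with completeness supplied by \cref{thm:nondet to witness}). The only cosmetic difference is that the paper restricts the enumeration to $w_1,w_2\in(\Gamma_\infty^1)^*$ since \cref{thm:nondet to witness} produces a type-1 witness, whereas you enumerate over all of $(\Gamma_\infty^\infty)^*$; this does not affect correctness.
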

\begin{proof}
    We show that both determinizability and its complement are in RE, and so the determinizability problem is in $\mathrm{RE}\cap \mathrm{coRE}=R$, i.e., decidable.

    \paragraph{Determinizability is in $\mathrm{RE}$}
    Given a WFA $\cA$, we can decide whether it is determinizable by enumerating all deterministic WFAs, and for every deterministic WFA $\cD$, check whether $\cA$ and $\cD$ are equivalent. The latter is decidable by \cite{Almagor2020Whatsdecidableweighted} (the so-called \emph{$N,D$ Equality 
 Problem} therein. Crucially relying on one of the automata being deterministic).

    If we find an equivalent deterministic automaton, we are done. Otherwise we continue.

    We remark that we can improve upon this brute force algorithm, where instead of trying every deterministic automaton, we enumerate every size of gap $G$, and construct a deterministic automaton as per the proof of \cref{thm:det iff bounded gap}. Essentially, we try bigger and bigger gaps, and if $\cA$ is determinizable then eventually we will reach its gap and obtain the equivalent deterministic automaton.

    \paragraph{Determinizability is in $\mathrm{coRE}$}
    By \cref{thm:nondet to witness}, if $\cA$ is nondeterminizable, then it has a type-1 witness. We therefore enumerate every tuple $(w_1,w_2,w_3)$ with $w_1,w_2\in \Gamma_\infty^1$ and $w_3\in \Gamma_\infty^\infty$ (including jump letters in $w_3$), and for every such tuple check whether it is a witness using \cref{thm:verifying witness is decidable}.

    Note that in order to enumerate all tuples, we need to increase the \emph{length} of the words, the \emph{cactus depth}, and the \emph{rebase rank}. However, this can be done in parallel (equivalently, the tuple can be guessed nondeterministically).
\end{proof}

\begin{figure}[ht]
    \centering
    \includegraphics[width=0.3\linewidth]{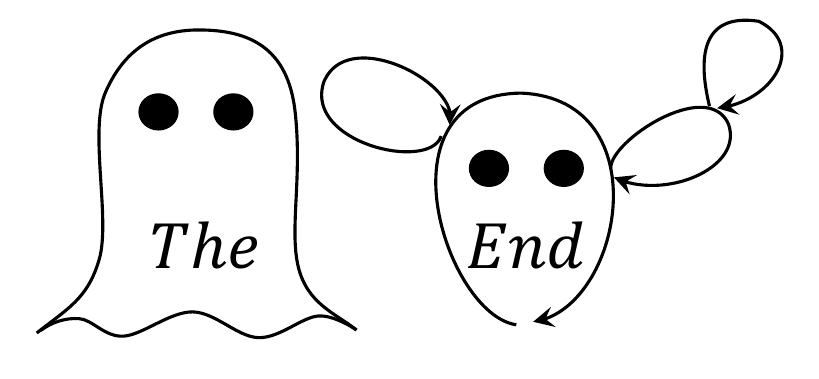}
    \label{fig:the end}
\end{figure}

\section*{Funding}
This research was supported by the ISRAEL SCIENCE FOUNDATION (grant No. 989/22).

\bibliographystyle{plain} 
\bibliography{main}

\pagebreak
\appendix

\section{A Characterization of Determinizability by Gaps}
\label{sec:apx det iff bounded gap}
\newcommand{\weighti}{\weight^{+\mathsf{i}}}
\newcommand{\weightf}{\weight^{+\mathsf{f}}}
\newcommand{\weightif}{\weight^{+\mathsf{if}}}
\newcommand{\minweighti}{\mathsf{m}\weight^{+\mathsf{i}}}
\newcommand{\minweightf}{\mathsf{m}\weight^{+\mathsf{f}}}
\newcommand{\minweightif}{\mathsf{m}\weight^{+\mathsf{if}}}
\newcommand{\normalize}{\mathsf{shift_0}}

In this section we prove a characterization of determinizability by means of ``gaps'' between runs. 
Since we use this result in \cref{sec:no need for init and fin and acc} for proving that we can omit initial and final weights (as well as accepting states), we prove it also for the general model of WFAs with initial and final weights.
Note that our model can be thought of as having initial weights in $\{0,\infty\}^Q$ (with only a single state weighted $0$), and the final weights are just $\vec{0}$.

We now define explicitly and precisely the model of WFA with initial and final weights, and give the full details of the reduction.

\paragraph{Weighted Automata with Initial and Final Weights}
\newcommand{\WFAif}{WFA$_\text{i-f}$\xspace}
A \emph{$(\min,+)$ Weighted Automaton with initial and final weights} (\WFAif for short) is a tuple $\cA= \tup{Q,\Sigma, \init, \Delta ,\fin}$ with the same components as a WFA, except that 
$\init,\fin\in \bbZ_\infty^Q$ are $Q$-indexed vectors denoting for each state its \emph{initial weight} and \emph{final weight}, respectively. 

For the purpose of this section, we assume $\Sigma$ is \emph{finite}. We do, however, comment on what happens for infinite $\Sigma$ in \cref{rmk: det iff bound infinite alphabet}.
    
A \WFAif is \emph{deterministic} if $|\{q\mid \init(q)\neq \infty\}|=1$ and for every $\sigma\in \Sigma$ and $p\in Q$ there exists at most one transition $(p,\sigma,c,q)$ with $c\neq \infty$.

For a run $\rho:p\runsto{x}q$ in a \WFAif, the sum of its weights along the transitions is denoted $\weight(\rho)$ (similarly to WFAs). We also introduce some additional notations for \WFAif, as follows.
\begin{itemize}
    \item $\weighti(\rho)=\init(p)+\weight(\rho)$ and $\minweighti(x,Q_1\to Q_2)=\min\{\weighti(\rho)\mid \rho:Q_1\runsto{x}Q_2\}$.
    \item $\weightf(\rho)=\weight(\rho)+\fin(q)$ and $\minweightf(x,Q_1\to Q_2)=\min\{\weightf(\rho)\mid \rho:Q_1\runsto{x}Q_2\}$.
    \item $\weightif(\rho)=\init(p)+\weight(\rho)+\fin(q)$ and $\minweightif(x,Q_1\to Q_2)=\min\{\weightif(\rho)\mid \rho:Q_1\runsto{x}Q_2\}$.
\end{itemize}
We denote $Q_0=\{q\mid \init(q)<\infty\}$ and $F=\{q\mid \fin(q)<\infty\}$.
As usual, we assume all automata we deal with a trim. For \WFAif, this means that every state is reachable from $Q_0$ and in addition can reach $F$. 

Intuitively, we show that $\cA$ is determinizable if and only if there is a bound of how large a run can grow above the current minimal run, and still become minimal with an appropriate suffix. The intuition for the proof is that a deterministic equivalent must, in some sense, keep track of the differences in values between all runs that may still become minimal.

Our main result in this section is the following.
\begin{theorem}
    \label{thm: gaps iff det in WFAif}
    A trim \WFAif is determinizable if and only if there exists a bound $B\in \bbN$ such that for every word $x\in \Sigma^*$ and state $q\in Q$, if there exists $y\in \Sigma^*$ such that $\minweightif(x\cdot y,Q_0\to F)=\minweightif(x\cdot y,Q_0\runsto{x}q\runsto{y}F)<\infty$, then $\minweighti(x,Q_0\to q)-\minweighti(x,Q_0\to Q)\le B$.
\end{theorem}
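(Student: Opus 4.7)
The proof splits into two implications. The reverse direction (bounded gaps $\Rightarrow$ determinizable) will follow from an explicit construction of an equivalent deterministic \WFAif by a normalized-subset construction with pruning, in the spirit of the folklore characterization mentioned in \cref{sec:abs:det and gaps}. The forward direction (determinizable $\Rightarrow$ bounded gaps) will proceed by contradiction: any equivalent deterministic $\cD$ forces a rigid constant-difference identity on the residual behaviors of $\cA$, and applying the unbounded gap witnesses against this rigidity produces the desired conflict.

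For the reverse direction, fix the bound $B$ and construct a deterministic \WFAif $\cD$ whose states are pairs $(T, v)$ with $T \subseteq Q$ and $v : T \to \{0, 1, \ldots, B\}$ satisfying $\min_{q \in T} v(q) = 0$, plus a dedicated initial state. Transitions emulate the weighted subset construction on $\Sigma$ with on-the-fly normalization and pruning: from $(T, v)$ on letter $\sigma$, compute $v'_0(q') = \min_{q \in T}(v(q) + c_{q,\sigma,q'})$ over $q' \in \booltrans(T, \sigma)$, subtract the new minimum $m := \min_{q'} v'_0(q')$ (absorbing $m$ as the $\cD$-transition weight), and drop any $q'$ whose normalized value exceeds $B$. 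Final weights are $\min_{q \in T}(v(q) + \fin(q))$. The inductive invariant — that after reading $x$, $\cD$'s accumulated weight equals $\minweighti(x, Q_0 \to T_x)$ and $v_x$ records the normalized $\weighti$-profile on the surviving support $T_x$ — is straightforward to verify by induction on $|x|$, since pruning is safe: any removed $q$ satisfies $\minweighti(x, Q_0 \to q) - \minweighti(x, Q_0 \to Q) > B$ and hence, by the contrapositive of the gap hypothesis, cannot participate in a minimal $xy$-run for any $y$. Finiteness of $\cD$ is immediate, with at most $(B+2)^{|Q|}+1$ states.

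For the forward direction, let $\cD$ be a trim deterministic equivalent of $\cA$ with state set $S_\cD$; write $\mathsf{st}(x)$ and $\mathsf{acc}(x)$ for the $\cD$-state and accumulated $\cD$-weight after reading $x$. Determinism gives the rigidity identity $\weight_\cA(xy) - \weight_\cA(x'y) = \mathsf{acc}(x) - \mathsf{acc}(x')$ whenever $\mathsf{st}(x) = \mathsf{st}(x')$. Suppose towards a contradiction that gaps are unbounded, and pick triples $(x_n, y_n, q_n)$ realizing gap $> n$. A first pigeonhole on $\mathsf{st}(x_n)$ and a second on $\mathsf{st}(x_n y_n)$ extract an infinite subsequence along which both are constant. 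Within this subsequence, pick indices $m$ and $n$ with $n$ chosen very large (relative to parameters to be specified); the rigidity identity applied to the suffix $y_n$ gives $\weight_\cA(x_m y_n) = \weight_\cA(x_n y_n) + \Delta$ with $\Delta := \mathsf{acc}(x_m) - \mathsf{acc}(x_n)$. The gap witness at $n$ forces $\weight_\cA(x_n y_n) \ge \minweighti(x_n, Q_0 \to Q) + n + \minweightf(y_n, q_n \to F)$. Comparing against an upper bound on $\weight_\cA(x_m y_n)$ obtained from any concrete $\cA$-run through the (minimal) $x_m$-branch followed by an arbitrary $y_n$-continuation forces $n$ to be bounded, the contradiction.

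The main technical obstacle is controlling the shift $\Delta$: the accumulated $\cD$-weight can a priori grow unboundedly even along a fixed pair of $\cD$-states, so the double pigeonhole alone does not bound $|\Delta|$. The fix is to rewrite $\Delta$ using equality with $\cA$: because both $x_m y_n$ and $x_n y_n$ land in the same $\cD$-state $s'$, one has $\Delta = \bigl(\weight_\cA(x_m y_n) - \fin_\cD(s') - \mathsf{acc}^{s}(y_n)\bigr) - \bigl(\weight_\cA(x_n y_n) - \fin_\cD(s') - \mathsf{acc}^{s}(y_n)\bigr) = \weight_\cA(x_m y_n) - \weight_\cA(x_n y_n)$, which reduces the bookkeeping to comparing two $\cA$-weights under a fixed suffix. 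A careful case analysis on which state of $\cA$ actually achieves $\weight_\cA(x_m y_n)$ — combined with the max-absolute transition weight and $\max |\fin_\cA|$ — then yields the contradiction. I expect this case analysis (disentangling the minimum in $x_m y_n$ from the witnessed minimum in $x_n y_n$) to be the most delicate part, since the minimizer in $x_m y_n$ may correspond to a different $\cA$-state than $q_n$.
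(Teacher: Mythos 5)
Your reverse direction (bounded gaps $\Rightarrow$ determinizable) is essentially the paper's construction: a normalized weighted subset automaton with states recording a $\{0,\ldots,B\}$-valued profile, pruning any state whose normalized value exceeds $B$. It is correct for the theorem as stated; the only omission is that you charge final weights lazily at the end, so your $\cD$ will generally have nonzero final weights even when the input $\cA$ is a plain WFA with $\fin \equiv 0$ — the paper adds a second component in $\{-\max\fin,\ldots,\max\fin\}$ precisely to pre-charge the final weight and avoid this, which matters later in \cref{sec:no need for init and fin and acc} but not for this theorem in isolation.

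The forward direction has a genuine gap, and the "fix" you describe does not repair it. The rigidity identity $\Delta = \mathsf{acc}(x_m) - \mathsf{acc}(x_n) = \weight_\cA(x_m y_n) - \weight_\cA(x_n y_n)$ is a single equality, so rewriting the right side in terms of $\cA$-weights is a tautology: it does not give you a \emph{lower} bound on $\Delta$, and without one there is no contradiction. Concretely, if the gap witnesses ride a negative ramp (so $\minweighti_\cA(x_n, Q_0 \to Q) \to -\infty$), then $\mathsf{acc}(x_n) \to -\infty$ and $\Delta \to +\infty$; your upper bound on $\weight_\cA(x_m y_n)$ minus your lower bound on $\weight_\cA(x_n y_n)$ also grows without bound in a way you never pin down, so no contradiction emerges. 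What you need, and the paper supplies as \cref{prop:det iff gap: det must track min run}, is the quantitative statement that $|\mathsf{acc}(x) - \minweighti_\cA(x, Q_0 \to Q)| \le M(|Q|+1)$ for all $x$; this is proved using trimness of $\cA$ (every state has a short accepting suffix), and it is the tool that converts $\Delta$ into something comparable to the $\minweighti_\cA$ terms in your gap-witness inequality.

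Even with that bound in hand, your double pigeonhole is not enough. After cancelling, the residual inequality has the shape $n < C_m + \bigl(\minweightf_\cA(y_n, p \to F) - \minweightf_\cA(y_n, q_n \to F)\bigr) + O(M|Q|)$, where $p$ is the state realizing $\weight_\cA(x_m y_n)$; the bracketed difference is unbounded unless you can pick $p = q_n$. That requires $q_n$ to be reachable from $Q_0$ via $x_m$, which your pigeonhole on $\cD$-states does not provide. You need an additional pigeonhole on the Boolean reachable sets $\booltrans(Q_0, x_n)$ (only $2^{|Q|}$ choices) to force $q_n \in \booltrans(Q_0, x_m)$, after which the two $\minweightf_\cA(y_n, q_n \to F)$ terms cancel and the bound on $n$ closes. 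The paper achieves both effects at once by replacing the long $x$ with a single short $x'$ of length $\le 2^{|Q|}|Q'|$ that preserves both the $\cD$-state and the $\cA$-reachable set; your infinite-subsequence route works too, but only after you add the reachable-set pigeonhole and prove the $\mathsf{acc}$-vs-$\minweighti_\cA$ bound you are currently missing.
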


\begin{remark}
    \label{rmk: definition of det in gap characterization}
    It is easy to see that the statement of \cref{thm: gaps iff det in WFAif} coincides with that of \cref{thm:det iff bounded gap} in when the \WFAif is actually a WFA. 
    It should be noted, however, that when we say ``determinizable'' for a WFA, we mean that the equivalent deterministic is also a WFA (rather than a \WFAif). Otherwise one may worry that the determinization of a WFA in this characterization is a \WFAif. We explain why this holds in the proof. 
\end{remark}
\begin{remark}[On Infinite Alphabets]
\label{rmk: det iff bound infinite alphabet}
The proof that the existence of a bound implies determinizability does not rely on the alphabet begin finite, and therefore holds for infinite alphabets as well. We use this in \cref{cor:aug inf inf is det iff A is det}.

The converse direction, however, does not hold. Yet, the proof in the converse direction can be adapted to state that if $\cA$ is determinizable, then for every finite subset of the alphabet, there exists a bound for words over this subset. While we do not make use of this, it is interesting to note.
\end{remark}

We prove \cref{thm: gaps iff det in WFAif} in the remainder of this section. We remark that the final weights, combined with our sub-goal in \cref{rmk: definition of det in gap characterization}, add some annoying technicalities to the proof. In order to understand the proof, we recommend assuming that all final weights are $0$, and applying this whenever they are discussed.

\subsection{First Direction: a Bound Implies Determinizability}
\label{sec:bound to det gap characterization}
For the first direction, assume there exists a bound $B$ as specified in the theorem. We construct an equivalent deterministic \WFAif  $\cD=\tup{Q',\Sigma,\init',\Delta',\fin'}$. 
We start with the intuition for the construction. Consider a word $x\in \Sigma^*$. The \WFAif $\cD$ keeps track of minimal run of $\cA$ on $x$, but also keeps track of the difference between the minimal run and higher runs, provided that these runs are no more than $B$ above the minimal one. The condition in the theorem ensures that once a run becomes higher than $B$ above the minimal one, it can no longer become minimal on any suffix, and can therefore be safely ignored.
In order to track the runs, the minimal run is always normalized to $0$, and this normalization is accounted for by the weight of the transition. 

We start with some definitions and notations.  Let $\max\fin=\max\{|\fin(q)|\mid q\in Q\}$.
For a function $f:Q\to \bbNinf$, we define $f|_B\in Q'$ by setting $f|_B(q)=f(q)$ if $f(q)\le B$, and $f|_B(q)=\infty$ otherwise, and we define $\normalize(f)$ to be the function $\normalize(f)(q)=f(q)-\min\{f(p)\mid p\in Q\}$, i.e., we shift all values of $f$ such that the minimum is $0$. We now turn to formally define the construction.
\begin{itemize}
    \item The state space is $Q'=(\{0,\ldots, M\}\cup \{\infty\})^Q\times \{-\max\fin,\ldots,\max\fin\}$. Intuitively, each state corresponds to a configuration, shifted to $0$ (and ignoring states whose minimal run thus far has weight more than $B$ above the minimum), as well as a number representing the final weight that was incurred in the last step.
    \item In order to define the initial state, we consider two cases. 
    \begin{itemize}
        \item If every state $q$ with $\init(q)<\infty$ has $\inf(q)=\infty$ (i.e., no accepting initial states), then 
    The initial vector $\init'$ assigns weight $\init(q)$ to the single state $(\normalize(\init)|_B,0)$.
    \item If there exists $q\in Q$ such that $\init(q)<\infty$ and $\fin(q)<\infty$ (i.e., the word $\epsilon$ receives a finite weight), let $q\in Q$ be the state for which $\init(q)+\fin(q)$ is minimal. 
    The initial vector $\init'$ assigns weight $\init(q)+\fin(q)$ to the single state $(\normalize(\init)|_B,\fin(q))$.
    \end{itemize}
    
    \item For $(g,c)\in Q'$, the final weight is $0$ if there exists $q\in Q$ such that $\fin(q)<\infty$ and $g(q)<\infty$ (i.e., if $g$ ``contains'' an accepting state), and otherwise the final weight is $\infty$.
    
    Observe that if $\cA$ is a WFA, then $\init'$ has entries in $\{0,\infty\}$ with only one entry of $0$ (and $\fin'$ always has entries in $\{0,\infty\}$ by our construction). 
    Thus, $\cD$ is also a WFA in this case (c.f., \cref{rmk: definition of det in gap characterization}).
    \item The transitions are defined as follows. 
    Consider a state $(g,c)\in Q'$ and letter $\sigma\in \Sigma$. We first define an intermediate function $g':Q\to \bbZinf$ by setting for every $q\in Q$
    \[g'(q)=\min\{g(p)+d\mid p\in Q,\ (p,\sigma,d,q)\in \Delta\}\]
    We then consider $g''=\normalize(g')|_B$, and split to cases similarly to the initial states. 
    \begin{itemize}
        \item If $\fin(q)=\infty$ for every $q$ with $g''(q)<\infty$, we have the transition 
        \[((g,c),\sigma,\min\{g'(q)\mid q\in Q\}-c,(g'',0))\in \Delta'\] 

        \item If there exists $q\in Q$ with $\fin(q)<\infty$ and $g''(q)<\infty$, let $q\in Q$ be the state for which $g''(q)+\fin(q)$ is minimal.  We have the transition 
        \[((g,c),\sigma,\min\{g'(q)\mid q\in Q\}-c+\fin(q),(g'',\fin(q)))\in \Delta'\]        
    \end{itemize}
    (all other transitions from $g$ with $\sigma$ have weight $\infty$, so $\cD$ is deterministic).

\end{itemize}

We explain the intuition and prove the correctness of the construction (namely that $\cD$ is equivalent to $\cA$). The proof proceeds by induction on the length of a give word $x$.

For the base case, $x=\epsilon$ and we consider the initial state. By definition, the configuration at the initial state is $g_0=\normalize(\init)|_B$. 
If $\weightif_\cA(\epsilon)=\infty$, then the initial weight is $0$, but the initial state $(g_0,0)$ has final weight $\infty$.

Otherwise, $\weightif_\cA(\epsilon)=\infty$ and the initial state $(g_0,\fin(q))$ has weight $\init(q)+\fin(q)=\weightif_\cA(\epsilon)$ for the appropriate state $q$.

From this, we extract our induction hypothesis. Consider the run $\rho_x$ of $\cD$ on $x$, then:
\begin{enumerate}
    \item $\rho_x$ ends in state $(\normalize(g_x)|_B,c_x)$ with $g_x(q)$ being the minimal value of a run on $x$ that ends in $q$, provided this run is at most $B$ above the overall minimal run thus far.
    \item If $\weightif_\cA(x)<\infty$, then $c_x=\fin(q)$ for the state $q$ such that $g_x(q)+\fin(q)=\weightif_\cA(x)$ (i.e., the state which induces the minimal accepting run, including final weight), and in addition $\weighti_\cD(\rho)=g_x(q)+c_x$.
    \item If $\weightif_\cA(x)=\infty$, then $c_x=0$ and $\weighti_\cD(\rho)=\min\{g_x(q)\mid q\in Q\}$.
\end{enumerate}
By the above, the induction hypothesis holds for $\epsilon$. The definition of the transition relation easily shows that this carries through the induction step, as we now show.

Consider a word $x\sigma$, and assume the hypothesis above for $\rho_x$, which ends in $(\normalize(g_x)|_B,c_x)$. Note that we cannot directly refer to $g_x$, as it has been normalized and clipped at $B$. Let $(h,d)$ be the state reached by $\cD$ after reading $\sigma$ from $(\normalize(g_x)|_B,c_x)$.
Consider the intermediate function $g'$ defined in $\Delta'$ above. Observe that $g'(q)$ is the minimal added weight from $g(q')$ via a transition to $q$. By the premise of the theorem, states whose minimal run is at more than $B$ above the minimal run, cannot yield minimal runs on any suffix. By the induction hypothesis, the states for which $g_x(q)=0$ represent the minimal runs, and therefore all states more than $B$ above are marked as $\infty$ in $g_x$ anyway. It follows that $g'(q)$ represents the minimal weight of a run that ends in $q$ and might still become a minimal run. 

The remaining two conditions also follow from the definition of $\Delta'$: if there is a state with finite final weight, then $h$ accumulates the final weight of the state yielding the minimal run. Otherwise $h$ accumulates $0$ (on top of the weight lost by shifting).
Observe that in $\Delta'$, the transition accounts for $-c$, since if $g$ already accounted for some final weight, this has to be removed (as the run did not end in $g$).
Put simply: $\cD$ tracks the correct minimal weight to any run that might still become minimal.
We then have that $(h,d)$ satisfies the conditions of the induction hypothesis, which concludes the claim.

By the hypothesis, when the run ends, the accumulated weight is exactly $\weight_\cA(x)$, so we are done.

\subsection{Second Direction: Determinizability Implies a Bound}
\label{sec:det to bound gap characterization}
For this section, assume $\cA=\tup{Q,\Sigma,\init,\Delta,\fin}$ has an equivalent (trim) deterministic \WFAif $\cD=\tup{Q',\Sigma,\init',\Delta',\fin'}$, we show that there exists a bound $B$ as in the theorem.
Recall that we denote by $Q_0,F$ (resp. $Q'_0,F'$) the initial and final states (i.e., those with finite initial and final weight) in $\cA$ and $\cD$ respectively.

Denote by $M$ the maximal absolute value of a weight appearing in either $\cA$ or $\cD$ (as a transition, initial or final weight).

We start by showing that $\cD$ must ``roughly'' track the same weights as the minimal run of $\cA$ on any word.
\begin{proposition}
    \label{prop:det iff gap: det must track min run}
    For every word $x\in \Sigma^*$, if $\minweighti_\cD(x,Q'_0\to Q')<\infty$, we have that $|\minweighti_\cD(x,Q'_0\to Q')-\minweighti_\cA(x,Q_0\to Q)|\le M\cdot (|Q|+1)$.
\end{proposition}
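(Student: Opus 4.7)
\medskip

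\noindent \textbf{Proof plan.} The plan is to exploit the equivalence $\weightif_\cD \equiv \weightif_\cA$ by extending $x$ to an accepting word in a controlled way, so that the only slack between $W_\cA := \minweighti_\cA(x,Q_0\to Q)$ and $W_\cD := \minweighti_\cD(x,Q'_0\to Q')$ comes from a short ``tail'' whose weight is bounded by the constant $M(|Q|+1)$ (up to a factor of $2$, which I address below).

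First, observe that trimness of $\cA$ guarantees that from every state $q\in Q$ there is a \emph{simple} path (no state repetitions) to some state of $F$, hence of length at most $|Q|$. The total weight incurred along such a path, together with the final weight at its endpoint, is therefore at most $M\cdot|Q|+M=M(|Q|+1)$ in absolute value. The analogous statement holds for $\cD$ with bound $M(|Q'|+1)$.

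Next, I would argue as follows. Assume $W_\cD<\infty$; let $\rho'$ be the (unique, by determinism) finite-weight run of $\cD$ on $x$, ending in some $q'\in Q'$, with prefix-weight $W_\cD$. Using trimness of $\cD$, pick a simple suffix $y$ from $q'$ to $F'$; then $\weightif_\cD(xy)<\infty$, and by equivalence $\weightif_\cA(xy)<\infty$, which in particular forces $\cA$ to admit at least one finite-weight run on $x$, i.e.\ $W_\cA<\infty$. Now take a minimal run of $\cA$ on $x$ ending at some $q\in Q$ (so its weight-with-initial is $W_\cA$), and let $z$ be a simple path from $q$ to some $q_f\in F$ in $\cA$, guaranteed by trimness with $|z|\le |Q|$. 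Then
\begin{equation*}
\weightif_\cA(xz)\;\le\; W_\cA + \weight(\text{path } z) + \fin(q_f)\;\le\; W_\cA + M(|Q|+1).
\end{equation*}
By equivalence, $\weightif_\cD(xz)=\weightif_\cA(xz)\le W_\cA + M(|Q|+1)$. But $\weightif_\cD(xz)$ decomposes through the unique state $q'$ reached after $x$, so $\weightif_\cD(xz)\ge W_\cD - M(|Q|+1)$ (the weight along $z$ and the final weight together are at most $M(|Q|+1)$ in absolute value). Combining the two bounds yields $W_\cD\le W_\cA + 2M(|Q|+1)$. A completely symmetric argument, this time choosing the simple suffix from $q'$ in $\cD$ and using trimness of $\cD$, gives the reverse inequality (with $|Q'|$ replaced by $|Q'|$ in the intermediate step but ultimately yielding a constant bound in the input sizes).

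The main obstacle is really one of bookkeeping rather than idea: making sure the inequalities work in both directions simultaneously, and verifying that the constant collapses to $M(|Q|+1)$ as stated (my naive accounting produces a factor of $2$, which can be absorbed by being slightly more careful about whether one takes the suffix on the $\cA$- or $\cD$-side, or by noting that $\cD$ can be chosen minimal so that $|Q'|$-dependent factors play no role). The only subtle case is the very first step above, where one must argue that $W_\cD<\infty$ really does force $W_\cA<\infty$; this is precisely where trimness of $\cD$ is invoked. Once the bound on $|W_\cA-W_\cD|$ is in place, it feeds directly into the next step of the second direction of \cref{thm: gaps iff det in WFAif}, where the existence of gap-witnesses in $\cA$ will be translated into diverging states of $\cD$.
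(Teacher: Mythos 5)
Your approach is essentially the same as the paper's: extend $x$ by a short, cycle-free $\cA$-side suffix $z$ with $|z|<|Q|$ reaching $F$, exploit the equivalence $\weightif_\cD(xz)=\weightif_\cA(xz)$, and bound the slack that $z$ introduces on each side. Your observation that $\minweighti_\cD(x,\cdot)<\infty$ forces $\minweighti_\cA(x,\cdot)<\infty$ is correct and a point the paper glosses over; trimness of $\cD$ is indeed the cleanest way to see it.

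Two remarks. First, your factor-of-$2$ concern is legitimate: the honest accounting gives $|W_\cD-W_\cA|\le 2M|Q|$ (each side contributes at most $M|z|+M\le M|Q|$, once from the $\cA$-side run on $z$ plus its final weight, once from the $\cD$-side run on $z$ plus its final weight). The paper's own proof claims $M(|Q|+1)$, but its ``combining'' step conflates $xz$ with $x$ and accounts only for the final weight of $\cD$, not for the weight of the $z$-suffix in $\cD$; so the constant $M(|Q|+1)$ is not actually established there either. This is harmless downstream, since $B$ is chosen with ample slack. Second, your plan for the reverse inequality — ``choosing the simple suffix from $q'$ in $\cD$'' — is unnecessary and counterproductive: a $\cD$-side suffix would make the bound depend on $|Q'|$, whereas the proposition's bound must depend only on $|Q|$ (and on $M$, which by definition already absorbs $\cD$'s weights). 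Both inequalities follow from the \emph{same} $\cA$-side suffix $z$: bound $\weightif_\cA(xz)$ from below by $W_\cA - M|Q|$ (any run on $x$ has initial-inclusive weight $\ge W_\cA$ by minimality, and the $z$-suffix plus final weight contribute at least $-M|Q|$) and $\weightif_\cD(xz)$ from above by $W_\cD+M|Q|$, yielding $W_\cA-W_\cD\le 2M|Q|$ symmetrically. With that fix, your proof is correct and complete, up to the constant.
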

\begin{proof}
    Intuitively, there exists a short suffix from which $xz$ must be accepted with finite weight (since the automata are trim), and since $\cD$ is equivalent to $\cA$, it cannot deviate more than $|z|M$ away from the value assigned by $\cA$.

    Formally, let $q\in Q$ such that $\minweighti_\cA(x,Q_0\to Q)=\minweighti_\cA(x,Q_0\to q)$, and let $z\in \Sigma^*$ be a minimal-length word such that $\minweightf_\cA(z,q\to F)<\infty$. Such a word induces a cycle-free run (otherwise it is not minimal), and $|z|<|Q|$ (the strict inequality is because $z$ induces a run with $|z|+1$ states).

    We then have that 
    \[\minweightif_\cD(xz,Q'_0\to Q')=\minweightif_\cA(xz,Q_0\to F)\le  \minweighti_\cA(x,Q_0\to q)+\minweightf_\cA(z,q\to F)\]
    Observe that $|\minweightf_\cA(z,q\to F)|\le M\cdot |Q|$ (since $|z|<|Q|$, but one more weight is incurred as the final weight).
    Thus, we have $\minweightif_\cD(xz,Q'_0\to Q')\le \minweighti_\cA(x,Q_0\to q)+M\cdot |Q|$.
    
    On the other hand, by our choice of $q$, for every other $q'\in Q$ we have that $\minweighti_\cA(x,Q_0\to q')\ge \minweight_\cA(x,Q_0\to q)$, and in particular for the state $q'$ along a minimal run of $\cA$ on $xz$ we have
    \[\minweightif_\cA(xz,Q_0\to F)=\minweighti_\cA(x,Q_0\to q')+\minweightf_\cA(z,q'\to F)\ge \minweight_\cA(x,Q_0\to q) - M\cdot |Q|\]
    so $\minweightif_\cD(xz,Q'_0\to Q')\le \minweighti_\cA(x,Q_0\to q)-M\cdot |Q|$.
    
    Combining the two inequalities, we have
    $|\minweightif_\cD(x,Q'_0\to Q')-\minweighti_\cA(x,Q_0\to Q)|\le M\cdot |Q|$.
    The last step is to notice that since $\cD$ is deterministic, we have that 
    $|\minweighti_\cD(x,Q'_0\to Q')-\minweightif_\cD(x,Q'_0\to Q')|\le M$ (as they differ by at most the final weight).
    By the triangle inequality, we conclude that
    $|\minweighti_\cD(x,Q'_0\to Q')-\minweighti_\cA(x,Q_0\to Q)|\le M\cdot (|Q|+1)$
\end{proof}

We proceed with the proof of the existence of a bound $B$.
Assume by way of contradiction that there does not exist such a bound $B$. 
That is, for every $B\in \bbN$ there exists a word $x\in \Sigma^*$, a state $q\in Q$ and $y\in \Sigma^*$ such that 
\begin{itemize}
    \item $\minweightif_\cA(x\cdot y,Q_0\to F)=\minweightif_\cA(x\cdot y,Q_0\runsto{x}q\runsto{y}F)<\infty$, but 
    \item $\minweighti_\cA(x,Q_0\to q)-\minweighti_\cA(x,Q_0\to Q)>B$.
\end{itemize}

Let $B>(2^{|Q|}\cdot |Q'|+(|Q|+1))\cdot 2\cdot M$ be some large constant, and take $x,q,y$ as above.
Let $p\in Q'$ the state of $\cD$ after reading $x$, i.e., 
\begin{equation}
\label{eq:det gap: D run xy}
  \minweightif_\cD(xy,Q'_0\to F')=\minweighti_\cD(x,Q'_0\to p)+\minweightf_\cD(y,p\to F')    
\end{equation}
  
By our choice of $q$, we have that 
\begin{equation}
\label{eq:det gap: A run xy}
\minweightif_\cA(x y,Q_0\to F)=\minweighti_\cA(x,Q_0\to q)+\minweightf_\cA(y,q\to F)
\end{equation}

Combining these equations with \cref{prop:det iff gap: det must track min run}, we have the following:
\begin{equation}
\label{eq:det gap: D lower bound y}
\begin{split}
    &\minweightf_\cD(y,p\to F')=_{(1)}\minweightif_\cD(xy,Q'_0\to F')-\minweighti_\cD(x,Q'_0\to p)\\
    &=_{(2)}\minweightif_\cA(xy,Q_0\to F)-\minweighti_\cD(x,Q'_0\to p)\\
    &=_{(3)}\minweighti_\cA(x,Q_0\to q)+\minweightf_\cA(y,q\to F)-\minweighti_\cD(x,Q'_0\to p)\\
    &\ge_{(4)} \minweighti_\cA(x,Q_0\to q)+\minweightf_\cA(y,q\to F)-\minweighti_\cA(x,Q_0\to Q)-M(|Q|+1)\\
    &>_{(5)} \minweightf_\cA(y,q\to F)+B-M(|Q|+1)
\end{split}
\end{equation}
Where 
\begin{itemize}[noitemsep]
    \item[(1)] is just a rearrangement of \cref{eq:det gap: D run xy}.
    \item[(2)] is because $\cD$ is equivalent to $\cA$.
    \item[(3)] is by \cref{eq:det gap: A run xy}.
    \item[(4)] is an application of \cref{prop:det iff gap: det must track min run}.
    \item[(5)] is our assumption on $q$ above (by way of contradiction).
\end{itemize}

Next, consider the set $S_x=\delta_{\bbB}(x)$ of states reachable (by finite weight runs) on $x$ in $\cA$ and recall that the run of $\cD$ on $x$ reaches state $p$. Since $B>(2^{|Q|}\cdot |Q'|+(|Q|+1))\cdot 2\cdot M$, it follows that $|x|>2^{|Q|}\cdot |Q'|+(|Q|+1)>2^{|Q|}\cdot |Q'|$. Indeed, with each letter read, the minimal and maximal runs can diverge by at most $2M$, so our assumption on the size of the gap cannot be captured with a shorter $x$.

It follows, however, that there exists $x'\in \Sigma^*$ with $|x|\le 2^{|Q|}\cdot |Q'|$ such that $\delta_{\bbB}(x')=S_x$ and the run of $\cD$ on $x'$ reaches $p$. This is by e.g., removing simultaneous cycles in $\cD$ and the (Boolean) subset-construction of $\cA$.

We can now readily reach a contradiction as follows. On the one hand we use the lower bound on $\minweightf_D(y,p\to F')$ in \cref{eq:det gap: D lower bound y} to get
\begin{equation}
\label{eq:det gap: D lower bound x'y}
\begin{split}
&\minweightif_\cD(x'y,Q'_0\to F')= \minweighti_\cD(x',Q'_0\to p)+\minweightf_\cD(y,p\to F')\\
&\ge \minweighti_\cD(x',Q'_0\to p)+\minweightf_\cA(y,q\to F)+B-M(|Q|+1)
\end{split}
\end{equation}
On the other hand we have
\[
\begin{split}
&\minweightif_\cD(x'y,Q'_0\to F')=_{(1)} \minweightif_\cA(x'y,Q_0\to F)\\
&\le_{(2)}  \minweighti_\cA(x',Q_0\to q)+\minweightf_\cA(y,q\to F)\\
&\le_{(3)}  \minweighti_\cA(x',Q_0\to Q)+2^{|Q|}\cdot |Q'|\cdot 2\cdot M+\minweightf_\cA(y,q\to F)\\
&\le_{(4)} \minweighti_\cD(x',Q'_0\to p)+M(|Q|+1)+2^{|Q|}\cdot |Q'|\cdot 2\cdot M+\minweightf_\cA(y,q\to F) 
\end{split}
\]
where
\begin{itemize}[noitemsep]
    \item[(1)] is because $\cA$ and $\cD$ are equivalent.
    \item[(2)] is because the minimal run is at most the one that goes through $q$.
    \item[(3)] is because the run to $q$ can diverge by at most $|x'|\cdot 2\cdot M$ from the minimal run.
    \item[(4)] is by \cref{prop:det iff gap: det must track min run}.
\end{itemize}

Combining this inequality with \cref{eq:det gap: D lower bound x'y} and cancelling equal terms, we get
\[B\le 2M(|Q|+1)+2^{|Q|}\cdot |Q'|\cdot 2\cdot M\]

which contradicts our choice of $B$.
We conclude that there exists a bound $B$ on the gap, as per the theorem statement.

\section{No Need for Initial and Final Weights, nor Accepting States}
\newcommand{\slet}{\mathsf{s}}
\newcommand{\flet}{\mathsf{f}}
\label{sec:no need for init and fin and acc}
In this section we show that the determinization problem for WFAs with initial and final weights (\WFAif) reduces to the determination problem for our model, which only has one initial state (and in particular $\init\in \{0,\infty\}^Q$) and all states are accepting (i.e., $\fin=\{0\}^Q$). 
The precise definition of \WFAif can be found in \cref{sec:apx det iff bounded gap} (we also rely on the results proved there, namely \cref{thm: gaps iff det in WFAif}).

The intuition for getting rid of the initial and final weights is simple: we add two letters $\{\slet,\flet\}$ to the alphabet, to denote the start and the finish of a word, respectively. Upon reading $\slet$ and $\flet$, the automaton incurs the weights described by the initial and final states, respectively.
This, however, still leaves us with accepting states, which turn out to pose a bigger obstacle. 
Nonetheless, we show that if all states are made accepting after the aforementioned additions, the obtained WFA is still equi-determinizable to the original \WFAif (but they are not equivalent in their quantitative language).

\begin{lemma}
    \label{lem:init final weight reduces to no init and final}
    The determinization problem for \WFAif is reducible (in logarithmic space) to the determinization problem for WFAs.
\end{lemma}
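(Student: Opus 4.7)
Given a \WFAif $\cA=\tup{Q,\Sigma,\init,\Delta,\fin}$, I will construct a WFA $\cA'=\tup{Q\cup\{q_0',q_\flet\},\Sigma\cup\{\slet,\flet\},q_0',\Delta'}$ in the simple model (single initial state, all states accepting) as follows: $\Delta'$ contains $(q_0',\slet,\init(q),q)$ for each $q\in Q_0$, every original transition $(q,\sigma,c,p)\in\Delta$ for $\sigma\in\Sigma$, and $(q,\flet,\fin(q),q_\flet)$ for each $q\in F$; all other transitions receive weight $\infty$. This is clearly a logspace transformation. The goal is then to show that $\cA$ is determinizable (as a \WFAif) if and only if $\cA'$ is determinizable (as a simple WFA).

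To prove the equivalence, the plan is to lean on the two gap characterizations already established: by \cref{thm: gaps iff det in WFAif} applied to $\cA$, and by \cref{thm:det iff bounded gap} applied to $\cA'$ over its finite alphabet $\Sigma\cup\{\slet,\flet\}$, it suffices to show that $\cA$ admits $B$-gap witnesses for arbitrarily large $B$ iff $\cA'$ does. The forward direction is straightforward: a \WFAif $B$-gap witness $(x,y,q)$ in $\cA$ translates verbatim into a WFA $B$-gap witness $(\slet x,\, y\flet,\, q)$ in $\cA'$, since the only finite-weight runs on $\slet xy\flet$ are of the form $q_0'\runsto{\slet}Q_0\runsto{x}Q\runsto{y}F\runsto{\flet}q_\flet$ and the weights along them correspond exactly to the \WFAif weights in $\cA$.

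The converse direction is the substantive one and is handled by case analysis on a WFA $B$-gap witness $(x',y',q')$ of $\cA'$. The structure of $\cA'$ (finite-weight runs live only on words in $\slet\Sigma^*$ or $\slet\Sigma^*\flet$, since $q_0'$ admits only $\slet$ and $q_\flet$ admits no outgoing transitions) forces $x'=\slet x'_1$ with $x'_1\in\Sigma^*$ and $q'\in Q$ (the alternatives give a gap of $0$). Then either (Case A) $y'=y'_1\flet$ with $y'_1\in\Sigma^*$, in which case $(x'_1,y'_1,q')$ is a \WFAif $B$-gap witness for $\cA$ by the same weight correspondence, or (Case B) $y'\in\Sigma^*$, which is the main obstacle I expect.

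For Case B, the witness certifies a gap in the ``all accepting'' semantics of $\cA$ but not directly in the \WFAif sense. The plan is to append a suitable suffix $z\in\Sigma^*$ of length at most $|Q|$, using trimness of $\cA$: letting $p$ be a state where $\minweighti_\cA(x'_1y'_1,Q_0\to Q)$ is realized along a run through $q'$, pick $z$ witnessing $p\runsto{z}F$. Among arbitrarily many Case B witnesses of unbounded gap there are, by pigeonhole, infinitely many sharing the same triple $(q',p,z)$, and I will argue that for these $(x'_1,y'_1z,q')$ is a \WFAif gap witness for $\cA$. The subtle point, which I expect to be the principal technical hurdle, is ensuring that the \WFAif minimum on $x'_1y'_1z$ is actually attained through $q'$ rather than being ``undercut'' by a state $p^*$ that achieves $V[p^*]=V[p]$ but is not reachable from $q'$ at the same cost; this is addressed by a refined pigeonhole on the type $(q',p,z,p^*)$ and, if necessary, a further extension of $z$ to suppress such shortcuts. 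Since the gap on $x'_1$ does not depend on the suffix, the resulting \WFAif gaps in $\cA$ remain unbounded, completing the reduction.
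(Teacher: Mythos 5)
Your construction is the same as the paper's (your $q_0',q_\flet$ are its $s_0,s_f$; restricting to $Q_0$ and $F$ is cosmetic since the other entries of $\init,\fin$ are $\infty$). The forward implication — a \WFAif gap witness $(x,y,q)$ for $\cA$ gives a WFA gap witness $(\slet x, y\flet, q)$ for $\cA'$ — is correct, and so is your Case A in the converse. The gap is exactly where you flag it: Case B. You append $z$ with $p\runsto{z}F$ and want $(x'_1,y'z,q')$ to be a \WFAif gap witness for $\cA$, but the minimum \emph{accepting} run (with final weights) on $x'_1y'z$ need not pass through $q'$. It can pass through a state $\hat q$ that after $x'_1$ is only slightly above $\minweighti_\cA(x'_1,Q_0\to Q)$ — the $\cA'$ witness only tells you the run through $q'$ wins in the all-accepting semantics after $y'$, which permits $\hat q$ to have an expensive but finite run on $y'$ that then gets an arbitrary (bounded) advantage over $q'$'s run on $z\flet$. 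A computation shows only that the $q'$-run is within $O(M|Q|)$ of the true optimum, not that it attains it, and nothing forces $\hat q$'s gap after $x'_1$ to be large. Your "refined pigeonhole on $(q',p,z,p^*)$" doesn't supply the missing constraint: the competing state $\hat q$ lives on a different branch and the relevant quantity (its gap after $x'_1$) is unbounded, so fixing a finite type of this form gives no leverage.

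The paper's handling of Case B is genuinely different: it does \emph{not} try to convert a $\cB$-gap witness into an $\cA$-gap witness. Instead it assumes a deterministic \WFAif $\cD$ equivalent to $\cA$, reads off the $\cD$-state $p$ reached after the prefix $u$, shows (via \cref{prop:det iff gap: det must track min run}) that $\cD$'s value closely tracks $\cA$'s minimum, and then \emph{shortens} $u$ to a word $u'$ of length at most $2^{|Q|}\cdot|Q_\cD|$ reaching the same $\cD$-state $p$ and the same Boolean reachability set. The two-sided comparison of $\minweightif_\cD(u'yz,\cdot)$ forces $B\le 2M(2|Q|+1+2^{|Q|}|Q_\cD|)$, contradicting unboundedness of $B$. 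The crucial resource this argument uses — and which a purely witness-to-witness conversion cannot use — is the finiteness of $|Q_\cD|$; the contradiction comes from "the prefix can be shortened without $\cD$ noticing," not from a local transformation of the witness. If you want to stay in the gap-witness language, you would effectively have to re-derive a bound like $|Q_\cD|$ from the gap bound $B'$ of $\cA$, which is the substance of the reconstruction of $\cD$ in \cref{sec:bound to det gap characterization}; at that point you are replaying the paper's argument.
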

We prove the lemma in the remainder of this section.
Consider a \WFAif $\cA= \tup{Q,\Sigma, \init, \Delta ,\fin}$, and assume $\cA$ is trim (otherwise remove non-reachable states and states that cannot reach $F$).

We construct a WFA $\cB= \tup{Q\cup \{s_0,s_f\},\Sigma\cup \{\slet,\flet\}, s_0, \eta }$ where $s_0,s_f\notin Q$ and $\slet,\flet\notin \Sigma$. The transitions are defined as follows:
    \begin{itemize}
        \item For $p,q\in Q$ and $\sigma\in \Sigma$ we have $(p,\sigma,c,q)\in \eta$ iff $(p,\sigma,c,q)\in \Delta$.
        \item For $q\in Q$ we have $(s_0,\slet,\init(q),q)\in \eta$ and $(q,\flet,\fin(q),s_f)\in \eta$.
        \item The remaining transitions are with weight $\infty$.
    \end{itemize}
    This construction can clearly be implemented in logarithmic space.
    We prove that $\cA$ is determinizable if and only if $\cB$ is determinizable.

    Observe that by the construction of $\cB$, we have for every word $w\in \Sigma^*$ that
    \begin{equation}
    \label{eq:init fin weight equiv}
    \weight_\cA(w)=\weight_\cB(\slet \cdot w\cdot  \flet)    
    \end{equation}

   \subsection{If $\cB$ is Determinizable, then $\cA$ is Determinizable.}
   \label{sec:no need for init and fin: B det to A det}
    For the first (and easy) direction, assume $\cB$ is determinizable, and let $\cD=\tup{Q_\cD,\Sigma\cup\{\slet,\flet\}, q_0, \Delta_\cD}$ be an equivalent deterministic trim WFA. 
    Note that we can assume $q_0$ has no incoming transitions (otherwise a word with more than one $\slet$ can have finite weight, unless all words have cost $\infty$, which is a degenerate case).
    Similarly, we can assume that once $\flet$ is read, a unique state $q_f$ is reached, from which there are no outgoing transitions (indeed, no final weights can be accumulated upon leaving $q_f$). We can assume $q_0\neq q_f$, otherwise the only accepted word in $\cA$ is $\epsilon$, which is again degenerate (since $q_f$ has no outgoing transitions).

    Define a deterministic \WFAif  $\cD'=\tup{Q'_{\cD},\Sigma, \init', \Delta'_\cD ,\fin'}$ as follows. The states are $Q'_\cD=Q_\cD\setminus \{q_0,q_f\}$.
    Let $q\in Q'_\cD$ be the unique state such that $(q_0,\slet,c,q)\in \Delta$ with $c\neq \infty$ (since $\cD'$ is deterministic, and recall that $q\neq q_0$).
    Define $\init'(q)=c$, and for every $q'\neq q$ we set $\init'(q')=\infty$. 
    For the final vector, for every $q\in Q'_\cD$
    set $\fin(q)=c$ where $(q,\flet,c,q_f)\in \Delta$.
    For the remaining transitions, we have that $\Delta'_\cD=\Delta\cap (Q'_\cD\times \Sigma\times \bbZinf\times Q'_\cD)$ (i.e., we keep only transitions on $Q'_\cD$ and over $\Sigma$).

    We claim that $\weight_{\cD'}(w)=\weight_{\cA}(w)$ for every $w\in \Sigma^*$. By \cref{eq:init fin weight equiv}, it is enough to prove that $\weight_{\cD'}(w)=\weight_\cB(\slet\cdot w\cdot \flet)$, but the latter is immediate from the construction, since reading $\slet$ is simulated by the initial weight, and reading $\flet$ by the final weights. Note that even though we made all states accepting in $\cB$, this does not play a role here, since we only consider (in $\cB$) words padded with $\slet$ and $\flet$. 
    
    \subsection{If $\cA$ is Determinizable, then $\cB$ is Determinizable}
    \label{sec:no need for init and fin: A det to B det}
    We now turn to the harder direction. Assume $\cA$ is determinizable and consider an  equivalent deterministic  \WFAif $\cD=\tup{Q'_\cD,\Sigma,\init',\Delta_\cD,\fin'}$. Let $q^0_\cD$ be the single state with $\init'(q^0_\cD)<\infty$, i.e., the initial state of $\cD$, and $F_\cD$ the accepting states (i.e., those with finite $\fin'$).
    Denote $Q'=Q\cup \{s_0,s_f\}$, and assume by way of contradiction that $\cB$ is not determinizable. 
    
    By the characterization in \cref{thm: gaps iff det in WFAif}, we have that for every $B\in \bbN$ there exist words $x,y\in (\Sigma\cup \{\slet,\flet\})^*$ and a state $q\in Q'$ such that
    \begin{itemize}
        \item $\minweight_\cB(x\cdot y,s_0\to Q')=\minweight_\cB(x\cdot y,s_0\runsto{x}q\runsto{y}Q')<\infty$, but 
        \item $\minweight_\cB(x,s_0\to q)-\minweight_\cB(x,s_0\to Q')>B$.
    \end{itemize}
    Observe that we do not need to specify whether the expressions above refer to initial/final weights, since $\cB$ is a WFA. Moreover, we look at reachability to $Q'$ instead of $F$, since in a WFA all states are accepting.
    We refer to $x,y,q$ above as a \emph{$B$-gap witness for $\cB$}.

    The proof follows similar lines to \cref{sec:det to bound gap characterization}, with some crucial changes.

    Denote by $M$ be the maximal absolute value of a weight appearing in $\cA,\cB$ or $\cD$. 
    Since $\cA$ is determinizable, then by \cref{thm: gaps iff det in WFAif}, it has a bound $B'$ on the gaps.
    
    Let $B>\max\{B'+1,2M(2|Q|+1+2^{|Q|}\cdot |Q_\cD|)\}$, and let $x,y,q$ be a $B$-gap witness for $\cB$.
 
    If there is such a witness $x,y,q$ with $x=\slet u$ and $y=v\flet$ with $u,v\in \Sigma^*$, then $u,v,q$ is also a $B$-gap witness for $\cA$. Indeed, every run of $\cB$ on $\slet u$ corresponds to a run of $\cA$ on $u$ (i.e., the runs are identical up to omitting $s_0$ in the former), and every run of $\cB$ on $v\flet$ corresponds to a run of $\cA$ on $v$ that reaches an accepting state (up to omitting $s_f$ in the former). This, however, is a contradiction, since $B>B'$.

    Assuming the gap witness is not of the form above, we note that any $B$-gap witnesses for $\cB$ must have $x=\slet u$ for some $u\in \Sigma^*$. Indeed, otherwise either $x=\epsilon$ or $x$ gets weight $\infty$ in $\cB$. Therefore, $y\in \Sigma^*$ (i.e., cannot contain $\slet$ or $\flet$), otherwise $x\cdot y$ would get weight $\infty$ in $\cB$. 

    Since $\cA$ is trim, there exists a word $z\in \Sigma^*$ with $|z|<|Q|$ such that $\minweight_\cA(yz,q\to F)<\infty$. Thus, we have $|\minweight_\cB(z\flet,Q'\to Q')|<|Q|M$. 
    Since $x,y,q$ is a $B$-gap witness for $\cB$, we have 
    \[
    \begin{split}
        &\minweight_\cB(xyz\flet,s_0\to Q')\ge \minweight_\cB(xy,s_0\to Q')+\minweight_\cB(z\flet,Q'\to Q')\\
        &\ge \minweight_\cB(x,s_0\to q)+\minweight_\cB(y,q\to Q')-|Q|M
    \end{split}
    \]
    By the construction of $\cB$ from $\cA$ (and recall that $x=\slet u$), this implies that 
    \[
    \begin{split}
        &\minweightif_\cA(uyz,Q_0\to F)\ge \minweighti_\cA(uy,Q_0\to Q)+\minweightf_\cA(z,Q\to Q)\\
        &\ge \minweighti_\cA(u,Q_0\to q)+\minweight_\cA(y,q\to Q)-|Q|M
    \end{split}
    \]
    On the other hand,
    \[
    \begin{split}
        &\minweightif_\cA(uyz,Q_0\to F)\le \minweighti_\cA(u,Q_0\to q)+\minweightf_\cA(yz,q\to F)\\
        &\le \minweighti_\cA(u,Q_0\to q)+\minweight_\cA(y,q\to Q)+|Q|M
    \end{split}
    \]
    Combining the two inequalities, we have
    \begin{equation}
    \label{eq:no need for init and fin:uyz via q}
    |\minweightif_\cA(uyz,Q_0\to F) - (\minweighti_\cA(u,Q_0\to q)+\minweight_\cA(y,q\to Q))|\le |Q|M    
    \end{equation}
    Let $p\in Q_\cD$ be the state of $\cD$ after reading $u$. Thus,
     \begin{equation}
    \label{eq:no need for init and fin:D on uyz}
    \minweightif_\cD(uyz,q^0_\cD \to F_\cD)=\minweighti_\cD(u,q^0_\cD\to p)+\minweightf(yz,p\to F_\cD)
    \end{equation}
    Combining these equations with \cref{prop:det iff gap: det must track min run}, we have the following:
    \begin{equation}
    \label{eq:no need for init and fin:D on yz lower}
    \begin{split}
    &\minweightf_\cD(yz,p\to F_\cD)=_{(1)}\minweightif_\cD(uyz,q^0_\cD\to F_\cD)-\minweighti_\cD(u,q^0_\cD\to p)\\
    &=_{(2)}\minweightif_\cA(uyz,Q_0\to F)-\minweighti_\cD(u,q^0_\cD\to p)\\
    &\ge_{(3)}\minweighti_\cA(u,Q_0\to q)+\minweight_\cA(y,q\to Q)-|Q|M-\minweighti_\cD(u,q^0_\cD\to p)\\
    &\ge_{(4)}\minweighti_\cA(u,Q_0\to q)+\minweight_\cA(y,q\to Q)-|Q|M- \minweighti_\cA(u,Q_0\to Q)-M(|Q|+1)\\
    &>_{(5)} \minweight_\cA(y,q\to Q)-|Q|M+B-M(|Q|+1)\\
    &=_{(6)}  \minweight_\cA(y,q\to Q)+B-M(2|Q|+1)
    \end{split}
    \end{equation}
    Where 
    \begin{itemize}[noitemsep]
        \item[(1)] is just a rearrangement of \cref{eq:no need for init and fin:D on uyz}.
        \item[(2)] is because $\cD$ is equivalent to $\cA$.
        \item[(3)] is by \cref{eq:no need for init and fin:uyz via q} as a lower bound.
        \item[(4)] is an application of \cref{prop:det iff gap: det must track min run}.
        \item[(5)] is because $(x,y,q)$ is a $B$-gap witness for $\cB$, and $\minweighti_\cA(u,Q_0\to q)=\minweight_\cB(x,s_0\to q)$ and $\minweighti_\cA(u,Q_0\to Q)=\minweight_\cB(x,s_0\to Q')$.
        \item[(6)] is just rearranging.
    \end{itemize}

Next, consider the set $S_u=\delta_{\bbB}(u)$ of states reachable (by finite weight runs) on $u$ in $\cA$ and recall that the run of $\cD$ on $u$ reaches state $p$. 
Since $B>2M(2|Q|+1+2^{|Q|}\cdot |Q_\cD|)$,
it follows in particular that $|u|>2^{|Q|}\cdot |Q_\cD|$. Indeed, with each letter read, the minimal and maximal runs can diverge by at most $2M$, so our assumption on the size of the gap cannot be captured with a shorter $u$.

It follows, however, that there exists $u'\in \Sigma^*$ with $|u|\le 2^{|Q|}\cdot |Q_\cD|$ such that $\delta_{\bbB}(u')=S_u$ and the run of $\cD$ on $u'$ reaches $p$. This is by e.g., removing simultaneous cycles in $\cD$ and the (Boolean) subset-construction of $\cA$.

We can now reach a contradiction as follows. On the one hand we use the lower bound on $\minweightf_D(yz,p\to F_\cD)$ in \cref{eq:no need for init and fin:D on yz lower} to get
\begin{equation}
\label{eq:no need for init and fin:D on u'yz lower}
\begin{split}
&\minweightif_\cD(u'yz,Q^0_\cD\to F_\cD)= \minweighti_\cD(u',q^0_\cD\to p)+\minweightf_\cD(yz,p\to F_\cD)\\
&\ge \minweighti_\cD(u',q^0_\cD\to p)+ \minweight_\cA(y,q\to Q)+B-M(2|Q|+1)
\end{split}
\end{equation}
On the other hand we have
\begin{equation}
\label{eq:no need for init and fin:D on u'yz upper}
\begin{split}
&\minweightif_\cD(u'yz,q^0_\cD\to F_\cD)=_{(1)} \minweightif_\cA(u'yz,Q_0\to F)\\
&\le_{(2)}  \minweighti_\cA(u',Q_0\to q)+\minweightf_\cA(yz,q\to F)\\
&\le_{(3)}  \minweighti_\cA(u',Q_0\to Q)+2^{|Q|}\cdot |Q_\cD|\cdot 2\cdot M+\minweightf_\cA(yz,q\to F)\\
&\le_{(4)} \minweighti_\cD(u',q^0_\cD\to p)+M(|Q|+1)+2^{|Q|}\cdot |Q_\cD|\cdot 2\cdot M+\minweightf_\cA(yz,q\to F)\\
&\le_{(5)} \minweighti_\cD(u',q^0_\cD\to p)+M(|Q|+1)+2^{|Q|}\cdot |Q_\cD|\cdot 2\cdot M+\minweight_\cA(y,q\to F)+M|Q|
\end{split}
\end{equation}
where
\begin{itemize}[noitemsep]
    \item[(1)] is because $\cA$ and $\cD$ are equivalent.
    \item[(2)] is because the minimal run is at most the one that goes through $q$.
    \item[(3)] is because the run to $q$ can diverge by at most $|u'|\cdot 2\cdot M$ from the minimal run.
    \item[(4)] is by \cref{prop:det iff gap: det must track min run}.
    \item[(5)] is because $|z|< |Q|$.
\end{itemize}
Combining \cref{eq:no need for init and fin:D on u'yz lower,eq:no need for init and fin:D on u'yz upper} and cancelling equal terms, we get
\[B-M(2|Q|+1)\le M(|Q|+1)+2^{|Q|}\cdot |Q_\cD|\cdot 2\cdot M+M|Q|\]
which simplifies to
\[B\le 2M(2|Q|+1+2^{|Q|}\cdot |Q_\cD|)\]
and this contradicts our choice of $B$.
We conclude that $\cB$ is determinizable (as this was our assumption by way of contradiction).

\section{$\augA_\infty^\infty$ is Determinizable if and only if $\augA$ is Determinizable}
\label{apx:aug inf inf is det iff aug A is det}
In this section we prove \cref{thm:aug inf inf is det iff aug A is det}.

We start with the trivial direction: if $\augA_\infty^\infty$ is determinizable, then let $\cD_\infty^\infty$ be an equivalent deterministic WFA. We restrict $\cD$ to the alphabet $\Gamma$ and obtain a deterministic WFA that is equivalent to $\augA$. Thus, $\augA$ is  determinizable.

The converse is the involved part. Assume $\augA_\infty^\infty$ is not determinizable. By \cref{thm:det iff bounded gap,rmk: det iff bound infinite alphabet} we have that $\augA_\infty^\infty$ has unbounded gaps. More precisely, for every $B\in \bbN$ there exist words $x,y\in (\Gamma_\infty^\infty)^*$ and a state $s\in S$  such that $\minweight(x\cdot y,s_0\to S)=\minweight(xy,s_0\runsto{x}s\runsto{y}S)$ (and is finite), but $\minweight(x,s_0\to s)-\minweight(x,s_0\to S)>B$ (where the latter difference can be infinite or finite, but $\minweight(x,s_0\to S)$ is clearly finite due to the run on $xy$).

We claim that $\augA$ also has unbounded gaps. We start by showing this for $\augA$ with jump letters, and proceed to remove jump letters afterwards.

\subsubsection*{$\augA$ with Jump Letters is Not Determinizable}
To this end, we need to show the property above holds already for words over $\Gamma^*$ (with jump letters). 
Naturally, we obtain such words using the \emph{flattening} operation (\cref{def:cactus rebase flattening}).

Consider $B\in \bbN$ and let $x,y\in (\Gamma_\infty^\infty)^*$ as above. 
Recall that for a word $u$ the value $\maxeff{u}$ is an upper-bound on the absolute value of the change in weight that a run can incur upon reading $u$.
In the following we show that $\flatten(xy \wr 2\maxeff{xy})$ can be split to two words $x'\cdot y'$  such that the gap condition above holds. 
To show this, we proceed by induction over the sequence of operations induced by the flattening inductive definition of \cref{def:cactus rebase flattening}.

\paragraph{Base case:} For the base case we already have $xy\in \Gamma^*$, so we are done.

\paragraph{Cactus case:}
For the cactus case, assume $xy=u\alpha_{S',w}v$, and let $\rho:s_0\runsto{xy}S$ be a seamless run such that $\weight(\rho)=\minweight(xy,s_0\to S)$.
Let $z=\unfold(u,\alpha_{S',w},v|2\maxeff{u\alpha_{S',w}v})$. 
By \cref{prop:unfolding cactus maintains seamlesss gaps}, there is a seamless run $\rho':s_0\runsto{z}S$ such that 
for every prefix $h$ of $xy$ there is a corresponding prefix $z'$ of $z$ such that $\weight(\rho'(z'))=\weight(\rho(h))$, and moreover $\weight(\rho')=\weight(\rho)$. 
In particular, $\weight(\rho')=\minweight(z,s_0\to S)$. Indeed, otherwise by \cref{prop:un-unfolding maintains cheap seamless runs}, there is a seamless run of lower weight also on $xy$, which contradicts the minimality of $\rho$.

Consider $\rho(x)$, then by \cref{prop:unfolding cactus maintains seamlesss gaps} there is a prefix $z'$ of $z$ that corresponds to $x$.
Since $\minweight(x,s_0\to S)<\weight(\rho(x))-B$, then again by \cref{prop:unfolding cactus maintains seamlesss gaps} there exists a seamless run $\mu:s_0\runsto{z'}S$ with 
\[
\weight(\mu)=\minweight(x,s_0\to S)<\weight(\rho(x))-B=\weight(\rho'(z'))-B
\]
which is exactly the gap constraint we wished to show (by replacing $\weight(\rho'(z'))$ with $\minweight(z',s_0\to s)$ for the last state of $\rho'(z')$).

\paragraph{Rebase case:}
In this case, assume $xy=u\beta_{S',w,s\to r}v$ and let $z=\rebaserm(u,\beta_{S',w,s\to r},v)$.

By \cref{cor:rebase removal preserves gaps}, the gaps between any two seamless runs on $z$ after reading each prefix is the same as the gap between the corresponding seamless runs on $xy$. Thus, we obtain a gap of more than $B$ after the prefix that corresponds to $x$.

To spell out this argument further, recall that $z=u\jl_{t_1\to t_p}\alpha_{S',w}\jl_{t_p\to t_2}v$ as per \cref{def:rebase removal}. 
Consider the seamless run $\rho:s_0\runsto{xy}S$ such that $\weight(\rho)=\minweight(xy,s_0\to S)$ then the run
$\rho'$ obtained from $\rho$ as per \cref{prop:rebase removal shifts runs,cor:rebase removal preserves gaps} satisfies $\weight(\rho')=\minweight(z,s_0\to S)$. Indeed, if there is a run with smaller weight, then the reverse direction of \cref{cor:rebase removal preserves gaps} would yield a run with lower weight than $\rho$ on $xy$.
Similarly, consider the minimal seamless run $\mu'$ on the prefix $x'$ of $z$ that corresponds to $x$ (which by the same argument corresponds to a minimal seamless $\mu$ run on $x$).

\cref{cor:rebase removal preserves gaps} now guarantees that there exists a constant $c$ such that 
\[\weight(\rho'(x'))-\weight(\mu'(x'))=\weight(\rho(x))+c-(\weight(\mu(x))+c)=\weight(\rho(x))-\weight(\mu(x))>B\]
where the last inequality is because both $\rho$ and $\mu$ are seamless and minimal. We thus conclude the inductive claim.

We conclude that $\flatten(xy \wr 2\maxeff{xy})$ also satisfies the gap constraints, and since $\flatten(xy \wr 2\maxeff{xy})$ is over the alphabet $\Gamma^*$ with jump letters, we conclude by \cref{thm:det iff bounded gap} that $\augA$ with jump letters is not determinizable.

\subsubsection*{$\augA$ without Jump Letters is Not Determinizable}
Having showed that $\augA$ with jump letters is not determinizable, we now turn to  remove jump letters. 
Conceptually, this is simple: jump letters do not change the weight of any run, and do not introduce nondeterminism. Thus, they can simply be cut from a run. Indeed, their only purpose is for reasoning about baseline runs.

Technically, the problem that arises from jump letters is that the baseline component might not form a legal run of $\cA$, thus preventing us from simply dropping those letters. This is illustrated in \cref{fig:jumps break runs}.
\begin{figure}[ht]
    \centering
    \includegraphics[width=0.8\linewidth]{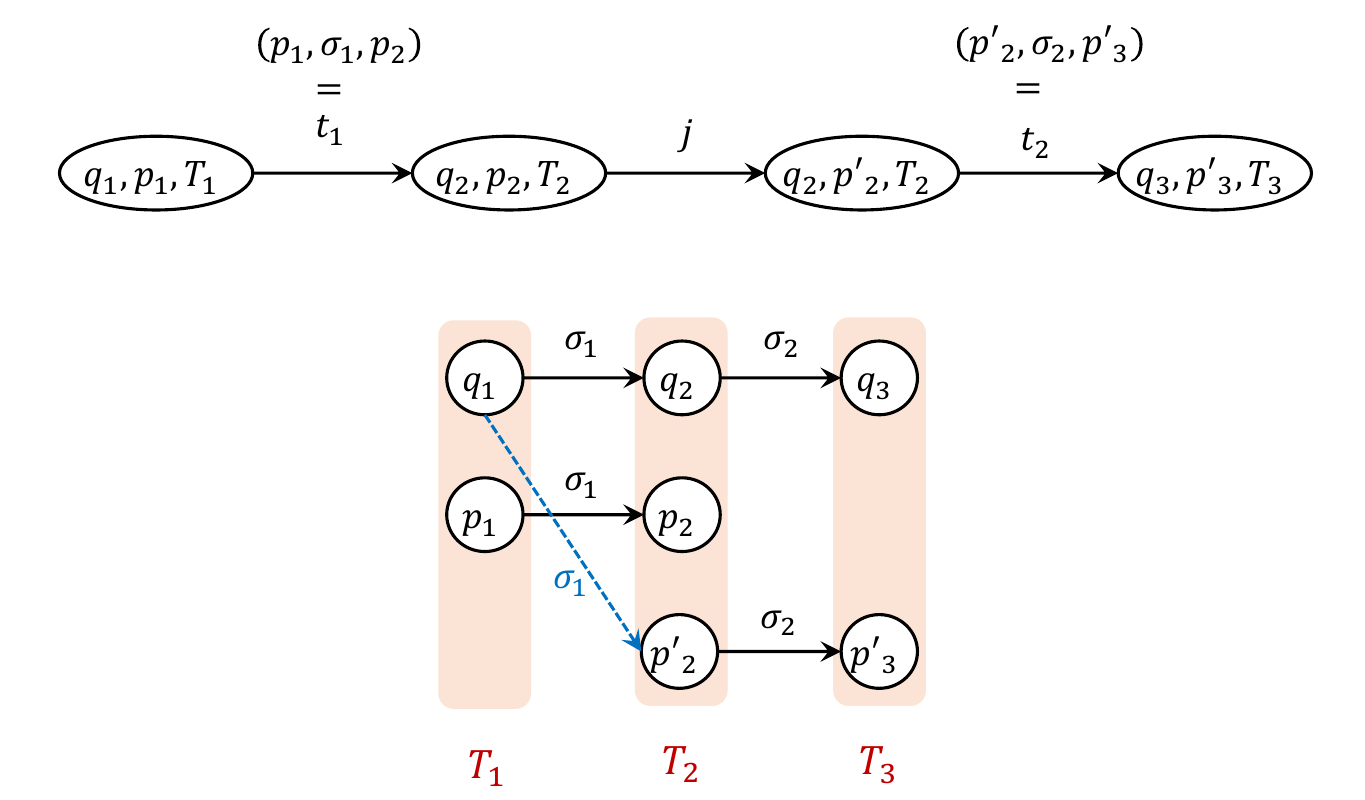}
    \caption{The jump transition from $(q_2,p_2,T_2)$ to $(q_2,p'_2,T_2)$ enables the next transition on $t_2$, which can only be taken from the baseline component $p'_2$. Thus, simply removing the jump letter (and continuing from $(q_2,p_2,T_2)$) would not yield a legal run.}
    \label{fig:jumps break runs}
\end{figure}
We therefore first shift the baseline run (akin to \cref{sec:shift POV}) so that all jump transitions do not actually change the baseline and can therefore be dropped.
As in the case of flattening, we treat jump letters inductively, one by one, and show that the gap property is maintained upon each removal.

Consider a word $x\cdot \jl_{(\cdot,p,T)\to (\cdot,p',T)}\cdot y$ where $x$ has no jump letters, and denote $x=(p_0,\sigma_1,c_1,p_1)\cdots(p_{k-1},\sigma_k,c_k,p_k)$ (as the letters in $\Gamma$ are transitions in $\cA$). 

Since $p'\in T$ (otherwise the jump letter is not defined, c.f., \cref{def:jump letters}) and $T=\booltrans(q_0,\sigma_1\cdots \sigma_k)$ (in the original WFA $\cA$), it follows that there exists a seamless run $\rho_x:q_0\runsto{\sigma_1\cdots\sigma_k} p'$. 
Denote $\rho_x=(p'_0,\sigma_1,e_1,p'_1),\ldots, (p'_{k-1},\sigma_k,e_k,p'_k)$ with $p'_k=p'$.
We now think of $\rho_x$ as a word over $\Gamma$. We show that if $x\cdot \jl_{(\cdot,p,T)\to (\cdot,p',T)}\cdot y$ is a $B$-gap witness for $\augA$ with jump letters, then $\rho_x\cdot y$ is also a $B$-gap witness (and note that $\rho_x\cdot y$ has one fewer jump letter, so we can proceed inductively).

Consider a run $\mu$ of $\augA$ on $x\cdot \jl_{(\cdot,p,T)\to (\cdot,p',T)}\cdot y$, then we can write $\mu=t_1,t_2,\ldots,t_{k},t',r_{1},\ldots,r_{|y|}$ where for $1\le i\le k$ we have
$t_i=((q_{i-1},p_{i-1},T_{i-1}),(p_{i-1},\sigma_i,c_i,p_i),f_i,(q_i,p_i,T_i))$ and $t'=((q_k,p,T),\jl_{(\cdot,p,T)\to (\cdot,p',T)},(q_k,p',T)$ (indeed, by the fact that the jump letter has defined transitions, we have that $p_k=p$). We leave $r_1,\ldots,r_{|y|}$ without particular notation.

We obtain from $\mu$ the run $\eta=t'_1,\ldots,t'_{k},r_1,\ldots,r_{|y|}$ of $\augA$ on $\rho_x\cdot y$ where for every $1\le i\le k$ we have 
$t_i=((q_{i-1},p'_{i-1},T_{i-1}),(p'_{i-1},\sigma_i,e_i,p'_i),d_i,(q_i,p'_i,T_i))$, where $d_i$ is determined by the transitions in $\augA$ (we elaborate on $d_i$ below).

Note that $\eta$ is indeed a legal run on $\rho_x\cdot y$, since the last state reached by $\eta$ after reading $\rho_x$ is $(q_k,p',T)$, which is the same state reached by $\rho$ after reading $x\cdot \jl_{(\cdot,p,T)\to (\cdot,p',T)}$.

Consider now the weight $d_i$ of the transition above. By the definition of $\augA$ in \cref{sec:augmented construction}, we have that $d_i=d'_i-e_i$, where $d'_i$ is the weight of the unique transition $(q_{i-1},\sigma_i,d'_i,q_i)$. By the same token, we also have $f_i=c_i-d'_i$ in the transitions of $\mu$.  It follows that $f_i-d_i=c_i-e_i$. In particular, the difference $f_i-d_i$ does not depend on the first component of the states visited by $\mu$ and $\eta$, and depends only on $\rho_x$ and on $\mu$.

Moreover, note that the correspondence between $\mu$ and $\eta$ is bijective. Indeed, given a run $\eta$ on $\rho_x\cdot y$, we can obtain from it a run $\mu$ on $x\cdot \jl_{(\cdot,p,T)\to (\cdot,p',T)}\cdot y$ by replacing the baseline component to that dictated by $x$, and using $\jl_{(\cdot,p,T)\to (\cdot,p',T)}$ to connect the run to $y$ (essentially retracing our construction bottom-up).

Therefore, we fix $\rho_x$ and get that this bijection between runs also shifts the weight of the runs by the same constants $f_i-d_i$. That is, for two runs $\mu_1,\mu_2$ on $x\cdot \jl_{(\cdot,p,T)\to (\cdot,p',T)}\cdot y$ and their corresponding $\eta_1,\eta_2$, for every prefix $z$ of $x$ we have $\weight(\mu_1(z))-\weight(\mu_1(z))=\weight(\eta_1(z))-\weight(\eta_2(z))$.

In particular, this result applies to prefixes of $x\cdot \jl_{(\cdot,p,T)\to (\cdot,p',T)}\cdot y$, and it follows that if $x\cdot \jl_{(\cdot,p,T)\to (\cdot,p',T)}\cdot y$ is a $B$-gap witness, then the gaps are maintained also for $\rho_x \cdot y$, and thus $\rho_x\cdot  y$ is also a $B$-gap witness.

We can thus proceed by induction until no jump letters are used, and conclude that $\augA$ is not determinizable.

\section{PSPACE-Hardness of Determinizability}
\label{apx:PSPACE hard}
We prove that the determinizability problem is PSPACE-hard by reduction from the universality problem for NFAs.

Intuitively, given an NFA $\cA$, we construct a WFA $\cB$ by giving weight $0$ to all the transitions of $\cA$, adding a new letter $\#$, and connecting all the non-accepting states of $\cA$ with $\#$ to a WFA component that is not determinizable, and gains weight above $0$. From accepting states we connect $\#$ to a $0$-sink.

Formally, given an NFA $\cA=\tup{Q,\Sigma,\Delta,Q_0,F}$ (where $\Delta\subseteq Q\times \Sigma\times Q$ is a non-weighted transition relation), 
we define a WFA $\cB=\{Q\cup \{q_a,q_b,q_\top\},\Sigma\cup \{\#,a,b\}, \Delta',Q_0,\{q_a,q_b,q_\top\}\}$ with $q_a,q_b$ being fresh states and $\#,a,b$ fresh letters, and the following transitions:
\[
\begin{split}
&\Delta'=\{(p,\sigma,0,q)\mid (p,\sigma,q)\in \Delta\}\cup\\ 
&\{(q,\#,0,q_a),(q,\#,q_b)\mid q\in Q\setminus F\}\cup \\
&\{(q,\#,0,q_\top)\mid q\in F\}\cup \\
&\{
(q_a,a,0,q_a),(q_a,b,1,q_a),(q_b,a,1,q_b),(q_b,b,0,q_b),(q_\top,a,0,q_\top),(q_\top,b,0,q_\top)
\}
\end{split}
\]

It is now easy to prove that $L(\cA)=\Sigma^*$ if and only if $\cB$ is determinizable. 
For the first direction, if $L(\cA)=\Sigma^*$, then for every $w\in \Sigma^*\cdot \#\cdot \{a+b\}^*$ we have $\cB(w)=0$, since there is a run on the $\Sigma^*$ prefix that ends in an accepting state, which proceeds to gain weight $0$ in the $q_\top$ component. 
Words in $\Sigma^*$ get weight $0$ by construction, and all other words do not have finite weight. Thus, every word gets weight either $0$ or $\infty$, and in particular has bounded gaps, and by \cref{thm:det iff bounded gap} is determinizable.

Conversely, if $L(\cA)\neq \Sigma^*$, then let $w\in \Sigma^*$ be a non-accepted word. It follows that after reading $w$ we arrive at the $q_a,q_b$ component. This component is not determinizable (which can either be seen by the gaps characterization, or directly from \cite{chatterjee2010quantitative}). It is easy to see that the same proof carries also with the $w$ prefix, so $\cB$ is not determinizable. 
\hfill \qed

\end{document}